\newcommand{\mscr}[1]{\mathscr{#1}}
\newcommand{\mfk}[1]{\mathfrak{#1}}
\newcommand{\msf}[1]{\mathsf{#1}}
\newcommand{\mbb}[1]{\mathbb{#1}}
\newcommand{\mbbmss}[1]{\mathbbmss{#1}}
\newcommand{\mbf}[1]{\mathbf{#1}}
\newcommand{\mbs}[1]{\boldsymbol{#1}}
\newcommand{\mtt}[1]{\mathtt{#1}}
\newcommand{\mcal}[1]{\mathcal{#1}}
\newcommand*{\oli}[1]{{\overbracket[.7pt][-1pt]{#1}}}
\newcommand{\newl}{\medskip\noindent}
\newtheorem{thr}{\bf\small\color{NavyBlue} Theorem}[section]
\newtheorem*{thr*}{\bf\small\color{NavyBlue} Theorem}
\newtheorem{lem}{\bf\small\color{RedOrange}{Lemma}}[section]
\newtheorem{example}{\bf\color{Turquoise}Example}[section]
\newtheorem{definition}{\bf\small\color{ForestGreen}{Definition}}[section]
\newtheorem{remark}{\bf\small \color{Blue}Remark}[section]
\newtheorem{proposition}{\bf\small\color{ProcessBlue} Proposition}[section]
\newtheorem*{proposition*}{\bf\small\color{ProcessBlue} Proposition}
\newtheorem{corollary}{\bf\small\color{Maroon} Corollary}[section]
\newcommand\numberthis{\addtocounter{equation}{1}\tag{\theequation}}
\font\seveneurm=eurm7
\font\eighteurm=eurm8
\font\teneurm=eurm10
\newcommand{\g}{\text{\teneurm g}}
\newcommand{\sg}{\text{\eighteurm g}}
\newcommand{\vsg}{\text{\seveneurm g}}
\newcommand{\n}{\text{\teneurm n}}
\newcommand{\sn}{\text{\eighteurm n}}
\newcommand{\vsn}{\text{\seveneurm n}}
\newcommand{\ns}{\text{\teneurm n}_{\text{\normalfont\fontsize{6}{6}\selectfont NS}}}
\newcommand{\sns}{\text{\eighteurm n}_{\normalfont\text{\fontsize{6}{6}\selectfont NS}}}
\newcommand{\ra}{\text{\teneurm n}_{\normalfont\text{\fontsize{6}{6}\selectfont R}}}
\newcommand{\sra}{\text{\eighteurm n}_{\normalfont\text{\fontsize{6}{6}\selectfont R}}}
\newcommand{\nsns}{\text{\teneurm n}_{\text{\normalfont\fontsize{6}{6}\selectfont NSNS}}}
\newcommand{\snsns}{\text{\eighteurm n}_{\normalfont\text{\fontsize{6}{6}\selectfont NSNS}}}
\newcommand{\nsra}{\text{\teneurm n}_{\normalfont\text{\fontsize{6}{6}\selectfont NSR}}}
\newcommand{\snsra}{\text{\eighteurm n}_{\normalfont\text{\fontsize{6}{6}\selectfont NSR}}}
\newcommand{\rans}{\text{\teneurm n}_{\normalfont\text{\fontsize{6}{6}\selectfont RNS}}}
\newcommand{\srans}{\text{\eighteurm n}_{\normalfont\text{\fontsize{6}{6}\selectfont RNS}}}
\newcommand{\rara}{\text{\teneurm n}_{\normalfont\text{\fontsize{6}{6}\selectfont RR}}}
\newcommand{\srara}{\text{\eighteurm n}_{\normalfont\text{\fontsize{6}{6}\selectfont RR}}}
\newcommand{\dmmoduli}{{\oli{\mscr{M}}}_{\sg,\sn}}
\DeclareMathOperator{\Hom}{Hom}
\DeclareMathOperator{\RHom}{RHom}
\DeclareMathOperator{\Isom}{Isom}
\DeclareMathOperator{\Ext}{Ext}
\DeclareMathOperator{\Span}{span}
\DeclareMathOperator{\Ker}{Ker}
\DeclareMathOperator{\Gr}{Gr}
\DeclareMathOperator{\SL}{SL}
\DeclareMathOperator{\Sp}{Sp}
\DeclareMathOperator{\Aut}{Aut}
\DeclareMathOperator{\Spec}{Spec}
\DeclareMathOperator{\Sch}{Sch}
\DeclareMathOperator{\QCoh}{QCoh}
\DeclareMathOperator{\Coh}{Coh}
\DeclareMathOperator{\MC}{MC}
\DeclareMathOperator{\SSp}{SupSp}
\DeclareMathOperator{\CSp}{ComSp}
\DeclareMathOperator{\im}{im}
\DeclareMathOperator{\Tor}{Tor}
\DeclareMathOperator{\Hilb}{\underline{Hilb}}
\DeclareMathOperator{\Quot}{Quot}
\DeclareMathOperator{\Gras}{Grass}
\DeclareMathOperator{\Map}{Map}
\DeclareMathOperator{\Pic}{Pic}
\DeclareMathOperator{\Spf}{Spf}
\DeclareMathOperator{\Proj}{Proj}
\DeclareMathOperator{\Res}{Res}
\DeclareMathOperator{\Supp}{Supp}
\DeclareMathOperator{\coker}{coker}
\DeclareMathOperator{\dol}{[\![}
\DeclareMathOperator{\dor}{]\!]}
\renewcommand*{\H}{{{\rm{H}}}}
\newcommand*{\NS}{\text{NS}}
\newcommand*{\RA}{\text{R}}
\newcommand{\BV}[1]{{\mbs{\msf{BV}}}(#1)}
\DeclareMathOperator{\bSgnm}{\overline{\mathfrak{S}}^{\mathbf m}_{\sg,\sn}}
\DeclareMathOperator{\scurve}{\mscr{M}_{\sg,\sn}}
\DeclareMathOperator{\cscurve}{\overline{\mscr{M}}_{\sg,\sn}}
\DeclareMathOperator{\sspin}{\mscr{S}^{\mbf{m}}_{\sg,\sn}}
\DeclareMathOperator{\csspinm}{\overline{\mscr{S}}^{\mbf{m}}_{\sg,\sn}}
\DeclareMathOperator{\csspinN}{\overline{\mscr{S}}^{N}_{\sg,\sn}}
\DeclareMathOperator{\spspin}{\mscr{PS}^N_{\sg,\sn}}
\DeclareMathOperator{\Sym}{Sym}
\DeclareMathOperator{\dep}{depth}
\DeclareMathOperator{\bos}{bos}
\DeclareMathOperator{\ev}{ev}
\DeclareMathOperator{\sDef}{sDef}
\DeclareMathOperator{\Art}{Art}
\DeclareMathOperator{\sArt}{sArt}
\DeclareMathOperator{\sLoc}{sLoc}
\DeclareMathOperator{\Sets}{Sets}
\title{\centering On the Existence of Heterotic-String and \\ Type-II-Superstring Field Theory Vertices}
\author[a,b]{Seyed Faroogh Moosavian}
\author[a,b]{Yehao Zhou}
\affiliation[a]{Perimeter Institute for Theoretical Physics, Waterloo, Ontario, Canada N2L 2Y5}
\affiliation[b]{University of Waterloo, Waterloo, ON, Canada}
\emailAdd{sfmoosavian@pitp.ca}
\emailAdd{yzhou3@pitp.ca}
\abstract{We consider two problems associated to the space of Riemann surfaces that are closely related to string theory. We first consider the problem of existence of heterotic-string and type-II-superstring field theory vertices in the product of spaces of bordered surfaces parameterizing left- and right-moving sectors of these theories. It turns out that this problem can be solved by proving the existence of a solution to the BV quantum master equation in moduli spaces of bordered spin-Riemann surfaces. We first prove that for arbitrary genus $\g$, $\ns$ Neveu-Schwarz boundary components, and $\ra$  Ramond boundary components such solutions exist. We also prove that these solutions are unique up to homotopies in the category of BV algebras. Furthermore, we prove that there exists a map in this category under which these solutions are mapped to fundamental classes of Deligne-Mumford stacks of associated punctured spin curves. These results generalize the work of Costello on the existence of a solution to the BV quantum master equations in moduli spaces of bordered ordinary Riemann surfaces which, through the work of Sen and Zwiebach, are related to the existence of bosonic-string vertices, and their relation to fundamental classes of Deligne-Mumford stacks of associated punctured ordinary curves. Using the existence of solutions to the BV quantum master equation in moduli spaces of spin curves, we prove that heterotic-string and type-II-superstring field theory vertices, for arbitrary genus $\g$ and arbitrary number of any type of boundary components, exist. Furthermore, we prove the existence of a solution to the BV quantum master equation in spaces of bordered $\mscr{N}=1$ super-Riemann surfaces for arbitrary genus $\g$, $\ns$ Neveu-Schwarz boundary components, and $\ra$  Ramond boundary components. Secondly, we turn to the moduli problem of stable $\mscr{N}=1$ SUSY curves. Using Vaintrob's deformation theory of complex superspaces and Deligne's model for the compactification of the space of $\mscr{N}=1$ SUSY curves, we prove the representability of this moduli problem by a proper smooth Deligne-Mumford superstack. This is a generalization of the work of Deligne and Mumford on the representability of the moduli problem of stable ordinary curves by a Deligne-Mumford stack. Along the way, we also elaborate some possibly-familiar aspects of the formulation of superstring perturbation theory using spin curves. In particular, we explain a subtlety in the gluing of Ramond punctures on spin curves.}
\begin{document}
	\maketitle

\section{Introduction}\label{sec:introduction}

The best-understood formulation of superstring theory is the superstring perturbation theory. Much of the mathematical difficulty of the superstring perturbation theory in the current formulation comes from the intricate mathematics of spin-Riemann or super-Riemann surfaces and stacks\footnote{
To have a uniform and mathematically-precise language throughout the paper, we use the term stack or superstack for spaces of (possibly {\it punctured}) ordinary, spin- or $\mscr{N}=1$ SUSY curves. Stack and superstack in this context are synonyms for orbifold and superorbifold respectively, which are what these spaces are due to the existence of singularities \cite{HarrisMumford198202, Harer1988, Ludwig200707, LeBrunRothstein198803,Hodgkin199509}. These terms can be replaced with the possibly-more-familiar terms moduli or supermoduli space. A concise and accessible discussion of the stacky nature of these spaces can be found in section 3.2.1 of \cite{DonagiWitten2013a}. We also refer to sections 2 and 3 of \cite{CodogniViviani201706}, and also \cite{PerezRuiperezSalas199702}, for relevant definitions of superstacks and coarse superspaces. A thorough treatment of stacks, especially stacks of curves, Deligne-Mumford stacks, and coarse moduli spaces can be found in \cite{Olsson2016}. As far as we aware, it is not known that what the spaces of {\it bordered} ordinary, spin- or super-Riemann surfaces are. As we prove in this paper, the moduli of {\it punctured} $\mscr{N}=1$ SUSY curves is a Deligne-Mumford superstack.} or superstacks thereof. The appearance of these spaces is resulted from two different approaches to superstring perturbation theory in the RNS formalism, 
\begin{enumerate}
	\item The natural setting for the formulation of the RNS superstring theory is the language of supergeometry. The worldsheet theory is written on an $\mscr{N}=1$ super-Riemann surface and one needs to deal with superstacks of these surfaces to compute the scattering amplitudes \cite{ Witten2012a,Witten2012b,Witten2012c}. The concept of super-Riemann surface, as an appropriate generalization of ordinary Riemann surfaces for doing superstring perturbation theory, was introduced by Baranov and Schwarz \cite{BaranovSchwarz198510} and independently by Friedan \cite{Friedan1986a}. 
	
	\item In the second formulation, the so-called {\it picture-changing formalism}, superstring amplitudes are defined as integrals over the moduli stack of spin-Riemann surfaces where the string measure contains the so-called picture-changing operators (PCOs) \cite{FriedanMartinecShenker1986}. Since the worldsheet theory contains fermions, the Riemann surfaces are equipped with a choice of spin structure, i.e. a line bundle $\mscr{E}$ over the surface such that $\mscr{E}^{\otimes 2}\equiv \mscr{E}\otimes\mscr{E}\simeq \omega$, where $\omega$ is the canonical line bundle over the surface\footnote{The canonical line bundle $\omega$ is isomorphic to the holomorphic cotangent bundle over the surface.}. In physical terms, a choice of spin structure corresponds to the phase shift of the fermions after parallel transport along a closed $1$-cycle of the surface. There are two possible boundary conditions by going around a closed $1$-cycle: periodic or antiperiodic boundary conditions\footnote{The boundary conditions we are referring to are in the radial quantization, i.e. after mapping the cylinder, with coordinates $(\sigma,\tau)$, to the complex $z$-plane via $z=\exp(\tau+i\sigma)$.}. The fermions with periodic boundary condition on the plane (or anti-periodic boundary condition on the cylinder) form the so-called Neveu-Schwarz (NS) sector of the Hilbert space, and the fermions with antiperiodic boundary condition on the plane (or periodic boundary condition on the cylinder) form the so-called Ramond (R) sector of the Hilbert space. The antiperiodic boundary condition on R states is one of the main sources of difficulties in superstring theory.	
	
\end{enumerate}
As a result there are two types of punctures in superstring theory, depending on whether the corresponding vertex operator belongs to the NS or R sector. We will call the corresponding punctures NS or R punctures. It has been argued in \cite{VerlindeVerlinde1987a} and has been further elaborated in \cite{Sen2014b, SenWitten2015} that the two formulations are equivalent. In this paper, we use both formalism and it will be hopefully clear which framework we are working within.

\subsection{Statement of the Problems}

\newl One of the surprising feature of the superstring theory is that the modular invariance removes the UV divergences of the theory. However, the theory suffers from IR divergences \cite{Witten2012c, Sen2015d}. The IR divergences come from the points at infinity\footnote{The compactification of the moduli stack of punctured Riemann surfaces without boundary corresponds to the addition of complex co-dimension one divisor to the uncompatified moduli stack. The resulting space, the Deligne-Mumford space \cite{DeligneMumford1969a}, is a compact space without boundary. The divisor points thus should be thought of as points at infinity. There is a subtle difference between this case and the case of open-closed superstring theories, where the superstack has boundaries, i.e. the compactification locus consists of real codimension one divisors. The presence of these boundaries is related to the presence of various anomalies. For more on this see section 9.2 of \cite{Witten2012c}.} of the superstack or equivalently the moduli stack of spin curves. In the picture-changing formalism, on one hand, one has to deal with moduli stack of  punctured spin curves and on the other hand, as was mentioned, the IR divergences of the theory come from the points at infinity of the compactified moduli stack. One thus needs to deal with the compactification of the moduli stack of punctured spin curves. 
However, unlike the case of bosonic-string theory, where the amplitudes are obtained by an integration over the moduli stack of ordinary Riemann surface, there is no natural moduli stack to be integrated over in the superstring theories, in either the supergeometry or the picture-changing formulation. The reason is as follows; 1) for heterotic-string theory the fields of right-moving sector, having $\mscr{N}=1$ supersymmetry, are parameterized by $(z|\theta)$ while the fields of left-moving sector, having no supersymmetry, are parametrized by $\widetilde{z}$. Therefore, there is no notion of complex conjugate variable for $\theta$ and therefore we can not set $\widetilde{z}=\bm\bar{z}$ since the transition functions may depend on $\theta$ on different patches; 2) For type-II theories, the fields of right-moving sector, having $\mscr{N}=1$ supersymmetry, are parameterized by $(z|\theta)$ while the fields of left-moving sector, again having $\mscr{N}=1$, are parametrized by $(\widetilde{z}|\widetilde{\theta})$. Again, there is no natural notion of complex conjugation between  $(z|\theta)$ and $(\widetilde{z}|\widetilde{\theta})$ since $\theta$ and $\widetilde{\theta}$ can have different spin structures. Therefore, in the supergeometry formulation, one needs to integrate over an appropriate integration cycle with appropriate behaviors at infinity inside an appropriate product of left and right stack and/or superstacks, where the appropriate moduli or superstacks are determined by the worldsheet of the theory \cite{Witten2012a,Witten2012b}. In the picture-changing formalism, the construction of a proper integration cycle involves the delicate procedure of vertical integration \cite{ErlerKonopkaSachs201312, Sen2014b,SenWitten2015,ErlerKonopka2017a}. The integrands associated to the amplitudes are invariant under the {\it spin-$\mscr{E}$ mapping-class group}, the subgroup of the mapping-class group of the surface that preserves the spin structure $\mscr{E}$, and thus naturally can be integrated over the moduli stack of spin curves with appropriate signature. One also need to sum over all spin structures for at least two crucial reasons: 1) we need to integrate over $\dmmoduli$, the compactified moduli stack of genus-$\g$ surfaces with $\n$ punctures, and as such, we thus need a mapping-class-group-invariant integrand. Since the action of mapping-class group generically changes  the spin structure from $\mscr{E}$ to say $\mscr{E}'$, the summation over all spin structures guarantees that the resulting expression is mapping-class-group-invariant. 2) the summation over all spin structures is equivalent to imposing GSO projection in the path integral \cite{SeibergWitten1986a}.

\newl The compactified moduli stack of punctured spin curves can be viewed as a covering of the compactified moduli stack of ordinary curves, where the covering map corresponds to the forgetting of spin structures. Therefore, the region near the compactification divisors in the moduli stack of spin curves is mapped into the region near the compactification divisors in the moduli stack of ordinary curves. A good set of coordinate near the compactification divisor can be constructed by the so-called {\it plumbing-fixture construction} \cite{Masur1976a,Wolpert1987a}. In this construction, the local coordinate around two punctures on a single surface or disjoint surfaces are glued together using appropriate gluing relations\footnote{We refer to the gluing of the local coordinates around the punctures as gluing of punctures.}. The parameter(s) of the gluing together with the moduli of the component surfaces form a coordinate chart near the compactification divisor of the moduli stack. To compactify the stack of spin curve, one is thus interested in the following question

\newl \hypertarget{q:first question}{{\bf\small Question 1:} {\it Consider two punctures of the same type on a single surface equipped with a choice of spin structure or on two separate surfaces, each of them equipped with a choice of spin structure. Glue them using the appropriate gluing relation. What is the spin structure of the resulting surface and what does happen to it in the limit that the resulting surface develops a node\footnote{A node or an ordinary double point on an ordinary, spin curve, or super-Riemann surface is a point whose neighborhood is homeomorphic to two disk joined at a single point, the node.}?}}

\newl There is another subtlety in the gluing of punctures in superstring theory.  Since the total number of PCOs\footnote{We exclude the cases where the surface admits conformal-Killing spinors. These cases can be treated separately.} on a genus-$\g$ spin curve with $\n\equiv (\ns,\ra)$ punctures, as we will explain later, is equal to the fermionic dimension of the corresponding superstack, we need to insert $2\g-2+\ns+\frac{1}{2}\ra$ PCOs. From this relation, it turns out that the gluing of two NS punctures reproduces the correct number of PCOs in the resulting surface. However, the total picture number of the surface resulting from the gluing of two R punctures is one less that the one it should be. One thus has to {\it add} a PCO by hand on the surface with a suitable prescription and check that it reproduces the desired results \cite{Sen2014b}. Therefore, one is interested in the following question

\newl \hypertarget{q:second question}{{\bf\small Question 2:} {\it Consider two R punctures on a single surface equipped with a spin structure, or on two separate surfaces each of which is equipped with a choice of spin structure. Glue them using the appropriate gluing relation. What is the reason for the addition of the extra picture-changing operator from the supergeometry point of view?}}

\newl The aim is to show that there is a purely geometric explanation for this missing PCO, otherwise it is obvious that one has to include a PCO in the definition of kinetic term of the Ramond sector. 

\newl Although these questions are trivial for experts of the field, we felt that it might be useful for some readers.  

\newl However, the main objectives of this work are
\begin{enumerate}
	\item the proof of the existence of heterotic-string and type-II-superstring field theory vertices;
	
	\item the proof of the representability of the moduli problem of $\mscr{N}=1$ (possibly punctured) stable SUSY curves. 
\end{enumerate}
Let us briefly explain each of these problems. 

\subsubsection{The Existence of String Vertices}

\newl In the formulation of any string field theory, two choices have to be made: 1) a choice of the worldsheet theory, and 2) a choice of string vertices\footnote{For more on these choices see Section \ref{subsec:introduction to the proof of existence of string vertices}.}. These two choices provide us with a way to describe the interaction vertices of the associated string field theory. String vertices are some regions inside the product of moduli spaces parameterizing left- and right-moving sectors of the relevant worldsheet theory. If we denote the moduli space parameterizing left- and right-moving sectors by $\mscr{M}_L$ and $\mscr{M}_R$, respectively, then we have the following 
\begin{itemize}
	\item For heterotic-string theory $\mscr{M}^{\H}_L=\mscr{M}_{\sg}(\sns+\sra)$\footnote{Here, by $\sns+\sra$, we mean that we ignore all the complications associated to these borders on spin- or super-Riemann surfaces, and we think of them as borders on an ordinary Riemann surface.}, the moduli space of genus-$\g$ ordinary Riemann surfaces with $\ns+\ra$ borders, and $\mscr{M}^{\H}_R=\mscr{SM}_{\sg}(\ns,\ra)$, the moduli space of genus-$\g$ super-Riemann surfaces with $\ns$ NS boundary components and $\ra$ R boundary components. Let us denote a choice heterotic-string vertices corresponding to genus-$\g$ vertices with $\ns$ NS boundary components, and $\ra$ R boundary components by ${\mscr{V}}^{\H}_{\sg}(\ns,\ra)$. Then, the genus-$\g$ heterotic-string field theory interaction vertices $\mcal{I}^{\H}_{\sg}(\ns,\ra)$ is given by 
	\begin{equation}
	\mcal{I}^{\H}_{\sg}(\ns,\ra)=\bigintsss_{{\mscr{V}}^{\H}_{\vsg}(\sns,\sra)}\mbs{\Omega}_{\mbs{T}^{\H}},
	\end{equation}
	where ${\mscr{V}}^{\H}_{\sg}(\ns,\ra)\subset \mscr{M}^{\H}_L\times\mscr{M}^{\H}_R$, and $\mbs{\Omega}_{\mbs{T}^{\H}}$ is some correlation function computed in the worldsheet theory $\mbs{T}^{\H}$ of heterotic-string theory.   
	
	\item For type-II-superstring theory $\mscr{M}^{\text{II}}_L=\mscr{SM}_{\sg}(\ns,\ra)=\mscr{M}^{\text{II}}_R$. Let us denote a choice type-II-superstring vertices corresponding to genus-$\g$ vertices with $\nsns$ NSNS, $\rans$ RNS, $\nsra$ NSR, and $\rara$ RR boundary components by ${\mscr{V}}^{\text{II}}_{\sg}(\nsns,\rans,\nsra,\rara)$. Then, the genus-$\g$ type-II-superstring field theory interaction vertices $\mcal{I}^{\text{II}}_{\sg}(\nsns,\rans,\nsra,\rara)$ is given by 
	\begin{equation}
	\mcal{I}^{\text{II}}_{\sg}(\nsns,\rans,\nsra,\rara)=\bigintsss_{{\mscr{V}}^{\text{II}}_{\vsg}(\snsns,\srans,\snsra,\srara)}\mbs{\Omega}_{\mbs{T}^\text{II}},
	\end{equation}
	where ${\mscr{V}}^{\text{II}}_{\sg}(\nsns,\rans,\nsra,\rara)\subset \mscr{M}^{\text{II}}_L\times\mscr{M}^{\text{II}}_R$, and $\mbs{\Omega}_{\mbs{T}^\text{II}}$ is some correlation function computed in the worldsheet theory $\mbs{T}^{\text{II}}$ of type-II-superstring theory.
\end{itemize}
The gauge-invariance of the string field theory action is guaranteed if the string vertices satisfy the BV quantum master equation (BV QME). However, it is not {\it a priori} guaranteed that such subspaces exists. In this paper, we show that ${\mscr{V}}^{\H}_{\sg}(\ns,\ra)$ and ${\mscr{V}}^{\text{II}}_{\sg}(\nsns,\rans,\nsra,\rara)$ {\it always} exist. 

\subsubsection{The Moduli Problem of Stable $\mscr{N}=1$ SUSY Curves}
One problem of interest in super algebraic geometry is the moduli problem of stable $\mscr{N}=1$ (possibly punctured) SUSY curves, and one can ask the representability of the associated moduli functor. By this in the case of the moudli problem of stable $\mscr{N}=1$ (possibly punctured) SUSY curves, we mean a fibered functor $\mscr{F}$ which maps a superchemes to the category of $\mscr{N}=1$ (possibly punctured) SUSY curves. The representability requires the existence of a superscheme $\mscr{SM}$ such that $\mscr{F}$ is isomorphic to the functor of points of $\mscr{SM}$. If it happened to be the case, then $\mscr{SM}$ is the fine moduli space for this moduli problem. 

\newl There is a catch as it is usual with the moduli problem of stable curves \cite{DeligneMumford1969a}. Since stable $\mscr{N}=1$ SUSY curves have nontrivial auromorphisms, there is no fine moduli spaces for the representability of the moduli functor, i.e. there is no superscheme $\mscr{SM}$ that classifies all such curves. One thus need to find an alternative solution. One such solutions is to extend the category of superschemes to the {\it category of superstacks}. Then, the moduli functor might be representable by a superstack. So the question is the following: {\it is the moduli functor associated to the moduli problem of stable $\mscr{N}=1$ {\normalfont (}possibly punctured{\normalfont )} SUSY curves representable by a stack?} As we will show the moduli functor is more rigid and similar to the case of (possibly punctured) ordinary and spin curves can be represented by a Deligne-Mumford (DM) superstack.

\subsection{Contributions of the Paper}
Although we elaborate on many points, but the main results of this paper can be summarized as follows

\newl \underline{\small\bf 1) The Existence and Uniqueness of Solution to the BV QME in $\mscr{S}(\ns,\ra)$}\\
Let $\mscr{S}(\ns,\ra)$ denote the moduli space of possibly
disconnected spin-Riemann surfaces with $\ns$ NS boundary components and $\ra$ R boundary components. We denote the moduli stack of spin curves of connected genus-$\g$ spin-Riemann surfaces with $\ns$ NS boundary components and $\ra$ boundary components by $\mscr{S}_{\sg}(\ns,\ra)$. We associate a complex to such moduli spaces
\begin{alignat*}{2}
\mcal F(\mscr{S})&\equiv \bigoplus_{\sns,\sra} \mscr{C}_*\left(\frac{\mscr{S} (\ns,\ra)}{S^1\wr S_{\sns,\sra}}\right).
\end{alignat*}
$\mscr{C}_*$ is the functor of normalized singular chains with coefficient in any field containing $\mbb{Q}$, $S^1\wr S_{\sns,\sra}$ is the wreath product $(S^1)^{\sns+\sra}\rtimes S_{\sns,\sra}$. $S_{\sns,\sra}$ is the group that permutes NS boundaries and R boundaries separately among each other. The quotient by the wreath product turns $\mscr{S}(\ns,\ra)$ to the space of bordered spin-Riemann surfaces with unparametrized boundary. We also denote the contribution from connected genus-$\g$ surfaces by $\mcal{F}_{\sg,\sns,\sra}(\mscr{S})$. $\mcal F(\mscr{S})$ carries the structure of a commutative differential-graded
algebra (CDGA)\footnote{A differential-graded algebra $\mcal{A}$ is a graded algebra equipped with a map $d:\mcal{A}\longrightarrow\mcal{A}$ which has degree $-1$ (chain-complex convention) or degree $+1$ (cochain-complex convention) with product $\cdot:\mcal{A}\times\mcal{A}\longrightarrow\mcal{A}$ that satisfies the following two conditions:
	\begin{enumerate}
		\item $d\circ d=0$: this means that $d$ turns $\mcal{A}$ to a chain or cochain complex depending on its degree.
		
		\item {the graded Leibniz rule:} For any two elements $x,y\in\mcal{A}$, we have
		$$d(x\cdot y)=dx\cdot y+(-1)^{\deg(x)}x\cdot dy,$$
		where $\deg(x)$ is the degree of the element $x$ in the graded algebra $\mcal{A}$. 
\end{enumerate}}, where differential $d$ is the boundary map of chain complexes and the product comes from disjoint union of surfaces. Furthermore, we define the following gluing operation: let $\Delta^{\NS}$ ($\Delta^{\RA}$) denotes the operation of gluing of two NS (respectively R) boundaries. We define  
\begin{alignat*}{2}
\Delta &\equiv \Delta^{\text{NS}}+\Delta^{\text{R}}.
\end{alignat*}
This turns $\mcal{F}(\mscr{S})$ into a BV algebra\footnote{Recall that a BV algebra $\mcal{A}$ is defined as an algebra with the following structures
	\begin{enumerate}
		\item $\mcal{A}$ is a commutative differential-graded algebra with differential $d$.
		\item it is equipped with an odd differential $\delta$, i.e. $\delta^2=0$.
		\item $\delta$ satisfies $[\delta,d]=0$, where $[\cdot,\cdot]$ is a graded commutator.
\end{enumerate}}. We can now state one of the results of this paper
\begin{thr*}[The Existence and Uniqueness of Solution to the BV QME in $\mscr{S}_{\sg}(\ns,\ra)$]
	For each triple $(\g,\ns,\ra)$ with $2\g-2+\ns+\ra>0$, there exists an element $\mcal{V}_{\sg}(\ns,\ra)\in \mathcal{F}_{\sg,\sns,\sra}(\mscr{S})$ of homological degree $6\g-6+2\ns+2\ra$, with the following properties
	\begin{enumerate}
		\item $\mcal{V}_{0}(3,0)$ is the fundamental cycle of $\mscr{S}_{0}(3,0)/S_1\wr S_{3,0}$, i.e. 0-chain of coefficient $1/3!$.
		
		\item $\mcal{V}_{0}(1,2)$ is the fundamental cycle of $\mscr{S}_{0}(1,2)/S_1\wr S_{1,2}$, i.e. 0-chain of coefficient $1/2!$.
		
		\item The generating function 
		\begin{equation}\label{eq:the generating function of the solution to the BV QME in spin moduli I}
		\mcal{V}\equiv\sum_{2\sg-2+\sns+\sra>0}\hbar ^{2\sg-2+\sns+\sra}\mcal{V}_{\sg}(\ns,\ra)\in \hbar \mathcal{F}(\mscr{S})[\![\hbar]\!],
		\end{equation}
		satisfies the {\normalfont BV QME}
		\begin{equation*}
		(d+\hbar\Delta)\exp\left(\frac{\mcal{V}}{\hbar}\right)=0.
		\end{equation*}
		
		\item $\mcal{V}$ is unique up to homotopy in the category of BV algebras. 
	\end{enumerate}
\end{thr*}
For the proof of this result see Section \ref{subsec:the proof of existence and uniqueness} and especially Theorem \ref{the:the existence and uniqueness of the solution to the BV QME in spin moduli}. This result is the generalization of Proposition 10.1.1\footnote{The numbers of theorem, proposition, etc of \cite{Costello200509} that we are referring to are the ones that appear in the published version not the Arxiv preprint. We explicitly mention it if we are referring to the ArXiv version.} of \cite{Costello200509}. The physical interpretation of the result of Costello is the existence and uniqueness of bosonic-string field theory vertices. However, we cannot make a similar claim in the case of heterotic-string and type-II-superstring theories. We need to work more to establish the existence of vertices in these theories, as we explain below. 

\newl One might ask {\it what is the importance of this result for string field theory?} As we argue below, the existence of heterotic-string and type-II superstring field theory vertices can be deduced from this result. The reason is as follows
\begin{enumerate}
	\item We can associate BV algebras to complexes associated to spaces parameterizing worldsheets of heterotic-string and type-II superstring field theory. These BV algebras are quasi-isomorphic to BV algebra $\mcal{F}(\mscr{S})$ which means that their corresponding homology groups are isomorphic. 
	
	\item On the other hand, if two differential-graded algebras are quasi-isomorphic, their sets of homotopy classes of solutions of the Maurer-Cartan equation are isomorphic. The BV QME can be considered as the Maurer-Cartan equation of the corresponding BV algebra (see Lemma 5.2.1 of \cite{Costello200509}). Therefore, if two BV algebras are quasi-isomorphic, their sets of homotopy classes of solutions of the BV QME are isomorphic (see Lemma 5.3.1 and Definition 5.4.1 of \cite{Costello200509}). 
	
	\item Putting 1 and 2 together and noting that heterotic-string and type-II superstring field theory vertices must satisfy the BV QME, we conclude that they exist. 
\end{enumerate}

\newl \underline{\small\bf 2) Solutions to the BV QME in $\mscr{S}(\ns,\ra)$ and Fundamental Classes of DM Stacks}\\
Let $\overline{\mscr{S}}_{\sns,\sra}$ denote the moduli stack of stable spin curves with $\ns$ NS punctures and $\ra$ Ramond punctures, and $\overline{\mscr{S}}_{\sg,\sns,\sra}$ denote the contribution coming from connected genus-$\g$ surfaces. We associate a chain complex to these moduli spaces
\begin{equation}
\mcal{F}(\overline{\mscr{S}})\equiv \bigoplus_{\sns,\sra}\mscr{C}_*\left(\frac{\overline{\mscr{S}}_{\sns,\sra}}{S_{\sns,\sra}}\right).
\end{equation}
$S_{\sns,\sra}$ permutes the set of NS and R punctures among each other separately. After taking the homology, and setting $\Delta=0$, $\mcal{F}(\overline{\mscr{S}})$ turns to a BV algebra. We then define the following formal sum
\begin{equation}
[\overline{\mscr{S}}]\equiv\sum_{\substack{\sg,\sns,\sra \\ 2\sg-2+\sns+\sra\ge 0}}\hbar^{2\sg-2+\sns+\sra}[\overline{\mscr{S}}_{\sg,\sns,\sra}/S_{\sns,\sra}]\in\hbar  \H_*(\mcal{F}(\overline{\mscr{S}}))[\![\hbar]\!],
\end{equation}
where $[\overline{\mscr{S}}_{\sg,\sns,\sra}/S_{\sns,\sra}]$ is the fundamental class of $\overline{\mscr{S}}_{\sg,\sns,\sra}/S_{\sns,\sra}$. One can then establish the following Theorem which provides a link between $\mcal{V}$, the solution of the BV quantum master equation in the BV algebra $\mcal{F}(\mscr{S})$ defined in \eqref{eq:the generating function of the solution to the BV QME in spin moduli I}, and $[\overline{\mscr{S}}]$. 

\begin{thr*}[Solutions to the BV QME in $\mscr{S}(\ns,\ra)$ and Fundamental Classes of DM Stacks]
	There is a map $\mcal{F}(\mscr{S})\longrightarrow \H_*(\mcal{F}(\overline{\mscr{S}}))$ in the homotopy category of {\normalfont BV} algebras that sends $\mcal{V}\longrightarrow [\overline{\mscr{S}}]$. This means that in the homotopy category of {\normalfont BV} algebras $\mcal{V}_{\sg,\sns,\sra}$ is homotopy-equivalent to the orbifold fundamental class $[\overline{\mscr{S}}_{\sg,\sns,\sra}/S_{\sns,\sra}]$ of the moduli stack of connected stable genus-$\g$ spin curves with $\ns$ {\normalfont NS} punctures and $\ra$ {\normalfont R} punctures, i.e. it is homotopy-equivalent to the whole moduli stack. 
\end{thr*}
For the proof of this result see Theorem \ref{the:solutions to the BVQME and fundamental classes of DM spaces} in Section \ref{subsec:relation of solutions to the BVQME in spin moduli and fundamental classes of DM spaces}.  

\newl \underline{\small\bf 3) The Existence of Solution to the BV QME in $\mscr{SM}(\ns,\ra)$}\\
Let $\mscr{SM}(\ns,\ra)$ denote the space of bordered $\mscr{N}=1$ super-Riemann surfaces with $\ns$ NS and $\ra$ R boundary components. We can associate a singular superchain\footnote{For the definition of singular superchains on superspaces see Section \ref{subsec:singular homology on superspaces}.} complex to these spaces
\begin{equation}
\mcal{F}(\mscr{SM})\equiv \bigoplus_{\sns,\sra}\mscr{SC}_*\left(\frac{\mscr{SM}(\ns,\ra)}{S^1\wr S_{\sns,\sra}}\right)
\end{equation}
where $\mscr{SC}_*$ is the functor of normalized singular simplicial superchains of top fermionic dimension with coefficients in any field containing $\mbb{Q}$. This is a differential-graded algebra, where the derivative is the boundary operation, as defined in Definition \ref{def:the definition of singular superchains on superspaces}, acting on singular superchains. We can define the gluing operation as follows
\begin{equation}
\Delta_{\mfk{s}}\equiv \Delta^{\NS}_{\mfk{s}}+\Delta^{\RA}_{\mfk{s}},
\end{equation}
where $\Delta_{\mfk{s}}^{\NS}$ denotes the gluing of two NS boundaries and $\Delta^{\RA}_{\mfk{s}}$ is the gluing operation of two R boundaries. This turns $\mcal{F}(\mscr{SM})$ to a BV algebra. We then have the following
\begin{thr*}[The Existence of Solution to BV QME in $\mscr{SM}(\ns,\ra)$]
	There exists a solution to the {\normalfont BV QME} in $\mcal{F}(\mscr{SM})$.
\end{thr*}
For the proof of this Theorem see Section \ref{subsec:proof of the existence of solution to BVQME in the supermoduli of bordered surfaces}. 

\newl We might need to stress one point here. We have constructed a solution to the BV QME in the space of bordered spin-Riemann surfaces. In principle, we should be able to {\it thicken} this solution in fermionic direction to find a solution to the BV QME in the moduli space of bordered $\mscr{N}=1$ super-Riemann surfaces. However, life is not that simple. As we will rigorously prove in Section \ref{subsec:lifting subspace of reduced space}, to construct a supermanifold $M$, we can start from its reduced space $M_{\text{red}}$ and thicken it in the fermionic direction. However, this defines $M$ up to {\it homology}\footnote{This can also be seen from the result of \cite{Batchelor197909} where it is proven that any {\it smooth} thickening can be extended to a smooth supermanifold, and conversely any smooth supermanifold may be obtained in this way.}. This means that any two different thickenings in the fermionic directions can be connected by a homotopy, which is intuitively due to the fact that fermionic directions are infinitesimal. Therefore, by considering the solutions to the BV QME in $\mscr{S}(\ns,\ra)$ {\it as a smooth subspace} and thickening it in all fermionic directions, we get a subspace of $\mscr{SM}(\ns,\ra)$ which {\it does not necessarily satisfies the {\normalfont BV QME}}. Therefore, we need to come-up with another way to prove the existence of such solutions in $\mscr{SM}(\ns,\ra)$, as we do in Section \ref{subsec:proof of the existence of solution to BVQME in the supermoduli of bordered surfaces}.

\newl Perhaps we must stress that the idea of thickening does not work in the realm of complex supermanifolds. It is established that there are always obstructions in extending a holomorphic thickening to a complex supermanifold \cite{EastwoodLeBrun198610}. However, we are only interested in integration over these spaces, and for that reason, the results of integration over different thickenings are the same, and do not depend on the choice of thickening in the fermionic direction \cite{Witten2012a}.

\newl \underline{\small\bf 4) The Existence of Heterotic-String and Type-II-Superstring Vertices}\\
The next result of the paper is the proof of existence of heterotic-string and type-II superstring field theory vertices. As we stated above, the proof of this result depends on the first result, i.e. the existence of solution to the BV QME in $\mscr{S}(\ns,\ra)$.

\newl To state the result for the heterotic-string field theory, we define the following complexes
\begin{alignat}{2}
\mcal{F}((\mscr{M}^\text{H}_L\times\mscr{M}^\text{H}_R)^{\mbf{D}})&\equiv \bigoplus_{\sns,\sra}\mscr{SC}_*\left(\frac{(\mscr{M}^\text{H}_L(\ns+\ra)\times\mscr{M}^\text{H}_R(\ns,\ra))^\mbf{D}}{S_L^{\text{H}}\times S^{\text{H}}_R}\right),
\end{alignat}
where superscript $\mbf{D}$ denotes diagonal and subscript red means that we consider the reduced spaces, and 
\begin{alignat}{2}
S^\text{H}_L\equiv S_1\wr S_{\sns+\sra}, \qquad S^\text{H}_R\equiv S_1\wr S_{\sns,\sra},
\end{alignat}
where $S_1\wr S_{\sns+\sra}=(S_1)^{\sns+\sra}\rtimes S_{\sns+\sra}$ and $S_1\wr S_{\sns,\sra}=(S^1)^{\sns+\sra}\rtimes S_{\sns,\sra}$. We then have the following result
\begin{thr*}[The Existence of Heterotic-String Vertices]
	There exists a solution to the {\normalfont BV QME} in $\mcal{F}((\mscr{M}^\text{\H}_L\times\mscr{M}^\text{\H}_R)^{\mbf{D}})$. The genus-$\g$ part is a genus-$\g$ heterotic-string field theory vertex with $\ns$ {\normalfont NS} and $\ra$ {\normalfont R} boundary components. 
\end{thr*}
For the proof of this Theorem see Section \ref{subsec:the case of heterotic-string vertices}.

\newl To state the result for the case of type-II-superstring field theory, we define the following complexes 
\begin{alignat}{2}
\mcal{F}((\mscr{M}^{\text{II}}_L\times\mscr{M}^{\text{II}}_R)^{\mbf{D}}) &\equiv \bigoplus_{\sns,\sra}\mscr{SC}_*\left(\frac{(\mscr{M}^{\text{II}}_L(\ns,\ra)\times\mscr{M}^{\text{II}}_R(\ns,\ra))^\mbf{D}}{S^{\text{II}}_L\times S^{\text{II}}_R}\right).
\end{alignat} 
where 
\begin{alignat}{2}
S^{\text{II}}_L\equiv S_1\wr S_{\sns,\sra}=S^{\text{II}}_R.
\end{alignat}
We then have the following result
\begin{thr*}[The Existence of Type-II-Superstring Vertices]
	There exists a solution to the {\normalfont BV QME} in $\mcal{F}((\mscr{M}^\text{\normalfont II}_L\times\mscr{M}^\text{\normalfont II}_R)^{\mbf{D}})$. The genus-$\g$ part is a genus-$\g$ type-II-string field theory vertex with $\nsns$ {\normalfont NSNS}, $\rans$ {\normalfont RNS}, $\nsra$ {\normalfont NSR}, and $\rara$ {\normalfont RR} boundary components. 
\end{thr*}
For the proof of this Theorem see Section \ref{subsec:the case of type-II-superstring vertices}.

\newl \underline{\small\bf 5) The Representability of the Moduli Problem of Stable $\mscr{N}=1$ SUSY Curves}\\
There is a model for the compactification of the space of smooth genus-$\g$ possibly punctured $\mscr{N}=1$ SUSY curves introduced by Deligne \cite{Deligne198709}. Let $\overline{\mscr{SM}}_{\sg,\sns,\sra}$ denotes the space of stable genus-$\g$ $\mscr{N}=1$ SUSY curve with $\ns$ NS punctures and $\ra$ Ramond punctures. Then, we have the following result

\begin{thr*}[The Representability of the Moduli Problem of Stable $\mscr{N}=1$ SUSY Curves]
	Assume that a triple of natural numbers $(\g,\ns,\ra)$ satisfies $\ra\in 2\mathbb Z_{\ge 0}$, and $2\g-2+\ns+\ra>0$.
	\begin{enumerate}
		\item[$(1)$] $\overline{\mscr{SM}}_{\sg,\sns,\sra}$ is a smooth proper Deligne-Mumford superstack of dimension $$\left(3\g-3+\ns+\ra\big|2\g-2+\ns+\frac{\ra}{2}\right).$$ Moreover its bosonic truncation $\overline{\mscr{SM}}_{\sg,\sns,\sra;\text{\normalfont bos}}$ is the moduli stack $\overline{\mscr{S}}_{\sg,\sns,\sra}$ of punctured stable spin curves.
		
		\item[$(2)$] $\overline{\mscr{SM}}_{\sg,\sns,\sra}$ has a coarse moduli superspace $\overline{\mathcal{SM}}_{\sg,\sns,\sra}$ which is actually bosonic and is also the coarse moduli space for the bosonic quotient $\overline{\mscr{SM}}_{\sg,\sns,\sra;\text{\normalfont ev}}$. Moreover its reduced subspace $\overline{\mathcal{SM}}_{\sg,\sns,\sra;\text{\normalfont red}}$ is the coarse moduli space $\overline{\mathcal{S}}_{\sg,\sns,\sra}$ of punctured stable spin curves.
	\end{enumerate}
\end{thr*}
For the proof of this Theorem see Section \ref{subsec:moduli of stable SUSY curves}. 

\newl There is an alternative model for the compactification of the space of stable $\mscr{N}=1$ SUSY curve different from the one proposed by Deligne based on studying the spaces of stable log twisted SUSY curves \cite{Wakabayashi1602}. This model will not be pursued here. Recently, the moduli spaces of {\it smooth} SUSY curves with R punctures have also been constructed in the literature \cite{OttVoronov1910,BruzzoHernandezRuiperez1910}. 

\subsection{Notations and Remarks}\label{subsec:notation and remarks}

For the convenience of reader, we collect some notations used in this paper

\begin{itemize}
	\item $\mscr{M}_{\sg,\sn}$: the moduli stack of genus-$\g$ Riemann surfaces with $\n$ punctures.
	
	\item $\overline{\mscr{M}}_{\sg,\sn}$: the moduli stack of stable genus-$\g$ Riemann surfaces with $\n$ punctures.
	
	\item $\partial\overline{\mscr{M}}_{\sg,\sn}$: the boundary (i.e. points at infinity) of $\overline{\mscr{M}}_{\sg,\sn}$. It is isomorphic to $\overline{\mscr{M}}_{\sg,\sn,}\setminus \mscr{M}_{\sg,\sn}$. 
	
	\item $\mscr{S}_{\sg,\sns,\sra}$: the moduli stack of genus-$\g$ spin curves with $\ns$ Neveu-Schwarz punctures and $\ra$ Ramond punctures. The moduli stack of genus-$\g$ spin curves with $\ns$ Neveu-Schwarz punctures and $\ra$ Ramond punctures with an even spin structure is denoted as $\mscr{S}_{\sg,\sns,\sra}^{\text{ev}}$ for the even spin structure, and $\mscr{S}_{\sg,\sns,\sra}^{\text{od}}$ for an odd spin structure. 
	
	\item $\overline{\mscr{S}}_{\sns,\sra}$: the stable-curve compactification of the moduli stack of possible disconnected spin curves with $\ns$ Neveu-Schwarz punctures and $\ra$ Ramond punctures
	
	\item $\overline{\mscr{S}}_{\sg,\sns,\sra}$: the compactified moduli stack of connected genus-$\g$ spin curves with $\ns$ Neveu-Schwarz punctures and $\ra$ Ramond punctures. This moduli stack is the component of $\overline{\mscr{S}}_{\sns,\sra}$ coming from the connected genus-$\g$ surfaces. The moduli stack of genus-$\g$ spin curves with $\ns$ Neveu-Schwarz punctures and $\ra$ Ramond punctures with a specific spin structure is denoted as $\overline{\mscr{S}}_{\sg,\sns,\sra}^{\text{ev}}$ for the even spin structure, and $\overline{\mscr{S}}_{\sg,\sns,\sra}^{\text{od}}$ for the odd spin structure.

	\item $\partial \overline{\mscr{S}}_{\sg,\sns,\sra}$: the boundary (i.e. points at infinity) of the compactified moduli stack of genus-$\g$ spin curves with $\ns$ Neveu-Schwarz punctures and $\ra$ Ramond punctures. It is isomorphic to $\overline{\mscr{S}}_{\sg,\sns,\sra}\setminus \mscr{S}_{\sg,\sns,\sra}$.
	
	\item  \hypertarget{notation for stack of punctured spin curve}{$\mscr{S}^{\mbf{m}}_{\sg,\sn}$:}  The moduli stack of genus-$\g$ spin curve with $\n$ punctures such that the vector $\mbf{m}\equiv (m_1,\cdots,m_{\sn})$ is a vector whose components $m_a$ are the order of vanishing at punctures $\mfk{p}_a$. Let us make this a bit more precise. As we will explain in section \ref{sec:background}, there is a distinguished subbundle $\mcal{D}$ inside the tangent bundle $T\mcal{R}$ of an $\mscr{N}=1$ super-Riemann surface $\mcal{R}$ which is nonintegrable. This means that there exists a short exact sequence
	\begin{equation}
	0\longrightarrow \mcal{D}\longrightarrow T\mcal{R}\longrightarrow T\mcal{R}/\mcal{D} \longrightarrow 0. 
	\end{equation}
	The bundle $T\mcal{R}/\mcal{D}$ is generated by $\partial_z$. When there is no punctures, we will see that the bundle $\mcal{D}^2$ is also generated by $\partial_z$. Once punctures are introduced, the bundle $T\mcal{R}/\mcal{D}$ is still generated by $\partial_z$. However, it is not necessarily true that $\mcal{D}^2$ is generated by $\partial_z$. For example, we will see in section \ref{sec:background} that in the presence of Ramond puncture $\mcal{D}^2$ is actually generated by $P(z)\partial_z$, where $P(z)$ has a simple zero at the location of the Ramond puncture $\mfk{p}_a$, i.e. $P(z)=z-z_a$. We thus have $\mcal{D}^2\simeq T\mcal{R}/\mcal{D}\otimes \mcal{O}(-\mscr{D})$, where
	\begin{equation}
	\mscr{D}\equiv \sum_{a=1}^{\sra}\mscr{D}_a=\sum_{a=1}^{\sra}\mfk{p}_a,
	\end{equation}
	is the divisor defined by the locations of Ramond punctures $\mfk{p}_a$ at $z=z_a$. This means that a section of $\mcal{D}^2$ is a section of $T\mcal{R}/\mcal{D}$ which vanishes along the divisor $\mscr{D}$. In general, $\mcal{D}^2$ is generated by $P(z)\partial_z$, where now $P(z)$ has higher-order zeros at the location of punctures. If near the puncture $\mfk{p}_a$, $P(z)$ has zero of order $m_a$, then $\mcal{D}^2\simeq T\mcal{R}/\mcal{D}\otimes \mcal{O}(-\mscr{D})$ where now
	\begin{equation}
	\mscr{D}=\sum_{a=1}^\sn \mscr{D}_a=\sum_{a=1}^\sn m_a\mfk{p}_a. 
	\end{equation}  
	This means that a section of $\mcal{D}^2$ is a section of $T\mcal{R}/\mcal{D}$ which vanishes along the divisor $\mscr{D}$ with zeros of order $m_a$ at the puncture $\mfk{p}_a$, i.e. $P(z)=\prod_a(z-z_a)^{m_a}$. Once we reduce from an $\mscr{N}=1$ super-Riemann surface to the underlying spin curve, $\mcal{D}$ is essentially a choice of spin structure. Therefore, $\mscr{S}^{\mbf{m}}_{\sg,\sn}$ is the moduli stack of genus-$\g$ spin curve with $\n$ punctures such that the vector $\mbf{m}$ is the vector of the order of vanishing at punctures. For the cases we are interested in, the order of vanishing are as follows: $m_a=0$ for NS punctures and $m_a=1$ for Ramond punctures.
	
	\item  $\mscr{S}(\sns,\sra)$: the moduli space of possibly disconnected spin-Riemann surfaces with $\ns$ Neveu-Schwarz boundary components (see Definition \ref{def:NS boundary components}) and $\ra$ Ramond boundary components (see Definition \ref{def:R boundary components}). 
	
	\item $\mscr{S}_{\sg}(\sns,\sra)$: the moduli space of connected genus-$\g$ spin-Riemann surfaces with $\ns$ Neveu-Schwarz boundary components and $\ra$ Ramond boundary components. 
	
	\item $\widetilde{\mscr{S}}(\sns,\sra)$: the moduli space of stable spin-Riemann surfaces, decorated at each Neveu-Schwarz puncture with a ray in the tangent space, at each Ramond punctures with a ray in the spinor bundle, at each Neveu-Schwarz node with a ray in the tensor product of rays of the tangent spaces at each side, and at each Ramond node with a ray in the spinor bundle. We emphasize that a surface in $\widetilde{\mscr{S}}(\sns,\sra)$ does not have boundary components. $\widetilde{\mscr{S}}(\sns,\sra)$ is however homotopy-equivalent to $\mscr{S}(\sns,\sra)$. As we will explain, there is a crucial fact about the gluing of Ramond punctures in $\widetilde{\mscr{S}}(\sns,\sra)$ which makes it more suitable than $\mscr{S}(\sns,\sra)$ to work with.
	
	\item $\mscr{SM}_{\sg,\sns,\sra}$: the moduli space of genus-$\g$ $\mscr{N}=1$ super-Riemann surfaces with $\ns$ Neveu-Schwarz punctures and $\ra$ Ramond punctures.
	
	\item $\overline{\mscr{SM}}_{\sg,\sns,\sra}$: the compactified superstack of genus-$\g$ $\mscr{N}=1$ super-Riemann surfaces with $\ns$ Neveu-Schwarz punctures and $\ra$ Ramond punctures.
	
	\item $\mscr{SM}_{\sg}(\ns,\ra)$: the moduli space of genus-$\g$ $\mscr{N}=1$ super-Riemann surfaces with $\ns$ Neveu-Schwarz boundary components and $\ra$ Ramond boundary components.
\end{itemize}

\newl {\bf\small Remark 1:} When we say {\it puncture}, we do not mean removing the point from the surface. The surfaces are still closed with marked points on them. There is a delicate difference between marked points and punctures in the presence of spin structure. We only use this terminology to cope with the literatures of superstring theory. We also would like to mention that that since a Ramond puncture is part of the defining data of a superconformal structure, it defines a divisor (we make this statement precise in section \ref{subsec:punctures on super-Riemann surface}). Therefore, Ramond divisor is a better terminology, as has been used in \cite{Witten2012b,Witten2012c}. 

\newl {\bf\small Remark 2:} We use the notation $(z|\theta)$ to denote the local coordinates on a $\mscr{N}=1$ super-Riemann surface. We also use the notation $(p|q)$ to denote the dimension of supermanifolds, supervector spaces, etc with $p$ even directions and $q$ fermionic directions. An exception is when we talk about pseudoforms. In this case, $p$ is related to the form degree of pseudoforms and $-q$ is the so-called picture number of pseudoforms, as we shall explain in Appendix \ref{app:the picture-changing operation}.

\newl {\bf\small Remark 3:} A surface equipped with a line bundle $\mscr{E}$ whose $r$\textsuperscript{th} tensor power is isomorphic to the canonical line bundle, $\mscr{E}^{\otimes r}\simeq \omega$, is called an $r$-spin curve and $\mscr{E}$ is called a generalized spin structure. We are interested in the case that $r=2$ corresponding to the ordinary spin structure, and as such it should be called a $2$-spin curve. To avoid cluttering, we call them spin curves. A discussion of spin curves defined over a base scheme can be found in Appendix \ref{app:compactification of the moduli stack of spin curves}.

\newl {\bf\small Remark 4:} We denote the {\it signature} of a genus-$\g$ spin curve or $\mscr{N}=1$ super-Riemann surface with $\ns$ NS punctures and $\ra$ R punctures as $(\g;\ns,\ra)$.

\newl {\bf\small Remark 5:} For {\it punctured surfaces}, we use ordinary, spin- or SUSY curves (the language of algebraic geometry where we can work with curves over an algebraically-closed field $\mbb{F}$\footnote{One can define curves over fields which are not algebraically-closed. We are not considering such curves in this work.}) and ordinary, spin- or super-Riemann surfaces (the language of analytic geometry where $\mbb{F}=\mbb{C}$) interchangeably throughout the text. We use the terms {\it bordered} ordinary, spin- or super-Riemann surfaces for surfaces with geodesic boundary components. For example, We call a genus-$\g$ Riemann surface with $\ns$ Neveu-Schwarz punctures and $\ra$ Ramond punctures equipped with a spinor bundle $\mscr{E}$, a genus-$\g$ punctured spin curve\footnote{There are generalized $r$-spin curves, the curves equipped with a line bundle $\mscr{E}$ whose $r$\textsuperscript{th} tensor power is isomorphic to $\omega$.}. Sometimes we omit $\mscr{N}=1$ but we always mean $\mscr{N}=1$ (smooth or stable) SUSY curve.

\newl {\bf\small Remark 6:} When it is needed to describe more than one surface, we denote the signature of the $a$\textsuperscript{th} surface by $(\g_a;\ns^{a},\ra^a)$. Here, $\ns^a$ ($\ra^a$) could denote either the $a$\textsuperscript{th} NS (respectively R) puncture or NS (respectively R) boundary. 

\newl The organization of the paper is as follows. In Section \ref{sec:background}, we define spin curves, punctures on them, and the gluing of punctures. In Section \ref{sec:gluing of punctures on spin curves and torsion-free sheaves}, we investigate the relation between gluing of punctures on spin curves and the compactification of the moduli stack of spin curves, and also a subtlety in the gluing of Ramond punctures. We then establish the existence and uniqueness of a solution to the BV quantum master equation in the moduli space of bordered spin-Riemann surfaces, and their relation to fundamental classes of DM spaces in Section \ref{sec:the existence and uniqueness of of solution to the BV QME in the moduli space of bordered spin-Riemann surfaces}. In Section \ref{sec:the existence of solution to the BVQME in supermoduli of bordered surfaces}, we prove the existence of solution to the BV QME in the moduli space of bordered $\mscr{N}=1$ super-Riemann surfaces. Section \ref{sec:superstring vertices} is devoted to the proof of the existence of heterotic-string and type-II-superstring vertices. Finally, we turn to the moduli problem of $\mscr{N}=1$ SUSY curves in Section \ref{sec:on the moduli problem of N=1 SUSY Curves}.  In the Appendix \ref{app:the picture-changing operation}, we explain the picture-changing operation from canonical, path-integral and supergeometry point of view, and their equivalence. The compactification of the moduli stack of spin curves is the subject of Appendix \ref{app:compactification of the moduli stack of spin curves}.

\section{Spin Curves, Punctures and their Gluing}\label{sec:background}
In this section, we define spin curves and various punctures on them. We then explain the gluing of punctures which can be used to construct new surfaces. 

\subsection{Spin Curves over $\mbb{C}$}\label{subsec:spin curves over C}

The gauge-fixed worldsheet action of the RNS-superstring theory has superconformal (gauge) invariance and, as such, the worldsheet theory should only be sensitive to the superconformal structure of the worldsheet. This means that the worldsheet cannot be a general $(1|1)$ complex supermanifold whose transition functions are arbitrary superholomorphic functions. Instead, it should be a $(1|1)$ complex supermanifold with superconformal transition functions, i.e. an $\mscr{N}=1$ super-Riemann surface. 

\newl An $\mscr{N}=1$ super-Riemann surface is a complex supermanifold of dimension $(1|1)$ with the following properties 
\begin{enumerate}
	\item[1)] It can be locally mapped to $\mbb{C}^{1|1}$. Each local chart is parametrized by $(z|\theta)$, where $z$ is a local complex commuting coordinate and $\theta$ is a complex anti-commuting coordinate. Since we are interested in super-Riemann surfaces appearing in the heterotic-string theory, the complex conjugate of $\theta$ does not appear, and as such there is no reality condition on $\theta$ \cite{Cohn1988a}. A similar argument holds for the case of surfaces appearing in the type-II superstring theories. 
	
	\item[2)] The transition function between the overlapping charts $(z|\theta)$ and $(z'|\theta')$ is given by superconformal coordinate transformations, i.e. those coordinate transformations that satisfy the following relations
	\begin{equation}\label{eq:equation of superconformal coordinate transformation}
	D_\theta z'=\theta'D_\theta\theta',
	\end{equation}
	where $D_\theta$ is the spinor derivative
	\begin{equation}\label{eq:spinor derivative}
	D_{\theta}\equiv \frac{\partial}{\partial\theta}+\theta\frac{\partial}{\partial z}. 
	\end{equation}
	A local coordinate $(z|\theta)$ in which $D_{\theta}$ takes the form \eqref{eq:spinor derivative} is called a {\it superconformal coordinate system}. On any super-Riemann surface there is a local superconformal coordinate $(z|\theta)$ such that $D_{\theta}$ has the form \eqref{eq:spinor derivative}. For the proof of this fact, see Lemma $1.2$ of \cite{LeBrunRothstein198803} or Lemma $3.1$ of \cite{DonagiWitten2013a}. $D_\theta$ generates a $(0|1)$-dimensional subbundle $\mcal{D}$ of the tangent bundle such that if $s$ is a nonzero local section of $\mcal{D}$, then $s$ and $s^2\equiv\frac{1}{2}\{s,s\}$ are linearly independent\footnote{As we explain later, this statement fails to hold in the presence of Ramond punctures.}, i.e. the section is {\it completely nonintegrable}. 
\end{enumerate} 
Another useful and equivalent way to define an $\mscr{N}=1$ super-Riemann surface which is closer to the supergravity description is as follows. A $(1|1)$-dimensional supermanifold $\mcal{R}$ is locally parametrized by ${\boldsymbol{z}}\equiv(x_1,x_2;\theta_1,\theta_2)$, where $x_a$ and $\theta_a$ for $a=1,2$ are real. The geometry of the worldsheet is described by a supermetric. Dealing with supermetric is easier if we use the language of frame fields.  As such we introduce the vierbein $\boldsymbol{E}_A(\mbs{z})=E\indices{_A^B}(\mbs{z})\partial_B$ on each chart $(\boldsymbol{z}_\alpha,\boldsymbol{\mscr{U}}_\alpha)$, subject to the torsion constraints required to remove the redundant components \cite{Howe197303}. Here $A,B=x_1,x_2,\theta_1,\theta_2$. There is always a choice of supercoordinate $\mbs{z}\equiv(z,\bm\bar{z};\theta,\bm\bar{\theta})$, where $\bm\bar{z}$ and $\bm\bar{\theta}$ are the complex conjugate of $z$ and $\theta$ respectively, such that the complex superfield $\mbs{E}(\mbs{z})=E\indices{^A}(\mbs{z})\partial_A$ is proportional to the spinor derivative \eqref{eq:spinor derivative}
\begin{equation}\label{eq:the form of zweibein in conformal gauge}
\mbs{E}(\mbs{z})=\rho(\mbs{z})D_\theta,
\end{equation}
for some function $\rho(\mbs{z})$. $\{\mbs{E},\overline{\mbs{E}};\mbs{E}^2,\overline{\mbs{E}}^2\}$ form a basis for the tangent space at each point $\mbs{z}$. \eqref{eq:the form of zweibein in conformal gauge} is invariant precisely under coordinate transformations \eqref{eq:equation of superconformal coordinate transformation}. If we choose to work with superconformal coordinate transformations in the overlap of patches $\boldsymbol{\mscr{U}}_\alpha$ and $\boldsymbol{\mscr{U}}_\beta$, we end-up with an atlas which defines a superconformal structure on $\mcal{R}$. A $(1|1)$-dimensional supermanifold endowed with such a superconformal structure is called an $\mscr{N}=1$ super-Riemann surface. For a mathematically-precise definition of a family of {\it smooth} $\mscr{N}=1$ super-Riemann surfaces, see Definition \ref{def:stable SUSY curves}.

\newl All super-Riemann surfaces we are dealing with in heterotic-string and type-II-superstring theories have trivial topology in the fermionic direction. This restriction is implemented by endowing the surface with the DeWitt topology. Although nontrivial topology in fermionic direction is also possible \cite{CraneRabin198503a,CraneRabin198503b}, it can be argued that they are not contribute to the path integrals in string theory \cite{CraneRabin198812}. 

\newl In the case of ordinary Riemann surface, the operator $\partial_z$ transforms as $\left(\frac{\partial z'}{\partial z}\right)^{-1}\partial_z$. Then, the Dolbeault operator $\partial\equiv dz\otimes \partial_z$ is globally-defined. Similarly, on an $\mscr{N}=1$ super-Riemann surface, the spinor derivative transforms as 
\begin{equation}
D_{\theta'}=D_{\theta'}\theta D_{\theta}.
\end{equation}
We can define an operator
\begin{equation}\label{eq:super-Dolbeault operator}
D\equiv [dz|d\theta]\otimes D_{\theta},
\end{equation}
which can be thought of as the super version of the usual Dolbeault operator. $[dz|d\theta]$ denotes a section of the canonical (or Berezinian) line bundle (similar to the ordinary case that $dz$ is a section of the usual canonical bundle). Under a coordinate transformation $(z|\theta)\longrightarrow(z'|\theta')$, we have
\begin{alignat}{2}
D'&=[dz'|d\theta']\otimes D_{\theta'}=\mtt{Ber}\left(\frac{\partial(z'|\theta')}{\partial(z|\theta)}\right)\cdot \left(D_{\theta'}\theta\right) [dz|d\theta]\otimes D_{\theta}\nonumber
\\
&=\left(D_\theta\theta'\right)\cdot\left(D_{\theta'}\theta\right)[dz|d\theta]\otimes D_{\theta}=[dz|d\theta]\otimes D_{\theta}=D, 
\end{alignat}
where $\mtt{Ber}(\cdots)$ denotes the Berezinian or superdeterminant\footnote{Note that $\cdot$ in the above expression does not mean anything special and we inserted it for clarity of expressions.}. We thus conclude that $D$ is a globally-defined operator and $D_\theta$ is a section of the line bundle dual to the canonical line bundle.

\newl The generic solutions of \eqref{eq:equation of superconformal coordinate transformation} are given by \cite{RoslySchwarzVoronov1988a}
\begin{alignat}{3}\label{eq:solution of superconformal coordinate transformation}
z'&=f(z)+\theta (\partial_zf(z))\varepsilon(z),\nonumber
\\
\theta'&=\left(\partial_zf(z)\right)^{\frac{1}{2}}\left(\theta+\varepsilon(z)+\frac{1}{2}\theta\varepsilon(z)\partial_z\varepsilon(z)\right),
\end{alignat}
where $f(z)$ and $\varepsilon(z)$ are commutating and anticommuting functions and both of which are holomorphic functions of $z$. In the absence of odd moduli for the super-Riemann surface, i.e. $\varepsilon(z)=0$, the transition functions \eqref{eq:solution of superconformal coordinate transformation} reduce to
\begin{alignat}{2}\label{eq:transition function split super-Riemann surface}
z'&=f(z),\nonumber
\\
\theta'&=\left(\partial_zf(z)\right)^{\frac{1}{2}}\theta,
\end{alignat}
on the intersection of two superconformal patches. The first equation is the transition function between two overlapping charts on an ordinary Riemann surface $\mcal{R}$, i.e. a conformal transformation, and the second is the coordinate transformation of the fibers of a fiber bundle on $\mcal{R}$. Together, they are transition functions of the total space of a spinor bundle $\mscr{E}$ over $\mcal{R}$. The choice of square root in the transition function of $\theta$ corresponds to the choice of spin structure. For a mathematically-precise definition of spin curves see Definitions \ref{Defn_SpinCurve} and \ref{Defn_Marked_RelSpinCurve}. 

\newl By going around each 1-cycle of $\mcal{R}$, $\theta$ can be either periodic or anti-periodic. For a genus-$\g$ surface $\mcal{R}$, since $\dim (H_1(\mcal{R},\mbb{Z}))=2\g$, there are $2\g$ possible non-trivial 1-cycles, and therefore $2^{2\sg}$ possible choices of signs for the transition functions of $\theta$, i.e. there are $2^{2\sg}$ spin structures. We thus conclude that {\it a genus-$\g$ super-Riemann surface with vanishing odd moduli is an ordinary genus-$\g$ Riemann surface. Since the transition function for $\theta$ depends on the square root of $f(z)$, we get a genus-$\g$ spin curve}. This discussion shows that $\mscr{S}_{\sg}$, the moduli stack of genus-$\g$ spin curves is the reduced space\footnote{The reduced space of a supermanifold is an ordinary manifold which can be obtained by putting equal to zero all of the odd coordinates and all (possible) odd moduli of the supermanifold. For more on this see \cite{Witten2012a}.} of $\mscr{SM}_{\sg}$, the superstack of genus-$\g$ $\mscr{N}=1$  super-Riemann surfaces. It is also a $2^{2\sg}$-fold covering of $\mscr{M}_{\sg}$, the moduli stack of ordinary genus-$\g$ Riemann surfaces. We thus have the following 
\begin{equation}
\mscr{SM}_{\sg}\xhookleftarrow{\iota_{\mscr{S}}} \mscr{S}_{\sg}\xlongrightarrow{\pi_\mscr{M}} \mscr{M}_{\sg},
\end{equation}
where $\iota_{\mscr{S}}$ is a closed embedding \cite{DonagiWitten2013a}, and $\pi_{\mscr{M}}$ is a covering map of degree $2^{2\sg}$. There is a similar diagram for the punctured spin curves and compactified moduli stacks. In the construction of the compactification divisor of $\overline{\mscr{S}}_{\sg}$ (or $\overline{\mscr{S}}_{\sg,\sns,\sra}$ in the presence of NS and R punctures) using plumbing of surfaces, one needs to deal with {\it gluing of spin bundles} on the original surface(s). The question that arises is the proper way to glue spin bundles to get a well-defined notion of spin structure. It turns out that this is not a trivial task \cite{Cornalba1989a,Jarvis1994a}. We clarify some aspects of this construction in Section \ref{subsec:gluing of punctures}, and give a detailed account of the construction in Appendix \ref{app:compactification of the moduli stack of spin curves}.

\subsection{Punctures on Spin and Super-Riemann Surfaces}\label{subsec:punctures on super-Riemann surface}

In heterotic-string theory, one has to specify the boundary condition for the worldsheet fermions, which could be either periodic or anti-periodic. The inclusion of both types of boundary conditions is necessary for the unitarity of loop amplitudes \cite{GrossHarveyMartinecRohm1985b}. Since these two types of boundary conditions cannot be continuously deformed into each other, the total Hilbert space $\mscr{H}$ consists of two sectors with their own vacua: the Neveu-Schwarz (NS) sector $\mscr{H}_{\text{NS}}$ corresponding to the periodic boundary condition and the Ramond (R) sector $\mscr{H}_{\text{R}}$ corresponding to the antiperiodic boundary condition
\begin{equation}\label{eq:Hilbert space decomposition}
\mscr{H}=\mscr{H}_{\text{NS}}\oplus\mscr{H}_{\text{R}}.
\end{equation}
The external states can be from either of these two sectors. From the state-operator correspondence, there are thus two types of punctures associated to the external states \cite{Alwarez-GaumeNelsonGomezSieraVafa1988}
\begin{enumerate}
	\item[$\bullet$] {\bf\small Neveu-Schwarz or Super Punctures:} This type of puncture is just a marked point on the surface. In a local coordinate $(z|\theta)$, any point $(z|\theta)=(z_0|\theta_0)$ can be the location of an NS puncture. Any NS puncture defines a divisor $\Delta_\alpha$ given by $z=z_0+\alpha\,\theta$ and $\theta=\theta_0+\alpha$, for an anti-commuting variable $\alpha$. This divisor is the orbit through the NS puncture located at $(z_0|\theta_0)$ and is generated by $D_\theta$\footnote{Consider a point $p$ on a super-Riemann surface and an odd tangent vector $v$. Then, there is a subspace of dimension $(0|1)$ that passes through $p$ and has $v$ as the tangent vector.} \cite{Witten2012b}. There can be any number of NS puncture on an $\mscr{N}=1$ super-Riemann surface. Therefore, inserting an NS puncture on a super-Riemann surface increases the complex dimension of its superstack by $(1|1)$;
	
	\item[$\bullet$] {\bf\small Ramond or Spin Punctures:} This type of puncture is actually a singularity in the superconformal structure (and not the surface itself), and so is part of the defining data of a superconformal structure. What do we mean by this is the following. Away from an R puncture, there is a local superconformal coordinate system $(z|\theta)$ such that
	\begin{equation}
	D=[dz|d\theta]\otimes D_{\theta}=[dz|d\theta]\otimes \left(\frac{\partial}{\partial \theta}+\theta\frac{\partial}{\partial z}\right). 
	\end{equation}
	The local expression of the spinor derivative satisfies the following relation
	\begin{equation}
	D_{\theta}^2=\frac{1}{2}\{D_{\theta},D_{\theta}\}=\frac{\partial}{\partial z}.
	\end{equation}
	In particular, $D_{\theta}^2$ is linearly independent of $D_{\theta}$ and non-zero everywhere. However, near an R puncture, there is a local coordinate $(w|\eta)$ such that
	\begin{equation}
	D=[dw|d\eta]\otimes D_{\eta}=[dw|d\eta]\otimes \left(\frac{\partial}{\partial \eta}+\eta(w-w_0)\frac{\partial}{\partial w}\right),
	\end{equation} 
	which satisfies
	\begin{equation}
	D_\eta^2=\frac{1}{2}\{D_{\eta},D_{\eta}\}=(w-w_0)\frac{\partial}{\partial w},
	\end{equation}
	where $w_0$ is the location of insertion of the R puncture. This expression shows that $D_{\theta}^2$ vanishes at $w-w_0=0$, i.e. along the divisor defined by the location of the R puncture\footnote{In general, $D_{\eta}^2=(w-w_0)^n\frac{\partial}{\partial w}$ for $n\ge 1$ near a puncture. Such a surface is called a super-Riemann surface with level-$n$ parabolic structure. The case of $n=1$ corresponds to the superconformal structure in the presence of Ramond puncture.}. A possible change of coordinates is
	\begin{equation}
	(w|\eta)\,\longrightarrow\, (z|\theta)=(w-w_0|(w-w_0)^{1/2}\eta),
	\end{equation}
	where $D$ retain its usual form as \eqref{eq:super-Dolbeault operator}. However, this change of coordinate introduces cuts between pairs of R punctures. Since by going around an R punctures $\theta\longrightarrow -\theta$, there can not be a single R puncture, i.e. R punctures come in pairs\footnote{For another explanation see section $4.2.2$ of \cite{Witten2012b}.}. Each pair of R punctures contributes two even and two odd moduli to the complex dimension of superstack. However, one of the odd moduli of a pair of R punctures can be removed by the conformal-Killing spinor of a disk with two R punctures \cite{Giddings1992a}. Therefore, inserting a pair of R punctures on a super-Riemann surface increases the complex dimension of its superstack by $(2|1)$. 
\end{enumerate}

On a spin curve, the punctures are given by a point on the surface and a choice of $D_{\theta}$. Just as the case of super-Riemann surfaces, the choice of $D_{\theta}$ depends on whether the puncture is of NS or R type. We now turn to the gluing of punctures on a spin- or an $\mscr{N}=1$ super-Riemann surface.

\subsection{Gluing of Punctures on Spin and Super-Riemann Surfaces}\label{subsec:gluing of punctures}
In this section, we describe the gluing of two punctures on a spin or an $\mscr{N}=1$ super-Riemann surface. Intuitively, the gluing corresponds to propagation of states between the glued states, just like a Feynman diagram. The direct sum structure of the decomposition of the total Hilbert space \eqref{eq:Hilbert space decomposition} shows that there can not be propagation between a state from NS sector and a state from R sector, since otherwise there can be mixing of states. We can thus only glue two punctures associated to states from the same sector of the theory, i.e. we can glue either two NS punctures or two R punctures. On the other hand, the gluing can be done either on a single surface or on two separate surfaces. If we consider the underlying spin surfaces of the component super-Riemann surfaces, the surfaces are endowed with spin structures, the gluing relations are different for the two types of punctures. When two punctures are glued, the corresponding spin bundles are glued as well. For a spin curve, we thus need a gluing relation that describes the gluing of surfaces as we as gluing of spinor bundles.  

\newl Consider two punctures on a single or two disconnected $\mscr{N}=1$ super-Riemann surfaces. The neighborhoods of these punctures can be described by local superconformal coordinates $(z|\theta)$ and $(w|\eta)$. The punctures are located at $(z_0|\theta_0)$ and $(w_0|\eta_0)$. In terms of these coordinates, the gluing relations can be described as follows
\begin{enumerate}
	\item[$\bullet$] {\bf\small Gluing of two NS Punctures:} To describe the gluing relation, we define the following coordinates
	\begin{alignat}{2}
	x_1&\equiv z-z_0-\theta\theta_0, \qquad & \qquad \widehat{x}_1&\equiv \theta-\theta_0,\nonumber
	\\
	x_2&\equiv w-w_0-\eta\eta_0, \qquad & \qquad \widehat{x}_2&\equiv \eta-\eta_0.
	\end{alignat}
	Two NS punctures can then be glued via the following superconformal mapping \cite{Cohn1988a}
	\begin{alignat}{1}\label{eq:gluing relation NS punctures I}
	x_1x_2+t^2&=0,	\nonumber
	\\
	x_1\widehat{x}_2-t\widehat{x}_1&=0, \nonumber
	\\
	x_2\widehat{x}_1+t\widehat{x}_2&=0,  \nonumber
	\\
	\widehat{x}_1\widehat{x}_2&=0.
	\end{alignat}
	$t$ is a complex parameter. For non-zero $t$, the two middle equations are equivalent which can be seen by multiplying the second equation by $x_2$ (or the third equation by $x_1$) and using the first equation. However, these equations are independent for $t=0$. One usually is interested in the case that the punctures are located at $(z_0|\theta_0)=(0|0)=(w_0|\eta_0)$ in the respective local coordinates. In this case, the gluing relations can be simply written as
	\begin{alignat}{1} \label{eq:gluing relation NS punctures II}
	zw+t^2&=0,\nonumber
	\\
	z\eta-t\theta&=0,\nonumber
	\\
	w\theta+t\eta&=0, \nonumber
	\\
	\theta\eta&=0.
	\end{alignat}
	To connect with the usual notation in the literature, one defines $q_{\text{NS}}\equiv-t^2$. These relations reduce to the plumbing of punctures on ordinary Riemann surfaces by setting $\theta=\eta=0$ and $q\equiv q_{\text{NS}}$. The one-parameter family of glued surfaces are $\mscr{N}=1$ super-Riemann surfaces parametrized by the even and odd moduli of the component surface(s) and the extra complex parameter $t$. It is easy to check that the resulting surfaces have correct number of even and odd moduli. 
	
	\item[$\bullet$] {\bf\small Gluing of two R Punctures:}
	Two R punctures can be glued using the following relations \cite{Witten2012b}
	\begin{alignat}{1}\label{eq:gluing relation R punctures I}
	(z-z_0)(w-w_0)+t^2&=0, \nonumber
	\\
	(\theta-\theta_0)+\alpha\mp\mfk{i}\,(\eta-\eta_0)&=0,
	\end{alignat}
	where $\alpha$ is an odd gluing parameter. The origin of this parameter is an extra symmetry $\theta \longrightarrow \theta+\alpha$ of the divisor defined by an R puncture. We will explain the idea behind this extra odd symmetry in section \ref{subsubsec:gluing of R punctures}. Putting the punctures at $(z_0|\theta_0)=(0|0)=(w_0|\eta_0)$ in the respective local coordinates, these relations become
	\begin{alignat}{1} \label{eq:gluing relation R punctures II}
	zw+t^2&=0, \nonumber
	\\
	\theta+\alpha \mp \mfk{i}\,\eta&=0.
	\end{alignat}
	One can connect these to the notation in the literature by defining $q_{\text{R}}\equiv-t^2$. Again, these relations reduce to the plumbing of punctures on ordinary Riemann surfaces by setting $\theta=\eta=0$ and $q\equiv q_{\text{R}}$. The two-parameter family of glued surfaces are $\mscr{N}=1$ super-Riemann surfaces parametrized by the even and odd moduli of the component surface(s) and the extra complex parameters $t$ and $\alpha$\footnote{As we mentioned, the odd gluing parameter $\alpha$ enters the gluing relations due to an extra symmetry of the divisors defined by the glued R punctures, as we discuss in section \ref{subsec:subtlety in the gluing of R punctures}. However, there are special cases that $\alpha$ cannot be considered as a modulus. Essentially, these are the cases that the component surfaces have superconformal automorphisms that acts on the divisors associated to the glued punctures by $\theta\longrightarrow\theta+\alpha$ (or $\eta\longrightarrow\eta+\alpha$). These automorphisms can then be used to remove the extra symmetry of the R punctures. In such cases, there is no odd gluing parameter in the gluing relations of R punctures and can be studied separately.}. It is easy to check that the resulting surfaces have correct number of even and odd moduli.
\end{enumerate}

\newl {\bf\small Remark:} The gluing of two R punctures can also be described as follows  \cite{Cohn1988a}. We first define the following coordinates
\begin{equation}
y_1^2\equiv -(z-z_0), \qquad \qquad y_2^2\equiv +(w-w_0).
\end{equation}
Two R punctures can then be glued via the following superconformal mapping
\begin{alignat}{1}\label{eq:gluing relation R punctures III}
\left(y_1\mp\frac{1}{4}\theta\,\widehat{t}\right)\left(y_2-\frac{1}{4}\eta\,\widehat{t}\right)-t&=0,	\nonumber
\\
\mp\theta y_2+\eta y_1+t\,\widehat{t}&=0.
\end{alignat}
$\widehat{t}$ is an odd gluing parameter. In this paper, we only use the gluing relation \eqref{eq:gluing relation R punctures II} for the gluing of two R punctures.

\newl The gluing relations of both NS punctures \eqref{eq:gluing relation NS punctures II} and R punctures \eqref{eq:gluing relation R punctures II} are two-valued. This is related to the imposition of GSO projection by summing over different spin structures. For more explanation see section $6.2.3$ of \cite{Witten2012b}. 

\newl We were interested in the gluing of two R punctures on spin-Riemann surfaces. {\it How do the gluing relations change when we project super-Riemann surfaces to their underlying spin curves?} The gluing relations of NS punctures remains the same. However, the gluing relations of R punctures will change a bit. The reason is that the symmetry of the divisor defined by an R punctures is lost by projecting to the underlying spin curve, i.e. there is no parameter $\alpha$. Therefore, we use the following gluing relations of two R punctures on an spin-Riemann surfaces become
\begin{alignat}{1} \label{eq:gluing relation R punctures on spin curves}
zw+t^2&=0, \nonumber
\\
\theta\pm \mfk{i}\,\eta&=0.
\end{alignat}
As usual, the gluing of even coordinates is the usual plumbing fixture of ordinary Riemann surfaces. The gluing of odd coordinates is interpreted as the gluing of the corresponding spin bundles which induces the spin bundle, i.e. the choice of spin structure, on the resulting surface. For a mathematically-precise description of gluing of punctures on spin curves see Section \ref{app: gluing marked points}.

\section{Nodal Spin Curves and Torsion-Free Sheaves}\label{sec:gluing of punctures on spin curves and torsion-free sheaves}
In this section, we consider the gluing of two punctures of the same kind on a single or two disconnected spin curves using plumbing prescription given in section \ref{subsec:gluing of punctures}. We describe how the gluing provides a proper compactification of the moduli stack of smooth spin curves. This means that starting from the component surface(s)\footnote{We write surface(s) to include the situation that the glued punctures are located on a single surface.} equipped with spin structures, one can construct a family of degenerating surfaces equipped with spin structures such that 1) spin structures are induced from the spin structures of the component surface(s), 2) their moduli can be described in terms of the moduli of component surface(s), and 3) they properly compactify the moduli stack of irreducible spin curves as constructed in \cite{Cornalba1989a, Jarvis1994a,Jarvis1998a,Jarvis2000a}. 

\subsection{Degenerations of Spin Curves}\label{subsec:gluing of punctures on spin curves}
The plumbing fixture construction of a family of degenerating spin curves can be applied to either a single surface or two (and more) surfaces. To be concrete, we consider only the gluing of two surfaces. All considerations are applicable for the other types of gluing. 

\newl Consider two split surfaces $\mcal{R}(\mbs{m},\mathbf{0})$ and $\mcal{R}'(\mbs{m}',\mathbf{0})$, where $\mbs{m}$ and $\mbs{m}'$ denote the even moduli of the surfaces. $\mcal{R}$ has signature $(\g;\ns,\ra)$ and is equipped with a spin structure $\mscr{E}$ and $\mcal{R}'$ has signature $(\g';\ns',\ra')$ and is equipped with a spin structure $\mscr{E}'$. To glue punctures, we consider the local coordinate around the glued punctures on $\mcal{R}$ and $\mcal{R}'$ to be $z$ and $z'$, respectively. Since the surfaces are equipped with spin structure, we also consider the coordinates in the fiber of the respective spinor bundles to be $\theta$ and $\theta'$. These coordinate define the following spinor derivatives acting on the respective spinor bundles
\begin{equation}
D_{\theta}=\frac{\partial}{\partial \theta}+\theta\frac{\partial}{\partial z}, \qquad D_{\theta'}=\frac{\partial}{\partial \theta'}+\theta'\frac{\partial}{\partial z'}. 
\end{equation} 
We would like to understand what is the effect of plumbing fixture on the spinor derivative. These will tells us the spinor bundle on the resulting family of surfaces parametrized by $\mbs{m}$, $\mbs{m}'$, and the gluing parameters.

\subsubsection{Neveu-Schwarz Degenerations}
Consider the NS punctures $\mfk{p}$ and $\mfk{p}'$ on $\mcal{R}$ and $\mcal{R}'$, and the gluing relation \eqref{eq:gluing relation NS punctures II} for the NS puncture. For $t\ne 0$, the independent equations are
\begin{alignat}{1}\label{eq:gluing of NS punctures for non-zero t}
zz'+t^2&=0,\nonumber
\\
z\theta'-t\theta&=0.
\end{alignat}
By putting the first equation into the second one, we get $z'\theta+t\theta'=0$, which is the third equation of \eqref{eq:gluing relation NS punctures II}. Using these relations, we have
\begin{alignat}{2}
\frac{\partial}{\partial \theta}&=\frac{\partial z'}{\partial \theta}\frac{\partial}{\partial z'}+\frac{\partial \theta'}{\partial \theta}\frac{\partial}{\partial \theta'}=-\frac{z'}{t}\frac{\partial}{\partial \theta'}, \nonumber
\\
\theta\frac{\partial}{\partial z}&=-\frac{t\theta'}{z'}\left(\frac{\partial z'}{\partial z}\frac{\partial}{\partial z'}+\frac{\partial \theta'}{\partial z}\frac{\partial}{\partial \theta'}\right)=-\frac{t\theta'}{z'}\left(\frac{z'^2}{t^2}\frac{\partial}{\partial z'}+\frac{z'\theta'}{t^2}\frac{\partial}{\partial \theta'}\right)=-\frac{z'\theta'}{t}\frac{\partial}{\partial z'}.
\end{alignat}
Note that we have used the fact that $\frac{\partial z'}{\partial \theta}=0$. This is because we use one of the last two equations of \eqref{eq:gluing relation NS punctures II}, and as such, $z'$ is independent of $\theta$. Thus, the spinor derivative $D_{\theta}$ transforms as
\begin{equation}\label{eq:NS degeneration spinor derivatives}
D_{\theta}=-\frac{z'}{t}D_{\theta'}.
\end{equation}
It is clear from this form that the spinor derivative blows up as $t\longrightarrow 0$. What does this blow-up mean in terms of the compactification of the moduli stack of spin curves as described in appendix \ref{app:compactification of the moduli stack of spin curves}?
When $t\ne 0$, the spin bundles $\mscr{E}$ and $\mscr{E}'$ are glued together in a natural way to form a spin bundle defined over the whole glued surface. However, when $t=0$, the second and third equations of \eqref{eq:gluing relation NS punctures II} become independent, and \eqref{eq:NS degeneration spinor derivatives} does not make sense. We thus have the following gluing relations at $t=0$
\begin{equation}\label{eq:gluing of NS puncture at t=0}
zz'=0, \qquad z\theta'=0,\qquad z'\theta=0.
\end{equation}
There are two possible cases
\begin{enumerate}
	\item {\it $z=0$ or $z'=0$}: since the equations \eqref{eq:gluing of NS puncture at t=0} are symmetric, let us assume that $z'=0$. The third equation of \eqref{eq:gluing of NS puncture at t=0} is trivially satisfied. Therefore, the odd parameter $\theta$ is left as a free parameter. The corresponding spinor derivative $D_{\theta}$ generates a rank-$1$ bundle with odd fibers, i.e. a spin structure. Therefore, a spinor bundle is defined over the whole glued surface. 
	\item {\it $z=z'=0$}: This case is when, by definition, a node is formed, see figure \ref{fig:an NS node}. In this case, all the equations are trivially satisfied. We thus left with two odd parameters $\theta$ and $\theta'$ whose corresponding spinor derivative $D_{\theta}$ and $D_{\theta'}$ are independent and generate a two-dimensional vector space over the node. Therefore, formation of an NS degeneration is characterized by formation of a $2$-dimensional vector space over the node. In this sense, the NS degenerations do not appear naturally in the compactification of the moduli stack of spin curves.
\end{enumerate}
\begin{figure} \centering 
	\begin{tikzpicture}
	\draw[line width=1pt]  (-1,1) -- (1,-1);
	\draw[line width=1pt]  (1,1) -- (-1,-1);
	\draw[fill] (0,0)node[yshift=.4cm]{$\mfk{n}_{\text{\fontsize{3}{3}\selectfont NS}}$} circle (2pt); 
	\end{tikzpicture}
	\caption{In the gluing of two NS punctures, a node $\mfk{n}_{\text{\fontsize{3}{3}\selectfont NS}}$ is formed when $z=z'=0$.}
	\label{fig:an NS node} 
\end{figure}
Therefore the question is that: {\it what is the proper compactification of the moduli stack of spin curves in which {\normalfont NS} degenerations appear naturally?} As has been is explained in Appendix \ref{app:compactification of the moduli stack of spin curves}, the divisor points of a proper compactification of the moduli stack of spin curves consist of triples $(\widehat{\mcal{R}},\mscr{E},\psi)$. $\widehat{\mcal{R}}$ is a semi-stable curve with copies of $\mbb{P}^1$ intersecting the rest of the curve in at most two points\footnote{Such irreducible components of a semi-stable curve are called {\it exceptional curves}.}, and $\mscr{E}$ is a line bundle which has degree one on each $\mbb{P}^1$. The contraction of all $\mbb{P}^1$ described by the blowup map $\rho: \widehat{\mcal{R}}\longrightarrow \mcal{R}^{\bullet}$ turns $\widehat{\mcal{R}}$ to a stable curve $\mcal{R}^{\bullet}$, see figure \ref{fig:blowup geometry of NS node}. The pushforward $\rho_*\widehat{\mscr{E}}$ is a rank-one torsion-free sheaf. $\psi:(\rho_*\widehat{\mscr{E}})^{\otimes 2}\longrightarrow \omega_{\mcal{R}^{\bullet}}$, where $\omega_{\mcal{R}^{\bullet}}$ is the canonical sheaf of $\mcal{R}^\bullet$, is a homomorphism between the tensor product of $\rho_*\widehat{\mscr{E}}$ and the canonical bundle over $\mcal{R}^{\bullet}$ with the following properties, $1)$ it is an isomorphism where $\rho_*\widehat{\mscr{E}}$ is locally-free, and $2)$ its cokernel, i.e. $\omega_{\mcal{R}^{\bullet}}/\mtt{Img}\,\psi$, has length one where $\rho_*\mtt{\mscr{E}}$ is failed to be locally-free, i.e. the singular points of $\mcal{R}^{\bullet}$. 

\newl In the case of an NS node, we thus need to consider the blowup geometry. This means that we turn an NS node to a $\mbb{P}^1$ over which a line bundle $\widehat{\mscr{E}}$ lives. The pushforward of this line bundle by the blowup map $\mtt{b}:\mcal{R}^{\mtt{b}}\longrightarrow \mcal{R}^\bullet$, where $\mcal{R}^{\mtt{b}}$ denotes the blow-up geometry, is a rank-one torsion-free sheaf over the surface with node which, according to \cite{Jarvis1994a, Jarvis1998a,Jarvis2000a}, define a spin structure over the nodal curve.
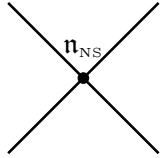
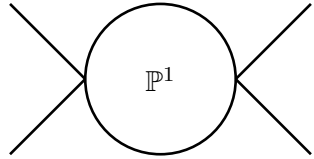
\begin{figure}[H]
	\begin{subfigure}{.45\textwidth}\centering
		\begin{tikzpicture}
		\draw[line width=1pt]  (-1,1) -- (1,-1);
		\draw[line width=1pt]  (1,1) -- (-1,-1);
		\draw[fill] (0,0)node[yshift=.4cm]{$\mfk{n}_{\text{\fontsize{3}{3}\selectfont NS}}$} circle (2pt); 
		\end{tikzpicture}
		\caption{The formation of an NS node.}
	\end{subfigure}
	~ 
	\begin{subfigure}{.45\textwidth}\centering
		\begin{tikzpicture}
		\draw[line width=1pt]  (-1,1) -- (0,0);
		\draw[line width=1pt]  (-1,-1) -- (0,0); 
		\draw[line width=1pt]  (3,1) -- (2,0);
		\draw[line width=1pt]  (3,-1) -- (2,0);
		\draw[line width=1pt] (1,0)node{$\mbb{P}^1$} circle (1cm); 
		\end{tikzpicture}
		\caption{The blowup geometry of an NS node.}
	\end{subfigure}
	\caption{The figure (a) shows the formation of an NS node when $z=z'=0$. $\theta$ and $\theta'$ are two free parameters whose corresponding spinor derivatives $D_{\theta}$ and $D_{\theta'}$ generate a rank-2 vector space over the node. The figure (b) shows the blowup geometry where a $\mbb{P}^1$ intersect the rest of the curve in two points.}
	\label{fig:blowup geometry of NS node}
\end{figure}

\subsubsection{Ramond Degenerations}\label{subsubsec:gluing of R punctures}
Consider R punctures $\mfk{q}$ and $\mfk{q}'$ on $\mcal{R}$ and $\mcal{R}'$, and the gluing relation \eqref{eq:gluing relation R punctures on spin curves} for R punctures
\begin{alignat}{1}
zz'+t^2&=0,\nonumber
\\
\theta\mp\mfk{i}\,\theta'&=0.
\end{alignat}
Let us repeat the same exercise as above. We have
{
	\begin{alignat}{2}
	\frac{\partial}{\partial \theta}&=\mp\mfk{i}\,\frac{\partial}{\partial \theta'},\nonumber
	\\
	\theta z\frac{\partial}{\partial z}&=\pm\mfk{i}\,\theta'\left(-\frac{t^2}{z'}\right)\left(\frac{z'^2}{t^2}\right)\frac{\partial}{\partial z'}=\mp\mfk{i}\,\theta'z'\frac{\partial}{\partial z'}.
	\end{alignat}
	We thus have
	\begin{equation}\label{eq:spinor derivative in the gluing of R puncture}
	D_\theta=\mp\mfk{i}\,D_{\theta'}.
	\end{equation}
	The relation shows that the bundles generated by $D_{\theta}$ and $D_{\theta'}$ 
	are isomorphic. 
	Therefore, irrespective of the values of $z$ and $z'$, there is a rank-$1$ bundle over the glued curve which equips the surface with spin structure. Unlike the NS-type degenerations, there is no need to consider the blowup geometry of the surface with node. This fact shows that R-type degenerations appear in the natural compactification of the moduli stack of spin curves.
	
	\newl We can thus provide the answer to the \hyperlink{q:first question}{\small\bf Question 1} posed in section \ref{sec:introduction}. 
	
	\newl In the gluing of NS punctures, there are two separate situation. In the moduli stack of spin curves and away from the compactification divisor, the spin structure of the component surfaces are glued together using the gluing relations \eqref{eq:gluing of NS punctures for non-zero t} for non-zero $t$ and \eqref{eq:gluing of NS puncture at t=0} for $t=0$. The spin bundle over the gluing tube is generated by either $D_{\theta}$ or $D_{\theta'}$, related by \eqref{eq:NS degeneration spinor derivatives}. However, \eqref{eq:gluing of NS puncture at t=0} also shows that exactly at the node, i.e. $z=z'=0$, there are free parameters $\theta$ and $\theta'$ whose corresponding spinor derivatives generate a rank-$2$ vector space over the node which cannot be naturally thought of as a spin bundle. The resolution is to consider the blowup geometry of the node on which a rank-one bundle is defined. The pushforward of this bundle under the blowup map is a rank-$1$ torsion-free sheaf over the node. This procedure provides a proper compactification of the moduli stack of spin curves \cite{Cornalba1989a}.  
	
	\newl In the gluing of R punctures, there is no ambiguity and the gluing relations gives a rank-one bundle over the glued surface. 
	As we mentioned above, the gluing of R punctures on spin curves rather than the parent $\mscr{N}=1$ super-Riemann surface is accompanied by a missing odd parameter. This has a consequence when one defines the superstring amplitudes in terms of spin curves and the corresponding moduli stack of such surfaces. We turn to this consequence in the next section.
	
	\subsection{PCOs and Gluing of Ramond Punctures on Spin Curves}\label{subsec:subtlety in the gluing of R punctures}
	In this section, we consider the gluing of two R punctures on the spin curves $\mcal{R}$ and $\mcal{R}'$ equipped with spin structures $\mscr{E}$ and $\mscr{E}'$. Equation \eqref{eq:spinor derivative in the gluing of R puncture} shows that
	the parameter $\alpha$ is not visible from the point of view of the spin curves. The question is then {\it what is the consequence of losing the symmetry whose parameter is $\alpha$ by studying the gluing on the underlying spin curve(s) rather than the parent $\mscr{N}=1$ super-Riemann surfaces?} To answer this question, we first need to understand the answer to the following question.
	
	\newl {\it What is the role of parameter $\alpha$ from the point of view of the gluing of super-Riemann surfaces?} As we explained in section \ref{subsec:punctures on super-Riemann surface}, the location of an R puncture defines a divisor on a super-Riemann surface. A generic even superconformal vector field which preserves the superconformal structure in the presence of an R puncture has the following form\footnote{For more details see section 4.2 of \cite{Witten2012b}.} 
	\begin{equation}
	v_{\mfk{e}}=F(z)\left(f(z)\frac{\partial}{\partial z}+\frac{f'(z)}{2}\theta\frac{\partial}{\partial \theta}\right),
	\end{equation}
	where $f(z)$ is some holomorphic function of $z$ and $F(z)$ vanishes along the divisor defined by the R puncture. It is thus clear that this vector field vanishes along that divisor. However, a generic odd superconformal vector field which also preserves the superconformal structure in the presence of an R puncture has the following form
	\begin{equation}
	v_{\mfk{o}}=g(z)\left(\frac{\partial}{\partial \theta}-F(z)\theta\frac{\partial }{\partial z}\right),
	\end{equation}
	where $g(z)$ is some holomorphic function of $z$. This vector field is in general non-vanishing along the divisor defined by the R puncture and is proportional to $\partial_{\theta}$. Such a vector field generates the transformation $\theta\longrightarrow\theta+\alpha$ which is a symmetry of the divisor. This means that when we are gluing two R punctures, the gluing is not unique (the odd coordinate of each of the punctures is defined up to addition by an odd parameter) and the gluing relation is defined up to addition by an odd parameter $\alpha$. Therefore, the divisors in the superstack defined by an R-type degeneration are fiber bundles over $\mscr{SM}_{\sg-1;\sns\sra+2}$ (for nonseparating R-type degenerations where a single genus-$\g$ surface with $\ns$ NS puncture and $\ra$ R punctures degenerates into a surface whose normalization has signature $(\g-1,\ns,\ra+2)$) and $\mscr{SM}_{\sg_1,\ns^{1},\ra^{1}}\underset{\text{\fontsize{6}{6}\selectfont R}}{\cup}\mscr{SM}_{\sg_2,\sns^2,\sra^2}$ (for separating R-type degenerations where a genus $\g=\g_1+\g_2$ surface with $\ns=\ns^1+\ns^2$ NS punctures and $\ra=\ra^1+\ra^2-2$ R punctures degenerates into two surfaces with signatures $(\g_1;\ns^1,\ra^1)$ and $(\g_2;\ns^2,\ra^2)$). The fibers of these fibration are isomorphic to $\mbb{C}^{0|1}$ parameterized by an odd parameter. On the other hand, the integration over $\alpha$, which together with the other even and odd parameters provide a coordinate chart near the divisor of $\mscr{SM}_{\sg,\sns,\sra}$, is necessary to produce the correct propagator in the Ramond sector \cite{Ramond1971a,Witten2012b}. 
	
	\newl If one starts from the supergeometry definition of the superstring amplitudes, and tries to reduce the computation to the underlying moduli stack of spin curves, one has to integrate over the odd moduli of the superstack. As we explained in section \ref{subsubsec:VV interpretation of PCO}, the effect of integration over complex odd moduli of superstack is the introduction of PCOs to the string measure on the moduli stack of spin curves. Since $\alpha$ is one of the odd moduli of the resulting surface, it has to be integrated over. The parameter $\alpha$ enters the gluing relations of two R punctures due to a freedom in the gluing and it is not associated to intrinsic odd moduli of the component surfaces. Therefore, the resulting PCO has to be added on the gluing tube using some prescription that produces the correct results. However, if one uses the picture-changing formalism as a basic definition to define the string amplitudes, the parameter $\alpha$ does not enter the computation since the symmetry of the R puncture does not exists anymore. The effect of losing this symmetry by reducing everything on the underlying spin curve is as follows. If we integrate all of the odd moduli, there will be the $2\g-2+\ns+\frac{1}{2}\ra$ PCOs in the string measure. Let us glue two R punctures on spin curves with signatures $(\g_1;\ns^1,\ra^1)$ and $(\g_2;\ns^2,\ra^2)$. A simple counting shows that the number of PCOs on $\mcal{R}\underset{\text{\fontsize{6}{6}\selectfont R}}{\cup}\mcal{R}'$, the resulting glued surface with signature $(\g_1+\g_2;\ns^1+\ns^2,\ra^1+\ra^2-2)$, is
	\begin{equation}\label{eq:number of PCOs on the glued surface}
	\text{\# of PCOs on $\mcal{R}\underset{\text{\fontsize{6}{6}\selectfont R}}{\cup}\mcal{R}'$}=2(\g_1+\g_2)-2+(\ns^1+\ns^2)+\frac{\ra^1+\ra^2-2}{2}-1.
	\end{equation}
	This is one less than the number it should be. On the other hand, to deal with IR divergences, it is required that the off-shell amplitudes to be defined in such a way that the choice of local coordinates around the punctures and the choice of PCOs be gluing-compatible on the separating-type degenerations \cite{PiusRudraSen2013a,PiusRudraSen2014a}. For PCOs, this essentially means that the choice of PCOs on the glued surface must be induced from the choice of PCOs on the component surfaces. However, \eqref{eq:number of PCOs on the glued surface} shows that the choice of PCOs with picture number $-\frac{1}{2}$ is not gluing-compatible, i.e. the number of PCOs on the resulting surface is not equal to the sum of the number of PCOs on the component surfaces. The resolution is as follows, 1) for computing the two-point function, we choose one of the R states that are being glued to have picture number $-\frac{3}{2}$, and 2) for other situations, we introduce a wighted average of PCOs inserted on a cycle on the tube joining the two glued R punctures. For more details, see sections $6.2$ and $6.3$ of \cite{Sen2014b}. In this way, the number of PCOs on the glued surface match with the one that is induced from the component surfaces.     
	
	\newl We can thus provide the answer to the \hyperlink{q:second question}{\small\bf Question 2} posed in section \ref{sec:introduction}. 
	
	\newl On an $\mscr{N}=1$ super-Riemann surface with R punctures, each of the R punctures defines a divisor with a symmetry $\theta\longrightarrow\theta+\alpha$. This symmetry is lost by projecting onto the underlying spin curve. Therefore, there is a missing odd parameter. On the other hand, the number of PCOs is equal to the fermionic dimension of the corresponding superstack. As such, there is a missing PCO when we glue two R punctures on spin curves. We need to compensate the missing PCO by adding an extra one on the gluing tube as prescribed in \cite{Sen2014b,Sen2015a}.

\section{Existence and Uniqueness of Solution to the BV QME in $\mscr{S}(\ns,\ra)$}\label{sec:the existence and uniqueness of of solution to the BV QME in the moduli space of bordered spin-Riemann surfaces}

In this section, we discuss on of the main results of this paper, i.e. 1) the existence and uniqueness of a solution to the BV QME in $\mscr{S}(\ns,\ra)$, and 2) the relation of these solutions to the fundamental class of DM stack of corresponding spin curves. For our main purposes, we introduce an alternative model for the moduli space of bordered spin-Riemann surfaces. This model is a natural generalization of the model for the moduli space of bordered ordinary Riemann surfaces introduced in \cite{KimuraStasheffVoronov9307}. After setting the stage, we prove the existence and uniqueness of a solution to the BV quantum master equation, and find that it can be mapped to the fundamental classes of DM spaces. 

\subsection{The Strategy of Proof}

Since the procedure for proving the results of this section is long, we provide a summary of the strategy of proof in this subsection. We first describe the problems in the context of bordered ordinary Riemann surfaces, and the results obtained in \cite{Costello200509}. We then briefly explain the generalization of these results to the case of bordered spin-Riemann surfaces considered in this paper. 

\newl Let us first define the Neveu-Schwarz and Ramond boundary components. For this purpose, we define a more general notion of a superconformal cycle on an $\mscr{N}=1$ super-Riemann surface \cite{Belopolsky1997b}.

\begin{definition}[Neveu-Schwarz Cycles]\label{def:NS cycles}
	A cycle\footnote{By {\it cycle} we mean a closed $C^{\infty}$ sub-supermanifold of real codimension $(1|1)$.} on an $\mscr{N}=1$ super-Riemann surface is called a Neveu-Schwarz cycle if it is isomorphic to $S^{1|1}$. If we denote the coordinates along $S^{1|1}$ by $(x|\theta)$, then  they are identified as $(x+2\pi|-\theta)\sim(x|\theta)$ on an {\normalfont NS} cycle.
\end{definition}
This definition shows that on an spin curve, a cycle with antiperiodic boundary condition is an NS cycle. Similarly, we can define NS boundary components
\begin{definition}[Neveu-Schwarz Boundary Component]\label{def:NS boundary components}
	An NS boundary component of an $\mscr{N}=1$ super-Riemann surface is a cs sub supermanifolds isomorphic to $S^{1|1}\otimes_{\mbb{R}}\mbb{C}$ of dimension\footnote{By dimension $(m|n)$ of a cs supermanifold, we mean even real dimension $m$ and odd complex dimension $n$.} $(1|1)$. 
\end{definition}

\begin{definition}[Ramond Cycles]\label{def:R cycles}
	A cycle on an $\mscr{N}=1$ super-Riemann surface is called a Ramond cycle if it is isomorphic to $S^{1}\times\mbb{R}^{0|1}$. If we denote the coordinates along $S^{1}\times\mbb{R}^{0|1}$ by $(x|\theta)$, then they are identified as $(x+2\pi|\theta)\sim (x|\theta)$ on an {\normalfont R} cycle.
\end{definition}
This definition shows that on an spin curve, a cycle with periodic boundary condition is an R cycle. Similarly, we can define R boundary components
\begin{definition}[Ramond Boundary Component]\label{def:R boundary components}
	An R boundary component of an $\mscr{N}=1$ super-Riemann surface is a cs sub supermanifolds isomorphic to $S^{1}\times\mbb{R}^{0|1}\otimes_{\mbb{R}}\mbb{C}$ of dimension $(1|1)$. 
\end{definition}

\newl One of the main result of \cite{Costello200509} is the proof of existence and uniqueness, in the sense we explain below, of the solution of the BV quantum master equation for the BV algebra associated to the moduli space of bordered ordinary Riemann surfaces. From the work of Sen and Zwiebach \cite{SenZwiebach199311,SenZwiebach199408}, this is the problem of existence and uniqueness of the closed bosonic-string vertices. Let $\mscr{C}_*$ be the functor of normalized singular simplicial chains with coefficients in any field $\mbb{F}$ containing $\mbb{Q}$, and $\mscr{M}(\n)$ be the moduli space of Riemann surfaces with $\n$ boundaries. Associated to these spaces, we can define the following complex
\begin{equation}
\mcal{F}(\mscr{M})\equiv \bigoplus_{\sn}\mscr{C}_*\left(\frac{\mscr{M}(\n)}{S^1\wr S_\sn}\right),
\end{equation}
where $S^1$ is a circle action, $S_{\sn}$ denotes the permutation group, and $S^1\wr S_\sn\equiv (S^1)^{\sn}\ltimes S_\sn$ is the wreath product group. The action of wreath product on $\mscr{M}(\n)$ turns it into the moduli space of Riemann surfaces with unlabeled boundaries. $\mcal{F}(\mscr{M})$ has the structure of a BV algebra by considering $d$ to be the boundary operator acting on chains and $\delta$ to be $\Delta$, the operation of sewing two boundaries on the same surface \cite{SenZwiebach199408,Costello200509}. It is clear that there is an $S^1$-worth possible ways of gluing. This is called the Sen-Zwiebach BV algebra \cite{SenZwiebach199408}. We denote the part of $\mcal{F}(\mscr{M})$ coming from genus-$\g$ surfaces with $\n$ boundaries as $\mcal{F}_{\sg,\sn}(\mscr{M})$. To prove the existence and uniqueness of a solution to the BV QME, Costello considered a model for $\mscr{M}(\n)$ introduced by Kimura, Stasheff, and Voronov \cite{KimuraStasheffVoronov9307}. We denote these spaces as $\widetilde{\mscr{M}}(\n)$\footnote{The model in \cite{KimuraStasheffVoronov9307} is really for the space of surfaces all whose connected components have negative Euler number.}. This is the moduli space of Riemann surfaces in $\overline{\mscr{M}}_{\sn}$, the stable-curve compactification of the moduli stack of Riemann surfaces with $\n$ punctures, decorated at each puncture with a ray in the tangent space and decorated at each node by a ray in the tensor product of tangent spaces in the two sides. $\widetilde{\mscr{M}}(\n)$ is homotopy-equivalent to $\mscr{M}(\n)$. The reason is clear: 1) each puncture is equipped with a ray, and the phase of this ray parametrizes a circle which can be thought of as a boundary, and 2) each node is equipped with a ray which can be used to open the node. Similar to $\mscr{M}(\n)$, we can associate a complex to $\widetilde{\mscr{M}}(\n)$ as follows
\begin{equation}
\mcal{F}(\widetilde{\mscr{M}})\equiv \bigoplus_{\sn}\mscr{C}_*\left(\frac{\widetilde{\mscr{M}}(\n)}{S^1\wr S_\sn}\right).
\end{equation}
The quotient by the wreath product turns $\widetilde{\mscr{M}}(\n)/S^1\wr S_\sn$ to the space of Riemann surfaces possibly with nodes, unordered marked points (without any ray in the tangent space), unparametrized boundary, but with a ray in the tensor product of tangent spaces at each side for each node. If $d$ is the boundary operation, and $\Delta$ is the operation of taking two marked points and gluing them (we give a more precise definition in Section \ref{subsec:an alternative model for space of bordered spin-Riemann surfaces}), $\mscr{F}(\widetilde{\mscr{M}})$ turns into a BV algebra. It is clear that the gluing can be done in $S^1$ possible ways. For brevity, let us define
\begin{equation}
\mbbmss{X}(\n)\equiv \frac{\widetilde{\mscr{M}}(\n)}{S^1\wr S_\sn},
\end{equation}
and denote its fundamental chain by $[\mbbmss{X}(\n)]$. After passing to homology, these classes satisfy \cite{Costello200509}
\begin{equation}\label{eq:the BV quantum master action for X(n)}
d[\mbbmss{X}(\n)]+\Delta [\mbbmss{X}(\n+2)]=0.
\end{equation}
If $\mbbmss{X}_\sg(\n)$ denotes the subspace of connected genus-$\g$ surfaces, and $[\mbbmss{X}_\sg(\n)]$, after passing to homology, its fundamental class, we can define the following formal sum
\begin{equation}
[\mbbmss{X}]\equiv \sum_{\substack{\sg,\sn=0 \\ 2\sg-2+\sn>0}}^{\infty}\hbar^{2\sg-2+\sn} [\mbbmss{X}_\sg(\n)].
\end{equation}
Using this and \eqref{eq:the BV quantum master action for X(n)}, we have
\begin{equation}
(d+\hbar\Delta)\exp\left(\frac{[\mbbmss{X}]}{\hbar}\right)=0, 
\end{equation}
i.e. $[\mbbmss{X}]$ satisfies the BV quantum master equation. A more transparent way of writing this equation is 
\begin{equation}
d[\mbbmss{X}_\sg(\n)]+\Delta[\mbbmss{X}_{\sg-1}(\n+2)]+\frac{1}{2}\sum_{\substack{\sg_1+\sg_2=\sg \\ \sn_1+\sn_2=\sn-2}}\{[\mbbmss{X}_{\sg_1}(\n_1)],[\mbbmss{X}_{\sg_2}(\n_2)]\}=0. 
\end{equation}
Using this result, we have (Proposition 10.1.1\footnote{The numbers of theorem, proposition, etc of \cite{Costello200509} that we are referring to are the ones that appear in the published version not the Arxiv preprint.} of \cite{Costello200509}) 
\begin{proposition*}[The Existence and Uniqueness of Closed Bosonic-String Vertices]
	For any $\g$ and $\n$ such that $2\g-2+\n>0$, there exist elements $\mcal{V}_{\sg}(\n)\in\mcal{F}_{\sg,\sn}(\mscr{M})$, i.e. which can be thought of as genus-$\g$ bosonic-string vertices with $\n$ punctures, of homological degree $6\g-6+2\n$ such that
	\begin{enumerate}
		\item $\mcal{V}_{0}(3)$ is the fundamental cycle of $\mscr{M}_0(3)/S_3$, i.e. a $0$-chain of degree $\frac{1}{3!}$.
		
		\item Consider the following formal power series
		\begin{equation}\label{eq:the generating function of the bosonic-string vertices}
		\mcal{V}\equiv \sum_{\substack{\sg,\sn=0 \\ 2\sg-2+\sn>0}}^{\infty}\hbar^{2\sg-2+\sn}\mcal{V}_{\sg}(\n)\in\hbar\mcal{F}(\mscr{M})[\![\hbar]\!].
		\end{equation}
		$\mcal{V}$ satisfies the BV quantum master equation 
		\begin{equation*}
		(d+\hbar\Delta)\exp\left(\frac{\mcal{V}}{\hbar}\right)=0.
		\end{equation*}
		
		\item $\mcal{V}$ is unique up to homotopy in the category of BV algebras. 
	\end{enumerate}
\end{proposition*}
At this point, one might ask why do we need to use $\widetilde{\mscr{M}}(\n)$ and its associated complex rather than $\mscr{M}(\n)$ and its associated complex? The reason is that the above proposition is proven for $\mcal{F}(\widetilde{\mscr{M}})$. However, the BV algebras $\mcal{F}(\mscr{M})$ and  $\mcal{F}(\widetilde{\mscr{M}})$ are quasi-isomorphic (see Lemma 10.4.2 of \cite{Costello200509}) which means that their corresponding homology groups are isomorphic \cite{Costello200509}. On the other hand, if two differential-graded algebras are quasi-isomorphic, their sets of homotopy classes of solutions of the Maurer-Cartan equation are isomorphic \cite{Costello200509}. The BV quantum master equation can be considered as the Maurer-Cartan equation of the corresponding BV algebra (see Lemma 5.2.1 of \cite{Costello200509}). Therefore, if two BV algebras are quasi-isomorphic, their sets of homotopy classes of solutions of the BV quantum master equation are isomorphic (see Lemma 5.3.1 and Definition 5.4.1 of \cite{Costello200509}). We can then say that once the existence and uniqueness of the solution of the BV quantum master equation in the BV algebra $\mcal{F}(\widetilde{\mscr{M}})$ is proven, we can use the quasi-isomorphism of $\mcal{F}(\widetilde{\mscr{M}})$ and $\mcal{F}(\mscr{M})$, and the above considerations, to conclude the existence and uniqueness of the solution of the BV quantum master equation, which we interpret as closed bosonic-string vertices, in the BV algebra  $\mcal{F}(\mscr{M})$.

\newl Let us now describe the analogous result for the case of bordered spin-Riemann surfaces which has been proven in Section \ref{subsec:the proof of existence and uniqueness}. Analogous to $\widetilde{\mscr M}(\n)$ which is a model for $\mscr{M}(\n)$, we introduce the space $\widetilde{\mscr{S}}(\ns,\ra)$ which is a nice model for $\mscr{S}(\ns,\ra)$. It is the moduli space of stable spin curves, decorated at each NS puncture with a ray in the tangent space, at each R puncture
with a ray in the spinor bundle, at each NS node with a ray in the tensor product of
rays of the tangent spaces at each side, and at each R node with a ray in the spinor line bundle. $\widetilde{\mscr{S}}(\ns,\ra)$ is again homotopy-equivalent to $\mscr{S}(\ns,\ra)$. The reason is similar to the one explained in the case of $\mcal{F}(\mscr{M})$ and $\mcal{F}(\widetilde{\mscr{M}})$. Associate to these spaces, and completely analogous to $\mcal{F}(\mscr{M})$ and $\mcal{F}(\widetilde{\mscr{M}})$, we define two complexes
\begin{alignat*}{2}
\mcal F(\mscr{S})&\equiv \bigoplus_{\sns,\sra} \mscr{C}_*\left(\frac{\mscr{S} (\ns,\ra)}{S^1\wr S_{\sns,\sra}}\right),
\\
\mcal F(\widetilde{\mscr{S}})&\equiv \bigoplus _{\sns,\sra} \mscr{C}_*\left(\frac{\widetilde{\mscr{S}}(\ns,\ra)}{S^1\wr S_{\sns,\sra}}\right). \numberthis
\end{alignat*}
$S^1\wr S_{\sns,\sra}$ is the wreath product $(S^1)^{\sns+\sra}\rtimes S_{\sns,\sra}$. $S_{\sns,\sra}$ is the group that permutes NS boundaries and R boundaries separately among each other. The quotient by the wreath product turns $\widetilde{\mscr{S}}(\ns,\ra)$ to the space of bordered spin-Riemann surfaces possibly with nodes, unordered marked points (without any ray in either the tangent space or the spinor bundle), unordered boundary, but with a ray in the tensor product of tangent spaces at each side for each NS node and a ray in the spinor bundle at each R node. Let $\mathcal F_{\sg,\sns,\sra}(\mscr{S})$ be the part coming from connected surfaces of genus $\g$ with $\ns$ NS boundaries and $\ra$ R boundaries. $\mcal F(\mscr{S})$ and $\mcal F(\widetilde{\mscr{S}})$ naturally carry structures of commutative differential graded
algebra, where differential is the boundary map of chain complexes and the product comes from disjoint union of surfaces. Since $\mscr{S} (\ns,\ra)$ is homotopy-equivalent to $\widetilde{\mscr{S}}(\ns,\ra)$, $\mcal F(\mscr{S})$ and $\mcal F(\widetilde{\mscr{S}})$ are quasi-isomorphic. Furthermore, we define the following gluing operation\footnote{We have used the same notation of $\Delta$ for gluing operation on the spin curves as well as ordinary Riemann surface. In this paper, we exclusively work with spin curves.} acting on surfaces in $\widetilde{\mscr{S}}(\ns,\ra)$
\begin{alignat*}{2}
\Delta &\equiv \Delta^{\text{NS}}+\Delta^{\text{R}+}+\Delta^{\text{R}-},
\\
\{\cdot,\cdot\} &\equiv \{\cdot,\cdot\}_{\text{NS}}+\{\cdot,\cdot\}_{\text{R}}.
\end{alignat*}
$\Delta^{\text{NS}}$ is the operation of gluing two NS punctures on a surface, and $\Delta^{\text{R}\pm}$ is the operation of gluing of two R punctures on a surface. We note that surfaces in $\widetilde{\mscr{S}}(\ns,\ra)$ are equipped with a ray in the spinor bundle at each R puncture. Therefore, unlike $\mscr{S}(\ns,\ra)$, there are two canonical choices for the gluing of R punctures on a single surface in $\widetilde{\mscr{S}}(\ns,\ra)$. $\Delta^{\text{R}+}$ and $\Delta^{\text{R}-}$ denote these two possible choices, i.e. two rays associated to the glued punctures are either of the same phase $\Delta^{\text{R}+}$ or of the opposite phase $\Delta^{\text{R}-}$. The operation $d+\hbar\Delta$, where $d$ is as usual the boundary map on chain complexes, turns $\mcal F(\widetilde{\mscr{S}})$ into a BV algebra. For brevity, let us define 
\begin{equation}
\mbbmss{X}(\ns,\ra)\equiv \frac{\widetilde{\mscr{S}}(\ns,\ra)}{S^1\wr S_{\sns,\sra}},
\end{equation}
and denote its fundamental chain by $[\mbbmss{X}(\ns,\ra)]$. After passing to homology, these classes satisfy (see the proof of Lemma \ref{lem:Lemma 4.1} in Section \ref{subsec:the proof of existence and uniqueness})
\begin{equation}\label{eq:the BV quantum master equation for spin curves with rays in each node}
d[\mbbmss{X}(\ns,\ra)]+\Delta^{\text{NS}}[\mbbmss{X}(\ns+2,\ra)]+\Delta^{\text{R}+}[\mbbmss{X}(\ns,\ra+2)]]+\Delta^{\text{R}-}[\mbbmss{X}(\ns,\ra+2)]=0.
\end{equation}
If $\mbbmss{X}_\sg(\ns,\ra)$ denotes the subspace of connected genus-$\g$ surfaces, and $[\mbbmss{X}_\sg(\ns,\ra)]$ its fundamental class, we can define the following formal sum\footnote{We have used the same notation for the formal sum of $[\mbbmss{X}]$ for formal power series of $[\mbbmss{X}_\vsg(\sn)]$ as well. In this paper however $[{\mbbmss{X}}]$ exclusively mean the formal sum \eqref{eq:the formal power series for spin curves with rays in each node}.}
\begin{equation}\label{eq:the formal power series for spin curves with rays in each node}
[{\mbbmss{X}}]\equiv \sum_{\substack{\sg,\sn=0 \\ 2\sg-2+\sns+\sra>0}}^{\infty}\hbar^{2\sg-2+\sns+\sra} [\mbbmss{X}_\sg(\ns,\ra)].
\end{equation}
Using this and \eqref{eq:the BV quantum master equation for spin curves with rays in each node}, we have (see Lemma \ref{lem:Lemma 4.1} in section \ref{subsec:the proof of existence and uniqueness})
\begin{equation}
(d+\hbar\Delta)\exp\left(\frac{[{\mbbmss{X}}]}{\hbar}\right)=0,
\end{equation}
i.e. $[{\mbbmss{X}}]$ satisfies the BV quantum master equation. A more transparent way to write this equation is 
\begin{alignat*}{2}
\hphantom{+}d[\mbbmss{X}_{\sg}(\ns,\ra)]&+\Delta^{\text{NS}}[\mbbmss{X}_{\sg-1}(\ns+2,\ra)]+\Delta^{\text{R}+}[\mbbmss{X}_{\sg-1}(\ns,\ra+2)]]+\Delta^{\text{R}-}[\mbbmss{X_{\sg-1}}(\ns,\ra+2)]
\\
&+\frac{1}{2}\sum_{\substack{\sg_1+\sg_2=\sg \\ \sns^1+\sns^2=\sns+2 \\ \sra^1+\sra^2=\sra}}\left\{[\mbbmss{X}_{\sg_1}(\ns^1,\ra^1)],[\mbbmss{X}_{\sg_2}(\ns^2,\ra^2)]\right\}_{\text{NS}}\\
&+\frac{1}{2}\sum_{\substack{\sg_1+\sg_2=\sg \\ \sns^1+\sns^2=\sns \\ \sra^1+\sra^2=\sra+2}}\left\{[\mbbmss{X}_{\sg_1}(\ns^1,\ra^1)],[\mbbmss{X}_{\sg_2}(\ns^2,\ra^2)]\right\}_{\text{R}}=0. \numberthis
\end{alignat*}
Using this result, we can show one of the main results of this paper (Theorem \ref{the:the existence and uniqueness of the solution to the BV QME in spin moduli} in Section \ref{subsec:the proof of existence and uniqueness})
\begin{thr*}[The Existence and Uniqueness of Closed Superstring Vertices]
	For each triple $(\g,\ns,\ra)$ with $2\g-2+\ns+\ra>0$, there exists an element ${\mcal{V}}_{\sg}(\ns,\ra)\in \mathcal{F}_{\sg,\sns,\sra}(\mscr{S})$, of homological degree $6\g-6+2\ns+2\ra$, with the following properties
	\begin{enumerate}
		\item ${\mcal{V}}_{0}(3,0)$ is the fundamental cycle of $\mscr{S}_{0}(3,0)/S_{3,0}$, i.e. 0-chain of coefficient $1/3!$.
		
		\item ${\mcal{V}}_{0}(1,2)$ is the fundamental cycle of $\mscr{S}_{0}(1,2)/S_{1,2}$, i.e. 0-chain of coefficient $1/2!$.
		
		\item The generating function 
		\begin{equation}\label{eq:the generating function of the solution to the BV QME in the moduli space of bordered spin-Riemann surfaces II}
		{\mcal{V}}\equiv\sum_{2\sg-2+\sns+\sra>0}\hbar ^{2\sg-2+\sns+\sra}{\mcal{V}}_{\sg}(\ns,\ra)\in \hbar \mathcal{F}(\mscr{S})[\![\hbar]\!],
		\end{equation}
		satisfies the BV quantum master equation
		\begin{equation*}
		(d+\hbar\Delta)\exp\left(\frac{{\mcal{V}}}{\hbar}\right)=0.
		\end{equation*}
		
		\item ${\mcal{V}}$ is unique up to homotopy in the category of BV algebras. 
	\end{enumerate}
\end{thr*}
Once again the quasi-isomorphism of $\mcal{F}(\mscr{S})$ and $\mcal{F}(\widetilde{\mscr{S}})$ make it possible to conclude the results for $\mcal{F}(\mscr{S})$. 

\newl Another main result of \cite{Costello200509}, which again we generalize to the  case of bordered spin-Riemann surfaces in Section \ref{subsec:relation of solutions to the BVQME in spin moduli and fundamental classes of DM spaces}, is the relation between the solution of the BV quantum master equation in the BV algebra  $\mcal{F}_{\sg,\sn}(\mscr{M})$ and the fundamental class of the DM stack of stable ordinary Riemann surfaces. Let $\overline{\mscr{M}}_{\sn}$ denote the moduli stack of stable Riemann surfaces with $\n$ punctures, and $\overline{\mscr{M}}_{\sg,\sn}$ denote the moduli stack of stable connected genus-$\g$ Riemann surfaces with $\n$ punctures. One can define a commutative differential-graded algebra as follows
\begin{equation}
\mcal{F}(\overline{\mscr{M}})=\bigoplus_{\sn}\mscr{C}_*\left(\frac{\overline{\mscr{M}}_{\sn}}{S_{\sn}}\right), 
\end{equation}  
and turn it into a BV algebra by setting $\Delta=0$. We then define the following formal sum
\begin{equation}
[\overline{\mscr{M}}]\equiv \sum_{\substack{\sg,\sn \\ 2\sg-2+\sn>0}}\hbar^{2\sg-2+\sn}[\overline{\mscr{M}}_{\sg,\sn}/S_{\sn}]\in\hbar \mcal{F}(\overline{\mscr{M}})[\![\hbar]\!],
\end{equation}
where $[\overline{\mscr{M}}_{\sg,\sn}/S_{\sn}]$ is the fundamental chain of $\overline{\mscr{M}}_{\sg,\sn}/S_\sn$. The following theorem provides the link between $\mcal{V}$, the solution of the BV quantum master equation in the BV algebra $\mcal{F}(\mscr{M})$ defined in \eqref{eq:the generating function of the bosonic-string vertices}, and $[\overline{\mscr{M}}]$ (Theorem 10.4.1 of \cite{Costello200509})
\begin{thr*}[Bosonic-String Vertices and Fundamental Classes of DM Stacks]
	There is a map in the homotopy category of {\normalfont BV} algebras that sends $\mcal{V}\longrightarrow [\overline{\mscr{M}}]$. This means that the bosonic-string vertices $\mcal{V}_{\sg}(\n)$ is homotopy-equivalent in the homotopy category of {\normalfont BV} algebras to an orbifold fundamental chain $[\overline{\mscr{M}}_{\sg,\sn}/S_\sn]$ of the compactified moduli stack of connected genus-$\g$ Riemann surfaces with $\n$ punctures, i.e. it is homotopy-equivalent to the whole compactified moduli stack. 
\end{thr*}
Let us present the analogous result for the case of bordered spin-Riemann surfaces. Let $\overline{\mscr{S}}_{\sns,\sra}$ denote the moduli stack of stable spin curves with $\ns$ NS punctures and $\ra$ Ramond punctures, and $\overline{\mscr{S}}_{\sg,\sns,\sra}$ denote the moduli stack of stable connected spin curves with genus $\g$, $\ns$ NS punctures and $\ra$ Ramond punctures. One again can define the following chain complex
\begin{equation}
\mcal{F}(\overline{\mscr{S}})\equiv \bigoplus_{\sns,\sra}\mscr{C}_*\left(\frac{\overline{\mscr{S}}_{\sns,\sra}}{S_{\sns,\sra}}\right).
\end{equation}
$S_{\sns,\sra}$ permutes the set of NS and R punctures among each other separately. After taking the homology, and setting $\Delta=0$, we have a BV algebra. We then define the following formal sum
\begin{equation}
[\overline{\mscr{S}}]\equiv\sum_{\substack{\sg,\sns,\sra \\ 2\sg-2+\sns+\sra\ge 0}}\hbar^{2\sg-2+\sns+\sra}[\overline{\mscr{S}}_{\sg,\sns,\sra}/S_{\sns,\sra}]\in\hbar  \H_*(\mcal{F}(\overline{\mscr{S}}))[\![\hbar]\!],
\end{equation}
where $[\overline{\mscr{S}}_{\sg,\sns,\sra}/S_{\sns,\sra}]$ is the fundamental class of $\overline{\mscr{S}}_{\sg,\sns,\sra}/S_{\sns,\sra}$. One can then establish the following theorem which provides a link between ${\mcal{V}}$, the solution of the BV quantum master equation in the BV algebra $\mcal{F}(\mscr{S})$ defined in \eqref{eq:the generating function of the solution to the BV QME in the moduli space of bordered spin-Riemann surfaces II}, and $[\overline{\mscr{S}}]$ (Proposition 4.5.1 in Section \ref{subsec:relation of solutions to the BVQME in spin moduli and fundamental classes of DM spaces})

\begin{thr*}[Superstring Vertices and Fundamental Classes of DM Stacks]
	There is a map in the homotopy category of {\normalfont BV} algebras that sends ${\mcal{V}}\longrightarrow [\overline{\mscr{S}}]$. This means that ${\mcal{V}}_{\sg,\sns,\sra}$ is homotopy-equivalent in the homotopy category of {\normalfont BV} algebras to the orbifold fundamental class $[\overline{\mscr{S}}_{\sg,\sns,\sra}/S_{\sns,\sra}]$ of the compactified moduli stack of connected genus-$\g$ spin curves with $\ns$ {\normalfont NS} punctures and $\ra$ {\normalfont R} punctures, i.e. it is homotopy-equivalent to the whole compactified moduli stack. 
\end{thr*}
This ends our brief summary of the results in this section. 

\subsection{Gluing of Boundary Components}\label{subsec:gluing punctures} 
In this section, we explain the gluing of boundary components on spin-Riemann surfaces. We take two bordered spin-Riemann surfaces $\mscr C_1$ and $\mscr C_2$. Suppose that $\ns^1$ and $\ns^2$, the number of NS boundary components on $\mscr C_1$ and $\mscr C_2$ respectively, are nonzero. We can then pick one NS boundary component from each curve, say that the $i_1$th NS boundary component of $\mscr C_1$ and $i_2$th NS boundary component of $\mscr C_2$, and glue them to get a new spin curve $\mscr C_1\cup \mscr C_2$, and the spin structure $\mscr{E}$ on $\mscr C_1\cup \mscr C_2$ is the direct sum of pushforward of spin structures $\mscr{E}_1$ and $\mscr{E}_2$ via the gluing map $\mscr C_1\sqcup \mscr C_2\to \mscr C_1\cup \mscr C_2$. This can be easily generalized to families of bordered spin-Riemann surfaces hence giving rise to an operation on corresponding stacks
\begin{equation}
\mathfrak m^{\text{NS}} _{i_1,i_2}:{\mscr{S}}_{\sg_1}(\ns^1,\ra^1)\times {\mscr{S}}_{\sg_2}(\ns^2,\ra^2)\to \partial{\mscr{S}}_{\sg_1+\sg_2}(\ns^1+\ns^2-2,\ra^1+\ra^2),
\end{equation}
where $\ns^i$ and $\ra^i$ for $i=1,2$ are the number of NS and R boundary components on $\mscr{C}_i$. This can be defined for a bordered spin-Riemann surfaces with at least two NS boundary components, say the $i$th and the $j$th ones, as well
\begin{equation}
\mathfrak g^{\text{NS}}_{i,j}:{\mscr{S}}_{\sg}(\ns,\ra)\to \partial{\mscr{S}}_{\sg+1}(\ns-2,\ra).
\end{equation}
Note that these maps depends on the choice of $i_1$, $i_2$, $i$, and $j$, in an equivariant way. For example, if $\sigma_{i_1,i_1'}\in S_{n_1}$ switches $i_1$ and $i_1'$ (it acts on ${\mscr{S}}_{\sg_1}(\ns^1,\ra^1)$ naturally), then we have $\mathfrak m^{\text{NS}} _{i_1',i_2}=\mathfrak m^{\text{NS}} _{i_1,i_2}\circ \sigma_{i_1,i_1'}$.

We can also define the gluing operation for a pair of R boundary components belong to two bordered spin-Riemann surfaces: the gluing of spin structures is simply choosing a square root of gluing data for the twisted canonical sheaf (since the map $b:\mscr E^{\otimes 2}\to \omega$ is non-degenerate at R-nodes, as we explained in section \ref{subsec:gluing of punctures on spin curves}). Note that the choice of the sign of square root does not matter because changing the sign corresponds to a $\mathbb Z/2$-automorphism of the spinor bundle on one of the curve (fix the spinor bundle on the other one), but this $\mathbb Z/2$-automorphism has been quotient out in the definition of the space of bordered spin-Riemann surfaces, so the resulting spin structure is canonically isomorphic to each other. The gluing of borders give rises to an operation on the corresponding moduli spaces
\begin{equation}
\mathfrak m^{\text{R}}_{j_1,j_2}:{\mscr{S}}_{\sg_1}(\ns^1,\ra^1)\times {\mscr{S}}_{\sg_2}(\ns^2,\ra^2)\to \partial{\mscr{S}}_{\sg_1+\sg_2}(\ns^1+\ns^2,\ra^1+\ra^2-2).
\end{equation}
So far there was no difference with the classical gluing property of the moduli space of bordered ordinary Riemann surfaces ${\mscr{M}}_{\sg}(\n)$. However, it turns out that there is no self-gluing operation of R boundary components\footnote{This means the gluing of two R boundary components on the same surface.}: the reason is that the $\mathbb Z/2$ ambiguity can not be removed by doing an automorphism (there is only one curve, an automorphism affects two boundary components simultaneously), i.e. there is no canonical choice of gluing data. Nevertheless, we will see in Section \ref{subsec:an alternative model for space of bordered spin-Riemann surfaces} that this ambiguity can be cured by introducing a new model for $\mscr{S}(\ns,\ra)$.

\subsection{Connected Components of $\partial\overline{\mscr{S}}_{\sg,\sns,\sra}$}
A crucial step for the proof of the existence of bosonic-string vertices used in \cite{Costello200509} is to establish a bound on the homological dimension of $\mscr{M}_{\sg,\sn}/S_{\sn}$ for $(\g,\n)\ne(0,3)$ (see the proof of Proposition 10.1.1 in \cite{Costello200509})
\begin{equation}
\H_i(\mscr{M}_{\sg,\sn}/S_{\sn},\mbb{C})=0, \qquad i\ge 6\g-7+2\n.
\end{equation}
We will prove a similar result for stacks of spin curves. We find that a similar result holds in general other than a few exceptional cases. It turns out that to establish the result, we need two know the number of connected components of $\partial \overline{\mscr{S}}_{\sg,\sns,\sra}$. Therefore, we describe the components of $\partial \overline{\mscr{S}}_{\sg,\sns,\sra}$ in this section. The main result is

\begin{thr}[Connected Components of $\partial\overline{\mscr{S}}_{\sg,\sns,\sra}$]
	For each connected component of $\overline{\mscr{S}}_{\sg,\sns,\sra}$, its boundary is connected, except for $\overline{\mscr{S}}^{\mbf m}_{0,4}$\footnote{Notice that here we have four punctures whose puncturing pattern is encoded in the vector $\mbf{m}=(m_1,m_2,m_3,m_4)$, as we explained in Section \ref{subsec:notation and remarks}. They can be either four NS punctures, two NS punctures and two R punctures, or four R punctures.} and $\overline{\mscr{S}}^{\text{ev}}_{1,1,0}$, in which cases the number of boundary components are 3 and 2, respectively. \label{the:connected components of stacks of spin curves}
\end{thr}

\begin{proof}
	We divide the proof into different cases. 
	
	\newl \textbf{The Case $\g=0$:} Since $\overline{\mscr{S}}_{0,\sns,\sra}\cong \overline{\mscr{M}}_{0,\sns+\sra}$, and the result for the latter is classical \cite{DeligneMumford1969a}.
	
	\newl For other cases, notice that the classical result of connectedness of $\partial\overline{\mscr{M}}_{\sg,\sn}$, together with the finiteness and flatness of the morphism $f:\overline{\mscr{S}}_{\sg,\sns,\sra}\longrightarrow\overline{\mscr{M}}_{\sg,\sns+\sra}$, imply that every connected component of $\partial \overline{\mscr{S}}_{\sg,\sns,\sra}$ maps surjectively to $\partial\overline{\mscr{M}}_{\sg,\sns+\sra}$. We make use of this fact in the following way: if we can show that for some connected substack $\mathcal S\subset \partial\overline{\mscr{M}}_{\sg,\sns+\sra}$, the preimage $f^{-1}(\mathcal S)$ lies in the same connected component of $\partial \overline{\mscr{S}}_{\sg,\sns,\sra}$, then it follows that $\partial \overline{\mscr{S}}_{\sg,\sns,\sra}$ is connected.

	\newl \textbf{The Case $\ra\neq 0$:} This separates into 2 subcases:
	\begin{enumerate}
		\item[•] $\g\ge 2$: Consider the locus $\mathcal S$ of $\partial\overline{\mscr{M}}_{\sg,\sns+\sra}$ consisting of gluing an elliptic curve with a genus $\g-1$ curve at a node, with one punctures on the elliptic curve and all other punctures on the other component. We claim that every connected component of the preimage $f^{-1}(\mathcal S)$ can be connected to the component representing the configuration shown in Figure \ref{fig:a genus-(g-1) and a genus-1 degeneration configuration}, i.e. one R puncture on the elliptic curve and all other punctures are on the other component. This can be seen from the degeneration pattern of Figure \ref{fig:the degeneration pattern for genus-1 and genus g-1 joined at a node}. 
		\begin{figure}[]\centering
			\begin{tikzpicture}[scale=.6]
			\filldraw[line width=1pt,rotate=-60] (-4,0) -- (2,0) node[below]{$\g-1$}; 
			
			\filldraw[red, line width=1pt,rotate=-60] (-.5,0) circle (3pt);
			\filldraw[red, line width=1pt,rotate=-60] (.5,0) circle (3pt);
			\filldraw[red, line width=1pt,rotate=-60] (1.5,0) circle (3pt);
			
			\filldraw[line width=1pt, rotate=-60,Cerulean] (-1.5,0) circle (3pt);
			\filldraw[line width=1pt, rotate=-60,Cerulean] (-3,0) circle (3pt);

			\begin{scope}[xscale=-1,xshift=2.5cm]
			\filldraw[line width=1pt,rotate=-60] (-4,0) -- (2,0) node[below]{$1$}; 
			
			\filldraw[line width=1pt, rotate=-60,Cerulean] (-1,0) circle (3pt); 
			\end{scope}
			\end{tikzpicture}
			\caption{A surface with a node whose connected components are an elliptic curve with a single R puncture, and a genus-$(\g-1)$ curve with $\ra$ R punctures and $\ns$ NS punctures. The red dots denote NS punctures and blue dots denote R punctures.}\label{fig:a genus-(g-1) and a genus-1 degeneration configuration}
		\end{figure}
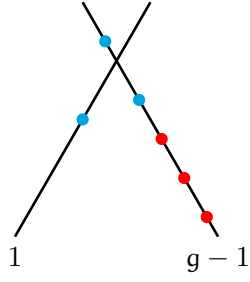
		\begin{figure}[]\centering
			\hspace*{-0.1\linewidth}
			\begin{tikzpicture}[scale=.35]
			
			
			\draw[line width=1pt,rotate=-60] (-5,0) -- (3,0) node[below]{$\g-1$}; 
			
			\filldraw[red, line width=1pt,rotate=-60] (-.5,0) circle (3pt);
			\filldraw[red, line width=1pt,rotate=-60] (.5,0) circle (3pt);
			\filldraw[red, line width=1pt,rotate=-60] (1.5,0) circle (3pt);
			
			\filldraw[line width=1pt, rotate=-60,Cerulean] (-1.5,0) circle (3pt);

			\begin{scope}[xscale=-1,xshift=2.5cm]
			\filldraw[line width=1pt,rotate=-60] (-5,0) -- (3,0) node[below]{$1$}; 
			
			\filldraw[line width=1pt, rotate=-60,Cerulean] (0,0) circle (3pt);
			\filldraw[line width=1pt, rotate=-60,red] (+1,0) circle (3pt);  
			\end{scope}
			
			
			\draw[->,line width=1pt] (4,4) -- node [rotate=23, above] {\text{degeneration}} (8.5,6); 
			
			\draw[->,line width=1pt] (4,-2) -- node [rotate=-23, below] {\text{degeneration}} (8.5,-4); 
			
			
			\draw[line width=1pt,,rotate=-60] (8,10) -- (15,10) node[below]{$0$};

			\filldraw[red, line width=1pt,rotate=-60] (11.5,10) circle (3pt);

			\begin{scope}[xscale=-1,xshift=-27cm]
			\draw[line width=1pt,rotate=-60] (8,10) -- (15,10) node[below]{$1$}; 
			
			\filldraw[line width=1pt, rotate=-60,Cerulean] (13,10) circle (3pt);
			
			\end{scope}
			
			\begin{scope}[xscale=-1,xshift=-30cm]
			\draw[line width=1pt,rotate=-60] (8,10) -- (15,10) node[below]{$\g-1$}; 
			
			\filldraw[red, line width=1pt,rotate=-60] (8.5,10) circle (3pt);
			\filldraw[red, line width=1pt,rotate=-60] (9.5,10) circle (3pt);
			\filldraw[red, line width=1pt,rotate=-60] (10.5,10) circle (3pt);
			\filldraw[Cerulean, line width=1pt,rotate=-60] (11.5,10) circle (3pt);
			\end{scope}
			
			
			\begin{scope}[yshift=12cm]

			\draw[line width=1pt,,rotate=-60] (8,10) -- (15,10) node[below]{$0$};

			\filldraw[Cerulean, line width=1pt,rotate=-60] (11.5,10) circle (3pt);

			\begin{scope}[xscale=-1,xshift=-27cm]
			\draw[line width=1pt,rotate=-60] (8,10) -- (15,10) node[below]{$1$}; 
			
			\filldraw[line width=1pt, rotate=-60,red] (13,10) circle (3pt);
			
			\end{scope}
			
			\begin{scope}[xscale=-1,xshift=-30cm]
			\draw[line width=1pt,rotate=-60] (8,10) -- (15,10) node[below]{$\g-1$}; 
			
			\filldraw[red, line width=1pt,rotate=-60] (8.5,10) circle (3pt);
			\filldraw[red, line width=1pt,rotate=-60] (9.5,10) circle (3pt);
			\filldraw[red, line width=1pt,rotate=-60] (10.5,10) circle (3pt);
			\filldraw[Cerulean, line width=1pt,rotate=-60] (11.5,10) circle (3pt);
			
			\end{scope}
			\end{scope}

			
			\draw[<-,line width=1pt] (20.5,6) -- node [ above] {\text{degeneration}} (25.5,6); 
			
			\draw[<-,line width=1pt] (20.5,-5) -- node [ above] {\text{degeneration}} (25.5,-5); 
			
			
			\begin{scope}[xshift=32cm,yshift=5.8cm]
			\draw[line width=1pt,rotate=-60] (-5,0) -- (3,0) node[below]{$\g-1$}; 
			
			\filldraw[red, line width=1pt,rotate=-60] (.5,0) circle (3pt);
			\filldraw[red, line width=1pt,rotate=-60] (1.5,0) circle (3pt);
			\filldraw[red, line width=1pt,rotate=-60] (2.5,0) circle (3pt);

			\filldraw[line width=1pt, rotate=-60,Cerulean] (-1.5,0) circle (3pt);
			\filldraw[line width=1pt, rotate=-60,Cerulean] (-.5,0) circle (3pt);
			
			\begin{scope}[xscale=-1,xshift=2.5cm]
			\filldraw[line width=1pt,rotate=-60] (-5,0) -- (3,0) node[below]{$1$}; 
			
			\filldraw[line width=1pt, rotate=-60,red] (0,0) circle (3pt);
			\end{scope}
			\end{scope}
			
			
			\begin{scope}[xshift=32cm,yshift=-5.8cm]
			\draw[line width=1pt,rotate=-60] (-5,0) -- (3,0) node[below]{$\g-1$}; 
			
			\filldraw[red, line width=1pt,rotate=-60] (.5,0) circle (3pt);
			\filldraw[red, line width=1pt,rotate=-60] (1.5,0) circle (3pt);
			\filldraw[red, line width=1pt,rotate=-60] (2.5,0) circle (3pt);
			
			\filldraw[line width=1pt, rotate=-60,Cerulean] (-1.5,0) circle (3pt);

			\begin{scope}[xscale=-1,xshift=2.5cm]
			\filldraw[line width=1pt,rotate=-60] (-5,0) -- (3,0) node[below]{$1$}; 
			
			\filldraw[line width=1pt, rotate=-60,Cerulean] (0,0) circle (3pt);
			\end{scope}
			\end{scope}
			
			\end{tikzpicture}
			\caption{The degeneration pattern of a surface with a node consists of two connected components, a genus-$(\g-1)$ curve with a number of NS and R punctures, and a genus-$1$ surface with a single NS or R puncture. The red dots denote the NS punctures and blue dots denote the R punctures.}\label{fig:the degeneration pattern for genus-1 and genus g-1 joined at a node}
		\end{figure}
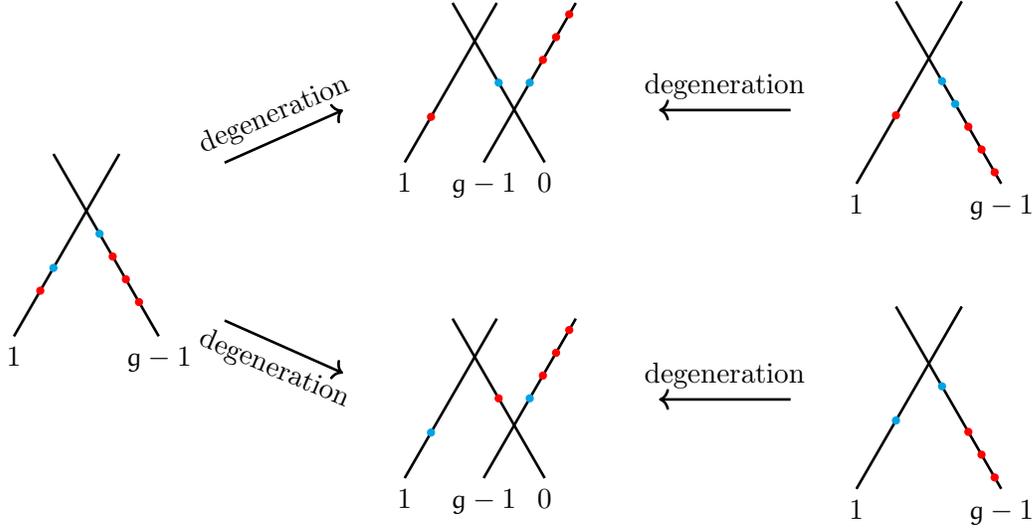

		\item[•] $\g=1$ and $\ra\ge 2$: Consider the locus $\mathcal S$ of $\partial\overline{\mscr{M}}_{1,\sns+\sra}$ consisting of gluing $\ns+\ra$ projective lines at $\ns+\ra$ node, with a punctures at each component (there is a unique such curve up to isomorphism). We claim that every connected component of the preimage $f^{-1}(\mathcal S)$ can be connected to the component representing consecutively $\ra$ components with R punctures on them. This can be seen from the degeneration pattern of figure \ref{fig:the degeneration pattern of a surface with connected components joined as a node}.
		\begin{figure}\centering
			\begin{tikzpicture}[scale=.35]
			
			
			\draw[line width= 1pt,rotate=45 ] (-4,0) -- (4,0);
			\filldraw[Cerulean, line width=1pt,rotate=45] (0,0) circle (3pt);
			\draw[line width=1pt,rotate=20] (3,4) -- (3,0);
			\draw[line width=1pt,rotate=20,dashed] (3,4) -- (3,7);

			\begin{scope}[xscale=-1,xshift=3cm]
			\draw[line width= 1pt,rotate=45 ] (-4,0) -- (4,0);
			\filldraw[red, line width=1pt,rotate=45] (0,0) circle (3pt);
			\draw[line width=1pt,rotate=20] (3,4) -- (3,0);
			\draw[line width=1pt,rotate=20,dashed] (3,4) -- (3,7);
			\end{scope}
			
			
			\draw[<-,line width= 1pt] (5,1) -- node[above]{degeneration} (10,1);
			\draw[<-,line width= 1pt] (27,1) -- node[above]{degeneration} (22,1)  ;

			
			\draw[line width= 1pt] (12,0) -- (20,0);
			\filldraw[red, line width=1pt] (15,0) circle (3pt);
			\filldraw[Cerulean, line width=1pt] (17,0) circle (3pt);
			\draw[line width=1pt,rotate=20] (12.5,0) -- (12.5,-6);
			\draw[line width=1pt,rotate=20,dashed] (12.5,0) -- (12.5,3);
			
			\begin{scope}[xscale=-1,xshift=-32cm]
			\draw[line width=1pt,rotate=20] (12.5,0) -- (12.5,-6);
			\draw[line width=1pt,rotate=20,dashed] (12.5,0) -- (12.5,3);
			\end{scope}

			
			\begin{scope}[xshift=36cm]
			\draw[line width= 1pt,rotate=45 ] (-4,0) -- (4,0);
			\filldraw[red, line width=1pt,rotate=45] (0,0) circle (3pt);
			\draw[line width=1pt,rotate=20] (3,4) -- (3,0);
			\draw[line width=1pt,rotate=20,dashed] (3,4) -- (3,7);

			\begin{scope}[xscale=-1,xshift=3cm]
			\draw[line width= 1pt,rotate=45 ] (-4,0) -- (4,0);
			\filldraw[Cerulean, line width=1pt,rotate=45] (0,0) circle (3pt);
			\draw[line width=1pt,rotate=20] (3,4) -- (3,0);
			\draw[line width=1pt,rotate=20,dashed] (3,4) -- (3,7);
			\end{scope}
			\end{scope}
			\end{tikzpicture}
			\caption{The degeneration pattern of a surface consists of multiple connected components joined at several nodes. The red dot denotes the NS puncture, and blue dots denote the R punctures.} \label{fig:the degeneration pattern of a surface with connected components joined as a node}
		\end{figure}
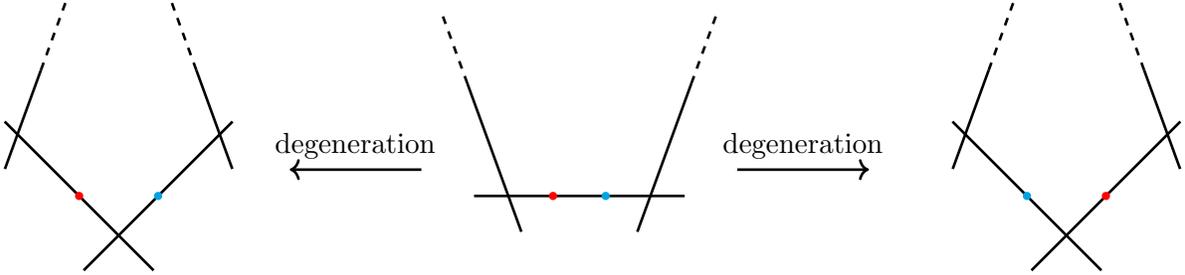
	\end{enumerate}
	
	\textbf{The Case $\ra=0$:} This separates into 6 sub-cases:
	\begin{enumerate}
		\item $\g\ge 3$:  Consider the locus $\mathcal S$ of $\partial\overline{\mscr{M}}_{\sg,\sns}$ consisting of gluing an elliptic curve with a genus-$(\g-1)$ curve at a node, with all punctures on the elliptic curve. For the even spin structure component, the preimage $f^{-1}(\mathcal S)$ consists of two connected components arising from gluing spin structures of the same parity. They are connected via degenerations shown in Figure \ref{fig:the degeneration patterns of a spin curve with a node and even spin structure}. 
		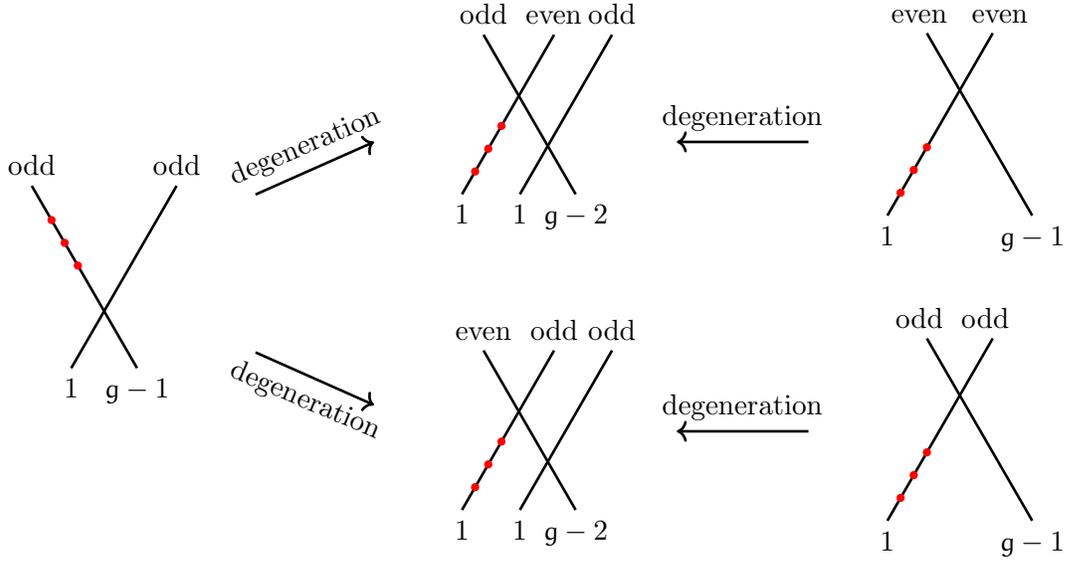
\begin{figure}[]\centering
			\hspace*{-0.1\linewidth}
			\begin{tikzpicture}[scale=.35]
			
			\begin{scope}[xshift=-2cm]
			\draw[line width=1pt,rotate=-60] (-5,0)node[above]{odd} -- (3,0) node[below]{$\g-1$}; 
			
			\filldraw[red, line width=1pt,rotate=-60] (-3.5,0) circle (3pt);
			\filldraw[red, line width=1pt,rotate=-60] (-2.5,0) circle (3pt);
			\filldraw[red, line width=1pt,rotate=-60] (-1.5,0) circle (3pt);

			\begin{scope}[xscale=-1,xshift=-.5cm]
			\filldraw[line width=1pt,rotate=-60] (-5,0)node[above]{odd} -- (3,0) node[below]{$1$};  
			\end{scope}
			\end{scope}
			
			
			\draw[->,line width=1pt] (4,4) -- node [rotate=23, above] {\text{degeneration}} (8.5,6); 
			
			\draw[->,line width=1pt] (4,-2) -- node [rotate=-23, below] {\text{degeneration}} (8.5,-4); 
			
			
			\draw[line width=1pt,,rotate=-60] (8,10)node[above]{even} -- (15,10) node[below]{$\g-2$}; 
			
			\begin{scope}[xscale=-1,xshift=-28cm]
			\draw[line width=1pt,rotate=-60] (8,10)node[above]{odd} -- (15,10) node[below]{$1$}; 
			\filldraw[red, line width=1pt,rotate=-60] (14,10) circle (3pt);
			\filldraw[red, line width=1pt,rotate=-60] (13,10) circle (3pt);
			\filldraw[red, line width=1pt,rotate=-60] (12,10) circle (3pt);
			\end{scope}
			
			\begin{scope}[xscale=-1,xshift=-30.2cm]
			\draw[line width=1pt,rotate=-60] (8,10)node[above]{odd} -- (15,10) node[below]{$1$}; 
			\end{scope}
			
			
			\begin{scope}[yshift=12cm]
			
			\draw[line width=1pt,,rotate=-60] (8,10)node[above]{odd} -- (15,10) node[below]{$\g-2$}; 
			
			\begin{scope}[xscale=-1,xshift=-28cm]
			\draw[line width=1pt,rotate=-60] (8,10)node[above]{even} -- (15,10) node[below]{$1$}; 
			\filldraw[red, line width=1pt,rotate=-60] (14,10) circle (3pt);
			\filldraw[red, line width=1pt,rotate=-60] (13,10) circle (3pt);
			\filldraw[red, line width=1pt,rotate=-60] (12,10) circle (3pt);
			\end{scope}
			
			\begin{scope}[xscale=-1,xshift=-30.2cm]
			\draw[line width=1pt,rotate=-60] (8,10)node[above]{odd} -- (15,10) node[below]{$1$}; 
			\end{scope}
			
			\end{scope}

			
			\draw[<-,line width=1pt] (20,6) -- node [ above] {\text{degeneration}} (25,6); 
			
			\draw[<-,line width=1pt] (20,-5) -- node [ above] {\text{degeneration}} (25,-5); 
			
			
			\begin{scope}[xshift=32cm,yshift=5.8cm]
			\draw[line width=1pt,rotate=-60] (-5,0)node[above,xshift=-.1cm]{even} -- (3,0) node[below]{$\g-1$};

			\begin{scope}[xscale=-1,xshift=2.5cm]
			\filldraw[line width=1pt,rotate=-60] (-5,0)node[above,xshift=.1cm]{even} -- (3,0) node[below]{$1$}; 
			
			\filldraw[red, line width=1pt,rotate=-60] (2,0) circle (3pt);
			\filldraw[red, line width=1pt,rotate=-60] (1,0) circle (3pt);
			\filldraw[red, line width=1pt,rotate=-60] (0,0) circle (3pt);
			\end{scope}
			\end{scope}
			
			
			\begin{scope}[xshift=32cm,yshift=-5.8cm]
			\draw[line width=1pt,rotate=-60] (-5,0)node[above,xshift=-.1cm]{odd} -- (3,0) node[below]{$\g-1$};

			\begin{scope}[xscale=-1,xshift=2.5cm]
			\filldraw[line width=1pt,rotate=-60] (-5,0)node[above,xshift=-.1cm]{odd} -- (3,0) node[below]{$1$}; 
			\filldraw[red, line width=1pt,rotate=-60] (2,0) circle (3pt);
			\filldraw[red, line width=1pt,rotate=-60] (1,0) circle (3pt);
			\filldraw[red, line width=1pt,rotate=-60] (0,0) circle (3pt);
			\end{scope}
			\end{scope}
			
			\end{tikzpicture}
			\caption{The degeneration patterns of a spin curve with a node and even spin structure. One of the connected components is a genus-$(\g-1)$ curve with $\ns$ NS punctures, and the other component is an elliptic curve. The red dots denote NS punctures. In the figure, $\ns=3$.}\label{fig:the degeneration patterns of a spin curve with a node and even spin structure}
		\end{figure}
		Similarly for odd spin structure component, the preimage $f^{-1}(\mathcal S)$ consists of two connected components arising from gluing spin structures of opposite parity. They are connected via degenerations shown in Figure \ref{fig:The degeneration patterns of a spin curve with a node and odd spin structure} 
		\begin{figure}\centering
			\hspace*{-0.1\linewidth}
			\begin{tikzpicture}[scale=.35]
			
			\begin{scope}[xshift=-2cm]
			\draw[line width=1pt,rotate=-60] (-5,0)node[above]{even} -- (3,0) node[below]{$\g-1$}; 
			
			\filldraw[red, line width=1pt,rotate=-60] (-3.5,0) circle (3pt);
			\filldraw[red, line width=1pt,rotate=-60] (-2.5,0) circle (3pt);
			\filldraw[red, line width=1pt,rotate=-60] (-1.5,0) circle (3pt);

			\begin{scope}[xscale=-1,xshift=-.5cm]
			\filldraw[line width=1pt,rotate=-60] (-5,0)node[above]{odd} -- (3,0) node[below]{$1$};  
			\end{scope}
			\end{scope}
			
			
			\draw[->,line width=1pt] (4,4) -- node [rotate=23, above] {\text{degeneration}} (8.5,6); 
			
			\draw[->,line width=1pt] (4,-2) -- node [rotate=-23, below] {\text{degeneration}} (8.5,-4); 
			
			
			\draw[line width=1pt,,rotate=-60] (8,10)node[above]{odd} -- (15,10) node[below]{$\g-2$}; 
			
			\begin{scope}[xscale=-1,xshift=-28cm]
			\draw[line width=1pt,rotate=-60] (8,10)node[above]{odd} -- (15,10) node[below]{$1$}; 
			\filldraw[red, line width=1pt,rotate=-60] (14,10) circle (3pt);
			\filldraw[red, line width=1pt,rotate=-60] (13,10) circle (3pt);
			\filldraw[red, line width=1pt,rotate=-60] (12,10) circle (3pt);
			\end{scope}
			
			\begin{scope}[xscale=-1,xshift=-30.2cm]
			\draw[line width=1pt,rotate=-60] (8,10)node[above]{odd} -- (15,10) node[below]{$1$}; 
			\end{scope}
			
			
			\begin{scope}[yshift=12cm]
			
			\draw[line width=1pt,,rotate=-60] (8,10)node[above]{even} -- (15,10) node[below]{$\g-2$}; 
			
			\begin{scope}[xscale=-1,xshift=-28cm]
			\draw[line width=1pt,rotate=-60] (8,10)node[above]{even} -- (15,10) node[below]{$1$}; 
			\filldraw[red, line width=1pt,rotate=-60] (14,10) circle (3pt);
			\filldraw[red, line width=1pt,rotate=-60] (13,10) circle (3pt);
			\filldraw[red, line width=1pt,rotate=-60] (12,10) circle (3pt);
			\end{scope}
			
			\begin{scope}[xscale=-1,xshift=-30.2cm]
			\draw[line width=1pt,rotate=-60] (8,10)node[above]{odd} -- (15,10) node[below]{$1$}; 
			\end{scope}
			
			\end{scope}

			
			\draw[<-,line width=1pt] (20,6) -- node [ above] {\text{degeneration}} (25,6); 
			
			\draw[<-,line width=1pt] (20,-5) -- node [ above] {\text{degeneration}} (25,-5); 
			
			
			\begin{scope}[xshift=32cm,yshift=5.8cm]
			\draw[line width=1pt,rotate=-60] (-5,0)node[above,xshift=-.1cm]{odd} -- (3,0) node[below]{$\g-1$};

			\begin{scope}[xscale=-1,xshift=2.5cm]
			\filldraw[line width=1pt,rotate=-60] (-5,0)node[above,xshift=.1cm]{even} -- (3,0) node[below]{$1$}; 
			
			\filldraw[red, line width=1pt,rotate=-60] (2,0) circle (3pt);
			\filldraw[red, line width=1pt,rotate=-60] (1,0) circle (3pt);
			\filldraw[red, line width=1pt,rotate=-60] (0,0) circle (3pt);
			\end{scope}
			\end{scope}
			
			
			\begin{scope}[xshift=32cm,yshift=-5.8cm]
			\draw[line width=1pt,rotate=-60] (-5,0)node[above,xshift=-.1cm]{even} -- (3,0) node[below]{$\g-1$};

			\begin{scope}[xscale=-1,xshift=2.5cm]
			\filldraw[line width=1pt,rotate=-60] (-5,0)node[above,xshift=-.1cm]{odd} -- (3,0) node[below]{$1$}; 
			\filldraw[red, line width=1pt,rotate=-60] (2,0) circle (3pt);
			\filldraw[red, line width=1pt,rotate=-60] (1,0) circle (3pt);
			\filldraw[red, line width=1pt,rotate=-60] (0,0) circle (3pt);
			\end{scope}
			\end{scope}
			
			\end{tikzpicture}
			\caption{The degeneration patterns of a spin curve with a node and odd spin structure. One of the connected components is a genus-$(\g-1)$ curves with $\ns$ NS punctures, and the other component is an elliptic curve. The red dots denote NS punctures. In the figure, $\ns=3$.}\label{fig:The degeneration patterns of a spin curve with a node and odd spin structure}
		\end{figure}
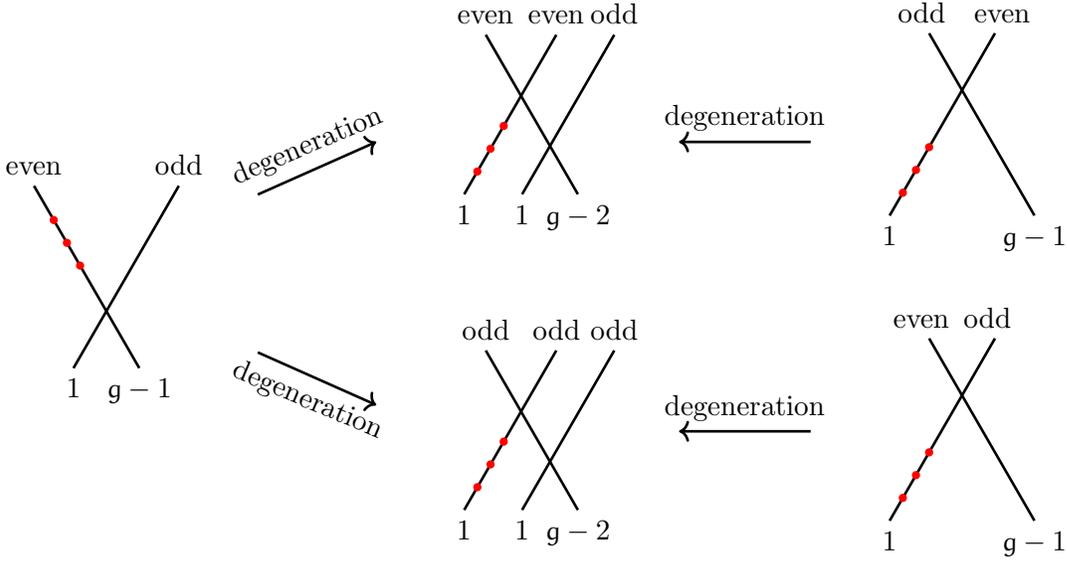
		\item $\g\ge 1$ and $\ns\ge 2$: Consider the locus $\mathcal S$ of $\partial\overline{\mscr{M}}_{\sg,\sns}$ consisting of gluing a projective line with a genus-$\g$ curve at a node, with all punctures on the projective line. It is easy to see that the preimage $f^{-1}(\mathcal S)$ restricted to each parity component is connected, since the parity of spin structure on the genus-$\g$ curve is the same as the one of the nodal curve. 
		
		\item $\g=1$ and $\ns=1$ with odd spin structure: The boundary of $\partial\overline{\mscr{M}}_{1,1}$ is the unique genus-1 nodal curve with a puncture. The preimage $\partial\overline{\mscr{M}}_{1,1}$ is connected since the only odd spin structure comes from gluing the spin structure of $\mathbb P^1$ with one NS puncture and two R punctures (note that the spin structure $\mscr E$ is isomorphic to the structure sheaf) in a way that the global section of the spinor bundle $\mscr E$ (which is one dimensional) is preserved.
		
		\item $\g=2$ and $\ns=1$ with odd spin structure: Consider the locus $\mathcal S$ of $\partial\overline{\mscr{M}}_{2,1}$ consisting of gluing two elliptic curves at a node. Let us label them by $E_1$ and $E_2$. We choose the only punctures to be on $E_2$. The preimage $f^{-1}(\mathcal S)$ consists of two connected components arising from gluing spin structures of opposite parity. They are connected via degeneration to the same boundary:
		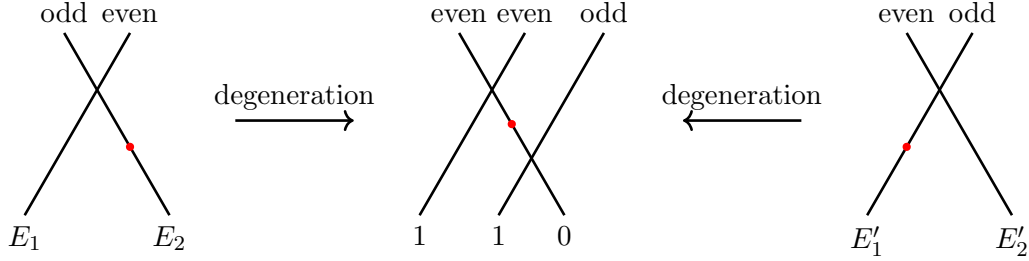
\begin{figure}[H]\centering
			\hspace*{-0.1\linewidth}
			\begin{tikzpicture}[scale=.35]
			
			
			\draw[line width=1pt,rotate=-60] (-5,0)node[above]{odd} -- (3,0) node[below]{$E_2$}; 
			
			\filldraw[red, line width=1pt,rotate=-60] (0,0) circle (3pt);

			\begin{scope}[xscale=-1,xshift=2.5cm]
			\filldraw[line width=1pt,rotate=-60] (-5,0)node[above]{even} -- (3,0) node[below]{$E_1$};
			\end{scope}
			
			
			\draw[->,line width=1pt] (4,1.) -- node [above] {\text{degeneration}} (8.5,1.); 
			
			
			\begin{scope}[xshift=15cm]
			\draw[line width=1pt,rotate=-60] (-5,0)node[above]{even} -- (3,0) node[below]{$0$}; 
			
			\filldraw[red, line width=1pt,rotate=-60] (-1,0) circle (3pt);

			\begin{scope}[xscale=-1,xshift=2.5cm]
			\filldraw[line width=1pt,rotate=-60] (-5,0)node[above]{even} -- (3,0) node[below]{$1$};
			\end{scope}
			
			\begin{scope}[xscale=-1,xshift=-.5cm]
			\filldraw[line width=1pt,rotate=-60] (-5,0)node[above]{odd} -- (3,0) node[below]{$1$};
			\end{scope}
			\end{scope}
			
			
			\draw[<-,line width=1pt] (21,1.) -- node [above] {\text{degeneration}} (25.5,1.); 
			
			
			\begin{scope}[xshift=32cm]
			\draw[line width=1pt,rotate=-60] (-5,0)node[above]{even} -- (3,0) node[below]{$E'_2$};

			\begin{scope}[xscale=-1,xshift=2.5cm]
			\filldraw[line width=1pt,rotate=-60] (-5,0)node[above]{odd} -- (3,0) node[below]{$E'_1$};
			\filldraw[red, line width=1pt,rotate=-60] (0,0) circle (3pt);
			\end{scope}
			\end{scope}
			
			\end{tikzpicture}
			\caption{The degeneration pattern of a genus-$2$ spin curve with a node and odd spin structure. The component surfaces are elliptic curves and there is an NS puncture, the red dot, on one of them.}
		\end{figure}
		\item $\g=2$ and $\ns=0$ with odd spin structure: Consider the locus $\mathcal S$ of $\partial\overline{\mscr{M}}_{2,0}$ consisting of gluing two \textit{isomorphic} elliptic curves $E_1$ and $E_2$ at a node. The preimage $f^{-1}(\mathcal S)$ is connected since spin structures on two elliptic curves must have opposite parity, i.e. either even on $E_1$ and odd on $E_2$ or the other way around, and they are isomorphic via pullback along the isomorphism of switching two components.
		
		\item $\g=2$ and $\ns=0$ or $1$ with even spin structure: This is the hardest case, we first consider the locus $\mathcal S_1$ of $\partial\overline{\mscr{M}}_{2,\sns}$ consisting of gluing two elliptic curves at a node. The preimage $f^{-1}(\mathcal S_1)$ has two connected components since spin structures on two elliptic curves must have the same parity, i.e. either even-even or odd-odd, denoted by $f^{-1}(\mathcal S_1)_{\mfk{e}}$ and $f^{-1}(\mathcal S_1)_\mfk{o}$. For a general point $p$ in $\mathcal S_1$, its preimage consists of $10$ points:  nine of which lies in $f^{-1}(\mathcal S_1)_{\mfk{e}}$ and one of which lies in $f^{-1}(\mathcal S_1)_{\mfk{o}}$. To show that $\partial \overline{\mscr{S}}_{2,\sns,0}$ is connected, it suffices to show that the unique point lying in $f^{-1}(\mathcal S_1)_{\mfk{o}}$ can be connected one of points lying in $f^{-1}(\mathcal S_1)_{\mfk{e}}$.
		
		We temporarily switch to another locus $\mathcal S_2$ of $\partial\overline{\mscr{M}}_{2,\sns}$ consisting of irreducible curves with one node. Its preimage $f^{-1}(\mathcal S_1)$ has two connected components: the one with an NS node and another one with an R node\footnote{As we explained in section \ref{subsec:gluing punctures}, there is no self contraction $\overline{\mscr{S}}_{1,\sns,2}\to\partial \overline{\mscr{S}}_{2,\sns,0}$. In fact, the connectedness of the second component follows from the construction in \ref{subsec:an alternative model for space of bordered spin-Riemann surfaces}.}. Let us focus on the component with an R node. We conclude the proof of this case by the following degeneration procedure.
		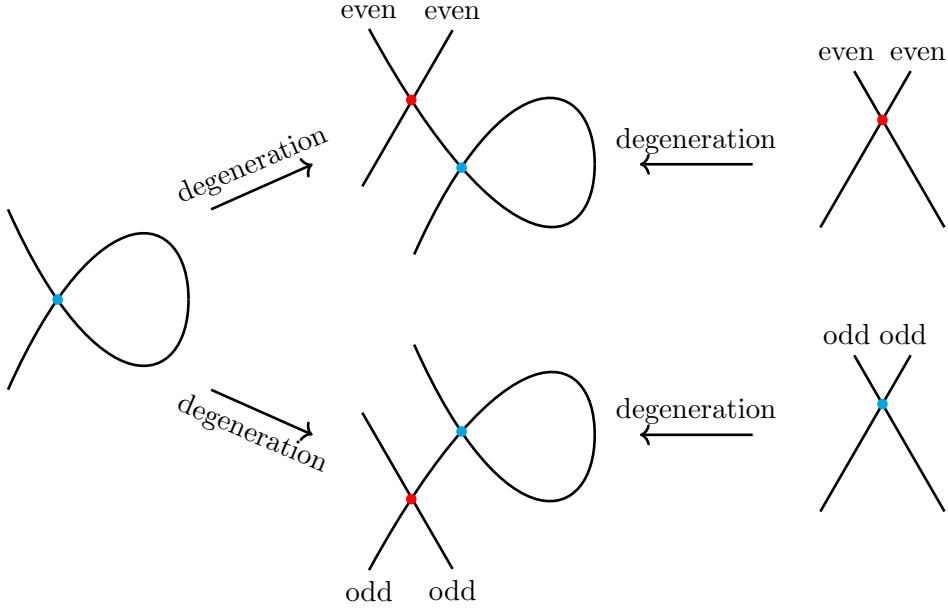
\begin{figure}[h!]\centering
			\hspace*{-0.1\linewidth}
			\begin{tikzpicture}[scale=.3]
			
			\draw[line width=1pt] (0,4) .. controls (4,-5) and (8,-4) .. (8,0);
			\begin{scope}[yscale=-1]
			\draw[line width=1pt] (0,4) .. controls (4,-5) and (8,-4) .. (8,0);
			\end{scope}

			\filldraw[Cerulean, line width=1pt] (2.2,0) circle (5pt);
			
			
			\draw[->,line width=1pt] (9,4) -- node [rotate=23, above] {\text{degeneration}} (13.5,6); 
			
			\draw[->,line width=1pt] (9,-4) -- node [rotate=-23, below] {\text{degeneration}} (13.5,-6);
			
			
			\begin{scope}[xshift=+18cm,yshift=+6cm]
			\draw[line width=1pt] (-2,6)node[above]{even} .. controls (4,-5) and (8,-4) .. (8,0);
			\begin{scope}[yscale=-1]
			\draw[line width=1pt] (0,4) .. controls (4,-5) and (8,-4) .. (8,0);
			\end{scope}
			
			\draw[line width=1pt,rotate=60,xshift=+3cm,yshift=+1.5cm] (-5,0)  -- (3,0)node[above]{even}; 
			
			\filldraw[red, line width=1pt] (-.13,2.85) circle (5pt);
			
			\filldraw[Cerulean, line width=1pt] (2.1,-.15) circle (5pt);
			\end{scope}
			
			
			\begin{scope}[xshift=+18cm,yshift=-6cm,yscale=-1]
			\draw[line width=1pt] (-2,6)node[below]{odd} .. controls (4,-5) and (8,-4) .. (8,0);
			\begin{scope}[yscale=-1]
			\draw[line width=1pt] (0,4) .. controls (4,-5) and (8,-4) .. (8,0);
			\end{scope}
			
			\draw[line width=1pt,rotate=60,xshift=+3cm,yshift=+1.5cm] (-5,0)  -- (3,0)node[below]{odd}; 
			
			\filldraw[red, line width=1pt] (-.13,2.85) circle (5pt);
			
			\filldraw[Cerulean, line width=1pt] (2.1,-.15) circle (5pt);
			\end{scope}

			
			\draw[<-,line width=1pt] (28,6) -- node [ above] {\text{degeneration}} (33,6); 
			
			\draw[<-,line width=1pt] (28,-6) -- node [ above] {\text{degeneration}} (33,-6); 
			
			
			\begin{scope}[xshift=40cm,yshift=5.8cm]
			\draw[line width=1pt,rotate=-60] (-5,0)node[above,xshift=-.1cm]{even} -- (3,0);

			\begin{scope}[xscale=-1,xshift=2.5cm]
			\filldraw[line width=1pt,rotate=-60] (-5,0)node[above,xshift=.1cm]{even} -- (3,0); 
			
			\filldraw[red, line width=1pt,rotate=-60] (-2.5,0) circle (5pt);
			\end{scope}
			\end{scope}
			
			
			\begin{scope}[xshift=40cm,yshift=-6.8cm]
			\draw[line width=1pt,rotate=-60] (-5,0)node[above,xshift=-.1cm]{odd} -- (3,0);

			\begin{scope}[xscale=-1,xshift=2.5cm]
			\filldraw[line width=1pt,rotate=-60] (-5,0)node[above,xshift=-.1cm]{odd} -- (3,0); 
			\filldraw[Cerulean, line width=1pt,rotate=-60] (-2.5,0) circle (5pt);
			\end{scope}
			\end{scope}
			\end{tikzpicture}
			\caption{The degeneration pattern of genus-$2$ curves with NS or R nodes. The red dots denote NS nodes and blue dots denote R nodes.}
		\end{figure}
	\end{enumerate}
	
	It remains to deal with the exceptional cases: $\overline{\mscr{S}}^{\text{ev}}_{1,1,0}$. It consists of two connected components, corresponding to two even spin structures associated to the unique genus-1 nodal curve with a puncture. One of these spin structures comes from the gluing of a pair of NS punctures on a $\mathbb P^1$ with three NS punctures, and the other one comes from the gluing of the spin structure of $\mathbb P^1$ with one NS puncture and two R punctures in a way that the global section of $\mscr E$ is not preserved.
\end{proof}
We can state the following corollary to the Theorem \ref{the:connected components of stacks of spin curves}
\begin{corollary}\label{cor:top minus one homology}
	Let permutation groups $S_{\sns}$ and $S_{\sra}$ act on {\normalfont NS} and {\normalfont R} punctures naturally, and define $S_{\sns,\sra}\equiv S_{\sns}\times S_{\sra}$. Then for $(\g,\ns+\ra)\neq (3,0)$ we have
	\begin{align*}
	\H_{6\sg-6+2(\sns+\sra)}({\mscr{S}}_{\sg,\sns,\sra} /S_{\sns,\sra},\mathbb C)=0,
	\end{align*}
	and
	\begin{align*}
	\H_{6\sg-7+2(\sns+\sra)}({\mscr{S}}_{\sg,\sns,\sra} /S_{\sns,\sra},\mathbb C)=\begin{cases}
	\mathbb C, & (\g,\ns,\ra)=(0,2,2)\,\,\text{\normalfont or }(1,1,0)^{\text{ev}},\\
	0, & \qquad\qquad  \text{\normalfont otherwise}.
	\end{cases}
	\end{align*}
\end{corollary}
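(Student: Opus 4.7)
Let $d=6\g-6+2(\ns+\ra)$ be the real dimension of $\mscr{S}_{\sg,\sns,\sra}$. Under the hypothesis $(\g,\ns+\ra)\neq(0,3)$ combined with the stability bound $2\g-2+\ns+\ra>0$, one has $d>0$, and Theorem \ref{the:connected components of stacks of spin curves} guarantees that $\partial\overline{\mscr{S}}_{\sg,\sns,\sra}$ is non-empty and meets every connected component of $\overline{\mscr{S}}_{\sg,\sns,\sra}$. Thus $\mscr{S}_{\sg,\sns,\sra}/S_{\sns,\sra}$ is a non-compact oriented complex orbifold of real dimension $d$, hence a non-compact oriented $\mbb{Q}$-homology manifold. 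Its top $\mbb{C}$-homology vanishes component-wise, which proves the first assertion.

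\textbf{Step 2 (Reduction to counting boundary components).}
To compute $\H_{d-1}$, I would invoke Poincar\'e duality for the orbifold $\mscr{S}_{\sg,\sns,\sra}/S_{\sns,\sra}$ to identify
\begin{equation*}
\H_{d-1}(\mscr{S}_{\sg,\sns,\sra}/S_{\sns,\sra};\mbb{C}) \;\cong\; \H^1_c(\mscr{S}_{\sg,\sns,\sra}/S_{\sns,\sra};\mbb{C}),
\end{equation*}
together with the long exact sequence in compactly supported cohomology associated to the open-closed decomposition $(\overline{\mscr{S}}/S_{\sns,\sra},\partial\overline{\mscr{S}}/S_{\sns,\sra})$:
\begin{equation*}
0 \to \H^0_c(\mscr{S}/S) \to \H^0(\overline{\mscr{S}}/S) \xrightarrow{r} \H^0(\partial\overline{\mscr{S}}/S) \to \H^1_c(\mscr{S}/S) \to \H^1(\overline{\mscr{S}}/S).
\end{equation*}
Three inputs are required: (i) $\H^0_c(\mscr{S}/S_{\sns,\sra})=0$, because no connected component of $\mscr{S}/S_{\sns,\sra}$ is compact (Step 1); (ii) injectivity of $r$, since every connected component of $\overline{\mscr{S}}/S_{\sns,\sra}$ meets the boundary, making the images $r(1_{X_i})$ linearly independent by disjoint support; and (iii) $\H^1(\overline{\mscr{S}}_{\sg,\sns,\sra}/S_{\sns,\sra};\mbb{C})=0$, which I would deduce from the classical vanishing $\H^1(\overline{\mscr{M}}_{\sg,\sns+\sra};\mbb{Q})=0$ of Arbarello--Cornalba via the degeneration of the Leray spectral sequence of the finite forgetful morphism $f:\overline{\mscr{S}}/S_{\sns,\sra}\to\overline{\mscr{M}}_{\sg,\sns+\sra}/S_{\sns,\sra}$ applied to the local system $f_*\mbb{C}$. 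Granting these, the sequence collapses to
\begin{equation*}
\H_{d-1}(\mscr{S}_{\sg,\sns,\sra}/S_{\sns,\sra};\mbb{C}) \;\cong\; \coker(r) \;\cong\; \mbb{C}^{b_\partial-b},
\end{equation*}
where $b$ and $b_\partial$ denote the numbers of connected components of $\overline{\mscr{S}}_{\sg,\sns,\sra}/S_{\sns,\sra}$ and of $\partial\overline{\mscr{S}}_{\sg,\sns,\sra}/S_{\sns,\sra}$, respectively.

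\textbf{Step 3 (Orbit analysis for the exceptional cases).}
Theorem \ref{the:connected components of stacks of spin curves} describes the connected components of $\partial\overline{\mscr{S}}$ before quotienting by $S_{\sns,\sra}$; passing to the quotient amounts to tracking orbits. In every non-exceptional case the boundary is connected on each component, so $b_\partial=b$ and $\H_{d-1}=0$. For the exceptional cases: (a) for $\overline{\mscr{S}}^{\mbf m}_{0,4}$ with $(\ns,\ra)\in\{(4,0),(0,4)\}$, the full symmetric group $S_4$ permutes the three pair-partition boundary divisors transitively, so $b_\partial=b=1$; (b) for $(\ns,\ra)=(2,2)$, the subgroup $S_2\times S_2$ fixes the unique divisor separating the two NS from the two R punctures and swaps the remaining two divisors, giving $b_\partial=2$, $b=1$ and hence $\coker(r)\cong\mbb{C}$; (c) for $\overline{\mscr{S}}^{\mathrm{ev}}_{1,1,0}$, the acting group is trivial, so the two boundary components of Theorem \ref{the:connected components of stacks of spin curves} survive and $\coker(r)\cong\mbb{C}$. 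These match the tabulated exceptional cases exactly.

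\textbf{Main obstacle.}
The most non-routine ingredient is input (iii): the vanishing of $\H^1$ of the compactified stack of spin curves and of its $S_{\sns,\sra}$-quotient. While the analogous statement for $\overline{\mscr{M}}_{\sg,\sn}$ is classical, its extension requires decomposing the finite local system $f_*\mbb{C}$ into characters of the spin monodromy representation and verifying that none of these summands supports a non-trivial first cohomology class on $\overline{\mscr{M}}_{\sg,\sns+\sra}/S_{\sns,\sra}$. A careful case-by-case check for the low-genus and few-puncture boundary configurations identified in Theorem \ref{the:connected components of stacks of spin curves} is likely to be the most technical portion.
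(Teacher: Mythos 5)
Your overall strategy tracks the paper's quite closely: top homology vanishes by non-compactness, Poincar\'e duality reduces $\H_{d-1}$ to counting boundary components via a long exact sequence, and then an $S_{\sns,\sra}$-orbit analysis finishes the case distinctions. (You phrase Step 2 via compactly supported cohomology and the open-closed decomposition, whereas the paper uses Poincar\'e--Lefschetz duality to land directly in $\H^1(\overline{\mscr{S}}_{\sg,\sns,\sra},\partial\overline{\mscr{S}}_{\sg,\sns,\sra})$ and the long exact sequence of the pair; also, the paper computes on $\mscr{S}_{\sg,\sns,\sra}$ first and quotients afterward, while you quotient from the start. These are equivalent routes, though the paper's order of operations makes the orbit-counting of Step 3 cleaner to state.) Your Step 3 orbit bookkeeping agrees with the paper's.

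The one genuine gap is your input (iii), the vanishing of $\H^1(\overline{\mscr{S}}_{\sg,\sns,\sra};\mbb{C})$, which you yourself flag as the most non-routine ingredient and propose to attack via the Leray spectral sequence of the forgetful map and a character decomposition of $f_*\mbb{C}$. That route is far heavier than necessary and you leave it incomplete. The paper disposes of this in one sentence: $\pi_1(\overline{\mscr{S}}_{\sg,\sns,\sra})$ is trivial, because the spin mapping-class group is generated by Dehn twists and precisely those Dehn twists become nullhomotopic after adjoining the Deligne--Mumford boundary. Triviality of $\pi_1$ gives $\H^1(\overline{\mscr{S}}_{\sg,\sns,\sra};\mbb{C})=0$ immediately, with no need to analyze the local system or run a case-by-case check. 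Supplying this observation would close your argument; without it the proof is not complete as written.
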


\begin{proof}
	The top homology is trivial since ${\mscr{S}}_{\sg,\sns,\sra}$ is noncompact. For the other homology groups, we first compute the group without quotient, using the Poincar\'e duality
	\begin{equation}
	\H_{6\sg-7+2(\sns+\sra)}({\mscr{S}}_{\sg,\sns,\sra},\mathbb C)\cong \H^1 (\overline{\mscr{S}}_{\sg,\sns,\sra},\partial \overline{\mscr{S}}_{\sg,\sns,\sra},\mathbb C),
	\end{equation}
	Observe that $\pi_1(\overline{\mscr{S}}_{\sg,\sns,\sra})$ is trivial since the spin mapping-class group is generated by Dehn twists, and the boundary of $\overline{\mscr{S}}_{\sg,\sns,\sra}$ trivializes those Dehn twists. It follows from the long exact sequence of cohomologies for the pair $(\overline{\mscr{S}}_{\sg,\sns,\sra},\partial \overline{\mscr{S}}_{\sg,\sns,\sra})$ that
	\begin{equation}
	\H^1 (\overline{\mscr{S}}_{\sg,\sns,\sra},\partial \overline{\mscr{S}}_{\sg,\sns,\sra},\mathbb C)\cong \H^0 (\partial \overline{\mscr{S}}_{\sg,\sns,\sra},\mathbb C)/\H^0 (\overline{\mscr{S}}_{\sg,\sns,\sra},\mathbb C).
	\end{equation}
	This identification shows that $\H_{6\sg-7+2(\sns+\sra)}({\mscr{S}}_{\sg,\sns,\sra},\mathbb C)$ is canonically identified with $\mathbb C^{\# \text{Boundary Components}-1}$. This proves the case when $(\g,\ns,\ra)$ is not one of $(0,4,0)$, $(0,2,2)$, $(0,0,4)$, or $(1,1,0)^{\text{ev}}$, since there is only one boundary component. For $(\g,\ns,\ra)=(1,1,0)^{\text{ev}}$, it is easy to see that the permutation group $S_{\sns,\sra}$ preserves those boundary components, hence taking quotient does not affect the homology, i.e. they are still one dimensional. The action of $S_4$ on the boundary of $\overline{\mscr{M}}_{0,4}$ (which consists of 3 points) is transitive, hence after taking quotient the homology becomes trivial. It remains to examine the action of $S_2\times S_2$ on the the boundary of $\overline{\mscr{M}}_{0,4}$. The order of $S_2\times S_2$ is 4, its image in $S_3$ is either trivial or $S_2$. Since we know that the action must be nontrivial: there is a $S_2$ component corresponding to switching two NS punctures. Hence the image of $S_2\times S_2$ in $S_3$ is $S_2$, after taking the quotient, and the homology is one dimensional.
\end{proof}

\begin{remark}\label{rem:inteserction pairing}
	It follows from the Corollary {\normalfont \ref{cor:top minus one homology}} that the stack $\partial\overline{\mscr{S}}_{\sg,\sns,\sra} /S_{\sns,\sra}$ has at most two connected components. The intersection pairing between $\H_{6\sg-7+2(\sns+\sra)}({\mscr{S}}_{\sg,\sns,\sra}/S_{\sns,\sra},\mathbb C)$ and $\H_1(\overline{\mscr{S}}_{\sg,\sns,\sra} /S_{\sns,\sra}, \partial \overline{\mscr{S}}_{\sg,\sns,\sra} /S_{\sns,\sra},\mathbb C)$ {\normalfont (}when it is not zero{\normalfont)} is given by connecting two boundary components with a generic smooth arc and counting the intersection number with a $(6\g-7+2(\ns+\ra))$-cycle.
\end{remark}

\subsection{An Alternative Model for $\mscr{S}(\ns,\ra)$}\label{subsec:an alternative model for space of bordered spin-Riemann surfaces}
As we discussed in Section \ref{subsec:gluing of punctures}, there is an ambiguity of gluing R boundary components on a single spin-Riemann surfaces due to a $\mbb{Z}/2$-automorphism acting on the spinor bundle over the surface. This motivates the following question: {\it can we define a classifying space for spin curves such that there is no such ambiguity?} The answer turns out to be positive. We can indeed define such a space by natural generalization of the space considered in \cite{KimuraStasheffVoronov9307} as a model for the space of bordered ordinary Riemann surfaces. In this section, we elaborate the generalization of this construction for the space of bordered spin-Riemann surfaces. 

\newl $\mscr{S}_{\sg}(\ns,\ra)$ has a nice model $\widetilde{\mscr{S}}_{\sg}(\ns,\ra)$, which is the moduli stack of stable Riemann surfaces, i.e. the surfaces belong to $\overline{\mscr{S}}_{\sg,\sns,\sra}$, decorated at each NS puncture with a ray in the tangent space, at each R puncture with a ray in the spinor line bundle $\mscr E$, at each NS node with a ray in the tensor product of rays of the tangent spaces at each side, and at each R node with a ray in the spinor line bundle $\mscr E$. This is an orbifold with corners whose interior consists of smooth spin curves, so it is homotopy-equivalent to $\mscr S_{\sg}(\ns,\ra)$. According to Section \ref{subsec:gluing of punctures}, we have the natural gluing operation which give rises to an operation on the corresponding moduli stacks
\begin{alignat}{2}
\mathfrak m^{\text{NS}} _{i_1,i_2}:\widetilde{\mscr{S}}_{\sg_1}(\ns^1,\ra^1)\times \widetilde{\mscr{S}}_{\sg_2}(\ns^2,\ra^2) &\to \widetilde{\mscr{S}}_{\sg_1+\sg_2}(\ns^1+\ns^2-2,\ra^1+\ra^2),\nonumber
\\
\mathfrak m^{\text{R}}_{i_1,i_2}:\widetilde{\mscr{S}}_{\sg_1}(\ns^1,\ra^1)\times \widetilde{\mscr{S}}_{\sg_2}(\ns^2,\ra^2) &\to \widetilde{\mscr{S}}_{\sg_1+\sg_2}(\ns^1+\ns^2,\ra^1+\ra^2-2),
\end{alignat}
as well as self-gluing for NS punctures
\begin{equation}
\mathfrak g^{\text{NS}} _{i,j}:\widetilde{\mscr{S}}_{\sg}(\ns,\ra)\to \widetilde{\mscr{S}}_{\sg+1}(\ns-2,\ra).
\end{equation}
Note that the gluing operation takes two rays at two NS points to their tensor product. 

\newl On the other hand, the ambiguity of gluing R boundaries for $\mscr{S}(\ns,\ra)$ disappears due to the fact that there are two canonical choices of gluing R punctures for surfaces in $\widetilde{\mscr{S}}(\ns,\ra)$: two rays are either of the same phase or of the opposite phase. We denote the corresponding self-gluing operation $\mathfrak g^{\text{R}+} _{i,j}$ and $\mathfrak g^{\text{R}-} _{i,j}$:
\begin{equation}
\mathfrak g^{\text{R}\pm} _{i,j}:\widetilde{\mscr{S}}_{\sg}(\ns,\ra)\to \widetilde{\mscr{S}}_{\sg+1}(\ns,\ra-2).
\end{equation}
Note that all of these gluing operations are equivariant under the permutation group action. Therefore, they can be defined on the quotient space without referring to an explicit choice of punctures, i.e. we have 
\begin{alignat}{2}\hspace*{-0.2\linewidth}
\mathfrak m^{\text{NS}} :\widetilde{\mscr{S}}_{\sg_1}(\ns^1,\ra^1)/S_{\sns^1,\sra^1}\times \widetilde{\mscr{S}}_{\sg_2}(\ns^2,\ra^2)/S_{\sns^2,\sra^2} &\to \widetilde{\mscr{S}}_{\sg_1+\sg_2}(\ns^1+\ns^2-2,\ra^1+\sra^2)/S_{\sns^1+\sns^2-2,\sra^1+\sra^2},\nonumber
\\
\mathfrak m^{\text{R}} :\widetilde{\mscr{S}}_{\sg_1}(\ns^1,\ra^1)/S_{\sns^1,\sra^1}\times \widetilde{\mscr{S}}_{\sg_2}(\ns^2,\ra^2)/S_{\sns^2,\sra^2} &\to \widetilde{\mscr{S}}_{\sg_1+\sg_2}(\ns^1+\ns^2,\ra^1+\ra^2-2)/S_{\sns^1+\sns^2,\sra^1+\sra^2-2},\nonumber
\\
\mathfrak g^{\text{NS}} :\widetilde{\mscr{S}}_{\sg}(\ns,\ra)/S_{\sns,\sra}&\to \widetilde{\mscr{S}}_{\sg+1}(\ns-2,\ra)/S_{\sns-2,\sra},\nonumber
\\
\mathfrak g^{\text{R}\pm} :\widetilde{\mscr{S}}_{\sg}(\ns,\ra)/S_{\sns,\sra}&\to \widetilde{\mscr{S}}_{\sg+1}(\ns,\ra-2)/S_{\sns,\sra-2}.
\end{alignat}

\subsection{The Proof of Existence and Uniqueness}\label{subsec:the proof of existence and uniqueness}

We are finally in a position to present one of the main results of this work, i.e. the existence and uniqueness, in the sense we explain, of a solution to the BV QME. Let $\mscr{C}_*$ be the functor of normalized singular simplicial chains with coefficients in any field $\mbb{F}$ containing $\mbb{Q}$. We define two complexes
\begin{alignat}{2}\label{eq:the chain complexes associated to S and tilde S}
	\mathcal F(\mscr{S})&\equiv \bigoplus_{\sns,\sra} \mscr{C}_*(\mscr{S} (\ns,\ra)/S^1\wr S_{\sns,\sra}), \nonumber
	\\
	\mathcal F(\widetilde{\mscr{S}})&\equiv \bigoplus_{\sns,\sra} \mscr{C}_*(\widetilde{\mscr{S}} (\ns,\ra)/S^1\wr S_{\sns,\sra}),
\end{alignat}
where $S^1\wr S_{\sns,\sra}$ is the wreath product $(S^1)^{\sns+\sra}\rtimes S_{\sns,\sra}$.Let $\mathcal F_{\sg,\sns,\sra}(\mscr{S})$ (respectively $\mathcal F_{\sg,\sns,\sra}(\widetilde{\mscr{S}})$) be the part coming from connected surfaces of with $\g$ handles, $\ns$ NS boundary components (respectively NS punctures) and $\ra$ R boundary components (respectively R punctures). $\mathcal F(\mscr{S})$ and $\mathcal F(\widetilde{\mscr{S}})$ naturally carry structures of commutative differential graded algebra, where the differential is the boundary map of chain complexes and the product comes from disjoint union of surfaces. Since $\mscr{S}(\ns,\ra)$ is homotopy-equivalent to $\widetilde{\mscr{S}}(\ns,\ra)$\footnote{Remember that 1) the phase of ray at at a puncture parametrizes a circle, which can be thought of as a boundary component, and 2) the ray at a node can be used to open it.}, $\mathcal F(\mscr{S})$ and $\mathcal F(\widetilde{\mscr{S}})$ are quasi-isomorphic. Let us define the following spaces
\begin{equation}
\mbbmss{X}(\ns,\ra)\equiv \widetilde{\mscr{S}} (\ns,\ra)/S^1\wr S_{\ns,\ra}.
\end{equation}
$\mbbmss{X}(\ns,\ra)$ is the moduli stack of spin curves with unordered punctures, together with at each NS node, a ray in the tensor product of the tangent spaces at each side, and at each R node, a ray in the spinor bundle, and unparametrized boundaries since there is no ray on punctures anymore. The boundary of $\mbbmss{X}_\sg(\ns,\ra)$, i.e. the locus of nodal curves, has real codimension one inside $\mbbmss{X}_\sg(\ns,\ra)$.

\newl Consider the moduli stack of a curve $\mscr{C}\in \mbbmss{X}_{\sg-1}(\ns+2,\ra)$ together with a chosen pair of NS punctures and a choice of gluing at these two punctures. The choice does not matter because the permutation symmetry has been quotient out. There is an $S^1$ possible ways of gluing rays.  Note that it can be identified with an irreducible component of $\partial \mbbmss{X}_\sg(\ns,\ra)$. Let us denote this component by $\mbbmss{Y}^{\text{NS}}_\sg(\ns,\ra)$. Similarly, we can define the space $\mbbmss{Y}^{\text{R}+}_\sg(\ns,\ra)$ and $\mbbmss{Y}^{\text{R}-}_\sg(\ns,\ra)$, they are differed by a choice of gluing of spin structures at punctures. There is a sequence of maps:
\begin{equation}
\mbbmss{X}_{\sg-1}(\ns+2,\ra)\longleftarrow \mbbmss{Y}^{\text{NS}}_\sg(\ns,\ra)\longrightarrow \mbbmss{X}_{\sg}(\ns,\ra),
\end{equation}
then define the operator $\Delta^{\text{NS}}: \mscr{C}_*(\mbbmss{X}_{\sg-1}(\ns+2,\ra))\to \mscr{C}_*(\mbbmss{X}_{\sg}(\ns,\ra))$ to be the minus of pulling back to $\mbbmss{Y}^{\text{NS}}_\sg(\ns,\ra)$ followed by pushing forward to $\mbbmss{X}_{\sg}(\ns,\ra)$. Similarly for $\Delta^{\mbb{R}\pm}:\mscr{C}_*(\mbbmss{X}_{\sg-1}(\ns,\ra+2))\to \mscr{C}_*(\mbbmss{X}_{\sg}(\ns,\ra))$. A similar procedure applies to a pair of punctures coming from different connected curves, which gives rise to brackets $\{\cdot,\cdot\}_{\text{NS}}$ and $\{\cdot,\cdot\}_{\text{R}}$. Note that the pulling back procedure increases the degree of chain by one, since $\mbbmss{Y}^{\bullet}_\sg\to \mbbmss{X}_{\sg-1}$\footnote{$\bullet$ denotes either NS or R.} is an $S^1$ fibration.
\begin{definition}
	Let $\mathcal F_{k}(\widetilde{\mscr{S}})$ denote the subspace spanned by chains on the space
	of surfaces with at least $k$ modes of either type. We define the following operators
	\begin{alignat}{2}
	\Delta&\equiv \Delta^{\text{\normalfont NS}}+\Delta^{\text{\normalfont R}+}+\Delta^{\text{\normalfont R}-}:\mathcal F_{k}(\widetilde{\mscr{S}})\to \mathcal F_{k+1}(\widetilde{\mscr{S}}),\nonumber
	\\
	\{\cdot,\cdot\}&\equiv \{\cdot,\cdot\}_{\text{\normalfont NS}}+\{\cdot,\cdot\}_{\text{\normalfont R}}:\mathcal F_{k}(\widetilde{\mscr{S}})\otimes \mathcal F_{l}(\widetilde{\mscr{S}})\to \mathcal F_{k+l+1}(\widetilde{\mscr{S}}).
	\end{alignat}
	The operator $\widehat{d}\equiv d+\hbar\Delta$ makes $\mathcal F(\widetilde{\mscr{S}})$ into a BV algebra.
\end{definition}

One can then prove the following lemma which helps us in proving the solution to the BV QME in $\widetilde{S}(\ns,\ra)/S_{S^1\wr S_{\sns,\sra}}$. 

\begin{lem}\label{lem:Lemma 4.1}
	Define the subspace $\partial _i \mbbmss{X}_\sg(\ns,\ra)$ to be the locus of spin curves with at least $i$ nodes, and consider the following decreasing filtration on $\mathcal F(\widetilde{\mscr{S}})$
	\begin{equation}\label{eq:filteration of the BV algebra}
	F^i\mathcal F(\widetilde{\mscr{S}})\equiv \Span\left\{\alpha\in \mathcal F_{*}(\widetilde{\mscr{S}})|\,\text{$\alpha$ is supported in }\partial_i\mbbmss{X}_\sg(\ns,\ra)\right\}.
	\end{equation}
	Let $[\mbbmss{X}_\sg(\ns,\ra)]$ be the fundamental class of $(\mbbmss{X}_\sg(\ns,\ra),\partial \mbbmss{X}_\sg(\ns,\ra))$, then it satisfies the BV QME{\normalfont :}
	\begin{alignat*}{2}
	d[\mbbmss{X}_\sg(\ns,\ra)]&+\Delta^{\text{\normalfont NS}} [\mbbmss{X}_{\sg-1}(\ns+2,\ra)]+\Delta^{\text{\normalfont R}+} [\mbbmss{X}_{\sg-1}(\ns,\ra+2)]+\Delta^{\text{\normalfont R}-} [\mbbmss{X}_{\sg-1}(\ns,\ra+2)]
	\\
	&+\frac{1}{2}\sum _{\substack{\sg_1+\sg_2=\sg\\ \sns^1+\sns^2=\sns+2 \\ \sra^1+\sra^2=\sra}}\left\{[\mbbmss{X}_{\sg_1}(\ns^1,\ra^1)],[\mbbmss{X}_{\sg_2}(\ns^2,\ra^2)]\right\}_{\text{\normalfont NS}}
	\\
	&+\frac{1}{2}\sum _{\substack{\sg_1+\sg_2=\sg\\\sns^1+\sns^2=\sns\\ \sra^1+\sra^2=\sra+2}}\left\{[\mbbmss{X}_{\sg_1}(\ns^1,\ra^1)],[\mbbmss{X}_{\sg_2}(\ns^2,\ra^2)]\right\}_{\text{\normalfont R}}=0,
	\end{alignat*}
	in the {\normalfont BV} algebra $\left(\H_*(\mathcal F(\widetilde{\mscr{S}})/F^2\mathcal F(\widetilde{\mscr{S}})),d+\hbar\Delta\right)$. 
\end{lem}

\begin{proof}
	By definition of $\Delta$ and $\{\cdot,\cdot\}$, they are of degree one with respect to the filtration, i.e. $\Delta(F^k)\subset F^{k+1}$ and $\{F^k,F^l\}\subset F^{k+l+1}$. It follows that $\Gr _F^i\mathcal F(\widetilde{\mscr{S}})$, where $\Gr _F^i\mathcal F(\widetilde{\mscr{S}})$ is the $i$th graded component of the graded algebra associated to the BV algebra $\mathcal F(\widetilde{\mscr{S}})$ and the grading is induced by the filtration $F$\footnote{Let $(F^M_i)_{i\in\mbb{Z}}$ be a decreasing filtration of a module $M$ over a commutative ring $R$. We can define
			\begin{equation*}
			M_i\equiv F_i^M/\sum_{j>i}F^M_j, \qquad i,j\in\mbb{Z}.
			\end{equation*}
			Then, we can define
			\begin{equation*}
			\text{Gr}_F M\equiv \bigoplus_{i\in\mbb{Z}}M_i.
			\end{equation*}
			$\text{Gr}_F M$ is the associated graded R-module of $M$. Assuming that $M$ has an algebra structure $*:M\times M\longrightarrow M$ such that $F_i^M\times F_j^M$ is sent to $F_{i+j}^M$, we can define an algebra structure on $\text{Gr}M$ as follows. Consider the canonical projections $\pi_i:F^M_i\longrightarrow M_i$. For two elements $a_i\in M_i$ and $a_j\in M_j$, we can define the multiplication operation $\boldsymbol{\cdot}:\text{Gr}_F M\times \text{Gr}_F M\longrightarrow \text{Gr}_F M$ on $\text{Gr}_F M$
			\begin{equation*}
			a_i\boldsymbol{\cdot} a_j\equiv \pi_{i+j}(a'_i*a'_j),
			\end{equation*}
			where $a'_i=\pi_i^{-1}(a_i)$ and $a'_j=\pi_i^{-1}(a_j)$ are lifts of $a_i$ and $a_j$. The filtered algebra $\text{Gr}_F M$ endowed with this multiplication is called the associated graded $R$-algebra of $M$.}, is the relative chain complex $(\mscr{C}_*(\partial _i \mbbmss{X})/\mscr{C}_*(\partial _{i+1} \mbbmss{X}),d)$, hence 
	\begin{equation}
	\H_*(\Gr _F^i\mathcal F(\widetilde{\mscr{S}}))\cong \bigoplus_{\sns,\sra}\H_*(\partial _i \mbbmss{X}(\ns,\ra),\partial_{i+1} \mbbmss{X}(\ns,\ra)).
	\end{equation}
	$\Delta^{\text{\normalfont NS}}$ acts on the homology class: $\H_k(\Gr _F^i\mathcal F(\widetilde{\mscr{S}}))\to \H_{k+1}(\Gr _F^{i+1}\mathcal F(\widetilde{\mscr{S}}))$, i.e. it is a map 
	\begin{equation}
	\H_k(\partial _i \mbbmss{X}(\ns+2,\ra),\partial _{i+1} \mbbmss{X}(\ns+2,\ra))\to \H_{k+1}(\partial _{i+1} \mbbmss{X}(\ns,\ra),\partial _{i+2} \mbbmss{X}(\ns,\ra)).
	\end{equation}
	Similar statement holds for $\Delta^{\text{\normalfont R}\pm}$. 
	
	\newl We look closer to the case $i=0$: there is a fundamental class $[\mbbmss{X}(\ns+2,\ra)]\in \H_{*}(\mbbmss{X}(\ns+2,\ra),\partial  \mbbmss{X}(\ns+2,\ra))$, its image
	\begin{equation}
	\Delta ^{\text{\normalfont NS}} [\mbbmss{X}(\ns+2,\ra)]\in\H_{*+1}(\partial \mbbmss{X}(\ns,\ra),\partial _{2} \mbbmss{X}(\ns+2,\ra)),
	\end{equation}
	is by construction $-[\mbbmss{Y}^{\text{\normalfont NS}}(\ns+2,\ra)]$, i.e. minus the fundamental class of the corresponding irreducible component of $\partial \mbbmss{X}(\ns,\ra)$. Adding the contribution of $\Delta^{\text{\normalfont R}\pm}$, we see that
	\begin{equation}
	\Delta ^{\text{\normalfont NS}} [\mbbmss{X}(\ns+2,\ra)]+\Delta ^{\text{\normalfont R}+} [\mbbmss{X}(\ns,\ra+2)]+\Delta^{\text{\normalfont R}-} [\mbbmss{X}(\ns,\ra+2)]=-[\partial \mbbmss{X}(\ns,\ra)].
	\end{equation}
	On the other hand, the $\mathbb Q$-orientation of $\partial \mbbmss{X}(\ns,\ra)$ is induced from the $\mathbb Q$-orientation of $\mbbmss{X}(\ns,\ra)$, i.e. the fundamental class of the boundary $\partial \mbbmss{X}(\ns,\ra)$ is induced from the boundary of fundamental class of $\mbbmss{X}(\ns,\ra)$. Note that the map is exactly the boundary map in the following exact sequence of homologies:
	\begin{align*}
	0=\H_{6\sg-6+2\sns+2\sra}(\mbbmss{X}(\ns,\ra),\partial _2 \mbbmss{X}(\ns,\ra))\longrightarrow  \H_{6\sg-6+2\sns+2\sra}(\mbbmss{X}(\ns,\ra),\partial \mbbmss{X}(\ns,\ra))\\
	\overset{d}{\longrightarrow } \H_{6\sg-7+2\sns+2\sra}(\partial \mbbmss{X}(\ns,\ra),\partial _2 \mbbmss{X}(\ns,\ra))\longrightarrow  \H_{6\sg-7+2\sns+2\sra}(\mbbmss{X}(\ns,\ra),\partial _2 \mbbmss{X}(\ns,\ra)).
	\end{align*}
	It then follows that
	\begin{equation}
	d[\mbbmss{X}(\ns,\ra)]=[\partial \mbbmss{X}(\ns,\ra)],
	\end{equation}
	in the BV algebra $\left(\H_*(\mathcal F(\widetilde{\mscr{S}})/F^2\mathcal F(\widetilde{\mscr{S}})),d+\Delta\right)$, whence
	\begin{equation}\label{eq:the BV master equation for the classes [X(m,n)]}
	d([\mbbmss{X}(\ns,\ra)])+\Delta ^{\text{\normalfont NS}} ([\mbbmss{X}(\ns+2,\ra)])+\Delta ^{\text{\normalfont R}+} ([\mbbmss{X}(\ns,\ra+2)])+\Delta ^{\text{\normalfont R}-} ([\mbbmss{X}(\ns,\ra+2)])=0.
	\end{equation}
	We can consider the following formal sum 
	\begin{equation}
	\sum_{\sns,\sra}[\mbbmss{X}(\ns,\ra)]=\exp \left(\sum_{\sg,\sns,\sra}\hbar^{2\sg-2+\sns+\sra}[\mbbmss{X}_{\sg}(\ns,\ra)]\right)\equiv\exp\left([\mbbmss{X}]\right).
	\end{equation}
	Using this sum and \eqref{eq:the BV master equation for the classes [X(m,n)]}, we have
	\[
	(d+\hbar\Delta)\exp\left(\frac{[\mbbmss{X}]}{\hbar}\right)=0,
	\]
	i.e. the class $[\mbbmss{X}]$ satisfies the BV QME.
\end{proof}
Using this result, we have the following Theorem
\begin{thr}[The Existence and Uniqueness of Solution to the BV QME in $\mcal{F}(\mscr{S})$]\label{the:the existence and uniqueness of the solution to the BV QME in spin moduli}
	For each triple $(\g,\ns,\ra)$ with $2\g-2+\ns+\ra>0$, there exists an element ${\mcal{V}}_{\sg}(\ns,\ra)\in \mathcal{F}_{\sg,\sns,\sra}(\widetilde{\mscr{S}})$, of homological degree $6\g-6+2\ns+2\ra$, with the following properties.
	\begin{enumerate}
		\item [$\text{\normalfont 1.}$] ${\mcal{V}}_{0}(3,0)$ is the fundamental cycle of $\mbbmss{X}_{0}(3,0)$, i.e. 0-chain of coefficient $1/6$;
		
		\item [$\text{\normalfont 2.}$] ${\mcal{V}}_{0}(1,2)$ is the fundamental cycle of $\mbbmss{X}_{0}(2,1)$, i.e. 0-chain of coefficient $1/2$;
		
		\item[$\text{\normalfont 3.}$] The generating function
		\begin{equation}
		{\mcal{V}}\equiv \sum_{\substack{\sg,\sns,\sra \\ 2\sg-2+\sns+\sra>0}}\hbar ^{2\sg-2+\sns+\sra} {\mcal{V}}_{\sg}(\ns,\ra)\in \hbar \mathcal{F}(\widetilde{\mscr{S}})[\![\hbar]\!],
		\end{equation}
		satisfies the BV quantum master equation
		\begin{equation}
		(d+\hbar\Delta)\exp\left(\frac{{\mcal{V}}}{\hbar}\right)=0.
		\end{equation}
		
		\item[$\text{\normalfont 4.}$] 	Such solution ${\mcal{V}}$ is unique up to homotopy through such elements.
	\end{enumerate}
\end{thr}

\begin{proof}
	Define a dg Lie algebra $\mathfrak{g}\equiv \oplus _i \mathfrak{g}_i$ where $\mathfrak{g}_i$ is the set of
	\begin{equation}\label{eq:the generating function of superstring vertices II}
	{\mcal{V}}\equiv \sum_{\substack{\sg,\sns,\sra \\ 2\sg-2+\sns+\sra>0}}\hbar ^{2\sg-2+\sns+\sra} {\mcal{V}}_{\sg}(\ns,\ra)\in \hbar \mathcal{F}(\widetilde{\mscr{S}})[\![\hbar]\!],
	\end{equation}
	such that ${\mcal{V}}_{\sg}(\ns,\ra)\in \mathcal{F}_{\sg,\sns,\sra}(\widetilde{\mscr{S}})$ and $\deg {\mcal{V}}_{\sg}(\ns,\ra)=6\g-5+2\ns+2\ra+i$. Consider the filtration \eqref{eq:filteration of the BV algebra}, $\mathfrak{g}=F^0\mathfrak{g}\supset F^1\mathfrak{g}\supset\cdots$. We define $\mathfrak{g}'$ to be $\mathfrak{g}/F^2\mathfrak{g}$. It follows that $\forall k\ge 0$ and $\forall i\ge 0$, $\H_{i}(F^k\mathfrak{g}/F^{k+1}\mathfrak{g})=0$, for there is no element of non-negative degree. Obviously, the quotient map $\mathfrak{g} \to \mathfrak{g}'$ induces isomorphisms
	\begin{equation}
	\H_{i}(F^k\mathfrak{g}/F^{k+1}\mathfrak{g})\cong \H_{i}(F^k\mathfrak{g}'/F^{k+1}\mathfrak{g}'),
	\end{equation}
	when $k=0$ or $k=1$. Moreover, when $k\ge 2$, there is no chain of degree $\ge -2$, so $\H_{i}(F^k\mathfrak{g}/F^{k+1}\mathfrak{g})=0$ when $i\ge -2$. To sum up, the quotient map $\mathfrak{g} \to \mathfrak{g}'$ induces isomorphisms
	\begin{equation}
	\H_{i}(\Gr _F\mathfrak{g})\cong \H_{i}(\Gr _F\mathfrak{g}')\qquad i=0,-1,-2,
	\end{equation}
	where $\Gr_F\mfk{g}$ is the graded algebra associated to the Lie algebra $\mfk{g}$ and the grading is induced by the filtration $F$. Let $\MC(\mfk{g})$ be the set of elements of $\mfk{g}$ satisfying the Maurer-Cartan equation, and $\pi_0(\MC(\mfk{g}))$ be  the set of homotopy-equivalence classes of solutions of the Maurer-Cartan equation in $\mfk{g}$ (see Definition 5.1.2 and Lemma 5.2.1 of \cite{Costello200509}). It follows from the Lemma 5.3.1 of \cite{Costello200509} that the map 
	\begin{equation}
	\pi_0(\MC(\mathfrak{g}))\to \pi_0(\MC(\mathfrak{g}')),
	\end{equation}
	is an isomorphism. Since the set of homotopy-equivalence classes of solutions of the Maurer-Cartan equation in the Lie algebra $\mfk{g}$, defined using \eqref{eq:the generating function of superstring vertices II}, can be identified with the set of homotopy-equivalence classes of solutions ${\mcal{V}}$ of the BV QME in $\mcal{F}(\widetilde{\mscr{S}})$, the existence of ${\mcal{V}}$ follows immediately from the Lemma \ref{lem:Lemma 4.1}, as the quotient map sends ${\mcal{V}}_{0}(3,0)$ to $[\mbbmss{X}_0(3,0)]$, and we have seen that the classes $[\mbbmss{X}_{\sg}(\ns,\ra)]$ satisfy the BV QME. It is thus suffices to show that $[\mbbmss{X}_\sg(\ns,\ra)]$ is the unique solution to the BV QME in the BV algebra $\H_*(\mathcal F(\widetilde{\mscr{S}})/F^2\mathcal F(\widetilde{\mscr{S}}))$ such that the degree $(\g,\ns,\ra)=(0,3,0)$ part is the fundamental class. In fact, suppose that there is another system of solutions
	\begin{equation*}
	{\mcal{V}}'_{\sg}(\ns,\ra)\in \H_{6\sg-6+2\sns+2\sra}(\mbbmss{X}_\sg(\ns,\ra),\partial \mbbmss{X}_\sg(\ns,\ra)),
	\end{equation*}
	then it must satisfy 
	\begin{equation*}
	d{\mcal{V}}'^{\text{ev}}_{1}(1,0)+\Delta^{\text{\normalfont NS}}[\mbbmss{X}_{0}(3,0)]+\Delta^{\text{\normalfont R}-}{\mcal{V}}'_{0}(1,2)=0.
	\end{equation*}
	Note that there is no $\Delta^{\text{\normalfont R}+}$ term\footnote{Since the operator $\Delta ^{\text{R}+}$ preserves the global section of the spin structure $\mscr E$ on $\mathbb P^1$ with one NS puncture an two R punctures, hence producing an odd spin structure.}. It then follows from the exact sequence of homologies
	\begin{align*}
	0=\H_{2}(\mbbmss{X}_1(1,0)^{\text{ev}})\longrightarrow  \H_{2}(  X_1(1,0)^{\text{ev}},\partial  \mbbmss{X}_1(1,0)^{\text{ev}})\overset{d}{\longrightarrow } \H_{1}(\partial \mbbmss{X}_1(1,0)^{\text{ev}})\longrightarrow  \H_{1}( \mbbmss{X}_1(1,0)^{\text{ev}}),
	\end{align*}
	that the image of $\Delta^{\text{\normalfont NS}}[\mbbmss{X}_{0}(3,0)]+\Delta^{\text{\normalfont R}-}{\mcal{V}}'_{0}(1,2)$ in $\H_{1}(\mbbmss{X}_{1,1,0}^{\text{ev}})$ is trivial. On the other hand, we already know that $[\mbbmss{X}_\sg(\ns,\ra)]$ is a solution, so the same argument implies that the image of $\Delta^{\text{\normalfont NS}} [\mbbmss{X}_{0}(3,0)]+\Delta^{\text{\normalfont R}-} [\mbbmss{X}_0(1,2)])$ in $\H_{1}(\mbbmss{X}_1(1,0)^{\text{ev}})$ is trivial. Comparing these equations we find that 
	\begin{equation*}
	\Delta^{\text{\normalfont R}-}({\mcal{V}}'_{0}(1,2)-[\mbbmss{X}_0(1,2)])=0,
	\end{equation*}
	in $\H_{1}(\mbbmss{X}_1(1,0)^{\text{ev}})$. However, we know that $\mbbmss{X}_1(1,0)^{\text{ev}}$, after reducing to the coarse moduli, is just a sphere with a real blow-up at two points (i.e. the boundary of $\widetilde{\mscr{S}}^{\text{ev}}_1(1,0)$), and $\Delta ^{\text{\normalfont R}-}$ sends a zero cycle in $\widetilde{\mscr{S}}_0(1,2)$ to the circle fundamental class then embed it isomorphically to one of the boundary circle of $\mbbmss{X}_1(1,0)^{\text{ev}}$\footnote{$\mbbmss{X}_1(1,0)^{\text{ev}}$ is a copy of $\mbb{P}^1$ which two boundaries at north and south poles of $\mbb{P}^1$. The real blow-up of these boundaries are circles. We are referring to one of these circles as the boundary circle.}. In particular, $\Delta^{\text{\normalfont R}-}$ is injective. As a consequence, we see that
	\begin{equation*}
	{\mcal{V}}'_{0}(1,2)-[\mbbmss{X}_0(1,2)]=0.
	\end{equation*}
	Now the uniqueness of solution follows from induction and the fact that
	\begin{equation*}
	d:\H_{6\sg-6+2\sns+2\sra}(\mbbmss{X}(\ns,\ra),\partial  \mbbmss{X}(\ns,\ra))\longrightarrow\H_{6\sg-7+2\sns+2\sra}(\partial\mbbmss{X}_\sg(\ns,\ra),\partial _2 \mbbmss{X}_\sg(\ns,\ra)),
	\end{equation*}
	is injective. This means that lower-degree terms completely determine higher-degree terms via the BV QME.
\end{proof}

\begin{proof}[Another Proof]
	Alternatively, one can prove this theorem using the filtration in Proposition 10.1.1 of \cite{Costello200509}, but the argument there need to be modified since the vanishing of top minus one degree homologies is not always true for the moduli of spin curves. Nevertheless we can take the filtration $\mathfrak g=F^1\mathfrak g\supset F^2\mathfrak g\supset F^3\mathfrak g\cdots$ by letting $F^k\mathfrak g$ generated by the set of $\mcal V$ such that $\mcal V_{\sg}(\ns,\ra)$ is zero for $2\g-2+\ns+\ra<k$, when $k\ge 3$, and $F^2\mathfrak g$ generated by the set of $\mcal V$ such that $\mcal V_{\sg}(\ns,\ra)$ is zero for $2\g-2+\ns+\ra<k$ except for $(\g,\ns,\ra)=(0,2,2)$. The point is that 
	\begin{equation*}
	H_{i}(F^k\mathfrak{g}/F^{k+1}\mathfrak{g})=0, \qquad \text{for $k\ge 2$, \quad and $i=0,-1,-2$},
	\end{equation*}
	which follows from Corollary \ref{cor:top minus one homology}. It follows from the Lemma 5.3.1 of \cite{Costello200509} that the map 
	\begin{equation*}
	\pi_0(\MC(\mathfrak{g}))\to \pi_0(\MC(\mathfrak{g}')),
	\end{equation*}
	is isomorphism, where $\mathfrak{g}'=\mathfrak{g}/F^2\mathfrak{g}$. It remains to solve the QME in $\mathfrak{g}'$, and this is a consequence of Lemma \ref{lem:Lemma 4.1} since $\partial _i \mbbmss{X}_\sg(\ns,\ra)$ is empty for 
	\begin{equation*}
	(\g,\ns,\ra)=(0,3,0),(0,1,2),(0,2,2),(1,1,0).
	\end{equation*}
\end{proof}
This theorem shows that there exists certain subspaces ${\mcal{V}}_{\sg}(\ns,\ra)$ inside $\widetilde{\mscr{S}}_{\sg}(\ns,\ra)/S^1\wr S_{\sns,\sra}$ that satisfy the BV QME. However, we are interested in finding a solution to the BV QME in $\mcal{F}(\mscr{S})$, and hence find a certain subspace inside $\mscr{S}_{\sg}(\ns,\ra)/S^1\wr S_{\sns,\sra}$. To show that there indeed exists such a subspace, we proceed as follows. As we have explained, $\mathcal F(\mscr{S})$ and $\mathcal F(\widetilde{\mscr{S}})$ naturally carry structures of commutative differential-graded algebra, where the differential is the boundary map of chain complexes and the product comes from disjoint union of surfaces. We can also define a BV algebra structure on $\mathcal F(\mscr{S})$ in a similar way to the definition of $\Delta$ operator of $\mathcal F(\widetilde{\mscr{S}})$. Then the Lemma 10.4.2 of {\normalfont \cite{Costello200509}} can be applied to this case, and we have a homotopy-equivalence of BV algebras:
\begin{equation}\label{eq:the homotopy equivalence of chain complexes associated to the S and tilde S}
\mathcal F(\mscr{S})\cong \mathcal F(\widetilde{\mscr{S}}).
\end{equation}
Therefore, all statements in Theorem \ref{the:the existence and uniqueness of the solution to the BV QME in spin moduli} remain true for $\mathcal F(\mscr{S})$, i.e. there exists an element (unique up to homotopy)\footnote{By abuse of notation, we use the same notation $\widehat{\mcal{V}}_{\vsg}(\sns,\sra)$ for subspace of $\mscr{S}_{\vsg}(\sns,\sra)/S^1\wr S_{\sns,\sra}$.}
\begin{equation}\label{eq:the superstring vertices inside the complex associated to the moduli of spin curves}
{\mcal{V}}\equiv \sum_{\substack{\sg,\sns,\sra \\ 2\sg-2+\sns+\sra>0}}\hbar ^{2\sg-2+\sns+\sra}{\mcal{V}}_{\sg}(\ns,\ra)\in \hbar \mathcal{F}(\mscr S)[\![\hbar]\!],
\end{equation}
and ${\mcal{V}}_{0}(3,0)=[\mscr{S}_{0}(3,0)/S_{3,0}]$ and ${\mcal{V}}_{0}(1,2)=[\mscr{S}_0(1,2)/S_{1,2}]$, satisfying the BV QME
\begin{equation}
(d+\hbar\Delta)\exp\left(\frac{{\mcal{V}}}{\hbar}\right)=0.
\end{equation}
We have thus established that there exist certain subspace ${\mcal{V}}_{\sg}(\ns,\ra)$ in $\mscr{S}_{\sg}(\ns,\ra)/S^1\wr S_{\sns,\sra}$ which satisfy the BV QME.  

\subsection{Relation of Solutions to Fundamental Classes of $\overline{\mscr{S}}_{\sg,\sns,\sra}$}\label{subsec:relation of solutions to the BVQME in spin moduli and fundamental classes of DM spaces}

In this section, we prove that there is a map in the homotopy category of BV algebras that sends the solution ${\mcal{V}}$ of the BV quantum master equation in the BV algebra $\mcal{F}(\widetilde{\mscr{S}})$ to the fundamental class of the Deligne-Mumford stacks of associated {\it punctured} spin curves. Consider the following complex
\begin{equation}
\mathcal F(\overline{\mscr{S}})\equiv \bigoplus_{\sns,\sra} \mscr{C}_*(\overline{\mscr{S}}_{\sns,\sra}/S_{\sns,\sra}).
\end{equation}
We pass to homology of this complex by taking the boundary operation on chains. We further turn the homology groups to a BV algebra by setting $\Delta=0$ \cite{Costello200509}. Projecting down to the part coming from connected genus-$\g$ surfaces, i.e. to $\overline{\mscr{S}}_{\sg,\sns,\sra}/S_{\sns,\sra}$, we have a morphism of linear spaces
\begin{equation*}
\pi :\mathcal F(\widetilde{\mscr{S}})\to \mathcal \H_{*}(\mcal F(\overline{\mscr{S}})).
\end{equation*}
In fact, we have

\begin{thr}[Relation of Solutions to the BV QME and Fundamental Classes of DM Stacks]\label{the:solutions to the BVQME and fundamental classes of DM spaces}
	$\pi$ is a homomorphism of {\normalfont BV} algebras, and $\pi(\widehat{\mcal{V}}_{\sg}(\ns,\ra))$ is the fundamental class $[\overline{\mscr{S}}_{\sg,\sns,\sra}/S_{\sns,\sra}]$.
\end{thr}

\begin{proof}
	The first statement follows from similar argument for Lemma 10.4.3 in \cite{Costello200509}. In the proof of Theorem \ref{the:the existence and uniqueness of the solution to the BV QME in spin moduli}, we have seen that the quotient map
	\begin{equation*}
	\mathcal F(\widetilde{\mscr{S}})\longrightarrow\H_*(\mathcal F(\widetilde{\mscr{S}})/F^2\mathcal F(\widetilde{\mscr{S}})),
	\end{equation*}
	takes ${\mcal{V}}_{\sg}(\ns,\ra)$ to the fundamental class of $\mbbmss{X}_\sg(\ns,\ra)$. Composing this map with a further quotient 
	\begin{equation*}
	\H_*(\mathcal F(\widetilde{\mscr{S}})/F^2\mathcal F(\widetilde{\mscr{S}}))\longrightarrow\H_*(\mathcal F(\widetilde{\mscr{S}})/F\mathcal F(\widetilde{\mscr{S}})),
	\end{equation*}
	and then projecting down to $\overline{\mscr{S}}_{\sg,\sns,\sra}/S_{\sns,\sra}$ sends ${\mcal{V}}_{\sg}(\ns,\ra)$ to the image of $[\mbbmss{X}_\sg(\ns,\ra)]$ in $\H_*(\mathcal F(\widetilde{\mscr{S}})/F\mathcal F(\widetilde{\mscr{S}}))$, which is $[\overline{\mscr{S}}_{\sg,\sns,\sra}/S_{\sns,\sra}]$. 
\end{proof}		
\newl On the other hand, as we explained, there is a homotopy-equivalence \eqref{eq:the homotopy equivalence of chain complexes associated to the S and tilde S} between the BV algebras $\mcal{F}(\mscr{S})$ and $\mcal{F}(\widetilde{\mscr{S}})$. Therefore, this fact together with Theorem \ref{the:solutions to the BVQME and fundamental classes of DM spaces} implies that there is a homomorphism in the homotopy category of BV algebras
\begin{equation*}
\pi:\mathcal F(\mscr{S})\to \H_{*}\left(\mathcal F(\overline{\mscr{S}})\right),
\end{equation*}
which takes ${\mcal{V}}_{\sg}(\ns,\ra)$, as certain subspaces of $\mscr{S}_{\sg}(\ns,\ra)/S^1\wr S_{\ns,\ra}$ which appear in \eqref{eq:the superstring vertices inside the complex associated to the moduli of spin curves}, to the fundamental class $[\overline{\mscr{S}}_{\sg,\sns,\sra}/S_{\sns,\sra}]$. This establishes the result we were looking for.

\section{Existence of Solutions to the BV QME in $\mscr{SM}(\ns,\ra)$}\label{sec:the existence of solution to the BVQME in supermoduli of bordered surfaces}
The main aim of this section is to give a proof for the existence of the solution to the BV QME in $\mscr{SM}$ (or more precisely $\mscr{SM}(\ns,\ra)/S^1\wr S_{\sns,\sra}$). The basic idea that we use is the one we have used in previous sections, i.e. we give $\mscr{SM}(\ns,\ra)$  the structure of a BV algebra, and then show that there is a quasi-isomorphisms between this BV algebra and the BV algebra associated to the underlying moduli space of bordered spin-Riemann surfaces. 

\newl One might ask {\it why we do not try the seemingly simpler method of lifting the solution to the BV quantum master equation in the space of bordered spin-Riemann surfaces to the corresponding space of bordered $\mscr{N}=1$ super-Riemann surfaces?} As we show in Section \ref{subsec:lifting subspace of reduced space}, the solution to the BV QME in $\mscr{S}(\ns,\ra)/S^1\wr S_{\sns,\sra}$ can always be lifted to $\mscr{SM}(\ns,\ra)/S^1\wr S_{\sns,\sra}$. However, this lifting is unique up to homotopy. As such, the lifting of a solution in $\mscr{S}(\ns,\ra)/S^1\wr S_{\sns,\sra}$ to $\mscr{SM}(\ns,\ra)/S^1\wr S_{\sns,\sra}$ does not necessary give us a solution to BV QME in $\mscr{SM}(\ns,\ra)/S^1\wr S_{\sns,\sra}$. One thus need to prove the existence of solution to BV QME in $\mscr{SM}(\ns,\ra)/S^1\wr S_{\sns,\sra}$ by a different method. One of these methods is using the quasi-isomorphism of BV algebras and the isomorphism of the set of homotopy-equivalent solutions to the BV QME of such BV algebras. There are other methods for such proof. We mention one of them in the end of Section \ref{subsec:lifting subspace of reduced space}. The background for this section is provided in Section \ref{subsec:definitions, elementary results, and super-analog of classical theorems}.

\subsection{Obstructions of Splitness and Projectedness}

It is interesting to ask when a superscheme $(X,\mathcal O_X)$ is projected or split. To simplify the life, we restrict to the case of smooth superschemes over $\mathbb C$. Since $\mathbb A^{p|q}$ is automatically split, the definition of smoothness together with Lemma \ref{Lemma_Superspace_EtaleLiftingSplitProjected} implies that smooth superschemes over $\mathbb C$ splits locally. So the question becomes whether local projections (resp. splittings) glue to a global projection (resp. splitting). This is similar to the situation of principal $G$-bundles, where bundles are locally trivializable but not necessarily globally trivial, and the obstruction is exactly captured by the cohomology class $\H ^1(X,G)$. Here the gauge group $G$ is replaced by the group of automorphisms of local models, i.e. $\Aut(\wedge ^{\bullet}\mscr V)$ for a free $\mathcal O_{X_{\text{\normalfont bos}}}$-sheaf $\mscr V$.

\newl To analyse this problem in detail, we adopt the usual set-up for lifting problems: let $X_n$ be the superscheme $(X,\mathcal{O}_X/\mathfrak{n}_X^{n})$, and assume that $X_n$ splits\footnote{$X_1$ is just $X_{\text{\normalfont bos}}$ and $X_2$ always splits.}, i.e. $\mathcal O_{X_n}=\oplus_{i=0}^{n-1}\wedge^i\mscr V$. Let $G_n$ be the subgroup of $\Aut(\wedge^{\bullet}\mathcal V)$ whose image in $\Aut(\oplus_{i=0}^{n-1}\wedge^i\mscr V)$ is the identity, and let $H_n\equiv G_n/G_{n+1}$. Then we fix a cover $\{U_i\}$ of $X$ such that $\mathcal O_{X_{n+1}}$ splits on $U_i$. On overlap $U_{ij}\equiv U_i\cap U_j$, splittings on $U_i$ and $U_j$ are related by an isomophism of $\oplus_{i=0}^{n+1}\wedge^i\mscr V$ which maps to identity in $\Aut(\oplus_{i=0}^{n-1}\wedge^i\mscr V)$, i.e. an element $h_{ij}$ in $H_n$; moreover on $U_{ijk}\equiv U_i\cap U_j\cap U_k$, $h_{ij}h_{jk}=h_{ik}$, hence $\{h_{ij}\}$ defines a class in $\H^1(X,H_n)$, and is trivial if and only if there is a 
collection $\{h_i\in \Gamma(U_i,H_n)\}$, such that $h_{ij}=h_ih_j^{-1}$, whence is equivalent to a splitting of $X_{n+1}$ over $X_n$.

\newl On the other hand, it is easy to see that
\begin{equation}
H_n\cong T^{(-)^n}_X\otimes _{\mathcal O_{X_{\text{\normalfont bos}}}}\bigwedge^n\mscr V,
\end{equation}
where $T^+_X\equiv T_{X_{\text{\normalfont bos}}}$ and $T^-_X\equiv \mscr V^{\vee}$. To sum up, we have 

\begin{proposition}
	Assume $X_n$ splits, then there is an element
	\begin{equation}
	\text{ob}(X_n)\in \H^1\left(X,T^{(-)^n}_X\otimes \bigwedge^n\mscr V\right),
	\end{equation}
	whose vanishing is necessary and sufficient for the splitting of $X_{n+1}$. If it vanishes, then the space of splittings of $X_{n+1}$ is a torsor\footnote{For the definition of torsor see \cite{torsorstackproject}.} under
	\begin{equation}
	\H^0\left(X,T^{(-)^n}_X\otimes \bigwedge^n\mscr V\right).
	\end{equation}
\end{proposition}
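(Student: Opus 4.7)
The plan is to formalize the Čech-cohomological sketch given just before the statement and verify that the obstruction so constructed is well defined, that its vanishing is equivalent to the existence of a splitting of $X_{n+1}$, and that the splittings form a torsor under the stated $\H^0$. First, I would fix an affine cover $\{U_i\}$ on which $\mathcal O_{X_{n+1}}$ splits, choose splittings $s_i:\oplus_{k=0}^{n}\wedge^k\mscr V|_{U_i}\xrightarrow{\sim}\mathcal O_{X_{n+1}}|_{U_i}$ restricting to the given splitting of $X_n$, and let $h_{ij}\equiv s_i^{-1}\circ s_j\in \Gamma(U_{ij},G_n)$. Since both $s_i$ and $s_j$ induce the fixed splitting of $X_n$, the element $h_{ij}$ acts as the identity modulo $\mathfrak n_X^{n+1}$, and therefore its image $\bar h_{ij}$ in $H_n=G_n/G_{n+1}$ is well defined. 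The identity $s_i^{-1}s_j\cdot s_j^{-1}s_k=s_i^{-1}s_k$ on $U_{ijk}$ immediately yields the cocycle relation $\bar h_{ij}\bar h_{jk}=\bar h_{ik}$. A key preliminary point is to note that, even though $G_n$ is in general nonabelian, the quotient $H_n\cong T^{(-)^n}_X\otimes\wedge^n\mscr V$ is an abelian sheaf of $\mathcal O_{X_{\text{\normalfont bos}}}$-modules, so $\H^1(X,H_n)$ is ordinary coherent cohomology and $\{\bar h_{ij}\}$ genuinely defines a class $\text{ob}(X_n)\in \H^1(X,T^{(-)^n}_X\otimes\wedge^n\mscr V)$.

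Next I would show that the class is independent of the choice of cover and of the local splittings $s_i$. A refinement of covers is standard Čech theory. Replacing $s_i$ by another splitting $s'_i=s_i\cdot g_i$ with $g_i\in\Gamma(U_i,G_n)$ changes $h_{ij}$ by the coboundary $g_i^{-1}h_{ij}g_j$; passing to $H_n$, and using that $H_n$ is abelian, this is precisely $\bar g_i^{-1}\bar g_j\bar h_{ij}$, so the cohomology class is unchanged. A splitting of $X_{n+1}$ (extending the given splitting of $X_n$) is by definition a collection $\{s_i\}$ for which the $h_{ij}$ can be trivialized modulo $G_{n+1}$, i.e. $\bar h_{ij}=\bar h_i\bar h_j^{-1}$ globally. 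This is exactly the statement that $[\bar h_{ij}]=0$ in $\H^1(X,H_n)$, proving both necessity and sufficiency.

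For the torsor statement, assume the obstruction vanishes and fix one splitting $s$ of $X_{n+1}$. Any other splitting $s'$ is related to $s$ by an automorphism of $\oplus_{k=0}^{n}\wedge^k\mscr V$ lying in $G_n$ and trivial modulo $G_{n+1}$, i.e. by a global section of $H_n$; conversely, any such global section twists $s$ to a new splitting. This gives a free and transitive action of $\H^0(X,T^{(-)^n}_X\otimes\wedge^n\mscr V)$ on the set of splittings of $X_{n+1}$ extending the fixed splitting of $X_n$, i.e. the asserted torsor structure.

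The only real work above, and the step I expect to require the most care, is the identification $H_n\cong T^{(-)^n}_X\otimes\wedge^n\mscr V$, since this is what converts a potentially nonabelian Čech $\H^1$ with values in $G_n/G_{n+1}$ into an honest abelian cohomology group and thereby makes both the well-definedness of $\text{ob}(X_n)$ and the torsor structure meaningful. Fortunately this identification was asserted just before the statement; I would verify it by examining an automorphism $\varphi\in G_n$ on a local chart, writing $\varphi=\mathrm{id}+\delta$ modulo $\mathfrak n_X^{n+2}$ and observing that $\delta$ is an even (resp. odd) derivation of $\oplus_{k=0}^{n}\wedge^k\mscr V$ into $\wedge^{n+1}\mscr V$ depending on the parity of $n$, which yields exactly a section of $T^{(-)^n}_X\otimes\wedge^n\mscr V$. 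Once this local computation is in hand, everything else is a routine Čech argument and the proposition follows.
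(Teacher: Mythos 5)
Your proof is correct and follows the same Čech-cohomological route as the paper, which constructs $\text{ob}(X_n)$ from the transition automorphisms $h_{ij}$ between local splittings and identifies $H_n\cong T^{(-)^n}_X\otimes\wedge^n\mscr V$. You spell out the well-definedness of the cocycle class, the abelianness of $H_n$ (which the paper leaves implicit but which is essential for the $\H^1$ and torsor statements to make sense), and the torsor action more carefully than the paper does, but the underlying argument is the same.
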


\begin{corollary}
	If $\H^1\left(X,T^{(-)^n}_X\otimes \wedge^n\mathcal V\right)$ vanishes for all $n\ge 2$, then $X$ splits, and the space of splittings is a torsor under
	\begin{equation}
	\H^0\left(X,\bigoplus_{n\ge 1}T^{(-)^n}_X\otimes \bigwedge^n\mscr V\right).
	\end{equation}
\end{corollary}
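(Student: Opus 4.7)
The plan is to iterate the Proposition in an induction on $n$. First I would note that $X_2$ splits automatically (recorded in the footnote to the Proposition), which provides the base case. For the inductive step, assuming $X_n$ is split, the Proposition produces an obstruction class $\text{ob}(X_n) \in \H^1(X, T^{(-)^n}_X \otimes \wedge^n \mscr V)$ whose vanishing (which is exactly the standing hypothesis of the corollary) is necessary and sufficient for $X_{n+1}$ to split. Since $X$ is smooth, $\mscr V$ is locally free of some finite rank $q$, so $\wedge^n \mscr V = 0$ for $n > q$. The induction therefore terminates after finitely many steps, and since the formal neighborhood stabilises, $X = X_{q+1}$ is split.

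For the torsor statement, I would fix a reference splitting $\sigma$ of $X$ and measure an arbitrary splitting $\sigma'$ against it level by level. By the Proposition, once $\sigma_n$ has been identified with $\sigma'_n$ via the previously extracted classes, the space of extensions of this common splitting to $X_{n+1}$ is a torsor under $\H^0(X, T^{(-)^n}_X \otimes \wedge^n \mscr V)$. This produces a uniquely determined element $\xi_n$ in that group measuring $\sigma'_{n+1}$ relative to $\sigma_{n+1}$. Collecting the tuple $(\xi_n)_{n\ge 1}$ gives the desired bijection
\begin{equation*}
\{\text{splittings of } X\} \;\longleftrightarrow\; \H^0\!\left(X, \bigoplus_{n\ge 1} T^{(-)^n}_X \otimes \bigwedge^n \mscr V\right),
\end{equation*}
under the base point $\sigma$.

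The main obstacle I anticipate is verifying that these stage-by-stage torsor structures actually assemble into a single torsor under the \emph{direct sum}, rather than some non-split extension of the abelian layers $H_n = G_n/G_{n+1}$. The key point is that, after fixing the reference splitting, the filtration of $\Aut(\wedge^\bullet \mscr V)$ by the subgroups $G_n$ admits a canonical splitting coming from the $\mathbb Z$-grading on $\wedge^\bullet \mscr V$, so the associated graded $\bigoplus_n H_n = \bigoplus_n T^{(-)^n}_X \otimes \wedge^n \mscr V$ actually sits inside $\Aut(\wedge^\bullet \mscr V)$ as an abelian subgroup acting by degree-shifting derivations. Once this semidirect-product decomposition is spelled out and checked to be compatible with the cocycle computations entering the Proposition, the additivity of the torsor structure under the direct sum follows, and the bijection above is canonical up to the choice of base point.
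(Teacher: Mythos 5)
Your inductive iteration of the Proposition, with the standing $\H^1$-vanishing killing each obstruction class and termination via $\bigwedge^n\mscr V = 0$ for $n$ greater than the odd rank, is exactly the intended argument, and the level-by-level comparison of $\sigma'$ against $\sigma$ correctly produces the claimed bijection once a basepoint is fixed. However, the assertion in your last paragraph — that $\bigoplus_n H_n$ sits inside $\Aut(\wedge^{\bullet}\mscr V)$ as an abelian subgroup of degree-shifting derivations — is false. The failure is precisely the one you flag: exponentials of derivations of different degrees need not commute. For instance on $X = \mathbb{A}^{1|6}$ with coordinates $(x\,|\,\theta_1,\dots,\theta_6)$, take $D = x\theta_1\theta_2\partial_x$ (a section of $H_2 = T_{X_{\text{bos}}}\otimes\wedge^2\mscr V$) and $E = \theta_3\theta_4\theta_5\theta_6\partial_x$ (a section of $H_4$); then $\exp(D)\exp(E)(x) = x + x\theta_1\theta_2 + \theta_3\theta_4\theta_5\theta_6$, while $\exp(E)\exp(D)(x)$ carries an additional term $\theta_1\theta_2\theta_3\theta_4\theta_5\theta_6$. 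So the filtration $G_2\supset G_3\supset\cdots$ admits no canonical abelian splitting, and the associated graded does not embed as a subgroup.

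This does not undermine the proof, because the stronger claim is not needed: the set of splittings is an honest torsor under the nonabelian unipotent group $\Gamma(X,G_2)$, whose $G_n$-filtration has graded pieces $\H^0(X,H_n)$ once the $\H^1$'s vanish, and the phrase ``torsor under $\H^0(X,\bigoplus_n T^{(-)^n}_X\otimes\wedge^n\mscr V)$'' should be read as the resulting pointed-set bijection obtained by your level-by-level measurement, rather than as a torsor for the additive group structure of the direct sum. Your induction plus the level-by-level identification already establishes precisely that; you should simply delete the abelian-subgroup assertion.
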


For projectedness, the analysis is analogous. Thanks to Lemma \ref{Lemma_Superspace_Projected}, we can forget about the odd grading part and analyze the projection problem for embedding $X_{\text{\normalfont bos}}\hookrightarrow X_{\text{\normalfont ev}}$ of ordinary schemes.
Let $X_n^0$ be the scheme $(X,\mathcal{O}_X^0/(\mathcal{O}_X^1)^{2n})$, and assume that $X_n^0$ splits, i.e. $\mathcal O_{X_n^0}=\oplus_{i=0}^{n-1}\wedge^{2i}\mathcal V$. An argument similar to the splitness case that we will skip shows the following:
\begin{proposition}
	Assume $X_n^0$ splits, then there is an element
	\begin{equation}
	\text{ob}(X_n^0)\in \H^1\left(X,T_{X_{\text{\normalfont bos}}}\otimes \bigwedge^{2n}\mscr V\right),
	\end{equation}
	whose vanishing is necessary and sufficient for the projectedness of $X_{n+1}^0$. If it vanishes, then the space of projection of $X_{n+1}^0$ onto $X_{n}^0$ is a torsor under
	\begin{equation}
	\H^0\left(X,T_{X_{\text{\normalfont bos}}}\otimes \bigwedge^{2n}\mscr V\right).
	\end{equation}
\end{proposition}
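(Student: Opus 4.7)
The plan is to mirror the obstruction-theoretic argument carried out in the splitness case, but applied to the even-quotient embedding $X_{\text{\normalfont bos}}\hookrightarrow X_{\text{\normalfont ev}}$. By Lemma \ref{Lemma_Superspace_Projected}, projectedness of $X_{n+1}$ is equivalent to the existence of a retraction $X_{n+1}^0\to X_n^0$ for the purely even thickening $X_n^0=(X,\mathcal O_X^0/(\mathcal O_X^1)^{2n})$, so the odd grading can be dropped entirely and the problem becomes a lifting problem for ordinary (bosonic) schemes. Since $\mathbb A^{p|q}$ is automatically split (hence projected), smoothness of $X$ guarantees the existence of projections étale-locally, and the standing hypothesis that $X_n^0$ already splits as $\bigoplus_{i=0}^{n-1}\bigwedge^{2i}\mscr V$ ensures that all local models are uniform.

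First, I would fix an étale cover $\{U_i\}$ of $X$ on which $X_{n+1}^0$ admits a projection $\sigma_i$ onto $X_n^0$. On an overlap $U_{ij}$ any two such local projections differ by an automorphism of $\bigoplus_{i=0}^{n}\bigwedge^{2i}\mscr V$ that restricts to the identity modulo $\bigwedge^{2n}\mscr V$. Letting $G_n^0$ denote the group of automorphisms of $\bigoplus_{i\ge 0}\bigwedge^{2i}\mscr V$ which are the identity on $\bigoplus_{i=0}^{n-1}\bigwedge^{2i}\mscr V$, and $H_n^0\equiv G_n^0/G_{n+1}^0$, the key structural identification (the analogue of $H_n\cong T_X^{(-)^n}\otimes \bigwedge^n\mscr V$ in the splitness case) is
\[
H_n^0\;\cong\; T_{X_{\text{\normalfont bos}}}\otimes_{\mathcal O_{X_{\text{\normalfont bos}}}}\bigwedge^{2n}\mscr V,
\]
obtained by sending an infinitesimal automorphism to the induced $\mathcal O_{X_{\text{\normalfont bos}}}$-derivation $\mathcal O_{X_{\text{\normalfont bos}}}\to \bigwedge^{2n}\mscr V$. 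Restricting to the even subsheaf is precisely what forces the even wedge power $\bigwedge^{2n}\mscr V$ and the untwisted tangent sheaf $T_{X_{\text{\normalfont bos}}}$.

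Next, I would assemble the Čech data. The discrepancies $h_{ij}\equiv \sigma_j\sigma_i^{-1}$ on $U_{ij}$ form a Čech $1$-cocycle with values in $H_n^0$ via the identity $h_{ij}h_{jk}=h_{ik}$ on $U_{ijk}$, and a different choice of local projections $\sigma_i'\equiv h_i\sigma_i$ with $h_i\in \Gamma(U_i,H_n^0)$ modifies $\{h_{ij}\}$ by the coboundary $h_ih_j^{-1}$. Thus $\{h_{ij}\}$ defines a well-defined class
\[
\text{ob}(X_n^0)\;\in\; \H^1(X,H_n^0)\;\cong\; \H^1\!\left(X,T_{X_{\text{\normalfont bos}}}\otimes \bigwedge^{2n}\mscr V\right),
\]
which vanishes if and only if the local projections can be adjusted to glue into a global projection $X_{n+1}^0\to X_n^0$, yielding the first assertion. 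If this obstruction vanishes, then any two global projections differ by a global section of $H_n^0$, giving the torsor structure of the projection space under $\H^0(X,T_{X_{\text{\normalfont bos}}}\otimes \bigwedge^{2n}\mscr V)$, which is the second assertion.

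The main technical obstacle I expect is verifying the identification $H_n^0\cong T_{X_{\text{\normalfont bos}}}\otimes \bigwedge^{2n}\mscr V$: one must check that the even-parity restriction actually kills all contributions from odd wedge powers, so that the relevant derivation lands precisely in $\bigwedge^{2n}\mscr V$ and not in some mixed-parity piece, and that the parity twist $T_X^{(-)^n}$ which appears in the splitness analysis collapses to $T_{X_{\text{\normalfont bos}}}$ here because only even wedge powers are involved. Once this identification is in hand, the rest of the argument is a direct transcription of the splitness case and requires no further input.
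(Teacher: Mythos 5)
Your proposal is correct and is essentially the proof the paper has in mind: the paper itself states "An argument similar to the splitness case that we will skip shows the following," and your writeup carries out precisely that transcription, using Lemma \ref{Lemma_Superspace_Projected} to reduce to the bosonic retraction problem, forming $H_n^0 = G_n^0/G_{n+1}^0$, identifying it with $T_{X_{\text{\normalfont bos}}}\otimes \bigwedge^{2n}\mscr V$ (correctly noting that the parity twist collapses because only even wedge powers appear), and running the standard \v Cech cocycle/coboundary argument. No gaps.
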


\begin{corollary}
	If $\H^1\left(X,T_{X_{\text{\normalfont bos}}}\otimes \wedge^{2n}\mathcal V\right)$ vanishes for all $n\ge 1$, then $X$ is projected, and the space of projections is a torsor under
	\begin{equation}
	\H^0\left(X,\bigoplus_{n\ge 1}T_{X_{\text{\normalfont bos}}}\otimes \bigwedge^{2n}\mscr V\right).
	\end{equation}
\end{corollary}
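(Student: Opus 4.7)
The plan is a straightforward induction on $n$, bootstrapping compatible projections up the filtration tower $X_{\text{\normalfont bos}} = X_1^0 \hookrightarrow X_2^0 \hookrightarrow X_3^0 \hookrightarrow \cdots$ using the preceding proposition as a one-step lifting result. The base case $n=1$ is trivial since $X_1^0 = X_{\text{\normalfont bos}}$ projects tautologically onto itself. For the inductive step, assume $X_n^0$ is projected; then the obstruction class $\text{ob}(X_n^0) \in \H^1(X, T_{X_{\text{\normalfont bos}}} \otimes \wedge^{2n}\mscr V)$ vanishes by hypothesis, so the proposition yields a projection $X_{n+1}^0 \to X_n^0$, which composes with the already-chosen projection $X_n^0 \to X_{\text{\normalfont bos}}$ to give a projection $X_{n+1}^0 \to X_{\text{\normalfont bos}}$. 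Because $\mscr V$ is locally free of finite rank $q$, the sheaf $\wedge^{2n}\mscr V$ vanishes once $2n > q$, so the induction terminates after finitely many steps and yields a projection $X_{\text{\normalfont ev}} \to X_{\text{\normalfont bos}}$. Invoking Lemma \ref{Lemma_Superspace_Projected} then promotes this to a projection of the full superscheme $X$, establishing the first claim.

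For the torsor structure, at each inductive step the preceding proposition identifies the fiber of the surjective restriction map $\{\text{projections of } X_{n+1}^0\} \to \{\text{projections of } X_n^0\}$ as a torsor under $\H^0(X, T_{X_{\text{\normalfont bos}}} \otimes \wedge^{2n}\mscr V)$. Fixing a base projection $\pi_0 : X_{\text{\normalfont ev}} \to X_{\text{\normalfont bos}}$, every other projection is obtained by successively modifying $\pi_0$ at each level of the filtration by an element of the corresponding $\H^0$. The crucial observation is that a modification at level $n$ alters the projection only on the stratum $\mfk n_X^{2n} / \mfk n_X^{2n+2}$, hence leaves the induced projection of $X_m^0$ unchanged for $m \le n$, so the $\H^0$-actions at different levels commute and together define a free and transitive action of the product $\prod_{n \ge 1} \H^0(X, T_{X_{\text{\normalfont bos}}} \otimes \wedge^{2n}\mscr V)$ on the space of global projections. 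Since this product is finite, it equals the direct sum, which in turn equals $\H^0(X, \bigoplus_{n \ge 1} T_{X_{\text{\normalfont bos}}} \otimes \wedge^{2n}\mscr V)$ because $\H^0$ commutes with finite direct sums of sheaves.

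The main obstacle, such as it is, amounts to bookkeeping: one must verify that the successive tower of torsors genuinely assembles into a single torsor under the direct sum, and in particular that the $\H^0$-actions at different filtration levels commute in the sense described above. This should follow cleanly from the abelian structure of the deformation groupoid at each step together with the observation that modifications at level $n$ are invisible below level $n$; no deeper geometric input is needed beyond the local analysis already carried out in the preceding proposition.
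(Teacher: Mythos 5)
Your proposal follows essentially the route the paper intends (the paper itself only remarks that the analysis is analogous to the splitness case and skips the proof), but there is a small mismatch between your inductive hypothesis and the hypothesis of the proposition you invoke that is worth closing carefully.

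The proposition is stated under the assumption that $X_n^0$ \emph{splits}, i.e.\ $\mcal O_{X_n^0} \cong \bigoplus_{i=0}^{n-1}\wedge^{2i}\mscr V$ as graded algebras, not merely that $X_n^0$ is projected onto $X_{\text{\normalfont bos}}$. Your induction only maintains ``$X_n^0$ is projected,'' so as written you cannot formally invoke the proposition at the next step. The fix is to let the induction carry splitness: observe that the proposition's conclusion is a projection $X_{n+1}^0 \to X_n^0$, i.e.\ a ring section $\sigma:\mcal O_{X_n^0}\hookrightarrow\mcal O_{X_{n+1}^0}$, and that combined with the split grading on $\mcal O_{X_n^0}$ this yields a splitting of $\mcal O_{X_{n+1}^0}$. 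Indeed $\mcal O_{X_{n+1}^0} = \sigma(\mcal O_{X_n^0}) \oplus \wedge^{2n}\mscr V$ as sheaves, $\sigma(\mcal O_{X_n^0})$ is a graded subalgebra, the summand $\wedge^{2n}\mscr V = \mfk n_X^{2n}/\mfk n_X^{2n+2}$ is square-zero (since $4n \ge 2n+2$ for $n\ge 1$) and is annihilated by $\mfk n_X^2$, so its module structure factors through $\mcal O_{X_{\text{\normalfont bos}}}$ and agrees with the canonical one; hence the decomposition is a split grading and the induction closes. Once the tower is split, the composite retract $X_{\text{\normalfont ev}} \to X_{\text{\normalfont bos}}$ exists and Lemma \ref{Lemma_Superspace_Projected} promotes it to a projection of $X$, as you say. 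Your finiteness observation ($\wedge^{2n}\mscr V = 0$ for $2n>q$ because $\mscr V$ is locally free of rank $q$) is correct and necessary.

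For the torsor claim your reasoning is sound in outline: the proposition exhibits the set of lifts at each level as a torsor under $\H^0(X, T_{X_{\text{\normalfont bos}}}\otimes\wedge^{2n}\mscr V)$, and since a modification at level $n$ is invisible in $\mcal O_{X_m^0}$ for $m\le n$, the level-by-level actions assemble without obstruction into a free transitive action of the (finite) direct sum. The one point I would make explicit is why the iterated torsor is genuinely under the \emph{direct sum} rather than some nontrivial extension: this follows because, having fixed a base projection, comparing it with any other projection level by level produces a well-defined tuple in $\prod_n \H^0(X,T_{X_{\text{\normalfont bos}}}\otimes\wedge^{2n}\mscr V)$, and conversely any such tuple acts by successive modification, with no compatibility condition linking different levels; this is exactly the triviality of the extension. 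With these two clarifications your argument is complete.
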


\begin{remark}\label{rem:the definition of supermanifold}
	A supermanifold is defined to be a superspace $X$ such that $X_{\text{\normalfont bos}}$ is a $C^{\infty}$-manifold and locally on $X$, $\mathcal O_X$ is isomorphic to $\wedge^{\bullet} V$ for a vector bundle $V$ on $X_{\text{\normalfont bos}}$\footnote{What is called supermanifold in \cite{DonagiWitten2013a} is a special class of superspace defined in this work, and $C^{\infty}$-supermanifold in the sense of \cite{DonagiWitten2013a} is simply called supermanifold here.}. The above arguments also apply when $X$ is a supermanifold. In fact, $\H^1\left(X,T^{(-)^n}_X\otimes \wedge^n  V\right)$ vanishes automatically since sections of vector bundles form a fine sheaf. Hence supermanifolds are always split.
\end{remark}

In general it is hard to compute the the obstruction class $\text{ob}(X_n)$ or $\text{ob}(X_n^0)$ except for small dimensions and small $n$. In \cite{DonagiWitten2013a}, an interesting example is constructed as follows: assume that $\g>1$, then there is a morphism $f:\mscr{SM}_{\sg,1,0}\to \mscr{SM}_{\sg,0,0}$ which is defined by the projection from the universal super-Riemann surfaces (SRS) to the base. Choose a SRS $\mscr C$ whose spin structure is denoted by $\mscr E$, and an element $\eta\in \H^1 (C,\mscr E^{\vee})$, then $\eta$ is an odd tangent vector at the point $[\mscr C]\in \mscr{SM}_{\sg,0,0}(\mathbb C)$, which can be identified with a superaffine space $\mathbb A^{0|1}$, denoted by $\mathbb C_{\eta}$. Consider $\mscr C_{\eta}\equiv f^{-1}\mathbb C_{\eta}$. It is a smooth superscheme of dimension $(1|2)$ and is a first order odd deformation of $\mscr C$. Now the obstruction class $\text{ob}(C_{\eta,2})$ lives in the cohomology
\begin{equation}
\H^1\left(\mscr C,T_{\mscr C}\otimes \wedge^2(\mscr E\oplus \mathcal O_{\mscr C})\right)\cong \H^1 (\mscr C,\mathcal E^{\vee}).
\end{equation}
An explicit computation in Lemma 3.5 of \cite{DonagiWitten2013a} shows that $\text{ob}(\mscr C_{\eta,2})$ is identified with $\eta$ via this isomorphism, in particular, it is nontrivial if $\eta\neq 0$.

\newl In \cite{DonagiWitten2013a}, Donagi and Witten used this example and a comparison between obstruction classes of an ambient superscheme and its smooth closed super subscheme (\textit{loc. cit.} Corollary 2.10) to conclude that for $\g\ge 2$, the even spin component $\mscr{SM}_{\sg,1,0}^{\text{\normalfont ev}}$ is non-projected. Moreover, for $\g\ge 5$, they construct a class of non-split super subschemes of supermoduli $\mscr{SM}_{\sg,0,0}$ to conclude that the supermoduli in these cases are non-projected. This example has an interesting consequence which is slightly stronger than non-projectedness.
\begin{proposition}
	Assume $g\ge 2$, then there is no projection $\pi: \mscr{SM}_{\sg,1,0}^{\text{\normalfont ev}}\to \mscr{M}_{\sg,1}$ such that the following diagram commutes
	\begin{center}
		\begin{tikzcd}
		\mscr{S}_{\sg,1,0}^{\text{\normalfont ev}}\arrow[r,"i", hook] \arrow[dr, "f" swap] &\mscr{SM}_{\sg,1,0}^{\text{\normalfont ev}}\arrow[d,"\pi"]\\
		&\mscr{M}_{\sg,1}.
		\end{tikzcd}
	\end{center}
\end{proposition}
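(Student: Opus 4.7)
The plan is to obtain a contradiction with the non-projectedness of $\mscr{SM}_{\sg,1,0}^{\text{\normalfont ev}}$ for $\g\ge 2$ established above via the Donagi--Witten construction. The crucial additional input is that the forgetful morphism $f:\mscr{S}_{\sg,1,0}^{\text{\normalfont ev}}\to \mscr{M}_{\sg,1}$, which drops the even spin structure, is a finite \'etale representable morphism of Deligne--Mumford stacks: its fibers are finite sets of even theta characteristics, and the absence of infinitesimal moduli for a spin structure makes the map unramified.

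Assume for contradiction that a morphism $\pi$ exists with $\pi\circ i=f$. First I would form the pullback
\begin{equation*}
\mathcal E \;\equiv\; \mscr{SM}_{\sg,1,0}^{\text{\normalfont ev}} \;\times_{\pi,\,\mscr{M}_{\sg,1},\,f}\; \mscr{S}_{\sg,1,0}^{\text{\normalfont ev}},
\end{equation*}
whose first projection $p_1:\mathcal E\to \mscr{SM}_{\sg,1,0}^{\text{\normalfont ev}}$ is finite \'etale as a base change of $f$. The hypothesis $\pi\circ i=f$ canonically identifies $p_1^{-1}(\mscr{S}_{\sg,1,0}^{\text{\normalfont ev}})$ with the self-fiber product $\mscr{S}_{\sg,1,0}^{\text{\normalfont ev}}\times_{f,\mscr{M}_{\sg,1},f}\mscr{S}_{\sg,1,0}^{\text{\normalfont ev}}$, which carries the diagonal section $\Delta:\mscr{S}_{\sg,1,0}^{\text{\normalfont ev}}\to p_1^{-1}(\mscr{S}_{\sg,1,0}^{\text{\normalfont ev}})$. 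Since $i$ is a closed immersion whose defining ideal $\mathfrak n$ is generated by odd sections and hence nilpotent (of order at most $2\g$), the infinitesimal lifting property for \'etale morphisms---applied along the finite filtration of square-zero extensions $\mathcal O_X/\mathfrak n^{k+1}\twoheadrightarrow \mathcal O_X/\mathfrak n^k$---extends $\Delta$ uniquely to a section $s:\mscr{SM}_{\sg,1,0}^{\text{\normalfont ev}}\to\mathcal E$ of $p_1$.

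Composing $s$ with the second projection $p_2:\mathcal E\to \mscr{S}_{\sg,1,0}^{\text{\normalfont ev}}$ then produces a morphism $\tilde\pi\equiv p_2\circ s$ satisfying $\tilde\pi\circ i=p_2\circ\Delta=\mathrm{id}_{\mscr{S}_{\sg,1,0}^{\text{\normalfont ev}}}$. Thus $\tilde\pi$ would be a retraction of $i$, i.e.\ a projection of $\mscr{SM}_{\sg,1,0}^{\text{\normalfont ev}}$ onto its bosonic truncation, contradicting the Donagi--Witten non-projectedness theorem for $\g\ge 2$. The main obstacle I anticipate is a clean verification of the infinitesimal lifting property in the Deligne--Mumford \emph{super}stack setting; I would handle this by passing to an \'etale atlas of $\mscr{SM}_{\sg,1,0}^{\text{\normalfont ev}}$ so that each successive lifting step reduces to the classical statement for \'etale morphisms of affine schemes along a square-zero nilpotent thickening, and then reassembling the global section by the uniqueness of each lift.
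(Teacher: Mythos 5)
Your argument is correct, and it takes a genuinely different and in some ways cleaner route than the paper's. The paper proceeds locally: it picks a generic spin curve $\mscr{C}$ with $\H^1(\mscr{C},\mscr{E})=0$, chooses an \'etale neighborhood $U$ of $[\mscr{C}]\in\mscr{M}_{\sg,0}$ over which $f'$ trivializes into a disjoint union, selects the sheet $U_1$ containing the chosen spin structure, base-changes along $\mscr{M}_{\sg,1}\to\mscr{M}_{\sg,0}$ to get $V$ and $V_1$ with $f|_{V_1}$ an isomorphism, and then shows that the hypothesized $\pi$ makes $\pi^{-1}(V)$ projected --- while at the same time replaying the Donagi--Witten obstruction computation for $\mscr{C}_\eta\subset\pi^{-1}(V)$ to show $\pi^{-1}(V)$ is \emph{not} projected, reaching a local contradiction. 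By contrast, you work globally: form the fiber product $\mathcal E=\mscr{SM}_{\sg,1,0}^{\mathrm{ev}}\times_{\pi,\mscr{M}_{\sg,1},f}\mscr{S}_{\sg,1,0}^{\mathrm{ev}}$, observe that $\pi\circ i=f$ supplies the diagonal section of the finite \'etale $p_1$ over the bosonic truncation, and then use the infinitesimal lifting criterion for \'etale morphisms along the nilpotent thickening $\mathfrak n$ (the paper's Lemma 6.5 and its square-zero filtration, passed to an \'etale atlas and reassembled by uniqueness) to extend to a global section $s$, producing the retraction $\tilde\pi=p_2\circ s$. This contradicts the already-cited global Donagi--Witten non-projectedness of $\mscr{SM}_{\sg,1,0}^{\mathrm{ev}}$ directly, without re-running the $\mscr{C}_\eta$ obstruction computation or requiring a choice of generic point with vanishing $\H^1$. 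What the paper's localization buys is that it never needs to glue anything; you pay for the global statement with the (correct, and standard) observation that \'etale lifts are unique and therefore glue automatically. Both proofs are sound; yours is the more conceptual reduction to the non-projectedness theorem as a black box.
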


\begin{proof}
	Assume that there is such projection $\pi: \mscr{SM}_{\sg,1,0}^{\text{\normalfont ev}}\to \mscr{M}_{\sg,1}$ making the diagram commutative. Choose a spin curve $\mscr C$ of genus $\g$ with even spin structure $\mscr E$ such that $\H^1(\mscr C,\mscr  E)=0$. Note that a generic choice of $[\mscr  C]$ in $\mscr {S}_{\sg,0,0}^{\text{\normalfont ev}}(\mathbb C)$\footnote{$\mscr {S}_{\sg,0,0}^{\text{\normalfont ev}}(\mathbb C)$ denotes complex points of the stack $\mscr {S}_{\sg,0,0}^{\text{\normalfont ev}}$, i.e. $\mscr C$ is a spin curve over complex numbers.} will satisfy $\H^1(\mscr C,\mscr  E)=0$ since the locus in $\mscr{S}_{\sg,0,0}^{\text{\normalfont ev}}$ where $\H^1(\mscr  C,\mscr E)=0$ is open by semicontinuity of cohomology, and nonempty by explicit construction on a hyperelliptic curve\footnote{The divisor $\mathfrak{b}_S^{(-1)}$ in the section $4$ of \cite{Mumford1971a}, for example.}. Now consider the commutative diagram
	\begin{center}
		\begin{tikzcd}[row sep=scriptsize, column sep=scriptsize]
		& \mscr {S}_{\sg,1,0} ^{\text{\normalfont ev}}\arrow[rr, hook] \arrow[dd]  \arrow[dr, "f" swap] & & \mscr{SM}_{\sg,1,0}^{\text{\normalfont ev}} \arrow[dl, "\pi"] \arrow[dd] \\
		& & \mscr{M}_{\sg,1} \\
		& \mscr{S}_{\sg,0,0} ^{\text{\normalfont ev}}\arrow[rr, hook] \arrow[dr, "f'" swap] & & \mscr{SM}_{\sg,0,0}^{\text{\normalfont ev}}  \\
		& & \mscr{M}_{\sg,0} \arrow[from=uu, crossing over].
		\end{tikzcd}
	\end{center}
	Note that $f$ and $f'$ are \' etale coverings. Take an \'etale neighborhood $U$ of $[\mscr  C]\in \mscr {M}_{\sg,0}(\mathbb C)$\footnote{$\mscr {M}_{\sg,0}(\mathbb C)$ denotes the complex points of the stack $\mscr {M}_{\sg,0}$, i.e. $\mscr C$ is a genus $\sg$ curve over complex numbers.} such that $f'^{-1}(U)$ is isomorphic to a disjoint union of $U$. Take the component of $f'^{-1}(U)$ which contains the chosen spin structure, and call it $U_1$. Denote the base change of $U$ (resp. $U_1$) to $\mscr{M}_{\sg,1}$ (resp. $\mscr{S}_{\sg,1,0} ^{\text{\normalfont ev}}$) by $V$ (resp. $V_1$). Then, there is a commutative diagram
	\begin{center}
		\begin{tikzcd}
		V_1\arrow[r,"i", hook] \arrow[dr, "f|_{V_1}" swap] &\pi^{-1}(V)\arrow[d,"\pi"]\\
		&V.
		\end{tikzcd}
	\end{center}
	Since $f|_{V_1}$ is an isomorphism, this diagram says that $\pi^{-1}(V)$ is projected. Note that given a non-zero $\eta\in \H^1 (\mscr  C,\mscr  E^{\vee})$, the superscheme $\mscr  C_{\eta}$ is still a super subscheme of $\pi^{-1}(V)$, hence the same argument in Proposition 4.1 of \cite{DonagiWitten2013a} shows that $\pi^{-1}(V)$ is not projected, which is a contradiction. This concludes the proof.
\end{proof}

\subsubsection{Lifting Space of the Bosonic Truncation of a Supermanifold}\label{subsec:lifting subspace of reduced space}

As we have seen so far, we could find certain subspaces ${\mcal{V}}_{\sg}(\ns,\ra)$ of the moduli stack of bordered spin-Riemann surfaces $\mscr{S}_{\sg}(\ns,\ra)$ (or more precisely $\mscr{S}_{\sg}(\ns,\ra)/S^1\wr S_{\sns,\sra}$) with the following properties
\begin{enumerate}
	\item ${\mcal{V}}_{\sg}(\ns,\ra)$ satisfies the BV QME;
	
	\item Under a map in the homotopy category of BV algebras between chain complexes defined in \eqref{eq:the chain complexes associated to S and tilde S}, ${\mcal{V}}_{\sg}(\ns,\ra)$ is mapped to the fundamental class of the stable-curve compactification of the moduli stack of punctured spin curves;
\end{enumerate}
The question that we are interested to address is whether we can lift such solutions to $\mscr{S}_{\sg}(\ns,\ra)/S^1\wr S_{\sns,\sra}$ and then interpret the lifting as a solution to the BV QME in $\mscr{S}_{\sg}(\ns,\ra)/S^1\wr S_{\sns,\sra}$? We thus need to show that the lifting is unique. We investigate this question in the following. 

\newl In a precise language, we would like to understand the following question: given a smooth superscheme $(X,\mathcal O_X)$, and a smooth closed subscheme $Y_{\text{bos}}\subset X_{\text{bos}}$\footnote{From the definition \ref{Defn_Superspace}, $X_\text{bos}$ is the locally-ringed topological space $(X,\mathcal O_X/\mathfrak{n}_X)$ of codimension $n$, where $\mfk{n}_X$ is the ideal generated by the odd part $\mathcal O_X^1$ of the sheaf $\mcal{O}_X$.}, what is the deformation-obstruction of lifting $Y_{\text{bos}}$ to a closed super subscheme $Y\subset X$ of codimension $(n|0)$? To simplify our discussion, we assume that $X$ is a scheme.

\newl Let us start with assuming that $X_{\text{bos}}$ is affine and $X$ is split, i.e. there is an isomorphism 
\begin{equation}
\mathcal O_X\cong \mathcal O_{X_{\text{bos}}}\oplus \bigoplus _{i\ge 1}^m{\bigwedge}^i\mscr V,
\end{equation}
for a locally-free sheaf $\mscr V$ of rank $m$, and $Y_{\text{bos}}$ is generated by $\{f_1,f_2,\cdots,f_n\}\subset \Gamma(X,\mathcal O_{X_{\text{bos}}})$, a regular sequence of length $n$. Then there is a canonical lifting of $Y_{\text{bos}}$, namely by pulling back $Y_{\text{bos}}$ along the projection $X\to X_{\text{bos}}$. This is equivalent to setting the ideal defining $Y$ to be generated by the same set of generators $\{f_1,f_2,\cdots,f_n\}$. Suppose that there is another lifting $Y'$  defined by $n$ even functions $\{f_1',f_2',\cdots,f_n'\}\subset \Gamma(X,\oplus _{i\ge 0}{\wedge}^{2i}\mscr V)$ such that $f_i'-f_i\in \Gamma(X,\oplus _{i\ge 1}{\wedge}^{2i}\mscr V)$. Moreover, two sets $\{f_1',f_2',\cdots,f_n'\}$ and $\{f_1'',f_2'',\cdots,f_n''\}$ generates the same ideal if and only if $f_i''-f_i'\in \{f_1,f_2,\cdots,f_n\}\cdot \Gamma(X,\oplus _{i\ge 1}{\wedge}^{2i}\mscr V)$. Hence the liftings have one to one correspondence to the Hom set 
\begin{align}
\Hom _{\mathcal O_{X_{\text{bos}}}}\left(\mathcal I_{Y_{\text{bos}}},\bigoplus _{i\ge 1}^m{\bigwedge}^{2i}\mscr V\big|_{Y_{\text{bos}}}\right)
\end{align}
where $\mathcal I_{Y_{\text{bos}}}$ is the ideal defining $Y_{\text{bos}}$. 

\newl If $X_{\text{bos}}$ is not necessarily affine, then we need to lift $Y_{\text{bos}}$ order by order, and we possibly encounter obstructions. Let $X_k$ be the commutative space $(X,\mathcal O_X^{+}/(\mathcal O_X^-)^{2k+2})$. From our previous discussion, we know that to lift $Y_{\text{bos}}$ to a subspace of $X$, it suffices to lift it to a subspace of $X_{\text{ev}}$\footnote{The locally-ringed topological space $(X,\mathcal O_X^0)$ is the \textbf{bosonic quotient} of $X$, denoted by $X_{\text{\normalfont ev}}$. For more on this, see Section \ref{subsec:definitions, elementary results, and super-analog of classical theorems}.}, by lifting it to $X_k$ recursively. Suppose that there is a lifting $Y_k\subset X_k$, we want to know the deformation-obstruction space of lifting it to $X_{k+1}$, this  is completely classical, in fact we have:
\begin{proposition}
	There is an element 
	\begin{equation}
	\text{ob}(Y_k)\in \H ^1\left(Y_{\text{bos}},\mathcal N_{Y_{\text{bos}}/X_{\text{bos}}}\otimes{\bigwedge}^{2k+2}\mscr V\big|_{Y_{\text{bos}}}\right),
	\end{equation}
	whose vanishing is necessary and sufficient for the existence of lifting $Y_{k+1}$ of $Y_k$. $\mathcal N_{Y_{\text{bos}}/X_{\text{bos}}}$ is the normal bundle of $Y_{\text{bos}}$ in $X_{\text{bos}}$. If it vanishes, then the space of lifting $Y_{k+1}$ is a torsor under
	\begin{equation}
	\Hom _{\mathcal O_{X_{\text{bos}}}}\left(\mathcal I_{Y_{\text{bos}}},{\bigwedge}^{2k+2}\mscr V|_{Y_{\text{bos}}}\right)=\H ^0\left(Y_{\text{bos}},\mathcal N_{Y_{\text{bos}}/X_{\text{bos}}}\otimes{\bigwedge}^{2k+2}\mscr V\big|_{Y_{\text{bos}}}\right).
	\end{equation}
\end{proposition}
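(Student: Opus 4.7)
The plan is to carry out the standard deformation-theoretic argument, adapted to the square-zero extension
\begin{equation*}
0\longrightarrow {\bigwedge}^{2k+2}\mscr V\longrightarrow \mathcal{O}_{X_{k+1}}\longrightarrow \mathcal{O}_{X_k}\longrightarrow 0,
\end{equation*}
which is the key exact sequence governing the step from $X_k$ to $X_{k+1}$. Note that the kernel is a square-zero ideal in $\mathcal{O}_{X_{k+1}}$ (its square lies in $(\mathcal{O}_X^{-})^{4k+4}\subset (\mathcal{O}_X^{-})^{2k+4}$, hence is zero modulo the relations defining $X_{k+1}$), so the kernel acquires the structure of an $\mathcal{O}_{X_{\text{bos}}}$-module, and in particular restricts to an $\mathcal{O}_{Y_{\text{bos}}}$-module after pulling back along $Y_{\text{bos}}\hookrightarrow X_{\text{bos}}$. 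I would first fix an affine cover $\{U_\alpha\}$ of $X_{\text{bos}}$ on which both $X$ splits (possible by smoothness and Lemma \ref{Lemma_Superspace_EtaleLiftingSplitProjected}) and $\mathcal{I}_{Y_{\text{bos}}}$ is generated by a regular sequence. On each $U_\alpha$, the preceding discussion of the split affine case shows that there is a canonical lifting $Y_{k+1}^{\alpha}$ of $Y_k|_{U_\alpha}$, obtained by pulling back the generators along the splitting.

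Next, I would compare two such local liftings on overlaps $U_{\alpha\beta}=U_\alpha\cap U_\beta$. Since any two liftings of $Y_k|_{U_{\alpha\beta}}$ to $X_{k+1}|_{U_{\alpha\beta}}$ differ by the addition of a section of ${\bigwedge}^{2k+2}\mscr V$ to the generators, modulo the ideal $\mathcal{I}_{Y_{\text{bos}}}\cdot {\bigwedge}^{2k+2}\mscr V$, the difference $Y_{k+1}^{\beta}|_{U_{\alpha\beta}}-Y_{k+1}^{\alpha}|_{U_{\alpha\beta}}$ defines a class
\begin{equation*}
c_{\alpha\beta}\in \Hom_{\mathcal{O}_{U_{\alpha\beta}\cap Y_{\text{bos}}}}\!\left(\mathcal{I}_{Y_{\text{bos}}},{\bigwedge}^{2k+2}\mscr V\big|_{Y_{\text{bos}}}\right)\cong \H^{0}\!\left(U_{\alpha\beta}\cap Y_{\text{bos}},\mathcal{N}_{Y_{\text{bos}}/X_{\text{bos}}}\otimes {\bigwedge}^{2k+2}\mscr V\big|_{Y_{\text{bos}}}\right),
\end{equation*}
where the identification uses that $\mathcal{N}_{Y_{\text{bos}}/X_{\text{bos}}}=\mathcal{H}om_{\mathcal{O}_{Y_{\text{bos}}}}(\mathcal{I}_{Y_{\text{bos}}}/\mathcal{I}_{Y_{\text{bos}}}^{2},\mathcal{O}_{Y_{\text{bos}}})$ and $Y_{\text{bos}}$ is regularly embedded. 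A direct verification shows that $\{c_{\alpha\beta}\}$ is a Čech 1-cocycle, giving a class
\begin{equation*}
\mathrm{ob}(Y_k)\in \H^{1}\!\left(Y_{\text{bos}},\mathcal{N}_{Y_{\text{bos}}/X_{\text{bos}}}\otimes {\bigwedge}^{2k+2}\mscr V\big|_{Y_{\text{bos}}}\right),
\end{equation*}
whose vanishing is equivalent to finding $\mathcal{O}_{Y_{\text{bos}}}$-linear adjustments $\lambda_\alpha$ such that the modified local liftings agree on overlaps, i.e. glue to a global lifting $Y_{k+1}\subset X_{k+1}$. Independence of the cover follows from the usual refinement argument.

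Finally, assuming $\mathrm{ob}(Y_k)=0$, I would verify the torsor statement: any two global liftings $Y_{k+1}$ and $Y_{k+1}'$ are, on each $U_\alpha$, related by an element of $\Hom_{\mathcal{O}_{Y_{\text{bos}}}}(\mathcal{I}_{Y_{\text{bos}}},{\bigwedge}^{2k+2}\mscr V|_{Y_{\text{bos}}})$, and these local differences patch to a global section because both liftings are globally defined; conversely any such global section produces a new lifting by modifying the local defining ideals. This identifies the space of liftings with a torsor under $\H^{0}(Y_{\text{bos}},\mathcal{N}_{Y_{\text{bos}}/X_{\text{bos}}}\otimes {\bigwedge}^{2k+2}\mscr V|_{Y_{\text{bos}}})$, as stated. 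The only non-routine point I anticipate is verifying that the local ambiguity lies genuinely in the $\mathcal{O}_{Y_{\text{bos}}}$-module $\Hom(\mathcal{I}_{Y_{\text{bos}}},{\bigwedge}^{2k+2}\mscr V|_{Y_{\text{bos}}})$ rather than a larger module of $\mathcal{O}_{X_{\text{bos}}}$-homomorphisms; this uses that ${\bigwedge}^{2k+2}\mscr V$ is annihilated by $\mathcal{I}_{Y_{\text{bos}}}^{2}$ on $Y_k$ in the extended infinitesimal neighborhood, which reduces the problem to the conormal sheaf $\mathcal{I}_{Y_{\text{bos}}}/\mathcal{I}_{Y_{\text{bos}}}^{2}$.
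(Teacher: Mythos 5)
Your proposal is correct but takes a genuinely different route from the paper's proof. The paper follows Theorem 6.4.5 of Fantechi--Illusie--Göttsche and works with the abstract extension class: it builds the short exact sequence $0\to \bigwedge^{2k+2}\mscr V|_{Y_{\text{bos}}}\to \widetilde{\mathcal O}_{X_{k+1}}\to\mathcal I_{Y_k}\to 0$, identifies lifts with splittings (step (b)), and then must run a local-to-global $\Ext$-spectral sequence in step (c) to show that the a priori class in $\Ext^1_{\mathcal O_{X_{\text{bos}}}}(\mathcal I_{Y_{\text{bos}}},\bigwedge^{2k+2}\mscr V|_{Y_{\text{bos}}})$ actually lies in the subspace $\H^1(Y_{\text{bos}},\mathcal N_{Y_{\text{bos}}/X_{\text{bos}}}\otimes\bigwedge^{2k+2}\mscr V|_{Y_{\text{bos}}})$; the last step requires observing the class dies in the $\Gamma(\bigwedge^2\mathcal N\otimes\cdots)$ quotient because local lifts exist. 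You instead build the class directly as a Čech $1$-cocycle: choose a cover on which $X$ splits and $\mathcal I_{Y_{\text{bos}}}$ is a regular sequence, take the canonical local lifts given by the split case, and compare on overlaps. Because the kernel sheaf is killed by $\mathcal I_{Y_{\text{bos}}}$, the local comparison lands automatically in $\mathcal Hom_{\mathcal O_{Y_{\text{bos}}}}(\mathcal I_{Y_{\text{bos}}}/\mathcal I_{Y_{\text{bos}}}^2,\bigwedge^{2k+2}\mscr V|_{Y_{\text{bos}}})\cong\mathcal N_{Y_{\text{bos}}/X_{\text{bos}}}\otimes\bigwedge^{2k+2}\mscr V|_{Y_{\text{bos}}}$, so the spectral sequence step is bypassed. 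The trade-off is the usual one: the paper's route is cover-free and canonical, inheriting its statements from the general machinery of Fantechi et al., while yours is more explicit and elementary but requires keeping track of a cover and verifying the refinement-independence that the $\Ext$ formulation makes manifest. Both are valid, and your key observation --- that $\bigwedge^{2k+2}\mscr V|_{Y_{\text{bos}}}$ being an $\mathcal O_{Y_{\text{bos}}}$-module forces the ambiguity through the conormal sheaf --- is exactly the structural reason the paper's obstruction class descends from $\Ext^1$ to $\H^1$ of the normal bundle.
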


\begin{proof}
	This is essentially Theorem 6.4.5 of \cite{FantechiIllusieGottsche2005} with modifications. We will basically follow \textit{loc. cit.}
	
	\newl Step (a): construction of $\text{ob}(Y_k)$. We have following diagram of coherent sheaves on $X_{k+1}$ with exact row and columns
	\begin{center}
		\begin{tikzcd}
		&0\arrow[d] & &0 \arrow[d] \\
		
		&\mathcal I_{Y_{\text{bos}}}\otimes{\bigwedge}^{2k+2}\mscr V\arrow[d] & &\mathcal I_{Y_{k}}\arrow[d] \\
		
		0 \arrow[r]& {\bigwedge}^{2k+2}\mscr V\arrow[d] \arrow[r] &\mathcal O_{X_{k+1}} \arrow[r] &\mathcal O_{X_{k}} \arrow[r] \arrow[d] &0,\\
		
		&{\bigwedge}^{2k+2}\mscr V|_{Y_{\text{bos}}} \arrow[d]& &\mathcal O_{Y_k}\arrow[d]\\
		
		&0& &0
		\end{tikzcd}
	\end{center}
	Consider the induced maps $\alpha:\mathcal I_{Y_{\text{bos}}}\otimes{\bigwedge}^{2k+2}\mathcal V\to \mathcal O_{X_{k+1}}$ and $\beta :\mathcal O_{X_{k+1}}\to \mathcal O_{Y_k}$, and we also have $\im(\alpha)\subset \ker(\beta)$. Define $\widetilde {\mathcal O}_{X_{k+1}}:=\ker(\beta)/\im(\alpha)$ and we have short exact sequence of coherent sheaves on $X_{k+1}$
	\begin{center}
		\begin{tikzcd}
		0 \arrow[r]& {\bigwedge}^{2k+2}\mscr V|_{Y_{\text{bos}}} \arrow[r] &\widetilde {\mathcal O}_{X_{k+1}} \arrow[r] &\mathcal I_{Y_{k}} \arrow[r] &0,
		\end{tikzcd}
	\end{center}
	Similar to \textit{loc. cit.} we need to check that this is in fact a sequence of $\mathcal O_{X_k}$ sheaves. Since this is a local property and we can assume that $X$ splits and $\mscr V$ is free. In this case $X_k$ is isomorphic to $X_{\text{bos}}\times \Spec A_k$ for an Artinian ring $A_k$ and this goes back to the situation of \textit{loc. cit.} hence concludes the proof.
	
	\newl Step (b). It is enough to prove that extensions are in a natural bijection with splittings of the above short exact sequence. Assume we are given a splitting $\xi: \mathcal I_{Y_{k}}\to \widetilde {\mathcal O}_{X_{k+1}}$, and let $\mathcal I_{Y_{k+1}}$ be the preimage of $\xi( \mathcal I_{Y_{k}})$ in $\ker(\beta)\subset\mathcal O_{X_{k+1}}$. We can then complete the H-shaped diagram to a commutative diagram with exact rows and columns
	\begin{center}
		\begin{tikzcd}
		&0\arrow[d] & 0\arrow[d] &0 \arrow[d] \\
		
		0 \arrow[r]&\mathcal I_{Y_{\text{bos}}}\otimes{\bigwedge}^{2k+2}\mscr V\arrow[d] \arrow[r]& \mathcal I_{Y_{k+1}} \arrow[d] \arrow[r] &\mathcal I_{Y_{k}}\arrow[d] \arrow[r] &0 \\
		
		0 \arrow[r]& {\bigwedge}^{2k+2}\mscr V\arrow[d] \arrow[r] &\mathcal O_{X_{k+1}} \arrow[r] \arrow[d] &\mathcal O_{X_{k}} \arrow[r] \arrow[d] &0,\\
		
		0 \arrow[r]&{\bigwedge}^{2k+2}\mscr V|_{Y_{\text{bos}}} \arrow[d] \arrow[r]& \mathcal O_{Y_{k+1}} \arrow[d]\arrow[r]&\mathcal O_{Y_k}\arrow[d] \arrow[r] &0\\
		
		&0 &0 &0 
		\end{tikzcd}
	\end{center}
	and $\mathcal O_{Y_{k+1}}$ is the required lifting. To show that it's generated by a regular sequence of length $n$, we can again assume that $X$ splits and $\mscr V$ is free. In this case $Y_{k+1}$ is flat over $A_{k+1}$ (see \textit{loc. cit.}) hence the statement follows from \textit{loc. cit.} Lemma 9.3.4\footnote{There $n$ is set to $1$ but the proof presented there works for general $n$.}. Conversely, given a lifting $\mathcal O_{Y_{k+1}}$ we can build such commutative diagram as above by exactness and the subsheaf $\mathcal I_{Y_{k+1}}/\im (\alpha)\subset \ker(\beta)/\im(\alpha)$ maps isomorphically to $\mathcal I_{Y_{k}}$ hence gives a splitting of $\widetilde {\mathcal O}_{X_{k+1}} \to \mathcal I_{Y_{k}}$.
	
	\newl Step (c). We have only shown that $\text{ob}(Y_k)\in \Ext ^1 _{\mathcal O_{X_{\text{bos}}}}\left(\mathcal I_{Y_{\text{bos}}},{\bigwedge}^{2k+2}\mscr V\big|_{Y_{\text{bos}}}\right)$. To proceed further we notice that there is a spectral sequence \begin{equation*}
	\Ext ^q_{\mathcal O_{Y_{\text{bos}}}}\left(\Tor ^{\mathcal O_{X_{\text{bos}}}}_p\left(\mathcal I_{Y_{\text{bos}}},\mathcal O_{Y_{\text{bos}}}\right),{\bigwedge}^{2k+2}\mscr V\big|_{Y_{\text{bos}}}\right)\Longrightarrow _p \Ext ^{p+q} _{\mathcal O_{X_{\text{bos}}}}\left(\mathcal I_{Y_{\text{bos}}},{\bigwedge}^{2k+2}\mscr V\big|_{Y_{\text{bos}}}\right),
	\end{equation*}
	which gives rise to a short exact sequence 
	\begin{alignat*}{2}
	&0\longrightarrow \H ^1\left(Y_{\text{bos}},\mathcal N_{Y_{\text{bos}}/X_{\text{bos}}}\otimes{\bigwedge}^{2k+2}\mscr V\big|_{Y_{\text{bos}}}\right) \longrightarrow \Ext ^1 _{\mathcal O_{X_{\text{bos}}}}\left(\mathcal I_{Y_{\text{bos}}},{\bigwedge}^{2k+2}\mscr V\big|_{Y_{\text{bos}}}\right)
	\\
	\longrightarrow &\Hom _{\mathcal O_{Y_{\text{bos}}}}\left(\Tor ^{\mathcal O_{X_{\text{bos}}}}_1\left(\mathcal I_{Y_{\text{bos}}},\mathcal O_{Y_{\text{bos}}}\right),{\bigwedge}^{2k+2}\mscr V\big|_{Y_{\text{bos}}}\right)=\Gamma\left(Y_{\text{bos}},{\bigwedge}^{2}\mathcal N_{Y_{\text{bos}}/X_{\text{bos}}}\otimes{\bigwedge}^{2k+2}\mscr V\big|_{Y_{\text{bos}}}\right).
	\end{alignat*}
	Take an open covering $\{U_i\}\to X_{\text{bos}}$ such that $U_i$ are affine and split. From our previous discussions, we see that the obstruction vanishes on $U_i$, in particular the image of $\text{ob}(Y_k)$ in $\Gamma(Y_{\text{bos}},{\wedge}^{2}\mathcal N_{Y_{\text{bos}}/X_{\text{bos}}}\otimes{\wedge}^{2k+2}\mscr V|_{Y_{\text{bos}}})$, denoted by $\overline{\text{ob}(Y_k)}$, vanishes on $U_i$. Since $U_i$ is a covering, this implies that $\overline{\text{ob}(Y_k)}$ is trivial, hence $\text{ob}(Y_k)$ lives in the subspace
	\begin{equation}
	\H ^1\left(Y_{\text{bos}},\mathcal N_{Y_{\text{bos}}/X_{\text{bos}}}\otimes{\bigwedge}^{2k+2}\mscr V\big|_{Y_{\text{bos}}}\right).
	\end{equation}
	This concludes the proof.
\end{proof}

\begin{corollary}\label{cor:the lifting is a torsor}
	If $\H ^1\left(Y_{\text{bos}},\mathcal N_{Y_{\text{bos}}/X_{\text{bos}}}\otimes{\wedge}^{2k+2}\mscr V|_{Y_{\text{bos}}}\right) $ vanishes for all $k$, then the set of lifting of $Y_{\text{bos}}$ is a torsor under 
	\begin{equation}
	\Hom _{\mathcal O_{X_{\text{bos}}}}\left(\mathcal I_{Y_{\text{bos}}},\bigoplus _{i\ge 1}^m{\bigwedge}^{2i}\mscr V|_{Y_{\text{bos}}}\right).
	\end{equation}
\end{corollary}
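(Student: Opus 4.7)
My plan is to deduce the corollary directly from the preceding proposition by an induction on the order $k$ of the infinitesimal lifting. Concretely, since the bosonic truncation $X_{\text{ev}}$ is filtered by the subspaces $X_k = (X, \mathcal O_X^+/(\mathcal O_X^-)^{2k+2})$ with $X_0 = X_{\text{bos}}$ and $X_{\lfloor m/2\rfloor} = X_{\text{ev}}$, a lifting of $Y_{\text{bos}}$ to a closed sub-superscheme $Y \subset X$ (of codimension $(n|0)$, by the remark that it suffices to lift to $X_{\text{ev}}$) is the same data as a compatible sequence of liftings $Y_0 = Y_{\text{bos}} \subset Y_1 \subset \cdots \subset Y_{\lfloor m/2 \rfloor} = Y_{\text{ev}}$ with $Y_k \subset X_k$. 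The vanishing assumption on $H^1(Y_{\text{bos}}, \mathcal N_{Y_{\text{bos}}/X_{\text{bos}}} \otimes \wedge^{2k+2}\mscr V|_{Y_{\text{bos}}})$ together with the proposition kills every obstruction class $\mathrm{ob}(Y_k)$, so each successive lifting exists, yielding the existence of $Y$.

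For the torsor structure, at each stage $k \ge 0$ the proposition identifies the set of liftings $Y_{k+1}$ over a fixed $Y_k$ as a torsor under $\Hom_{\mathcal O_{X_{\text{bos}}}}(\mathcal I_{Y_{\text{bos}}}, \wedge^{2k+2}\mscr V|_{Y_{\text{bos}}})$. Since the filtration is finite (terminating at $k = \lfloor m/2\rfloor$ because $\wedge^{i}\mscr V = 0$ for $i > m$), the total set of liftings is an iterated torsor under these groups, and thus a torsor under their direct sum
\[
\bigoplus_{k \ge 0} \Hom_{\mathcal O_{X_{\text{bos}}}}\!\left(\mathcal I_{Y_{\text{bos}}}, {\bigwedge}^{2k+2}\mscr V|_{Y_{\text{bos}}}\right) \;=\; \Hom_{\mathcal O_{X_{\text{bos}}}}\!\left(\mathcal I_{Y_{\text{bos}}}, \bigoplus_{i \ge 1}^{m} {\bigwedge}^{2i}\mscr V|_{Y_{\text{bos}}}\right),
\]
which is precisely the stated group. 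The compatibility of the torsor structures is automatic because the action of the $k$-th level group only modifies the degree-$(2k+2)$ generators of the defining ideal, which are independent across orders.

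I do not expect any serious obstacle: the only mild subtlety is making sure the iterated torsor structure collapses to the direct sum rather than an extension, but this follows from the fact that the splitting of liftings at different orders corresponds to independent summands of the defining ideal after choosing a splitting locally (cf.\ the discussion immediately preceding the proposition in the affine split case). One should also note that the assumption on smoothness of $X$ and $Y_{\text{bos}}$ guarantees that $\mathcal N_{Y_{\text{bos}}/X_{\text{bos}}}$ is locally free, so the identification $H^0(Y_{\text{bos}}, \mathcal N_{Y_{\text{bos}}/X_{\text{bos}}} \otimes \wedge^{2k+2}\mscr V|_{Y_{\text{bos}}}) = \Hom_{\mathcal O_{X_{\text{bos}}}}(\mathcal I_{Y_{\text{bos}}}, \wedge^{2k+2}\mscr V|_{Y_{\text{bos}}})$ used in the proposition is unambiguous.
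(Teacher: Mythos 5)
Your order-by-order induction is the right strategy and is plainly what the corollary is meant to be read from: the vanishing hypothesis kills each obstruction class $\text{ob}(Y_k)$, so a lifting $Y_{k+1}\subset X_{k+1}$ of $Y_k$ always exists, and the filtration terminates at $k=\lfloor m/2\rfloor$. The existence part of your argument is therefore complete.

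The weak point is precisely the one you flag and then wave away: an iterated torsor does \emph{not} automatically become a torsor under the direct sum of the graded groups. You have a tower $L\to L_{\lfloor m/2\rfloor -1}\to\cdots\to L_1\to L_0=\{Y_{\text{bos}}\}$ in which each fiber of $L_{k+1}\to L_k$ is a $G_k$-torsor with $G_k=\Hom_{\mathcal O_{X_{\text{bos}}}}\bigl(\mathcal I_{Y_{\text{bos}}},\wedge^{2k+2}\mscr V|_{Y_{\text{bos}}}\bigr)$, but there is no canonical action of $\bigoplus_k G_k$ on the total set $L$: acting by $\psi_{k'}\in G_{k'}$ with $k'<k$ replaces $Y_{k'+1}$ by a different lifting, after which $Y_{k'+2},\ldots$ no longer sit over the modified $Y_{k'+1}$ and there is no preferred way to re-lift them. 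Your claim that ``the action of the $k$-th level group only modifies the degree-$(2k+2)$ generators of the defining ideal, which are independent across orders'' ignores exactly this re-lifting problem, and ``choosing a splitting locally'' does not explain why the resulting local trivializations of each torsor $L_{k+1}\to L_k$ would glue to a global one.

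What one can honestly conclude from your tower is that $L$, once nonempty, is \emph{non-canonically} in bijection with $\bigoplus_{k\ge 0}G_k=\Hom_{\mathcal O_{X_{\text{bos}}}}\bigl(\mathcal I_{Y_{\text{bos}}},\bigoplus_{i\ge 1}^{m}\wedge^{2i}\mscr V|_{Y_{\text{bos}}}\bigr)$, by choosing a set-theoretic section of each step of the tower (equivalently, fixing a base lifting $Y^0$ and measuring every other lifting against it level by level). That is enough for the application in the text, namely that the space of liftings is an affine space and hence contractible. If you want a genuine, non-arbitrary torsor structure, the cleanest route in the situation actually used later is to invoke the fact that a supermanifold is split (Remark \ref{rem:the definition of supermanifold}): a global splitting of $X$ gives a projection $X_{\text{ev}}\to X_{\text{bos}}$, hence a distinguished lifting $Y^0=Y_{\text{bos}}\times_{X_{\text{bos}}}X_{\text{ev}}$, and then the difference construction (send $f\in\mathcal I_{Y_{\text{bos}}}$ to a lift $\tilde f\in\mathcal I_Y$, write $\tilde f=f+g$ with $g\in\bigoplus_{i\ge 1}\wedge^{2i}\mscr V$, and reduce $g$ modulo $\mathcal I_{Y_{\text{bos}}}\cdot\bigoplus_{i\ge1}\wedge^{2i}\mscr V$) is well-defined globally and exhibits $L$ as a torsor under the claimed group. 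Either way, you should replace ``thus a torsor under their direct sum'' with an argument along one of these two lines rather than with the appeal to local splittings.
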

The case we are interested is when $X$ is a supermanifold, in the sense of Remark \ref{rem:the definition of supermanifold}. Above arguments also apply when $X$ is a supermanifold, the case that we are interested in. In that case, the coholomoly group $\H ^1\left(Y_{\text{bos}},\mathcal N_{Y_{\text{bos}}/X_{\text{bos}}}\otimes{\bigwedge}^{2k+2}\mscr V|_{Y_{\text{bos}}}\right) $ vanishes automatically since sections of vector bundles form a fine sheaf. So the space of liftings has the structure of an affine space, hence contractible. As a consequence, any two liftings can be connected by a homotopy.

\newl We thus end-up with the conclusion we were looking: We have shown that 
\begin{enumerate}
	\item We can always a solution to the BV QME in $\mscr{S}_{\sg}(\ns,\ra)/S^1\wr S_{\sns,\sra}$ to $\mscr{SM}_{\sg}(\ns,\ra)/S^1\wr S_{\sns,\sra}$. The lifting is unique up to homology;
	
	\item Precisely because the lifted solution is defined up to homology, we cannot consider the lifted solution as a solution to the BV QME in $\mscr{SM}_{\sg}(\ns,\ra)/S^1\wr S_{\sns,\sra}$.
\end{enumerate}

\subsection{Singular Homology on Superspaces}\label{subsec:singular homology on superspaces}
In this section, we discuss singular homology on superspaces. We mainly follow \cite{VoronovZorich198703} and \cite{Voronov1991}. From now on we focus on  cs-manifolds\footnote{Here 'cs' means complex-super but be aware that this is by definition a real manifold with a complex structure sheaf. Dimension $(p|q)$ means real even dimension $p$ and complex odd dimension $q$.} (possibly with boundaries and corners). Consider the $n$-dimensional simplex $\Delta^n$ and embed it into a cs-manifold $\Delta^{n|m}$ defined by the trivial $m$-dimensional bundle on $\Delta^n$ as its first order part. 
\begin{definition}[Singular Superchains on Superspaces]\label{def:the definition of singular superchains on superspaces}
	Fix a cs-manifold $M$ of dimension $(p|q)$ such that $\mathcal V\equiv \mathcal O_M[1]$ is a complex vector bundle of dimension $q$. A singular supersimplex $\sigma$ of dimension $(n|m)$ on M is a morphism of superspaces $\sigma:\Delta^{n|m}\to M$ such that the associated bundle map $\underline{\mbb C}^m\to\sigma^*\mathcal V$ is \textbf{injective}. Define the singular superchain complex valued in a ring $R$, denoted by $C_{n|m}(M;R)$, to be the free $R$ module with basis singular supersimplices on $M$ of dimension $(n|m)$. And also define the boundary map $\partial_{n|m}:C_{n|m}(M;R)\to C_{n-1|m}(M;R)$ by $$\partial_{n|m}\sigma:=\sum_{i=0}^n(-1)^i\partial_i\sigma$$where $\partial_i\sigma$ is the singular supersimplex induced by restriction of $\sigma$ to sub-superpace $\Delta^{n-1|m}$ of $\Delta^{n|m}$, which is topologically the i'th boundary of $\Delta ^n$ with the same orientation as in the definition of singular chain complex.\\
	
	The singular homology of $M$ of fermionic degree $m$ with coefficient ring $R$, denoted by $\H_{\bullet |m}(M;R)$ is defined to be the homology of the complex $\left(C_{\bullet |m}(M;R),\partial_{\bullet|m}\right)$.
\end{definition}
We then have the following result
\begin{proposition}\label{Homology_SupMfd}
	Fix a cs-manifold $M$ of dimension $(p|q)$ such that $\mathcal V\equiv \mathcal O_M[1]$ is a complex vector bundle of dimension $q$. We have isomorphism:
	$$\H_{n|m}(M;R)\cong \H_n\left(\text{\normalfont Grass}_m(\mathcal V);R\right),$$
	in particular, $\H_{n|0}(M;R)\cong\H_{n|q}(M;R)\cong \H_{n}(M_{\text{\normalfont bos}};R)$. Here $\text{\normalfont Grass}_m(\mathcal V)$ means the Grassmannian bundle of $m$ dimensional subspaces of $\mathcal V$.
\end{proposition}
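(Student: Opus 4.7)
The plan is to construct a natural chain map
$$\Phi_{\bullet|m}: C_{\bullet|m}(M;R) \to C_\bullet(\text{Grass}_m(\mathcal{V});R)$$
from the singular superchain complex to the ordinary singular chain complex of the Grassmannian bundle, and show it is a quasi-isomorphism by exhibiting contractible fibers. The intuition is that a singular supersimplex $\sigma:\Delta^{n|m}\to M$ is determined up to contractible fermionic choices by the induced ordinary simplex on $\text{Grass}_m(\mathcal{V})$.

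First, I would define $\Phi_{n|m}$ on basis elements. Given $\sigma:\Delta^{n|m}\to M$, its underlying continuous map $\sigma_{\text{bos}}:\Delta^n\to M_{\text{bos}}$ is canonical, while the first-order fermionic component of the sheaf morphism $\sigma^{\#}$ provides a bundle injection $\underline{\mathbb{C}}^m\hookrightarrow \sigma_{\text{bos}}^*\mathcal{V}$ by the defining hypothesis. Pointwise this selects an $m$-plane in each fiber $\mathcal{V}_{\sigma_{\text{bos}}(t)}$, producing a lift $\tilde\sigma:\Delta^n\to\text{Grass}_m(\mathcal{V})$ of $\sigma_{\text{bos}}$; set $\Phi_{n|m}(\sigma)\equiv\tilde\sigma$ and extend $R$-linearly. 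Boundary restrictions commute with the extraction of the underlying and first-order data, so $\Phi_{\bullet|m}\partial=\partial\Phi_{\bullet|m}$ and $\Phi_{\bullet|m}$ is a chain map. Moreover, starting with any $\tau:\Delta^n\to\text{Grass}_m(\mathcal{V})$, one obtains its bosonic part by projection and an injection by pulling back the tautological subbundle; using Batchelor's theorem to fix a global splitting $\mathcal{O}_M\cong\wedge^\bullet\mathcal{V}$ (as a $C^\infty$ supermanifold $M$ is split, per Remark \ref{rem:the definition of supermanifold}), one completes this data to a canonical basepoint supersimplex $s(\tau)$ with $\Phi(s(\tau))=\tau$, giving a section of $\Phi$.

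The main step is showing that the fiber $\Phi_{n|m}^{-1}(\tau)$ is contractible, hence $\Phi$ induces an isomorphism on homology. In the fixed splitting, the data of a lift of $\tau$ consists of the images of even local coordinates in $\bigoplus_{k\geq 1}\wedge^{2k}\mathbb{C}^m$ and of odd generators in $\bigoplus_{k\geq 1}\wedge^{2k+1}\mathbb{C}^m$, both of which are sections over $\Delta^n$ of finite-rank vector bundles (pullbacks of certain tensor powers involving $\mathcal{V}$ and the tautological quotient). Any two such choices are connected by the straight-line path in this affine space of sections (injectivity of the first-order part is preserved along the path since it depends only on the already-fixed linear data). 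Combined with the standard prism decomposition of $\Delta^n\times[0,1]$ into $(n+1)$-simplices, this yields a chain homotopy $h:C_{n|m}(M;R)\to C_{n+1|m}(M;R)$ with $\partial h+h\partial=\mathrm{id}-s\circ\Phi$, proving that $\Phi_{\bullet|m}$ is a quasi-isomorphism.

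The special cases $m=0$ and $m=q$ then follow immediately: $\text{Grass}_0(\mathcal{V})$ and $\text{Grass}_q(\mathcal{V})$ are canonically identified with $M_{\text{bos}}$ (the unique $0$- and $q$-dimensional subspaces of $\mathcal{V}_x$ being $\{0\}$ and $\mathcal{V}_x$ itself), yielding $\H_{n|0}(M;R)\cong \H_{n|q}(M;R)\cong \H_n(M_{\text{bos}};R)$. The main obstacle is ensuring that the basepoint section $s$ and the chain homotopy $h$ are defined globally and compatibly with face inclusions: this reduces to fixing the splitting of $\mathcal{O}_M$ once and for all and then verifying that the straight-line contractions transport coherently under boundary restrictions, which amounts to standard but careful sheaf-theoretic bookkeeping. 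A purely local-to-global acyclic-models argument, applied to the presheaf of supersimplices with fixed projection to $\text{Grass}_m(\mathcal{V})$, would equivalently handle this and could serve as a cleaner alternative.
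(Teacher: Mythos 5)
Your overall plan of constructing an explicit comparison chain map $\Phi$ from the superchain complex to the singular chains of the Grassmannian bundle and proving it a quasi-isomorphism has the right shape, and it parallels what the paper does by citing Corollary~1 of Voronov--Zorich to obtain a quasi-isomorphism of $C_{\bullet|m}(M)$ with a complex generated by bosonic simplices equipped with a choice of rank-$m$ subbundle of $\sigma_{\text{bos}}^*\mathcal{V}$, and then identifying that auxiliary complex with $C_\bullet(\text{Grass}_m(\mathcal{V}))$.

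However, the fiber-contractibility step has a genuine gap. You assert that the first-order linear data is ``already fixed'' once $\tau=\Phi(\sigma)$ is given, so that the residual data of a lift is purely the affine space of higher-order corrections, and that straight-line paths preserve injectivity. But $\tau$ pins down only the \emph{image} of the bundle map $\underline{\mathbb{C}}^m\to\sigma^*\mathcal{V}$ (the $m$-plane at each point of $\Delta^n$), not the injection itself. Two supersimplices with the same $\Phi$-image can differ by a map $\Delta^n\to\mathrm{GL}_m(\mathbb{C})$ acting on the frame of the tautological subbundle $\tau^*\mathcal{S}$, and the straight-line interpolation between two such injections with the same image can fail to stay injective (already for $m=1$, the segment joining $+1$ and $-1$ vanishes at the midpoint). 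Hence $\Phi^{-1}(\tau)$, modulo the genuinely affine higher-order corrections, is the space of framings of $\tau^*\mathcal{S}$, which has the homotopy type of $\mathrm{Map}(\Delta^n,\mathrm{GL}_m(\mathbb{C}))\simeq \mathrm{GL}_m(\mathbb{C})$ and is not contractible; so the chain homotopy $h$ you propose does not exist as constructed, and the acyclic-models variant you offer as an alternative faces the same obstruction since the presheaf of supersimplices over a fixed $\tau$ is not acyclic. The paper avoids this issue by citing the Voronov--Zorich result directly, which is exactly where this framing ambiguity is handled; a self-contained reconstruction would have to supply that argument rather than assert fiber contractibility.
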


\begin{proof}
	The proof goes similarly to the proof of Theorem 3 and 4 of \cite{VoronovZorich198703}. We apply Corollary 1 of \cite{VoronovZorich198703} and see that the chain complex $\left(C_{\bullet |m}(M;R),\partial_{\bullet|m}\right)$ is quasi-isomorphic to the chain complex $\left(C_{n,m}',d_n\right)$ of free $R$ modules generated by $n$-simplices on $M_{\text{bos}}$, together with a choice of $m$-dimensional subbundle of pull-back of $\mathcal V$ to the simplex, and $d_n$ is alternating sum of restriction to boundaries. It is easy to see that this is exactly $\left(C_{\bullet}(\text{\normalfont Grass}_m(\mathcal V),R),\partial_{\bullet}\right)$.
\end{proof}

\begin{remark}
	The definition of homology does not guarantee that for any morphism between cs-manifolds $f:M\to N$, there is a natural morphism of homologies. However, if the associated bundle map $\mathcal V_M\to f^*\mathcal V_N$ is \textbf{injective}, then $f$ indeed induces a natural morphism of homologies $f_{*}:\H_{n|m}(M;R)\to \H_{n|m}(N;R)$. In fact, in this case the isomorphism in {\normalfont Proposition \ref{Homology_SupMfd}} is natural, i.e. there is a commutative diagram
	\begin{center}
		\begin{tikzcd}
		\H_{n|m}(M;R)\arrow[r,"\cong"] \arrow[d, "f_*", swap] &\H_n\left(\text{\normalfont Grass}_m(\mathcal V_M);R\right)\arrow[d,"f_*"]\\
		\H_{n|m}(N;R)\arrow[r,"\cong"]  &\H_n\left(\text{\normalfont Grass}_m(\mathcal V_N);R\right).
		\end{tikzcd}
	\end{center}
\end{remark}

\newl Using Proposition \ref{Homology_SupMfd}, we can prove the existence of solution to the BV QME in the moduli space of bordered $\mscr{N}=1$ super-Riemann surfaces. For that purpose, we can define a further generalization of the Kimura et al spaces for the moduli space of bordered $\mscr{N}=1$ super-Riemann surface. As asserted in Proposition \ref{Homology_SupMfd}, the homology group with top fermionic dimensions for these spaces are isomorphic to their reduced space, which is the generalization of the Kimura et al spaces for bordered spin-Riemann surface we introduced in Section \ref{subsec:an alternative model for space of bordered spin-Riemann surfaces}, the proof for the existence of solution to the BV QME in the moduli space of bordered $\mscr{N}=1$ super-Riemann surfaces goes through without change. We will not attempt to reproduce that proof here. Instead, we present a shorter and more elegant proof for the existence of such solutions.   

\subsection{Proof of the Existence of Solution}\label{subsec:proof of the existence of solution to BVQME in the supermoduli of bordered surfaces}

In this section we give the proof for the existence of solution to the BV QME in $\mscr{SM}(\ns,\ra)/S^1\wr S_{\sns,\sra}$. Consider the following complex
\begin{equation}
\mcal{F}(\mscr{SM})\equiv \bigoplus_{\sns,\sra}\mscr{SC}_*\left(\frac{\mscr{SM}(\ns,\ra)}{S^1\wr S_{\sns,\sra}}\right),
\end{equation}
where $\mscr{SC}_*$ is the functor of normalized singular simplicial superchains of top fermionic dimension with coefficients in any field containing $\mbb{Q}$. This complex has the structure of a dg algebra. As usual the product comes from the disjoint union of surfaces, which gives $\mcal{F}(\mscr{SM})$ an algebra structure, and the differential comes from the boundary operation on singular superchains, as defined in Definition \ref{def:the definition of singular superchains on superspaces}. We would like to give $\mcal{F}(\mscr{SM})$ the structure of a BV algebra. For that purpose, we define the following gluing operations of boundaries: let $\widetilde\Delta^{\NS}_{\mfk{s}}$ (respectively $\widetilde\Delta^{\RA}_{\mfk{s}}$) be the operation of gluing two NS (respectively, R) boundaries with a full $S^1$ twist. There is one additional complication in defining $\widetilde\Delta^{\RA}_{\mfk{s}}$, namely we let it increase the odd dimension of the superchain by 1. The question is that why we can do that for R boundaries while we do not have this freedom for NS boundaries? From Definition \ref{def:R boundary components} of R boundaries, they are isomorphic to $S^{1}\times \mbb{R}^{0|1}\otimes_{\mbb{R}}\mbb{C}$, and as such there is a translation symmetry along the only complex odd direction of $\mbb{R}^{0|1}\otimes_{\mbb{R}}\mbb{C}$. This also makes sure that while we glue R boundaries, the resulting surface does have correct number of global fermionic parameters. From Definition \ref{def:NS boundary components} of NS boundaries there is no such freedom.

\newl We would like to consider the operation of gluing as an operation acting on singular superchains. For this purpose, consider a singular superchain associated to $\mbbmss{SM}(\ns,\ra)\equiv\mscr{SM}(\ns,\ra)/S^1\wr S_{\sns,\sra}$, which we denote by $C_{m|d_f}(\mbbmss{SM}(\ns,\ra);\mbb{F})$. Note that $d_f=2\g-2+\ns+\ra/2$ is the top fermionic dimension of the part of $\mbbmss{SM}(\ns,\ra)$ coming from connected genus-$\g$ surfaces. This singular superchain is generated by the $(m|d_f)$-singular supersimplices $\sigma^{m|d_f}_{\mfk{s}}:\Delta^{m|d_f}\longrightarrow \mbbmss{SM}(\ns,\ra)$, i.e. a morphism of superspaces $\sigma^{m|d_f}_{\mfk{s}}:\Delta^{m|d_f}\longrightarrow\mbbmss{SM}(\ns,\ra)$ with properties given in Definition \ref{def:the definition of singular superchains on superspaces}. We would like to define the operations $\widetilde\Delta^{\NS}_{\mfk{s}}$ and $\widetilde\Delta^{\RA}_{\mfk{s}}$ on the singular superchains. We thus consider the operations corresponding to $\widetilde\Delta^{\NS}_{\mfk{s}}$ and $\widetilde\Delta^{\NS}_{\mfk{s}}$ for all values of $\ns$ and $\ra$ which are acting on singular superchains and define\footnote{These operations are not canonically defined. See Footnote 6 on page 5 of \cite{Costello200509}.}
\begin{alignat}{2}
\Delta^{\NS}_{\mfk{s}}&:\mcal{F}^{\mfk{b}|\mfk{f}}(\mscr{SM})\longrightarrow \mcal{F}^{\mfk{b}+1|\mfk{f}}(\mscr{SM}),\nonumber
\\
\Delta^{\RA}_{\mfk{s}}&:\mcal{F}^{\mfk{b}|\mfk{f}}(\mscr{SM})\longrightarrow \mcal{F}^{\mfk{b}+1|\mfk{f}+1}(\mscr{SM}), 
\end{alignat}
where superscript $\mfk{b}|\mfk{f}$ denotes the $\mbb{Z}_2$-graded homological degree of the respective superchains. It is clear that this operation is compatible with gluing, i.e. $[\Delta_{\mfk{s}},d_{\mfk{s}}]=0$, where $\Delta_{\mfk{s}}\equiv \Delta_{\mfk{s}}^{\NS}+\Delta_{\mfk{s}}^{\RA}$, and $d_{\mfk{s}}\equiv \partial$ is the boundary map acting on singular superchains, as defined in Definition \ref{def:the definition of singular superchains on superspaces}. We denote this BV algebra as $\BV{\mscr{SM}}\equiv (\mcal{F}(\mscr{SM});d_{\mfk{s}},\Delta_{\mfk{s}})$.

\newl Our next task is to define a map between the BV algebras $\BV{\mscr{SM}}$ and $\BV{\mscr{S}}\equiv (\mcal{F}(\mscr{S});d,\Delta)$, the BV associated to the moduli space of bordered spin-Riemann surfaces, and show that it is a quasi-isomorphism. We define the map $\varphi:\BV{\mscr{SM}}\longrightarrow\BV{\mscr{S}}$ by sending any singular superchain in $\mcal{F}(\mscr{SM})$ to its underlying singular chain, $d_{\mfk{s}}$ to $d$, and $\Delta_{\mfk{s}}$ to $\Delta$. Proposition \ref{Homology_SupMfd} asserts that the $\mbb{Z}_2$-graded homology groups of the singular supercomplex associated to $\mbbmss{SM}(\ns,\ra)$ with top fermionic degree are isomorphic to the singular homology groups associated to $\mbbmss{S}(\ns,\ra)\equiv \mscr{S}(\ns,\ra)/S^1\wr S_{\sns,\sra}$. This shows that the associated singular chain complexes are quasi-isomorphic. Since this holds for all values of $\ns$ and $\ra$, the BV algebras $\BV{\mscr{SM}}$ and $\BV{\mscr{S}}$ are quasi-isomorphic. If we define 
\begin{equation}
\mbbmss{S}=\bigcup_{\sns,\sra}\mbbmss{S}(\ns,\ra), \qquad  \mbbmss{SM}=\bigcup_{\sns,\sra}\mbbmss{SM}(\ns,\ra), 
\end{equation}
there is a commutative diagram which is shown in Figure \ref{dig:the commutative diagram associated to the quasi-isomorphic BV algebras}.
\begin{figure}\centering 
	\begin{tikzcd}[row sep=huge]
	\mbbmss{SM} \arrow[r,"\mscr{SC}_*"] \arrow[d,swap, "\pi"] &
	\BV{\mscr{SM}}\arrow[d,"\varphi"] 
	\\
	\mbbmss{S}\arrow[r,"\mscr{C}_*"] & \BV{\mscr{S}}. 
	\end{tikzcd}
	\caption{The commutative diagram associated to the map $\varphi$ between the BV algebras $\BV{\mscr{SM}}$ and $\BV{\mscr{S}}$. $\mscr{SC}_*$ associates singular superchains to $\mscr{SM}$, $\mscr{C}_*$ associates singular chains to $\mscr{S}$, and $\pi$ is the map to the underlying bosonic space that sends $\mbbmss{SM}(\ns,\ra)$ to $\mbbmss{S}(\ns,\ra)$. The commutative diagram basically tells us that projecting $\mscr{SM}$ to $\mscr{S}$, and then taking singular chains is the same as associating singular superchains with top fermionic degree to $\mscr{SM}$, and then projecting onto underlying singular chains.}
	\label{dig:the commutative diagram associated to the quasi-isomorphic BV algebras}
\end{figure}
We thus have the following result
\begin{proposition}\label{prop:the qusi-isomorphism of BV algebras of spin- and super-Riemann surfaces}
	The BV algebras $\BV{\mscr{SM}}$ and $\BV{\mscr{S}}$ associated to moduli spaces of bordered spin-Riemann surfaces and moduli spaces of bordered $\mscr{N}=1$ super-Riemannn surfaces are quasi-isomorphic.
\end{proposition}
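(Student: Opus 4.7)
My plan is to verify that the morphism $\varphi:\BV{\mscr{SM}}\to\BV{\mscr{S}}$ sketched in the paragraph preceding the statement really is a morphism of dg BV algebras, and then read the quasi-isomorphism off Proposition \ref{Homology_SupMfd} applied $(\ns,\ra)$-componentwise. First I would describe $\varphi$ on generators: a top-fermionic singular supersimplex $\sigma^{m|d_f}_{\mfk{s}}:\Delta^{m|d_f}\to \mbbmss{SM}(\ns,\ra)$ is sent to $\pi\circ\sigma^{m|d_f}_{\mfk{s}}|_{\Delta^m}:\Delta^m\to \mbbmss{S}(\ns,\ra)$, where $\pi$ is the projection to the bosonic quotient. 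Compatibility with the boundary map is immediate from Definition \ref{def:the definition of singular superchains on superspaces}: the face maps $\partial_i$ of $\partial_{m|d_f}$ are restrictions to the corresponding $\Delta^{m-1|d_f}$-faces, which commute with $\pi$. Compatibility with the disjoint-union product is tautological, so the only real verification is for the BV operator.

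Next I would check $\varphi\circ\Delta_{\mfk{s}}=\Delta\circ\varphi$ separately on NS and R pieces. For NS boundaries, both $\Delta_{\mfk{s}}^{\NS}$ and $\Delta^{\NS}$ glue two NS boundaries with a full $S^1$-twist and raise the bosonic degree by one; since NS boundaries carry no odd parameter (Definition \ref{def:NS boundary components}), the operation descends cleanly to the bosonic quotient, and the diagram commutes on the nose. For R boundaries, the subtle point is the asymmetry in fermionic degree: $\Delta_{\mfk{s}}^{\RA}$ increases the odd degree by one, while $\Delta^{\RA}$ lives purely in the bosonic world. I would argue that the extra odd parameter introduced by $\Delta_{\mfk{s}}^{\RA}$ is precisely the $\mathbb{R}^{0|1}$-translation modulus intrinsic to an R boundary component (Definition \ref{def:R boundary components}), and this parameter is annihilated by $\pi$; the bosonic $S^1$-twist part of the gluing is shared by both operators. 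Hence on the level of underlying bosonic simplices, the two sides coincide.

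The main obstacle in this plan is the bookkeeping around the R-gluing: one must be careful that the underlying bosonic simplex of the supersimplex produced by $\Delta_{\mfk{s}}^{\RA}$ really is the singular simplex produced by $\Delta^{\RA}$ applied to the projected chain, independent of local choices (e.g.\ the $\pm$ sign ambiguity in the gluing relation \eqref{eq:gluing relation R punctures on spin curves}, and the convention used for inserting the extra $\Delta^{0|1}$ factor). Once this coherence is established, the commutative diagram of Figure \ref{dig:the commutative diagram associated to the quasi-isomorphic BV algebras} is verified and $\varphi$ is indeed a morphism of BV algebras.

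To conclude, I would apply Proposition \ref{Homology_SupMfd} with $M=\mbbmss{SM}(\ns,\ra)$, $q=d_f=2\g-2+\ns+\ra/2$, and $m=d_f$ (top fermionic dimension). It yields a natural isomorphism $\H_{n|d_f}(\mbbmss{SM}(\ns,\ra);\mbb{F})\cong \H_n(\mbbmss{SM}(\ns,\ra)_{\text{bos}};\mbb{F})=\H_n(\mbbmss{S}(\ns,\ra);\mbb{F})$, realised precisely by $\varphi$ via the commutative diagram in Figure \ref{dig:the commutative diagram associated to the quasi-isomorphic BV algebras}. Summing over all $(\ns,\ra)$, together with the BV-compatibility just verified, gives the desired quasi-isomorphism $\BV{\mscr{SM}}\simeq\BV{\mscr{S}}$.
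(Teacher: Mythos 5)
Your proposal follows essentially the same route as the paper: define $\varphi$ as the forgetful map from a top-fermionic supersimplex to its underlying bosonic simplex, check compatibility with $d_{\mfk{s}}$, the disjoint-union product, and $\Delta_{\mfk{s}}$, and then invoke Proposition \ref{Homology_SupMfd} componentwise in $(\ns,\ra)$. The paper is more terse — it states the map, asserts that it intertwines $d_{\mfk{s}}\mapsto d$, $\Delta_{\mfk{s}}\mapsto\Delta$, and immediately cites Proposition \ref{Homology_SupMfd} — so your proposal supplies useful detail that the paper leaves implicit, in particular the observation that the extra $\mbb{R}^{0|1}$-translation modulus created by $\Delta^{\RA}_{\mfk{s}}$ is killed by projection to the bosonic quotient, which is exactly why the diagram of Figure \ref{dig:the commutative diagram associated to the quasi-isomorphic BV algebras} commutes. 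Your flagged ``main obstacle'' about the sign ambiguity in R-gluing is a legitimate concern that the paper also glosses over (it is deferred to the homotopy equivalence with $\mcal F(\widetilde{\mscr S})$ from Section 4, where the ray decoration removes the $\mbb{Z}/2$ ambiguity); you should note that the operator $\Delta$ on $\mcal F(\mscr S)$ is defined only up to the homotopy equivalence \eqref{eq:the homotopy equivalence of chain complexes associated to the S and tilde S}, and the $\varphi$-compatibility is meant modulo that identification rather than on the nose.
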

We already mentioned that the set of homotopy-equivalent solutions of the BV QME for quasi-isomorphic BV algebras are isomorphic. Combining this result together with Proposition \ref{prop:the qusi-isomorphism of BV algebras of spin- and super-Riemann surfaces}, we conclude that
\begin{corollary}[The Existence of Solution to the BV QME in $\BV{\mscr{SM}}$]
	There exists a solution to the BV QME in $\BV{\mscr{SM}}$. 
\end{corollary}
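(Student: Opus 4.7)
The plan is to derive this existence result formally from two previously established facts: the existence part of Theorem \ref{the:the existence and uniqueness of the solution to the BV QME in spin moduli} for $\BV{\mscr{S}}$, and the quasi-isomorphism of BV algebras $\varphi:\BV{\mscr{SM}}\to\BV{\mscr{S}}$ asserted in Proposition \ref{prop:the qusi-isomorphism of BV algebras of spin- and super-Riemann surfaces}. The central observation is that the BV QME is naturally a Maurer--Cartan equation in an associated dg Lie algebra, and that homotopy classes of Maurer--Cartan solutions are an invariant of the quasi-isomorphism class of a BV algebra; hence a solution on the spin side transports to a solution on the super side without any new geometric input.

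First I would recall from the general formalism (Lemma 5.2.1 of \cite{Costello200509}) that any BV algebra $\mcal{A}$ gives rise to a dg Lie algebra whose Maurer--Cartan equation coincides with the BV QME $(d+\hbar\Delta)\exp(\mcal{V}/\hbar)=0$. By Lemma 5.3.1 and Definition 5.4.1 of \cite{Costello200509}, any quasi-isomorphism of BV algebras induces a bijection between the sets $\pi_0(\MC(\cdot))$ of homotopy classes of Maurer--Cartan solutions. Applying this to $\varphi:\BV{\mscr{SM}}\to\BV{\mscr{S}}$ yields a bijection
\begin{equation*}
\varphi_*:\pi_0(\MC(\BV{\mscr{SM}}))\xrightarrow{\ \sim\ }\pi_0(\MC(\BV{\mscr{S}})).
\end{equation*}
Theorem \ref{the:the existence and uniqueness of the solution to the BV QME in spin moduli} then produces a distinguished class $[\mcal{V}]\in\pi_0(\MC(\BV{\mscr{S}}))$, and $\varphi_*^{-1}([\mcal{V}])$ is a nonempty homotopy class in $\pi_0(\MC(\BV{\mscr{SM}}))$; any representative is an element $\mcal{V}_{\mscr{SM}}\in\hbar\mcal{F}(\mscr{SM})[\![\hbar]\!]$ solving the BV QME in $\BV{\mscr{SM}}$, which establishes the corollary.

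The hard part was already carried out in the preceding material: the quasi-isomorphism rests on Proposition \ref{Homology_SupMfd}, which identifies the top-fermionic-dimension singular superhomology of a cs-manifold with the ordinary singular homology of its reduced space, so one can hope for no shortcut that bypasses Proposition \ref{prop:the qusi-isomorphism of BV algebras of spin- and super-Riemann surfaces}. The only subtlety that remains to be double-checked before invoking the Costello transfer statements is the compatibility of $\varphi$ with the BV structure, i.e.\ that $\varphi\circ d_{\mfk{s}}=d\circ\varphi$ and $\varphi\circ\Delta_{\mfk{s}}=\Delta\circ\varphi$ at the level of homology; the former is immediate from the definition of $d_{\mfk{s}}$ as the boundary of singular superchains, and the latter follows from our definition of $\Delta_{\mfk{s}}^{\NS}$ and $\Delta_{\mfk{s}}^{\RA}$ as the superchain-level analogues of the same geometric gluings used to define $\Delta$, together with the commutative diagram of Figure \ref{dig:the commutative diagram associated to the quasi-isomorphic BV algebras}. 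Since the corollary asserts only existence, no additional analysis of fermionic obstructions of the kind discussed in Section \ref{subsec:lifting subspace of reduced space} is needed, which is precisely why this more elegant route succeeds where the naive thickening approach fails.
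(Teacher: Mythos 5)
Your argument follows the same route as the paper: invoke the quasi-isomorphism $\varphi:\BV{\mscr{SM}}\to\BV{\mscr{S}}$ of Proposition \ref{prop:the qusi-isomorphism of BV algebras of spin- and super-Riemann surfaces} (which itself rests on Proposition \ref{Homology_SupMfd}), identify the BV QME with the Maurer--Cartan equation via the Costello lemmas, and transport the solution from Theorem \ref{the:the existence and uniqueness of the solution to the BV QME in spin moduli} across the induced bijection on $\pi_0(\MC(\cdot))$. You spell out the transfer mechanism and the BV-morphism compatibility a bit more explicitly than the paper does, but the logical content is identical.
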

With this result at hand, we conclude this section. 

\section{Existence of Heterotic-String and Type-II-Superstring Vertices}\label{sec:superstring vertices}

In this section, we will prove the existence of heterotic-string and type-II-superstring field theory vertices, i.e. we prove the existence of subspaces in the product of stacks $\mscr{M}_L$ and $\mscr{M}_R$ that respectively parameterize left-moving and right-moving modes of the theory. The worldsheet $\mcal{R}$ of heterotic-string or type-II-superstring theories can be described by an embedding of the following form\footnote{For more on this viewpoint of worldsheets in string theory, see section 5.5 of \cite{Witten2012a}.}
\begin{equation}
\mcal{R}\longhookrightarrow\mcal{R}_L\times\mcal{R}_R,
\end{equation}
where $\mcal{R}_L$ is either an ordinary Riemann surface, in the case of heterotic-string theory, or an $\mscr{N}=1$ super-Riemann surface, in the case of type-II-superstring theories, and $\mcal{R}_R$ is an $\mscr{N}=1$ super-Riemann surface. In the following, we denote by $\pi$ the map  that forgets the spin structure, by $X_{\text{red}}$ the reduced space of the supermanifold $X$, and by $X_{\overline{\text{red}}}\equiv X_{\pi\circ\text{red}}$ the reduced space of the supermanifold together with the forgetting of spin structure of either $X$ or the geometric objects that $X$ parametrizes. For an ordinary surface $\mcal{R}$, we denote a surface which has the complex-conjugate complex structure by $\overline{\mcal{R}}$. 

\subsection{Introduction}\label{subsec:introduction to the proof of existence of string vertices}
\newl In the current understanding of superstring field theory, one must make two choices to be able to formulate such a theory
\begin{enumerate}
	\item {\bf\small a choice of worldsheet superconformal field theory:} this choice is essential to be able to define interaction vertices of superstring field theory. This choice begs one of the most important question in the formulation of superstring field theory, and string theory in general, i.e. the question of {\it background independence}. Since the results of this paper do not have any implication for this choice, the reader can directly go to \hyperlink{choice:string vertex}{\bf\small a choice of string vertex}. We however briefly describe the problem. There are two versions of this question.
	\begin{itemize}
		\item {\bf\small Background-Independence (Weak Version):} In this version, the question is the following: {\it consider two arbitrary $2d$ superconformal field theories which live in the same connected component of the space of $2d$ superconformal field theories with appropriate central charges and supersymmetry in the space of all such superconformal field theories. If we use these two theories to formulate two superstring field theories, to which extent the physical quantities computed using these two theories are sensitive to the choice of $2d$ superconformal field theory?} Since this is a kind of gauge choice, for the consistency of the whole theory, the physical quantities must be independent of this choice. For the case of closed bosonic-string field theory is has been shown that two string field theories formulated using two infinitesimally-different conformal field theories related to each other by some exactly marginal perturbation are related to each other by a field redefinition \cite{SenZwiebach199307, SenZwiebach199311}. In the case of closed superstring field theories formulated in the picture-changing formalism, a similar result has been established recently \cite{Sen201711}. There exist stronger results for open bosonic-string field theory, i.e. the field redefinition for even two finitely-separated backgrounds belonging to the same connected component can be constructed \cite{ErlerMaccaferri201406,ErlerMaccaferri201909}. Some other references on the issue of background-independence in string theory are \cite{Sen199011a,Sen199011b,Sen199201,Witten199208,Witten199210,Witten199306,Shatashvili199303,Shatashvili199311}

		\item {\bf\small Background-Independence (Strong Version):} In general, superconformal field theories are classical backgrounds for the formulation of superstring theory. However, the full quantum theory should be formulated using an arbitrary background. By this we mean more than just starting from a string field theory constructed on a conformal background and then constructing string field theory around a new background by giving VEV to original string fields. Such a theory is a gauge-invariant theory but it explicitly depends on the original conformal background and as such cannot be the solution to the problem of formulating theory on a generic background. String field theory suggest that such a background can be constructed by starting with a conformal background, and then consider the space of all, not necessarily marginal, deformations of this conformal background. This suggest that the space of quantum string background is the space of all $2d$ supersymmetric field theories possibly with some yet unknown features \cite{Zwiebach199606}. If this is the case, then the strong version of the concept of background-independence can be phrased as follows: {\it consider two arbitrary $2d$ supersymmetric field theories with some specific properties. If we use these two theories to formulate two superstring field theories, to which extent the physical quantities computed using these two theories are sensitive to the choice of $2d$ supersymmetric field theory?} As far as we know, this question and more generally the question of string theory on nonconformal backgrounds have not been touched at all. The only attempt in this direction that we are aware of is the work of Zwiebach on the closed bosonic-string field theory on nonconformal backgrounds \cite{Zwiebach199606}.
	\end{itemize}
	
	\newl We also note that there is more to the concept of background-independence in string theory. This motivated the definition of the concept of {\it string background} rather than a conformal background. An example of a deformation of string background without deforming the underlying conformal field theory is provided by the zero-momentum ghost-dilaton state, the component of the dilaton state constructed by the action of operators from the ghost sector \cite{BergmanZwiebach199411,RahmanZwiebach199507,BelopolskyZwiebach199511}. It can be shown that the deformation of a CFT by a zero-momentum ghost-dilaton state is a trivial deformation of the CFT, i.e. it is the old CFT represented in a new basis of states. Despite this fact, such deformation provides a new string background in the sense defined in section 7.2 of \cite{BergmanZwiebach199411}. This shows that the space of deformation of string background is larger than the space of all $2d$ conformal field theories with appropriate properties. Therefore, the independence of the physical quantities from such deformation must also be part of the proof of the weak version of the background-independence. For the case of zero-momentum ghost-dilaton, and also the matter dilaton and the true dilaton, constructed by a combination of matter and ghost dilatons\footnote{For more on the definition of these states see section 5 of \cite{BelopolskyZwiebach199511}.}, in bosonic-string field theory, it is shown that such shifts change the dimensionless string coupling constant, and thus the string background, but the  resulting bosonic-string field theories are related by a field redefinition \cite{BelopolskyZwiebach199511}. Also, as far as we know, the other possible deformations of the quantum string background, similar to the ghost-dilaton deformation of the classical string background, has not been explored in the literature.
	
	\item \hypertarget{choice:string vertex}{{\bf\small a choice of string vertex:}} string vertices are regions inside the moduli space of surfaces on which the relevant theories are defined. They must satisfy the BV QME. This guarantees the gauge-invariance of the action. Then, the remaining region of the moduli space is constructed by gluing of surfaces with lower genus and/or number of punctures, which can be interpreted as Feynman diagrams in string field theory. This region together with the region corresponding to the string vertices must provide a single cover of the moduli space. This essential requirement, related to the modular invariance, is demanded by the unitarity of the theory  \cite{Mandelstam197311,Mandelstam198601,GiddingsMartinec198612,Witten2012c}. Regarding this choice, there is a similar question. This question can be phrased as follows. {\it Let us choose two string vertices. If we formulate two string field theory using these two choices, to which extent the physical quantities computed in these two theories are sensitive to the choice of string vertex?} Again, the consistency of the theory demands that the physical results must be independent of this choice. It has been shown in \cite{HataZwiebach199301} and it has been further elaborated in \cite{SenZwiebach199311,SenZwiebach199408} that two closed bosonic-string field theories constructed using two
	infinitesimally-different vertices give the same gauge-fixed action for the bosonic-string field theory
	upon using two different gauge-fixing conditions. Similar results for finitely-different closed bosonic-string vertices is considered in \cite{CostelloZwiebach201909}. As far as we aware, similar results for heterotic-string and type-II-superstring vertices are not considered in the literature.  
\end{enumerate} 

\newl To construct string vertices, we can either work with {\it punctured} surfaces and consider local (super)conformal coordinates, or we can work with {\it bordered} surfaces. The gluing of local (super)conformal coordinates around the punctures correspond to gluing of boundaries. In this paper, we work with {\it bordered} surfaces to characterize string vertices. We consider the two cases of heterotic-string and type-II superstring theory separately in Sections \ref{subsec:the case of heterotic-string vertices} and \ref{subsec:the case of type-II-superstring vertices}, respectively. 

\subsection{The Case of Heterotic-String Field Theory Vertices}\label{subsec:the case of heterotic-string vertices}
Heterotic-string theories are described by a worldsheet theory with $\mscr{N}=(0,1)$ supersymmetry. The notation means that the right-moving sector has one right-moving supersymmetry, and the left-moving sector is similar to the left-moving sector of the bosonic-string theory. $\mcal{R}$ is a smooth submanifold inside $\mcal{R}^\text{H}_L\times\mcal{R}^\text{H}_R$, whose dimension is $2|1$, and whose reduced space $\mcal{R}_{\text{red}}$ is taken to be such that the bosonic moduli of $\mcal{R}^\text{H}_{R;{\overline{\text{red}}}}$ is the complex conjugate of the moduli of $\mcal{R}^\text{H}_L$ (note that $\mcal{R}_L$ does not have fermionic moduli), and there is no condition on the spin structure. This means that $\mcal{R}_{\text{red}}$ is diagonal in $\mcal{R}^\text{H}_{L}\times\mcal{R}^\text{H}_{R;\text{red}}$. Worldsheets that appear in the definition of $\g$-loop heterotic-string vertices with $\ns$ NS boundary components and $\ra$ R boundary components are parametrized by a cycle $\Gamma^{\text{H}}$ inside $\mscr{M}^\text{H}_{L}\times\mscr{M}^\text{H}_R$. $\mscr{M}^\text{H}_L$ is the moduli space of genus-$\g$ bordered ordinary Riemann surfaces with $\ns+\ra$ boundary components and $\mscr{M}^\text{H}_L$ is the moduli space of genus-$\g$ $\mscr{N}=1$ bordered super-Riemann surfaces with $\ns$ NS boundary components and $\ra$ R boundary components. From what we explained above, the choice of $\Gamma^{\text{H}}$ is such that $\Gamma^{\text{H}}_{\overline{\text{red}}}$ is diagonal in  $(\mscr{M}^\text{H}_L\times\mscr{M}^\text{H}_R)_{\overline{\text{red}}}=\mscr{M}^\text{H}_{L}\times\mscr{M}^\text{H}_{R;\overline{\text{red}}}$. $\Gamma^{\text{H}}$ can be constructed as follows. One first consider a diagonal space $\Gamma^{\text{H}}_{\overline{\text{red}}}$ in $\mscr{M}^\text{H}_{L}\times\mscr{M}^\text{H}_{R;\overline{\text{red}}}$ such that the complex structure of $\mcal{R}^\text{H}_{L}$ parametrized by $\mscr{M}^\text{H}_{L}$ and the complex structure of $\mcal{R}^\text{H}_{R;\overline{\text{red}}}$ parameterized by $\mscr{M}^\text{H}_{R;\overline{\text{red}}}$ are complex conjugate of each other. One then thicken $\Gamma^{\text{H}}_{\overline{\text{red}}}$ in the fermionic directions to get $\Gamma^\text{H}$. Since the thickening is arbitrary, the choice of $\Gamma^{\text{H}}$ is unique up to homolog, as we have explained in Section \ref{subsec:lifting subspace of reduced space}. As it has been explained in section 5.6 of \cite{Witten2012a}, the result of the integration over $\Gamma^{\text{H}}$ is independent of $\Gamma^{\text{H}}$, and it is only sensitive to $\Gamma^{\text{H}}_{\text{red}}$\footnote{To be more precise, suppose that the ideal sheaf of $\mscr{M}^\text{H}_{L}\times\mscr{M}^\text{H}_R$ generated by fermionic directions is $\mathcal I$, then deformation theory implies that the space of all possible lifting of $\Gamma_n\subset \mscr{M}^{\H}_{L}\times\mscr{M}^{\H}_R/\mathcal I^n$ to $\Gamma_{n+1}\subset \mscr{M}_{L}^{\H}\times\mscr{M}^{\H}_R/\mathcal I^{n+1}$ is an affine space modeled on some linear space, hence contractible. This means that for any two liftings $\Gamma$ and $\Gamma'$, there is a homotopy connecting them.}. However, the definition of genus-$\g$ heterotic-string vertices with some number of boundary components requires more elaboration since $\Gamma^{\text{H}}$ must furthermore satisfies the BV QME. If we were dealing with $\mscr{M}^\text{H}_L\times\mscr{M}^\text{H}_R$, i.e. we were letting left and right moduli vary independently, then we could construct a solution of the BV quantum master equation by considering the diagonal product of the solutions of the BV QME in $\mscr{M}_L$ and $\mscr{M}_R$. On the other hand, it is not always clear that if we start from a diagonal solution of the BV QME in $\mscr{M}_L\times\mscr{M}_{R;\overline{\text{red}}}$ and thicken that solution in the fermionic directions, we land on a solution of the BV QME in the product $\mscr{M}_L\times\mscr{M}_R$ since the thickening is defined up to homology. We thus need to consider the problem of the existence of the solution of the BV quantum master equation in $\mscr{M}^\text{H}_L\times\mscr{M}^\text{H}_R$ such that once we reduce that solution to $(\mscr{M}^\text{H}_L\times\mscr{M}^\text{H}_R)_{\overline{\text{red}}}$, the reduced solution parametrizes irreducible surfaces $\mcal{R}_{\overline{\text{red}}}$ which are embedded diagonally inside $(\mcal{R}^\text{H}_L\times\mcal{R}^\text{H}_R)_{\overline{\text{red}}}$, i.e. those surfaces in $(\mcal{R}^\text{H}_L\times\mcal{R}^\text{H}_R)_{\overline{\text{red}}}$ whose left-moving complex structure is complex conjugate of the right-moving complex structure. Thus we call ${\mcal{V}}^{\text{H}}_{\sg}(\ns,\ra)\equiv \Gamma^{\text{H}}$ the genus-$\g$ heterotic-string field theory with $\ns$ NS boundary components and $\ra$ R boundary components if it satisfies the following requirements 
\begin{enumerate}
	\item The complex dimension of ${\mcal{V}}^{\text{H}}_{\sg}(\ns,\ra)$ is $(3\g-3+\ns+\ra|2\g-2+\ns+\ra/2)$;
	
	\item ${\mcal{V}}^{\text{H}}_{\sg}(\ns,\ra)$, as a subspace of $\mscr{M}_L\times\mscr{M}_R$, satisfies the BV quantum master equation;
	
	\item $({\mcal{V}}^{\text{H}}_{\sg}(\ns,\ra))_{\overline{\text{red}}}$ is embedded diagonally in $(\mscr{M}_L\times\mscr{M}_R)_{\overline{\text{red}}}$, where by diagonal embedding, we mean all points in $(\mscr{M}_L\times\mscr{M}_R)_{\overline{\text{red}}}$ of the form $(\mcal{R},\overline{\mcal{R}})$. 
\end{enumerate}

To construct such a solution, we proceed as follows\footnote{Note that for heterotic-string theories $\mscr{M}^\text{H}_L=\mscr{M}(\sns+\sra)$, $\mscr{M}^\text{H}_R=\mscr{SM}(\sns,\sra)$, and $\mscr{M}^\text{H}_{R,\text{red}}=\mscr{S}(\sns,\sra)$.}. We introduce the maps $f^\text{H}_L:\mscr{S}(\ns,\ra) \longrightarrow \mscr{M}^\text{H}_L$ defined by forgetting the spin structure, i.e. 
\begin{equation}
f^\text{H}_{L}(\mcal{R}^\mscr{E})\equiv \mcal{R}, \qquad \forall \mcal{R}^\mscr{E}\in \mscr{S}(\ns,\ra),
\end{equation}
where $\mcal{R}$ is the underlying bosonic surface, and $f^\text{H}_R\equiv J:\mscr{S}(\ns,\ra) \longrightarrow \mscr{M}^\text{H}_{R;\text{red}}$,
where $J$ is the operation of anti-holomorphic involution which sends a spin-Riemann surface $\mcal{R}^{\mscr{E}}$ with a specific complex structure to a spin-Riemann surface with complex-conjugate complex structure $\oli{\mcal{R}}^{\bar{\mscr{E}}}$. Note that the complex-conjugate spin structure $\oli{\mscr{E}}$ is defined such that $\oli{\mscr{E}}^{\otimes 2}\cong \oli{\omega}$, where $\oli{\omega}$ can be identified with the antiholomorphic cotangent bundle of $\mcal{R}^{\mscr{E}}$. Furthermore, these maps can be restricted to $\mscr{S}_{\sg}(\ns,\ra)$. 

\newl Using these maps, we can define
\begin{equation}\label{eq:the map f between spin moduli and product stack}
f^\text{H}\equiv f^\text{H}_L\times f^\text{H}_R:\mscr{S}(\ns,\ra)\longrightarrow\mscr{M}^\text{H}_L\times\mscr{M}^\text{H}_{R;\text{red}}.
\end{equation}
We then have
\begin{alignat}{2}
(\pi\circ f^\text{H}) (\mcal{R}^{\mscr{E}})=\pi (f^\text{H}(\mcal{R}^\mscr{E}))=\pi(\mcal{R},\oli{\mcal{R}}^{\bar{\mscr{E}}})=(\mcal{R},\oli{\mcal{R}}).
\end{alignat}
This shows that the image of a spin Riemann surface $\mcal{R}^{\mscr{E}}$ under $f$ after reduction and forgetting of the spin structure is embedded diagonally in $(\mscr{M}^\text{H}_L\times\mscr{M}^\text{H}_R)_{\overline{\text{red}}}$ such that the left and right complex structures are complex conjugate of each other. To proceed further, we define the following space
\begin{equation}
(\mscr{M}^\text{H}_L\times\mscr{M}^\text{H}_R)_{\text{red}}^{\mbf{D}}\equiv f^{\text{H}}(\mscr{S}(\ns,\ra))= \left\{(\mcal{R},\oli{\mcal{R}}^{\bar{\mscr{E}}})\in(\mscr{M}^\text{H}_L\times\mscr{M}^\text{H}_R)_{\text{red}}\right\},
\end{equation}
i.e. all pairs $(\mcal{R},\oli{\mcal{R}}^{\bar{\mscr{E}}})$ of a surface $\mcal{R}$ with a specific complex structures, and the surface $\oli{\mcal{R}}^{\bar{\mscr{E}}}$ whose complex structure is complex-conjugate of $\mcal{R}$ together with a choice of spin structure compatible with the complex-conjugate complex structure. The superscript $\mbf{D}$ stands for diagonal. The dimension of this space, for genus-$\g$ surfaces with $\ns$ NS boundary components, and $\ra$ R boundary components, is $6\g-6+2\ns+2\ra$, the same is as the space of spin-Riemann surfaces with the same signature. We also consider the thickening of this space in all fermionic dimensions, and denote the resulting space by $(\mscr{M}^\text{H}_L\times\mscr{M}^\text{H}_R)^{\mbf{D}}$. Note that
\begin{alignat*}{2}
\mscr{M}^\text{H}_L&=\mscr{M}(\ns+\ra)=\bigcup_{\sg=0}^\infty\mscr{M}_{\sg}(\ns+\ra),\qquad  \mscr{M}^\text{H}_{R;\text{red}}&=\mscr{S}(\ns,\ra)=\bigcup_{\sg=0}^\infty\mscr{S}_{\sg}(\ns,\ra),
\end{alignat*}
and the thickening is done for all values of $(\g;\ns,\ra)$ in $2\g-2+\ns+\frac{\ra}{2}$ fermionic directions. We can thus define the following chain complex 
\begin{equation}
\mcal{F}((\mscr{M}^\text{H}_L\times\mscr{M}^\text{H}_R)_{\text{red}}^{\mbf{D}})\equiv \bigoplus_{\sns,\sra}\mscr{C}_*\left(\frac{(\mscr{M}^\text{H}_L(\ns+\ra)\times\mscr{M}^\text{H}_R(\ns,\ra))_{\text{red}}^\mbf{D}}{S^\text{H}_L\times S^\text{H}_R}\right),
\end{equation}
where 
\begin{alignat}{2}
S^\text{H}_L\equiv S_1\wr S_{\sns+\sra}, \qquad S^\text{H}_R\equiv S_1^{\sns+\sra}\wr S_{\sns,\sra}, 
\end{alignat}
$\mscr{M}^\text{H}_L(\ns+\ra)\equiv \mscr{M}(\ns+\ra)$, and $\mscr{M}^\text{H}_R(\ns,\ra)\equiv \mscr{SM}(\ns,\ra)$. $S^\text{H}_L$, respectively $S^\text{H}_R$, acts on $\mscr{M}^\text{H}_R$, respectively $\mscr{M}^\text{H}_L$, trivially. We also associate a superchain complex to $(\mscr{M}^\text{H}_L\times\mscr{M}^\text{H}_R)^{\mbf{D}}$, defined by thickening $(\mscr{M}^\text{H}_L\times\mscr{M}^\text{H}_R)^{\mbf{D}}_{\text{red}}$ in all fermionic directions, 
\begin{equation}
\mcal{F}((\mscr{M}^\text{H}_L\times\mscr{M}^\text{H}_R)^{\mbf{D}}) \equiv \bigoplus_{\sns,\sra}\mscr{SC}_*\left(\frac{(\mscr{M}^\text{H}_L(\ns+\ra)\times\mscr{M}^\text{H}_R(\ns,\ra))^\mbf{D}}{S_L^{\text{H}}\times S^{\text{H}}_R}\right),
\end{equation}
where $\mscr{SC}_*$ is the functor of normalized singular simplicial superchains with coefficients in any field containing $\mbb{Q}$ which associates singular simplicial superchains with top fermionic dimension.

\newl The crucial property of the map $f^\text{H}$ defined in \eqref{eq:the map f between spin moduli and product stack} is that it is compatible with the gluing operation, i.e. the operation of gluing and mapping with $f^\text{H}$ commute and there is a commutative diagram. We can thus define the following induced map
\begin{equation}
F^\text{H}_{f^\text{H}}:\mcal{F}(\mscr{S}) \longrightarrow\mcal{F}((\mscr{M}^\text{H}_L\times\mscr{M}^\text{H}_R)_{\text{red}}^{\mbf{D}}).
\end{equation}
If $\mcal{F}(\mscr{M}^\text{H}_L\times\mscr{M}^\text{H}_R)$ is the complex associated to the product $\mscr{M}^\text{H}_L\times\mscr{M}^\text{H}_R$, it has naturally the structure of a differential-graded algebra where the differential $d^\text{H}\equiv d^\text{H}_L\oplus d^\text{H}_R$ is the left $d^\text{H}_L$ and right $d^\text{H}_R$  boundary operations on left and right singular chains, and the product is the disjoint union of left and right surfaces. $\mcal{F}(\mscr{M}^\text{H}_L\times\mscr{M}^\text{H}_R)$ can be turned into a BV algebra by the usual twist-swing operation $\Delta^\text{H}_{LR}\equiv \Delta^\text{H}_L \oplus  \Delta^\text{H}_R$, where $\Delta^\text{H}_L$($\Delta^\text{H}_R$) is the gluing operation defined on the complex $\mcal{F}(\mscr{M}^\text{H}_L)$ ($\mcal{F}(\mscr{M}^\text{H}_R)$). $\mcal{F}((\mscr{M}^\text{H}_L\times\mscr{M}^\text{H}_R)_{\text{red}}^{\mbf{D}})$ inherits this BV-algebra structure. By construction,  $F^{\text{H}}_{f^{\text{H}}}$ is a homorphism of BV algebras, and thus maps solutions of the BV QME onto each other. Therefore, having a solution in $\mcal{F}(\mscr{S})$ whose existence we have proven, we have a solution in $\mcal{F}((\mscr{M}^\text{H}_L\times\mscr{M}^\text{H}_R)_{\text{red}}^{\mbf{D}})$. We have thus constructed a subspace of $(\mscr{M}^\text{H}_L\times\mscr{M}^\text{H}_R)_{\text{red}}^{\mbf{D}}$ that satisfies the BV QME. To construct the heterotic-string vertices, we consider the following maps of BV algebras
\begin{equation}
\widehat{F}^\text{H}_{f^\text{H}}:\mcal{F}((\mscr{M}^\text{H}_L\times\mscr{M}^\text{H}_R)_{\text{red}}^{\mbf{D}})\longrightarrow \mcal{F}((\mscr{M}^\text{H}_L\times\mscr{M}^\text{H}_R)^{\mbf{D}}).
\end{equation}
Since the BV algebra associated to singular superchains with top fermionic dimensions is quasi-isomorphic to the BV algebra of associated bosonic chains,, $\widehat{F}^\text{H}_{f^\text{H}}$ is a quasi-isomorphism of BV algebras. Again using the fact that quasi-isomorphic BV algebras have the same set of homotopy-classes of solutions of the BV QME, we conclude that there is a subspace of $(\mscr{M}^\text{H}_L(\ns+\ra)\times\mscr{M}^\text{H}_R(\ns,\ra))^{\mbf{D}}$, whose genus-$\g$ contribution we denote by ${\mcal{V}}^{\text{H}}_{\sg}(\ns,\ra)$, which satisfies the BV QME. ${\mcal{V}}^{\text{H}}_{\sg}(\ns,\ra)$ is a genus-$\g$ heterotic-string field theory vertex with $\ns$ boundary components and $\ra$ boundary components since
\begin{enumerate}
	\item its complex dimension is by construction $(3\g-3+\ns+\ra|2\g-2+\ns+\ra/2)$;
	
	\item ${\mcal{V}}^{\text{H}}_{\sg}(\ns,\ra)$, as a subspace of $\mscr{M}^\text{H}_L\times\mscr{M}^\text{H}_R$, satisfies the BV QME;
	
	\item by construction, $({\mcal{V}}^{\text{H}}_{\sg}(\ns,\ra))_{\overline{\text{red}}}$ is embedded diagonally in $(\mscr{M}_L\times\mscr{M}_R)_{\overline{\text{red}}}$.
\end{enumerate}
We thus have the following
\begin{thr}[The Existence of Heterotic-String Vertices]
	There exists a solution to the {\normalfont BV QME} in $\mcal{F}((\mscr{M}^\text{\H}_L\times\mscr{M}^\text{\H}_R)^{\mbf{D}})$. The genus-$\g$ part is a genus-$\g$ heterotic-string field theory vertex with $\ns$ {\normalfont NS} and $\ra$ {\normalfont R} boundary components. 
\end{thr}

\subsection{The Case of Type-II-Superstring Field Theory Vertices}\label{subsec:the case of type-II-superstring vertices}	
We now turn to the case of type-II-superstring theories. The worldsheet theory has $\mscr{N}=(1,1)$ supersymmetry. The notation means that the right-moving sector has one right-moving supersymmetry, and the left-moving sector is has one left-moving supersymmetry. The worldsheet $\mcal{R}$ is a smooth submanifold inside $\mcal{R}^{\text{II}}_L\times\mcal{R}^{\text{II}}_R$, whose dimension is $2|2$, and whose reduced space $\mcal{R}_{\text{red}}$ is taken to be such that the bosonic moduli of $\mcal{R}^{\text{II}}_{R;{\text{red}}}$ is the complex conjugate of the bosonic moduli of $\mcal{R}^{\text{II}}_{L;{\overline{\text{red}}}}$, and we do not assume any relation between fermionic moduli. This means that $\mcal{R}_{\overline{\text{red}}}$ is diagonal in $\mcal{R}^{\text{II}}_{L;\overline{\text{red}}}\times\mcal{R}^{\text{II}}_{R;\overline{\text{red}}}$. Worldsheets that appear in the definition of $\g$-loop type-II-superstring vertices with $\nsns$ NSNS boundary components, $\nsra$ NSR boundary components, $\rans$ boundary components, and $\rara$ RR boundary components are parametrized by a cycle $\Gamma^{\text{II}}$ inside $\mscr{M}^{\text{II}}_{L}\times\mscr{M}^{\text{II}}_R$. $\mscr{M}^{\text{II}}_L$  and $\mscr{M}^{\text{II}}_R$ are the moduli space of genus-$\g$ $\mscr{N}=1$ bordered super-Riemann surfaces with $\ns$ NS boundary components and $\ra$ R boundary components. Again from what we explained so far, the choice of $\Gamma^{\text{II}}$ is such that $\Gamma^{\text{II}}_{\overline{\text{red}}}$ is diagonal in  $(\mscr{M}^{\text{II}}_L\times\mscr{M}^{\text{II}}_R)_{\overline{\text{red}}}=\mscr{M}^{\text{II}}_{L;\overline{\text{red}}}\times\mscr{M}^{\text{II}}_{R;\overline{\text{red}}}$. $\Gamma^{\text{II}}$ can be constructed as follows. One first consider a diagonal space $\Gamma^{\text{II}}_{\overline{\text{red}}}$ in $\mscr{M}^{\text{II}}_{L;\overline{\text{red}}}\times\mscr{M}^{\text{II}}_{R;\overline{\text{red}}}$ such that the complex structure of $\mcal{R}^{\text{II}}_{L}$ parametrized by $\mscr{M}^{\text{II}}_{L;\overline{\text{red}}}$ and the complex structure of $\mcal{R}^{\text{II}}_{R;\overline{\text{red}}}$ parameterized by $\mscr{M}^{\text{II}}_{R;\overline{\text{red}}}$ are complex conjugate of each other. One then thicken $\Gamma^{\text{II}}_{\overline{\text{red}}}$ in the fermionic directions to get $\Gamma^{\text{II}}$. Thickening is arbitrary, the choice of $\Gamma^{\text{II}}$ is unique up to homology, the result of the integration over $\Gamma^{\text{II}}$ is independent of $\Gamma^{\text{II}}$, and it is only sensitive to $\Gamma^{\text{II}}_{\text{red}}$. Again, the definition of genus-$\g$ type-II-superstring vertices with some number of boundary components requires more elaboration since $\Gamma^{\text{II}}$ must furthermore satisfies the BV QME. If we were dealing with $\mscr{M}^{\text{II}}_L\times\mscr{M}^{\text{II}}_R$, i.e. we were letting left and right moduli vary independently, then we could construct a solution of the BV QME by considering the diagonal product of the solutions of the BV QME in $\mscr{M}^{\text{II}}_L$ and $\mscr{M}^{\text{II}}_R$. Again, it is not always clear that if we start from a diagonal solution of the BV QME in $\mscr{M}^{\text{II}}_{L;\overline{\text{red}}}\times\mscr{M}^{\text{II}}_{R;\overline{\text{red}}}$ and thicken that solution in the fermionic directions, we land on a solution of the BV quantum master equation in the product $\mscr{M}^{\text{II}}_L\times\mscr{M}^{\text{II}}_R$ since the thickening is defined up to homology. We thus need to consider the problem of existence of the solution of the BV quantum master equation in $\mscr{M}^{\text{II}}_L\times\mscr{M}^{\text{II}}_R$ such that once we reduce that solution to $(\mscr{M}^{\text{II}}_L\times\mscr{M}^{\text{II}}_R)_{\overline{\text{red}}}$, the reduced solution parametrizes irreducible surfaces $\mcal{R}_{\overline{\text{red}}}$ which are embedded diagonally inside $(\mcal{R}^{\text{II}}_L\times\mcal{R}^{\text{II}}_R)_{\overline{\text{red}}}$, i.e. those surfaces in $(\mcal{R}^{\text{II}}_L\times\mcal{R}^{\text{II}}_R)_{\overline{\text{red}}}$ whose left-moving complex structure is complex conjugate of the right-moving complex structure. Thus we call ${\mcal{V}}^{\text{II}}_{\sg}(\nsns,\nsra,\rans,\rara)\equiv \Gamma^{\text{II}}$ the genus-$\g$ type-II-superstring field theory vertex with $\nsns$ NSNS, $\nsra$ RNS, $\rans$ NSR, and $\rara$ RR  boundary components if it satisfies the following requirements 
\begin{enumerate}
	\item The dimension of ${\mcal{V}}^{\text{II}}_{\sg}(\nsns,\nsra,\rans,\rara)$ is
	\begin{alignat}{2}\label{eq:the dimension of type-II-superstring vertices}
	\dim^{\text{e}}_{\mbb{C}}({\mcal{V}}^{\text{II}}_{\sg}(\nsns,\nsra,\rans,\rara))&=3\g-3+\nsns+\nsra+\rans+\rara,\nonumber
	\\
	\dim^{\text{o}}_{\mbb{C}}({\mcal{V}}^{\text{II}}_{\sg}(\nsns,\nsra,\rans,\rara))&=4\g-4+2\nsns+\frac{3}{2}(\nsra+\rans)+\rara,
	\end{alignat}
	where $\dim^{\text{e}}_{\mbb{C}}$ and $\dim^{\text{o}}_{\mbb{C}}$ are bosonic and fermionic dimensions, respectively.
	
	\item ${\mcal{V}}^{\text{II}}_{\sg}(\nsns,\nsra,\rans,\rara)$, as a subspace of $\mscr{M}^{\text{II}}_L\times\mscr{M}^{\text{II}}_R$, satisfies the BV QME;
	
	\item $({\mcal{V}}^{\text{II}}_{\sg}(\nsns,\nsra,\rans,\rara))_{\overline{\text{red}}}$ is embedded diagonally in $(\mscr{M}^{\text{II}}_L\times\mscr{M}^{\text{II}}_R)_{\overline{\text{red}}}$, where by diagonal embedding, we mean all points in $(\mscr{M}^{\text{II}}_L\times\mscr{M}^{\text{II}}_R)_{\overline{\text{red}}}$ of the form $(\mcal{R},\overline{\mcal{R}})$. 
\end{enumerate}

To construct such a solution, we proceed as follows\footnote{Note that for type-II-superstring theories $\mscr{M}^{\text{II}}_L=\mscr{SM}(\sns,\sra)$, $\mscr{M}^{\text{II}}_R=\mscr{SM}(\sns,\sra)$, $\mscr{M}^{\text{II}}_{L;\text{red}}=\mscr{S}(\sns,\sra)$, and $\mscr{M}^{\text{II}}_{R,\text{red}}=\mscr{S}(\sns,\sra)$.}. We consider the fiber product with the twisted diagonal of $\mscr{M}(\ns+\ra)$:

\begin{center}
	\begin{tikzcd} 
	\mscr{S}'(\ns,\ra)\arrow[r,"f^{\text{II}}"] \arrow[d, "\pi" left] & \mscr{S}(\ns,\ra)\times\mscr{S}(\ns,\ra)\arrow[d, "\pi" right]\\
	\mscr{M}(\ns+\ra)\arrow[r,"\mbf{D}^t"] & \mscr{M}(\ns+\ra)\times {\mscr{M}(\ns+\ra)},
	\end{tikzcd}
\end{center}

where $\mbf{D}^t$ is the twisted diagonal defined by sending a surface $\mcal R$ to $(\mcal R,\oli{\mcal R})$. Note that the space $\mscr{S}'(\ns,\ra)$ represents Riemann surfaces with $\ns+\ra$ borders and two choices of spin structures, one with respect to a complex structure and the other one with respect to its complex conjugate. Therefore, $\mscr{S}'(\ns,\ra)$ can be obtained by distributing $\ns$ and $\ra$ such that $\ns+\ra$ is fixed. To each of such distribution of boundary components, one can assign $2^{2\sg}$ choices of spin structures (and also its complex conjugate). Therefore, $\mscr{S}'(\ns,\ra)$ is a disjoint union of $\mscr{S}(\ns,\ra)$. 

\newl Using this construction, we have
\begin{equation}\label{eq:the map f between product reduced superstacks}
f^{\text{II}}:\mscr{S}'(\ns,\ra)\longrightarrow\mscr{M}^{\text{II}}_{L;\text{red}}\times\mscr{M}^{\text{II}}_{R;\text{red}}.
\end{equation}
We then have
\begin{alignat}{2}
(\pi\circ f^{\text{II}}) (\mcal{R}^{\mscr{E}})=\pi (f^{\text{II}}(\mcal{R}^\mscr{E}))=\pi(\mcal{R}^{\mscr{E}_L},\oli{\mcal{R}}^{\mscr{E}_R})=(\mcal{R},\oli{\mcal{R}}),
\end{alignat}
where $\mscr{E}_L$ and $\mscr{E}_R$ are the spin structures of surfaces left- and right-moving sectors. {\it A priori, there is no relation between these spin structures}. This shows that the image of a spin-Riemann surface $\mcal{R}^{\mscr{E}}$ under $f^{\text{II}}$ after forgetting of the spin structure is embedded diagonally in $(\mscr{M}^{\text{II}}_L\times\mscr{M}^{\text{II}}_R)_{\overline{\text{red}}}$ such that the left and right complex structures are complex conjugate of each other. To proceed further, we define the following space
\begin{equation}
(\mscr{M}^{\text{II}}_L\times\mscr{M}^{\text{II}}_R)_{\text{red}}^{\mbf{D}}\equiv f^{\text{II}}(\mscr{S}'(\ns,\ra))= \left\{(\mcal{R}^{\mscr{E}_L},\oli{\mcal{R}}^{\mscr{E}_R})\in(\mscr{M}^{\text{II}}_L\times\mscr{M}^{\text{II}}_R)_{\text{red}}\right\},
\end{equation}
i.e. all pairs $(\mcal{R}^{\mscr{E}_L},\oli{\mcal{R}}^{\mscr{E}_R})$ of a surface $\mcal{R}^{\mscr{E}_L}$ with a specific complex structures equipped with an arbitrary spin structure, and the surface $\oli{\mcal{R}}^{\mscr{E}_R}$ whose complex structure is complex-conjugate of $\mcal{R}$ together with a choice of spin structure compatible with the complex-conjugate complex structure. As we mentioned above, there is no relation between spin structures $\mscr{E}_L$ and $\mscr{E}_R$. As usual, the superscript $\mbf{D}$ stands for diagonal. The dimension of this space, for genus-$\g$ surfaces with $\nsns$ NSNS, $\nsra$ NSR, $\rans$ RNS, and $\rara$ RR boundary components is $6\g-6+2\nsns+2\nsra+2\rans+2\rara$, the same is as the space of spin-Riemann surfaces with the same signature. We also consider the thickening of this space in all fermionic dimensions, and denote the resulting space by $(\mscr{M}^{\text{II}}_L\times\mscr{M}^{\text{II}}_R)^{\mbf{D}}$. Note that
\begin{alignat*}{2}
\mscr{M}^{\text{II}}_{L;\text{red}}&=\mscr{S}(\ns,\ra)=\bigcup_{\sg=0}^\infty\mscr{S}_{\sg}(\ns,\ra)= \mscr{M}^{\text{II}}_{R;\text{red}},
\end{alignat*}
and the thickening is done for all values of $(\g;\ns,\ra)$ in $2\g-2+\ns+\frac{\ra}{2}$ fermionic directions. We can thus define the following chain complex 
\begin{equation}
\mcal{F}((\mscr{M}^{\text{II}}_L\times\mscr{M}^{\text{II}}_R)_{\text{red}}^{\mbf{D}})\equiv \bigoplus_{\sns,\sra}\mscr{C}_*\left(\frac{(\mscr{M}^{\text{II}}_L(\ns,\ra)\times\mscr{M}^{\text{II}}_R(\ns,\ra))_{\text{red}}^\mbf{D}}{S^{\text{II}}_L\times S^{\text{II}}_R}\right),
\end{equation}
where 
\begin{alignat}{2}
S^{\text{II}}_L\equiv S_1\wr S_{\sns,\sra}=S^{\text{II}}_R, 
\end{alignat}
$\mscr{M}^{\text{II}}_L(\ns,\ra)\equiv \mscr{SM}(\ns,\ra)$, and $\mscr{M}^{\text{II}}_R(\ns,\ra)\equiv \mscr{SM}(\ns,\ra)$. $S^{\text{II}}_L$, respectively $S^{\text{II}}_R$, acts on $\mscr{M}^{\text{II}}_R$, respectively $\mscr{M}^{\text{II}}_L$, trivially. We also define the superchain complex associated to $(\mscr{M}^{\text{II}}_L\times\mscr{M}^{\text{II}}_R)^{\mbf{D}}$
\begin{equation}
\mcal{F}^{\text{II}}((\mscr{M}^{\text{II}}_L\times\mscr{M}^{\text{II}}_R)^{\mbf{D}}) \equiv \bigoplus_{\sns,\sra}\mscr{SC}_*\left(\frac{(\mscr{M}^{\text{II}}_L(\ns,\ra)\times\mscr{M}^{\text{II}}_R(\ns,\ra))^\mbf{D}}{S^{\text{II}}_L\times S^{\text{II}}_R}\right).
\end{equation}

\newl The crucial property of the map $f^{\text{II}}$ defined in \eqref{eq:the map f between product reduced superstacks} is that it is compatible with the gluing operation, i.e. the operation of gluing and mapping with $f^{\text{II}}$ commute and there is a commutative diagram. We can thus define the following induced map
\begin{equation}
F^{\text{II}}_{f^{\text{II}}}:\mcal{F}(\mscr{S}') \longrightarrow\mcal{F}((\mscr{M}^{\text{II}}_L\times\mscr{M}^{\text{II}}_R)_{\text{red}}^{\mbf{D}}).
\end{equation}
If $\mcal{F}(\mscr{M}^{\text{II}}_L\times\mscr{M}^{\text{II}}_R)$ is the complex associated to the stack $\mscr{M}^{\text{II}}_L\times\mscr{M}^{\text{II}}_R$, it has naturally the structure of a differential-graded algebra where the differential $d^{\text{II}}\equiv d^{\text{II}}_L\oplus d^{\text{II}}_R$ is the left $d^{\text{II}}_L$ and right $d^{\text{II}}_R$  boundary operations on left and right singular chains, and the product is the disjoint union of left and right surfaces. $\mcal{F}(\mscr{M}^{\text{II}}_L\times\mscr{M}^{\text{II}}_R)$ can be turned into a BV algebra by the usual twist-swing operation $\Delta^{\text{II}}_{LR}\equiv \Delta^{\text{II}}_L \oplus  \Delta^{\text{II}}_R$, where $\Delta^{\text{II}}_L$($\Delta^{\text{II}}_R$) is the gluing operation defined on the complex $\mcal{F}(\mscr{M}^{\text{II}}_L)$ ($\mcal{F}(\mscr{M}^{\text{II}}_R)$). $\mcal{F}((\mscr{M}^{\text{II}}_L\times\mscr{M}^{\text{II}}_R)_{\text{red}}^{\mbf{D}})$ inherits this BV-algebra structure. By construction, $F^{\text{II}}_{f^{\text{II}}}$ is a homomorphism of BV algebras, and thus maps solutions of the BV QME onto each other. Therefore, having a solution in $\mcal{F}(\mscr{S}')$ whose existence we have proven\footnote{Note that $\mscr{S}'(\sns,\sra)$ is a disjoint union of $\mscr{S}(\sns,\sra)$, and as such, the existence of the solution of the BV QME in $\mcal{F}(\mscr{S}')$ follows at once from the existence of solution in $\mcal{F}(\mscr{S})$.}, we have a solution in $\mcal{F}((\mscr{M}^{\text{II}}_L\times\mscr{M}^{\text{II}}_R)_{\text{red}}^{\mbf{D}})$. We have thus constructed a subspace of $(\mscr{M}^{\text{II}}_L\times\mscr{M}^{\text{II}}_R)_{\text{red}}^{\mbf{D}}$ that satisfies the BV QME. To construct the type-II-superstring vertices, we consider the following maps of BV algebras
\begin{equation}
\widehat{F}^{\text{II}}_{f^{\text{II}}}:\mcal{F}((\mscr{M}^{\text{II}}_L\times\mscr{M}^{\text{II}}_R)_{\text{red}}^{\mbf{D}})\longrightarrow \mcal{F}((\mscr{M}^{\text{II}}_L\times\mscr{M}^{\text{II}}_R)^{\mbf{D}}).
\end{equation}
Since the BV algebra associated to singular superchains with top fermionic dimensions is quasi-isomorphic to the BV algebra of associated bosonic chains, $\widehat{F}^{\text{II}}_{f^{\text{II}}}$ is a quasi-isomorphism of BV algebras. Again using the fact that quasi-isomorphic BV algebras have the same set of homotopy-classes of solutions of the BV QME, we conclude that there is a subspace of $(\mscr{M}^{\text{II}}_L(\ns,\ra)\times\mscr{M}^{\text{II}}_R(\ns,\ra))^{\mbf{D}}$, whose genus-$\g$ contribution we denote by ${\mcal{V}}^{\text{II}}_{\sg}(\nsns,\nsra,\rans,\rara)$, which satisfies the BV QME. ${\mcal{V}}^{\text{II}}_{\sg}(\nsns,\nsra,\rans,\rara)$ is the genus-$\g$ type-II-superstring field theory vertex with $\nsns$ NSNS, $\nsra$ NSR, $\rans$ RNS, and $\rara$ RR boundary components since
\begin{enumerate}
	\item its complex dimension is by construction the one given in \eqref{eq:the dimension of type-II-superstring vertices};
	
	\item ${\mcal{V}}^{\text{II}}_{\sg}(\nsns,\nsra,\rans,\rara)$, as a subspace of $\mscr{M}^{\text{II}}_L\times\mscr{M}^{\text{II}}_R$, satisfies the BV quantum master equation;
	
	\item by construction $({\mcal{V}}^{\text{II}}_{\sg}(\nsns,\nsra,\rans,\rara))_{\overline{\text{red}}}$ is embedded diagonally in $(\mscr{M}^{\text{II}}_L\times\mscr{M}^{\text{II}}_R)_{\overline{\text{red}}}$.
\end{enumerate}
We thus have proven the following 
\begin{thr}[The Existence of Type-II-Superstring Vertices]
	There exists a solution to the {\normalfont BV QME} in $\mcal{F}((\mscr{M}^\text{\normalfont II}_L\times\mscr{M}^\text{\normalfont II}_R)^{\mbf{D}})$. The genus-$\g$ part is a genus-$\g$ type-II-string field theory vertex with $\nsns$ {\normalfont NSNS}, $\rans$ {\normalfont RNS}, $\nsra$ {\normalfont NSR}, and $\rara$ {\normalfont RR} boundary components. 
\end{thr}

\section{On the Moduli Problem of  Stable $\mscr{N}=1$ SUSY Curves}\label{sec:on the moduli problem of N=1 SUSY Curves}
This section is devoted to the representability of the moduli problem of stable $\mscr{N}=1$ SUSY curves. We review some basics of superspace. We will basically follow \cite{DonagiWitten2013a,CodogniViviani201706} for various definitions and splitness of superspaces, and follow \cite{VoronovZorich198703,Voronov1991} for homology on superspaces. Unless specified, all rings discussed in this section are of charactersitic $0$, i.e. contains $\mathbb Q$. For early development on moduli problem and deformation theory see \cite{Artin1969,Artin197001,Artin197403,Schlessinger196802}. Deformation theory in the context of supergeometry is developed in \cite{Vaintrob199001,FlennerSundararaman199201}.

\subsection{Definitions, Elementary Results and Super Analogs of Classical Theorems}\label{subsec:definitions, elementary results, and super-analog of classical theorems}
\begin{definition}[Superrings, Ideals and Modules]\label{Defn_Superring}
	A superring is a $\mbb Z_2$-graded ring $R=R^+\oplus R^-$ such that for homogeneous elements $x,y\in R$, $x\cdot y=(-1)^{|x|\cdot|y|}y\cdot x$. An ideal of $R$ is a $\mbb Z_2$-graded ideal of $R$ (considered as an ordinary ring). A module of $R$ is a $\mbb Z_2$-graded $R^+$-module $M=M^+\oplus M^-$, with an $R^+$-homomorphism $a:R^-\otimes_{R^+}M^{\pm}\to M^{\mp}$ such that following diagram 
	\begin{center}
		\begin{tikzcd}
		R^-\otimes_{R^+}R^-\otimes_{R^+}M^{\pm}\arrow[r,"\text{Id}\otimes a"] \arrow[dr,"\mu\otimes \text{Id}",swap] &R^-\otimes_{R^+}M^{\mp}\arrow[d,"a"]\\
		&M^{\pm}
		\end{tikzcd}
	\end{center}
	commutes, where $\mu:R^-\otimes _{R^+}R^-\to R^+$ is the multiplication map.
\end{definition}
Unless specified, we only consider ideals and modules which are defined above, instead of ideals or modules in the ordinary sense.

\begin{definition}[{\bf Superspace}]\label{Defn_Superspace}
	A superspace is a locally-superringed topological space $(X,\mathcal O_X)$, i.e. there is a $\mathbb Z_2$-grading
	\begin{equation}
	\mathcal O_X=\mathcal O_X^+\oplus \mathcal O_X^-.
	\end{equation}
	We also define the ideal generated by $\mathcal O_X^-$ to be $\mathfrak{n}_X$, and define the locally-ringed topological space $(X,\mathcal O_X/\mathfrak{n}_X)$ to be the \textbf{bosonic truncation} of $X$, denoted by $X_{\text{\normalfont bos}}$. The locally-ringed topological space $(X,\mathcal O_X^+)$ will be called the \textbf{bosonic quotient} of $X$, denoted by $X_{\text{\normalfont ev}}$.\\
	
	A morphism between superspaces $f:(X,\mathcal O_X)\to (Y,\mathcal O_Y)$ is a morphism between locally-ringed topological space $f:(X,\mathcal O_X)\to (Y,\mathcal O_Y)$ such that the induced homomorphism $ f^{-1}\mathcal O_Y\to \mathcal O_X$ preserves the $\mathbb Z_2$-grading. The category of superspaces will be denoted by $\SSp$, and the category of locally-commutative-ringed topological space which we call commutative space will be denoted by $\CSp$.
\end{definition}

\begin{remark}\label{Superspace_Z_2Action}
	Given a superspace $(X,\mathcal O_X)$, there is a natural $\mathbb Z_2$ action on it, namely, let $\mathbb Z_2$ acts on $\mathcal O_X^i$ by $(-1)^i$, hence the $\mathbb Z_2$-invariant is exactly $\mathcal O_X^+$. This is where the name bosonic quotient comes from. Moreover this defines a functor:
	\begin{equation*}
	C:\SSp\to \CSp.
	\end{equation*}
\end{remark}

\begin{remark}
	$X_{\text{\normalfont bos}}\equiv (X,\mathcal O_{X_{\text{\normalfont bos}}})$ has a natural superspace structure, namely let the odd part to be trivial. It has a natural closed embedding into $X$ in the category of \textit{superspaces}. In fact, for any locally-commutative-ringed topological space $(X,\mathcal O_X)$, there is a canonical superspace $(X,\mathcal O_X\oplus 0)$ associated to it, and this is a functor:
	\begin{equation*}
	S: \CSp\to \SSp.
	\end{equation*}
\end{remark}

\begin{lem}\label{Superspace_Adjointness}
	$C$ is left adjoint to $S$, and $S$ is left adjoint to the functor $B$ which takes a superspace to its bosonic truncation.
\end{lem}

\begin{proof}
	Suppose that $X\in \SSp$ and $Y\in \CSp$, and given a morphism $f:X\to S(Y)$, since the induced homomorphism $f^{-1}\mathcal O_Y\to \mathcal O_X$ preserves $\mathbb Z_2$-grading and the odd part of $f^{-1}\mathcal O_Y$ is trivial, this map factors through the $\mathbb Z_2$-invariant of $\mathcal O_X$, hence there is morphism $S(X)\to Y$ given by the same map on topological spaces as $f$ but send $f^{-1}\mathcal O_Y$ to $\mathcal O_X^{+}$. The converse is obvious. Hence we see that $C$ is left adjoint to $S$.\\
	
	Suppose that there is a morphism $g:S(Y)\to X$, since the induced homomorphism $ g^{-1}\mathcal O_X\to \mathcal O_Y$ preserves $\mathbb Z_2$ grading and the odd part of $\mathcal O_Y$ is trivial, hence $\mathfrak{n}_Y$ is sent to zero and $g$ factors through $X_{\text{\normalfont bos}}$. The converse is composing with canonical morphism $X_{\text{\normalfont bos}}\hookrightarrow X$. Hence we see that $S$ is left adjoint to $B$.
\end{proof}

\begin{definition}[{\bf Split and Projected Superspace}]\label{Defn_Superspace_Split and Projected}
	A superspace is called \textbf{split} if $\mathcal O_X$ is $\mathbb Z_{\ge 0}$ graded, i.e. $$\mathcal O_X=\bigoplus_{i\ge 0}\mathcal O_X[i],$$such that $\mathcal O_X$ is generated by $\mathcal O_X[0]$ and $\mathcal O_X[1]$, and the $\mathbb Z_2$-grading is induced from the $\mathbb Z_{\ge 0}$ grading. Note that this implies that $\mathfrak{n}_X=\oplus_{i>0}\mathcal O_X[i]$ and $\mathcal O_{X_{\text{\normalfont bos}}}\cong \mathcal O_X[0]$.\\
	
	A superspace is called \textbf{projected} if there is a morphism $\pi:X\to X_{\text{\normalfont bos}}$ of superspaces such that the following diagram commutes{\normalfont :}
	\begin{center}
		\begin{tikzcd}
		X_{\text{\normalfont bos}}\arrow[r,"i", hook] \arrow[dr,"Id" swap] &X\arrow[d,"\pi"]\\
		&X_{\text{\normalfont bos}}.
		\end{tikzcd}
	\end{center}
	It is easy to see that a split superspace is projected.
\end{definition}
The following is obvious:
\begin{lem}\label{Lemma_Superspace_Projected}
	A superspace $X$ is projected if and only if the induced closed immersion $X_{\text{\normalfont bos}}\hookrightarrow X_{\text{\normalfont ev}}$ has a retract $X_{\text{\normalfont ev}}\to X_{\text{\normalfont bos}}$.
\end{lem}

The following lemma gives equivalent definitions of Noetherian superrings.
\begin{lem}
	Let $R=R^+\oplus R^-$ be a superring, then the following are equivalent:
	\begin{enumerate}
		\item $R$ is Noetherian as an ordinary ring, i.e. its ideals (including non-homogeneous ideals) satisfies Ascending Chain Condition $(ACC)$.
		\item $R$ satifies $ACC$ on ideals.
		\item $\Gr_{\mathfrak{n}}R$ satifies $ACC$ on ideals.
		\item $\mathfrak{n}$ is a finitely-generated ideal and $R_{\bos}=R/\mathfrak{n}$ is Noetherian.
		\item $R^-$ is finitely-generated as $R^+$-module and $R_{\bos}=R/\mathfrak{n}$ is Noetherian.
		\item $R^-$ is finitely-generated as $R^+$-module and $R^+$ is Noetherian.
	\end{enumerate}
\end{lem}
\begin{proof}
	$1\Rightarrow 2$ and $6\Rightarrow 5\Rightarrow 4$ are obvious. $6\Rightarrow 1$: $R$ is a finitely-generated $R^+$-module so every left ideal in $R$ is a finitely-generated $R^+$-module, same for right ideals. $2\Rightarrow 6$: if $I_i\subset I_{i+1}\subset \cdots$ is an ascending chain of ideals of $R^+$, then $I_iR=I_i\oplus I_iR^-$ is an ascending chain of homogeneous ideals of $R$ hence stable, so $\{I_i\}$ is stable and $R^+$ is Noetherian; on the other hand, $\mathfrak{n}=R^-\oplus (R^-)^2$ is a finitely-generated $R$-ideal so its odd part $R^-$ is finitely-generated as an $R^+$-module. $4 \Rightarrow 3$: $4$ implies that $\Gr_{\mathfrak{n}}R$ satisfies $6$ hence $\Gr_{\mathfrak{n}}R$ is Noetherian whence $3$ follows. $3\Rightarrow 6$: $3$ implies that $\Gr_{\mathfrak{n}}R$ satisfies 2 so by what we have shown $\Gr_{(R^-)^2}R^+$ is Noetherian and $\Gr_{(R^-)^2}R^-$ is a finitely-generated $\Gr_{(R^-)^2}R^+$-module, so $(R^-)^{2N-1}/(R^-)^{2N+1}=0$ for some $N\ge 1$, thus $(R^-)^{2N-1}=0$ by Nakayama Lemma (since elements in $(R^-)^2$ are nilpotent), whence $R^+$ is Noetherian \footnote{For a commutative ring $S$ with an ideal $I$ such that $I^2=0$, if $\Gr_I S=S/I\oplus I$ is Noetherian then so is $S$, since for every ideal $J$ of $S$, $J/J\cap I\oplus J\cap I$ is finitely-generated $\Gr _I S$-module, so $J/J\cap I$ and $J\cap I$ are finitely-generated $S$ modules, thus $J$ is a finitely-generated ideal.} and $R^-$ is a finitely-generated $R^+$-module \footnote{Similarly, for a commutative ring $S$ with an ideal $I$ such that $I^2=0$, if $S$-module $M$ satisfies that $\Gr_I M=M/IM\oplus IM$ is a finitely-generated $\Gr_I S$-module, then $M$ is a finitely-generated $S$-module.}.
\end{proof}

Analogous to classical case, there is a notion of depth of modules in terms of $\Ext$ groups:

\begin{definition}[{\bf Depth of Module}]
Let $A$ be a local Noetherian superring with residue field $k$, $M$ is a finitely-generated $A$ module, define $\dep _A(M)$ to be smallest $n\in \mbb N$ such that $\Ext^n_A(k,M)$ is nontrivial.
\end{definition}

In fact this equals to the depth of $M$ regarded as an $A^+$-module
\begin{lem}\label{Equivalent_Defn_of_Depth}
$\dep_{A^+}(M)=\dep _A(M)$. Consequently, the depth of $M$ equals to the length of maximal regular sequence of $M$.
\end{lem}
\begin{proof}
First of all, for any $A$-module $N$ with finite length, $\dep _A(M)$ is the smallest integer $n$ such that $\Ext^n_A(N,M)$ is nontrivial. This can be seen from induction on the length of $N$ and the long exact sequence of $\Ext$ groups. Note that $\Tor ^{A^+}_p\left(k,A\right)$ has finite length for all $p$. Then it follows immediately from the spectral sequence
\begin{equation*}
	\Ext ^q_{A}\left(\Tor ^{A^+}_p\left(k,A\right),M\right)\Longrightarrow _p \Ext ^{p+q} _{A^+}\left(k,M\right),
\end{equation*}
that $\dep _{A}(M)=\dep _{A^+}(M)$.
\end{proof}

\begin{definition}[{\bf Cohen-Macaulay Module}]
Let $A$ be a local Noetherian superring, $M$ is a finitely-generated $A$ module. $M$ is called Cohen-Macaulay if $\dep _A(M)$ equals to the dimension of $\Supp(M)$\footnote{$\Supp(M)$ denote the support of the module $M$.}.
\end{definition}

Obviously we have the inequality:
\begin{align*}
\dep_A(M)&=\min \{\dep_{A^+}(M^+),\dep_{A^+}(M^-)\}\\
&\le\max \{\dim \Supp (M^+),\dim \Supp (M^-)\}=\dim \Supp (M),
\end{align*}
which implies the following
\begin{lem}\label{Equivalent_Defn_of_CM}
$M$ is Cohen-Macaulay if and only if both $M^+$ and $M^-$ are Cohen-Macaulay $A^+$ modules and either they have the same dimension of supports or one of them is zero. In particular, ring $A$ itself is Cohen-Macaulay if and only if $A^+$ is a Cohen-Macaulay ring and either $A^-$ is a Cohen-Macaulay module for $A^+$ with $\dim\Supp (A^-)=\dim \Spec A$ or $A^-$ is trivial.
\end{lem}

\begin{definition}[{\bf Superscheme}]
	A superscheme is a superspace $(X,\mathcal O_X)$ such that $(X,\mathcal O_X^+)$ is a scheme and $\mathcal O_X^-$ is a quasi-coherent $\mathcal O_X^+$-module. This implies that $\mathfrak{n}_X/\mathfrak{n}_X^2$, which is isomorphic to $\mathcal O_X^-/(\mathcal O_X^-)^3$, is a quasi-coherent $\mathcal O_{X_{\text{\normalfont bos}}}$-sheaf. A superscheme is called Noetherian if $(X,\mathcal O_X^+)$ is Noetherian and $\mathcal O_X^-$ is a coherent $\mathcal O_X^+$ sheaf. This implies that $(\mathcal O_X^-)^N$ and $\mathfrak{n}_X^{N}$ vanish for large $N$. We say that a superscheme $(X,\mcal O_X)$ is affine if its bosonic quotient $(X,\mcal O_X^+)$ is affine \footnote{By Chevalley's Criterion, this is equivalent to its bosonic truncation $(X,\mcal O_X/\mathfrak{n}_X)$ being affine.}.
\end{definition}

\begin{example}
	The basic example is the super affine space $\mathbb A^{p|q}$, which has the same topological space with $\mathbb A^{p}$, but the structure sheaf
	\begin{equation*}
	\mathcal O_{\mathbb A^{p|q}}=\bigoplus_{i=0}^q\bigwedge^i\mathcal O_{\mathbb A^{p}}^q,
	\end{equation*}
	endowed with the natural $\mathbb Z_2$-grading.
\end{example}

\begin{definition}[{\bf Quasicoherent Sheaves}]
	An $\mathcal O_X$-module $\mathcal F$ on a superscheme $X$ is said to be quasicohenrent if there is an open covering of $X$ such that for each open $U$ in the covering family, $\mathcal F$ is the the cokernel of an \textbf{even} homomorphism$$\mathcal O_U^{K|L}\to \mathcal O_U^{I|J},$$where $I,J,K,L$ are index sets, and $\mathcal O_U^{I|J}:=\mathcal O_U^I\oplus \prod\mathcal O_U^J$. Moreover $\mathcal F$ is called finitely-generated (respectively finite presented) if $I$ and $J$ are finite (respectively all index sets are finite) for all $U$. If $X$ is Noetherian, then we say that $\mathcal F$ is coherent if it is finitely-generated.
\end{definition}
Similar to the classical case, we have the following lemma on the extension of coherent sheaves:
\begin{lem}
	Let $S$ be a Noetherian superscheme, $U\subset X$ is open and $E\in \Coh(\mathcal O_U)$, then $\exists F\in \Coh(\mathcal O_S)$ such that $E=F|_U$.
\end{lem}
\begin{proof}
	First consider the case where $S = \Spec A$ is affine, and let $j:U \to S$ denote the inclusion. The quasi-coherent sheaf $j_*(E)$ corresponds to the $A$-module $M = \Gamma(U,E)$. Given any $u\in U$, there exist finitely many elements $e_1,\cdots,e_n \in M$ which generate the fiber $E(u)$ regarded as a super-vector space over the residue field $k(u)$. By Nakayama Lemma these elements will generate the stalks of $E$ in an open neighbourhood of $u$ in $U$. Therefore by the Noetherian hypothesis, there exist finitely many elements $f_1,\cdots,f_N \in M$ which generate the stalk of $E$ at each point of $U$. If $N\subset M$ is the submodule generated by these elements, then $F=N^{\sim}$ is a coherent prolongation of $E$ to $S$, proving the result in the affine case. In the general case, we can use the Noetherian hypothesis to assume that there is a maximal $V$ on which $E$ can be extended, if $V\neq S$ then take an open affine $V'$ not contained in $V$ and extend $F_V$ to $V\cup V'$, contradicting the maximality of $V$.
\end{proof}

\begin{definition}
	Given a morphism $f:X\to Y$ between Noetherian superschemes, we say that
	\begin{enumerate}
		\item $f$ has property $\mathbf P$ if $f_{\text{\normalfont bos}}:X_{\text{\normalfont bos}}\to Y_{\text{\normalfont bos}}$ has property $\mathbf P$, where $\mathbf P$ can be finite type, separated, proper, finite.
		\item $f$ is flat if $\mathcal O_X$ is a flat $f^{-1}\mathcal O_Y$ module.
		\item $f$ is unramified if $f$ is of finite type and for any geometric point $y$ of $Y$, the fiber $X_y$ is bosonic and \' etale.
		\item $f$ is \' etale if $f$ is flat and unramified.
		\item $f$ is smooth of relative dimension $(p|q)$ if locally on $X$ there is an \' etale morphism to $Y\times \mathbb A^{p|q}$ whose composition with projection $Y\times \mathbb A^{p|q}\to Y$ is $f$.
	\end{enumerate}
\end{definition}

\begin{lem}[Local Criteria of Flatness]\label{Local Criteria of Flatness}
	Let $A$ be a superring, $I$ an ideal of $A$ and $M$ an $A$-module. Assume that either
	\begin{enumerate}
		\item[$(a)$] $I$ is nilpotent, or
		\item[$(b)$] $A$ is local Noetherian, $B$ is a local Noethrian $A$-superalgebra, and $M$ is a finitely-generated $B$-module.
	\end{enumerate}
	Then the following are equivalent
	\begin{enumerate}
		\item[$(1)$] $M$ is $A$-flat.
		\item[$(2)$] $\Tor_1^A(N,M)=0$ for all $A/I$-module $N$.
		\item[$(3)$] $M/IM$ is $A/I$-flat and $I\otimes M\overset{\sim}{\rightarrow} IM$ by natural map.
		\item[$(3')$] $M/IM$ is $A/I$-flat and $\Tor_1^A(A/I,M)=0$.
		\item[$(4)$] The natuarl map $$\Gr_IA\otimes_{A/I}M/IM\to \Gr_IM,$$ is isomorphism.
		\item[$(5)$] $M/I^nM$ is $A/I^n$-flat, for all $n\ge 0$.
	\end{enumerate}
\end{lem}
The proof is the same as in the classical case, see 20.C of \cite{Matsumura1989}, except that the classical Artin-Rees Lemma used in the proof of \textit{loc.cit.} should be enhanced to its super analog proven below.
\begin{lem}[Artin-Rees]\label{Artin-Rees}
	Let $I$ be an ideal of Noetherian superring $A$, $M$ a finitely-generated $A$-module and $N$ a submodule of $M$. Then there exists a $k\ge 1$ such that
	$$I^nM\cap N=I^{n-k}(I^kM\cap N),$$for all $n\ge k$.
\end{lem}
\begin{proof}
	Apparently $I^nM\cap N\supset I^{n-k}(I^kM\cap N)$, so it suffices to prove the reverse inclusion. Define
	\begin{equation*}
	\widetilde{M}^+\equiv I^+M^++I^-M^-, \qquad \widetilde{M}^-\equiv I^+M^-+I^-M^+,
	\end{equation*}
	then $I^nM$ decomposes into even part $(I^+)^{n-1}\widetilde{M}^+$ and odd part $(I^+)^{n-1}\widetilde{M}^-$, so we have $(I^nM\cap N)^+=(I^+)^{n-1}\widetilde{M}^+\cap N^+$. Using the classical Artin-Rees Lemma, there exists a $k$ such that $$(I^+)^{n-1}\widetilde{M}^+\cap N^+=(I^+)^{n-k}((I^+)^{k-1}\widetilde{M}^+\cap N^+)=(I^+)^{n-k}((I^kM)^+\cap N^+),$$for all $n\ge k$. The last term is contained in $I^{n-k}(I^kM\cap N)$, whence the reverse inclusion holds.
\end{proof}

For the rest of this section, we will only consider Noetherian superschemes unless otherwise specified.

\begin{lem}\label{Lemma_Flat_Bosonic}
	Suppose that $f:X\to Y$ is flat and $f^{-1}\mathfrak n_Y$ generates $\mathfrak n_X$, then the commutative diagram 
	\begin{center}
		\begin{tikzcd}
		X\arrow[r, "f"] \arrow[d] & Y\arrow[d]\\
		X_{\text{\normalfont ev}} \arrow[r, " f_{\text{\normalfont ev}}"] &Y_{\text{\normalfont ev}} .
		\end{tikzcd}
	\end{center}
	is Cartesian, and $f_{\text{\normalfont ev}}$ is flat.
\end{lem}

\begin{proof}
	Since $f^{-1}\mathfrak n_Y$ generates $\mathfrak n_X$, the local flatness implies that the natural map is isomorphism:
	$$\mathcal O_{X_{\text{bos}}}\otimes _{f^{-1}\mathcal O_{Y_{\text{bos}}}}f^{-1}\Gr_{\mathfrak{n}_Y}\mathcal O_{Y}\cong \Gr_{\mathfrak{n}_X}\mathcal O_{X}.$$
	Hence the homomorphism between bosonic graded alegebras
	\begin{equation*}
	\mathcal O_{X_{\text{bos}}}\otimes _{f^{-1}\mathcal O_{Y_{\text{bos}}}}\bigoplus _{i\ge 0}f^{-1}\Gr_{\mathfrak{n}_Y}^{2i}\mathcal O_{Y}\to \bigoplus _{i\ge 0}\Gr_{\mathfrak{n}_X}^{2i}\mathcal O_{X},
	\end{equation*}
	is an isomorphism. So $f_{\text{ev}}$ is flat by local flatness again. It is straightforward from the isomorphism between graded algebras that the induced homomorphism $f_{\text{\normalfont ev}}^*\mathcal O_Y\to \mathcal O_X$ is an isomorphism of sheaves, i.e. $X\cong X_{\text{\normalfont ev}}\times _{Y_{\text{\normalfont ev}}} Y$.
\end{proof}

\begin{lem}\label{Lemma_Equiv_Defn_of_Etale}
	$f:X\to Y$ is \' etale if and only if $f_{\text{\normalfont ev}}:X_{\text{\normalfont ev}}\to Y_{\text{\normalfont ev}}$ is \' etale and the commutative diagram 
	\begin{center}
		\begin{tikzcd}
		X\arrow[r, "f"] \arrow[d] & Y\arrow[d]\\
		X_{\text{\normalfont ev}} \arrow[r, " f_{\text{\normalfont ev}}"] &Y_{\text{\normalfont ev}} .
		\end{tikzcd}
	\end{center}
	is Cartesian.
\end{lem}

\begin{proof}
	If $f$ is \' etale, then $X\times_{Y}Y_{\text{bos}}$ is bosonic, since its nilpotent ideal $\mathfrak{n}$ generated by fermionic elements is zero after restriction to fibers, thus it is identically zero by Nakayama Lemma. So we see that $f^{-1}\mathfrak n_Y$ generates $\mathfrak n_X$. By Lemma \ref{Lemma_Flat_Bosonic}, $f_{\text{ev}}$ is flat and the diagram is Cartesian. It follows that $f_{\text{ev}}$ and $f$ have the same fibers hence $f_{\text{ev}}$ is \' etale. The converse is obvious.
\end{proof}

\begin{lem}\label{Lemma_Superspace_EtaleLiftingSplitProjected}
	Suppose that $f:X\to Y$ is \' etale. If $Y$ is split {\normalfont (}resp. projected{\normalfont)}, then so is $X$.
\end{lem}

\begin{proof}
	We use the charaterization of \' etale morphism in Lemma \ref{Lemma_Equiv_Defn_of_Etale}. First of all, $f$ being \' etale implies that $\mathcal O_X^-$ is generated by $f^{-1}\mathcal O_Y^-$, hence there is an isomorphism
	\begin{equation*}
	X_{\text{\normalfont bos}}\cong X_{\text{\normalfont ev}}\times _{Y_{\text{\normalfont ev}}}Y_{\text{\normalfont bos}}.
	\end{equation*}
	Assume that $Y$ is projected, then there is a retract $\pi:Y_{\text{\normalfont ev}}\to Y_{\text{\normalfont bos}}$. To show that $X$ is projected, it suffices to show that there is a retract $X_{\text{\normalfont ev}}\to X_{\text{\normalfont bos}}$. This follows from $f_{\text{\normalfont ev}}$ being \' etale hence there is a unique dotted arrow making following diagram commutes
	\begin{center}
		\begin{tikzcd}
		X_{\text{\normalfont bos}}\arrow[r, equal] \arrow[d, "f_{\text{\normalfont bos}}" swap] &X_{\text{\normalfont bos}}\arrow[d, hook]\\
		Y_{\text{\normalfont bos}} &X_{\text{\normalfont ev}} \arrow[l, "\pi\circ f_{\text{\normalfont ev}}"]\arrow[ul, dotted, "\exists !" description].
		\end{tikzcd}
	\end{center}
	By uniqueness, the composition of embedding $X_{\text{\normalfont bos}}\hookrightarrow X_{\text{\normalfont ev}}$ with the dotted arrow is identity, hence $X$ is projected.\\
	
	Assume that $Y$ is split, then $Y$ is projected hence $X$ is also projected and the previous argument shows that $\mathcal O_{X_{\text{\normalfont ev}}}\cong f_{\text{\normalfont bos}}^*\mathcal O_{Y_{\text{\normalfont ev}}}$ as a sheaf of $O_{X_{\text{\normalfont bos}}}$-algebras. This in turn implies that $\mathcal O_{X}\cong f_{\text{\normalfont bos}}^*\mathcal O_{Y}$ whence the splitness of $X$ follows.
\end{proof}

\begin{lem}\label{Lemma_Nilpotent_Lift_of_Etale_Morphism}
	Suppose that $f:X\to Y$ is \' etale. If $Y\hookrightarrow Y'$ is a nilpotent thickenning, then there exists a unique lifting of \' etale morphism $f':X'\to Y'$ such that its restriction to $Y$ is $f$.
\end{lem}

\begin{proof}
	Consider bosonic quotient of $X,Y,Y'$ and fit them into
	\begin{center}
		\begin{tikzcd}
		X_{\text{\normalfont ev}}\arrow[d, "f_{\text{\normalfont ev}}"] & \\
		Y_{\text{\normalfont ev}} \arrow[r, hook] &Y'_{\text{\normalfont ev}} .
		\end{tikzcd}
	\end{center}
	Now the classical nilpotent lifting result for \' etale morphisms shows that there exists a unique $f'_{\text{\normalfont ev}}:X'_{\text{\normalfont ev}}\to Y'_{\text{\normalfont ev}}$ which makes the diagram Cartesian
	\begin{center}
		\begin{tikzcd}
		X_{\text{\normalfont ev}}\arrow[d, "f_{\text{\normalfont ev}}"] \arrow[r, hook] &X'_{\text{\normalfont ev}} \arrow[d, "f'_{\text{\normalfont ev}}"]\\
		Y_{\text{\normalfont ev}} \arrow[r, hook] &Y'_{\text{\normalfont ev}} .
		\end{tikzcd}
	\end{center}
	Then $X'=X'_{\text{\normalfont ev}}\times_{ Y'_{\text{\normalfont ev}}}Y'$ is a lifting and the uniqueness is obvious.
\end{proof}

\begin{remark}
	There is a notion of {\normalfont (}Artin, Deligne-Mumford{\normalfont)} superstack which is defined in a similar flavor of usual stack {\normalfont \cite{CodogniViviani201706}}, but the exact definition is not important here and we will skip it. Note that all statements remain true if we relax the the conditions to allow DM superstacks {\normalfont \cite{CodogniViviani201706}}.
\end{remark}

\subsubsection{Superschemes over Inductive Limit of Superrings}

Here we include some results for superschemes/sheaves/morphisms under the inverse limit of base superscheme. The setup is similar to the classical algebaric geometry: Let $(I,\le)$ be a directed set, with a distinguished element $0\in I$, suppose that $\{A_{i}\}_{i\in I}$ is an directed system of superrings, with inductive limit $A$.

\begin{lem}\label{Lemma_Limit_of_Morphisms}
	If $X_0, Y_0$ are finite presented surperschemes over $\Spec A_0$, let $X_i,Y_i$ be the base change of $X_0,Y_0$ to $\Spec A_i$, and $X,Y$ be the base change of $X_0,Y_0$ to $\Spec A$, then the natural map 
	\begin{equation*}
	\varphi:\lim_{\substack{\longrightarrow\\i}}\Map_{\Spec A_i}(X_i,Y_i)\to \Map_{\Spec A}(X,Y),
	\end{equation*}
	is an equivalence of sets.
\end{lem}

\begin{proof}
	{\small\bf $\varphi$ is injective:} suppose that two morphisms $f_0,g_0:X_0\to Y_0$ has the same limit $f=g:X\to Y$, consider the kernel of two morphisms $\j_0:\Ker (f_0,g_0)\to X_0$, which is a finite presented immersion, becomes an isomorphism after passing to the limit, hence $j_i:\Ker (f_i,g_i)\to X_i$ is a homeomorphism for some index $i\in I$. Since the ideal defining $\j_i$ is finitely-generated, so locally we can take finitely many generators and they vanishes after taking limit, thus vanishing for some index $i'\in I$, consequently $j_{i'}:\Ker (f_{i'},g_{i'})\to X_{i'}$ is an isomorphism and $f_{i'}=g_{i'}$.
	
	\newl {\bf\small $\varphi$ is surjective:} suppose that $f:X\to Y$ is a morphism, we cover $Y_0$ by finite many affine open sub superschemes $\{U_{\lambda 0}\to Y_0\}$, and let $U_{\lambda }$ be the base change of $U_{\lambda 0}$ to $\Spec A$, then cover $X$ by finite many open affine sub superschemes $\{V_{\lambda \mu}\to X\}$ such that $f(V_{\lambda \mu})\subset U_{\lambda }$. According to Theorem 10.57 of \cite{GortzWedhorn2010}, we can assume that there is an index $i\in I$  and an open covering $\{V_{\lambda \mu i}\to X_i\}$ whose base change to $X$ is $\{V_{\lambda \mu}\to X\}$. Refine $V_{\lambda \mu i}$ such that each open sub superscheme is affine. Let us still use the same notation to denote this refined cover, then $\mcal O_X(V_{\lambda \mu})$ is a finite presented $\mcal O_Y(U_{\lambda })$ superalgebra, then there exists an index $i'$ such that $\mcal O_X(V_{\lambda \mu i'})$ is a finite presented $\mcal O_Y(U_{\lambda i'})$ superalgebra, i.e. we found $f_{\lambda \mu i'}:V_{\lambda \mu i'}\to Y_{i'}$ whose base change to $Y$ is $f$ restricted on $V_{\lambda \mu}$. Finally, increase the index $i'$ such that $f_{\lambda \mu i'}$ agrees on the overlap (because of the injectivity of $\varphi$). Hence $f_{\lambda \mu i'}$ glues to a morphism $f_{i'}:X_{i'}\to Y_{i'}$
\end{proof}

\begin{lem}\label{Lemma_Limit_of_Superschemes}
	If $X$ is a finite presented surperscheme over $\Spec A$, then there exists a finite presented superscheme $X_i$ over $\Spec A_i$ such that its base change to $\Spec A$ is isomorphic to $X$.
\end{lem}

\begin{proof}
	Cover $X$ by open affine sub superschemes $\{U_{\lambda}\to X\}$, then each $\mcal O_X(U_{\lambda})$ is a finite presented $A$ superalgebra, thus $\exists i\in I$ such that there exists finite presented $A_i$ superalgebras $B_{\lambda i}$ whose base change to $A$ is $\mcal O_X(U_{\lambda})$. Let $U_{\lambda}$ be $\Spec B_{\lambda i}$. Now increase the index $i$ such that there exists open sub superschemes $V_{\lambda \mu i}\subset U_{\lambda}$ whose base change to $U_{\lambda}$ is $U_{\mu}\cap U_{\lambda}$, together with isomorphism $\psi _{\lambda \mu i}:V_{\lambda \mu i}\cong V_{\mu \lambda i}$ satisfying cocycle conditions whose base change to $\Spec A$ is the gluing morphism for the cover $U_{\lambda}$. The existence of such open sub superschemes follows from Theorem 10.57 of \cite{GortzWedhorn2010} and the existence of such isomorphisms follows from Lemma \ref{Lemma_Limit_of_Morphisms}. Hence $V_{\lambda \mu i}$ glues to a superscheme $X_i$ over $\Spec A_i$, whose base change to $\Spec A$ is isomorphic to $X$.
\end{proof}

In summary, the natural functor between categories of finite presented schemes$$\lim_{\substack{\longrightarrow\\i}}\Sch_{\text{\normalfont fp}}({S_i})\to \Sch_{\text{\normalfont fp}}({S}),$$ is an equivalence of categories. We also have similar result for finite presented sheaves, whose proof is similar to the bosonic case, which will be omitted:
\begin{lem}\label{Lemma_Limit_of_Map_Between_Sheaves}
	Let $(I,\le)$ be a directed set, suppose that $\{S_{i}\}_{i\in I}$ is an directed system of qcqs\footnote{This is stands for quasi-compact quasi-separated.} superschemes with affine transition morphisms, with inverse limit $S$, then the natural functor between categories of finite presented sheaves $$\lim_{\substack{\longrightarrow\\i}}\QCoh_{\text{\normalfont fp}}(\mcal O_{S_i})\to \QCoh_{\text{\normalfont fp}}(\mcal O_{S}),$$ is an equivalence of categories.
\end{lem}

For later application, we also need a result about flatness:
\begin{proposition}\label{Limit_Flatness}
	If $X$ is a finite presented surperscheme over $\Spec A$, $\mcal F\in \QCoh_{\text{\normalfont fp}}(\mcal O_X)$, $X$ is the inverse limit of a system of finite presented superscheme $X_i$ over $\Spec A_i$ and $\mcal F$ is the inverse limit of $\mcal F_i\in\QCoh_{\text{\normalfont fp}}(\mcal O_{X_i})$, then $\mcal F$ is flat over $\Spec A$ if and only if $\exists i\in I$ such that $\mcal F_i$ is flat over $\Spec A_i$.
\end{proposition}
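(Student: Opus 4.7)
The plan is to reduce the statement to the classical (non-super) spreading-out theorem for flatness (EGA IV, Th\'eor\`eme 11.2.6.1, or the Stacks Project tag 05LY), via the observation that super-flatness of an $A$-supermodule is equivalent to ordinary flatness of its underlying $A$-module.

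The ``if'' direction is immediate: flatness is preserved under base change, and $\mcal F = \mcal F_i \otimes_{\mcal O_{X_i}} \mcal O_X$ (as follows from Lemma \ref{Lemma_Limit_of_Map_Between_Sheaves}), so if $\mcal F_i$ is $\Spec A_i$-flat then $\mcal F$ is $\Spec A$-flat. For the ``only if'' direction, I would first reduce to the affine case. Since $X$ is finitely presented over $\Spec A$, it is in particular quasi-compact, so it can be covered by finitely many open affines $\{U^\alpha = \Spec B^\alpha\}$. By Lemma \ref{Lemma_Limit_of_Superschemes} and Lemma \ref{Lemma_Limit_of_Map_Between_Sheaves}, after increasing the starting index $0$ we may assume that these opens descend to open affines $U^\alpha_i = \Spec B^\alpha_i \subset X_i$ with $B^\alpha = B^\alpha_0 \otimes_{A_0} A$, and that $\mcal F|_{U^\alpha}$ descends to a finitely presented sheaf corresponding to a finitely presented $B^\alpha_0$-supermodule $M^\alpha_0$. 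Since flatness is local on $X$, it suffices to show that each $M^\alpha_i \equiv M^\alpha_0 \otimes_{A_0} A_i$ becomes $A_i$-flat for some common $i$ (and we can take a common $i$ since the index set is directed and the cover is finite). Hence we may assume $X = \Spec B$, $\mcal F = \widetilde{M}$ with $M$ a finitely presented $B$-supermodule flat over $A$, and $B = \varinjlim_i B_i$, $M = \varinjlim_i M_i$.

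Next, I would carry out the super-to-ordinary translation. The filtered colimit of the superrings $A_i$ as superrings coincides with the filtered colimit of their underlying ordinary rings, since the transition maps preserve the $\mathbb{Z}_2$-grading; the same holds for $B$ and $M$. The crucial remark is that an $A$-supermodule $M$ is flat (in the super sense, i.e.\ $-\otimes_A M$ is exact on $A$-supermodules) if and only if $M$ is flat as an ordinary $A$-module: a short exact sequence of ordinary $A$-modules can be turned into a short exact sequence of $A$-supermodules by grading doubling (regarding $N$ as $N\oplus N[1]$ or simply $N\oplus 0$ and keeping track of both parities), and conversely any short exact sequence of supermodules is in particular one of ordinary modules. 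Equivalently, super-$\Tor$ agrees with ordinary $\Tor$ on the underlying $A$-modules, so the vanishing of $\Tor_1^A(-, M)$ is the same statement in both contexts. With this bridge crossed, the hypotheses of the classical spreading-out theorem for flatness are satisfied (the $A_i$ need not be Noetherian for this form of the theorem), and we conclude that there exists $i \in I$ such that $M_i$ is $A_i$-flat, whence $\mcal F_i$ is flat over $\Spec A_i$.

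The main obstacle is the super-to-ordinary translation: one must verify that the super-flatness appearing naturally in this paper's category of supermodules is the same condition as flatness of the underlying module over the underlying ring. Once this is done cleanly (which is essentially a formal verification using the compatibility of the $\mathbb{Z}_2$-decomposition with $\otimes$ and with filtered colimits), nothing specific to supergeometry remains, and the result is reduced to its well-known classical counterpart. A secondary technical point is keeping track of a single index $i$ that works simultaneously for all pieces $U^\alpha$ of the affine cover; this is harmless because the cover is finite and the indexing set is directed.
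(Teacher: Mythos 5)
Your reduction to the classical commutative case has a genuine gap, and it occurs precisely where you declare the bridge crossed. You are right that super-flatness of an $A$-supermodule $M$ (exactness of $-\otimes_A M$ on supermodules) coincides with flatness of $M$ viewed as an ordinary module over the ring underlying $A$: the grading-doubling trick, embedding any ordinary $A$-module as a direct summand of a graded one via $N\oplus N[1]$, identifies the two $\Tor_1$-vanishing conditions. (The alternative $N\oplus 0$ you mention does not work, since $A^-\cdot N$ would have to vanish, but $N\oplus N[1]$ is enough.) The problem is what comes next: the ring underlying a superring is in general \emph{noncommutative} --- odd elements anticommute --- and EGA IV, Th\'eor\`eme 11.2.6, as well as Stacks Project tag 05LY, are statements about commutative rings. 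Their proofs rest on Noetherian approximation, generic flatness, Artin--Rees, and the local criterion for flatness, all stated and established in the commutative setting, and none of these transports across the noncommutativity barrier by a formal ``bridge.'' The claim that ``nothing specific to supergeometry remains'' is therefore not correct.

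You might instead try descending to the commutative ring $A^+$, but that also fails: super-flatness of $M$ over $A$ does not imply flatness of $M$ over $A^+$. In the base-change spectral sequence
\begin{equation*}
\Tor^A_p\!\left(\Tor^{A^+}_q(k,A),\,M\right)\ \Longrightarrow\ \Tor^{A^+}_{p+q}(k,M),
\end{equation*}
$A$-flatness of $M$ kills only the $q=0$ column, while the term $\Tor^{A^+}_1(k,A)\otimes_A M$ survives unless $A$ is itself flat over $A^+$, which it is not in general.

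The paper avoids both traps by not reducing to a black box: it rederives the ingredients of EGA IV 11.2.6 in the super setting. Step I constructs a Noetherian sub-superring of $A_0$ as the image of a free polynomial superring $\mathbb Z[X_i|\Theta_j]$; Step II is purely topological and is unchanged; Step III establishes the finiteness of $\Tor_1$ by building a finite free resolution with \emph{even} connecting homomorphisms, which exists by the super Noetherian property. The needed super-analogs of the classical tools --- the Artin--Rees Lemma (Lemma \ref{Artin-Rees}), the Local Criterion of Flatness (Lemma \ref{Local Criteria of Flatness}), and the surjectivity of the canonical $\Tor_1$ map (Lemma \ref{Lemma_Canonical_Map_of_Tor_Is_Surjective}) --- are each proven separately, and the spreading-out argument is then run with these in hand. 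That is the route you would need to follow unless you first supply a noncommutative spreading-out theorem, which is not available off the shelf.
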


\begin{proof}
	This is the super analog of Th\' eor\` eme 11.2.6 of \cite{GrothendieckEGAIV1967}. Let's explain why the proof of \textit{loc.cit.} Th\' eor\` eme 11.2.6 works for superschemes as well. In the step I, the key ingredient is to find a Noetherian sub superrings of $A_0$, denoted by $A_0'$ such that there exists $(X_0',\mcal F_0')$ on $\Spec A_0'$ whose base change to $\Spec A_0$ is $(X_0,\mcal F_0)$, this can be done because we can finite many even elements $\{x_i\}_{i=1}^m$ in $A_0^+$ and finite many odd elements $\{\theta_j\}_{j=1}^n$ in $A_0^-$, then we define a homomorphism $$\mathbb Z[X_i|\Theta_j]\to A_0,$$by sending $X_i$ to $x_i$ and $\Theta_j$ to $\theta_j$, its image in $A_0$ is a Noetherian superring. Increase the size of this set and we can apply Lemma \ref{Lemma_Limit_of_Superschemes} and \ref{Lemma_Limit_of_Map_Between_Sheaves}. The step II is a pure topological statement so it works for superschemes as well. The key ingredient of step III is showing that $\Tor ^{A_0}_{1}(A_0/\mathfrak{m}_0,M_0)$ is finitely-generated $B_0$-module in the setup that $A_0$ is a Noetherian superring, $B_0$ is a finite type $A_0$ superalgebra and $M_0$ is a finitely-generated $B_0$-module. This is because we can take a resolution of $A_0/\mathfrak{m}_0$ by finite free $A_0$ modules, with even connecting homomorphisms, by the Noetherian property of $A_0$. Now the same argument in step III shows that the canonical homomorphism
	\begin{equation*}
	\Tor ^{A_0}_{1}(A_0/\mathfrak{m}_0,M_0)\to \Tor ^{A_i}_{1}(A_i/\mathfrak{m}_0A_i,M_i),
	\end{equation*}
	is zero for some $i\in I$, then the flatness of $M_i$ over $B_i$ follows from the super analog of \textit{loc.cit.} Lemme 11.2.4 (stated in the Lemma \ref{Lemma_Canonical_Map_of_Tor_Is_Surjective}) and the local criterion for flatness.
\end{proof}

\begin{lem}\label{Lemma_Canonical_Map_of_Tor_Is_Surjective}
	Let $A$ be a superring, $C$ is an $A$ superalgebra, $M$ is an $A$ module, $N$ is a $C$ module and $A'$ is an $A$ algebra. Let $C'=C\otimes_A A'$, $M'=M\otimes_A A'$ and $N'=N\otimes_A A'$, then the canonical homomorphism
	\begin{equation*}
	\Tor ^{A}_{1}(M,N)\otimes_A A'\to \Tor ^{A'}_{1}(M', N'),
	\end{equation*}
	is surjective.
\end{lem}
The proof is formally the same as the proof of Lemme 11.2.4 of \cite{GrothendieckEGAIV1967}, except that the resolutions $L_1\to L_0\to M\to 0$ and $L'_1\to L'_0\to M'\to 0$ in the \textit{loc.cit.} are required to have even connecting homomorphisms.

\subsubsection{Affineness Criteria}
Here are super analog of Serre's and Chevalley's Criteria for affineness.

\begin{proposition}[Serre's Criterion]\label{Serre's Criterion}
	Let $X$ be a quasi-compact superscheme. Then the following statements are equivalent:
	\begin{enumerate}
		\item[(1)] $X$ is affine.
		\item[(2)] There exists a family of even elements $f_i\in A\equiv \Gamma(X, \mcal O_X)$ such that the open sub superscheme $X_{f_i}:=\{x\in X | f_i(x) \neq 0\}$ of $X$ is affine for all $i$ and the ideal generated by the $f_i$ in $A$ is the unit ideal.
		\item[(3)] $\H^1(X,\mcal F)=0$ for all $\mcal F\in \QCoh(\mcal O_X)$.
		\item[(4)] $\H^1(X,\mcal I)=0$ for all quasi-coherent ideal $\mcal I$.
	\end{enumerate}
\end{proposition}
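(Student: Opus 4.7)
The plan is to reduce all four conditions to the classical Serre criterion applied to the bosonic quotient $X_{\text{ev}}=(X,\mcal O_X^+)$, which is an ordinary scheme on the same underlying topological space as $X$ and for which, by the definition of affineness for superschemes recalled earlier in this section, $X$ is affine if and only if $X_{\text{ev}}$ is affine. The central observation is that any quasi-coherent $\mcal O_X$-module $\mcal F$ decomposes canonically as $\mcal F=\mcal F^+\oplus \mcal F^-$, where both summands are quasi-coherent $\mcal O_X^+$-modules, the multiplication $\mcal O_X^-\otimes_{\mcal O_X^+} \mcal O_X^-\to \mcal O_X^+$ endowing $\mcal F^-$ with its $\mcal O_X^+$-action. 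Since $X$ and $X_{\text{ev}}$ share the same topological space and $\mcal F$ is a direct sum of sheaves of abelian groups, one has
\begin{equation*}
\H^i(X,\mcal F)=\H^i(X_{\text{ev}},\mcal F^+)\oplus \H^i(X_{\text{ev}},\mcal F^-),\quad i\geq 0.
\end{equation*}

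First I would handle $(1)\Leftrightarrow (3)$. The implication $(1)\Rightarrow (3)$ is immediate from the cohomology decomposition above together with classical Serre vanishing on the affine scheme $X_{\text{ev}}$. The step $(3)\Rightarrow (4)$ is trivial since ideals are quasi-coherent. The decisive step is $(4)\Rightarrow (1)$: I would show that every quasi-coherent ideal $\mcal J\subseteq \mcal O_{X_{\text{ev}}}=\mcal O_X^+$ lifts to a quasi-coherent $\mcal O_X$-ideal by the construction
\begin{equation*}
\widetilde{\mcal J}\equiv \mcal J\oplus (\mcal O_X^-\cdot \mcal J).
\end{equation*}
One checks the ideal axioms of Definition \ref{Defn_Superring} directly; the only nontrivial verification is $\mcal O_X^-\cdot\widetilde{\mcal J}^-\subseteq \widetilde{\mcal J}^+$, which follows because $(\mcal O_X^-)^2\subseteq \mcal O_X^+$ and $\mcal J$ is already an $\mcal O_X^+$-ideal. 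Hypothesis $(4)$ then gives $\H^1(X,\widetilde{\mcal J})=0$, whence $\H^1(X_{\text{ev}},\mcal J)=0$; since every quasi-coherent ideal of $X_{\text{ev}}$ arises as some $\widetilde{\mcal J}^+$, the classical Serre criterion forces $X_{\text{ev}}$ affine, and hence $X$ affine.

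The equivalence $(1)\Leftrightarrow (2)$ is the easier loop. For $(1)\Rightarrow (2)$, cover $X_{\text{ev}}=\Spec A^+$ (where $A\equiv \Gamma(X,\mcal O_X)$) by finitely many basic affines $D(g_j)$ and lift the $g_j$ to even elements $f_j\in A^+\subseteq A$; since $f_j$ is even, $X_{f_j}=\Spec A[1/f_j]$ is affine, and the covering condition $(g_j)=A^+$ promotes at once to $(f_j)=A$. For $(2)\Rightarrow (1)$, each $f_i$ being even yields $A[1/f_i]^+=A^+[1/f_i]$, so $(X_{\text{ev}})_{f_i}=(X_{f_i})_{\text{ev}}$ is affine; meanwhile $(f_i)=A$ in $A$ forces $(f_i)=A^+$ in $A^+$, and the classical Serre criterion applied to $X_{\text{ev}}$ concludes the proof.

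The hard part will be the ideal-extension step in $(4)\Rightarrow (1)$, since hypothesis $(4)$ provides vanishing only for genuine $\mathbb Z_2$-graded ideals of $\mcal O_X$ and not for ideals on $X_{\text{ev}}$ directly; the content of the argument is that the naive lift $\mcal J\mapsto \widetilde{\mcal J}$ is already a quasi-coherent graded ideal thanks to the relation $(\mcal O_X^-)^2\subseteq \mcal O_X^+$. Once this lift is in hand, the reduction to the classical Serre criterion is mechanical, and none of the remaining implications requires anything beyond careful bookkeeping of the $\mathbb Z_2$-grading.
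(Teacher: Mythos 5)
Your proof is correct and follows essentially the same route as the paper: both reduce to the classical Serre criterion on the bosonic quotient $X_{\text{ev}}=(X,\mcal O_X^+)$ via the even/odd decomposition $\mcal F=\mcal F^+\oplus\mcal F^-$, and the crucial step $(4)\Rightarrow(1)$ uses the same ideal lift $\mcal J\mapsto \mcal J\oplus \mcal O_X^-\mcal J$. The only difference is that you spell out the verification of the graded-ideal axioms for the lift and the surjectivity of $\widetilde{\mcal J}\mapsto\widetilde{\mcal J}^+$, which the paper leaves implicit.
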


\begin{proof}
	Since statement $(2)$ only involves even elements $f_i$ and $f_i$ generate unit ideal in $A$ if and only if they generate unit ideal in $A^+$, the equivalence between $(1)$ and $(2)$ follows from classical Serre's criterion.
	
	\newl $(1)\Longrightarrow (3):$ Write $\mcal F=\mcal F^+\oplus \mcal F^-$ be its even and odd parts, considered as $\mcal O_X^+$ module, then $(X,\mcal O_X^+)$ being affine implies that $\H^1(X,\mcal F^{\pm})=0$. 
	
	\newl $(3)\Longrightarrow (4)$ is trivial, so let us prove $(4)\Longrightarrow (1):$ For any quasi-coherent ideal $\mcal J$ of $\mcal O_X^+$, consider $\mcal I=\mcal J\oplus \mcal J\mcal O_X^-\subset\mcal O_X$, which is a quasi-coherent $\mcal O_X$-ideal, so $\H^1(X,\mcal I)=\H^1(X,\mcal J)\oplus \H^1(X,\mcal J\mcal O_X^-)=0$, hence $\H^1(X,\mcal J)=0$ and by classical Serre's criterion, $X_{\text{ev}}$ is affine.
\end{proof}

\begin{proposition}[Chevalley's Criterion]\label{Chevalley's Criterion}
	Suppose that $f:X\to Y$ is a finite morphism between superschemes, then $X$ is affine if and only if $Y$ is affine.
\end{proposition}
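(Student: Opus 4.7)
The plan is to reduce the super statement to the classical Chevalley Criterion for ordinary schemes via the bosonic quotient. By definition, a superscheme $Z$ is affine precisely when the ordinary scheme $Z_{\text{ev}}=(Z,\mcal O_Z^+)$ is affine; and a finite morphism $f:X\to Y$ of Noetherian superschemes induces a finite morphism $f_{\text{ev}}:X_{\text{ev}}\to Y_{\text{ev}}$ of ordinary schemes, since by finiteness both $\mcal O_X^+$ and $\mcal O_X^-$ are coherent as $f^{-1}\mcal O_Y^+$-modules, hence in particular $\mcal O_X^+$ is a finite $f^{-1}\mcal O_Y^+$-module.

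The direction ``$Y$ affine $\Rightarrow$ $X$ affine'' is immediate: $f_*\mcal O_X$ is a coherent sheaf of $\mcal O_Y$-superalgebras, and the relative $\mathbf{Spec}$ construction realizes $X$ as $\mathbf{Spec}_Y(f_*\mcal O_X)$, which is affine over the affine base $Y$. Equivalently, the classical fact that a finite morphism into an affine scheme has affine source, applied to $f_{\text{ev}}$, gives $X_{\text{ev}}$ affine.

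For the converse, I would apply the super Serre Criterion (Proposition \ref{Serre's Criterion}) just proved: it suffices to show $\H^1(Y,\mcal I)=0$ for every quasi-coherent ideal $\mcal I\subset \mcal O_Y$. Decomposing $\mcal I=\mcal I^+\oplus \mcal I^-$ into its $\mathbb Z_2$-homogeneous parts (both of which are quasi-coherent $\mcal O_Y^+$-modules, hence quasi-coherent $\mcal O_{Y_{\text{ev}}}$-modules), and using that sheaf cohomology depends only on the underlying sheaf of abelian groups on $|Y|=|Y_{\text{ev}}|$, we get $\H^1(Y,\mcal I^{\pm})=\H^1(Y_{\text{ev}},\mcal I^{\pm})$. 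Invoking the classical Chevalley Criterion for the finite (surjective) morphism $f_{\text{ev}}:X_{\text{ev}}\to Y_{\text{ev}}$ of Noetherian ordinary schemes with $X_{\text{ev}}$ affine then forces $\H^1(Y_{\text{ev}},\mcal J)=0$ for every quasi-coherent $\mcal O_{Y_{\text{ev}}}$-module $\mcal J$, which yields the desired vanishing.

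The main subtlety, which will be the principal obstacle, is surjectivity: the classical Chevalley Criterion requires $f_{\text{ev}}$ to be surjective, whereas the hypothesis of the proposition only asserts finiteness. This is either to be read as an implicit hypothesis (surjectivity is the standard setting for Chevalley, and without it the statement fails, as for example $f:\emptyset\to \mathbb P^1$), or else handled by replacing $Y$ with the scheme-theoretic image $f(X)\subset Y$, which is closed, and then observing that $X$ factors through this image as a finite surjective morphism; the question of affineness of $Y$ then decouples into affineness of $f(X)$ (to which the argument above applies) and affineness of the complementary closed subscheme, on which there is nothing to prove.
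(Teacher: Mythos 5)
Your proof is correct, and it takes a genuinely different route from the paper. The paper's proof is one line: since finiteness of a super morphism is \emph{defined} via the bosonic truncation $f_{\text{bos}}:X_{\text{bos}}\to Y_{\text{bos}}$, and affineness of a superscheme is equivalent to affineness of its bosonic truncation (the paper's footnote to the definition of affine superscheme, which itself uses classical Chevalley for the finite surjective nilpotent thickening $X_{\text{bos}}\hookrightarrow X_{\text{ev}}$), the statement reduces immediately to the classical Chevalley criterion for $f_{\text{bos}}$. You instead work with the bosonic quotient $X_{\text{ev}}$, where affineness is defined directly, and spend effort establishing that $f_{\text{ev}}$ is a finite morphism of ordinary schemes; this is true and your module-theoretic argument is sound, but it is not needed under the paper's convention. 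Your appeal to the super Serre criterion in the hard direction is also a detour: once classical Chevalley yields $Y_{\text{ev}}$ affine, $Y$ is affine by definition, and the cohomological unwinding $\H^1(Y,\mcal I^\pm)=\H^1(Y_{\text{ev}},\mcal I^\pm)=0$ is merely re-deriving the implication ``$Y_{\text{ev}}$ affine $\Rightarrow$ vanishing of $\H^1$.'' The one thing your write-up adds that the paper suppresses is the observation that the classical Chevalley criterion requires surjectivity, which is an implicit (and, strictly speaking, missing) hypothesis in the paper's statement; your remark about the scheme-theoretic image correctly isolates the gap, though it does not fully repair it (nothing controls affineness of the complement of the image, so the statement as written is genuinely false without a surjectivity or schematic-density hypothesis, both here and in the paper).
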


\begin{proof}
	Since $f_{\text{bos}}:X_{\text{bos}}\to Y_{\text{bos}}$ is finite, then the result follows from classical Chevalley's criterion.
\end{proof}

\subsubsection{Chow's Lemma for Superschemes}
\begin{thr}
	Let $S$ be a Noetherian superscheme, $f:X\to S$ is of finite type and separated. Then there exists a dense open $U\subset X$ and a projective morphism $p:X'\to X$, such that $p|_{p^{-1}(U)}$ is an isomorphism, and $X'$ is quasi-projective over $S$.
\end{thr}

\begin{proof}
	First we consider the bosonic quotient of $X$ and $S$, denoted by $X_{\ev}$ and $S_{\ev}$, and fit them into a commutative diagram:
	\begin{center}
		\begin{tikzcd}
		X \arrow[drr,"f"]\arrow[dr,"\pi",swap, near end]\arrow[ddr]& &\\
		&X_{\ev}\times _{S_{\ev}}S\arrow[r,"f'",swap] \arrow[d] &S\arrow[d]\\
		&X_{\ev}\arrow[r,"f_{\ev}"] &S_{\ev}
		\end{tikzcd}
	\end{center}
	Both $f_{\ev}$ and $f'$ are of finite type, since $X_{\bos}\to S_{\ev}$ is of finite type and nilpotent ideal defining $X_{\bos}\subset X_{\ev}$ is finitely-generated. $\pi$ is finite, since $\pi_{\bos}:X_{\bos}\to \left(X_{\ev}\times _{S_{\ev}}S\right)_{\bos}=X_{\bos}$ is identity, and nilpotent ideal defining $X_{\bos}\subset X$ is finitely-generated.
	
	\newl By applying the classical Chow's Lemma to $f_{\ev}$, we get a dense open $U_{\ev}\subset X_{\ev}$ and a projective morphism $q:Y\to X_{\ev}$, such that $q|_{q^{-1}(U_{\ev})}$ is an isomorphism, and $Y$ is a quasi-projective $S_{\ev}$-scheme. So the natural morphism $Y\times _{S_{\ev}}S\to X_{\ev}\times _{S_{\ev}}S$ is projective, and $Y\times _{S_{\ev}}S$ is quasi-projective over $S$. Since $\pi$ is finite, pullback of $Y\times _{S_{\ev}}S$ along $\pi$ is also quasi-projective over $S$, and call it $X'$. The projection $p: X'\to X$ is an isomorphism on $p^{-1}(U)$ so it satisfies all conditions.
\end{proof}

\subsubsection{Pushforward and Flatness}
The following is the super analog of Grothendieck's Complex, whose proof is also analogous to the classical case.
\begin{thr}\label{Grothendieck_Complex}
	Let $S=\Spec R$ be a Noetherian affine superscheme, $f:X\to S$ is proper, $\mathcal F\in \Coh(\mathcal O_X)$ and is flat over $S$, then $Rf_*\mathcal F$ is quasi-isomorphic to a perfect complex with cohomological amplitudes in $[0,d]$, where $d$ is the maximal dimension of $\Supp (\mathcal F_s)$, and its formation commutes with arbitrary base change.
\end{thr}

In the same way as the classical situation, Grothendieck's Complex implies the base-change theorem for flat sheaves:
\begin{thr}[{Theorem 5.10 of \cite{FantechiIllusieGottsche2005}}]\label{Proper_Flat_Pushforward}
	Let $\pi : X \to S$ be a proper morphism of Noetherian superschemes, 
	and let $\mathcal F$ be a coherent $\mathcal O_X$-module which is flat over $S$. Then the following statements hold: 
	\begin{enumerate}
		\item For any integer $i$ the functions $s \mapsto \dim_{\kappa(s)} \H^i(X_s,\mathcal F_s)_{+}$ and $s \mapsto \dim_{\kappa(s)} \H^i(X_s,\mathcal F_s)_{-}$ are upper semi-continuous on $S$.
		
		\item The functions $s \mapsto \dim_{\kappa(s)} \chi(X_s,\mathcal F_s)_{+}$ and $s \mapsto \dim_{\kappa(s)} \chi(X_s,\mathcal F_s)_{-}$ are  locally constant on $S$. 
		
		\item Suppose that $S$ is a reduced scheme, and if for some integer $i$, there is a pair of integers $d_{\pm} \ge 0$ such that for all $s\in S$ we have $\dim_{\kappa(s)} \H^i(X_s,\mathcal F_s)_{\pm}=d_{\pm}$, then $R^i\pi_*\mathcal F$ is locally free of rank $(d_+|d_{-})$ and $(R^i\pi_*\mathcal F)_s\to \H^i(X_s,\mathcal F_s)$ is an isomorphism for all $s\in S$.
		\item If for some integer $i$ and point $s\in S$ the map $(R^i\pi_*\mathcal F)_s\to \H^i(X_s,\mathcal F_s)$ is surjective, then there exists an open subscheme $U\subset S$ such that both $\tau_{\le i}R\pi_*\mathcal F|_U$ and $\tau_{\ge i+1}R\pi_*\mathcal F|_U$ are quasi-isomorphic to perfect complex on $U$.
	\end{enumerate}
\end{thr}

\begin{corollary}[Hom Superscheme]\label{Hom_Superscheme}
	Let $\pi:X\to S$ be quasi-projective morphism between Noetherian superschemes, and let $G$ and $F$ be coherent sheaves on $X$. Suppose that $F$ is flat over $S$ and $\Supp (F)$ is proper over $S$. Then the functor $\mathfrak{Hom}_S(G,F)_+: T\mapsto \Hom_{X_T}(G_T,F_T)_+$ is represented by a linear superscheme $\mathbb{V}(\mathcal Q)$ for a coherent sheaf $\mathcal Q$ on $S$.
\end{corollary}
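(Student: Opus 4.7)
The plan is to adapt the classical representability argument for the Hom-scheme (as in EGA III or Nitsure's exposition in \textit{Fundamental Algebraic Geometry}) to the super setting, leveraging the super analog of cohomology-and-base-change (Theorem \ref{Proper_Flat_Pushforward}) established earlier in this section. First, I would reduce to the case where $S$ is affine, since representability by a linear superscheme is local on $S$, and since both the formation of $\mathbb{V}(\mathcal{Q})$ and the Hom-functor are compatible with open covers. Next, using quasi-projectivity of $\pi$ and the properness of $\Supp(F)$ over $S$, I would choose a relatively ample line bundle $\mathcal{L}$ on $X$ and, after a sufficiently large twist, find a two-term even resolution
\begin{equation*}
\mathcal{E}_1 \longrightarrow \mathcal{E}_0 \longrightarrow G \longrightarrow 0
\end{equation*}
where each $\mathcal{E}_i = \pi^*\mathcal{A}_i \otimes \mathcal{L}^{\otimes -n_i}$ for a coherent $\mathcal{O}_S$-module $\mathcal{A}_i$ and an integer $n_i$, chosen so that $\mathcal{L}^{\otimes n_i} \otimes F$ has vanishing higher direct images on $\Supp(F)$ (using Serre vanishing, whose super version follows from the bosonic case applied to $F^+$ and $F^-$ separately).

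The functor then becomes the even part of a kernel,
\begin{equation*}
\mathfrak{Hom}_S(G,F)_+(T) \;=\; \ker\!\Bigl(\Hom_{X_T}(\mathcal{E}_{0,T},F_T)_+ \longrightarrow \Hom_{X_T}(\mathcal{E}_{1,T},F_T)_+\Bigr),
\end{equation*}
and by the projection formula together with adjunction, each term rewrites as $\Hom_{\mathcal{O}_T}(\mathcal{A}_{i,T},\,(\pi_T)_*(\mathcal{L}_T^{\otimes n_i} \otimes F_T))_+$. Here the flatness of $F$ over $S$ is crucial: by Theorem \ref{Grothendieck_Complex} the complex $R\pi_*(\mathcal{L}^{\otimes n_i} \otimes F)$ is quasi-isomorphic to a perfect complex on $S$, and by the choice of twist it reduces to a single coherent sheaf $\mathcal{P}_i := \pi_*(\mathcal{L}^{\otimes n_i} \otimes F)$ whose formation commutes with arbitrary base change. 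The functor of even $\mathcal{O}_T$-linear maps out of $\mathcal{A}_{i,T}$ into $\mathcal{P}_{i,T}$ is itself representable by a linear superscheme, namely $\mathbb{V}(\mathcal{A}_i \otimes_{\mathcal{O}_S} \mathcal{P}_i^\vee)$ (taking the super-Hom sheaf and then the appropriate symmetric algebra).

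Finally, the kernel of the even map between these two linear superschemes is cut out by a finite presentation, and a standard diagram chase identifies it with $\mathbb{V}(\mathcal{Q})$, where $\mathcal{Q}$ is the coherent sheaf on $S$ defined as the cokernel of the dual map between the corresponding coherent sheaves; concretely, if $\mathcal{R}_i$ denotes the super-sheaf $\mathcal{H}om(\mathcal{A}_i,\mathcal{P}_i)$ on $S$, then $\mathcal{Q}$ is the even part of $\coker(\mathcal{R}_1^\vee \to \mathcal{R}_0^\vee)$ (or its dual, depending on the representability convention). The main obstacle I anticipate is keeping track of the $\mathbb{Z}_2$-grading throughout, particularly in the projection formula and base-change isomorphism, and in verifying that extracting the even part of the kernel commutes with base change $T \to S$. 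This parity bookkeeping is where the super-theory genuinely departs from the classical case, but it is mechanical given the super versions of Grothendieck's complex and proper-flat pushforward (Theorems \ref{Grothendieck_Complex} and \ref{Proper_Flat_Pushforward}) already in place.
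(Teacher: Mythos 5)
Your argument is essentially the paper's: both resolve $G$ by a two-term complex of locally-free sheaves (possible by quasi-projectivity), feed it into the super Grothendieck complex (Theorem~\ref{Grothendieck_Complex}), and dualize to extract the representing coherent sheaf $\mathcal Q$. The only stylistic difference is that you twist by a relatively ample line bundle to reduce each pushforward to a single locally-free sheaf in degree zero and unwind term by term through the projection formula, whereas the paper works directly with the perfect complex $\mathcal K^{\bullet}=R\pi_*\underline{R\Hom}([E_2\to E_1],F)$ and reads off $\mathcal Q=\H^0((\mathcal K^{\bullet})^{\vee})$ using that $(\mathcal K^{\bullet})^{\vee}$ sits in non-positive degrees. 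Both variants produce the same sheaf.

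There is, however, a parity slip in your final step: $\mathcal Q$ must be the \emph{full} $\mathbb Z_2$-graded cokernel $\coker(\mathcal R_1^\vee\to\mathcal R_0^\vee)$, not its even part. The restriction to even homomorphisms happens automatically at the level of $T$-points of $\mathbb V(\mathcal Q)=\Spec\Sym^\bullet\mathcal Q$: a morphism of $\mathcal O_T$-superalgebras $\Sym^\bullet\mathcal Q_T\to\mathcal O_T$ preserves the $\mathbb Z_2$-grading, so its degree-one part lands in $\Hom_{\mathcal O_T}(\mathcal Q_T,\mathcal O_T)_+$. These even maps include the components $\mathcal Q_{T,-}\to\mathcal O_{T,-}$; replacing $\mathcal Q$ by $\mathcal Q_+$ would discard exactly those contributions and represent the wrong functor. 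Your parenthetical ``(or its dual, depending on the representability convention)'' dissolves the same way: with the paper's convention $\mathbb V(\mathcal Q)(T)=\Hom_{\mathcal O_T}(\mathcal Q_T,\mathcal O_T)_+$, and the representing sheaf is the undualized, full super-cokernel.
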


\begin{proof}
	Since $\pi$ is quasi-projective, there exists finite locally free sheaves $E_1,E_2$ on $X$ and a map $\phi:E_2\to E_1$ with $\coker \phi=G$. It follows that
	\begin{equation*}
	\Hom_{X_T}(G_T,F_T)\cong \H^0R\pi_{T*}\underline{R\Hom}_{X_T}([E_{2T}\to E_{1T}], F_T).
	\end{equation*}
	Theorem \ref{Grothendieck_Complex} implies that $R\pi_{*}\underline{R\Hom}_{X}([E_{2}\to E_{1}], F)$ is a perfect complex of non-negative cohomological degrees, whose formation commutes with arbitrary base change. Denote this perfect complex by $\mathcal K^{\bullet}$. $(\mathcal K^{\bullet})^{\vee}$ is of non-positive cohomological degrees, whence
	\begin{equation*}
	\H^0(\mathcal K_T^{\bullet})=\Hom_{\mathcal O_T}(\H^0((\mathcal K^{\bullet})_T^{\vee}),\mathcal O_T).
	\end{equation*}
	So $\mathfrak{Hom}_S(G,F)$ is represented by $\mathbb{V}(\mathcal Q)$ where $\mathcal Q=\H^0((\mathcal K^{\bullet})^{\vee})$.
\end{proof}

\begin{lem}\label{Flatness_Criterion}
	Let $S=\Spec R$ be a Noetherian affine superscheme, $X=\mathbb P^{n|m}_S$, $\mathcal F\in \Coh(\mathcal O_X)$, then $\mathcal F$ is flat over $S$ if and only if $\exists N$ such that $\forall i>N$, $\pi_*\mathcal F(i)$ is flat on $S$.
\end{lem}

\begin{proof}
	This is the super analog of Lemma 5.5 of \cite{FantechiIllusieGottsche2005}. The key point is that $$\mathcal F=\Gamma_{*}(\mathbb P^{n|m}_S, \mathcal F)^{\sim}.$$The reason is that there is a canonical homomorphism $\Gamma_{*}(\mathbb P^{n|m}_S, \mathcal F)^{\sim}\to\mathcal F$ and we know that there is a projection $\mathbb P^{n|m}\to \mathbb P^{n}$, so $\Gamma_{*}(\mathbb P^{n|m}_S, \mathcal F)$ is identified with $\Gamma_{*}(\mathbb P^{n}_S, \mathcal F)$ under this projection, where the latter is considered as a $\mathcal O_{\mathbb P^n}$ module. Now the isomorphism follows from its classical counterpart. The rest of proof is the same as \textit{loc. cit.}
\end{proof}

We also have the super analog of the Formal Function Theorem:
\begin{thr}[Super Version of Formal Function Theorem]
	Let $\pi : X \to S$ be a proper morphism of Noetherian superschemes, 
	and let $\mathcal F$ be a coherent $\mathcal O_X$-module, let $s\in S$. $X_s^{(n)}$ is the $n$'th formal neighborhood of $X_s$ and $(R^i\pi_*\mathcal F)_s^{\wedge}$ be the completion of $R^i\pi_*\mathcal F$ at $s$. Then there is an isomorphism
	$$(R^i\pi_*\mathcal F)_s^{\wedge}\cong \lim_{\substack{\longleftarrow\\n}}\H^i(X_s^{(n)},\mathcal F_s^{(n)}).$$
\end{thr}

\begin{proof}
	Localize $S$ and we can assume that $S=\Spec R$ for a local superring $R=R^+\oplus R^-$, whose closed point is $s$. Let the maximal ideal of $R^+$ be $\mathfrak{m}_0$, and let the ideal of $R$ generated by $R^-$ be $\mathfrak{n}$. Then the maximal idela of $R$ defining $s$ is $\mathfrak{m}_0R+\mathfrak{n}$. Note that $(\mathfrak{m}_0R+\mathfrak{n})^N\subset\mathfrak{m}_0^NR+\mathfrak{n}^N $, so $(\mathfrak{m}_0R+\mathfrak{n})^N=\mathfrak{m}_0^NR$ for large $N$. We first replace $\pi$ by $\pi':X\to S_{\ev}=\Spec R^+$, and the classical Formal Function Theorem states that
	$$(R^i\pi'_*\mathcal F)_s^{\wedge}\cong \lim_{\substack{\longleftarrow\\n}}\H^i(X_s^{(n)},\mathcal F_s^{(n)}).$$
	Note that $R^i\pi'_*\mathcal F$ is the same as $R^i\pi_*\mathcal F$ considered as a $R^+$-module. Now the theorem follows from the fact that $\mathfrak{m}_0R$ and $\mathfrak{m}_0R+\mathfrak{n}$ generate the same topology on $R$ hence two limit systems are cofinal to each other.
\end{proof}

\subsubsection{Super Analog of Grothendieck's Comparison and Existence Theorems}
Let $A$ be an adic Noetherian superring, $I$ an homogeneous ideal of definition of $A$, $Y=\Spec A$, $Y_n=\Spec A/I^n$, and$$\widehat{Y}=\lim_{\substack{\longleftarrow\\n}}Y_n=\Spf A.$$

\begin{thr}[Grothendieck's Comparison Theorem]\label{Grothendieck's_Comparison_Theorem}
	Let $f:X\to Y$ be a finite type morphism between superschemes, and let $\widehat{X}$ be the $I$-adic completion of $X$. Suppose that $\mathcal F\in \Coh(\mathcal O_X)$ has proper support over $Y$, then the base change homomorphisms
	\begin{alignat}{2}
	&(R^qf_*\mathcal F)^{\wedge}\to R^q\widehat{f}_*\widehat{\mathcal F}.\nonumber\\
	&R^q\widehat{f}_*\widehat{\mathcal F}\to \lim_{\substack{\longleftarrow\\n}}R^q(f_{n})_* \mathcal F_n,
	\end{alignat}
	are isomorphisms.
\end{thr}

\begin{proof}
	Write $A=A^+\oplus A^-$ be the decomposition of $A$ into even and odd parts and also $I=I^+\oplus I^-$, where $I^+$ is even and $I^-$ is odd. Since $I^-$ is odd thus is nilpotent, the topology on $\widehat X$ and $\widehat Y$ can also be generated by $(I^+)^n$. Now we regard every thing to be superscheme over the bosonic adic Noetherian ring $A^+$ with $I^+$-topology. Write $\mathcal F=\mathcal F_+\oplus \mathcal F_-$ be decomposition into even and odd parts. By classical Grothendieck's Comparison Theorem, there are natural isomorphisms of $A^+$ modules
	\begin{alignat*}{2}
	&(R^qf_*\mathcal F_{\pm})^{\wedge}\to R^q\widehat{f}_*\widehat{\mathcal F}_{\pm},\\
	&R^q\widehat{f}_*\widehat{\mathcal F}_{\pm}\to \lim_{\substack{\longleftarrow\\n}}R^q(f_{n})_* (\mathcal F_{\pm})_n.
	\end{alignat*}
	The super analog comes from the direct sum of even and odd parts.
\end{proof}

\begin{corollary}\label{Completion_Fully_Faithful}
	In the same setup as {\normalfont Theorem \ref{Grothendieck's_Comparison_Theorem}}, let $\mathcal F,\mathcal G\in \Coh(\mathcal O_X)$ whose supports have an intersection which is 
	proper over $Y$. Then the natural map below is an isomorphism 
	\begin{align}
	\Ext^r(\mathcal F,\mathcal G)^{\wedge}\overset{\sim}{\longrightarrow}\Ext^r(\widehat{\mathcal F},\widehat{\mathcal G}).
	\end{align}
\end{corollary}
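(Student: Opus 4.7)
The plan is to mimic the classical proof of Grothendieck's comparison corollary (EGA III.5.1.2), reducing the statement about Ext to a statement about higher direct images of a $\RHom$ complex so that Theorem \ref{Grothendieck's_Comparison_Theorem} can be applied. Let $f:X\to Y$ denote the structure morphism and let $Z\equiv \Supp(\mcal F)\cap \Supp(\mcal G)$, which by hypothesis is proper over $Y$. Since $Y=\Spec A$ is affine and $A$ is the global sections, the left-hand side is simply the $I$-adic completion of the $A$-module $\Ext^r_X(\mcal F,\mcal G)=R^r\Gamma(X,R\calHom(\mcal F,\mcal G))$, while the right-hand side is $R^r\widehat\Gamma(\widehat X,R\calHom(\widehat{\mcal F},\widehat{\mcal G}))$. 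I would therefore work at the level of the complex $R\calHom(\mcal F,\mcal G)$ and reduce the claim to the corresponding comparison statement for $R^qf_*$ of this complex.

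First I would produce, locally on $X$, a finite resolution $E^\bullet\to \mcal F$ by finite locally-free $\mcal O_X$-sheaves; this is possible because $X$ is of finite type over a Noetherian affine $Y$ and, after Chow's Lemma for superschemes, we may replace $X$ by a quasi-projective modification on which such resolutions exist (exactly as in the construction of the Hom superscheme in Corollary \ref{Hom_Superscheme}). Then $R\calHom(\mcal F,\mcal G)$ is represented by $\calHom(E^\bullet,\mcal G)$, each term of which is a coherent sheaf. The support of every cohomology sheaf $\calExt^q(\mcal F,\mcal G)$ lies in $Z$, hence is proper over $Y$. Next I would observe that $I$-adic completion is exact on coherent sheaves in this Noetherian superscheme setting (using the super Artin–Rees Lemma \ref{Artin-Rees}), so completion commutes with formation of both $\calHom(E^\bullet,\mcal G)$ and its cohomology sheaves, giving the identification $\widehat{R\calHom(\mcal F,\mcal G)}\simeq R\calHom(\widehat{\mcal F},\widehat{\mcal G})$ in the derived category of $\widehat X$.

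Now I would apply Theorem \ref{Grothendieck's_Comparison_Theorem} term-by-term to the bounded complex $\calHom(E^\bullet,\mcal G)$, whose supports are all contained in $Z$, to conclude that
\begin{equation*}
(R^qf_*R\calHom(\mcal F,\mcal G))^\wedge \;\overset{\sim}{\longrightarrow}\; R^q\widehat f_* R\calHom(\widehat{\mcal F},\widehat{\mcal G}).
\end{equation*}
This requires handling the hypercohomology spectral sequences, so I would run the local-to-global Ext spectral sequences
\begin{equation*}
E_2^{p,q}=\H^p(X,\calExt^q(\mcal F,\mcal G))\Longrightarrow \Ext^{p+q}(\mcal F,\mcal G),
\end{equation*}
and its analogue on $\widehat X$. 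The comparison theorem applied to each $\calExt^q$ (whose support sits inside $Z$, hence is proper over $Y$) gives $\H^p(X,\calExt^q(\mcal F,\mcal G))^\wedge\cong \H^p(\widehat X,\calExt^q(\widehat{\mcal F},\widehat{\mcal G}))$. Since $I$-adic completion is exact on finitely generated $A$-modules, it commutes with the spectral sequence, and a standard five-lemma argument on the associated graded pieces yields the isomorphism on the abutment.

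The main obstacle I expect is the existence and finiteness of the locally-free resolution $E^\bullet\to \mcal F$ in the super setting: this is less automatic than in the classical case because the local rings $\mcal O_{X,x}$ are not regular even when $X_{\bos}$ is (the odd coordinates contribute an exterior algebra factor), so finite projective dimension is not guaranteed. I would work around this by keeping $E^\bullet$ only bounded above and using the properness of $Z\to Y$ to truncate: Theorem \ref{Grothendieck's_Comparison_Theorem} shows $Rf_*$ has finite cohomological amplitude on coherent sheaves supported in $Z$, so only finitely many cohomological degrees of $R\calHom$ contribute to $\Ext^r$ for fixed $r$, and a brutal truncation of the resolution suffices. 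Once this technical point is handled, the rest of the argument is a formal combination of Artin–Rees exactness of completion and the comparison theorem.
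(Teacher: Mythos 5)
The paper itself gives no proof: it declares the statement to be formally identical to Corollary~8.2.9 of~\cite{FantechiIllusieGottsche2005}, whose proof runs through the local-to-global Ext spectral sequence together with the comparison theorem for higher direct images. Your argument is the same one, so the core is correct. However, your detour through global locally-free resolutions and Chow's Lemma is both unnecessary and harmful. Replacing $X$ by a quasi-projective modification $p:X'\to X$ changes the objects in question --- $\Ext^r_X(\mcal F,\mcal G)$ and $\Ext^r_{X'}(p^*\mcal F,p^*\mcal G)$ are not the same, and moving between them would require a dévissage you have not supplied --- so Chow's Lemma cannot simply be invoked here. Moreover the worry about infinite projective dimension is a red herring: the proof never needs a bounded or even a global resolution of $\mcal F$. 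What it actually uses is only that $\calExt^q(\mcal F,\mcal G)$ is a coherent $\mcal O_X$-module supported in $Z=\Supp(\mcal F)\cap\Supp(\mcal G)$, which holds on any Noetherian superscheme for coherent $\mcal F,\mcal G$ (regular or not, quasi-projective or not, computed via local free resolutions that can be bounded above and need not terminate), together with the local identification $\calExt^q(\mcal F,\mcal G)^{\wedge}\cong\calExt^q(\widehat{\mcal F},\widehat{\mcal G})$, which follows from $\widehat{A}$ being flat over $A$ (the super Artin--Rees argument you already cite). Given these two facts, the comparison theorem applies to each $E_2^{p,q}=\H^p(X,\calExt^q(\mcal F,\mcal G))$, and exactness of completion on finitely generated $A$-modules lets you pass the isomorphism through the spectral sequence to the abutment. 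If you strip the Chow's Lemma step and the truncation discussion and keep only the spectral-sequence argument, you recover exactly the proof the paper intends.
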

\noindent The proof is formally the same as Corollary 8.2.9 in \cite{FantechiIllusieGottsche2005}.

\begin{thr}[Grothendieck's Existence Theorem]\label{Grothendieck's_Existence_Theorem}
	Let $f:X\to Y$ be a separated finite type morphism between superschemes, and let $\widehat{X}$ be the $I$-adic completion of $X$. Then the functor $\mathcal F\mapsto \widehat{\mathcal F}$ where $$\widehat{\mathcal F}:=\lim_{\substack{\longleftarrow\\n}}\mathcal F/I^n\mathcal F,$$from the category of coherent sheaves on $X$ whose support is proper over Y to the category of coherent sheaves on $\widehat{X}$ whose support is proper over $Y$ is an equivalence. 
\end{thr}

\begin{proof}
	The fully faithfulness follows from Corollary \ref{Completion_Fully_Faithful}. Let us prove essential surjectivity. Again we regard everything over the bosonic adic Noetherian ring $A^+$ with $I^+$-topology. Suppose that $F\in \Coh(\mathcal O_{\widehat{X}})$ with proper support over $\widehat Y$, we want to show that there exists $\mathcal F\in \Coh(\mathcal O_{X})$ with proper support over $Y$ such that $F=\widehat{\mathcal F}$. Write $F=F_+\oplus F_-$ be decomposition into even and odd parts. By the classical Grothendieck's Existence Theorem, there exists coherent $\mathcal O_X^+$ sheaves $\mathcal F_+$ and $\mathcal F_-$ whose supports are proper over $Y$, such that $F_{\pm}=\widehat{\mathcal F}_{\pm}$. Now an $\mathcal O_X$-module structure on $\mathcal F_+\oplus\mathcal F_-$ is a pair of homomorphisms$$a:\mathcal O_X^-\otimes_{\mathcal O_X^+}\mathcal F_{\pm}\to \mathcal F_{\mp},$$such that the diagram
	\begin{center}
		\begin{tikzcd}
		\mathcal O_X^-\otimes_{\mathcal O_X^+}\mathcal O_X^-\otimes_{\mathcal O_X^+}\mathcal F_{\pm}\arrow[r,"\text{Id}\otimes a"] \arrow[dr,"\mu\otimes \text{Id}",swap] &\mathcal O_X^-\otimes_{\mathcal O_X^+}\mathcal F_{\mp}\arrow[d,"a"]\\
		&\mathcal F_{\pm}
		\end{tikzcd}
		,
	\end{center}
	commutes. Now every sheaf in this diagram has proper support, hence the existence of the homomorphisms $a$ and the comutativity of this diagram follows from the classical Grothendieck's Existence Theorem.
\end{proof}

\subsubsection{Super Analog of Castelnuovo-Mumford Regularity}
A coherent sheaf $\mathcal F$ on $\mathbb P^{n|s}$ is called $m$-regular if $$\H^i(\mathbb P^{n|s}, \mathcal F(m-i))=0,\qquad \forall i\ge 1.$$
An important trick we will use is that $\mathbb{P}^{n|s}$ is split, in fact $\mathbb{P}^{n|s}=\wedge ^{\bullet}\mathcal O_{\mathbb P^n}(-1)^{\oplus s}$, so coherent sheaf $\mathcal F$ on $\mathbb{P}^{n|s}$ can be regarded as a pair of coherent sheaves $(\mathcal F_+|\mathcal F_{-})$ on $\mathbb{P}^n$ with a homomorphism $a:\mathcal O_{\mathbb P^n}(-1)^{ s}\otimes _{\mathcal O_{\mathbb P^n}}\mathcal F_{\pm}\to \mathcal F_{\mp}$ such that the sum of two homomorphisms
\begin{equation}
\psi=a\circ(\text{Id}\otimes a)\text{ and }\phi=\psi\circ (\sigma\otimes \text{Id}):\mathcal F_{\pm}(-2)^{{s}\choose {2}}\to \mathcal F_{\pm},
\end{equation}
is zero, where 
\begin{equation}
\sigma\in \Aut(\mathcal O_{\mathbb P^n}(-1)^{s}\otimes _{\mathcal O_{\mathbb P^n}}\mathcal O_{\mathbb P^n}(-1)^{s}),
\end{equation}
is the swap of two components. Under this identification, there is an isomorphism 
\begin{equation*}
\H^i(\mathbb P^{n|s}, \mathcal F(r))\cong\H^i(\mathbb P^{n}, \mathcal F_+(r))\oplus \prod \H^i(\mathbb P^{n}, \mathcal F_{-}(r)).
\end{equation*}
Hence $\mathcal F$ being $m$-regular is equivalent to that $\mathcal F_{+}$ and $\mathcal F_{-}$ are both $m$-regular (in the classical sense). With the help of this trick, we can formulate the Castelnuovo-Mumford $m$-regularity theory on super projective space purely in terms of $m$-regularity theory on classical projective space, namely we have

\begin{lem}[Lemma 5.1 of \cite{FantechiIllusieGottsche2005}]
	If $\mathcal F$ is a $m$-regular sheaf on $\mathbb P^{n|s}$, then the following statements hold:
	\begin{enumerate}
		\item The canonical map $\H^0(\mathbb P^{n|s},\mathcal O_{\mathbb P^n}(1))\otimes \H^0(\mathbb P^{n|s},\mathcal F(r))\to \H^0(\mathbb P^{n|s},\mathcal F(r+1))$ is surjective whenever $r \ge m$. 
		\item We have $\H^i(\mathbb P^{n|s},\mathcal F(r)) = 0$ whenever $i > 1$ and $r > m — i$. In other words, if $\mathcal F$ is m-regular, then it is $m'$-regular for all $m'>m$. 
		\item The sheaf $\mathcal F(r)$ is generated by its global sections, and all its higher cohomologies vanish, whenever $r > m$. 
	\end{enumerate}
\end{lem}

\begin{thr}[Theorem 5.3 of \cite{FantechiIllusieGottsche2005}]\label{m-Regularity}
	For any non-negative integers $p$, $n$ and $s$, there exists a  polynomial $F_{p,n,s}$ in $n+1$ variables with integral coefficients, which has the following property:
	Let $k$ be any field, and let $\mathcal F$ be any coherent sheaf on $\mathbb P^{n|s}$, which is isomorphic to a subsheaf of $\mathcal O_{\mathbb P^{n|s}}^{\oplus p}$. Let the Hilbert polynomial of $\mathcal F$  be written in terms of binomial coefficients as
	\begin{equation*}
	\chi(\mathcal F(r))_+=\sum_{i=0}^n a_i {{n}\choose{i}}, \qquad \chi(\mathcal F(r))_-=\sum_{i=0}^n b_i {{n}\choose{i}},
	\end{equation*}
	where $a_0,\cdots,a_n\in \mathbb Z$, and $\chi(\mathcal F(r))_{+}$ $($reps. $\chi(\mathcal F(r))_{-}$ $)$ are even $($resp. odd$)$ Euler characteristics. Then $\mathcal F$ is $m$-regular, where $m=\max\{F_{p,n,s}(a_0,\cdots,a_n), F_{p,n,s}(b_0,\cdots,b_n)\}$.
\end{thr}

\subsubsection{Flattening Stratification}
Let's first recall the definition of flattening stratification. Suppose that $f:X\to S$ is a morphism of superschemes and $\mcal F\in \QCoh(\mcal O_X)$, a flattenning stratification of $\mcal F$ over $S$ (if exists) is a set of locally closed sub superschemes $\{S_i\}_{i\in I}$ of $S$, such that 
\begin{enumerate}
	\item As a set, $|S|=\coprod_{i\in I}|S_i|$.
	\item For any morphism of superschemes $g:T\to S$, $\mcal F_T$ is flat over $T$ if and only if there exists $i\in I$ and $g$ factors through $S_i\hookrightarrow S$.
\end{enumerate}
An elementary case is when $f$ is identity map so $\mcal F$ is a quasi-coherent sheaf on $S$. To study the existence of flattenning stratification, we start from the affine case. Let $R=R^+\oplus R^-$ be a superring (not necessarily Noetherian) and $M=\coker (f:R^{p|q}\to R^{m|n})$ be a finite presented module. We write $f$ in the matrix form
\begin{align*}
f=
\begin{bmatrix}
A       & B  \\
C       & D
\end{bmatrix}.
\end{align*}

\begin{proposition}\label{FlatStrat_Local}
	There exists flattening stratification for $M$ on $\Spec R$. 
\end{proposition}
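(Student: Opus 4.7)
The plan is to construct the flattening stratification by first reducing to the Noetherian case via Noetherian approximation, then exploiting the classical flattening stratification of the bosonic quotient together with a careful analysis of how super-flatness descends from flatness of the associated graded pieces under the $\mathfrak n$-adic filtration.

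First, I would reduce to the case where $R$ is Noetherian. Since $M$ is finitely presented by the matrix $f$, there exists a finitely generated (hence Noetherian) sub superring $R_0\subset R$ containing all entries of $A,B,C,D$, and a finitely presented $R_0$-module $M_0$ whose base change along $R_0\to R$ is $M$. By Proposition \ref{Limit_Flatness}, flatness of $g^*M$ over $T$ for any test superscheme $T\to \Spec R$ can be detected after factoring $g$ through $\Spec R_\alpha$ for some intermediate Noetherian superring, so a flattening stratification on $\Spec R_0$ pulls back along $\Spec R\to \Spec R_0$ to a flattening stratification on $\Spec R$. Hence we may assume $R$ is Noetherian.

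In the Noetherian case, the ideal $\mathfrak n$ generated by $R^-$ is nilpotent, since $R^-$ is finitely generated as an $R^+$-module and every odd element squares to zero in characteristic zero. Consequently $|\Spec R|=|\Spec R_{\bos}|$ as topological spaces, and the $\mathfrak n$-adic filtration on $M$ has only finitely many nontrivial graded pieces $\Gr^k_{\mathfrak n}M=\mathfrak n^k M/\mathfrak n^{k+1}M$, each a finitely presented $R_{\bos}$-module with presentation induced from the blocks of $f$. By the equivalence $(1)\Leftrightarrow(4)$ of Lemma \ref{Local Criteria of Flatness}, $M$ is $R$-flat if and only if the natural multiplication map
\begin{equation*}
\mu:\Gr_{\mathfrak n}R\otimes_{R_{\bos}}M/\mathfrak nM\longrightarrow \Gr_{\mathfrak n}M
\end{equation*}
is an isomorphism, which decomposes into the classical flatness of each graded piece $\Gr^k_{\mathfrak n}M$ over $R_{\bos}$ together with the bijectivity of $\mu$ in each degree. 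I would then apply the classical flattening stratification (Raynaud--Gruson) to the finitely presented $R_{\bos}$-modules $\Gr^k_{\mathfrak n}M$ and $\Gr^k_{\mathfrak n}R\otimes_{R_{\bos}}M/\mathfrak nM$ for all $0\le k\le N$ (where $\mathfrak n^{N+1}=0$), and take the common refinement to obtain a locally closed stratification $\{S_i\}$ of $\Spec R_{\bos}$. On each stratum the graded pieces are flat, so the Fitting-type ideals of $\ker\mu$ and $\coker\mu$ become locally-free-rank-type conditions, and a further stratification by the loci where both vanish identically cuts out the common locus where $\mu$ is an isomorphism. Finally I lift the resulting stratification of $\Spec R_{\bos}$ to one of $\Spec R$ via the homeomorphism $|\Spec R|=|\Spec R_{\bos}|$, endowing each stratum with the superscheme structure it inherits from the ambient $\Spec R$.

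The main obstacle is verifying the universal property for an arbitrary test morphism $g:T\to\Spec R$ where $T$ carries its own nontrivial odd structure, since the criterion (4) of Lemma \ref{Local Criteria of Flatness} must be applied compatibly with the base change to $T$, and the $\mathfrak n_T$-adic filtration on $g^*M$ interacts nontrivially with the pullback of the $\mathfrak n$-adic filtration on $M$. The way around this will be to reduce again to Noetherian $T$, observe that $g$ factors through $\Spec R/\mathfrak n^{N+1}$ so the $\mathfrak n$-filtrations behave well under pullback, and then use base-change compatibility of the classical flattening stratifications of the graded pieces together with the base-change compatibility of the $\ker\mu,\coker\mu$ strata, to conclude that $g^*M$ is $T$-flat if and only if $g$ factors through one of the $S_i$.
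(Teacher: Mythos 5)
Your proposal takes a genuinely different route from the paper (which does an explicit Gaussian‑elimination argument on the presenting matrix $f$), but it has a real gap that I do not think the proposed ``common refinement of graded pieces plus Fitting loci of $\ker\mu,\coker\mu$'' strategy can close.

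The problem is that the flattening strata are not merely locally closed subsets of $\Spec R_{\bos}$ with structure ``inherited from the ambient $\Spec R$''; they are locally closed sub superschemes whose structure sheaf can be cut down by ideals involving \emph{products} of even and odd variables, and such strata are invisible to an analysis that lives entirely over $\Spec R_{\bos}$. Concretely, take $R=\mbb C[x][\theta]$ and $M=R/(x\theta)=\coker\bigl(R^{0|1}\xrightarrow{(x\theta)}R^{1|0}\bigr)$. A direct check shows $g^*M=\mcal O_T/(g^\#(x\theta))$ is flat over $T$ if and only if $g^\#(x\theta)=0$, so the flattening stratification consists of the single closed sub superscheme $\Spec R/(x\theta)$ — same underlying space as $\Spec R$, strictly smaller structure sheaf. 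Running your machinery on this example: $M_{\bos}=\mbb C[x]$ is everywhere flat, $\Gr^1_{\mathfrak n}M\cong\mbb C$ is supported at $x=0$, and $\ker\mu^1=x\mbb C[x]\theta$ has full support on $\mbb A^1$, so the locus where $\mu$ is an isomorphism is empty. Your construction would therefore produce no flattening stratum at all, contradicting the existence of $\Spec R/(x\theta)$.

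The deeper reason is the one you flagged yourself but did not resolve: criterion $(4)$ of Lemma \ref{Local Criteria of Flatness}, applied to $g^*M$ over a test superscheme $T$, is a statement about the $\mathfrak n_T$-adic filtration, and $\mathfrak n_T$ is in general strictly larger than the pullback of $\mathfrak n_R$ (already so for $T=\Spec\mbb C[\psi]$ over a bosonic $R$). Observing that $g$ factors through $\Spec R/\mathfrak n_R^{N+1}$ is vacuous once $R$ is Noetherian ($\mathfrak n_R$ is already nilpotent), and it does not align the two filtrations. The paper's proof sidesteps all of this: because $M$ is the cokernel of an explicit matrix, $g^*M$ is flat over $T$ iff locally free of rank $(k|l)$, iff after Gaussian elimination near a point of $U'_{k|l}$ the residual blocks $A',B',C',D'$ pull back to zero, which is precisely the closed condition $T\to\Spec R_a$ factors through $V(\text{entries of }A',B',C',D')$. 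That closed ideal naturally contains mixed even/odd products such as $x\theta$, which is exactly the information your bosonic graded‑piece analysis cannot recover. If you want to keep the spirit of a general ``reduce to bosonic'' argument, you would at minimum have to stratify over $\Spec R_{\ev}$ (not $\Spec R_{\bos}$) and even then track the ideal of odd relations carefully — at which point you are essentially rederiving the matrix argument.
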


\begin{proof}
	It is well-known that $M_{\bos}$ on $R_{\bos}=R/\mathfrak{n}$ has  flattening stratification 
	\begin{equation*}
	\coprod_{\substack {0\le k\le m\\0\le l\le n}}U'_{k|l}\to \Spec R_{\bos},
	\end{equation*}
	such that $M_{S_{\bos}}$ is flat of rank $(k|l)$ if an only if $\Spec S_{\bos}\to \Spec R_{\bos}$ factors through stratum $U'_{k|l}$. So if there is a flattening stratification for $M$ on $R$, then it must has the same underlying topological space as the $U'_{k|l}$'s. Take any point $p\in U'_{k|l}$, defined by prime ideal $\mathfrak{p}$. Passing to residue field $k(p)$, $f$ is similar to a matrix of the form
	\begin{equation*}
	\begin{pmat}[{.|.}]
	\text{Id}_{(m-k)\times (m-k)} & \mathbf{0} & \mathbf{0} & \mathbf{0} \cr
	\mathbf{0} & \mathbf{0} & \mathbf{0} & \mathbf{0} \cr\-
	\mathbf{0} & \mathbf{0} & \mathbf{0} & \mathbf{0} \cr
	\mathbf{0} & \mathbf{0} & \mathbf{0} & \text{Id}_{(n-l)\times (n-l)} \cr
	\end{pmat}.
	\end{equation*}
	Hence on $R_{\mathfrak{p}}$ (the localization of $R$ at $p$), $f$ is similar to a matrix of the form
	\begin{equation*}
	\begin{pmat}[{.|.}]
	\text{Id}_{(m-k)\times (m-k)} & \mathbf{0} & \mathbf{0} & \mathbf{0} \cr
	\mathbf{0} & A' & B' & \mathbf{0} \cr\-
	\mathbf{0} & C' & D' & \mathbf{0} \cr
	\mathbf{0} & \mathbf{0} & \mathbf{0} & \text{Id}_{(n-l)\times (n-l)} \cr
	\end{pmat},
	\end{equation*}
	where $A',B',C',D'$ are matrices with input from prime ideal $\mathfrak{p}$, by Gauss elimination algorithm. The same is true if $R_{\mathfrak{p}}$ is replaced by $R_a$ for some $a\in \mathfrak{p}_0$. Then $M_a$ is the cokernel of 
	\begin{align*}
	f'=
	\begin{bmatrix}
	A'       & B'  \\
	C'       & D'
	\end{bmatrix}:R_a^{k+p-m|l+q-n}\to R_a^{k|l}.
	\end{align*}
	Hence the pullback of $M_a$ to a $R_a$-superscheme $S$ is locally free of rank $(k|l)$ if and only if pullback of $A',B',C',D'$ are identically zero, i.e. $S\to \Spec R_a$ factors through the closed super subscheme defined by the ideal generated by entries of $A',B',C',D'$. Let $p$ runs through $U'_{k|l}$ and we get a open super subcheme $V\subset \Spec R$ and a closed super subscheme $U_{k|l}\subset V$ by gluing. This is the flattening stratum for $(k|l)$ piece.
\end{proof}
\begin{remark}
	Since $U_{k|l}$ has the same topological space as $U'_{k|l}$, by the lower semicontinuity of the rank of matrix in the bosonic case, we see that the closure of $U_{k|l}$ is contained in the union of all $U_{k'|l'}$ where $k'\ge k,l'\ge l$.
\end{remark}
By a gluing argument, Proposition \ref{FlatStrat_Local} generalizes to
\begin{corollary}\label{FlatStrat_Module}
	There exists flattening stratification for finite presented sheaves on superschemes.
\end{corollary}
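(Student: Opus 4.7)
The plan is to reduce the general case to the affine local case established in Proposition \ref{FlatStrat_Local} by a gluing argument, so the main work is checking that the affine strata are compatible on overlaps. First I would cover $X$ by affine open super subschemes $\{U_\alpha = \Spec R_\alpha\}$ such that $\mathcal F|_{U_\alpha}$ corresponds to a finite presented $R_\alpha$-module $M_\alpha$. Applying Proposition \ref{FlatStrat_Local} on each $U_\alpha$ produces a disjoint union of locally closed sub superschemes $\{U_{\alpha, k|l}\hookrightarrow U_\alpha\}_{(k,l)}$, whose underlying set covers $|U_\alpha|$ and such that for any morphism $g:T\to U_\alpha$, the pullback $M_{\alpha,T}$ is locally free of rank $(k|l)$ precisely when $g$ factors through $U_{\alpha, k|l}$.

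Next I would extract from the proof of Proposition \ref{FlatStrat_Local} the intrinsic universal property of the strata: the ideal cutting out $U_{\alpha, k|l}$ inside the relevant open locus is generated by the entries of the remainder matrix $A',B',C',D'$ after Gauss elimination, and a morphism $g:T\to U_\alpha$ kills these entries if and only if the pulled-back presentation of $M_{\alpha,T}$ becomes a presentation of a locally free sheaf of rank $(k|l)$. Because local freeness of a finite presented sheaf is a local property (on the target), this condition is intrinsic to $\mathcal F|_{U_\alpha}$ and does not depend on the chosen presentation. Hence on an overlap $U_{\alpha\beta}=U_\alpha\cap U_\beta$, the sub superschemes $U_{\alpha, k|l}\cap U_{\alpha\beta}$ and $U_{\beta, k|l}\cap U_{\alpha\beta}$ represent the same subfunctor of $U_{\alpha\beta}$, hence are canonically identified.

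Using these canonical identifications I would glue, for each pair $(k,l)$ with $0\le k\le m$, $0\le l\le n$ (bounded locally by the ranks of the local presentations), the local strata into a locally closed sub superscheme $X_{k|l}\hookrightarrow X$. The universal property transfers automatically: a morphism $g:T\to X$ with $T$ affine factors through $X_{k|l}$ if and only if it does so over each $U_\alpha$ it meets, which in turn is equivalent to $\mathcal F_T$ being locally free of rank $(k|l)$. Hence $\coprod_{k|l} X_{k|l}\to X$ is a flattening stratification for $\mathcal F$.

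The main technical issue is not the logic but rather verifying that the bound on ranks $(k|l)$ is uniform enough globally: on an overlap the local presentations of $\mathcal F$ may have different sizes, so the index set of strata need not match verbatim. This is harmless because the intrinsic characterization shows that an empty stratum simply does not appear, and because at each point of $X$ the rank of $\mathcal F_{\kappa(x)}$ is bounded locally; so only finitely many $(k|l)$ contribute in any quasi-compact open, and the gluing produces a well-defined decomposition indexed by the union of the locally relevant ranks.
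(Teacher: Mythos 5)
Your proposal is correct and supplies exactly the gluing argument that the paper's one-line proof invokes without spelling out. The key point you identify — that the $(k|l)$-stratum of Proposition \ref{FlatStrat_Local} represents the presentation-independent subfunctor of morphisms $T\to U_\alpha$ along which $\mathcal F$ pulls back to a locally free sheaf of rank $(k|l)$ (using that a finite presented flat module is locally free), so the local strata agree canonically on overlaps and glue — is precisely what makes the reduction to the affine case work.
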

A more general situation is when $f:X\to S$ is projective, and $\mcal F$ is finite presented, in this case the Hilbert polynomials are constant for any flat strata (if exists) by Theorem \ref{Proper_Flat_Pushforward}. Analogous to the classical case, we can add one more condition to the flattening stratification for projective morphisms:
\begin{enumerate}
	\item[•] The index set $I$ for the flattening stratification is the set of Hilbert polynomials $(f_+|f_-)$, and $I$ has a partial 
	ordering, defined by putting $(f_+|f_-)\le(g_+|g_-)$ whenever $f_+(n)\le g_+(n)$ and $f_-(n)\le g_-(n)$ for all $n\gg 0$. Then the 
	closure in $S$ of the subset $S_{(f_+|f_-)}$ is contained in the union of all $S_{(g_+|g_-)}$ where $(f_+|f_-)\le(g_+|g_-)$. 
	
\end{enumerate}

\begin{thr}[{Theorem 5.13 of \cite{FantechiIllusieGottsche2005}}]\label{FlatStrat_Projective}
	Let $\mathcal V$ be a locally-free sheaf on a superscheme $S$, and $\mathcal F$ be a finite presented sheaf on $\mathbb P(\mathcal V)$, then there exists flattening stratification for $\mathcal F$ over $S$.
\end{thr}
\begin{proof}
	We first assume that $S$ is affine. By Noetherian approximation, we can assume that $S$ is also Noetherian and $\mathcal F$ is coherent. If we can prove the proposition for affine Noetherian $S$, then the result automatically holds for general $S$ by functorial property of flattening stratification.
	
	\newl Let us explain how to modify the arguments in the proof of Theorem 5.13 of \textit{loc.cit.} so that it can be applied to its super analog: The special case corresponds to Corollary \ref{FlatStrat_Module}; Generic Flatness (\textit{loc.cit.} 5.12) is still true, since if the base is an integral scheme, then we can project $X$ down to its bosonic quotient $X_{\ev}$, and apply the generic flatness to $\mathcal F$ which is considered as a coherent sheaf on $X_{\ev}$; \textit{loc.cit.} 5.5 is replaced by its super analog (Lemma \ref{Flatness_Criterion}). Now the remaining step is to repeat the proof of \textit{loc.cit.} 5.13, which will be omitted here.
\end{proof}

\begin{corollary}[Incidence Superscheme]\label{Incidence_Superscheme}
	Let $S$ be a Noetherian superscheme and $\pi:X\to S$ is a projective morphism, $Y,Z\subset X$ are two closed sub superschemes, and $Y$ is flat over $S$. Then there is a closed sub superscheme $S'$ of $S$, such that for any morphism of superschemes $g:T\to S$, $Y_T$ is a closed sub superscheme of $Z_T$ if and only if $g$ factors through $S'$.
\end{corollary}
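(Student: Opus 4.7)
The plan is to reduce the incidence condition to the vanishing locus of a section of a linear superscheme, and then invoke the Hom-superscheme construction (Corollary \ref{Hom_Superscheme}) to produce this linear superscheme.

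First, I would translate the incidence condition into an algebraic one: given $g:T\to S$, one has $Y_T\subset Z_T$ as closed sub superschemes of $X_T$ if and only if the canonical morphism of coherent sheaves
\[
\phi_T:\mathcal I_{Z,T}\longrightarrow \mathcal O_{Y,T},
\]
obtained as the composition $\mathcal I_{Z,T}\hookrightarrow \mathcal O_{X_T}\twoheadrightarrow \mathcal O_{Y,T}$, is the zero map. This $\phi_T$ is even and is the base change of the natural map $\phi:\mathcal I_Z\to \mathcal O_Y$ on $X$.

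Next, I would apply Corollary \ref{Hom_Superscheme} to the pair $G=\mathcal I_Z$, $F=\mathcal O_Y$. The hypotheses are met: $\pi:X\to S$ is projective (hence quasi-projective), $\mathcal O_Y$ is coherent and flat over $S$ by the flatness of $Y$, and $\Supp(\mathcal O_Y)=Y$ is proper over $S$ because it is closed in the projective $S$-superscheme $X$. Hence the functor
\[
T\longmapsto \Hom_{X_T}(\mathcal I_{Z,T},\mathcal O_{Y,T})_{+},
\]
is represented by a linear superscheme $\mathbb V(\mathcal Q)\to S$ for some coherent sheaf $\mathcal Q$ on $S$. The universal element is the map $\phi$ itself, which corresponds to a section $s:S\to \mathbb V(\mathcal Q)$ of the structural projection.

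Finally, I would define $S'$ as the scheme-theoretic vanishing locus of $s$, i.e.\ the fibered product
\[
\begin{tikzcd}
S'\arrow[r]\arrow[d] & S\arrow[d,"0"]\\
S\arrow[r,"s"] & \mathbb V(\mathcal Q),
\end{tikzcd}
\]
where $0:S\to \mathbb V(\mathcal Q)$ denotes the zero section. Since $\mathbb V(\mathcal Q)=\Spec_S\Sym_{\mathcal O_S}\mathcal Q$, the zero section is a closed immersion cut out by the augmentation ideal, so $S'\hookrightarrow S$ is a closed sub superscheme cut out by the ideal generated by the image of $\sigma:\mathcal Q\to \mathcal O_S$ corresponding to $s$. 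For the universal property: by construction a morphism $g:T\to S$ factors through $S'$ if and only if $g^{*}s=g^{*}0$ in $\mathbb V(\mathcal Q)(T)$, which by the universal property of $\mathbb V(\mathcal Q)$ is exactly the condition $\phi_T=0$, i.e.\ $Y_T\subset Z_T$.

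There is no serious obstacle here: the entire argument is a formal consequence of Corollary \ref{Hom_Superscheme}. The only point requiring care is to ensure that the condition $Y_T\subset Z_T$ really is the \emph{even} condition that $\phi_T=0$, which it is because inclusions of closed sub superschemes are even morphisms, so the use of the even Hom-functor $\mathfrak{Hom}_S(G,F)_{+}$ is appropriate and no additional odd data needs to be tracked.
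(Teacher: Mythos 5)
Your proof is correct, and it takes a genuinely different route from the paper. The paper's proof proceeds via the flattening stratification (Theorem \ref{FlatStrat_Projective}) applied to $\mathcal O_{Y\times_X Z}$: one first obtains a family of disjoint \emph{locally closed} strata on which the incidence condition holds, and then upgrades these to closed substacks by a separate valuative-criterion argument, finally taking their union. Your approach instead reduces the incidence condition directly to the vanishing of the canonical even homomorphism $\phi:\mathcal I_Z\to\mathcal O_Y$, represents the relevant Hom functor by the linear superscheme $\mathbb V(\mathcal Q)$ via Corollary \ref{Hom_Superscheme}, and defines $S'$ as the scheme-theoretic intersection of the section $s$ induced by $\phi$ with the zero section, which is automatically a closed immersion cut out by the ideal $\operatorname{im}(\sigma:\mathcal Q\to\mathcal O_S)$. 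This is cleaner and more direct, and it mirrors exactly the argument the paper already uses in Lemma \ref{Closed_Immersion}, where a morphism is shown to vanish over a closed sub-superscheme realized as $\iota^{-1}(\mathbf 0)$ inside $\mathbb V(\mathcal Q)$. The one place where your sketch is slightly loose is the description of $\phi_T$ as a composition of an injection followed by a surjection: since $\mathcal I_Z$ need not be flat over $S$, the base change $(\mathcal I_Z)_T\to\mathcal O_{X_T}$ need not be injective. But what matters is that $\phi_T$ is by definition the base change of $\phi$, and its image in $\mathcal O_{Y_T}$ is the image of $\mathcal I_{Z_T}$, so $\phi_T=0$ is equivalent to $\mathcal I_{Z_T}\subset\mathcal I_{Y_T}$, i.e.\ $Y_T\subset Z_T$, as you assert. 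What the paper's approach buys is a proof that parallels the bosonic argument and makes the role of properness (via the valuative criterion) explicit; what your approach buys is brevity and a unified treatment with Lemma \ref{Closed_Immersion}.
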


\begin{proof}
Let us first prove a weaker version: there is a set of disjoint locally closed sub superschemes $\{S_i\}$ of $S$, such that for any morphism of superschemes $g:T\to S$, $Y_T$ is a closed sub superscheme of $Z_T$ if and only if $g$ factors through some $S_i$.

\newl Let $g:T\to S$ be a morphism such that $Y_T$ is a closed sub superscheme of $Z_T$, then $(Y\times_X Z)_T=Y_T\times_{X_T} Z_T=Y_T$ is flat over $T$, so $g$ factors through one of the flattening strata $S'_i$ for the coherent sheaf $\mcal O_{Y\times_X Z}$. On the other hand, on the strata $S'_i$, $Y\times_X Z$ is flat over $S'_i$ and is a closed sub superscheme of $Y$, defined by an ideal $J_i$ which is flat over $S'_i$, then pull-back of $Y\times_X Z$ to $T$ is $Y_T$ exactly when the pull back of the ideal $J_i$ vanishes, i.e. $g$ factor through $S'_i\setminus \pi(\Supp J_i)$, the latter is open since $\Supp J_i$ and $\pi$ is proper. Now we take $S_i=S'_i\setminus \pi(\Supp J_i)$ and it satisfies the condition.

\newl Next we show that those $S_i$ are closed. By Valuative Criterion, it is enough to assume that $S=\Spec R$ where $R$ is a DVR\footnote{DVR stands for discrete valuation ring.}, and $(Y\times_X Z)_{\eta}=Y_{\eta}$ where $\eta$ is the generic point of $S$. Then the ideal of $\mcal O_Y$ defining the closed sub superscheme $Y\times_X Z$ is supported on the special fiber, hence it must be trivial because $Y$ is flat over $S$. Finally we take $S'=\cup_iS_i$ and it satisfies the condition.
\end{proof}

\subsubsection{The Quot Superscheme}
Let $S$ be a Noetherian superscheme, $\pi:X\to S$ is a quasi-projective morphism, $\mathcal L$ is relatively very ample line bundle of rank $(1|0)$ on $X$. Then for any $E\in \Coh(\mathcal O_X)$ and any pair of polynomials $\Phi(t),\Psi(t)\in \mathbb Q[t]$, the functor $\mathfrak{Quot}_{E/X/S}^{(\Phi|\Psi),\mathcal L}$ on the category of superschemes over $S$ is defined by 
\begin{align*}
(f:T\to S)\mapsto \{E_T\twoheadrightarrow \mcal F|\mcal F\text{ is flat over }T\text{ with Hilber polynomials }(\Phi|\Psi)\}.
\end{align*}

\begin{thr}
	The functor $\mathfrak{Quot}_{E/X/S}^{(\Phi|\Psi),\mathcal L}$ is represented by a quasi-projective $S$-superscheme $\Quot _{E/X/S}^{(\Phi|\Psi),\mathcal L}$. Moreover, if $\pi:X\to S$ is projective then $\Quot _{E/X/S}^{(\Phi|\Psi),\mathcal L}$ is projective.
\end{thr}

The proof will be carried out in several steps.

\begin{lem}\label{Open_Immersion}
	Let $X_1\hookrightarrow X_2$ be an open immersion between quasi-projective $S$-superschemes, $E\in \Coh(\mathcal O_{X_2})$, $\mathcal L$ is relatively very ample line bundle of rank $(1|0)$ on $X_2$. Then the natural transform $\mathfrak{Quot}_{E/X_1/S}^{(\Phi|\Psi),\mathcal L}\to\mathfrak{Quot}_{E/X_2/S}^{(\Phi|\Psi),\mathcal L}$ is an open immersion.
\end{lem}

\begin{proof}
	Given an $S$-superscheme $T$ and any point
	\begin{equation*}
	(\phi:E\to F)\in \mathfrak{Quot}_{E/X_2/S}^{(\Phi|\Psi),\mathcal L}(T).
	\end{equation*}
	The pullback of $F$ to $X_2\times _S U$ for any $T$-superscheme $U$ lies inside $X_1\times _S U$ if and only if the image of $U$ in $T$ does not intersect with $K=\pi\left(\Supp (F)\setminus X_1\right)$. Hence the $T$-value of $\mathfrak{Quot}_{E/X_1/S}^{(\Phi|\Psi),\mathcal L}\to\mathfrak{Quot}_{E/X_2/S}^{(\Phi|\Psi),\mathcal L}$ is represented by the complement of $K$ in $T$, which is open.
\end{proof}

\begin{lem}\label{Closed_Immersion}
	Let $\psi:E\to F$ be a surjective map between coherent sheaves on $X$. Then the natural transform $\mathfrak{Quot}_{F/X/S}^{(\Phi|\Psi),\mathcal L}\to\mathfrak{Quot}_{E/X/S}^{(\Phi|\Psi),\mathcal L}$ is a closed immersion.
\end{lem}

\begin{proof}
	Given an $S$-superscheme $T$ and any point
	\begin{equation*}
	(\phi:E\to G)\in \mathfrak{Quot}_{E/X/S}^{(\Phi|\Psi),\mathcal L}(T).
	\end{equation*}
	Let $K=\ker\psi$ and $\eta:K\to G$ be the composition of inclusion of $K$ in $E$ and projection $E\to G$. The pullback of $\phi$ to $X\times _S U$ for any $T$-superscheme $U$ factors through $F$ if and only if the pullback of $\eta$ to $X\times _S U$ is zero. Corollary \ref{Hom_Superscheme} says that $\eta$ gives rise to a section $\iota$ of a linear superscheme $\mathbb V(\mathcal Q)$ on $T$, and pullback to $X\times _S U$ is zero if and only if $U\to \mathbb V(\mathcal Q)$ lies in the zero section $\mathbf{0}$, i.e. $U\to T$ lies in the closed sub superscheme $\iota^{-1}(\mathbf{0})$. Hence the $T$-value of $\mathfrak{Quot}_{F/X/S}^{(\Phi|\Psi),\mathcal L}\to\mathfrak{Quot}_{E/X/S}^{(\Phi|\Psi),\mathcal L}$ is represented by $\iota^{-1}(\mathbf{0})$. 
\end{proof}

Representability of a Zariski functor can be proven Zariski locally on $S$, so we can assume that $S$ is affine and there is an open embedding $X\hookrightarrow Y$ followed by closed embedding $Y\hookrightarrow \mathbb P^{n|m}$, such that $\mathcal L^{\otimes a}$ is the pullback of $\mathcal O(1)$ on $\mathbb P^{n|m}$. Lemma \ref{Open_Immersion} implies the existence of prolongation of $E$ to a coherent sheaf $F$ on $Y$, and Lemma \ref{Closed_Immersion} implies that we can replace $F$ by $\pi^*(\pi_*F(N))(-N)$ for large enough $N$, in fact $\pi_*F(N)$ can be replaced by a free sheaf $\mathcal O_S^{p|q}$ which surjects onto it. This step reduces the original question to showing the representability of 
\begin{equation*}
\mathfrak{Quot}_{\mathcal O^{p|q}/\mathbb P_S^{n|m}/S}^{(\widetilde{\Phi}|\widetilde{\Psi}),\mathcal O(1)},
\end{equation*}
where $\widetilde{\Phi}(t)=\Phi(a(t+N))$ and $\widetilde{\Psi}(t)=\Psi(a(t+N))$. For simplicity, we keep using $\Phi$ and $\Psi$ instead of $\widetilde{\Phi}$ or $\widetilde{\Psi}$.

\newl By $m$-regularity (Theorem \ref{m-Regularity}), there exists $m$ such that for any geometric point $s$ of $S$, and any quotient $\phi:\mathcal O^{p|q}\to \mathcal F$ on $\mathbb P_s^{n|m}$, the sheaves $\mathcal O^{p|q}(r)$, $\mathcal F(r)$, $\mathcal G(r)$ (where $\mathcal G(r)=\ker\phi(r)$) are generated by their global sections and have trivial $\H^i$ for $i>0$, whenever $r>m$. Then for any $S$-superscheme $T$ and 
\begin{equation*}
\left(\phi:\mathcal O^{p|q}\to \mathcal F\right)\in \mathfrak{Quot}_{\mathcal O^{p|q}/\mathbb P_S^{n|m}/S}^{(\Phi|\Psi),\mathcal O(1)}(T),
\end{equation*}
there is a short exact sequence 
\begin{equation*}
0\to \pi_*\mathcal G(r)\to \pi_*\mathcal O^{p|q}(r)\to\pi_*\mathcal F(r)\to 0,
\end{equation*}
which gives rise to a point in $\Gras(\Phi(r)|\Psi(r),\pi_*\mathcal O^{p|q}(r))$.

\begin{lem}
	The natural transform $i:\mathfrak{Quot}_{\mathcal O^{p|q}/\mathbb P_S^{n|m}/S}^{(\Phi|\Psi),\mathcal O(1)}\to \Gras(\Phi(r)|\Psi(r),\pi_*\mathcal O^{p|q}(r))$ is represented by locally closed sub superscheme of $\Gras(\Phi(r)|\Psi(r),\pi_*\mathcal O^{p|q}(r))$.
\end{lem}

\begin{proof}
	Let the kernel of the canonical map $\pi^*\pi_*\mathcal O^{p|q}(r)\to \mathcal O^{p|q}(r)$ be $\mathcal K$, then the image of $\mathcal K$ in $\pi^*\pi_*\mathcal F(r)$ is the kernel of canonical map $\pi^*\pi_*\mathcal F(r)\to \mathcal F(r)$, hence $\mathcal F(r)$ is uniquely determined by the quotient $\pi_*\mathcal O^{p|q}(r)\to\pi_*\mathcal F(r)$. In other words, $i$ is a monomorphism.
	
	\newl Let $\mathcal Q$ be the universal quotient of $\pi_*\mathcal O^{p|q}(r)$ of rank $(\Phi(r)|\Psi(r))$, let $\mathcal N$ be the cokernel of the canonical map $\mathcal K\to\pi^*\mathcal Q$, then the formation of $\mathcal N$ commutes with base change and comes from a $\mathcal F(r)$ if and only if it is flat over the base and has characteristic polynomial $(\Phi(t+r)|\Psi(t+r))$. It follows that the lemma is a consequence of the flattening stratification (Theorem \ref{FlatStrat_Projective}).
\end{proof}

Finally it remains to show that if $\pi$ is projective then $\Quot _{E/X/S}^{(\Phi|\Psi),\mathcal L}$ is proper over $S$. This follows from the lemma below.

\begin{lem}[Valuative Criterion]
	Let $A$ be a discrete valuation ring, with fractional field $K$. Let $X$ be a proper superscheme on $\Spec A$ and $E\in \Coh(\mathcal O_X)$. If $E_K\to F$ is a quotient of $E_K$ on the generic fiber $X_K$, then there exists a quotient $E\to G$ on $X$ such that $G_K=F$ and $G$ is flat over $A$.
\end{lem}

\begin{proof}
	Take $G$ to be the image of $E$ in $j_*F$ under the canonical map $E\to j_*F$ where $j:X_K\to X$ is the open immersion. Clearly $G$ restricts to $F$. $G$ is torsion free as an $A$-module since it is a submodule of a torsion free $A$-module $j_*F$, so $G$ is flat over $A$.
\end{proof}

The followings are corollaries to the existence of Quot superschemes, which are analogous to the classical case, and the proofs are formally identical to those in the \textit{loc.cit.}, which will be omitted here.
\begin{thr}[Theorem 5.23 of \cite{FantechiIllusieGottsche2005}]\label{Superscheme_of_Morphisms}
	Let $S$ be a Noetherian superscheme, let $X$ be a projective superscheme over $S$, and let $Y$ be quasi-projective superscheme over $S$. Assume moreover that X is flat over $S$. Then the functor $$\mathfrak{Map}_S(X,Y):(f:T\to S)\mapsto \Map_T(X_T,Y_T),$$ is represented by an open sub superscheme $\Map_S(X,Y)$ of $\Hilb_{X\times _S Y/S}$.
\end{thr}

\begin{thr}[Theorem 5.25 of \cite{FantechiIllusieGottsche2005}]\label{Superscheme_of_Schematic_Quotient}
	Let $S$ be a Noetherian superscheme, and let $X \to S$ be a quasi-projective morphism. Let $f: R\to X\times_S X$ be schematic equivalence relation on $X$ over $S$, such that the projections $f_1,f_2 :R\rightrightarrows X$ are proper and flat. Then a  schematic quotient $X\to Q$ exists over $S$. Moreover, $Q$ is quasi-projective over $S$, the morphism $X\to Q$ is faithfully flat and projective, and the induced morphism 
	$(f_1,f_2):R\to X\times _Q X$ is an isomorphism.
\end{thr}

\subsubsection{Deformation Theory and Super Analog of Schlessinger's Theorem}
Let $k$ be a fixed algebraically-closed ground field, and let the category of local Artinian $k$-superalgebras be $\sArt_{k}$, and let the category of local complete Noetherian $k$-superalgebras be $\sLoc_{k}$. We consider a covariant functor $F:\sArt_{k}\to \Sets$. One such example is taking an $R\in \sLoc_{k}$ and letting $h_R:A\to \Hom_{\sLoc_{k}} (R,A)$. By Yoneda Lemma, a natural transform $\varphi:h_R\to F$ is given by a compatible family $\{\xi_n\in F(R/\mathfrak{m}_R^n)\}$.
\begin{definition}\label{Definition_Versality}
	Suppose that $F:\sArt_{k}\to \Sets$ is a covariant functor, $R\in \sLoc_{k}$ and $\varphi:h_R\to F$ is a natural transform, represented by $\xi:=\{\xi_n\in F(R/\mathfrak{m}_R^n)\}$. Then 
	\begin{enumerate}
		\item[•] $(R,\xi)$ is called a \textbf{versal} family if $\varphi:h_R\to F$ is strongly surjective, i.e. $\forall A\in \sLoc_{k}$, $h_R(A)\to F(A)$ is surjective, and for every surjection $B\to A$ in $\sLoc_{k}$, $h_R(B)\to h_R(A)\times_{F(A)}F(B)$ is surjective.
		\item[•] $(R,\xi)$ is called a \textbf{miniversal} family if it's versal, and maps $$h_R(k[\theta])\to F(k[\theta])\text{ , }h_R(k[t]/(t^2))\to F(k[\theta]/(t^2)),$$are bijective.
		\item[•] $(R,\xi)$ is called a \textbf{universal} family if $\varphi$ is an isomorphism.
	\end{enumerate}
\end{definition}
Note that analogous to the classical case, if the the functor $F:\sArt_{k}\to \Sets$ satisfies that $F(k)$ has just one element and 
\begin{equation}
F(A\times_k k[\theta])\to F(A)\times F(k[\theta])\text{ , }F(A\times_k k[t]/(t^2))\to F(A)\times F(k[t]/(t^2)),
\end{equation}
are bijective for all $A\in \sArt_k$, then $F(k[t,\theta]/(t^2,t\theta))$ has a natural structure of $k$-super vector space, which shall be denoted by $T_F=T_F^+\oplus T_F^-$, where $T_F^+=F(k[t]/(t^2)), T_F^-=F(k[\theta])$.

\begin{thr}[Schlessinger's Theorem]\label{Schlessinger's Theorem}
	The functor $F:\sArt_{k}\to \Sets$ has a miniversal family if and only if
	\begin{enumerate}
		\item[$(S_0)$] $F(k)$ has just one element.
		\item[$(S_1)$] $F(A'\times_A A'')\to F(A')\times_{F(A)}F(A'')$ is surjective for every small \footnote{An extension $0\to I\to A''\to A\to 0$ is called small if $I$ is one dimensional.} extension $A''\to A$.
		\item[$(S_2)$] The map in $(S_1)$ is bijective when $A=k$ and $A''=k[\theta]$ or $k[t]/(t^2)$.
		\item[$(S_3)$] $T_F$ is a finite-dimensional $k$-super vector space.
	\end{enumerate}
	Moreover, $F$ has a universal family if and only if in addition
	\begin{enumerate}
		\item[$(S_4)$] For every small extension $p:A''\to A$ with even kernel (resp. odd kernel) and for every $\eta\in F(A)$ which $p^{-1}(\eta)$ is nonempty, the group action of $T_F^+$ (resp. $T_F^-$) on $p^{-1}(\eta)$ is bijective.
	\end{enumerate}
\end{thr}
The proof is almost the same as in the classical case (e.g. Theorem 16.2 of \cite{Hartshorne2010}), except that all ideals in the original proof should be $\mathbb Z_2$-graded in the super case, and the proof in the \textit{loc.cit.} works formally.

\subsection{Stable $\mscr{N}=1$ SUSY Curves}
Assume that a triple of natural numbers $(\g,\ns,\ra)$ satisfies $\ra\in 2\mathbb Z_{\ge 0}$, and $$2\g-2+\ns+\ra>0.$$
\begin{definition}[{\bf Stable SUSY Curves}]\label{def:stable SUSY curves}
	Given a superscheme $S$, a family of stable SUSY curves of genus $\g$ with $\ns$ {\normalfont NS} punctures and $\ra$ {\normalfont R} punctures over $S$, is a finite presented proper flat relatively Cohen-Macaulay $S$-superscheme $X$, such that the smooth open locus $X^{\text{\normalfont sm}}\subset X$ is of relative dimension $(1|1)$ and is dense on each geometric fiber, together with
	\begin{enumerate}
		\item Sections $E=E_1\sqcup\cdots\sqcup E_{\sns}$ inside $X^{\text{\normalfont sm}}$. $E_i$ will be called the $i$'th {\normalfont NS} punctures;
		\item A closed sub superscheme $\mscr D=\mscr D_1\sqcup \cdots\sqcup \mscr D_{\sra}$ inside $X^{\text{\normalfont sm}}$, proper and flat over $S$, of codimension $(1|0)$, and $\mscr D_{i,s}$ is a point for every geometric point $s\in S$. $\mscr D_i$ will be called the $i$'th {\normalfont R} puncture;
		\item An $\mathcal O_S$-linear derivation 
		\begin{equation}
		\delta:\mathcal O_X\to \omega_{X/S}(\mscr D),
		\end{equation}
		or equivalently a homomorphism $\delta:\Omega^1_{X/S}\to \omega_{X/S}(\mscr D)$. $\delta$ is sometimes called the supersymmetric structure, or the SUSY structure;
	\end{enumerate}
	such that
	\begin{enumerate}
		\item[A.] For every geometric fiber $X_s$, $(X_s,\mcal O_{X_s}^+, E_s, \mscr D_s)$ is a punctured stable curve of genus $\g$, and $\delta^-$ induces isomorphism of $\mcal O_{X_s}^+$ modules$$\delta^-:\mathcal O_{X_s}^-\cong \omega_{X_s/s}^-(\mscr D_s).$$Consequently the homomorphism $\delta:\Omega^1_{X/S}\to \omega_{X/S}(\mscr D)$ is surjective on the smooth locus of $X$.
		\item[B.] Restricted on $X^{\text{\normalfont sm}}$, the dual of the surjective homomorphism $\delta:\Omega^1_{X/S}\to \omega_{X/S}(\mscr D)$, write 
		\begin{equation*}
		0\to D\to T_{X/S}\to T_{X/S}/D\to 0,
		\end{equation*}
		where $D=\omega^{\vee}_{X/S}(-\mscr D)$, induces an injective homomorphism $D^{\otimes 2}\to T_{X/S}/D$ via $D_1\otimes D_2\mapsto \frac{1}{2}[D_1,D_2]$ and the image is $ T_{X/S}/D(-\mscr D)$.
	\end{enumerate}
\end{definition}

\begin{definition}\label{def:the definition of isomorphism of super-Riemann surfaces}
	An isomorphism between stable SUSY curves $(X,E,\mscr D,\delta)$ and $(X',E',\mscr D',\delta')$ over $S$ is an $S$-isomorphism of superschemes $f:X\to X'$ such that $f^{-1}E'=E, f^{-1}\mscr D'=\mscr D$, and $f^*\delta'=\delta$.
\end{definition}

\begin{definition}\label{def:the definition of the stack of super-Riemann surfaces}
	The category of stable SUSY curves is the fibered groupoid over the category of superschemes, where the fiber over a superscheme $S$ is the groupoid of stable SUSY curve of genus $\g$ with $\ns$ NS punctures and $\ra$ R punctures. Denote this category by $\overline{\mscr{SM}}_{\sg,\sns,\sra}$.
\end{definition}

The following lemma is a direct consequence of the definition of stable SUSY curves:
\begin{lem}\label{Lemma_Limit_Stable_SUSY_Curves}
	Let $(I,\le)$ be a directed set, suppose that $\{A_{i}\}_{i\in I}$ is an directed system of superrings, with inductive limit $A$, then the natural functor 
	\begin{equation*}
	\lim_{\substack{\longrightarrow\\i}}\overline{\mscr{SM}}_{\sg,\sns,\sra}(\Spec A_i)\to\overline{\mscr{SM}}_{\sg,\sns,\sra}(\Spec A),
	\end{equation*}
	is an equivalence of categories.
\end{lem}

\begin{proof}
	{\bf\small Fully faithful:} first forgetting the derivation, then the natural map 
	\begin{equation*}
	\lim_{\substack{\longrightarrow\\i}}\Map_{\Spec A_i}(X_i,Y_i)\to\Map_{\Spec A}(X,Y),
	\end{equation*}
	is an equivalence of sets, by Lemma \ref{Lemma_Limit_of_Morphisms}. Now a morphism $f$ preserves the derivation if and only if $f^*\delta_Y-\delta_X:\Omega^1_{X/S}\to \omega_{X/S}(\mscr D)$ is zero, then the result follows from Lemma \ref{Lemma_Limit_of_Map_Between_Sheaves}.
	
	\newl {\bf\small Essentially surjective:} the functor is essentially surjective if we forget the conditions A and B in the Definition \ref{def:stable SUSY curves}, this follows from the Lemmas \ref{Lemma_Limit_of_Morphisms}, \ref{Lemma_Limit_of_Superschemes}, \ref{Lemma_Limit_of_Map_Between_Sheaves}, and Proposition \ref{Limit_Flatness}. Note that $X\to S$ is proper if and only if $X_{\text{bos}}\to S_{\text{bos}}$ is proper, so we can apply Theorem 8.10.5 of \cite{GrothendieckEGAIV1967} to ensure that there exists $i\in I$ such that $X_i\to S_i$ is proper.
	
	\newl Now if the derivation $\delta$ on $X$ satisfies condition A, then the derivation $\delta_i$ on $X_i$ automatically satisfies condition A, since a homomorphism of sheaves on a variety over a field is an isomorphism if and only if it is an isomorphism under some field extension. If the derivation $\delta$ on $X$ satisfies condition B, then it follows that the natural map $D_i^{\otimes 2}\to T_{X_i/S_i}/D_i$ is injective on $X^{\text{\normalfont sm}}_i$ since it is a map between locally free $\mcal O_{X_i}$ sheaves and is injective on geometric fibers $X^{\text{\normalfont sm}}_{i,s}$ for all geometric point $s$. Moreover after increasing $i$, the composition $D_i^{\otimes 2}\to T_{X_i/S_i}/D_i\to T_{X_i/S_i}/D_i\otimes \mcal O_{\mscr D_i}$ becomes zero, since it is zero after taking limit, so the image of  $D_i^{\otimes 2}$ in $T_{X_i/S_i}/D_i$ is a subsheaf of $T_{X_i/S_i}/D_i(-\mscr D_i)$, thus it must be $T_{X_i/S_i}/D_i(-\mscr D_i)$ since the induced map $D_i^{\otimes 2}\to T_{X_i/S_i}/D_i(-\mscr D_i)$ is an isomorphism on geometric fibers $X^{\text{\normalfont sm}}_{i,s}$. This concluds that $\delta_i$ satisfies condition B.
\end{proof}

By usual argument (see e.g. Lemma $3.1$ of \cite{DonagiWitten2013a}), the condition B in Definition \ref{def:stable SUSY curves} tells that away from an R puncture, locally there is a coordinate system $(z|\theta)$, such that $\delta$ can be written as 
\begin{equation*}
\delta:f\mapsto (\partial_{\theta}f+\theta\partial_zf)[\mathrm{d}z|\mathrm{d} \theta],
\end{equation*}
and nearby a R puncture there is a local coordinate system $(z|\theta)$, such that $z=0$ is the location of R puncture and $\delta$ can be written as 
\begin{equation*}
\delta:f\mapsto (\partial_{\theta}f+\theta z\partial_zf)\left[\frac{\mathrm{d}z}{z}\bigg{\vert}\mathrm{d} \theta\right].
\end{equation*}
On the other hand, suppose that $S$ is a bosonic scheme, then condition A in the Definition \ref{def:stable SUSY curves} tells that $(X,\mcal O_{X}^+, E, \mscr D)$ is a pointed stable curve of genus $\g$. The Cohen-Macaulay condition implies that $\mcal O_{X}^-$ is a relatively torsion free $\mcal O_{X}^+$ sheaf, and it is of rank one because it is a line bundle on $U$. This implies that 
\begin{equation}
\mcal O_{X}^-\cdot\mcal O_{X}^-=0,
\end{equation}
since it is generically zero and $\mcal O_{X}^+$ is relatively Cohen-Macaulay. Moreover the odd part of $\delta$ induces an isomorphism $$\delta^-:\mathcal O_{X}^-\cong \omega_{X/S}^-(\mscr D),$$ since $\delta^-_s$ is an isomorphism on each fiber $X_s$. Note that the relative dualizing complex (in fact a sheaf) is $$\omega_{X/S}=R\underline{\Hom}_{\mathcal O_X^+}(\mathcal O_X,\omega_{X_{\text{\normalfont bos}}/S})=\underline{\Hom}_{\mathcal O_X^+}(\mathcal O_X,\omega_{X_{\text{\normalfont bos}}/S}),$$ and it decomposes as $\mathcal O_X^+$-module into
\begin{equation*}
\omega_{X/S}^+=\omega_{X_{\text{\normalfont bos}}/S}\text{ , }\omega_{X/S}^-\cong\underline{\Hom}_{\mathcal O_X^+}(\mathcal O_X^-,\omega_{X_{\text{\normalfont bos}}/S}).
\end{equation*}
This realizes $(\mathcal O_X^-,\delta^-)$ as a spin structure on the punctured stable curve $(X,\mcal O_{X}^+, E, \mscr D)$. Finally, the condition B Definition \ref{def:stable SUSY curves} uniquely determines the even part of $\delta$, in fact $\delta ^+: \mathcal O_X^+\to \omega_{X/S}^+(\mscr D)$ is the composition of exterior differential $d:\mathcal O_X^+\to \omega_{X/S}^+$ with natural embedding $\omega_{X/S}^+\hookrightarrow \omega_{X/S}^+(\mscr D)$, because this is the case on the open $S$-dense locus $U$ (see the coordinate presentation of $\delta$ above), and $\mcal O_X\to j_*\mcal O_U$ is injective where $j:U\hookrightarrow X$ is the open embedding. As a summary,
\begin{proposition}
	Given a bosonic scheme $S$, a family of stable $\mscr N=1$ SUSY curves of genus $\g$ with $\ns$ {\normalfont NS} punctures and $\ra$ {\normalfont R} punctures over $S$, is a superscheme $(X,\mathcal O_X)$, flat over $S$, such that $\mathcal O_X=\mathcal O_X^+\oplus \mathcal O_X^-$, $\mathcal O_X^-$ has zero multiplication with itself, $X_{\text{\normalfont bos}}:=(X,\mathcal O_X^+)$ is a relative stable curve with puncturing divisors $E\sqcup \mscr D$, with $\deg E_s=\ns,\deg\mscr D_s=\ra$, and $\mathcal O_X^-$ is a rank one relatively torsion free sheaf over $X_{\text{\normalfont bos}}$ with an isomorphism
	\begin{equation}
	a:\mathcal O_X^-\to \underline{\Hom}_{\mathcal O_X^+}(\mathcal O_X^-,\omega _{X_{\text{\normalfont bos}}/S}(\mscr{D})).
	\end{equation}
\end{proposition}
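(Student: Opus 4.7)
The plan is to establish the equivalence in two directions, beginning with the forward implication which decomposes a family of stable SUSY curves into its underlying spin-theoretic data, and then reversing the construction. For the forward direction, I would start from a family $(X,E,\mscr D,\delta)$ in $\overline{\mscr{SM}}_{\sg,\sns,\sra}(S)$ and exploit the fact that $S$ is bosonic: flatness over $S$ together with condition A in Definition \ref{def:stable SUSY curves} forces $(X,\mathcal{O}_X^+,E,\mscr D)$ to be a relative stable curve with the correct degrees of puncturing divisors, since the condition is preserved under base change and was checked on geometric fibers. The $\mathbb Z_2$-grading $\mathcal{O}_X=\mathcal{O}_X^+\oplus \mathcal{O}_X^-$ is automatic from the superspace structure, and the relative Cohen-Macaulay hypothesis combined with Lemma \ref{Equivalent_Defn_of_CM} identifies $\mathcal{O}_X^-$ as a relatively Cohen-Macaulay $\mathcal{O}_X^+$-module of the same dimension as $X$, hence relatively torsion-free. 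Its rank is one because on the dense smooth locus $X^{\text{sm}}$ it is a line bundle by the relative dimension $(1|1)$ requirement.

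Next I would establish that $\mathcal{O}_X^-\cdot\mathcal{O}_X^-=0$. Since the product is an $\mathcal{O}_X^+$-module homomorphism $\mathcal{O}_X^-\otimes_{\mathcal{O}_X^+}\mathcal{O}_X^-\to \mathcal{O}_X^+$, and is identically zero on the smooth open locus (where the fibers of $\mathcal{O}_X^-$ are one-dimensional over a field of characteristic zero and odd, so square to zero), the relative Cohen-Macaulay property of $\mathcal{O}_X^+$ promotes this generic vanishing to global vanishing via the standard torsion-freeness argument. For the isomorphism $a$, I would invoke the decomposition of the relative dualizing sheaf $\omega_{X/S}=\underline{\Hom}_{\mathcal{O}_X^+}(\mathcal{O}_X,\omega_{X_{\text{bos}}/S})$, which splits according to the $\mathbb Z_2$-grading into $\omega_{X/S}^+\cong\omega_{X_{\text{bos}}/S}$ and $\omega_{X/S}^-\cong \underline{\Hom}_{\mathcal{O}_X^+}(\mathcal{O}_X^-,\omega_{X_{\text{bos}}/S})$, and then set $a\equiv\delta^-$ after twisting by $\mscr D$; it is an isomorphism because condition A forces it to be so on each geometric fiber, and both source and target are relatively torsion-free.

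For the reverse direction, given the data $(X,\mathcal{O}_X^+,\mathcal{O}_X^-, E, \mscr D, a)$, I would reconstruct $\delta$ by defining $\delta^-\equiv a$ (composed with the identification $\omega_{X/S}^-\cong \underline{\Hom}_{\mathcal{O}_X^+}(\mathcal{O}_X^-,\omega_{X_{\text{bos}}/S})$ twisted by $\mscr D$) and defining $\delta^+$ on the smooth locus to be the exterior differential $d:\mathcal{O}_X^+\to \omega_{X_{\text{bos}}/S}\hookrightarrow \omega_{X_{\text{bos}}/S}(\mscr D)$, then extending by injectivity of $\mathcal{O}_X\to j_*\mathcal{O}_{X^{\text{sm}}}$, which holds because $X$ is relatively Cohen-Macaulay and $X^{\text{sm}}$ is relatively dense. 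The compatibility condition that $\delta$ is actually an $\mathcal{O}_S$-derivation, i.e. $\delta(fg)=f\delta(g)+(-1)^{|f|}\delta(f)g$, needs to be checked, and this is where the local coordinate description of $\delta$ near both NS and R punctures, recorded just before the proposition, becomes essential.

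The main obstacle will be verifying condition B in Definition \ref{def:stable SUSY curves} for the reconstructed $\delta$, i.e. that the induced map $D^{\otimes 2}\to T_{X/S}/D$ has image exactly $T_{X/S}/D(-\mscr D)$. Since this is a statement about the image of a map of locally free sheaves on $X^{\text{sm}}$ which holds on each geometric fiber (this is essentially the defining property of the NS/R split of the SUSY structure, visible in the local coordinate expressions), I would reduce to fiberwise analysis and use that the formation of $D$, $T_{X/S}$, and their commutator pairing all commute with base change when the family is flat; the fiberwise statement is then the classical local characterization of NS versus R superconformal structure. The uniqueness of $\delta^+$ up to its values on the smooth locus (hence forced to be the exterior differential) ensures that the two constructions are mutually inverse, giving the claimed equivalence.
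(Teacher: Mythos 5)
Your proof is correct and follows essentially the same line of reasoning as the paper: decompose the relative dualizing sheaf by parity, use the relative Cohen--Macaulay hypothesis to deduce torsion-freeness of $\mathcal{O}_X^-$ and the vanishing of $\mathcal{O}_X^-\cdot\mathcal{O}_X^-$, identify $a$ with $\delta^-$ via $\omega_{X/S}^-\cong\underline{\Hom}_{\mathcal O_X^+}(\mathcal O_X^-,\omega_{X_{\text{bos}}/S})$, and show $\delta^+$ is forced to be the exterior differential by density of the smooth locus and injectivity of $\mathcal{O}_X\to j_*\mathcal{O}_U$. The paper only treats the forward direction explicitly, deferring the converse to a subsequent remark, while you also sketch the reconstruction of $\delta$ from $a$; that sketch is reasonable but introduces no genuinely new ingredients beyond the local-coordinate description of the canonical models the paper already records.
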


\begin{remark}\label{Remark_Superspace_BosTruncationOfSRS}
	From the proposition, it is straightforward to see that the data of an stable $\mscr{N}=1$ punctured SUSY curve over a scheme $S$ is equivalent to the data of a relative punctured spin curve over $S$ {\normalfont(Definition \ref{Defn_Marked_RelSpinCurve})}.
\end{remark}
Recall that on a stable spin curve $(\mcal C,E,\mcal D,\mscr E)$ over a bosonic base scheme $S$, the line bundle $\mcal L=\omega_{\mcal C/S}(E+\mcal D)$ is relatively ample and $\mcal L^{\otimes 3}$ is relatively very ample. Consider a stable SUSY curve $X$ over $S$ with underlying spin curve $(\mcal C,E,\mcal D,\mscr E)$, since $p:X\to \mcal C$ is finite, we see that $p^*\mcal L$ is a relatively ample line bundle of rank $(1|0)$ on $X$. Consequently, we have

\begin{lem}\label{Stable_SUSY_Curve_Is_Locally_Projective}
	Let $S$ be a Noetherian affine superscheme and $X\to S$ is a stable SUSY curve, then $X$ is projective over $S$.
\end{lem}
\begin{proof}
	Suppose that $\mathfrak{n}_S^{N+1}=0$ and the restriction of $X$ on $(S,\mcal O_S/\mathfrak{n}^N_S)$ has an ample line bundle $\mcal L$, we first lift $\mcal L$ to $X$: This can be done locally on an affine covering $U_i$, since $\mcal L$ is locally isomorphic to the structure sheaf; from local to global we need to lift the transition function $\varphi_{ij}$ from $\Gamma(U_{ij},\mcal O_X/\mathfrak{n}^N_S\mcal O_X)_+$ to $\Gamma(U_{ij},\mcal O_X)_+$, we make an arbitrary choice $\widetilde{\varphi}_{ij}$; then 
	\begin{equation*}
	\widetilde{\varphi}_{ij}\circ \widetilde{\varphi}_{jk}\circ \widetilde{\varphi}_{ki}-1\in \Gamma(U_{ijk},\mathfrak{n}^N_S\mcal O_X)_+,
	\end{equation*}
	defines a \v Cech cocyle valued in the coherent sheaf $\mathfrak{n}^N_S\mcal O_X$, but $S$ is affine and the dimensions of fibers $X_s$ are $1$ hence $\H^2$ of a coherent sheaf vanishes; so we can modify $\widetilde{\varphi}_{ij}$ by an element in $1+\Gamma(U_{ij},\mathfrak{n}^N_S\mcal O_X)_+$ to force it satisfying cocycle conditions, hence local liftings glue to a global one. Now the lift $\widetilde {\mcal L}$ is ample since $X\to S$ is proper and $\mcal L_s$ is ample for all geometric point $s\in S$.
\end{proof}

\subsubsection{\' Etale Descent for $\overline{\mscr{SM}}_{\sg,\sns,\sra}$}
Suppose that $B$ is a Noetherian superring, $S=\Spec B$, $B_1$ is an \' etale $B$ superalgebra, let $B_2=B_1\otimes_B B_1$, $B_3=B_1\otimes_B B_1\otimes_B B_1$, $S_i=\Spec B_i$. Let $p_1,p_2$ be two canonical projections from $S_2$ to $S_1$. 

\begin{lem}\label{Lemma_Descent_for_Homomorphisms}
	Suppose that $M,M'$ are $B$-modules, let $M_i=M\otimes _B B_i$, $M'_i=M'\otimes _B B_i$, then the sequence
	\begin{equation}
	0\to \Hom_B(M,M')\to \Hom_{B_1}(M_1,M'_1)\rightrightarrows\Hom_{B_1}(M_2,M'_2),
	\end{equation}
	is exact.
\end{lem}

\begin{proof}
	Let $B=B^+\oplus B^-$m where $B^+$ and $B^-$ are its even and odd parts. Considering similar decomposition for $B_i$, then $B^+\to B^+_1$ is \' etale, $B_2^+=B_1^+\otimes _{B^+}B_1^+$, and $B_i^-=B_i^+\otimes _{B^+}B^-$, by Lemma \ref{Lemma_Equiv_Defn_of_Etale}. If a $B$-homomorphism $\varphi:M\to M'$ is zero after base change to $B_1$, then it is zero as a $B^+$-homomorphism, by classical descent theory, thus $\varphi=0$. If there is a $B_1$-homomorphism $\psi:M_1\to M'_1$ whose base changes to $B_2$ in two ways agree, then there exists a $B^+$-homomorphism $\rho:M\to M'$ whose base change to $B_1^+$ is $\psi$ (as $B_1^+$-homomorphism). Note that a $B^+$-homomorphism is a $B$-homomorphism if it commutes with the $B^-$ action on $M$ and $M'$. Now the base change of $\rho$ to $B_1^+$ commutes with the action of $B_1^-$, so $\rho$ commutes with the $B^-$ action, by classical descent thery. Hence we get a $B$-homomophism $\rho:M\to M'$ whose base change to $B_1$ is $\psi$.
\end{proof}

\begin{lem}\label{Lemma_Descent_for_Modules}
	Suppose that $N$ is a $B_1$ module with an ismorphism 
	\begin{equation*}
	\phi:p_1^*N\overset {\sim}{\longrightarrow} p_2^*N,
	\end{equation*}
	satisfying cocycle condition. Then there is a $B$ module $M$ with isomorphism $\psi:M\otimes_B B_1\overset {\sim}{\rightarrow} N$ such that $\phi$ is the canonical isomorphism $p_2^*\psi\circ p_1^*\psi^{-1}:p_1^*N\overset {\sim}{\rightarrow}p_2^* N$. Moreover, $N$ is a flat (resp. finite type) $B_1$-module if and only if $M$ is a flat (resp. finite type) $B$-module.
\end{lem}

\begin{proof}
	Let $B=B^+\oplus B^-$ where $B^+$ and $B^-$ are its even and odd parts, similar for $B_i$, then $B^+\to B^+_1$ is \' etale, $B_2^+=B_1^+\otimes _{B^+}B_1^+$, and $B_i^-=B_i^+\otimes _{B^+}B^-$, by Lemma \ref{Lemma_Equiv_Defn_of_Etale}. Then it follows that $p_i^*N\cong p_{i,\text{ev}}^*N$ when $N$ is considered as a $\mathbb Z_2$-graded $B_1^+$-module, and $p_{i,\text{ev}}:S_{2,\text{ev}}\to S_{1,\text{ev}}$ are two natural projections. Moreover, the isomorphism $\phi$ preserves $\mathbb Z_2$-grading. By classical descent theory, we have a $B^+$-module $M=M^+\oplus M^-$ with isomorphism $\psi:M^{\pm}\otimes_{B^+} B_1^+\overset {\sim}{\rightarrow} N^{\pm}$ such that $\phi$ is the canonical isomorphism $p_{2,\text{ev}}^*\psi\circ p_{1,\text{ev}}^*\psi^{-1}:p_{1,\text{ev}}^*N^{\pm}\overset {\sim}{\rightarrow}p_{2,\text{ev}}^* N^{\pm}$. Note that $\mathbb Z_2$-graded $B^+$ module $M=M^+\oplus M^-$ has a $B$-module structure if and only if there is a pair of $B^+$-homomorphism $a:B^-\otimes_{B^+}M^{\pm}\to M^{\mp}$ such that the diagram
	\begin{center}
		\begin{tikzcd}
		B^-\otimes_{B^+}B^-\otimes_{B^+}M^{\pm}\arrow[r,"\text{Id}\otimes a"] \arrow[dr,"\mu\otimes \text{Id}",swap] &B^-\otimes_{B^+}M^{\mp}\arrow[d,"a"]\\
		&M^{\pm}
		\end{tikzcd}
		,
	\end{center}
	commutes. Now the existence of homomorphism $a$ comes from the classical descent theory, and the commutativity of the diagram follows as well. The statement of flatness comes from the fact that $B\to B_1$ is faithfully flat, and statement of finiteness comes from reducing $B$ to $B/\mathfrak{n}_B$ and use the classcal descent theory.
\end{proof}

\begin{remark}
	In fact we can relax the condition $B\to B_1$ being \' etale to that $B\to B_1$ is flat and $\mathfrak{n}_B$ generates $\mathfrak{n}_{B_1}$, then the trick of decomposing $B=B^+\oplus B^-$ still applies and {\normalfont Lemmas \ref{Lemma_Descent_for_Homomorphisms} and \ref{Lemma_Descent_for_Modules}} is still true under this condition.
\end{remark}

Consider a stable SUSY curve $X_1$ over $S_1$, and an isomorphism 
\begin{equation*}
\varphi: X_1\times_{S_1,p_1}S_2\overset {\sim}{\longrightarrow} X_1\times_{S_1,p_2}S_2,
\end{equation*}
satisfying the cocycle condition. 

\begin{lem}\label{Lemma_Open_Affine_Cover_for_Descent}
	There exists an open affine cover $\{U_{\alpha}\to X_1\}$, such that $\varphi$ maps $U_{\alpha}\times_{S_1,p_1}S_2$ to $U_{\alpha}\times_{S_1,p_2}S_2$.
\end{lem}

\begin{proof}
	According to the affineness criterion Proposition \ref{Chevalley's Criterion}, it suffices to prove this lemma under the additional assumption that $B$ is purely bosonic\footnote{Then it follows from the definiton of \' etale morphism that $B_i$ are also bosonic.}, then $X_1$ is a spin curve $(\mcal C_1,E,\mscr D,\mscr E)$ together with the derivation induced from the structure map of the spinor sheaf $\mscr E$. By descent theory for classical schemes, we get a stable SUSY curve $X$ on $S$ with isomrphism $\psi:X\times _S S_1\to X_1$ such that $\varphi$ is the canonical isomorphism $p_2^*\psi\circ p_1^*\psi^{-1}:X_1\times_{S_1,p_1}S_2\to X_1\times_{S_1,p_2}S_2$. Take an open affine cover $\{V_{\alpha}\}$ of $X$, then the base change of $V_{\alpha}$ is what we want.
\end{proof}

\begin{proposition}\label{Etale_Descent_Stable_SUSY_Curve}
	There exists a stable SUSY curve $X$ over $S$ with an isomorphism $\psi:X\times _S S_1\to X_1$ such that $\varphi$ is the canonical isomorphism $p_2^*\psi\circ p_1^*\psi^{-1}:X_1\times_{S_1,p_1}S_2\overset {\sim}{\rightarrow}  X_1\times_{S_1,p_2}S_2$.
\end{proposition}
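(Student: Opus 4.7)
The strategy is the standard descent argument, executed piece by piece using the affine covering provided by Lemma \ref{Lemma_Open_Affine_Cover_for_Descent} together with the descent statements for homomorphisms and modules (Lemmas \ref{Lemma_Descent_for_Homomorphisms} and \ref{Lemma_Descent_for_Modules}). First, invoke Lemma \ref{Lemma_Open_Affine_Cover_for_Descent} to obtain an open affine cover $\{U_\alpha\to X_1\}$ preserved by $\varphi$ after base change. Writing $U_\alpha=\Spec C_{\alpha,1}$, the restriction of $\varphi$ furnishes each $C_{\alpha,1}$ with an isomorphism $\phi_\alpha:p_1^*C_{\alpha,1}\overset{\sim}{\to} p_2^*C_{\alpha,1}$ of $B_2$-superalgebras satisfying the cocycle condition. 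Applying Lemma \ref{Lemma_Descent_for_Modules} to the underlying module and using Lemma \ref{Lemma_Descent_for_Homomorphisms} to descend the multiplication and unit maps (which, under the base change along the faithfully flat cover $B\to B_1$, are forced to commute by the uniqueness part of \ref{Lemma_Descent_for_Homomorphisms}) yields a $B$-superalgebra $C_\alpha$ with an isomorphism $\psi_\alpha:C_\alpha\otimes_B B_1\cong C_{\alpha,1}$ compatible with $\phi_\alpha$. Set $U_\alpha^S\equiv \Spec C_\alpha$.

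Next, one glues the $U_\alpha^S$ to a superscheme $X$ over $S$. The transition isomorphisms between the $U_\alpha\cap U_\beta$ inside $X_1$ are morphisms of $B_1$-superalgebras compatible with the descent data $\phi_\alpha,\phi_\beta$ (since $\varphi$ is globally defined), hence they descend to morphisms of $B$-superalgebras by Lemma \ref{Lemma_Descent_for_Homomorphisms}; the cocycle condition for the gluing on $X$ is automatic from the same descent statement applied once more. The resulting $X\to S$ carries a canonical isomorphism $\psi:X\times_S S_1\cong X_1$ realizing $\varphi$. By Lemma \ref{Lemma_Descent_for_Modules}, being finite type, finite presented and flat all descend along the faithfully flat $B\to B_1$; the properties ``proper'', ``Cohen--Macaulay'', ``relative dimension $(1|1)$ on the smooth locus'' and ``smooth locus is fibrewise dense'' are fpqc-local on the base, so they are inherited from the corresponding properties of $X_1\to S_1$.

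It remains to descend the punctures and the SUSY structure. The NS sections $E_{1}=\bigsqcup E_{1,i}\hookrightarrow X_1^{\text{sm}}$ and the R divisor $\mscr D_{1}=\bigsqcup \mscr D_{1,i}\hookrightarrow X_1^{\text{sm}}$ are closed sub-superschemes whose defining ideal sheaves are quasi-coherent and carry descent data induced from $\varphi$; Lemma \ref{Lemma_Descent_for_Modules} applied on each affine $U_\alpha$ produces ideals in $\mathcal O_X$, and the local descriptions glue by Lemma \ref{Lemma_Descent_for_Homomorphisms}, yielding $E\sqcup \mscr D\hookrightarrow X^{\text{sm}}$ whose base change to $S_1$ recovers $E_1\sqcup \mscr D_1$; flatness of $\mscr D$ over $S$ and the fibrewise condition $\deg\mscr D_s=\ra$ again follow by faithful flatness and fpqc-locality. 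The derivation $\delta_1:\mathcal O_{X_1}\to\omega_{X_1/S_1}(\mscr D_1)$ is an $\mathcal O_{S_1}$-linear map of coherent sheaves; since the formation of $\omega_{X/S}(\mscr D)$ commutes with the flat base change $S_1\to S$ and since $\delta_1$ is invariant under $\varphi$ (both sides inherit the SUSY structure from $X_1$), Lemma \ref{Lemma_Descent_for_Homomorphisms} produces a unique $\mathcal O_S$-linear map $\delta:\mathcal O_X\to\omega_{X/S}(\mscr D)$ whose pullback to $S_1$ is $\delta_1$.

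Finally one verifies conditions (A) and (B) of Definition \ref{def:stable SUSY curves} for $(X,E,\mscr D,\delta)$. Both conditions are statements about geometric fibres and about the injectivity/image of certain maps between coherent sheaves on $X^{\text{sm}}$; since every geometric point of $S$ lifts to a geometric point of $S_1$ (the map is faithfully flat), and since injectivity and equality of images of maps between coherent sheaves are fpqc-local (Lemma \ref{Lemma_Descent_for_Homomorphisms}), these conditions hold on $X$ because they hold on $X_1$. The main delicate point in the whole argument is the compatibility of $\omega_{X/S}(\mscr D)$ with base change along the faithfully flat cover, which is ensured by the Cohen--Macaulay hypothesis that makes the relative dualizing sheaf a genuine sheaf commuting with flat base change (Theorem \ref{Proper_Flat_Pushforward}); once this is in hand the descent of $\delta$ and the verification of (A), (B) are immediate. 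This completes the construction of the descended stable SUSY curve $(X,E,\mscr D,\delta)$ over $S$, and establishes the proposition.
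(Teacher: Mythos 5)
Your proof is correct and follows essentially the same approach as the paper: take the affine cover from Lemma \ref{Lemma_Open_Affine_Cover_for_Descent}, descend the coordinate superalgebras via Lemmas \ref{Lemma_Descent_for_Homomorphisms} and \ref{Lemma_Descent_for_Modules}, glue, descend the punctures and the derivation $\delta$ by the same lemmas, and verify conditions A and B on geometric fibers. The only cosmetic differences are that you elaborate on why the algebra structure descends and invoke fpqc-locality of the residual properties directly, whereas the paper handles properness and fiberwise density by reducing to the bosonic truncation and invoking classical descent; you also point to base-change compatibility of $\omega_{X/S}(\mscr D)$ via Theorem \ref{Proper_Flat_Pushforward}, which is not quite the right reference (that theorem concerns semicontinuity and cohomology base change; the relevant fact is the compatibility of the relative dualizing sheaf of a CM morphism with flat base change), though this does not affect the validity of the argument.
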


\begin{proof}
	We take the cover $\{U_{\alpha}\to X_1\}$ in the Lemma \ref{Lemma_Open_Affine_Cover_for_Descent}, then $\mcal O_X(U_{\alpha})$ is a flat $B_1$-superalgebra with a gluing isomorphism when base changed to $B_2$. By Lemmas \ref{Lemma_Descent_for_Homomorphisms} and \ref{Lemma_Descent_for_Modules}, there exists a flat and finite type $B$-superalgebra $C_{\alpha}$ with isomorphism $\psi_{\alpha}:C_{\alpha}\otimes _B B_1\overset {\sim}{\rightarrow} \mcal O_X(U_{\alpha})$ such that $\varphi|_{U_{\alpha}\times_{S_1,p_1}S_2}$ is $p_{2,\alpha}^*\psi\circ p_{1,\alpha}^*\psi^{-1}$. Moreover $C_{\alpha}$ is relatively Cohen-Macaulay since its geometric fibers are Cohen-Macaulay. By Lemma \ref{Lemma_Descent_for_Homomorphisms} again, we obtain a Zariski gluing data for $\{\Spec C_{\alpha}\}$, thus there exists an $S$-superscheme $X$ with an isomorphism $\psi:X\times _S S_1\to X_1$ and $\varphi=p_2^*\psi\circ p_1^*\psi^{-1}$. $X$ is proper and its smooth locus $X^{\text{sm}}$ is dense in every geometric fiber, since $X\times _S S_{\text{bos}}$ is the descent for $X_1\times _{S_{1}} S_{1,\text{bos}}$ and classical descent theory is applied. The puncture divisors $E,\mscr D$ are detrmined by their ideal of definition, so they also descent to $X$, by Lemma \ref{Lemma_Descent_for_Modules}. We also have a homomorphism $\delta:\Omega^1_{X/S}\to \omega_{X/S}(\mscr D)$ by Lemma \ref{Lemma_Descent_for_Homomorphisms}. $\delta$ satisfies Condition A in Definition \ref{def:stable SUSY curves} since it is a condition on geometric fibers; $\delta$ satisfies Condition B as well, since $D^{\otimes 2}\to T_{X/S}/D$ is injective after base change to $S_1$, and the image is $ T_{X/S}/D(-\mscr D)$ because it is after base change to $S_1$.
\end{proof}

\subsubsection{Local Structures of Stable SUSY Curves}
Before giving the charaterization of the local structures of stable SUSY curves, let us define some canonical local models. Take $S=\Spec A$, where $A=A^+\oplus A^-$ is a superring
\begin{enumerate}
	\item Smooth Canonical Model: $X=\Spec A[z|\theta]$, with derivation
	\begin{equation*}
	\delta:f\mapsto (\partial_{\theta}f+\theta\partial_zf)[\mathrm{d}z|\mathrm{d} \theta].
	\end{equation*}
	\item R Puncture Canonical Model: $X=\Spec A[z|\theta]$, with derivation 
	\begin{equation*}
	\delta:f\mapsto (\partial_{\theta}f+\theta z\partial_zf)\left[\frac{\mathrm{d}z}{z}\bigg{\vert}\mathrm{d} \theta\right].
	\end{equation*}
	\item R Node Canonical Model: $X=\Spec A[z_1,z_2|\theta]/(z_1z_2+t)$ where $t\in A^+,\alpha\in A^-$, with derivation 
	\begin{equation*}
	\delta:f\mapsto 
	\begin{cases}
	(\partial_{\theta}f+\theta z_1\partial_{z_1}f)\left[\frac{\mathrm{d}z_1}{z_1}\bigg{\vert}\mathrm{d} \theta\right]  & \text{on } D(z_1) \\
	(\partial_{\theta}f-\theta z_2\partial_{z_2}f)\left[-\frac{\mathrm{d}z_2}{z_2}\bigg{\vert}\mathrm{d} \theta\right]   & \text{on } D(z_2)
	\end{cases}.
	\end{equation*}
	\item NS Node Canonical Model: $X=\Spec A[z_1,z_2|\theta_1,\theta_2]/(z_1z_2+t^2,z_1\theta_2-t\theta_1,z_2\theta_1+t\theta_2,\theta_1\theta_2)$ where $t\in A^+$, with derivation 
	\begin{equation*}
	\delta:f\mapsto 
	\begin{cases}
	(\partial_{\theta_1}f+\theta_1 \partial_{z_1}f)[\mathrm{d}z_1|\mathrm{d} \theta_1] & \text{on } D(z_1) \\
	(\partial_{\theta_2}f+\theta_2 \partial_{z_2}f)[\mathrm{d}z_2|\mathrm{d} \theta_2] & \text{on } D(z_2)
	\end{cases}.
	\end{equation*}
\end{enumerate}
Let $S$ be a bosonic scheme and $X/S$ is a stable SUSY curve. Suppose that $S=\Spec A$ where $A$ is a complete local Noetherian ring with algebraically-closed residue field $\kappa$. Take a point $p$ in the special fiber of $X$, and take completion of $\mcal O_X$ at $p$, then according to the local structure of spin curves (Proposition \ref{Deform_Spin_Moduli}), there are three possibilities:
\begin{enumerate}
	\item[a.] $p$ is a smooth point: $\widehat{\mcal O}_{X,p}\cong\Spec A[\![ z|\theta]\!]$.
	\item[b.] $p$ is an R node: $\widehat{\mcal O}_{X,p}\cong\Spec A[\![z_1,z_2|\theta]\!]/(z_1z_2+t)$ where $t\in A$.
	\item[c.] $p$ is an NS node: $\widehat{\mcal O}_{X,p}\cong\Spec A[\![z_1,z_2|\theta_1,\theta_2]\!]/(z_1z_2+t^2,z_1\theta_2+t\theta_1,z_2\theta_1-t\theta_2,\theta_1\theta_2)$ where $t\in A$.
\end{enumerate}
To figure out the derivation, let us first write down a set of generators of $\omega_{X/S}$ as $\widehat{\mcal O}^+_{X,p}$ module in each case:
\begin{enumerate}
	\item[a.] $p$ is a smooth point: $\omega_{X/S}$ is generated by $\mathfrak b^-=[\mathrm{d}z|\mathrm{d} \theta]$ and $\mathfrak b^+=\theta[\mathrm{d}z|\mathrm{d} \theta]$.
	
	\item[b.] $p$ is a R node: $\omega_{X/S}$ is generated by 
	\begin{equation*}
	\mathfrak b^-=
	\begin{cases}
	\left[\frac{\mathrm{d}z_1}{z_1}\bigg{\vert}\mathrm{d} \theta\right]  & \text{on } D(z_1) \\
	\left[-\frac{\mathrm{d}z_2}{z_2}\bigg{\vert}\mathrm{d} \theta\right]   & \text{on } D(z_2)
	\end{cases}
	, \qquad 
	\mathfrak b^+=
	\begin{cases}
	\theta\left[\frac{\mathrm{d}z_1}{z_1}\bigg{\vert}\mathrm{d} \theta\right]  & \text{on } D(z_1) \\
	\theta\left[-\frac{\mathrm{d}z_2}{z_2}\bigg{\vert}\mathrm{d} \theta\right]   & \text{on } D(z_2)
	\end{cases}.
	\end{equation*}
	\item[c.] $p$ is a NS node: $\omega_{X/S}$ is generated by
	\begin{equation*}
	\mathfrak b_1^-=
	\begin{cases}
	\left[\mathrm{d}z_1{\vert}\mathrm{d} \theta_1\right]  & \text{on } D(z_1) \\
	\frac{t}{z_2}\left[\mathrm{d}z_2{\vert}\mathrm{d} \theta_2\right]   & \text{on } D(z_2)
	\end{cases}
	,\qquad 
	\mathfrak b_2^-=
	\begin{cases}
	-\frac{t}{z_1}\left[\mathrm{d}z_1{\vert}\mathrm{d} \theta_1\right]  & \text{on } D(z_1) \\
	\left[\mathrm{d}z_2{\vert}\mathrm{d} \theta_2\right]   & \text{on } D(z_2)
	\end{cases}.
	\end{equation*}
	and 
	\begin{equation*}
	\mathfrak b^+=
	\begin{cases}
	\frac{\theta_1}{z_1}\left[\mathrm{d}z_1{\vert}\mathrm{d} \theta_1\right]  & \text{on } D(z_1) \\
	-\frac{\theta_2}{z_2}\left[\mathrm{d}z_2{\vert}\mathrm{d} \theta_2\right]   & \text{on } D(z_2)
	\end{cases}.
	\end{equation*}
\end{enumerate}
By construction, if $p$ is a smooth point but not an R node, then $\delta$ maps $\theta$ to $\mathfrak{b}^-$ and maps $z$ to $\mathfrak{b}^+$; otherwise $\delta$ maps $\theta_i$ to $\mathfrak{b}^-_i$ and maps $z_i$ to $\pm z_i\mathfrak{b}^+$. It's easy to check that this is exactly the derivation in the canonical models. Hence we have 
\begin{lem}\label{Lemma_Local_Str_on_Complete_Ring}
	Let $S=\Spec A$ where $A$ is a complete local Noetherian ring and $X/S$ is a stable SUSY curve. Take a point $p$ in the special fiber of $X$, then the completion of $\mcal O_X$ at $p$ together with derivation $\delta$ is isomorphic to the completion of one of the canonical models at the origin of the special fiber.
\end{lem}

The next goal is to extend this result to a more general situation where $A$ can be a complete local Noetherian superring. The idea is to first prove it for local Artinian $A$, then take the inverse limit. For local Artinian $A$, use the induction on the length of $A$, more precisely, suppose that there is an extension 
\begin{equation*}
0\to \mcal J\to A\to A/\mcal J\to 0,
\end{equation*}
where $\mcal J$ is an ideal of $A$, purely bosonic or purely fermionic, such that the maximal ideal $\mathfrak{m}_A$ annihilates it. If Lemma \ref{Lemma_Local_Str_on_Complete_Ring} is true for $A/\mcal{J}$, then we want to show that Lemma \ref{Lemma_Local_Str_on_Complete_Ring} is true for $A$. Denote the residue field of $A$ by $\kappa$.

\newl To begin with, let us temporarily forget the derivation and try to figure out the underlying scheme structure of $\widehat{\mcal O}_{X,p}$, assuming that $\widehat{\mcal O}_{X,p}/\mcal{J}\widehat{\mcal O}_{X,p}$ is isomorphic to the completion of one of the canonical models at the origin of the special fiber. The following list of possibilities is a direct consequnce of flatness:
\begin{enumerate}
	\item[a.] $p$ is a smooth point: $\widehat{\mcal O}_{X,p}\cong\Spec A[\![ z|\theta]\!]$.
	\item[b.] $p$ is an R node: $\widehat{\mcal O}_{X,p}\cong\Spec A[\![z_1,z_2|\theta]\!]/(z_1z_2+t'+\eta \theta)$ where $t'\in A^+,\eta\in \mcal J$ such that $t'\equiv t \pmod {\mcal{J}}$.
	\item[c.] $p$ is an NS node: $\widehat{\mcal O}_{X,p}\cong\Spec A[\![z_1,z_2|\theta_1,\theta_2]\!]/(z_1z_2+t_1t_2+a,z_1\theta_2+t_1\theta_1+b,z_2\theta_1-t_2\theta_2+c,\theta_1\theta_2+d)$ where $t_i\in A^+$ such that $t_i\equiv t \pmod {\mcal{J}}$ and
	\begin{equation*}
	a,b,c,d\in B:=\mcal{J}\otimes_{\kappa}\widehat{\mcal O}_{X,p}/\mathfrak{m}_A\widehat{\mcal O}_{X,p}=\mcal{J}\widehat{\mcal O}_{X,p}.
	\end{equation*}
\end{enumerate}
In fact, suppose that $Y$ is stable SUSY curve over $\Spec A/\mcal J$, then for any such $p\in Y$, take an open affine neighborhood $W_p$ of $p$ such that $W_p$ contains no other node (if $p$ is a node) or puncture divisor, then by Theorem 2.1.2 of \cite{Vaintrob199001}, the obstruction of lifting $W_p$ to a flat superscheme over $\Spec A$ is an element in $\H^2(T^{\bullet}_{W_{p,s}})$ where $W_{p,s}$ is the special fiber of $W_p$. Note that $\H^i(T^{\bullet}_{W_{p,s}})$ is supported at $p$ for $i>0$, since $W_p\setminus \{p\}$ is smooth. If the obstruction vanishes, then the set of  deformation modulo infinitesimal automorphism $\text{Def}_A(W_p,\mcal J)$ is a torsor under $\H^1(T^{\bullet}_{W_{p,s}})$, and we also have 
\begin{align}\label{Deformation_Local_to_Global}
\text{Def}_A(W_p,\mcal J)=\text{Def}_A(\widehat{\mcal O}_{Y,p},\mcal J).
\end{align}
In fact, the obstruction always vanishes. For the situation a, i.e. $p$ is a smooth point, $\H^i(T^{\bullet}_{W_{p,s}})$ vanishes for $i>0$; for situation b, i.e. $p$ is an R node, $\H^2(T^{\bullet}_{W_{p,s}})$ vanishes since the relation defining $\widehat{\mcal O}_{Y,p}$ in $A[\![z_1,z_2|\theta]\!]$ is $z_1z_2+t$ which is not a zero-divisor, i.e. there is no syzygy; for the situation c, i.e. $p$ is an NS node, the obstruction of lifting $W_p$ is identified with the obstruction of lifting $\Spec \widehat{\mcal O}_{Y,p}$, which vanishes because there exists a lifting: $A[\![z_1,z_2|\theta_1,\theta_2]\!]/(z_1z_2+t_1t_2,z_1\theta_2+t_1\theta_1,z_2\theta_1-t_2\theta_2,\theta_1\theta_2)$\footnote{Notice that this is actually a germ of a nodal curve over $A$ defined by $A[\![z_1,z_2]\!]/(z_1z_2+t_1t_2)$, together with a rank one torsion-free sheaf $\mscr E$ on it generated by $\theta_1,\theta_2$ with relation $z_1\theta_2+t_1\theta_1=0,z_2\theta_1-t_2\theta_2=0$, and obvious relation $\theta_1\theta_2=0$ to ensure that $\mscr E\cdot\mscr E=0$.}.

\newl Following \cite{Deligne198709}, we can put more constraints on the elements $a,b,c,d$ in the case c. Namely, $a,b,c,d$ are components of an element in the $\H^1$ of the chain complex
\begin{center}
	\begin{tikzcd}
	B\partial_{z_i}\oplus B\partial_{\theta _i} \arrow[r,"\frac{\partial f_j}{\partial z_i}\oplus \frac{\partial f_j}{\partial \theta _i}"] & B\partial_{f_j} \arrow[r,"h_{kj}"] & B\partial_{S_k},
	\end{tikzcd}
\end{center}
where $f_j$ are relations of $\widehat{\mcal O}_{X,p}/\mathfrak{m}_A\widehat{\mcal O}_{X,p}$ and $S_k=\sum _j h_{kj}f_j$ are the syzygies, namely
\begin{equation*}
\begin{cases}
f_1= z_1z_2\\
f_2= z_1\theta_2\\
f_3= z_2\theta_1\\
f_4= \theta_1\theta_2
\end{cases}
,\qquad 
\begin{cases}
S_1=\theta_1f_1-z_1f_3\\
S_2=\theta_2f_1-z_2f_2\\
S_3=z_1f_4-\theta_1f_2\\
S_4=z_2f_4+\theta_2f_3
\end{cases}.
\end{equation*}
So we get four equations:
\begin{equation*}
\theta_1a=z_1c\text{ , }\theta_2a=z_2b\text{ , }z_1d=\theta_1b\text{ , }z_2d=-\theta_2c,
\end{equation*}
modulo the image of the matrix
\begin{align*}
M=\begin{bmatrix}
z_2  & z_1 & 0 & 0 \\
\theta_2 & 0 & 0 & z_1 \\
0  & \theta_1 & z_2 & 0 \\
0 & 0 & \theta_2 & -\theta_1
\end{bmatrix},
\end{align*}
which corresponds to coordinate transformation. If $N$ is odd, then $d\in (\theta_1,\theta_2)$ and is even, so it can be written as $d=\lambda_1\theta_1+\lambda_2\theta_2$, and a coordinate transform $\theta_1\mapsto \theta_1-\lambda_2,\theta_2\mapsto \theta_2-\lambda_1$ will set $d$ to zero; if $N$ is even, then $b,c\in (\theta_1,\theta_2)$, so the last two equations become $z_1d=z_2d=0$, hence $d=0$. So in any case we can set $d=0$. In the following discussion, we only consider coordinate transform whose image is in $B\partial_{f_1}\oplus B\partial_{f_2}\oplus B\partial_{f_3}$, for example the image of the matrix 
\begin{align*}
\begin{bmatrix}
z_2  & z_1 & 0 & 0 \\
\theta_2 & 0 & 0 & -z_1\theta_1 \\
0  & \theta_1 & z_2\theta_2 & 0
\end{bmatrix}.
\end{align*}
Using this coordinate transformation, together with the observation that any $\text{constant}\times \theta_1$ term of $b$ can be absorbed by $t_1\theta_1$ and any $\text{constant}\times \theta_2$ term of $c$ can be absorbed by $t_2\theta_2$, we can transform $b$ and $c$ to Laurent polynomials in $z_1,z_2$ simultaneously, i.e.
\begin{equation*}
b=f_1(z_1)+z_2f_2(z_2),\qquad c=z_1g_1(z_1)+g_2(z_2).
\end{equation*}
Moreover, the equations $\theta_1b=\theta_2c=0$ spells out that $$f_1(z_1)\theta_1+tf_2(z_2)\theta_2=g_2(z_2)\theta_2-tg_1(z_1)\theta_1=0,$$hence $$f_1(z_1)=tf_2(z_2)=g_2(z_2)=tg_1(z_1)=0.$$
Let us turn to consider equations
\begin{equation*}
\theta_1a=z_1c, \qquad \theta_2a=z_2b.
\end{equation*}
Since $b,c$ has no $\theta_i$ dependence, the equations become
\begin{equation*}
\theta_1a=z_1c=\theta_2a=z_2b=0.
\end{equation*}
We thus see that $a\in (\theta_1,\theta_2)$ and $f_2(z_2)=g_1(z_1)=0$, i.e. $b=c=0$. Now we can use coordinate transforms with images in $B\partial_{f_1}$ to modify $a$, for example we can choose $\theta_i$ times the first and the second rows of the matrix $M$ hence all terms in $a$ involving $z_i\theta_j$ can be removed, leaving
\begin{equation*}
a=\eta_1\theta_1+ \eta_2\theta_2,
\end{equation*}
where $\eta_i\in \mcal{J}$. To sum up, we have that
\begin{enumerate}
	\item[•] When $p$ is an NS node, $\widehat{\mcal O}_{X,p}\cong\Spec A[\![z_1,z_2|\theta_1,\theta_2]\!]/(z_1z_2+t_1t_2+\eta_1\theta_1+ \eta_2\theta_2,z_1\theta_2+t_1\theta_1,z_2\theta_1-t_2\theta_2,\theta_1\theta_2)$ where $t_i\in A^+$ such that $t_i\equiv t \pmod {\mcal{J}}$ and $\eta_i\in \mcal{J}$.
\end{enumerate}
Next we take the derivation $\delta$ into account, in fact we have
\begin{lem}\label{Lemma_Extension_of_Derivation}
	$\widehat{\mcal O}_{X,p}$ discribed above has a SUSY structure derivation $\delta$ extending the derivation on $\widehat{\mcal O}_{X,p}/\mcal{J}\widehat{\mcal O}_{X,p}$ if and only if
	\begin{enumerate}
		\item[a.] If $p$ is a smooth point, there is no condition.
		\item[b.] If $p$ is a R node, then $\eta=0$.
		\item[c.] If $p$ is a NS node, then $\eta_i=0$ and $t_1=t_2$.
	\end{enumerate}
	In other words, derivations extend exactly when $\widehat{\mcal O}_{X,p}$ is isomorphic to the completion of one of the canonical models at the origin of the special fiber. Moreover, the choice of extension is unique in all three cases, in the sense that there exists coordinate transform under which derivations are transformed to that of the corresponding canonical models.
\end{lem}

Assume this lemma for now, then it concludes the induction step and we have the following generalization of Lemma \ref{Lemma_Local_Str_on_Complete_Ring}.
\begin{proposition}\label{Local_Str_on_Complete_SuperRing}
	Let $S=\Spec A$ where $A$ is a complete local Noetherian superring and $X/S$ is a stable SUSY curve. Take a point $p$ in the special fiber of $X$, then the completion of $\mcal O_X$ at $p$ together with derivation $\delta$ is isomorphic to the completion of one of the canonical models at the origin of the special fiber.
\end{proposition}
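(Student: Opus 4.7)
The plan is to reduce to the case of local Artinian superrings and then pass to the inverse limit. More precisely, write $A = \varprojlim_n A/\mathfrak{m}_A^n$, so that it suffices to produce, for each $n$, an isomorphism of $\widehat{\mcal O}_{X,p}\otimes_A A/\mathfrak{m}_A^n$ with the corresponding truncation of one of the four canonical local models, and to arrange these isomorphisms compatibly in $n$. Each quotient $A/\mathfrak{m}_A^n$ is local Artinian, so we treat the Artinian case first.

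For the Artinian case, I would induct on the length of $A$. Choose a filtration $A = A_0 \twoheadrightarrow A_1 \twoheadrightarrow \cdots \twoheadrightarrow A_N = \kappa$ by small extensions whose kernels $\mcal J_i$ are one-dimensional and either purely even or purely odd; this is possible because any $\mathbb Z_2$-graded module over an Artinian local superring has a composition series with homogeneous graded pieces. The base case $A_N = \kappa$ is the residue-field statement, which is classical (and already handled in Lemma \ref{Lemma_Local_Str_on_Complete_Ring} since $\kappa$ is purely bosonic). For the inductive step, suppose the statement holds over $A/\mcal J$ for a small extension $0\to\mcal J\to A\to A/\mcal J\to 0$. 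The analysis preceding the statement shows that, as a flat deformation of the canonical truncated model, $\widehat{\mcal O}_{X,p}$ is determined by deformation parameters $(\eta), (a,b,c,d)$ (in the R-node and NS-node cases respectively), and that after choosing local coordinate changes one may assume these parameters take the restricted forms derived before Lemma \ref{Lemma_Extension_of_Derivation}. Lemma \ref{Lemma_Extension_of_Derivation} then asserts that the SUSY derivation $\delta$ on $\widehat{\mcal O}_{X,p}/\mcal J\widehat{\mcal O}_{X,p}$ extends to the deformation precisely when those parameters vanish (and, in the NS-node case, when $t_1=t_2$), and that the extension is unique up to coordinate change. This forces $\widehat{\mcal O}_{X,p}$, together with its derivation, to be isomorphic to the completion of a canonical model at the origin.

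To patch the inductive steps together, I would pick the isomorphism at stage $n+1$ so that it lifts the chosen isomorphism at stage $n$. This is possible because the uniqueness clause in Lemma \ref{Lemma_Extension_of_Derivation} is up to coordinate change fixing the lower stage, i.e.\ the set of admissible isomorphisms at stage $n+1$ is either a single point or a torsor under a group that acts transitively on liftings of the stage-$n$ isomorphism; in either case a compatible choice exists. Taking inverse limits produces an isomorphism of $\widehat{\mcal O}_{X,p}$ with the completion of a canonical model over all of $A$, respecting the derivation. Finally, for general complete local Noetherian $A$ (not necessarily Artinian), the formal scheme obtained as $\varprojlim_n$ of the truncated canonical models is itself the completion of the relevant canonical model over $A$, so the same argument yields the desired isomorphism.

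The main obstacle is the proof of Lemma \ref{Lemma_Extension_of_Derivation} itself: one has to verify, by direct computation in each of the three local charts, that the prescribed SUSY derivation extends only when the obstructions $(\eta$, $\eta_i$, $t_1-t_2)$ vanish, and that the residual freedom is exhausted by coordinate changes preserving the canonical form of $\delta$. This is a case-by-case check on the generators and relations listed above, using that $\delta$ must send $\theta_i \mapsto \mathfrak b_i^-$ and $z_i \mapsto \pm z_i \mathfrak b^+$ modulo the relations, and confronting the resulting identities against the presentation of $\widehat{\mcal O}_{X,p}$ as a flat deformation. Modulo this lemma, the inductive and limiting steps above are essentially formal.
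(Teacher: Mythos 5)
Your proof takes essentially the same approach as the paper: reduce to the Artinian case by completeness, induct on the length of $A$ via small homogeneous extensions, use the flat-deformation analysis together with Lemma~\ref{Lemma_Extension_of_Derivation} to force the canonical form at each step, and pass to the inverse limit. The one place you go slightly beyond the paper's exposition is in explicitly arguing that the stage-$n$ isomorphisms can be chosen compatibly before taking the limit, but this is a routine point and your justification (the coordinate changes in Lemma~\ref{Lemma_Extension_of_Derivation} fix the lower stage) is correct.
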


\begin{proof}[Proof of {\normalfont Lemma \ref{Lemma_Extension_of_Derivation}}] We consider each cases separately 
	
	\newl \textbf{If $p$ is smooth}, $\widehat{\mcal O}_{X,p}$ is already isomorphic to that of canonical models so we can always choose the derivation in the smooth or R puncture canonical models, and the uniqueness follows from the condition B of Definition \ref{def:stable SUSY curves}.
	
	\newl\textbf{If $p$ is a R node}, write $$\delta=\delta_0+\delta',$$where $\delta_0$ is
	\begin{equation*}
	f\mapsto
	\begin{cases}
	(\partial_{\theta}f+\theta z_1\partial_{z_1}f)\left[\frac{\mathrm{d}z_1}{z_1}\bigg{\vert}\mathrm{d} \theta\right],  & \text{on } D(z_1) \\
	(\partial_{\theta}f-\theta z_2\partial_{z_2}f) \left[-\frac{\mathrm{d}z_2}{z_2}\bigg{\vert}\mathrm{d} \theta\right],  & \text{on } D(z_2)
	\end{cases},
	\end{equation*}
	and 
	\begin{equation*}
	\delta'\in \text{Der}_{\kappa}\left(\mathcal O_U,\mcal{J}\otimes_{\kappa}\omega_{U/\Spec \kappa}\right),
	\end{equation*}
	where $U=\Spec( \widehat{\mcal O}_{X,p}/\mathfrak{m}_A\widehat{\mcal O}_{X,p})\setminus \{p\}$ is a disjoint union of $\Spec \kappa (\!(z_1)\!)$ and $\Spec \kappa (\!(z_2)\!)$. So we can write down $\delta'$ as 
	\begin{equation*}
	\delta' f=(xDf+yD^2f)\mathfrak{b}^-,\qquad 
	D=
	\begin{cases}
	(\partial_{\theta}f+\theta z_1\partial_{z_1}f)& \text{on } D(z_1) \\
	(\partial_{\theta}f-\theta z_2\partial_{z_2}f) & \text{on } D(z_2)
	\end{cases},
	\end{equation*}
	where $x,y\in\mcal{J}\otimes_{\kappa}\mcal O_U$ such that $x$ is even and $y$ is odd. Using Lemma 1.1 of \cite{Deligne198709}, we have that $x+Dy=0$, so $\delta'$ can be written as 
	\begin{equation*}
	\delta': f\mapsto -D(y\cdot Df)\mathfrak{b}^-.
	\end{equation*}
	Consider $\delta'\theta=(1-Dy)\mathfrak{b}^-\in \omega_{X/S}$, this implies that $y\in \mcal J [\![z_1]\!]\times\mcal J [\![z_2]\!]$ or $\theta\mcal J [\![z_1]\!]\times \theta\mcal J [\![z_2]\!]$ depending on the parity of $\mcal J$. Now consider
	\begin{equation*}
	\delta z_1=
	\begin{cases}
	(\theta +y)z_1\mathfrak{b}^- & \text{on } D(z_1) \\
	\frac{\eta-\theta t'}{z_2}\mathfrak{b}^- & \text{on }D(z_2)
	\end{cases},\qquad 
	\delta z_2=
	\begin{cases}
	\frac{\eta+\theta t'}{z_1}\mathfrak{b}^- & \text{on } D(z_1) \\
	-\left(\theta+y\right)z_2\mathfrak{b}^- & \text{on }D(z_2)
	\end{cases}
	\end{equation*}
	So $\delta z_1-(\theta+y)z_1\mathfrak{b}^-$ is an element in $\omega_{X/S}$ which is zero on $D(z_1)$ and is $\eta/z_2\mathfrak{b}^-$ on $D(z_2)$, and $\delta z_2+(\theta+y)z_2$ is zero on $D(z_2)$, and is $\eta/z_1\mathfrak{b}^-$ on $D(z_1)$. However, $\eta/z_1$ is not an element of $\widehat{\mcal O}_{X,p}$, unless $\eta=0$. Hence it must be that $\eta=0$.\\
	
	Assume that $\mcal J$ is fermionic then $t'=t$ and $y=(f_1(z_1),f_2(z_2))$, where $f_1,f_2$ are Laurent polynomials in $z_1,z_2$ valued in $\mcal J$ respectively, then we can write action of $\delta$ on generators by
	\begin{alignat*}{2}
	\theta&\mapsto (1-\theta z_1f'_1(z_1)-\theta z_2f'_2(z_2))\mathfrak{b}^-,
	\\
	z_1&\mapsto (\theta+f_1(z_1))z_1\mathfrak{b}^-,
	\\
	z_2&\mapsto -(\theta+f_2(z_2))z_2\mathfrak{b}^-.
	\end{alignat*}
	Take coordinate transformation $\theta\mapsto \theta +f_1(z_1)+f_2(z_2)-f_2(0)\equiv\theta'$ and $z_2\mapsto z_2(1+(f_2(0)-f_1(0))\theta)\equiv z_2'$, and note that this does not change $\mathfrak{b}^-$. Under the new coordinate $(z_1,z'_2,\theta')$, $\delta$ acts on generators by 
	\begin{equation*}
	\theta'\mapsto \mathfrak{b}^-,\qquad z_1\mapsto \theta'z_1\mathfrak{b}^-,\qquad z'_2\mapsto -\theta'z'_2\mathfrak{b}^-.
	\end{equation*}
	It is thus the derivation of the R node canonical model.
	
	\newl Now assume that $\mcal J$ is bosonic then $y=\theta(f_1(z_1),f_2(z_2))$, where $f_1,f_2$ are Laurent polynomials in $z_1,z_2$ valued in $\mcal J$ respectively, then
	\begin{equation*}
	\delta \theta=
	\begin{cases}
	(1-f_1(z_1))\mathfrak{b}^- & \text{on } D(z_1) \\
	(1-f_2(z_2))\mathfrak{b}^- & \text{on }D(z_2)
	\end{cases}.
	\end{equation*}
	So we must have $f_1(0)=f_2(0)$ in order that $\delta \theta\in \omega_{X/S}$, thus $y\in \mcal J \widehat{\mcal O}_{X,p}$. we can write action of $\delta$ on generators by
	\begin{alignat*}{2}
	\theta&\mapsto (1-\partial_{\theta}y)\mathfrak{b}^-, 
	\\
	z_1&\mapsto (\theta+y)z_1\mathfrak{b}^-,
	\\
	z_2&\mapsto -(\theta+y)z_2\mathfrak{b}^-.
	\end{alignat*}
	Take the coordinate transformation $\theta\mapsto \theta +y=:\theta'$, note that this does not change $\mathfrak{b}^-$. Under the new coordinate $(z_1,z_2,\theta')$, $\delta$ acts on generators by
	\begin{equation*}
	\theta'\mapsto \mathfrak{b}^-,\qquad 
	z_1\mapsto \theta'z_1\mathfrak{b}^-,\qquad 
	z_2\mapsto -\theta'z_2\mathfrak{b}^-.
	\end{equation*}
	It is thus the derivation of the R node canonical model. This concludes the situation b.
		
	\newl \textbf{If $p$ is a NS node}, write $$\delta=\delta_0+\delta',$$where $\delta_0$ is
	\begin{equation*}
	\delta_0:f\mapsto 
	\begin{cases}
	(\partial_{\theta_1}f+\theta_1 \partial_{z_1}f)[\mathrm{d}z_1|\mathrm{d} \theta_1], & \text{on } D(z_1), \\
	(\partial_{\theta_2}f+\theta_2 \partial_{z_2}f)[\mathrm{d}z_2|\mathrm{d} \theta_2], & \text{on } D(z_2)
	\end{cases},
	\end{equation*} 
	and 
	\begin{equation*}
	\delta'\in \text{Der}_{\kappa}\left(\mathcal O_U,\mcal{J}\otimes_{\kappa}\omega_{U/\Spec \kappa}\right),
	\end{equation*}
	where $U=\Spec( \widehat{\mcal O}_{X,p}/\mathfrak{m}_A\widehat{\mcal O}_{X,p})\setminus \{p\}$ is a disjoint union of $\Spec \kappa (\!(z_1)\!)$ and $\Spec \kappa (\!(z_2)\!)$. The same argument in the situation b shows that $x+Dy=0$, $y\in \mcal J [\![z_1]\!]\times\mcal J [\![z_2]\!]$ or $\theta\mcal J [\![z_1]\!]\times \theta\mcal J [\![z_2]\!]$ depending on the parity of $\mcal J$, and $\delta'$ can be written as
	\begin{equation}
	\delta': f\mapsto 
	\begin{cases}
	-D(y\cdot Df)[\mathrm{d}z_1|\mathrm{d} \theta_2], & \text{on } D(z_1)\\
	-D(y\cdot Df)[\mathrm{d}z_2|\mathrm{d} \theta_2], & \text{on } D(z_2)
	\end{cases}.
	\end{equation}
	Consequently $\delta'f$ has nonnegative power of $z_1$ (respectively nonnegative power of $z_2$) on $D(z_1)$ (respectively on $D(z_1)$), for all $f\in \widehat{\mcal O}_{X,p}/\mathfrak{m}_A\widehat{\mcal O}_{X,p}$. Now assume that $\mcal J$ is fermionic, then $t_i=t$. Consider
	\begin{equation*}
	\delta_0z_1=
	\begin{cases}
	\theta_1[\mathrm{d}z_1|\mathrm{d} \theta_1], & \text{on } D(z_1)\\
	\frac{\eta_2-z_1\theta_2}{z_2}[\mathrm{d}z_2|\mathrm{d} \theta_2], & \text{on } D(z_2)
	\end{cases}
	\end{equation*}
	So $\delta z_1-z_1\mathfrak{b}^+-\delta'z_1$ is zero on $D(z_1)$ and is $\eta_2/z_2$ on $D(z_2)$. Since $\delta'z_1$ has nonnegative power of $z_2$ on $D(z_2)$, thus $\eta_2$ must be zero. Similarly $\eta_1=0$. Assume that $\mcal J$ is bosonic, then $\eta_i=0$ automatically. Consider
	\begin{equation*}
	\delta_0\theta_1=
	\begin{cases}
	[\mathrm{d}z_1|\mathrm{d} \theta_1], & \text{on } D(z_1)\\
	\frac{t_2}{z_2}[\mathrm{d}z_2|\mathrm{d} \theta_2], & \text{on } D(z_2)
	\end{cases}.
	\end{equation*}
	Note that a set of generators of $\omega_{X/S}$ is
	\begin{alignat*}{2}
	\mathfrak b_1^-&=
	\begin{cases}
	\left[\mathrm{d}z_1{\vert}\mathrm{d} \theta_1\right],  & \text{on } D(z_1) \\
	\frac{t_1}{z_2}\left[\mathrm{d}z_2{\vert}\mathrm{d} \theta_2\right],   & \text{on } D(z_2)
	\end{cases},
	\\
	\mathfrak b_2^-&=
	\begin{cases}
	-\frac{t_2}{z_1}\left[\mathrm{d}z_1{\vert}\mathrm{d} \theta_1\right],  & \text{on } D(z_1) \\
	\left[\mathrm{d}z_2{\vert}\mathrm{d} \theta_2\right],   & \text{on } D(z_2)
	\end{cases},
	\\
	\mathfrak b^+&=
	\begin{cases}
	\frac{\theta_1}{z_1}\left[\mathrm{d}z_1{\vert}\mathrm{d} \theta_1\right],  & \text{on } D(z_1) \\
	-\frac{\theta_2}{z_2}\left[\mathrm{d}z_2{\vert}\mathrm{d} \theta_2\right],   & \text{on } D(z_2)
	\end{cases}.
	\end{alignat*}
	So $\delta_0\theta_1-\mathfrak{b}_1^-$ is zero on $D(z_1)$ and is $\frac{t_2-t_1}{z_2}$ on $D(z_2)$. Since $\delta'\theta_1$ has nonnegative power of $z_2$ on $D(z_2)$, this concludes that $t_1-t_2$ must be zero. Then we can write down the action of $\delta$ on generators:
	\begin{equation*}
	\delta \theta_i=(1-Dy)\mathfrak{b}_i^-,\qquad \delta z_i=(z_i\theta_i+y)\mathfrak{b}_i^-.
	\end{equation*}
	Suppose that $y=(f_1(z_1),f_2(z_2))$, where $f_1,f_2$ are Laurent polynomials in $z_1,z_2$ respectively, then the constant term of $\delta z_1$ is $f_1(0)$, so $f_1(0)=0$ in order that $\delta z_1\in \omega_{X/S}$. Similarly $f_2(0)=0$. Thus $y=(z_1\widetilde{f}_1(z_1),z_2\widetilde{f}_2(z_2))$, and $\delta z_i=(\theta_i+\widetilde f_i(z_i))z_i\mathfrak{b}_i^-$. Now take the coordinate transform $\theta_i\mapsto\theta_i+\widetilde f_i(z_i)\equiv \theta_i'$, note that this does not change $\mathfrak{b}_i^-$. Under the new coordinate $(z_1,z_2,\theta_1',\theta_2')$, $\delta$ acts on generators by 
	\begin{equation*}
	\theta_i'\mapsto \mathfrak{b}_i^-,\qquad 
	z_i\mapsto \pm z_i\mathfrak{b}^+.
	\end{equation*}
	It is thus the derivation of the NS node canonical model. This concludes the situation c.
\end{proof}

\subsubsection{Infinitesimal Automorphisms of Stable SUSY Curves}
Let the sheaf of automorphism group of $X$ which induces identity modulo $\mcal J$ be $\mcal A$, then if $X$ is smooth then it is a classical fact that 
\begin{equation*}
\mcal A=\left(T^+_{X_{s}}(-E-\mscr D)\oplus\Pi T^+_{X_{s}}(-E-\mscr D)\otimes_{\mcal O^+_{X_s}} \mscr E\right)\otimes_{\kappa} \mcal J,
\end{equation*}
where $T^+_{X_{s}}$ is the even tangent sheaf of the special fiber $X_s$, and $\mscr E$ is the spinor sheaf. If $X$ is a general stable SUSY curve, then we need to take R and NS nodes into account. In fact, suppose that an automorphism at a formal neighborhood of a node comes from a derivation $xD+yD^2$ where $x$ is odd and $y$ is even, then it must preserves the superconformal structure, i.e. $[D,xD+yD^2]\equiv 0\pmod {D}$, hence $x=\frac{1}{2}Dy$. In local coordinate we can write it as
\begin{enumerate}
	\item[•] If $p$ is a R node: when $\mcal J$ is bosonic
	\begin{equation*}
	xD+yD^2=z_1f_1(z_1)\partial_{z_1}+z_2f_2(z_2)\partial_{z_2}+\frac{z_1f'_1(z_1)+z_2f'_2(z_2)}{2}\theta\partial_{\theta},
	\end{equation*}
	where $f_i(z_i)\in \mathcal J[\![z_i]\!]$. When $\mcal J$ is fermionic 
	\begin{equation*}
	xD+yD^2=\theta g(z_1,z_2)(z_1\partial_{z_1}+z_2\partial_{z_2})+g(z_1,z_2)\partial_{\theta},
	\end{equation*}
	where $g(z_1,z_2)\in \mathcal J[\![z_1,z_2]\!]/(z_1z_2)$. Note that $z_1\partial_{z_1}+z_2\partial_{z_2}$ is a generator of $(\omega^+_{X_{s},p})^{-1}$, and $\partial_{\theta}$ is a generator of the dual of spin structure $\mscr E_p$.
	\item[•] If $p$ is a NS node: when $\mcal J$ is bosonic
	\begin{equation*}
	xD+yD^2=f_1(z_1)\partial_{z_1}+f_2(z_2)\partial_{z_2}+\frac{1}{2}f'_1(z_1)\theta_1\partial_{\theta_1}+\frac{1}{2}f'_2(z_2)\theta_2\partial_{\theta_2},
	\end{equation*}
where $f_i(z_i)\in z_i\mathcal J[\![z_i]\!]$. When $\mcal J$ is fermionic $$xD+yD^2=g_1(z_1)(\partial_{\theta_1}-\theta_1\partial_{z_1})+g_2(z_2)(\partial_{\theta_2}-\theta_2\partial_{z_2}),$$where $g_i(z_i)\in \mathcal J[\![z_i]\!]$. Note that $\partial_{\theta_i}-\theta_i\partial_{z_i}$ is a generator of the dual of spin structure on the $i$'th component of the normalization at $p$, and the direct sum of their pushforward along the normalization map is the dual of spin structure $\mscr E_p$.
\end{enumerate}
so we see that
\begin{align*}
\widehat{\mcal A}_p=\left(T^+_{X_{s},p}\oplus\Pi (\omega^+_{X_{s},p})^{-1}\otimes \mscr E_p\right)\otimes_{\kappa} \mcal J\cong\left(T^+_{X_{s},p}\oplus\Pi \mscr E^{\vee}_p\right)\otimes_{\kappa} \mcal J,
\end{align*}
if $p$ is a R node or NS node. Combine the smooth case and the nodes by formal gluing, we get
\begin{proposition}\label{Infinitesimal_Automorphism}
	The sheaf of infinitesimal automorphism group $\mcal A$ is isomorphic to
	\begin{align}
	\mcal A=\left(T^+_{X_{s}}(-E-\mscr D)\oplus\Pi \mscr E^{\vee}\right)\otimes_{\kappa} \mcal J,
	\end{align}
\end{proposition}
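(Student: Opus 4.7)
The plan is to compute the sheaf $\mcal A$ stalk-by-stalk and then assemble the local contributions into the claimed global description. Since an automorphism inducing the identity modulo $\mcal J$ is equivalent data to a $\kappa$-linear superconformal derivation of $\mcal O_{X_s}$ valued in $\mcal J$ preserving the puncture divisors, the computation reduces to identifying the sheaf of superconformal derivations of $\mcal O_{X_s}$ vanishing on $E\sqcup\mscr D$ (plus the compatibility at nodes), tensored with $\mcal J$.

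First I would treat the smooth locus. Away from punctures and nodes, a superconformal derivation is determined by a vector field $v D + \frac{1}{2}(Dv)D^2$ with $v$ odd, i.e.\ by a local section of $\mscr E^{\vee}$; the even part contributes $T^+_{X_s}$ via the bracket $\frac{1}{2}[D,D]$. Near NS and R punctures the vanishing of the puncturing is imposed by tensoring with $\mcal O(-E-\mscr D)$ on the even tangent part, precisely as in the classical (non-super) case for punctured stable curves. This yields on the smooth locus a canonical identification
\[
\mcal A\big|_{X_s^{\text{sm}}\setminus(E\sqcup\mscr D)}\cong \bigl(T^+_{X_s}(-E-\mscr D)\oplus \Pi\,\mscr E^{\vee}\bigr)\otimes_{\kappa}\mcal J.
\]

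Next I would handle the nodal stalks by directly quoting the local computation already carried out in the proof of Lemma~\ref{Lemma_Extension_of_Derivation}. In the R-node case, the explicit presentations show that an infinitesimal superconformal automorphism is parameterized by even functions $f_1(z_1),f_2(z_2)$ (giving a vector field proportional to $z_1\partial_{z_1}+z_2\partial_{z_2}$, a generator of $(\omega^+_{X_s,p})^{-1}\cong T^+_{X_s,p}$) together with an odd function $g(z_1,z_2)$ modulo $(z_1z_2)$ (giving an element of $\mscr E_p^{\vee}$ via $\partial_\theta$). In the NS-node case, the computation yields even parameters $f_i(z_i)\in z_i\kappa\dol z_i\dor$, which identify with $T^+_{X_s,p}(-E-\mscr D)$ at the node, together with odd parameters $g_i(z_i)\in \kappa\dol z_i\dor$ on each branch, which after pushforward through the normalization map assemble into $\mscr E^{\vee}_p$. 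Thus at each node, the stalk of $\mcal A$ is again $\bigl(T^+_{X_s,p}\oplus\Pi\mscr E^{\vee}_p\bigr)\otimes\mcal J$, and the vanishing on $E\sqcup\mscr D$ is automatic since nodes do not meet the puncture divisors.

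Finally I would check that the local identifications glue. The key point is functoriality: on any étale neighborhood, the SUSY coordinate transformations used in Lemma~\ref{Lemma_Extension_of_Derivation} are precisely those that intertwine the local presentations of $T^+_{X_s}$ and $\mscr E^{\vee}$, so the identifications are canonical and compatible with restriction. This yields an isomorphism of $\mcal O_{X_s}$-module sheaves globally on $X_s$. I expect the main subtle point to be the identification at R-nodes: the presence of the generator $\partial_\theta$ depends on the convention that the spinor sheaf $\mscr E$ is twisted by $\mscr D$, so one must verify that the odd parameter $g(z_1,z_2)\bmod(z_1z_2)$ is genuinely a section of $\mscr E^{\vee}$ at the node (and not of some twist thereof), which is precisely where the normalization-pushforward description of $\mscr E$ from Section~\ref{app: gluing marked points} is used. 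Once this identification is in place the proposition follows.
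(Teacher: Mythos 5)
Your overall structure — compute the stalk of $\mcal A$ on the smooth locus, then at R-nodes and NS-nodes by a local coordinate analysis, and then patch the local identifications by formal gluing — is exactly the route the paper takes. The local parameterizations you state are also the correct ones: at an R-node, even functions $f_1,f_2$ giving a vector field proportional to $z_1\partial_{z_1}+z_2\partial_{z_2}$ (a generator of $T^+_{X_s,p}$) and an odd function $g$ modulo $(z_1z_2)$ giving $\partial_\theta$ (a generator of $\mscr E^\vee_p$); at an NS-node, $f_i\in z_i\kappa\dol z_i\dor$ and branch-wise $g_i$ assembling to $\mscr E^\vee_p$ under the normalization pushforward.

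The one place your plan would stall as written is your intention to obtain the nodal stalk computation ``by directly quoting the local computation already carried out in the proof of Lemma~\ref{Lemma_Extension_of_Derivation}.'' That lemma's proof analyzes a different object with a different governing constraint: there one is classifying first-order deformations $\delta'$ of the SUSY derivation $\delta$, and the relevant identity (from Deligne's Lemma~1.1) is $x+Dy=0$. Here, by contrast, an infinitesimal automorphism is an even superconformal vector field $xD+yD^2$ with $x$ odd and $y$ even, and the constraint that it preserve the superconformal structure, $[D,xD+yD^2]\equiv 0 \pmod D$, forces $x=\tfrac{1}{2}Dy$. These are not the same condition, and the parameterizations in Lemma~\ref{Lemma_Extension_of_Derivation}'s proof (e.g.\ $y=(f_1(z_1),f_2(z_2))$ valued in $\mcal J$ subject to consistency with the $\widehat{\mcal O}_{X,p}$-module $\omega_{X/S}$) are not literally the ones you need. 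The paper therefore runs a fresh coordinate computation of superconformal vector fields at each node type rather than citing the lemma, and it is that computation that produces the parameterizations you list. Once you replace the citation by this independent local analysis, the rest of your argument — including the gluing step and the check, which you correctly flag, that the odd parameter at an R-node is a section of $\mscr E^\vee$ and not of a twist — goes through.
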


\begin{corollary}\label{Absence_of_Global_Infinitesimal_Automorphism}
	There is no global infinitesimal automorphism, neither even nor odd.
\end{corollary}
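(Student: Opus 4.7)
The plan is to invoke Proposition \ref{Infinitesimal_Automorphism} and reduce the problem to a cohomological vanishing statement on the special fiber $X_s$. Since $\mcal{J}$ is a $\kappa$-module and $X_s$ is a projective $\kappa$-scheme, taking global sections of the sheaf $\mcal A = (T^+_{X_s}(-E-\mscr D)\oplus \Pi \mscr E^{\vee})\otimes_{\kappa}\mcal J$ commutes with the tensor product by $\mcal J$, so the corollary splits into two statements:
\begin{equation*}
H^0(X_s, T^+_{X_s}(-E-\mscr D)) = 0 \quad \text{and} \quad H^0(X_s, \mscr E^{\vee}) = 0,
\end{equation*}
corresponding to the even and odd infinitesimal automorphisms respectively.

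For the even vanishing, I would invoke the classical result characterizing stability of pointed nodal curves: the sheaf $T^+_{X_s}(-E-\mscr D)$ has no global sections precisely when the pointed nodal curve $(X_s, E \sqcup \mscr D)$ is Deligne–Mumford stable, i.e.\ when every irreducible component $C$ with geometric genus $g_C$ and $n_C$ special points (nodes plus punctures) satisfies $2g_C - 2 + n_C > 0$. This condition is implied by our running hypothesis $2\g - 2 + \ns + \ra > 0$ together with the intrinsic stability built into Definition \ref{def:stable SUSY curves}. The argument proceeds component by component on the normalization $\widetilde{X}_s \to X_s$: on each component the pullback of $T^+_{X_s}(-E-\mscr D)$ is a line bundle of strictly negative degree, so Riemann--Roch gives vanishing of $H^0$; passing back to $X_s$ via the normalization sequence, the nodal gluing conditions do not introduce new global sections.

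For the odd vanishing, I would use the defining relation $\mscr E^{\otimes 2}\cong \omega_{X_s}(\mscr D)$ of the spin structure together with the squaring map. A nonzero global section $s \in H^0(X_s, \mscr E^{\vee})$ produces a nonzero section $s^{\otimes 2}\in H^0(X_s, \omega_{X_s}^{-1}(-\mscr D))$. Pulling back to the normalization and restricting to each irreducible component $C$ of genus $g_C$ with $\ra_C$ Ramond punctures and with $\nu_C$ half-edges attached to nodes, one obtains a line bundle of degree $2-2g_C-\ra_C$ (possibly modified by the contributions of nodes, which must be accounted for using the local description of $\mscr E$ near R- and NS-nodes from Section \ref{subsec:gluing of punctures on spin curves}). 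The stability condition, combined with the fact that odd superconformal vector fields must also be compatible with the NS puncture data (adding a further twist by $-E$), forces the resulting degree to be strictly negative on every component, giving $s^{\otimes 2}|_C = 0$ and hence $s|_C = 0$ since $\mscr E$ is torsion-free of rank one.

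The main obstacle I anticipate is the careful treatment of the nodal contributions, because $\mscr E$ is only torsion-free (not locally free) at the nodes of $X_s$, so the naive componentwise degree computation must be corrected by tracking the gluing data of the spin structure across R- and NS-nodes. In particular, one must verify that the componentwise vanishing on the normalization descends to vanishing on $X_s$ itself, which requires compatibility between the local models near the nodes (Lemma \ref{Lemma_Extension_of_Derivation}) and the global sections of $\mscr E^{\vee}$ and $T^+_{X_s}(-E-\mscr D)$. Once this compatibility is set up, the argument reduces to the familiar Riemann--Roch estimates that underlie the classical absence of automorphisms for stable curves.
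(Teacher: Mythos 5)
Your overall decomposition into an even vanishing and an odd vanishing, both reduced to negative-degree estimates on the normalization, is the same strategy the paper uses, and the even half (pull back $T^+_{X_s}(-E-\mscr D)$ to the normalization, observe each component carries a line bundle of negative degree by stability) is essentially identical to the paper's proof. The odd half, however, departs from the paper in a way that creates two problems you flag but do not resolve.

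First, the target vanishing you write down initially, $\H^0(X_s,\mscr E^{\vee})=0$, is false as stated: for the odd spin structure on an elliptic curve with a single NS puncture, $\mscr E\cong\mcal O_{X_s}$ so $\H^0(\mscr E^{\vee})\ne 0$. The correct odd summand of $\mcal A$ is $\omega^{-1}_{X_s}(-E-\mscr D)\otimes\mscr E$, i.e.\ $\mscr E^{\vee}(-E)$, and the $-E$ twist is essential, not optional. You do mention the twist later, but it should be part of the reduction from the start; a literal invocation of the displayed formula in Proposition \ref{Infinitesimal_Automorphism} would give the wrong statement, and it is precisely $\Gamma\bigl(X_s,\omega^{-1}_{X_s}(-E-\mscr D)\otimes\mscr E\bigr)=0$ that the paper proves.

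Second, the squaring argument has an unresolved gap at the NS nodes. The structure map $b:\mscr E^{\otimes 2}\to\omega_{X_s}(\mscr D)$ is an isomorphism only on the smooth locus; at each NS node it fails to be surjective (cokernel of length one) and $\mscr E$ itself is only torsion-free, not locally free. Consequently a global section $s$ of $\mscr E^{\vee}(-E)$ does not automatically produce a global section $s^{\otimes 2}$ of $\omega^{-1}_{X_s}(-\mscr D-2E)$: on the smooth locus you get one, but its extension across the nodes is exactly the ``nodal contribution'' you name as the main obstacle, and it is not addressed. The paper avoids the squaring trick entirely: it pulls $\mscr E$ back along the normalization $\pi:\mscr C\to X_{s,\text{bos}}$, where it becomes a genuine line bundle $\mscr E_{\mscr C}$ with $\mscr E_{\mscr C}^{\otimes 2}\cong\omega_{\mscr C}(\mscr D+\mcal R)$, fits $\omega^{-1}_{X_s}(-E-\mscr D)\otimes\mscr E$ into a short exact sequence with $\pi_*\bigl(\omega^{-1}_{\mscr C}(-E-\mscr D-\mcal R-\mcal N)\otimes\mscr E_{\mscr C}\bigr)$, and then proves directly that on each component
\begin{equation*}
\deg\bigl(\omega^{-1}_{\mscr C_i}(-E-\mscr D-\mcal R-\mcal N)\otimes\mscr E_{\mscr C_i}\bigr)\le -\tfrac{1}{2}\deg\omega_{\mscr C_i}(E+\mscr D+\mcal R+\mcal N)<0
\end{equation*}
using $\deg\mscr E_{\mscr C_i}=\tfrac{1}{2}\deg\omega_{\mscr C_i}(\mscr D+\mcal R)$ and stability. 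This one-sided degree estimate on a genuine line bundle is cleaner than squaring and removes the need to track what happens to $s^{\otimes 2}$ at the nodes. You would need to either adopt the paper's normalization-pullback route or carefully justify why the bounded section on the smooth locus extends to a section of the line bundle $\omega^{-1}_{X_s}(-\mscr D-2E)$ on all of $X_s$ (which is not automatic on a nodal curve).
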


\begin{proof}
	Let $\pi:C\to X_{s,\text{bos}}$ be the normalization map of the bosonic truncation of $X_s$, then 
	\begin{equation*}
	T^+_{X_{s}}(-E-\mscr D)=\pi_*\left(\omega^{-1}_\mscr C(-E-\mscr D-\mcal R-\mcal N)\right),
	\end{equation*}
	where $\mcal R$ (resp. $\mcal N$) is the preimage of R nodes (resp. NS nodes) in $\mscr C$. Since $X_{s,\text{bos}}$ is a stable curve, $\omega^{-1}_C(-E-\mscr D-\mcal R-\mcal N)$ has negative degree on every connected component, hence $$\Gamma(X,T^+_{X_{s}})(-E-\mscr D)=\Gamma(\mscr C,\omega^{-1}_{\mscr C}(-E-\mscr D-\mcal R-\mcal N))=0.$$
	On the other hand, there is a short exact sequence
	\begin{equation*}
	 0\to (\omega^+_{X_{s},p})^{-1}(-E-\mscr D)\otimes_{\mcal O^+_{X_s}} \mscr E\to \pi_*\left(\omega^{-1}_{\mscr C}(-E-\mscr D-\mcal R-\mcal N)\otimes \mscr E_{\mscr C}\right)\to \bigoplus_{p\in \text{ R Node}}\kappa\to 0,
	\end{equation*}
	where $\mscr E_{\mscr C}$ is a square root of $\omega_{\mscr C}(\mscr D+\mcal R)$. On each connected component $\mscr C_i$, 
	\begin{equation*}
	\deg \mscr E_{\mscr C}=\frac{1}{2}\deg \omega_{\mscr C_i}(\mscr D+\mcal R)\le \frac{1}{2}\deg \omega_{\mscr C_i}(E+\mscr D+\mcal R+\mcal N),
	\end{equation*}
	consequently
	\begin{equation*}
	\deg \omega^{-1}_{\mscr C_i}(-E-\mscr D-\mcal R-\mcal N)\otimes \mscr E_{\mscr C_i}\le -\frac{1}{2}\deg \omega_{\mscr C_i}(E+\mscr D+\mcal R+\mcal N)<0.
	\end{equation*}
	Then it follows that
	\begin{equation*}
	\Gamma(X,(\omega^+_{X_{s},p})^{-1}(-E-\mscr D)\otimes_{\mcal O^+_{X_s}} \mscr E)\subset\Gamma(\mscr C,\omega^{-1}_{\mscr C}(-E-\mscr D-\mcal R-\mcal N)\otimes \mscr E_{\mscr C})=0.
	\end{equation*}
\end{proof}

\subsubsection{Isomorphisms of Stable SUSY Curves}

\begin{proposition}\label{Representablity_of_Diagonal}
	Suppose that $S$ is a Noetherian superscheme and $X,Y$ are two stable SUSY curves over $S$, then the functor
	$$\mathfrak{Isom}_S(X,Y):(f:T\to S)\mapsto \Isom_T(X_T,Y_T),$$ is represented by a finite unramified superscheme over $S$.
\end{proposition}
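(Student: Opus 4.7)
The plan is to follow the classical strategy for ordinary stable curves, namely realize $\mathfrak{Isom}_S(X,Y)$ as an open subfunctor of the Hom superscheme, verify properness via a valuative criterion, verify quasi-finiteness from finiteness of automorphism groups on geometric fibers, and verify unramifiedness from the vanishing of global infinitesimal automorphisms established in Corollary~\ref{Absence_of_Global_Infinitesimal_Automorphism}. Representability is a local question on $S$, and by Lemma~\ref{Stable_SUSY_Curve_Is_Locally_Projective} we may assume $S$ is affine so that both $X$ and $Y$ are projective and flat over $S$. Theorem~\ref{Superscheme_of_Morphisms} then provides a quasi-projective $S$-superscheme $\Map_S(X,Y)$ representing the Hom functor. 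Inside this, I claim that the subfunctor $\mathfrak{Isom}_S(X,Y)$ is represented by an open sub superscheme: given a morphism $f:X_T\to Y_T$ over some $T\to \Map_S(X,Y)$, the locus of $t\in T$ over which $f_t$ fails to be an isomorphism is $\pi_T(\Supp(\coker f^{\#}\oplus \ker f^{\#}))$ where $\pi_T:X_T\to T$ is the structure map; since both summands are coherent with support proper over $T$ (using properness of $X\to S$ and the fact that an isomorphism on a geometric fiber extends to a neighborhood), this locus is closed, so its complement gives the desired open subfunctor. Gluing the local representability gives a global $S$-superscheme $\Isom_S(X,Y)$.

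Next I would verify that $\Isom_S(X,Y)\to S$ is proper by the valuative criterion applied to the bosonic truncation: for $R$ a DVR with fraction field $K$ and an isomorphism $\varphi_K:X_K\to Y_K$, the bosonic truncation gives an isomorphism of punctured stable spin curves over $K$, which by the classical theory extends uniquely to an isomorphism $\varphi_{\text{bos}}:X_{\text{bos}}\to Y_{\text{bos}}$ over $R$ (here one uses that punctured stable spin curves are separated and proper moduli, and that the spin structure datum is rigidified by the morphism). Lifting $\varphi_{\text{bos}}$ to the full SUSY structure is then a deformation-theoretic question controlled by the same local models of Proposition~\ref{Local_Str_on_Complete_SuperRing}: Lemma~\ref{Lemma_Extension_of_Derivation} showed that once the underlying bosonic isomorphism is fixed, the extension to a superconformal isomorphism (if it exists) is unique up to the canonical coordinate transformations, and existence follows because on the generic fiber $\varphi_K$ already preserves the derivation, which is a closed condition in the family. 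Uniqueness in the valuative criterion is immediate since $\Map_S(X,Y)$ is separated.

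For quasi-finiteness, the geometric fiber $\Isom_s(X_s,Y_s)$ is a torsor under $\Aut_s(X_s)$ when nonempty; the latter is a subgroup of $\Aut$ of the underlying punctured stable spin curve, which is finite by classical stable-curve theory, and the extra datum of the SUSY derivation only cuts this down further. Combined with properness, this gives finiteness of $\Isom_S(X,Y)\to S$. Finally, for unramifiedness it suffices to check that at every geometric point $(s,\varphi:X_s\xrightarrow{\sim}Y_s)\in \Isom_S(X,Y)$, the relative tangent space vanishes (both even and odd parts). Transport via $\varphi$ identifies this tangent space with the space of global infinitesimal automorphisms of $X_s$ with values in $\kappa(s)[\theta]$ and $\kappa(s)[t]/(t^2)$, which by Proposition~\ref{Infinitesimal_Automorphism} equals $\Gamma(X_s,T^+_{X_s}(-E-\mscr D)\oplus \Pi\mscr E^{\vee})\otimes_{\kappa(s)}\mcal J$. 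Corollary~\ref{Absence_of_Global_Infinitesimal_Automorphism} shows this space is zero, so $\Isom_S(X,Y)\to S$ is formally unramified, and combined with finite type this yields unramifiedness.

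The main obstacle I expect is the valuative criterion for properness: verifying that the unique bosonic extension of a generic-fiber isomorphism automatically respects the SUSY derivation. The strategy is to reduce this to a statement about the two SUSY structures on $X_R$ obtained from pulling back $\delta_X$ and $\varphi^{\ast}\delta_Y$: they agree on the generic fiber, and because $\omega_{X/R}(\mscr D)$ is torsion-free over $R$, a homomorphism $\Omega^1_{X/R}\to \omega_{X/R}(\mscr D)$ that vanishes generically must vanish identically, giving the extension.
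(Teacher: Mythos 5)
Your decomposition into representability, finiteness, and unramifiedness matches the paper's outline, and the unramifiedness step (transport the tangent space along $\varphi$ and invoke Corollary~\ref{Absence_of_Global_Infinitesimal_Automorphism} together with Proposition~\ref{Infinitesimal_Automorphism}) is exactly the paper's argument. But there is a real gap in the representability step, and a genuine divergence in the finiteness step.

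\textbf{The representability step has a gap.} You claim $\mathfrak{Isom}_S(X,Y)$ is an \emph{open} subfunctor of $\mathfrak{Map}_S(X,Y)$, characterized by the vanishing of $\coker f^{\#}\oplus\ker f^{\#}$. That argument only isolates the locus where $\Phi$ is an isomorphism of underlying superschemes, and that locus is indeed open. But by Definition~\ref{def:the definition of isomorphism of super-Riemann surfaces}, an isomorphism of stable SUSY curves must additionally satisfy $f^{-1}E'=E$, $f^{-1}\mscr D'=\mscr D$, and $f^{*}\delta'=\delta$. Each of these is a \emph{closed} condition, not an open one: in the paper, preservation of punctures is handled by the incidence scheme of Corollary~\ref{Incidence_Superscheme} (intersecting closed sub superschemes $U_\alpha$ of the open isomorphism locus $U$), and preservation of the derivation is handled by intersecting the section induced by $\Phi^{*}\delta'-\delta$ with the zero section of the Hom superscheme of Corollary~\ref{Hom_Superscheme}. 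The net effect is that $\Isom_S(X,Y)$ is a \emph{locally closed} sub superscheme of $\Map_S(X,Y)$, not an open one, and your proposal as written would represent a strictly larger functor.

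\textbf{The finiteness step takes a genuinely different route.} You verify finiteness by running the valuative criterion for properness (extend the generic-fiber isomorphism of spin curves, then show the extension respects $\delta$ by torsion-freeness of $\omega_{X/R}(\mscr D)$ over the DVR) and then combining with quasi-finiteness on geometric fibers. This is plausible and your sketch of the properness step — comparing $\delta_X$ and $\varphi^{*}\delta_Y$ and using that a homomorphism into a relatively torsion-free sheaf vanishing generically vanishes identically — is sound in outline, but it is substantially more work than the paper's approach and you flag it yourself as the weak point. The paper instead observes that for any \emph{bosonic} test scheme $T$, isomorphisms of $X_T$ and $Y_T$ as SUSY curves coincide with isomorphisms of the underlying punctured spin curves (Remark~\ref{Remark_Superspace_BosTruncationOfSRS}), whence by Yoneda and Lemma~\ref{Superspace_Adjointness} one gets an identification $\Isom_S(X,Y)_{\text{bos}}\cong\Isom_{S_{\text{bos}}}(\mscr C,\mscr C')$. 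Since the right-hand side is finite by the classical theory of stable (spin) curves, and since finiteness of a morphism of superschemes is by definition a condition on its bosonic truncation, finiteness follows immediately with Nakayama to control the nilpotent part. This sidesteps the valuative criterion entirely. If you wish to keep your approach you would need to carefully establish both existence and uniqueness in the valuative criterion (including nodal degenerations and the $\mbb Z/2$-ambiguity at R nodes), which the paper's bosonic-truncation shortcut avoids.
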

\begin{proof}
	Since representability can be proven Zariski locally, so we may additionally assume that $S$ is affine, so $X$ is projective over $S$ according to Lemma \ref{Stable_SUSY_Curve_Is_Locally_Projective}. First consider the natural transform $\mathfrak{Isom}_S(X,Y)\to \mathfrak{Map}_S(X,Y)$ (see Theorem \ref{Superscheme_of_Morphisms}). According to \textit{loc.cit.} $\mathfrak{Map}_S(X,Y)$ is represented by a locally finite type $S$-superscheme $\Map_S(X,Y)$, we claim that $\mathfrak{Isom}_S(X,Y)$ is represented by a locally closed sub superscheme of $M\equiv \Map_S(X,Y)$. Take the universal map $\Phi:X_M\to Y_M$. The locus $U$ on $M$ where $\Phi$ is an isomorphism of superschemes is open. Restricting on $U$, and let the puncture of $X$ (resp. $Y$) be $E,\mcal D$ (resp. $E',\mcal D'$), then the incidence relations $\Phi^{-1}E'_i\subset E_i$, $E_i\subset \Phi^{-1}E'_i$, $\Phi^{-1}\mcal D'_i\subset \mcal D_i$ and $\mcal D_i\subset\Phi^{-1}\mcal D'_i$ are represented by closed sub superschemes $U_{\alpha}$ of $U$ (Corollary \ref{Incidence_Superscheme}). Take intersections of these $U_{\alpha}$ and we obtain a closed sub superscheme $V$ of $U$ which represents isomorphisms between $X$ and $Y$ as superschemes as preserves punctures. Finally the preservation of derivation can be interpreted as $\Phi^*\delta'-\delta=0$, so it is represented by the intersection of the section induced by $\Phi^*\delta'-\delta$ and the zero section in the Hom superscheme $\mathfrak{Hom}_V(\Omega^1_{X_V/V},\omega_{X_V/V})$. Note that moduli spaces showing up in these steps are locally finite presented, whence $\mathfrak{Isom}_S(X,Y)$ is represented by a locally finite type $S$-superscheme $\Isom_S(X,Y)$.
	
	\newl Next, if $T$ is a bosonic scheme, then isomorphisms between $X_T$ and $Y_T$ are one to one corresponds to isomorphisms between their underlying spin curves. By Yoneda Lemma, this implies that 
	\begin{equation*}
	\Isom_S(X,Y)_{\bos}\cong\Isom_{S_{\bos}}(\mscr C,\mscr C'),
	\end{equation*}
	where $\mscr C,\mscr C'$ are underlying spin curves of $X\times_S S_{\bos},Y\times_S S_{\bos}$. Consequently $\Isom_S(X,Y)_{\bos}$ is finite over $S_{\bos}$ hence $\Isom_S(X,Y)$ is finite over $S$ by Nakayama Lemma.
	
	\newl Finally, Corollary \ref{Absence_of_Global_Infinitesimal_Automorphism} implies that for any geometric point $s$ of $S$ and any $\phi\in \Isom_s(X_s,Y_s)$, the tangent space of $\Isom_S(X,Y)_s$ at $\phi$ is trivial, hence $\Isom_S(X,Y)$ is unramified over $S$. This concludes the proof.
\end{proof}

\subsubsection{\' Etale Local Structures of Stable SUSY Curves}

Let us discuss the \'etale local structures of stable SUSY curves. 
\begin{lem}\label{Lemma_Artin_Approximation}
	Suppose that $A$ is a strictly Henselian local Noetherian superring, such that $A/\mathfrak{n}_A$ is a G-ring. Then there exists an inductive system $\{B_{\lambda}\}_{{\lambda}\in L}$ of smooth $A$ superalgebras, such that $$\widehat{A}=\lim_{\substack{\longrightarrow\\{\lambda}}}B_{\lambda}.$$
\end{lem}

\begin{proof}
	Let $A=A^+\oplus A^-$ be its bosonic and fermionic parts. $A/\mathfrak{n}_A$ being a G-ring implies that $A^+$ is also a G-ring, so $\widehat{A}^+=\widehat{A^+}$ is a regular-$A^+$ algebra. By Popescu's Theorem, there exists an inductive system $\{B'_{\lambda}\}_{{\lambda}\in L}$ of smooth $A^+$ algebras, such that $$\widehat{A^+}=\lim_{\substack{\longrightarrow\\{\lambda}}}B'_{\lambda}.$$Then $B_{\lambda}=B'_{\lambda}\otimes _{A^+}A$ is what we want.
\end{proof}

\begin{proposition}\label{Etale_Local_Str}
	Suppose that $S$ is a Noetherian superscheme such that $S_{\text{\normalfont bos}}$ is a G-scheme. Let $X\to S$ be a stable SUSY curve, then there exists an \' etale cover $\{V_i\to S\}_{i\in I}$ and for each $i\in I$ there exists a finite set $J_i$ and an \' etale cover $\{\rho_{ij}:U_{ij}\to X_{i}\}_{j\in J_i}$ where $X_i=X\times _S V_i$, such that
	\begin{enumerate}
		\item Every $U_{ij}$ has an \' etale morphism $\varphi_{ij}:U_{ij}\to Y$ where $Y$ is one of the canonical models such that $\rho_{ij}^*\delta_X=\varphi_{ij}^*\delta_{\text{\normalfont can}}$.
		\item For every pair of indices $j,j'\in J_i$, $j\neq j'$, $U_{ij}\times _{X_i}U_{ij'}$ is smooth over $V_i$ and contains no marking divisors.
	\end{enumerate}
\end{proposition}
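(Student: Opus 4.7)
The plan is to reduce the problem to the local classification on formal completions established in Proposition \ref{Local_Str_on_Complete_SuperRing} and then descend this to honest \'etale neighborhoods via Artin approximation in the super setting.

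First I would fix a geometric point $s$ of $S$ and a point $p$ in the fiber $X_s$. Let $A = \mcal{O}_{S,s}^{\text{sh}}$ be the strict Henselization; applying Proposition \ref{Local_Str_on_Complete_SuperRing} to the base change of $X$ over $\Spec \widehat{A}$, the completion $\widehat{\mcal{O}}_{X,p}$ is isomorphic, compatibly with the derivation $\delta$, to the completion of one of the four canonical models $Y_p$ (smooth, R puncture, R node, or NS node) at the origin of its special fiber. Denote this SUSY isomorphism by $\widehat{\phi}: \widehat{Y}_{p,0} \to \widehat{\mcal{O}}_{X,p}$.

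Second, I would descend $\widehat{\phi}$ to an \'etale neighborhood using Artin approximation. The G-scheme hypothesis on $S_{\text{bos}}$ together with Lemma \ref{Lemma_Artin_Approximation} exhibits $\widehat{A}$ as a filtered colimit of smooth $A$-superalgebras $\{B_\lambda\}$. By Lemma \ref{Lemma_Limit_Stable_SUSY_Curves}, the category of stable SUSY curves is compatible with such colimits, so the data of $\widehat{\phi}$ (viewed as an element of the isomorphism functor between suitable \'etale localizations) is defined over some $B_\lambda$. Combined with the representability of $\mathfrak{Isom}$ as a finite unramified superscheme (Proposition \ref{Representablity_of_Diagonal}) and with the smoothness of $Y_p$ over its base, this yields an \'etale neighborhood $V_p \to S$ of $s$, an open $U_p \subset X \times_S V_p$ containing $p$, and an \'etale morphism $\varphi_p: U_p \to Y_p$ with $\varphi_p^* \delta_{\text{can}} = \delta_X$. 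Properness of $X \to S$ lets us cover each fiber with finitely many such $U_p$, and spreading out gives the required covers $\{V_i \to S\}$ and $\{U_{ij} \to X_i\}$.

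Finally, condition 2 is achieved by shrinking: the loci of nodes and puncture divisors are disjoint closed substacks meeting each fiber in a finite set, so I would refine the $U_{ij}$ so that each contains at most one such special point. On any overlap $U_{ij} \times_{X_i} U_{ij'}$ with $j \ne j'$, at least one of the two \'etale maps lands in the smooth canonical model away from puncture divisors, forcing the overlap to be smooth over $V_i$ and free of markings. The main obstacle will be the Artin approximation step: one must approximate not merely a superring isomorphism but one intertwining the SUSY derivations, and one must verify that the approximated morphism is actually \'etale rather than merely smooth. The criterion in Lemma \ref{Lemma_Equiv_Defn_of_Etale}, which reduces \'etaleness to the bosonic quotient plus a Cartesian-square condition, should let us check \'etaleness of $\varphi_p$ by combining classical \'etaleness on $U_{p,\text{ev}}$ with the known isomorphism on $\mcal O_X^-$ induced by $\delta^-$.
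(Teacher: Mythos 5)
Your overall strategy — reduce to a strictly Henselian then complete base via Lemmas \ref{Lemma_Artin_Approximation} and \ref{Lemma_Limit_Stable_SUSY_Curves}, apply Proposition \ref{Local_Str_on_Complete_SuperRing} to the completion at a fiber point, and then descend the formal isomorphism to an honest \'etale chart by Artin approximation — is the same approach the paper takes. But the sketch has a concrete gap at the very step you flag as the main obstacle.

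Artin approximation (in the form of Lemma \ref{Lemma_Artin_Approximation}) realizes $\widehat{\mcal O}_{X,p}$ as a filtered colimit of \emph{smooth} $\mcal O_X(W_p)$-superalgebras $B_\lambda$, not \'etale ones, and a priori $\Spec B_\lambda \to X$ has positive relative dimension. Your claim that this ``yields \dots\, an open $U_p \subset X\times_S V_p$'' is therefore not correct: the source is neither open in $X$ nor \'etale over it. What is actually needed is the extra step of cutting the approximating smooth chart down to the right dimension: after lifting the coordinate functions $z_i,\theta_i$ (and the local generators of $\omega_{X/S}$ needed to match the derivation) from $\widehat{\mcal O}_{X,p}$ to some $B_\lambda$, one lifts a system of homogeneous parameters of the fiber of $\Spec B_\lambda\to W_p$ at $p$, takes the closed sub-superscheme $U''_{p,\lambda}$ they cut out, and only then does one have an \'etale morphism to $W_p$; \'etaleness of the induced $\varphi_{p,\lambda}:U''_{p,\lambda}\to Y$ is then read off by comparing completions at $p$. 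Lemma \ref{Lemma_Equiv_Defn_of_Etale} is a characterization of \'etale morphisms, not a mechanism for producing one from a smooth one, so it does not repair this. Relatedly, the invocation of Proposition \ref{Representablity_of_Diagonal} is not needed: the descent of the formal isomorphism is handled directly by approximating the finitely many elements (coordinates, generators of $\omega$, and the relations they satisfy) rather than by appealing to the $\mathfrak{Isom}$-superscheme.

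The treatment of condition 2 can also be tightened. Rather than arguing post hoc that one of the two \'etale maps on an overlap lands in the smooth canonical model, the paper arranges from the start that each $p$ in the finite set of nodes and puncture divisors of the special fiber has a dedicated affine neighborhood $W_p$ avoiding all other such special points, so that $U_p\times_X U_{p'}$ is automatically smooth over the base and disjoint from the marking divisors; one then enlarges the index set (using that the special fiber is one-dimensional, hence the leftover locus is a finite set of points rather than appealing to properness) to get a cover.
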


\begin{proof}
	By Lemma \ref{Lemma_Limit_Stable_SUSY_Curves}, it suffices to prove the proposition with additional assumtion that $S=\Spec A$ where $A$ is a strict Henselian local Noetherian superring. Since $S_{\text{bos}}$ is a G-scheme, we can apply Lemma \ref{Lemma_Artin_Approximation} to furthermore assume that $S=\Spec A$ where $A$ is a complete local Noetherian superring with algebraically closed residue field\footnote{$B_i$ in the Lemma \ref{Lemma_Artin_Approximation} are only smooth, not \' etale, but we can always choose a closed point $p$ on the special fiber of $\Spec B_i\to \Spec A$ and lift a system of parameters of $p$ on the special fiber to the germ of $B_i$ at $p$, after taking quotient of these elements, we get a local superring which is \' etale over $A$ hence isomorphic to $A$ since $A$ is strictly henselian.}. Let the set of nodes and marking divisors of the special fiber $X_s$ be $J_0$, then for every $p\in J_0$, there exists an isomorphism
	\begin{equation*}
	\varphi_p:\Spec \widehat{\mcal O}_{X,p}\cong \Spec \widehat{\mcal O}_{Y,0},
	\end{equation*}
	where $Y$ is one of the canonical models and $0$ is the origin of its special fiber over $\Spec A$, by Proposition \ref{Local_Str_on_Complete_SuperRing}. Choose an affine neighborhood of $p$, denoted by $W_p$, such that it does not contain any point in the set $J_0\setminus \{p\}$. Now $X_{\text{bos}}$ is a G-scheme, so Lemma \ref{Lemma_Artin_Approximation} implies that there exists an directed system $\{B_{\lambda}\}_{\lambda\in L}$ of smooth $\mcal O_{X}(W_p)$ superalgebras such that $$\widehat{\mcal O}_{X,p}=\lim_{\substack{\longrightarrow\\ \lambda}}B_{\lambda}.$$Denote the morphism $\Spec B_{\lambda}\to W_p$ by $\rho_{p,{\lambda}}$. Next we prove the NS node case and the rest are similar. There exists ${\lambda}\in L$ such that $\exists z_1,z_2,\theta_1,\theta_2\in B_{\lambda}$ whose images in $\widehat{\mcal O}_{X,p}$ are the corresponding generators of the maximal ideal and satisfies the relations in the NS node canonical model, this defines a morphism of $A$ superschemes $\varphi_{p,\lambda}:\Spec B_{\lambda}\to Y$. Moreover, after increasing ${\lambda}$, there is a set of generators $\mathfrak{b}_1^-,\mathfrak{b}_2^-,\mathfrak{b}^+\in \rho_{p,{\lambda}}^*\omega_{X/S}$, whose images in $\omega_{\Spec \widehat{\mcal O}_{X,p}/S}$ are the corresponding generators in the NS node canonical model, such that $\rho_{p,{\lambda}}^*\delta_X:\rho_{p,{\lambda}}^*\Omega^1_{X/S}\to \rho_{p,{\lambda}}^*\omega_{X/S}$ acts on generators by
	\begin{alignat*}{2}
	\mathrm{d}z_{1}&\to z_1\mathfrak{b}^+, &\qquad \mathrm{d}z_{2}&\to -z_2\mathfrak{b}^+,
	\\
	\mathrm{d}\theta_{1}&\to \mathfrak{b}_1^-,&\qquad \mathrm{d}\theta_{2}&\to \mathfrak{b}_2^-.
	\end{alignat*}
	Use the same notation $p$ to denote the image of the closed point of $\Spec \widehat{\mcal O}_{X,p}$ in $\Spec B_{\lambda}$, then lift a system of homogeneous parameters of the fiber of $\rho_{p,{\lambda}}$ at point $p$ to an open affine neighborhood $U'_{p,\lambda}$ and take the closed sub superscheme of $U'_{p,\lambda}$ defined by these elements, denoted by $U''_{p,\lambda}$, then $U''_{p,\lambda}$ is \' etale over $W_p$ at $p$. We keep using the notation $\varphi_{p,\lambda}$ to denote its restriction to $U''_{p,\lambda}$. Then taking completion at $p$ induces isomorphism $\widehat{\mcal O}_{U''_{p,\lambda},p}\cong \widehat{\mcal O}_{Y,0}$, hence $\varphi_{p,\lambda}$ is \' etale at $p$. Shrinking $U''_{p,\lambda}$ if needed, we have two morphisms $\rho_{p,\lambda}:U''_{p,\lambda}\to U_p$ and $\varphi_{p,\lambda}:U''_{p,\lambda}\to Y$ such that $\rho_{p,\lambda}^*\delta_X=\varphi_{p,\lambda}^*\delta_{\text{\normalfont can}}$ since they agree on generators of $\Omega^1_{U''_{p,\lambda}/S}$.
	
	\newl Applying the same procedure to other elements in the set $J_0$, we found a finite set of superschemes $\{U_p\}_{p\in J_0}$ with \' etale morphisms $\rho_p:U_p\to X,\varphi_p:U_p\to Y$ such that $\rho_p^*\delta_X=\varphi_p^*\delta_{\text{\normalfont can}}$. Moreover $U_p\times_X U_{p'}$ is smooth over $S$ and contains no puncturing divisors if $p\neq p'$, this follows from the construction that $U_p$ are superschemes over $W_p$ for some open affine neighborhood of $p$ which contains no element $p'\in J_0\setminus \{p\}$, so the special fiber of $U_p\times_X U_{p'}$ is smooth and contains no marking divisor. The remaining issue is that $\bigcup _{p\in J_0}U_p\to X$ might not be a covering, this can be cured by enlarging $J_0$ to include all points in $X_s\setminus \bigcup _{p\in J_0}W_p$, note that this is a finite set since $X_s$ is one dimensional.
\end{proof}

\subsubsection{Absence of Obstruction for Lifting Stable SUSY Curves}

Recall the setup for discussing local models of stable SUSY curves: suppose that there is an extension 
\begin{equation*}
0\to \mcal J\to A\to A/\mcal J\to 0,
\end{equation*}
of local Artinian superrings, where $\mcal J$ is an ideal of $A$, purely bosonic or purely fermionic, such that the maximal ideal $\mathfrak{m}_A$ annihilates it. We also assume that the residue field of $A$, denote by $\kappa$, is algebraically closed. 

\begin{proposition}\label{Absence_of_Obstruction}
	Suppose that there is a stable SUSY curve $Y$ over $\Spec A/\mcal J$, then there exists a stable SUSY curve $X$ over $\Spec A$ such that its base change to $\Spec A/\mcal J$ is isomorphic to $Y$.
\end{proposition}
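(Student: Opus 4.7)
The plan is to reduce the global lifting problem to a cohomological obstruction on the one-dimensional special fiber, which must vanish for dimension reasons. Concretely, I would first produce local lifts along an \'etale cover of $Y$ using the explicit canonical local models, and then show that the \v Cech 2-cocycle measuring the failure of these local lifts to glue is cohomologically trivial.

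For the local lifts, apply Proposition~\ref{Etale_Local_Str} to $Y$ to obtain a finite \'etale cover $\{\rho_j: U_j \to Y\}_{j\in J}$ together with \'etale maps $\varphi_j: U_j \to Y_j^{\text{can}}$ onto one of the four canonical models (smooth, R~puncture, R~node, or NS~node) over $\Spec A/\mcal J$, satisfying $\rho_j^*\delta_Y = \varphi_j^*\delta_{\text{can}}$. Each canonical model is specified by generators and relations whose parameters lie in $(A/\mcal J)^+$; choosing arbitrary lifts of these parameters to $A^+$ yields canonical models $\widetilde Y_j^{\text{can}}$ over $\Spec A$ restricting to $Y_j^{\text{can}}$ modulo $\mcal J$, and coming equipped with their own canonical SUSY derivations. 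By Lemma~\ref{Lemma_Nilpotent_Lift_of_Etale_Morphism}, each $\varphi_j$ lifts uniquely to an \'etale morphism $\widetilde\varphi_j: \widetilde U_j \to \widetilde Y_j^{\text{can}}$ over $\Spec A$, so that $\widetilde U_j$ acquires a SUSY structure by pullback of $\delta_{\text{can}}$ that reduces to the one on $U_j$. The divisors $E,\mscr D$ propagate tautologically, being part of the canonical model data.

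The obstruction to gluing these local lifts into a global stable SUSY curve over $\Spec A$ lies in $\H^2(Y_s, T^{\bullet}_{Y_s})\otimes_\kappa \mcal J$ by Vaintrob's deformation theory (Theorem~$2.1.2$ of \cite{Vaintrob199001}). Here $T^{\bullet}_{Y_s}$ is the tangent complex of the stable SUSY curve $Y_s$ over $\kappa$: its $\mcal H^0$ is the sheaf of infinitesimal superconformal automorphisms computed in Proposition~\ref{Infinitesimal_Automorphism}, a coherent sheaf on the one-dimensional scheme $Y_s$, and its $\mcal H^1$ is a skyscraper sheaf supported on the finite set of nodes. The local-to-global spectral sequence $E_2^{p,q}=\H^p(Y_s,\mcal H^q(T^\bullet))\Rightarrow \H^{p+q}(Y_s,T^\bullet)$ then forces $\H^2(Y_s,T^{\bullet}_{Y_s})=0$, since $Y_s$ has no coherent cohomology in degrees $\geq 2$ and the support of $\mcal H^1$ has no cohomology in degrees $\geq 1$. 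Hence the obstruction vanishes. Equivalently, viewed \v Cech-theoretically, the discrepancies of the local lifts on double overlaps $U_{jk}=U_j\times_Y U_k$ assemble into a \v Cech 1-cochain for the sheaf $\mcal A$ of Proposition~\ref{Infinitesimal_Automorphism}, whose coboundary class in $\H^2(Y_s,\mcal A)$ vanishes; consequently the local lifts can be adjusted by a 1-coboundary so as to glue on overlaps.

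Assembling the gluing datum by \'etale descent (Proposition~\ref{Etale_Descent_Stable_SUSY_Curve}) produces a flat $A$-superscheme $X$ with a SUSY structure, the puncture divisors $E,\mscr D$, and the correct reduction modulo $\mcal J$. Properness, relative Cohen--Macaulayness, and the required conditions $(A)$ and $(B)$ in Definition~\ref{def:stable SUSY curves} are \'etale-local and hold on each canonical model by inspection, so they are inherited by $X$. The main technical point, which is the real content of the argument, is to verify that the infinitesimal automorphisms used to rectify the gluing simultaneously preserve the SUSY derivation $\delta$ and the puncture divisors $E+\mscr D$; this is exactly the content of Proposition~\ref{Infinitesimal_Automorphism}, whose proof shows that $\mcal A \cong (T^+_{Y_s}(-E-\mscr D)\oplus \Pi\,\mscr E^{\vee})\otimes_\kappa \mcal J$ is by construction the sheaf of superconformal-structure-preserving derivations vanishing along the punctures, so modifying by sections of $\mcal A$ automatically preserves all required data.
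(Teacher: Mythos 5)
Your overall route matches the paper's: construct local lifts, then observe that the gluing obstruction sits in a degree-two cohomology group of a sheaf on the one-dimensional special fiber, which vanishes for dimension reasons. However, two of your citations do not actually supply what you need.

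First, Theorem~2.1.2 of \cite{Vaintrob199001} gives an obstruction theory for deformations of complex \emph{superspaces} (i.e.\ of the underlying superscheme only); it does not say anything about deforming a superscheme \emph{together with} a SUSY derivation $\delta$ and puncture divisors. Your sentence ``the obstruction to gluing these local lifts into a global stable SUSY curve over $\Spec A$ lies in $\H^2(Y_s, T^{\bullet}_{Y_s})\otimes_\kappa \mcal J$ by Vaintrob's deformation theory, \ldots where $T^\bullet_{Y_s}$ is the tangent complex of the stable SUSY curve'' asserts precisely the thing that needs to be proved. The paper establishes this by hand: it proves that the local germ always extends (via the canonical-model classification in Lemma~\ref{Lemma_Extension_of_Derivation} and Proposition~\ref{Local_Str_on_Complete_SuperRing}), shows uniqueness on smooth open affines (Step~2 of the paper's proof), and only then packages the remaining gluing ambiguity as a \v Cech 2-cocycle in $\H^2(Y, \mcal A)_+$ via the fiber sequence~\eqref{Fiber_Sequence_of_Deformation}. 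Your final ``equivalently, viewed \v Cech-theoretically'' sentence is essentially the paper's argument, but the equivalence with the $\H^2(T^\bullet)$ framing is itself what is at stake, so the word ``equivalently'' is doing unearned work. Related: you never argue that the two lifts $\widetilde U_j$ and $\widetilde U_k$ become isomorphic on the overlap $U_{jk}$, which is needed before one can even speak of a transition $\phi_{jk}$; this is exactly what the paper's Step~2 (unique deformation on smooth open affines) supplies.

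Second, Proposition~\ref{Etale_Descent_Stable_SUSY_Curve} is descent for stable SUSY curves along an \'etale cover of the \emph{base} $S=\Spec B$; it is not a statement about gluing local lifts along an \'etale cover of the total space $Y$. What you actually need is \'etale gluing of superschemes-with-structure along a cover of $Y$, which is a different statement and is precisely the thing the paper sidesteps by working with a \emph{Zariski} cover $\{W_j\}$ of $Y$ with affine pieces and smooth pairwise intersections containing no markings. If you insist on the \'etale cover coming from Proposition~\ref{Etale_Local_Str}, you must supply the \'etale-descent step for superschemes and check that the resulting algebraic superspace is in fact a superscheme (which, given projectivity, is plausible but not free). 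The Zariski route used in the paper avoids this entirely and makes the \v Cech argument elementary.

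In short: your reduction to vanishing of $\H^2$ on a one-dimensional fiber is the right idea and is the heart of the paper's Step~3, but the two citations you lean on do not cover the claimed statements, and a short argument (the paper's Steps~1 and 2) is missing in between.
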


\begin{proof}
	\textbf{Step 1.} We prove that Zariski-locally on $Y$ there always exists such lifting. Let the deformation space of $\mcal O_{Y,p}$ (see the notation in equation \eqref{Deformation_Local_to_Global}) as a germ of SUSY curve be $\sDef_A(\mcal O_{Y,p},\mcal J)$, we want to show that this is not empty. According to Equation \ref{Deformation_Local_to_Global}, we have a commutative diagram of maps between sets
	\begin{center}
		\begin{tikzcd}
		\sDef _A(\mcal O_{Y,p},\mcal J)\arrow[r] \arrow[d] & \sDef  _A(\widehat{\mcal O}_{Y,p},\mcal J)\arrow[d]\\
		\text{Def}_A(\mcal O_{Y,p},\mcal J) \arrow[r, equal] &\text{Def}_A(\widehat{\mcal O}_{Y,p},\mcal J).
		\end{tikzcd}
	\end{center}
	We claim that the map $\sDef _A(\mcal O_{Y,p},\mcal J)\rightarrow \sDef _A(\widehat{\mcal O}_{Y,p},\mcal J)$ is bijective.
	\begin{enumerate}
		\item[•] Injectivity: Suppose that there are two deformations whose completions are isomorphic, then the underlying deformation of superscheme structures are isomorphic, denoted by $\mcal O'_{Y,p}$, by the identification $\text{Def}_A(\mcal O_{Y,p},\mcal J) =\text{Def}_A(\widehat{\mcal O}_{Y,p},\mcal J)$. Let the derivations be $\delta_1,\delta_2$, using Lemma \ref{Lemma_Artin_Approximation}, there exists an \' etale morphism $\mcal O'_{Y,p}\to C$ such that $C$ is local, and $\pi^*\delta_1,\pi^*\delta_2$ are related by infinitesimal automorphism $\alpha$ on $\Spec C$, then $\pi_1^*\alpha=\pi_2^*\alpha$ because they both transform $q^*\delta_1$ to $q^*\delta_2$, see notations below
		\begin{center}
			\begin{tikzcd}
			\Spec C\otimes _{\mcal O'_{Y,p}}C\arrow[r,"\pi_1"] \arrow[d,"\pi_2"]\arrow[rd,"q"] & \Spec C\arrow[d,"\pi"]\\
			\Spec C\arrow[r,"\pi"] &\Spec \mcal O'_{Y,p}
			\end{tikzcd},
		\end{center}
		Hence $\alpha$ descents to an infinitesimal automorphism on $\Spec \mcal O'_{Y,p}$ and transform $\delta_1$ to $\delta_2$.
		\item[•] Surjectivity: Suppose that $(\widehat{\mcal O}'_{Y,p},\delta')$ is a deformation of $(\widehat{\mcal O}_{Y,p},\delta)$ whose underlying superscheme is the completion of a deformation of $\mcal O_{Y,p}$, demoted by $\mcal O'_{Y,p}$. Using Lemma \ref{Lemma_Artin_Approximation}, there exists an \' etale morphism $\mcal O'_{Y,p}\to C$ such that $C$ is local and $\delta'$ descends to a SUSY derivation $\delta_C$ on $\Spec C$. Then $\pi_1^*\delta_C,\pi_2^*\delta_C$ are isomorphic derivations because their completions are isomorphic, hence they are related by an infinitesimal automorphism $\alpha$, see notations below
		\begin{center}
			\begin{tikzcd}
			\Spec C\otimes _{\mcal O'_{Y,p}}C\arrow[r,"\pi_1"] \arrow[d,"\pi_2"] & \Spec C\arrow[d,"\pi"]\\
			\Spec C\arrow[r,"\pi"] &\Spec \mcal O'_{Y,p}
			\end{tikzcd},
		\end{center}
		Apparently $\alpha$ satifies cocycle condition when pulled back to $\Spec C\otimes _{\mcal O'_{Y,p}}C\otimes _{\mcal O'_{Y,p}}C$, hence defines an element in the cohomology group
		\begin{equation*}
		\H^1\left((\Spec \mcal O'_{Y,p})_{\text{\' et}},\mcal A\right)_+,
		\end{equation*}
		which is trivial since $\mcal A$ is coherent and $\Spec \mcal O'_{Y,p}$ is affine. So there exists an \' etale morphism $C\to D$ and a SUSY derivation $\delta_D$ which is isomorphic to the pull bcak of $\delta_C$ to $\Spec D$, such that $f^*_1\delta_D=f^*_2\delta_D$ where $f_1,f_2$ are two projections from $\Spec D\otimes _{\mcal O'_{Y,p}}D$ to $\Spec C$. Hence $\delta_D$ descends to $\Spec \mcal O'_{Y,p}$.
	\end{enumerate}
	Now we take an element of $\sDef _A(\widehat{\mcal O}_{Y,p},\mcal J)$ (see Proposition \ref{Local_Str_on_Complete_SuperRing}), which induces a deformation of $W_p$ as a superscheme (see equation \eqref{Deformation_Local_to_Global} and notations there), denote the deformation of $W_p$ by $W'_p$, and by the identification $\sDef _A(\mcal O_{Y,p},\mcal J)\overset{\sim}{\rightarrow} \sDef _A(\widehat{\mcal O}_{Y,p},\mcal J)$ there exists a SUSY derivation $\delta_p$ on $\Spec \mcal O_{W'_p,p}$. Extend $\delta_p$ to a Zariski open neighborhood of $p$ and we are done.
	
	\newl \textbf{Step 2.} For smooth open affine $W\subset Y$, $W$ deforms uniquely, since it deforms uniquely as superscheme with puncturing divisor\footnote{The deformation space of smooth superscheme $W$ is $\H^1 (W,T_{W_s})$, by Theorem 2.1.2 of \cite{Vaintrob199001}, which is trival when $W$ is affine.}, and the derivation is also unique up to infinetesimal automorphism, since for any point $p\in W$, $\Spec \widehat{\mcal O}_{W,p}$ deforms uniquely as germ of SUSY curves.
	
	\newl \textbf{Step 3.} Take an open affine cover $\{W_j\}_{j\in J}$ of $Y$ such that $W_j$ contains at most one node or puncture divisor, and $W_j\cap W_{k}$ is smooth and contains no node or puncture divisor, whenever $j\neq k$. Shrinking $W_j$ if necessary, we assume that there exists deformations $W'_j$. $W_j\cap W_{k}$ is smooth hence has a unique deformation. Thus $W'_j\cap W_{k}$ and $W'_k\cap W_j$ are related by an isomorphism $\phi_{jk}$. Consider
	\begin{equation*}
	\phi_{jk}\circ\phi_{kl}\circ \phi_{lj}\in \Gamma(W_j\cap W_{k}\cap W_{l},\mcal A)_+,
	\end{equation*}
	which apparently satisfies cocycle condition, thus it defines an element in $\H^2 (Y,\mcal A)_+$ which is trivial because $\dim Y=1$. This means that after a refinement of $\{W_j\}_{j\in J}$, we can modify $\phi_{jk}$ by elements in $\Gamma(W_j\cap W_{k},\mcal A)_+$ to force it satisfying cocycle condition, thus $\{W'_j\}$ glue to a deformation of $Y$ as a stable SUSY curve.
\end{proof}

In fact, the choice of gluing data in the last step of proof is in one-to-one correspondence with $\H^1(Y,\mcal A)_+=\H^1(Y_{s,\text{bos}},\mcal A)_+$, so we have 
\begin{corollary}
	There is a fiber sequence of sets
	\begin{align}\label{Fiber_Sequence_of_Deformation}
	\H^1(Y_{s,\text{\normalfont bos}},\mcal A)_+\to \sDef _A(Y,\mcal J)\to \prod_{p\in \text{\normalfont NS Nodes}}\sDef _A(\widehat{\mcal O}_{Y,p},\mcal J)\times\prod_{p\in \text{\normalfont R Nodes}}\sDef _A(\widehat{\mcal O}_{Y,p},\mcal J).
	\end{align}
\end{corollary}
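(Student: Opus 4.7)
The plan is to extract the fiber sequence directly from the proof of Proposition \ref{Absence_of_Obstruction} by making the gluing analysis functorial. First I would define the second arrow: given a deformation $Y'\to\Spec A$ of $Y$, restrict $Y'$ to formal neighborhoods of each node $p$ to obtain an element of $\sDef_A(\widehat{\mcal O}_{Y,p},\mcal J)$. This is clearly well-defined and depends only on the isomorphism class of $Y'$. At smooth points there is nothing to record, since as shown in Step 2 of the proof of Proposition \ref{Absence_of_Obstruction}, smooth open affines in $Y$ deform uniquely up to unique isomorphism compatible with the SUSY derivation.

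Next I would verify surjectivity of the second arrow and describe its fibers. Given any collection of germ deformations $\{\widehat{\mcal O}'_{Y,p}\}$ at the nodes, pick an open affine cover $\{W_j\}_{j\in J}$ of $Y$ as in Step 3 of the proof of Proposition \ref{Absence_of_Obstruction}, refined so that each $W_j$ contains at most one node (equipped with the prescribed germ deformation) and the overlaps $W_j\cap W_k$ are smooth and puncture-free. By Step 1, each $W_j$ admits a deformation $W'_j$ restricting to the prescribed germ, and by Step 2 each $W'_j\cap W_k$ is isomorphic to $W'_k\cap W_j$ via some $\phi_{jk}$. The failure of cocycle $\phi_{jk}\circ\phi_{kl}\circ\phi_{lj}$ lives in $\H^2(Y_{s,\bos},\mcal A)_+$, which vanishes since $Y_s$ is one-dimensional, so after refinement the $\{W'_j\}$ glue. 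Hence the second arrow is surjective onto the product of germ deformation sets.

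For the fiber description, I would fix such a collection of germ deformations and classify how many global deformations $Y'$ restrict to it. Two systems of gluing data $\{\phi_{jk}\}$ and $\{\phi'_{jk}\}$ producing the same germ data differ by cochains $\alpha_{jk}=\phi'_{jk}\phi_{jk}^{-1}\in \Gamma(W_j\cap W_k,\mcal A)_+$ (even because we want isomorphisms of SUSY curves over $\Spec A$, not odd infinitesimal automorphisms). The cocycle condition for both systems forces $\{\alpha_{jk}\}$ to be a \v{C}ech $1$-cocycle valued in $\mcal A$, and modifying each $W'_j$ by an infinitesimal automorphism $\beta_j\in\Gamma(W_j,\mcal A)_+$ shifts $\alpha_{jk}$ by a coboundary. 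Corollary \ref{Absence_of_Global_Infinitesimal_Automorphism} guarantees there is no ambiguity from \textit{global} automorphisms, which pins down a free and transitive action of $\H^1(Y_{s,\bos},\mcal A)_+$ on each nonempty fiber, exhibiting the claimed fiber sequence.

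The only genuinely delicate point is checking that the \v{C}ech-$1$ class constructed above is independent of the choice of open cover and representative $\{W'_j\}$; this follows from the standard refinement argument together with Step 1's identification $\sDef_A(\mcal O_{Y,p},\mcal J)\cong \sDef_A(\widehat{\mcal O}_{Y,p},\mcal J)$, so two local representatives over a common refinement differ by an infinitesimal automorphism and contribute the same cohomology class. Everything else is a translation of Steps 1--3 of Proposition \ref{Absence_of_Obstruction} into the language of a \v{C}ech-style obstruction-deformation sequence, so the corollary should require no new input beyond bookkeeping.
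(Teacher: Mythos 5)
Your proposal follows essentially the same route the paper intends: the paper's own justification is a single sentence --- ``the choice of gluing data in the last step of proof is in one-to-one correspondence with $\H^1(Y,\mcal A)_+$'' --- and you are simply unwinding that remark into a full \v{C}ech argument that is already implicit in Steps 1--3 of the proof of the preceding proposition. The definition of the restriction map, the surjectivity via $\H^2$-vanishing, and the identification of the fiber with $\H^1(Y_{s,\bos},\mcal A)_+$ are all correct and align with the paper's reasoning.

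One small caveat about the appeal to the vanishing of global infinitesimal automorphisms (Corollary on Absence of Global Infinitesimal Automorphism): this is not actually what gives freeness of the $\H^1$-action. Freeness is automatic from the \v{C}ech formalism: if a cocycle $\alpha$ acts trivially on a deformation $Y'$ with fixed germ data, the isomorphism from the twist back to $Y'$ must restrict over each $W_j$ to an automorphism $\beta_j \in \Gamma(W_j,\mcal A)_+$ of the local deformation (uniqueness of local deformations), and compatibility with the gluings forces $\alpha = d\beta$, i.e.\ $[\alpha]=0$. This argument does not require $\H^0(Y_{s,\bos},\mcal A)_+=0$. Absence of global automorphisms is relevant elsewhere (e.g.\ for unramifiedness of the diagonal / rigidity of the moduli problem), but here it is a red herring. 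Since the conclusion it is meant to support is true anyway, this does not damage the proof, but the justification should be corrected.
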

When $A=B\times _{\kappa}\kappa [\theta]$ or $A=B\times _{\kappa}\kappa [t]/(t^2)$ and $\mcal J$ is the ideal $(\theta)$ or $(t)$, there is a canonical $\kappa$-vector space structure on $\sDef _A(Y,\mcal J)$ with zero being trivial extension and the fiber sequence \eqref{Fiber_Sequence_of_Deformation} becomes a short exact sequence of vector spaces 
\begin{align*}
0\to \H^1(Y_{s,\text{\normalfont bos}},\mcal A)_+\to \sDef _A(Y,\mcal J)\to \bigoplus_{p\in \text{\normalfont NS Nodes}}(\mcal J_p)_+\oplus\bigoplus_{p\in \text{\normalfont R Nodes}}(\mcal J_p)_+\to 0.
\end{align*}
Therefore, we can use it to compute the dimension of $\sDef _A(Y,\mcal J)$:
\begin{enumerate}
	\item[•] $A=B\times _{\kappa}\kappa [t]/(t^2)$, $\H^1(Y_{s,\text{\normalfont bos}},\mcal A)_+=\H^1(Y_{s,\text{\normalfont bos}},T^+_{Y_{s}}(-E-\mscr D))$, use the normalization in the proof of Corollary \ref{Absence_of_Global_Infinitesimal_Automorphism}, there is isomorphism 
	\begin{equation*}
	\H^1(Y_{s,\text{\normalfont bos}},T^+_{Y_{s}}(-E-\mscr D))\cong\bigoplus_i\H^1(C_i,\omega^{-1}_{C_i}(-E-\mscr D-\mcal R-\mcal N)).
	\end{equation*}
	So the dimension of $\H^1(Y_{s,\text{\normalfont bos}},\mcal A)_+$ can be computed 
	\begin{align*}
	\dim_{\kappa} \H^1(Y_{s,\text{\normalfont bos}},\mcal A)_+&=\sum_i (3g_i-3)+\ns+\ra+2\#(\text{NS Node})+2\#(\text{R Node})\\
	&=3\g-3+\ns+\ra-\#(\text{NS Node})-\#(\text{R Node}).
	\end{align*}
	The spaces $(\mcal J_p)_+$ are one dimensional so the dimension of even deformations can be computed
	\begin{align}\label{Even_Tangent_Space}
	\dim_{\kappa} \sDef _A(Y,\mcal J)=3\g-3+\ns+\ra. 
	\end{align}
	\item[•] $A=B\times _{\kappa}\kappa [\theta]$, $\H^1(Y_{s,\text{\normalfont bos}},\mcal A)_+=\H^1(Y_{s,\text{\normalfont bos}},(\omega^+_{X_{s},p})^{-1}(-E-\mscr D)\otimes \mscr E)$, using the short exact sequence
	\begin{align*}
	0\to \bigoplus_{p\in \text{R Node}}\kappa&\to\H^1(Y_{s,\text{\normalfont bos}},(\omega^+_{X_{s},p})^{-1}(-E-\mscr D)\otimes \mscr E)\\
	&\to\bigoplus_i\H^1(C_i,\omega^{-1}_{C_i}(-E-\mscr D-\mcal R-\mcal N)\otimes \mscr E_{C_i})\to 0,
	\end{align*}
	we see that the dimension of $\H^1(Y_{s,\text{\normalfont bos}},\mcal A)_+$ is
	\begin{alignat}{2}\label{Odd_Tangent_Space}
	\dim_{\kappa} \H^1(Y_{s,\text{\normalfont bos}},\mcal A)_+&=\sum_i (2g_i-2)+\ns+\frac{\ra}{2}+2\#(\text{NS Node})+2\#(\text{R Node})\nonumber
	\\
	&=2\g-2+\ns+\frac{\ra}{2}
	\end{alignat}
	The spaces $(\mcal J_p)_+$ are trivial so the dimension of odd deformations is
	\begin{align*}
	\dim_{\kappa} \sDef _A(Y,\mcal J)=2\g-2+\ns+\frac{\ra}{2}.
	\end{align*}
\end{enumerate}
In these two situations, the restriction maps are actually isomorphisms
\begin{alignat}{2}\label{Schlessinger_Second_Condition}
\sDef_{B\times _{\kappa}\kappa [t]/(t^2)}(Y,(t))&\cong \sDef_{\kappa [t]/(t^2)}(Y_s,(t)),\nonumber
\\ 
\sDef_{B\times _{\kappa}\kappa [\theta]}(Y,(\theta))&\cong \sDef_{\kappa [\theta]}(Y_s,(\theta)).
\end{alignat}
Since the restriction of local deformation parameters $$\sDef_{B\times _{\kappa}\kappa [t]/(t^2)}(\widehat{\mcal O}_{Y,p},(t))\to \sDef_{\kappa [t]/(t^2)}(\widehat{\mcal O}_{Y_s,p},(t))$$ is an isomorphism as can be seen from the Proposition \ref{Local_Str_on_Complete_SuperRing} (same for the other case), and the \v Cech cocycles restricts isomorphically, so the isomorphisms \eqref{Schlessinger_Second_Condition} follows from the Five Lemma.

\subsection{Moduli of Stable $\mscr{N}=1$ SUSY Curves}\label{subsec:moduli of stable SUSY curves}
For simplicity, we will work over base field $\mathbb C$. Analogous to the non-supersymmetric case, $\overline{\mscr{SM}}_{\sg,\sns,\sra}$ is an algebraic superstack, more precisely

\begin{thr}\label{Moduli_of_Stable_SUSY_Curve_is_DM}
	Assume that a triple of natural numbers $(\g,\ns,\ra)$ satisfies $\ra\in 2\mathbb Z_{\ge 0}$, and $2\g-2+\ns+\ra>0$.
	\begin{enumerate}
		\item[$(1)$] $\overline{\mscr{SM}}_{\sg,\sns,\sra}$ is a smooth proper Deligne-Mumford superstack of dimension $$\left(3\g-3+\ns+\ra\big|2\g-2+\ns+\frac{\ra}{2}\right).$$ Moreover its bosonic truncation $\overline{\mscr{SM}}_{\sg,\sns,\sra;\text{\normalfont bos}}$ is the moduli stack $\overline{\mscr{S}}_{\sg,\sns,\sra}$ of punctured stable spin curves.
		
		\item[\hypertarget{second part of the moduli problem}{$(2)$}] $\overline{\mscr{SM}}_{\sg,\sns,\sra}$ has a coarse moduli superspace $\overline{\mathcal{SM}}_{\sg,\sns,\sra}$ which is actually bosonic and is also the coarse moduli space for the bosonic quotient $\overline{\mscr{SM}}_{\sg,\sns,\sra;\text{\normalfont ev}}$. Moreover its reduced subspace $\overline{\mathcal{SM}}_{\sg,\sns,\sra;\text{\normalfont red}}$ is the coarse moduli space $\overline{\mathcal{S}}_{\sg,\sns,\sra}$ of punctured stable spin curves.
	\end{enumerate}
\end{thr}

\begin{proof}
	\textbf{(1):} Proposition \ref{Etale_Descent_Stable_SUSY_Curve} shows that $\overline{\mscr{SM}}_{\sg,\sns,\sra}$ is an \' etale superstack, and Lemma \ref{Lemma_Limit_Stable_SUSY_Curves} implies that $\overline{\mscr{SM}}_{\sg,\sns,\sra}$ is locally finite presented. Moreover, Proposition \ref{Representablity_of_Diagonal} says that the diagonal of $\overline{\mscr{SM}}_{\sg,\sns,\sra}$ is representable and is finite and unramified. Assuming that $\overline{\mscr{SM}}_{\sg,\sns,\sra}$ has an \' etale atlas, then $\overline{\mscr{SM}}_{\sg,\sns,\sra}$ is a Deligne-Mumford superstack. Smoothness follows from Proposition \ref{Absence_of_Obstruction}, and the computation of tangent space is done in equations \eqref{Even_Tangent_Space} and \eqref{Odd_Tangent_Space}. Remark \ref{Remark_Superspace_BosTruncationOfSRS} implies that for any Noetherian scheme $T$ over $\mathbb C$, there is an isomorphism $\Hom_{\CSp}(T,\overline{\mscr{S}}_{\sg,\sns,\sra})\cong\Hom_{\SSp}(S(T),\overline{\mscr{SM}}_{\sg,\sns,\sra})$ which is functorial in $T$, then it follows from Lemma \ref{Superspace_Adjointness} and Yoneda Lemma that
	\begin{equation*}
	\overline{\mscr{SM}}_{\sg,\sns,\sra;\text{\normalfont bos}}=\overline{\mscr{S}}_{\sg,\sns,\sra}.
	\end{equation*}
	Hence the properness of $\overline{\mscr{SM}}_{\sg,\sns,\sra;\text{\normalfont bos}}$ follows from the properness of $\overline{\mscr{S}}_{\sg,\sns,\sra}$ (Theorem \ref{Main}).\\
	
	\textbf{(2):} For a stable SUSY curve $\pi:X\to S$, consider automorphism $\Gamma_X:X\to X$ acting by $1$ on $\mcal O_X^+$ and $-1$ on $\mcal O_X^-$, similarly $\Gamma_S:S\to S$ is defined. They are compatible in the sense that the diagram
	\begin{center}
		\begin{tikzcd}
		X\arrow[r,"\Gamma_X"] \arrow[d,"\pi"] & X\arrow[d,"\pi"]\\
		C \arrow[r,"\Gamma_C"] &C
		\end{tikzcd},
	\end{center}
	commutes. $\Gamma_X$ preserves punctures $E_i$, since $E_i$ is locally in coordinate system $(z|\theta)$ defined by the ideal $(z,\theta)$ which is preserved by $\Gamma_X$. Similarly $\Gamma_X$ preserves punctures $\mscr D_i$. $\Gamma_X$ commutes with derivation $\delta$, since $\delta$ in the smooth canonical model can be written as $f\mapsto (\partial_{\theta}f+\theta\partial_zf)[\mathrm{d}z|\mathrm{d}\theta]$ which is invariant under $\theta\mapsto-\theta$, similar for other three canonical models. Hence we have an embedding $\text{Id}_S\times \Gamma_S:S\to \Isom_{S\times S}(p_1^*X,p_2^*X)$ where $p_1,p_2:S\times S\to S$ are two projections. The the proof of \hyperlink{second part of the moduli problem}{$(2)$} follows from the same argument for the moduli of smooth SUSY curves (Proposition 5.4 and Corollary 5.5 of \cite{CodogniViviani201706}).
\end{proof}

It remains to construct an \' etale atlas of $\overline{\mscr{SM}}_{\sg,\sns,\sra}$, which will be done in the next two sections.

\subsubsection*{Universal Deformation}
For a stable SUSY curve $X$ over $\Spec \mbb C$, define a functor $F_X:\sArt _{\mbb C}\to \Sets$ by sending $A\in \sArt _{\mbb C}$ to the set of equivalence class of $(Y,\phi)$ where $Y$ is a stable SUSY curve over $A$ and $\phi:Y_{\mbb C}\to X$ is an isomorphism, with the obvious equivalence relation. From the definition it is easy to see that $F_X$ satisfies the Schlessinger's condition $(S_0)$; Proposition \ref{Absence_of_Obstruction} implies that $F_X$ satisfies $(S_1)$ and isomorphism \eqref{Schlessinger_Second_Condition} is the condition $(S_1)$. The tangent space of $F_X$ is of dimension
\begin{equation*}
\left(3\g-3+\ns+\ra|2\g-2+\ns+\frac{\ra}{2}\right).
\end{equation*}
So $(S_3)$ does also hold. We claim that condition $(S_4)$ is true for $F_X$ as well. Suppose that $p:A\to A/\mcal J$ is an \textit{even} small extension, and $(Y,\phi)\in F_X(A/\mcal J)$, then $p^{-1}(Y,\phi)$ is the affine space $\sDef_A(Y,\mcal J)$ with a transitive action of $T^+_{F_X}$\footnote{Transitivity comes from the condition $(S_1)$.}. The action is actually faithful, since the affine space 
\begin{equation*}
\Lambda:=\prod_{p\in \text{\normalfont NS Nodes}}\sDef _A(\widehat{\mcal O}_{Y,p},\mcal J)\times\prod_{p\in \text{\normalfont R Nodes}}\sDef _A(\widehat{\mcal O}_{Y,p},\mcal J),
\end{equation*}
has dimension $\#(\text{R Nodes})+\#(\text{NS Nodes})$, so the stabilizer of a given point $\lambda\in\Lambda$ under the induced action of $T^+_{F_X}$ has dimension $\dim_{\mbb C}T^+_{F_X}-\#(\text{R Nodes})-\#(\text{NS Nodes})$, which is exactly $\dim_{\mbb C}\H^1(X,\mcal A)_+$. The action of $\text{Stab}_{\lambda}T^+_{F_X}$ on $\H^1(X,\mcal A)_+$ is thus faithful, and as such the action of $T^+_{F_X}$ on $\sDef_A(Y,\mcal J)$ is also faithful. The odd small extension case is straightforward, since the fiber sequence \eqref{Fiber_Sequence_of_Deformation} degenerates to first two terms thus realizing $\sDef_A(Y,\mcal J)$ as an affine space modeled on $\H^1(X,\mcal A)_+=T^-_{F_X}$. To summarize, we have the following result
\begin{proposition}\label{Universal_Deformation}
	Functor $F_X:\sArt _{\mbb C}\to \Sets$ is isomorphic to the functor $h_R$ where 
	\begin{alignat}{2}
	R&=\mbb C[\![t_1,\cdots,t_{a}|\theta_{1},\cdots,\theta_{b}]\!], \nonumber
	\\ 
	a&=3\g-3+\ns+\ra,\nonumber
	\\
	b&=2\g-2+\ns+\frac{\ra}{2}.
	\end{alignat}
	The isomorphism is realized by a compatible family $\{(Y_n,\phi_n)\in F_X(R/\mathfrak{m}_R^n)\}$.
\end{proposition}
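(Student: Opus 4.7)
The plan is to verify that $F_X$ satisfies all hypotheses of the super-Schlessinger Theorem \ref{Schlessinger's Theorem} (including $(S_4)$) so that $F_X$ admits a universal family, and then to identify the resulting complete local superring $R$ as a formal power series superring by means of the tangent space computation. Most of the verification has already been assembled in the paragraphs immediately preceding the statement; the remaining task is to assemble these pieces in the order required by Schlessinger's Theorem, and to exhibit an explicit formal power series presentation of the pro-representing superring.

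First, I would collect the Schlessinger conditions. Condition $(S_0)$ is immediate since $F_X(\mathbb C)=\{(X,\mathrm{id}_X)\}$. Conditions $(S_1)$ (and in fact the stronger surjectivity statement needed to conclude versality) is provided by Proposition \ref{Absence_of_Obstruction}, which says that for any small extension $A''\twoheadrightarrow A$ in $\sArt _{\mathbb C}$, any stable SUSY curve deformation of $X$ over $A$ lifts to one over $A''$. Condition $(S_2)$ is exactly the pair of isomorphisms \eqref{Schlessinger_Second_Condition}. Condition $(S_3)$ is the finite-dimensionality of $T_{F_X}=T_{F_X}^+\oplus T_{F_X}^-$ proved in \eqref{Even_Tangent_Space} and \eqref{Odd_Tangent_Space}, giving
\begin{equation*}
\dim_{\mathbb C}T_{F_X}^+ = 3\g-3+\ns+\ra,\qquad \dim_{\mathbb C}T_{F_X}^- = 2\g-2+\ns+\tfrac{\ra}{2}.
\end{equation*}
Condition $(S_4)$ is precisely the faithfulness of the torsor action of $T_{F_X}^{\pm}$ on the fibers of $p:F_X(A'')\to F_X(A)$ for an even (respectively odd) small extension, which is verified in the paragraph just before the proposition using the fiber sequence \eqref{Fiber_Sequence_of_Deformation}: in the even case faithfulness follows by comparing dimensions of $T_{F_X}^+$ with the free part coming from local smoothings of nodes and the contribution $\H^1(X,\mathcal A)_+$, and in the odd case the fiber sequence degenerates and the action is trivially free.

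Next, I would invoke Schlessinger's Theorem \ref{Schlessinger's Theorem} to conclude that $F_X$ admits a universal family $(R,\xi)$ with $R\in\sLoc_{\mathbb C}$ and $\xi=\{(Y_n,\phi_n)\in F_X(R/\mathfrak m_R^n)\}$. To identify $R$ explicitly, I would use smoothness: Proposition \ref{Absence_of_Obstruction} says that no higher obstructions exist, so the universal formal family is unobstructed, i.e. $R$ is a formal power series superring on $T_{F_X}^{\vee}$. Choosing a homogeneous basis of $T_{F_X}$ yields
\begin{equation*}
R\cong \mathbb C[\![t_1,\dots,t_a\mid \theta_1,\dots,\theta_b]\!],
\end{equation*}
with $a=3\g-3+\ns+\ra$ and $b=2\g-2+\ns+\tfrac{\ra}{2}$. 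Concretely, I would argue this by repeatedly applying $(S_1)$ and the $(S_4)$-torsor description to the tower of small extensions $\mathbb C[\![t_i\mid\theta_j]\!]/\mathfrak m^{n+1}\twoheadrightarrow \mathbb C[\![t_i\mid\theta_j]\!]/\mathfrak m^{n}$, obtaining the compatible system $(Y_n,\phi_n)$ whose existence and uniqueness up to isomorphism at each stage exactly matches the universality of a smooth formal pro-object.

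The main obstacle, and the step requiring the most care, will be promoting this formal pro-family $\{(Y_n,\phi_n)\}$ to an honest stable SUSY curve over $\Spf R$ that pro-represents $F_X$. For the underlying superscheme this is supplied by the super analog of Grothendieck's Existence Theorem (Theorem \ref{Grothendieck's_Existence_Theorem}) applied to the adic Noetherian superring $R$: the compatible system of proper flat families $Y_n/\Spec R/\mathfrak m_R^n$ algebraizes to a proper flat family over $\Spf R$, and the additional structures (the puncturing divisors $E$, $\mscr D$, and the derivation $\delta$) algebraize by Corollary \ref{Completion_Fully_Faithful}, which identifies $\Ext$-groups on a proper family with those on its formal completion. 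Flatness and the pointwise conditions A and B of Definition \ref{def:stable SUSY curves} are preserved because they hold on every geometric fiber of every $Y_n$. Combining algebraization with the universality of the miniversal family in the category $\sArt_{\mathbb C}$ yields the bijection $h_R\to F_X$ claimed in the statement, completing the proof.
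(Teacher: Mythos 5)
Your first three paragraphs reproduce the paper's argument, just spelled out in more detail: collect $(S_0)$--$(S_4)$ from the preceding discussion, apply the super Schlessinger Theorem \ref{Schlessinger's Theorem} to get a (mini)universal family $(R,\xi)$, and then use formal smoothness (Proposition \ref{Absence_of_Obstruction}) to identify $R$ as the formal power series superring on the dual of $T_{F_X}$. That is exactly what the paper does, and the tangent-space dimensions you cite match \eqref{Even_Tangent_Space}, \eqref{Odd_Tangent_Space}.

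The fourth paragraph, however, imports a step that does not belong to this proposition. The functor $h_R$ in Definition \ref{Definition_Versality} is defined on $\sArt_{\mathbb C}$ by $h_R(A)=\Hom_{\sLoc_{\mathbb C}}(R,A)$, and for $A$ Artinian any such homomorphism factors through some $R/\mathfrak m_R^n$; so the natural transformation $h_R\to F_X$ is completely determined by the compatible system $\{(Y_n,\phi_n)\}$, and the isomorphism $h_R\cong F_X$ as functors on $\sArt_{\mathbb C}$ requires no algebraization of the formal family. The appeal to Grothendieck's Existence Theorem \ref{Grothendieck's_Existence_Theorem} and Corollary \ref{Completion_Fully_Faithful} to ``promote the pro-family to an honest stable SUSY curve over $\Spf R$'' is not an obstacle here---you call it ``the main obstacle, and the step requiring the most care,'' but it is vacuous for this statement. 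In the paper that algebraization is exactly the content of the \emph{next} result, Corollary \ref{Formal_Neighbohood_of_Moduli}, which constructs the morphism $\Spec R\to\overline{\mscr{SM}}_{\sg,\sns,\sra}$. You should either drop the last paragraph entirely or label it clearly as a remark anticipating the subsequent corollary, not as a necessary part of proving this proposition.
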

\begin{proof}
	$F_X$ has a universal family by applying Schlessinger's Theorem \ref{Schlessinger's Theorem} to what we have explained. $R$ is formally smooth because the functor $F_X$ is formally smooth (Proposition \ref{Absence_of_Obstruction}), hence $R$ is the formal power series $\mbb C$-superalgebra generated by the dual of $T_{F_X}$.
\end{proof}

\begin{corollary}\label{Formal_Neighbohood_of_Moduli}
	There exists a morphism $\Spec \mbb C[\![t_1,\cdots,t_{a}|\theta_{1},\cdots,\theta_{b}]\!]\to \overline{\mscr{SM}}_{\sg,\sns,\sra}$ whose reduction modulo $\mathfrak{m}_R^n$ is equivalent to $(Y_n,\phi_n)$.
\end{corollary}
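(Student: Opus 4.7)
The plan is to turn the formal compatible family $\{(Y_n,\phi_n)\}$ produced in Proposition \ref{Universal_Deformation} into an honest stable SUSY curve over $\Spec R$, where $R=\mbb C[\![t_1,\ldots,t_a|\theta_1,\ldots,\theta_b]\!]$, and then invoke the $2$-Yoneda lemma to get the desired morphism to $\overline{\mscr{SM}}_{\sg,\sns,\sra}$. The main tool is the super version of Grothendieck's Existence Theorem (Theorem \ref{Grothendieck's_Existence_Theorem}), applied to the $\mathfrak{m}_R$-adic topology on the complete local Noetherian superring $R$.

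First, I would assemble the formal datum. Since $\mathfrak{m}_R$ is an ideal of definition, the compatible system $\{Y_n\to \Spec R/\mathfrak{m}_R^n\}$ together with its puncture subschemes $\{E_n,\mscr D_n\}$ and its SUSY derivations $\{\delta_n\}$ glues into a proper flat formal super-scheme $\widehat Y$ over $\Spf R$ equipped with formal subschemes $\widehat E, \widehat{\mscr D}$ and a formal derivation $\widehat\delta:\Omega^1_{\widehat Y/\Spf R}\to \omega_{\widehat Y/\Spf R}(\widehat{\mscr D})$. Next, I would algebraize: by Lemma \ref{Stable_SUSY_Curve_Is_Locally_Projective} each $Y_n$ carries a relatively ample line bundle $\mcal L_n$ of rank $(1|0)$, and the inductive lifting argument in the proof of that lemma (which uses the vanishing of $\H^2$ of a coherent sheaf on a relative curve) produces a compatible system $\{\mcal L_n\}$, hence a formal line bundle $\widehat{\mcal L}$ on $\widehat Y$. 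The support of $\mcal O_{\widehat Y}$ is proper over $\Spf R$, so Theorem \ref{Grothendieck's_Existence_Theorem} yields a superscheme $Y\to \Spec R$ with ample line bundle $\mcal L$ whose $\mathfrak{m}_R$-adic completion is $(\widehat Y,\widehat{\mcal L})$. Algebraizing the coherent ideal sheaves cutting out $\widehat E$ and $\widehat{\mscr D}$ produces the corresponding closed sub-superschemes $E,\mscr D\subset Y$, and Corollary \ref{Completion_Fully_Faithful} applied to $\Hom(\Omega^1_{Y/R},\omega_{Y/R}(\mscr D))$ algebraizes $\widehat\delta$ to an $\mathcal O_R$-linear derivation $\delta$.

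The final step is to verify that $(Y,E,\mscr D,\delta)$ is a stable SUSY curve over $\Spec R$. Properness is immediate from the projective embedding. Flatness over $\Spec R$ follows from the local criterion of flatness (Lemma \ref{Local Criteria of Flatness}$(5)$), as $Y\times_{\Spec R}\Spec(R/\mathfrak{m}_R^n)=Y_n$ is flat over $R/\mathfrak{m}_R^n$ for every $n$. Relative Cohen-Macaulayness, the relative dimension condition on $Y^{\text{sm}}$, and conditions A and B of Definition \ref{def:stable SUSY curves} are all conditions on geometric fibers; since $\Spec R$ has a unique closed point (over which $Y$ restricts to $X$ via $\phi_1$), they follow immediately from $X$ being a stable SUSY curve together with flatness of $Y/\Spec R$. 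The pair $(Y,\phi_1)$ then defines a morphism $\Spec R\to \overline{\mscr{SM}}_{\sg,\sns,\sra}$ by $2$-Yoneda, and its reduction modulo $\mathfrak{m}_R^n$ is equivalent to $(Y_n,\phi_n)$ by construction of the algebraization. The main obstacle is the compatible lifting of the ample line bundle $\widehat{\mcal L}$ in the super setting, where one must track the $\mathbb Z_2$-grading carefully when inductively solving the cocycle equations; once this is done, the rest of the argument is a direct super-analog of the classical Deligne-Mumford algebraization procedure.
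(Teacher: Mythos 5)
Your proof is correct and takes essentially the same approach as the paper: you algebraize the compatible family via the inductively lifted ample line bundle and the super Grothendieck Existence Theorem (Theorem \ref{Grothendieck's_Existence_Theorem}), recover the punctures and SUSY derivation, and deduce flatness of $Y$ over $\Spec R$ from the local flatness criterion. You are simply more explicit about separately algebraizing the puncture ideals and $\delta$ via Corollary \ref{Completion_Fully_Faithful} and about verifying the fiberwise stability conditions, steps which the paper's terser proof leaves implicit.
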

\begin{proof}
	Since $X$ is projective over $\Spec \mbb C$ (Lemma \ref{Stable_SUSY_Curve_Is_Locally_Projective}) and the ample line bundle lifts to $Y_n$ in a compatible way for all $n\ge 1$, so we get a compatible system of closed embeddings $Y_n\hookrightarrow \mbb P^{M|N}_{R/\mathfrak{m}_R^n}$ for some integers $M,N$. By Grothendieck's Existence Theorem \ref{Grothendieck's_Existence_Theorem}, there exists a proper $R$-superscheme $Y$ with punctures $E,\mcal D$ and map $\delta:\Omega^1_{Y/R}\to \omega_{Y/R}$ and an isomorphism $\phi:Y_{\mbb C}\cong X$, and the reduction of $(Y,\phi)$ to $R/\mathfrak{m}_R^n$ is equivalent to $(Y_n,\phi_n)$. The flatness of $Y$ is a consequence of local flatness criterion.
\end{proof}

Note that the restriction of the universal family $\{(Y_n,\phi_n)\in F_X(R/\mathfrak{m}_R^n)\}$ to $\Art_{\mbb C}$ (category of Artinian bosonic $\mbb C$-algebras) is a universal family for the deformations of $X$ as spin curves, so the restriction of the morphism $\Spec \mbb C[\![t_1,\cdots,t_{a}|\theta_{1},\cdots,\theta_{b}]\!]\to \overline{\mscr{SM}}_{\sg,\sns,\sra}$ to $\Spec \mbb C[\![t_1,\cdots,t_{a}]\!]$ factors through some affine \' etale chart $U_0$ of $\overline{\mscr{S}}_{\sg,\sns,\sra}$, which induces an isomorphism between $\mbb C[\![t_1,\cdots,t_{a}]\!]$ and the completion of $\mcal O_{U_0}$ at a closed point $p$. Let $U$ be the superscheme $(U_0,\wedge^{\bullet}\mcal O^b_{U_0})$, i.e. formally adding independent odd variables $\{\theta_1,\cdots,\theta_n\}$ to the coordinate ring of $U_0$, so there is an isomorphism $\widehat{\mcal O}_{U,p}\cong \mbb C[\![t_1,\cdots,t_{a}|\theta_{1},\cdots,\theta_{b}]\!]$. Combining Lemma \ref{Lemma_Artin_Approximation} and Lemma \ref{Lemma_Limit_Stable_SUSY_Curves}, we obtain a superscheme $V$ smooth over $U$ with a morphism $\Spec \mbb C[\![t_1,\cdots,t_{a}|\theta_{1},\cdots,\theta_{b}]\!]\to V$ and a stable SUSY curve $Y_V$ over $V$ such that the universal family $Y$ is isomorphic to the pull back of $Y_V$. Denote the image of the close point of $\Spec \mbb C[\![t_1,\cdots,t_{a}|\theta_{1},\cdots,\theta_{b}]\!]$ in $V$ by $q$, consider the the embeddings
\begin{center}
	\begin{tikzcd}
	\Spec \mbb C[\![t_1,\cdots,t_{a}|\theta_{1},\cdots,\theta_{b}]\!]/\mathfrak{m}^2 \arrow[r]\arrow[d,hook]& \Spec \mcal O_V(V)/\mathfrak{m}_q^2 \arrow[r] \arrow[d,hook] & \Spec \mcal O_U(U)/\mathfrak{m}_p^2 \arrow[d,hook]\arrow[ld,dotted]\\
	\Spec \mbb C[\![t_1,\cdots,t_{a}|\theta_{1},\cdots,\theta_{b}]\!] \arrow[r] & V \arrow[r] & U
	\end{tikzcd},
\end{center}
defined by modulo $\mathfrak{m}^2$. Since the composition in the top row is an isomorphism, this gives an embedding of $\Spec \mcal O_U(U)/\mathfrak{m}_p^2$ into $\Spec \mcal O_V(V)/\mathfrak{m}_q^2$. Extending the image of this embedding to a neighborhood of $q$, we obtain an locally closed sub superscheme $W\subset V$ and $W$ is \' etale over $U$. The restriction of $Y_V$ to $W$ induces a homomophism $\mbb C[\![t_1,\cdots,t_{a}|\theta_{1},\cdots,\theta_{b}]\!]\to \widehat{\mcal O}_{W,q}$ which is an isomorphism modulo $\mathfrak{m}^2$, thus it is an isomorphism because any automoprhism of $\mbb C[\![t_1,\cdots,t_{a}|\theta_{1},\cdots,\theta_{b}]\!]$ which induces isomorphism on tangent space is an isomorphism. To summarize, we have the following
\begin{proposition}\label{Neighborhood_of_Moduli}
	For every $p\in \overline{\mscr{SM}}_{\sg,\sns,\sra}(\mbb C)$, there exists an affine smooth $\mbb C$-variety $W$ with an element $P\in \overline{\mscr{SM}}_{\sg,\sns,\sra}(W)$, and a closed point $q\in W$ such that $P|_q$ is isomorphic to $p$. Furthermore,
	\begin{enumerate}
		\item[$(1)$] The induced map $P_{\bos}:W_{\bos}\to \overline{\mscr{S}}_{\sg,\sns,\sra}$ is \' etale.
		\item[$(2)$] The induced map $\widehat{P}:\widehat{\mcal O}_{W,q}\to \overline{\mscr{SM}}_{\sg,\sns,\sra}$ identifies $\widehat{\mcal O}_{W,q}$ with the base of universal deformation of $p$.
	\end{enumerate}
\end{proposition}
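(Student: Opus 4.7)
The plan is to bootstrap from the formal universal deformation already constructed and algebraize it via an Artin-approximation argument. Given $p=[X]\in\overline{\mscr{SM}}_{\sg,\sns,\sra}(\mbb C)$, Proposition \ref{Universal_Deformation} supplies a miniversal formal family over $R=\mbb C[\![t_1,\ldots,t_a|\theta_1,\ldots,\theta_b]\!]$, and Corollary \ref{Formal_Neighbohood_of_Moduli} upgrades this to an honest flat stable SUSY curve $Y$ over $\Spec R$ together with a map $\Spec R\to\overline{\mscr{SM}}_{\sg,\sns,\sra}$. The first move is to restrict to the purely bosonic axis: the pullback of $Y$ along $\Spec\mbb C[\![t_1,\ldots,t_a]\!]\hookrightarrow\Spec R$ is the universal formal deformation of the underlying punctured spin curve $X_{\bos}$, and since $\overline{\mscr{S}}_{\sg,\sns,\sra}$ is Deligne--Mumford, this formal family factors through an affine \'etale chart $U_0\to\overline{\mscr{S}}_{\sg,\sns,\sra}$ through a closed point $p_0\in U_0$ for which the induced map $\mbb C[\![t_1,\ldots,t_a]\!]\to\widehat{\mcal O}_{U_0,p_0}$ is an isomorphism.

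Next I would thicken $U_0$ in the odd directions by forming the split superscheme $U=(U_0,\wedge^{\bullet}\mcal O^{\oplus b}_{U_0})$, so that abstractly $\widehat{\mcal O}_{U,p_0}\cong R$. The family $Y$ is so far only defined over the formal superring $R$, so I would invoke Artin approximation in the form of Lemma \ref{Lemma_Artin_Approximation} together with the limit statement Lemma \ref{Lemma_Limit_Stable_SUSY_Curves} to exhibit a superscheme $V$ smooth over $U$ with a closed point $q\in V$ above $p_0$ and a stable SUSY curve $Y_V$ over $V$, such that $Y$ is the pullback of $Y_V$ along a morphism $\Spec R\to V$ centered at $q$. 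At this point $V$ has the correct formal neighborhood of $q$ but is typically of too large dimension over $\mbb C$.

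The decisive cut-down step uses the smoothness of $V\to U$. The composition $\Spec R/\mathfrak{m}^2\hookrightarrow\Spec\mcal O_{V,q}/\mathfrak{m}_q^2\to\Spec\mcal O_{U,p_0}/\mathfrak{m}_{p_0}^2$ is an isomorphism by construction, so smoothness yields a lift $\Spec\mcal O_{U,p_0}/\mathfrak{m}_{p_0}^2\hookrightarrow\Spec\mcal O_{V,q}/\mathfrak{m}_q^2$. I would extend this first-order section to a locally closed sub-superscheme $W\hookrightarrow V$ in an affine neighborhood of $q$; the resulting $W$ is \'etale over $U$ at $q$, and after shrinking it is \'etale over $U$ in a neighborhood of $q$. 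Letting $P$ be the restriction of $Y_V$ to $W$, property (1) follows since $W_{\bos}\to U_0$ is \'etale and $U_0\to\overline{\mscr{S}}_{\sg,\sns,\sra}$ is \'etale, while property (2) follows because the induced ring homomorphism $R\to\widehat{\mcal O}_{W,q}$ is an isomorphism modulo $\mathfrak{m}^2$ and any endomorphism of a formal power-series superring inducing the identity on tangent spaces is an automorphism.

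The main obstacle will be the cut-down step: producing the locally closed smooth sub-superscheme $W\subset V$ realizing the correct tangent directions requires that the lift $\Spec\mcal O_{U,p_0}/\mathfrak{m}_{p_0}^2\hookrightarrow\Spec\mcal O_{V,q}/\mathfrak{m}_q^2$ extend past first order to a genuine \'etale slice, and in the super setting one has to verify that odd parameters behave well under this extension. Fortunately, smoothness of $V\to U$ in both even and odd directions means that the even and odd tangent complements can be lifted independently, and the \'etaleness of $W\to U$ can be checked after passing to the bosonic truncation together with the identification of tangent superspaces, completing the argument.
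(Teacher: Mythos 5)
Your proposal follows the paper's own argument step for step: algebraize the miniversal deformation via Corollary \ref{Formal_Neighbohood_of_Moduli}, restrict to the bosonic axis to land in an affine \'etale chart $U_0$ of $\overline{\mscr{S}}_{\sg,\sns,\sra}$, split-thicken to $U$, invoke Lemma \ref{Lemma_Artin_Approximation} together with Lemma \ref{Lemma_Limit_Stable_SUSY_Curves} to produce $V$ smooth over $U$ carrying an algebraic family, and then cut $V$ down to an \'etale slice $W$ by extending the first-order section coming from comparison of completed local rings. The only surplus in your write-up is the extra commentary on why the first-order lift extends — the paper states this step tersely — but the method and the lemmas invoked are the same.
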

The next goal is to prove that $P:W\to \overline{\mscr{SM}}_{\sg,\sns,\sra}$ is \' etale (shrink $W$ if necessary), for which we will need more ingredients, discussed in the next section.

\subsubsection*{The Odd Tangent Space}
Suppose that $S$ is a affine smooth $\mbb C$-variety of positive dimension and $\pi:X\to S$ is a stable SUSY curve and is smooth on a open subscheme $S_0\subset S$. Denote the underlying spin curve of $X$ by $(\mcal C,E,\mcal D,\mscr E)$. Let $\mcal J$ be a finite free sheaf on $S$ and let the trivial extension of $S$ by $\mcal J$ be $S'$. Assume that stable SUSY curve $\pi':X'\to S'$ is a lift of $X\to S$, then we have
\begin{lem}
	The sheaf $\mcal A$ of automorphisms of $X'\to S'$ which induce identity on $X\to S$ is 
	\begin{equation*}
	\omega^{-1}_{\mcal C/S}(-E-\mscr D)\otimes \mscr E\otimes \pi^*\mcal J\cong \mscr E^{\vee}\otimes \pi^*\mcal J.
	\end{equation*}
\end{lem}

\begin{proof}
	On the smooth locus $X^{\text{sm}}$, it is well-known that (see Sections 4.2.1 and 4.2.3 of \cite{Witten2012b}) $\mcal A=T_{\mcal C/S}(-E-\mscr D)\otimes \mscr E\otimes \pi^*\mcal J\cong \omega^{-1}_{\mcal C/S}(-E-\mscr D)\otimes \mscr E\otimes \pi^*\mcal J$. It remains to determine the local structure of $\mcal A$ at nodes. Let the closed subscheme of R node (resp. NS node) be $Z_{\text R}$ (resp. $Z_{\text NS}$). Suppose that $p\in X$ is closed and is a R node, then by the same argument in the proof of Lemma \ref{Lemma_Extension_of_Derivation}, we see that in an open affine neighborhood $W_p$ of $p$, an element in $\Gamma(W_p,\mcal A)$ is determined by derivation of the form $yD^2$, which is a rational section of $T_{\mcal C/S}\otimes\mscr E\otimes \pi^*\mcal J$ on the smooth locus $W_p\setminus Z_{\text R}$. Notice that $W_p$ is normal, $Z_{\text R}$ has codimension at least 2, and $\omega^{-1}_{\mcal C/S}\otimes\mscr E\otimes \pi^*\mcal J$ is reflexive and its restriction to $W_p\setminus Z_{\text R}$ is ismorphic to $T_{\mcal C/S}\otimes \mscr E\otimes \pi^*\mcal J$, so 
	\begin{equation*}
	\Gamma(W_p,\mcal A)=\Gamma(W_p,\omega^{-1}_{\mcal C/S}\otimes\mscr E\otimes \pi^*\mcal J),
	\end{equation*}
	by algebraic Hartogs's Theorem. The similar statement is also true for an open neighborhood of a NS node, whence the result follows by gluing.
\end{proof}

The object we will concern about is the space of deformations $\sDef _{S'}(X,\mcal J)$. Note that if $\mcal J\to \mcal K$ is a homomorphism of free $\mcal O_S$ sheaves, then there is a canonical map $\sDef _{S'}(X,\mcal J)\to \sDef _{S'}(X,\mcal K)$ defined by pull back of liftings of $X$. This endows $\sDef _{S'}(X,\mcal J)$ with a canonical $\mcal O_S$-module structure via
\begin{enumerate}
	\item[•] Addition: Induced by the addition map $\mcal J\times \mcal J\to \mcal J,(x,y)\mapsto x+y$.
	\item[•] Zero element: Induced by the trivial map $0\to \mcal J$.
	\item[•] Inverse: Induced by inverse map $\mcal J\to \mcal J,x\mapsto -x$.
	\item[•] $\mcal O_S$-action: For any element $f\in \mcal O_S(S)$, the action of $f$ on $\sDef _{S'}(X,\mcal J)$ is induced by $\mcal J\to \mcal J,x\mapsto f\cdot x$.
\end{enumerate}
The goal of this section is to determine the $\mcal O_S$-module structure of $\sDef _{S'}(X,\mcal J)$. Indeed, we have 
\begin{proposition}\label{The_Odd_Tangent_Sheaf}
	$\sDef _{S'}(X,\mcal J)$ is isomorphic to 
	\begin{equation*}
	\H^1(\mscr C,\mscr E^{\vee})\otimes_{\mcal O_S}\mcal J.
	\end{equation*}
	Moreover, the isomorphism is functorial in $\mcal J$ in the sense that for a homomorphism $\mcal J\to \mcal K$ of free $\mcal O_S$ sheaves, the diagram
	\begin{center}
		\begin{tikzcd}
		\sDef _{S'}(X,\mcal J)\arrow[r] \arrow[d] & \sDef _{S'}(X,\mcal K)\arrow[d]\\
		\H^1(\mcal C,\mscr E^{\vee})\otimes_{\mcal O_S}\mcal J\arrow[r] &\H^1(\mcal C,\mscr E^{\vee})\otimes_{\mcal O_S}\mcal K
		\end{tikzcd},
	\end{center}
	commutes.
\end{proposition}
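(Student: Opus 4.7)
The strategy is to produce a natural \v{C}ech-cohomological isomorphism
\begin{equation*}
\Phi:\sDef_{S'}(X,\mathcal J)\xrightarrow{\sim}\H^1(X,\mathscr E^\vee\otimes \pi^*\mathcal J),
\end{equation*}
and then identify the target with $\H^1(\mathcal C,\mathscr E^\vee)\otimes_{\mathcal O_S}\mathcal J$ via the projection formula. This is the relative version, over the smooth base $S$, of the local-to-global analysis that underlies Proposition \ref{Absence_of_Obstruction} and the fiber sequence \eqref{Fiber_Sequence_of_Deformation}, now adapted to an odd extension by $\mathcal J$.

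First I would fix, using Proposition \ref{Etale_Local_Str}, an \'etale cover $\{U_i\to X\}$ such that each $U_i$ admits an \'etale map $\varphi_i$ to one of the canonical models. Each canonical model possesses a tautological lift to $S'$, since its defining equations and SUSY derivation are written in terms of coordinates that extend verbatim to $\mathcal O_{S'}$; at the nodal models one simply lifts the parameter $t\in \mathcal O_S$ arbitrarily to $\mathcal O_{S'}$. Pulling back through $\varphi_i$ yields preferred local lifts $X_i'$ of $X|_{U_i}$. On each overlap $U_{ij}$ the two induced preferred lifts differ by an automorphism that is the identity modulo $\mathcal J$, hence, by the lemma preceding the proposition, by a section of $\mathscr E^\vee\otimes \pi^*\mathcal J$ over $U_{ij}$. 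These sections assemble into a \v{C}ech $1$-cocycle whose class in $\H^1$ depends only on the deformation, yielding the map $\Phi$. Surjectivity of $\Phi$ is obtained by using a given cocycle to retwist the gluing of the preferred local lifts into a global deformation, injectivity is the same analysis run in reverse, and functoriality in $\mathcal J$ is immediate from the construction.

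For the second identification, since $\mathcal J$ is free (hence flat) on $S$ and $\pi_{\mathcal C}:\mathcal C\to S$ is proper, the projection formula applied first along the finite projection $X\to\mathcal C$ and then along $\pi_{\mathcal C}$ yields
\begin{equation*}
\H^1(X,\mathscr E^\vee\otimes \pi^*\mathcal J)\;\cong\;\H^1(\mathcal C,\mathscr E^\vee)\otimes_{\mathcal O_S}\mathcal J,
\end{equation*}
which can alternatively be derived from the base-change Theorem \ref{Proper_Flat_Pushforward} or, after trivializing $\mathcal J\cong \mathcal O_S^{\oplus r}$ \'etale-locally on $S$, directly by additivity.

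The step I expect to be the main obstacle is the bookkeeping at the nodes. At R- and NS-nodes the sheaf of automorphisms was identified with $\mathscr E^\vee\otimes \pi^*\mathcal J$ in the lemma preceding the proposition only after a pointwise algebraic-Hartogs argument on the smooth locus, whereas the \v{C}ech cocycle comparing two preferred local lifts is initially defined only on the smooth part of each overlap. I would need to verify, in the spirit of the coordinate analysis carried out in the proof of Lemma \ref{Lemma_Extension_of_Derivation}, that this cocycle extends uniquely across each node to a section of the reflexive sheaf $\mathscr E^\vee\otimes \pi^*\mathcal J$ on the full overlap, and that the resulting global gluing respects the Cohen-Macaulay and SUSY-derivation constraints imposed in Definition \ref{def:stable SUSY curves}. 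The formal-neighborhood control provided by Proposition \ref{Local_Str_on_Complete_SuperRing}, together with normality of the \'etale charts, should make this go through, but the bookkeeping is the most delicate part of the argument.
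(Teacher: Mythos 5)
Your proposal takes essentially the same approach as the paper's proof: both identify $\sDef_{S'}(X,\mcal J)$ through the \v Cech $\H^1$ of the infinitesimal-automorphism sheaf $\mcal A\cong\mscr E^{\vee}\otimes\pi^*\mcal J$, using the fact (coming from the canonical-model analysis) that local odd deformations are trivial, and both read off functoriality from functoriality of \v Cech cochains. Two small points of imprecision, neither fatal but worth cleaning up. First, the way you describe the cocycle — comparing the preferred local lifts $X_i'|_{U_{ij}}$ and $X_j'|_{U_{ij}}$ to one another — produces a single obstruction class and does not yet depend on the given deformation $X'$; to obtain the \emph{map} $\Phi$ you need to compare each $X'|_{U_i}$ with the preferred lift $X_i'$ via a chosen local isomorphism (unique up to $\Gamma(U_i,\mcal A)$), and it is the overlap discrepancies of \emph{those} identifications that give the cocycle. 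This is the argument from Proposition \ref{Absence_of_Obstruction} that the paper appeals to. Second, since $S$ is a bosonic $\mbb C$-variety and $\mcal J$ is purely odd, $\mcal O_{S'}^+=\mcal O_S$, so the even parameters $t$ at the nodal canonical models lift \emph{uniquely}, not "arbitrarily" — this forced lift is precisely what makes the local odd deformation space a one-element set, as the paper states. Finally, the Hartogs-at-nodes concern you flag as the delicate step is already settled by the lemma immediately preceding the proposition, which identifies $\mcal A$ globally with $\mscr E^{\vee}\otimes\pi^*\mcal J$ including across nodes, so the overlap comparison is directly a section of this sheaf and does not need to be extended from the smooth locus.
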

\begin{proof}
	The proof basically follows the same idea of Proposition \ref{Absence_of_Obstruction}. Again we have $\sDef _A(\mcal O_{Y,p},\mcal J)\overset {\sim}{\rightarrow} \sDef _A(\widehat{\mcal O}_{Y,p},\mcal J)$ and the latter is a one element set, since the odd deformation is trivial for all canonical models. So the same gluing argument shows that $\sDef _{S'}(X,\mcal J)$ is a torsor under the action of $\H^1(\mscr C,\mscr E^{\vee})\otimes_{\mcal O_S}\mcal J$. It has a distinguished element which is the trivial deformation, so we have a canonical isomorphism of sets $\H^1(\mscr C,\mscr E^{\vee})\otimes_{\mcal O_S}\mcal J\overset{\sim}{\rightarrow}\sDef _{S'}(X,\mcal J)$. It is easy to see that this isomorphism commutes with the Abelian group structure and $\mcal O_S$-action, since the deformations are charaterized by a \v Cech cochain valued in $\mscr E^{\vee}\otimes \pi^*\mcal J$, whose groups structures and $\mcal O_S$-action are exactly given by the description above. Now the functoriality in $\mcal J$ comes from the functoriality of \v Cech cochains.
\end{proof}
Observe that $\mscr E^{\vee}$ is flat over $S$ and its restrictions to fibers have cohomology concentrated in degree $1$, so $\H^1(\mscr C,\mscr E^{\vee})$ is a locally-free $\mcal O_S$-module. From the \v Cech cochain description of $\sDef _{S'}(X,\mcal J)$, we see that 
\begin{corollary}\label{Restriction_Functoriality}
	The isomorphism $\sDef _{S'}(X,\mcal J)\overset {\sim}{\rightarrow}\H^1(\mscr C,\mscr E^{\vee})\otimes_{\mcal O_S}\mcal J$ commutes with restriction to fiber map, i.e. the diagram
	\begin{center}
		\begin{tikzcd}
		\sDef _{S'}(X,\mcal J)\arrow[r,"\text{\normalfont res}"] \arrow[d] & \sDef _{s'}(X_s,\mcal J(s))\arrow[d]\\
		\H^1(\mcal C,\mscr E^{\vee})\otimes_{\mcal O_S}\mcal J\arrow[r,"\text{\normalfont res}"] &\H^1(\mcal C_s,\mscr E^{\vee}_s)\otimes_{\mbb C}\mcal J(s)
		\end{tikzcd},
	\end{center}
	commutes, where $s\in S$ is a closed point, $s'$ is the superscheme constructed from $s$ trvially extended by the fiber $\mcal J(s)$.
\end{corollary}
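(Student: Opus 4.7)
The plan is to leverage the explicit \v{C}ech-theoretic construction that underlies the isomorphism in Proposition \ref{The_Odd_Tangent_Sheaf}, and check that each ingredient in that construction behaves functorially under restriction to a closed point $s\in S$. Recall that, by the local rigidity argument in the proofs of Proposition \ref{Absence_of_Obstruction} and Proposition \ref{The_Odd_Tangent_Sheaf}, every odd deformation of $X$ over the trivial extension $S'$ may be represented as follows: choose an open affine cover $\{W_j\}$ of $X$ such that each $W_j$ admits a (unique, up to infinitesimal automorphism) local lifting $W_j'$ over $S'$, and then a global lifting is encoded by a \v{C}ech $1$-cocycle $\{\phi_{jk}\}$ with values in the sheaf of infinitesimal automorphisms $\mathcal A=\mscr E^{\vee}\otimes \pi^{*}\mcal J$ on $\mscr C$. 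Two such cocycles yield isomorphic deformations precisely when they differ by a coboundary, producing the canonical isomorphism $\sDef_{S'}(X,\mcal J)\overset{\sim}{\longrightarrow}\H^1(\mscr C,\mscr E^{\vee})\otimes_{\mcal O_S}\mcal J$.

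First I would observe that the cover $\{W_j\}$ restricts to an open affine cover $\{W_j\cap \mcal C_s\}$ of the fiber $\mcal C_s$, and that restricting a local lifting $W_j'$ along the closed immersion $s'\hookrightarrow S'$ gives a local lifting of $X_s$ over $s'$ (by flatness and base change of the stable SUSY curve). Consequently, the cocycle $\{\phi_{jk}\}$ restricts to a cocycle $\{\phi_{jk}|_{W_j\cap W_k\cap \mcal C_s}\}$ valued in $\mathcal A|_{\mcal C_s}=\mscr E_s^{\vee}\otimes_{\mbb C}\mcal J(s)$, and this is, by construction, exactly the cocycle classifying the restricted deformation in $\sDef_{s'}(X_s,\mcal J(s))$. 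This gives the commutativity of the diagram at the level of \v{C}ech cocycles, and hence, after passing to cohomology, the commutativity of the upper route of the diagram with the \v{C}ech-to-$\H^1$ identifications on each side.

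Next I would verify that the identifications on the bottom row match the cohomological base-change map. Since $\mscr E^{\vee}$ is $S$-flat and $\H^0(\mcal C_s,\mscr E_s^{\vee})$ behaves semicontinuously with $\H^1$ of locally constant rank on the (connected) base (using Theorem \ref{Proper_Flat_Pushforward} together with the vanishing of $\H^2$ on curves), the formation of $\H^1(\mscr C,\mscr E^{\vee})$ commutes with base change to $s$, so $\H^1(\mscr C,\mscr E^{\vee})\otimes_{\mcal O_S}\kappa(s)\cong \H^1(\mcal C_s,\mscr E_s^{\vee})$. Combined with the obvious identification $\mcal J\otimes_{\mcal O_S}\kappa(s)=\mcal J(s)$ and the projection formula, the natural map $\H^1(\mscr C,\mscr E^{\vee})\otimes_{\mcal O_S}\mcal J\to \H^1(\mcal C_s,\mscr E_s^{\vee})\otimes_{\mbb C}\mcal J(s)$ is precisely the fiberwise restriction of a \v{C}ech $1$-cocycle. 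Thus the bottom route of the diagram is also the \v{C}ech restriction, and the two agree.

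The main subtlety I anticipate is ensuring the \v{C}ech descriptions on $S$ and on $s$ refer to compatible data. Two points need care: (i) the cover of $X$ used to build the cocycle can always be shrunk so that its restriction to the fiber still has the nontrivial \v{C}ech vanishing needed to compute $\H^1$ (this is automatic for an affine cover of a curve), and (ii) the identification of $\mcal A$ with $\mscr E^{\vee}\otimes \pi^{*}\mcal J$ must be natural under pullback along $s\hookrightarrow S$, which follows because the construction of $\mcal A$ is local on $X$ and was carried out intrinsically in terms of the SUSY derivation $\delta$. Once these two compatibilities are spelled out, the diagram commutes by a direct comparison of representatives, which completes the proof.
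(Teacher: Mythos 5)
Your proof is correct and follows essentially the same route as the paper's: the paper observes that $\H^1(\mscr C,\mscr E^{\vee})$ is locally free and commutes with base change (since $\mscr E^{\vee}$ is $S$-flat with fiberwise cohomology concentrated in degree $1$), and then notes that the compatibility of the two vertical isomorphisms with restriction is immediate from the \v Cech cochain description of $\sDef_{S'}(X,\mcal J)$ inherited from Propositions \ref{Absence_of_Obstruction} and \ref{The_Odd_Tangent_Sheaf}. Your write-up spells this out at greater length, but the two ingredients (\v Cech representatives restrict termwise, and cohomology-and-base-change for $\H^1$) are the same ones the paper uses.
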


\subsubsection*{An \' Etale Atlas of $\overline{\mscr{SM}}_{\sg,\sns,\sra}$}
\begin{lem}\label{Openness_of_Versality}
	Keep using the same notation as {\normalfont Proposition \ref{Neighborhood_of_Moduli}}, then there exists an open affine neighborhood $S$ of $q$ in $W$ such that $\forall s\in S$, the completion of $P$ at $s$ is a universal deformation family of $P|_s$.
\end{lem}
\begin{proof}
	From the construction of $W$, we already know that the pullback of $P$ to $\widehat{\mcal O}_{W,s,\bos}$ is a universal deformation family of $P|_s$ as spin curves (since $W\to \overline{\mscr{S}}_{\sg,\sns,\sra}$ is \' etale). To show that the pullback of $P$ to  $\widehat{\mcal O}_{W,s}$ is a universal deformation family of $P|_s$ as stable SUSY curves, it suffices to show that the homomorphism $\mbb C[\![t_1,\cdots,t_{a}|\theta_{1},\cdots,\theta_{b}]\!]\to \widehat{\mcal O}_{W,s}$ induced by $\widehat{P}_s$ is an isomorphism on tangent spaces. The even tangent spaces are obviously isomorphic, so it remains to show that this induces isomorphism on odd tangent spaces.\\
	
	Consider the restriction of $P$ to $W_1\equiv\Spec \mcal O_W(W)/\mathfrak{n}_W^2$, it gives rise to an element 
	\begin{equation*}
	[P]\in\sDef_{W_1}(P_{W_{0}},\mcal O_{W_{0}}^{\oplus b}),
	\end{equation*}
	where $W_0=W_{\bos}$ is the bosonic trunction. Hence by Proposition \ref{The_Odd_Tangent_Sheaf}, it corresponds to an element in $\H^1(\mscr C,\mscr E^{\vee})\otimes _{\mcal O_{W_0}}\mcal O_{W_0}^{\oplus b}$, or equivalently 
	\begin{equation*}
	[P]:\H^1(\mscr C,\mscr E^{\vee})^{\vee}\to \mcal O_{W_0}^{\oplus b},
	\end{equation*}
	where $\mscr C$ is the spin curve undelying $P|_{W_0}$ with spinor sheaf $\mscr E$. Because the restriction commutes with the identification between $\sDef$ and $\H^1$ (Corollary \ref{Restriction_Functoriality}), the restriction of the homomorphism $[P]$ to the fiber $q$ is the homomorphism 
	\begin{equation*}
	[P_q]:\H^1(\mscr C_q,\mscr E_q^{\vee})^{\vee}\to \mbb C^{\oplus b},
	\end{equation*}
	where the vector space $\mbb C^{\oplus b}$ is spanned by $\{\theta_i\}$, i.e. the cotangent space of the universal deformation base of $P_q$, thus $[P_q]$ is isomorphism. It follows that there exists an open affine neighborhood $S$ of $q$ in $W$ such that $[P]$ is an isomorphism on $S_{\bos}$. Then $\forall s\in S$, the homomoprhism $[P_s]:\H^1(\mscr C_s,\mscr E_s^{\vee})^{\vee}\to \mbb C^{\oplus b}$ is isomorphism, whence $\widehat{P}_s$ induces an isomorphism between the universal deformation base of $P|_s$ and $\widehat{\mcal O}_{S,s}$.
\end{proof}

\begin{proposition}\label{Etale_Covering}
	The morphism $S\to \overline{\mscr{SM}}_{\sg,\sns,\sra}$ in {\normalfont Lemma \ref{Openness_of_Versality}} is representable and \' etale.
\end{proposition}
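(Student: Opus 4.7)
The plan is to split the verification into two parts: first, representability of $S\to \overline{\mscr{SM}}_{\sg,\sns,\sra}$, and second, the \'etaleness of this representable morphism. Representability is immediate from the machinery already established. Given any superscheme $T$ with a morphism $\psi\colon T\to \overline{\mscr{SM}}_{\sg,\sns,\sra}$ classifying a stable SUSY curve $Q\to T$, the fiber product $S\times_{\overline{\mscr{SM}}_{\sg,\sns,\sra}}T$ represents the functor that to a superscheme $U\to S\times T$ assigns the set of isomorphisms between the pullbacks of $P$ and $Q$ to $U$. By Proposition \ref{Representablity_of_Diagonal}, this is corepresented by the $(S\times T)$-superscheme $\Isom_{S\times T}(p_S^*P,p_T^*Q)$, which is finite and unramified over $S\times T$. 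Hence $S\times_{\overline{\mscr{SM}}_{\sg,\sns,\sra}}T$ is a finite unramified $T$-superscheme, which establishes representability and also shows that the morphism is quasi-finite and unramified once we base change to any scheme.

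For \'etaleness, since the morphism is now representable, I plan to verify the infinitesimal lifting criterion in the category of local Artinian $\mathbb C$-superalgebras, that is: for every surjection $A\twoheadrightarrow A/I$ with $I^2=0$ and every 2-commutative square
\begin{center}
\begin{tikzcd}
\Spec A/I \arrow[r,"\alpha"] \arrow[d,hook] & S \arrow[d,"P"]\\
\Spec A \arrow[r,"\beta"] \arrow[ur,dashed,"\gamma"] & \overline{\mscr{SM}}_{\sg,\sns,\sra}
\end{tikzcd}
\end{center}
I would produce a unique dashed arrow $\gamma$ together with the required 2-isomorphisms. Call $s\in S(\mathbb C)$ the image of the closed point of $\Spec A/I$. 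The data of $\beta$ (together with the prescribed 2-isomorphism over $\Spec A/I$) is exactly a deformation of $P|_s$ over $\Spec A$ extending the deformation over $\Spec A/I$ classified by $\alpha$. By Lemma \ref{Openness_of_Versality}, the complete local superring $\widehat{\mcal O}_{S,s}$ is a universal deformation base for $P|_s$ as a stable SUSY curve; hence there is a unique continuous local homomorphism $\widehat{\mcal O}_{S,s}\to A$ inducing this deformation, which produces the sought-for $\gamma$, and its uniqueness is already encoded in the universal property. Combined with the fact that $S$ is of finite type over $\mathbb C$, this formal lifting criterion upgrades to \'etaleness of the base change $S\times_{\overline{\mscr{SM}}_{\sg,\sns,\sra}}T\to T$ for every $T$, via standard limit arguments on Artinian quotients (or equivalently by invoking the super-analog of the fact that formally \'etale plus locally finitely presented equals \'etale, using Proposition \ref{Limit_Flatness} and Lemma \ref{Lemma_Nilpotent_Lift_of_Etale_Morphism}).

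The main obstacle I anticipate is purely bookkeeping rather than conceptual: one must carefully match the universal deformation functor $F_X$ from Proposition \ref{Universal_Deformation} with the fiber-product description of $S\times_{\overline{\mscr{SM}}_{\sg,\sns,\sra}}T$ at formal neighborhoods of geometric points, keeping track of the 2-categorical compatibility data (the 2-isomorphism over $\Spec A/I$), and ensuring that the \emph{uniqueness} part of the lifting criterion makes use of the fact that the diagonal of $\overline{\mscr{SM}}_{\sg,\sns,\sra}$ is unramified (equivalently, that stable SUSY curves admit no infinitesimal automorphisms, Corollary \ref{Absence_of_Global_Infinitesimal_Automorphism}). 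Once this matching is done, the finiteness of the diagonal guarantees that any two lifts $\gamma,\gamma'$ agreeing on $\Spec A/I$ are uniquely 2-isomorphic, and the universal deformation property promotes this to honest equality of morphisms into $S$.
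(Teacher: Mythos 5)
Your proof plan is correct and follows essentially the same route as the paper: representability comes from the representability of the diagonal (Proposition \ref{Representablity_of_Diagonal}), and \'etaleness is verified via the infinitesimal lifting criterion for local Artinian $\mathbb C$-superalgebras using the universal deformation property established in Lemma \ref{Openness_of_Versality}. Your additional remarks — the explicit appeal to Corollary \ref{Absence_of_Global_Infinitesimal_Automorphism} to promote uniqueness of lifts up to 2-isomorphism to honest uniqueness, and the limit argument needed to pass from formal \'etaleness to \'etaleness for the finite-type morphism — are implicit but unstated in the paper's terse proof and are good things to make explicit.
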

\begin{proof}
	Representability of $S\to \overline{\mscr{SM}}_{\sg,\sns,\sra}$ is a consequence of representability of the diagonal of $\overline{\mscr{SM}}_{\sg,\sns,\sra}$ (Proposition \ref{Representablity_of_Diagonal}). To show that it is \' etale is equivalent to show that for every small extension $p:A\to A/\mcal J$ of local Artinian $\mbb C$-superalgebras, and for every commutative diagram 
	\begin{center}
		\begin{tikzcd}
		\Spec A/\mcal J\arrow[r,"f"] \arrow[d,"i",hook] & S\arrow[d]\\
		\Spec A \arrow[r,"g"] & \overline{\mscr{SM}}_{\sg,\sns,\sra}
		\end{tikzcd},
	\end{center}
	there exists a unique $\widetilde f:\Spec A \to S$ making the diagram still commutative. Let the image of the closed point of $\Spec A/\mcal J$ under $f$ be $s\in S$, then the pull back of stable SUSY curve $P$ over $S$ to $\Spec A/\mcal J$, denoted by $P_{A/\mcal J}$, extends to a stable SUSY curve $P_{A}$ over $\Spec A$, where $P_{A}$ is the defined by the morphism $g$. Since $\widehat{P}_s$ over $\widehat{\mcal O}_{S,s}$ is a universal deformation, there exists a unique homomorphism $\widetilde{f}:\Spec A \to \Spec \widehat{\mcal O}_{S,s}$ which extends $f:\Spec A/\mcal J \to \Spec \widehat{\mcal O}_{S,s}$ such that $\widetilde{f}^*P\cong P_{A}$.
\end{proof}

Now for every object $p\in \overline{\mscr{SM}}_{\sg,\sns,\sra}(\mbb C)$, we can construct such $S\to \overline{\mscr{SM}}_{\sg,\sns,\sra}$ in the Proposition \ref{Etale_Covering} that $p$ is equivalent to $q\in S(\mbb C)$, thus we obtained an \' etale covering of $\overline{\mscr{SM}}_{\sg,\sns,\sra}$ and concludes the proof of Theorem \ref{Moduli_of_Stable_SUSY_Curve_is_DM}.

\subsubsection{Gluing of Punctures}
Similar to the spin curves, we can define a set of gluing operations functorially for stable SUSY curves. For NS punctures, this is easy, since NS punctures are sections of the base superscheme, so we can use this to define a gluing diractly. More precisely, suppose that $X_1$ and $X_2$ are stable SUSY curves over a base superscheme $S$, and that $\ns^1$ and $\ns^2$ are nonzero, then we can pick one NS puncture from each curve, say that the $i_1$'th NS puncutre $E_{i_1}$ on $X_1$ and $i_2$'th NS puncture $E_{i_2}$ on $X_2$. The projection $E_{i_1}\to S$ is an isomorphism, same for $E_{i_2}$, so we can use these isomorphims to define a canonical identification between $\varphi:E_{i_1}\cong E_{i_2}$, and glue them to get a new superscheme $X=X_1\underset{E_{i_1}}{\cup_{\varphi}} X_2$. Note that there are two short exact sequence of sheaves 
\begin{center}
	\begin{tikzcd}
	0\arrow[r] &\mcal O_X \arrow[r]& p_{1*}\mcal O_{X_1}\oplus p_{2*}\mcal O_{X_2} \arrow[d,"\delta_1\oplus\delta_2"] \arrow[r] &\mcal O_S \arrow[r] &0 &\\
	&0\arrow[r] &p_{1*}\omega_{X_1/S}\oplus p_{2*}\omega_{X_2/S} \arrow[r] & \omega_{X/S} \arrow[r] & \mcal O_{S}\arrow[r] &0
	\end{tikzcd},
\end{center}
where $p_i:X_i\to X$ are natural projections. So we can define the derivation $\delta$ on $X$ as the composition 
\begin{equation*}
\mcal O_X \to p_{1*}\mcal O_{X_1}\oplus p_{2*}\mcal O_{X_2}\overset{\delta_1\oplus\delta_2}{\longrightarrow}p_{1*}\omega_{X_1/S}\oplus p_{2*}\omega_{X_1} \to \omega_{X/S}.
\end{equation*}
This construction is functorial hence giving rise to an operation on corresponding moduli spaces:
\begin{equation*}
\mathfrak m^{\text{NS}} _{i_1,i_2}:\overline{\mscr{SM}}_{\sg_1,\sns^1,\sra^1}\times \overline{\mscr{SM}}_{\sg_2,\sns^2,\sra^2}\to \partial\overline{\mscr{SM}}_{\sg_1+\sg_2,\sns^1+\sns^2-2,\sra^1+\sra^2}.
\end{equation*}
This can be defined for a single stable SUSY curve with at least two NS puncutres as well:
\begin{equation*}
\mathfrak g^{\text{NS}}_{i,j}:\overline{\mscr{SM}}_{\sg,\sns,\sra}\to \partial\overline{\mscr{SM}}_{\sg+1,\sns-2,\sra}.
\end{equation*}
Note that these maps depends on the choice of $i_1$, $i_2$, $i$, and $j$, in an equivariant way. For example, if $\sigma_{i_1,i_1'}\in S_{\sns^1}$ switches $i_1$ and $i_1'$ (it acts $\overline{\mscr{SM}}_{\sg_1,\sns^1,\sra^1}$ naturally), then we have $\mathfrak m _{i_1',i_2}=\mathfrak m^{\text{NS}} _{i_1,i_2}\circ \sigma_{i_1,i_1'}$.

\newl For gluing R divisors, the $\mbb Z/2$-ambiguity is lifted to the ambiguity of choosing an isomorphism between the structure sheaf of two R divisors. The solution is also similar to the spin curve case: we add a canonical isomorphism {\it by hand}. Suppose that $X_1$ and $X_2$ are stable SUSY curves over a base superscheme $S$, and that $\ra^1$ and $\ra^2$ are nonzero, we can pick one R divisor from each curve, $\iota_1:\mscr D_{i_1}\subset X_1$ and $\iota_2:\mscr D_{i_2}\subset X_2$. Note that $\mscr D_i$ are smooth over $S$ of relative dimension $(0|1)$. The derivations $\delta_1,\delta_2$ induce $\mcal O_S$-linear derivations 
\begin{alignat*}{2}
d_1&:\mcal O_{\mscr D_{i_1}}{\rightarrow}\omega_{\mscr D_{i_1}/S}, 
\\
d_2&:\mcal O_{\mscr D_{i_2}}{\rightarrow}\omega_{\mscr D_{i_2}/S},
\end{alignat*}
as can be easily seen from a local computation. Consider the superscheme $$G_{i_1,i_2}\subset \Isom_S(\mscr D_{i_1},\mscr D_{i_2}),$$ parametrizing isomorphisms from $\mscr D_{i_1}$ to $\mscr D_{i_2}$ which commutes with $d_i$. Locally on $S$, after choosing coordinate system of $\mscr D_{i_1},\mscr D_{i_2}$, denoted by $\theta_1,\theta_2$, we can assume that $d_1,d_2$ can be written as 
$$d_i:\theta_i\mapsto [1/\mathrm{d}\theta_i],\quad 1\mapsto 0,$$
so for any $S$-superscheme $T$, $G_{i_1,i_2}(T)$ is locally generated by homomorphims of form 
\begin{equation*}
\theta_2\mapsto\{\pm\theta_1+a|a\in \mcal O_T^-\}.
\end{equation*}
Therefore, $G_{i_1,i_2}$ is locally on $S$ isomorphic to two copies of $\mbb A^{0|1}$. Note that $G_{i_1,i_2,\bos}$ on $S_{\bos}$ is exactly the $G_{i_1,i_2}$ we constructed in \ref{app: gluing marked points}.

\newl Base change to the superscheme $G_{i_1,i_2}$, there is a universal isomorphism $\alpha:\mscr D_{i_1}\cong\mscr D_{i_2}$ which commutes with $d_i$, so we can take the gluing of $X_1\times _S G_{i_1,i_2}$ and $X_2\times _S G_{i_1,i_2}$ to be $X=X_1\underset{\mscr D_{i_1}}{\cup_{\alpha}} X_2$. Note that there are two short exact sequence of sheaves connected by derivations:
\begin{center}
	\begin{tikzcd}
	0\arrow[r] &\mcal O_X \arrow[r]& p_{1*}\mcal O_{X_1}\oplus p_{2*}\mcal O_{X_2} \arrow[d,"\delta_1\oplus\delta_2"] \arrow[r,"\alpha+\text{Id}"] &\mcal O_{\mscr D_{i_2}} \arrow[r]\arrow[d,"d_2"] &0 \\
	0\arrow[r] &\omega _{X/S} \arrow[r] & p_{1*}\omega _{X_1/S}\oplus p_{2*}\omega _{X_2/S} \arrow[r,"\alpha+\text{Id}"] & \omega _{\mscr D_{i_2}/S}\arrow[r] &0
	\end{tikzcd},
\end{center}
such that the diagram commutes. So there is a derivation $\delta:\mcal O_X\to \omega_{X/S}$ and this gives $X$ a structure of stable SUSY curve. The construction is also functorial, hence giving rise to an operation on corresponding moduli spaces:
\begin{center}
	\begin{tikzcd}
	\mscr{G}_{i_1,i_2}\arrow[r,"\mathfrak m^{\text{R}} _{i_1,i_2}"] \arrow[d,"\pi_{i_1,i_2}" swap] &  \partial\overline{\mscr{SM}}_{\sg_1+\sg_2,\sns^1+\sns^2,\sra^1+\sra^2-2}\\
	\overline{\mscr{SM}}_{\sg_1,\sns^1,\sra^1}\times \overline{\mscr{SM}}_{\sg_2,\sns^2,\sra^2}
	\end{tikzcd}.
\end{center}
This can be defined for a single punctured SUSY curve with at least two NS puncturing divisors as well:
\begin{center}
	\begin{tikzcd}
	\mscr{G}_{i,j}\arrow[r,"\mathfrak g^{\text{R}} _{i,j}"] \arrow[d,"\pi_{i,j}" swap] &  \partial\overline{\mscr{SM}}_{\sg+1,\sns,\sra-2}\\
	\overline{\mscr{SM}}_{\sg,\sns,\sra}.
	\end{tikzcd}.
\end{center}
Note that $\pi_{i_1,i_2}$ and $\pi_{i,j}$ are locally isomorphic to a $\mbb Z/2\ltimes \mbb A^{0|1}$-bundle. Similar to the gluing of NS punctures, these maps depends on the choice of $i_1,i_2$ and $i,j$ in an equivariant way.

\section{Discussion, and Future Directions}\label{subsec:conclusion}
In this las section, we would like to point out few interesting research directions
\begin{itemize}
	\item As we have shown that genus-$\g$ heterotic-string and type-II-superstring field theory vertices exists for any type of boundary components.
	
	\newl It would be interesting to explicitly construct these vertices. An attempt to construct type-II-superstring field theory vertices using a generalization of the minimal-area problem of ordinary Riemann surfaces \cite{Zwiebach199005,Zwiebach199012,Zwiebach199102a,Zwiebach199102b,Zwiebach199111,Zwiebach199206,HeadrickZwiebach1806a,HeadrickZwiebach1806b,NaseerZwiebach201903} to the case of $\mscr{N}=1$ super-Riemann surface is considered in\footnote{We would like to thank Ted Erler who pointed us to this paper.} \cite{JurcoMunster1303}. 
	
	\item We also proved that the solution to the BV QME in $\mcal{F}(\mscr{S})$ are unique up to homotopies in the homotopy category of BV algebras. 
	
	\newl It is shown by Costello that the space of homotopy-equivalent solutions to the BV QME in the moduli of bordered ordinary Riemann surfaces is a point \cite{Costello200509}. This together with \cite{HataZwiebach199301} shows that the gauge-fixed action of bosonic-string field theory is unique. For another investigation in proving the uniqueness of bosonic-string field theory see \cite{MunsterSachs1208}.   
	
	\newl As we have explained in \hyperlink{choice:string vertex}{\small\bf a choice of string vertex}, the physics of string field theory should not be dependent on the choice of string vertices. On the other hand, we have proven the existence of string vertices by defining a quasi-isomorphism to the BV algebra of the complexes associated to the bordered spin-Riemann surfaces, and the solution to the BV QME in the latter BV algebra is unique up to homotopy. It would be very interesting to find the space of homotopy-equivalent solution to the BV QME that give rises to string vertices. This shed some light on the question of dependence of resulting string field theory on the choice of string vertex. One possibility is the following. Assuming that the space of homotopy-equivalent solutions of the BV QME that give rise to string vertices has a trivial mapping-class group (the group of large diffeomorphisms), then any two solutions of the BV quantum master equation can be related by a sequence of small deformations of one of them. Then, a result similar to \cite{HataZwiebach199301} for the heterotic-string and type-II superstring field theories would show that the two gauge-fixed actions constructed using the two string vertices are the same upon using two different gauge-fixing conditions. 
	
	\item  We have proven the existence of type-II-superstring vertices. Two related questions are as follows
	\begin{enumerate}
		\item In the presence of D-branes, we need to deal with surfaces with boundaries with both open-string punctures on the boundaries and closed-string punctures on the interior of the surface \cite{Zwiebach199802,MoosavianSenVerma201907}. One way to deal with such surfaces is to consider the double of the surface. We refer to section 7 of \cite{Witten2012b} for details. So, one might be able to prove the existence of type-II-superstring vertices in the presence of D-branes by considering the doubled surface. As we have shown, there is always a solution to the BV QME in the BV algebra of complexes associated to moduli space of such or corresponding bordered surfaces. 
		
		\item Another interesting and relevant direction is the case of unoriented string field theory \cite{DeWolfe199708,MoosavianSenVerma201907}. Since this theory is the orientifold of type-IIB-superstring theory, and the proof of the existence of the string vertex for the latter has been given here, it would be interesting to prove the existence of string vertices for unoriented string field theory as well.
	\end{enumerate}
	\end{itemize}
We leave these and related questions for future works.

{\vskip .8 cm}
\noindent {\bf Acknowledgments} It is our pleasure to thank Kevin Costello whose works and ideas shaped the main motivation behind this work. We also thank him for discussion and guidance on this work. A special gratitude goes to Ted Erler who provided detailed and critical comments on various parts of an earlier version of this manuscript. We also thank Ivo Sachs for useful discussions. SFM is grateful to Davide Gaiotto for his support during his stay at Perimeter Institute. Research at Perimeter Institute is supported by the Government of Canada through Industry Canada and by the Province of Ontario through the Ministry of Research and Innovation.

\appendix

\section{The Picture-Changing Operation}\label{app:the picture-changing operation}

In this appendix, we define the picture-changing operation and explain its various incarnations. We also explain its relation to the natural operation defined on space of form on a supermanifold. Following \cite{VerlindeVerlinde1987a, Belopolsky1997b, Belopolsky1997c}, we then argue that all these incarnations are different manifestations of the same concept.

\subsection{The Canonical Framework: Friedan-Shenker-Martinec Apprach}

In this section, we explain the picture-changing operation in the Friedan-Shenker-Martinec (FSM) formulation of covariant superstring perturbation theory \cite{FriedanMartinecShenker1986}.

\newl The fixing of superconformal invariance of the worldsheet theory of RNS-superstring theory will give rise to two sets of antighost-ghost systems 
\begin{enumerate}
	\item The fermionic reparametrization ghosts corresponding to a $bc$ system with fields having conformal dimensions $\Delta_b=2$ and $\Delta_c=-1$.
	\item The bosonic local supersymmetry ghosts corresponding to a $\beta\gamma$ system with fields having conformal dimensions $\Delta_\beta=3/2$ and $\Delta_\gamma=-1/2$.
\end{enumerate}
The $\beta\gamma$ system has a first-order action and as such there is no ground state for the system \footnote{This can be seen as follows. The action of the mode $\gamma_{\frac{1}{2}}$ of $\gamma(z)$ field on the $\mtt{SL}(2,\mbb{C})$-invariant vacuum $|0\rangle$ is not zero. Since $\gamma_{\frac{1}{2}}$ is a bosonic mode, it can be applied to $|0\rangle$ arbitrarily. However, this action reduced the $L_0$ eigenvalue, i.e. the energy, of the resulting state by $\frac{1}{2}$. In this way one can construct states with arbitrary negative energy and as a result there is no ground state for the system. To cure this problem, one has to construct a state that is annihilated by all of the positive modes. For more details, see section 5.2 of \cite{FriedanMartinecShenker1986}.}. The resolution of the problem was found in \cite{FriedanMartinecShenker1986}. The idea is that instead of $\beta\gamma$ system, one considers its bosonization in terms of the bosonic field $\phi$ and the fermionic fields $\eta$ and $\xi$ using the following identifications
\begin{equation}\label{eq:bosonization formulas}
\beta(z)\mapsto \partial_z\xi(z)e^{-\phi(z)},\qquad \qquad \gamma(z)\mapsto\eta(z)e^{\phi(z)}. 
\end{equation}
One can then define a family of states parametrized by $\phi$-charge $n$ of the corresponding state 
\begin{equation}\label{eq:state with picture number n}
|n\rangle \equiv \lim\limits_{z\longrightarrow 0}e^{n\phi(z)}|0\rangle,
\end{equation}
where $|0\rangle$ is the $\mtt{SL}(2,\mbb{C})$-invariant vacuum. $n$ is called the {\it picture number} \cite{NeveuSchwarzThorn1971a,BrinkOliveRebbiScherk1973a}. The action of $\beta$ and $\gamma$ does not change $n$. This means that $\{|n\rangle\}_{n\in\mbb{Z}}$ can be used to construct representations of the algebra of modes of $\beta$ and $\gamma$ fields, since they have picture number zero. Representations constructed using different $n$ are inequivalent. This is a crucial property of the bosonic nature of the $\beta\gamma$ system. The choice of a picture is thus a choice of vacuum for the $\beta\gamma$ system \cite{FriedanMartinecShenker1986}. 

\newl  One can expand these fields as 
\begin{equation}
\beta(z)=\sum\limits_{k}\frac{\beta_k}{z^{k+\frac{3}{2}}}, \qquad \qquad 
\gamma(z)=\sum\limits_{k}\frac{\gamma_k}{z^{k-\frac{1}{2}}},
\end{equation}
where $k\in\frac{1}{2}+\mbb{Z}$ (for NS sector) or $k\in\mbb{Z}$ (for R sector). From these, we conclude that
\begin{equation}
\beta_k=\bigointsss\frac{dz}{2\pi\mfk{i}}\,z^{k+\frac{1}{2}}\beta(z), \qquad \qquad  \gamma_k=\bigointsss\frac{dz}{2\pi\mfk{i}}\,z^{k-\frac{3}{2}}\gamma(z). 
\end{equation}
These modes satisfy the commutation relation 
\begin{equation}
[\gamma_m,\beta_n]=\delta_{m,-n}.
\end{equation}
The other commutators are zero. There is a ghost current $\mscr{Q}_{\text{gh}}\equiv \sum_n\boldsymbol{:}\beta_n\gamma_{-n}\boldsymbol{:}$, where $\boldsymbol{:}\cdot \boldsymbol{:}$ denotes the normal ordering. The ghost-charges of modes are
\begin{equation}
[\mscr{Q}_{\text{gh}},\beta_n]=+n\beta_n, \qquad [\mscr{Q}_{\text{gh}},\gamma_n]=-n\gamma_n.
\end{equation} 
We can also introduce the operators that interpolate between the ghost-see levels. These operators are formally written as $\delta(\beta_n)$ and $\delta(\gamma_n)$. The action of these operators on a vacuum with picture number $n$ is defined via \cite{VerlindeVerlinde198804}
\begin{equation}
\delta(\beta_{-n-\frac{3}{2}})|n\rangle=|n+1\rangle, \qquad  \delta(\gamma_{n+\frac{1}{2}})|n\rangle=|n-1\rangle.
\end{equation}
We can thus assign picture-numbers $+1$ to $\delta(\beta_n)$ and $-1$ to $\delta(\gamma_n)$.  

\newl To determine the picture number of the NS and R vacua, we consider the state \eqref{eq:state with picture number n} and act with the modes of $\beta$ and $\gamma$ fields. The action of the mode $\beta_k$ on the state \eqref{eq:state with picture number n} gives
\begin{alignat}{2}
\beta_k|n\rangle&\sim\bigointsss\frac{dz}{2\pi\mfk{i}}\, z^{k+\frac{1}{2}}\beta(z)e^{n\phi(0)}=\bigointsss\frac{dz}{2\pi\mfk{i}}\, z^{k+\frac{1}{2}}\partial_z\xi(z)e^{-\phi(z)}e^{n\phi(0)} \nonumber
\\
&=\bigointsss\frac{dz}{2\pi\mfk{i}}\, z^{k+\frac{1}{2}+n}\boldsymbol{:}\partial_z\xi(z)e^{-\phi(z)}e^{n\phi(0)}\boldsymbol{:}.
\end{alignat} 
This vanishes only if $k+\frac{1}{2}+n\ge 0$. Similarly, the action of the mode $\gamma_k$ on the state \eqref{eq:state with picture number n}  gives
\begin{alignat}{2}
\gamma_k|n\rangle&\sim\bigointsss\frac{dz}{2\pi\mfk{i}}\, z^{k-\frac{3}{2}}\gamma(z)e^{n\phi(0)}=\bigointsss\frac{dz}{2\pi\mfk{i}}\, z^{k-\frac{3}{2}}\eta(z)e^{\phi(z)}e^{n\phi(0)} \nonumber
\\
&=\bigointsss\frac{dz}{2\pi\mfk{i}}\, z^{k-\frac{3}{2}-n}\boldsymbol{:}\eta(z)e^{\phi(z)}e^{n\phi(0)}\boldsymbol{:}. 
\end{alignat}
This vanishes only if $k-\frac{3}{2}-n\ge 0$. We want to choose $n$ such that $\beta_k|n\rangle=\gamma_k|n\rangle=0$ for $k>0$. From the possible values of $k$ for the NS and R sectors, it is clear that $n$ is integer for the NS sector and half-integer for the R sector. The only picture numbers that satisfy these inequalities are $n=-1$ for states in the NS sector, and $n=-\frac{1}{2},-\frac{3}{2}$ for states in the R sector. The R vacua are obtained by the action of spin fields
\begin{equation}
|\Sigma_{\pm}\rangle\equiv \lim\limits_{z\longrightarrow 0} e^{\pm\frac{1}{2}\phi(z)}|0\rangle
\end{equation}
on the NS ground state with picture number $-1$, i.e. spin fields interpolate between NS and R ground states. The picture numbers $-1$ and $-\frac{1}{2}$ are called canonical picture numbers\footnote{They give rise to the canonical fermionic dimension for the superstack. Doing superstring theory on other negative pictures corresponds to the modification of the fermionic dimension of superstack. For more details see section $4.3$ of \cite{Witten2012b}.}. It turns out that R states with picture-number $-\frac{3}{2}$ are useful for formulation of off-shell amplitudes in superstring theory \cite{Sen2014b} and also for the formulation of heterotic and type-II closed-superstring field theories \cite{Sen2016a}. The main reason is that a formulation using the states from R sector with picture number $-\frac{3}{2}$ avoids the introduction of the so-called {\it inverse picture-changing operators}.

Once a particular vacuum $|n\rangle$ is chosen, one can build the spectrum of the ghost system upon it. Therefore, the question is that whether these inequivalent description (i.e. inequivalent representation of the algebra of modes) of the spectrum build upon different vacua with different picture numbers are related. It is clear from the formulas \eqref{eq:bosonization formulas} that the superstring BRST complex does not contain states involving the zero-mode $\xi_0$ of the $\xi$ field, states like $\xi_0|\psi\rangle$. Consider the state $|\psi'\rangle\equiv \mscr{Q}\xi_0|\psi\rangle$, for a BRST-closed state $|\psi\rangle$, where $\mscr{Q}$ is the superstring BRST operator. $|\psi'\rangle$ has three properties, $1)$ it does not contain the zero-mode of $\xi(z)$ field, which can be seen by using the explicit form of $\mscr{Q}$, $2)$ it is obviously BRST-closed, i.e. it is an element of the BRST cohomology and as such it is a physical state, $3)$ it is not a representative of the trivial class of the superstring BRST cohomology since $\xi_0|\psi\rangle$ is not an element of the superstring BRST complex (it does contain zero-mode of $\xi(z)$ field and, as such, it is not part of the spectrum of the $\beta\gamma$ system). One can thus define the so-called {\it picture-changing operator} (PCO) \footnote{Here, we are not dealing with the more complicated issue of vertical integration which involves the introduction of a modified picture-changing operator \cite{Sen2015a}. See also \cite{ErlerKonopka2017a} for an alternative approach to the vertical integration procedure from the point of view of large Hilbert space.}. If $\mscr{V}_n(z)$ denotes a vertex operator built on a vacuum with picture number $n$, then 
\begin{equation}
\mscr{V}_{n+1}(z)=\boldsymbol{\mcal{X}}[\mscr{V}_n(z)]\equiv\{\mscr{Q},\xi(z)\mscr{V}_n(z)\},
\end{equation}
where $\boldsymbol{\mcal{X}}$ is the PCO (for explicit examples of the construction of vertex operators in two different pictures using the PCO, see section 5.3 of \cite{FriedanMartinecShenker1986}). The explicit form of PCO is thus
\begin{equation}\label{eq:PCO FSM}
\boldsymbol{\mcal{X}}(z)=\{\mscr{Q},\xi(z)\}.
\end{equation}
Thus, the answer to the above question is that the vertex operators associated to the same physical state built upon vacua with different picture number are related by the picture-changing operation \cite{FriedanMartinecShenker1986}. In other words, if we think of physical states with a particular picture number as elements of the BRST cohomology of the BRST complex defined by that picture number (picture numbers are part of the grading of the complex \cite{Belopolsky1997b}), the PCO maps the BRST cohomologies of the BRST complexes with different picture numbers into each other. It can be shown that the computation of {\it on-shell} superstring scattering amplitudes using different picture numbers gives the same result \cite{FriedanMartinecShenker1986}. 

\newl It turns out that picture-changing operators naturally enters the superstring path integral by integration over the odd moduli of superstack \cite{VerlindeVerlinde1987a}. We turn to this point in the next section.

\subsection{The Path Integral Framework: Verlinde-Verlinde Approach}\label{subsubsec:VV interpretation of PCO}

In this section, we review the results of \cite{VerlindeVerlinde1987a} due to Verlinde and Verlinde (VV). It is argued there that the picture-changing operation arises in the string measure as the integration over odd moduli of superstack. We will be brief and refer to \cite{VerlindeVerlinde1987a} or section 3.6.1 and 3.6.2 of \cite{Witten2012c} for more details. 

\newl The simplest amplitude is the genus-$\g$ superstring vacuum amplitude given by \cite{Martinec1987a}
\begin{equation}
\mcal{Z}=\bigintsss_{\mscr{SM}_{\sg}}\mcal{Z}(\mbs{m},\widehat{\mbs{m}})=\bigintsss_{\mscr{SM}_{\sg}} \mcal{Z}(X,\mcal{G};\mbs{m},\widehat{\mbs{m}})|\mcal{Z}(B,C;\mbs{m},\widehat{\mbs{m}})|^2,
\end{equation}
where $\mscr{SM}_{\sg}$ is the superstack of genus-$\g$ super-Riemann surfaces $\mcal{R}(\mbs{m},\widehat{\mbs{m}})$, and $\mbs{m}$ and $\widehat{\mbs{m}}$ collectively denote the even and odd moduli of $\mcal{R}(\mbs{m},\widehat{\mbs{m}})$, and 
\begin{alignat}{2}
\mcal{Z}(X,\mcal{G};\mbs{m},\widehat{\mbs{m}})&\equiv \bigintsss [\mscr{D}X]\,e^{-S[X,\mcal{G}]},\nonumber
\\
\mcal{Z}(B,C;\mbs{m},\widehat{\mbs{m}})&\equiv \bigintsss [\mscr{D}B,\mscr{D}C]\,e^{-S[B,C]}\left(\prod_{i=1}^{3\sg-3}\langle\mu_i,B\rangle\prod_{a=1}^{2\sg-2}\delta(\langle\widehat{\mu}_a,B\rangle)\right).
\end{alignat}
Here $\mcal{G}$ denotes the gravitino field. The fields $X$, $B$, and $C$ have the following expansions
\begin{equation}
X^\mu(z|\theta)=x^\mu(z)+\theta \psi^\mu(z), \qquad B(z|\theta)=\beta(z)+\theta b(z), \qquad C(z|\theta)=c(z)+\theta \gamma(z).
\end{equation}
$x^\mu(z)$ are space-time coordinates, $\psi^\mu(z)$ are space-time fermions, $b(z)$ and $c(z)$ are fields of the diffeomorphism ghost system, and $\beta(z)$ and $\gamma(z)$ are fields of the superconformal ghosts system. $\{\mu_i,\widehat{\mu}_a\}$ is a basis of super-Beltrami differentials and the respective pairings $\langle\mu_i,B\rangle$ and $\langle\widehat{\mu}_a,B\rangle$ are defined by integration over $\mcal{R}$. 

\newl The integration over the odd coordinates of the superstack can be done as follows. It turns out that the dependence of the action on the odd moduli of super-Riemann surfaces $\mcal{R}(\mbs{m},\widehat{\mbs{m}})$ is contained in the gravitino field $\mcal{G}$. Let us explain the reason. Consider a surface $\mcal{R}(\mbs{m},\mbs{0})$, i.e. a surface that does not have any odd moduli. These moduli can be turned on by doing a quasi-superconformal transformation generated by an odd vector field $\widehat{v}(z,\bm\bar{z})$ \cite{VerlindeVerlinde198804}
\begin{alignat}{2}\label{eq:the quasi-superconformal transformation of a split surface}
z\quad\longrightarrow \quad z'&=z+\theta\widehat{v}(z,\bm\bar{z}), \nonumber
\\
\theta\quad \longrightarrow\quad \theta'&=\theta +\widehat{v}(z,\bm\bar{z})+\frac{1}{2}\theta\widehat{v}(z,\bm\bar{z})\partial_z\widehat{v}(z,\bm\bar{z}).
\end{alignat} 
The fields on the surface $\mcal{R}(\mbs{m},\widehat{\mbs{m}})$, obtained from $\mcal{R}(\mbs{m},\mbs{0})$ by a quasi-superconformal transformation like \eqref{eq:the quasi-superconformal transformation of a split surface}, can be written in terms of the field on $\mcal{R}(\mbs{m},\mbs{0})$ by a field-redefinition depending on $\widehat{v}(z,\bm\bar{z})$. Once the matter action $S[X,\mcal{G}]$ on $\mcal{R}(\mbs{m},\widehat{\mbs{m}})$ is written, it turns out that it can be identified with the Brink-Di Vecchia-Howe action for massless fields $x^\mu(z)$ and $\psi^\mu(z)$ coupled to $\mscr{N}=1$ supergravity in the Wess-Zumino gauge \cite{BrinkDiVecchiaHowe197612}, in which gravitino field can be identified with the odd vector field $\widehat{v}(z,\bm\bar{z})$ as follows. Since $\widehat{v}(z,\bm\bar{z})$ is a $(-\frac{1}{2},0)$-differential, on a genus-$\g$ $\mscr{N}=1$ super-Riemann surface, it can be expanded in a basis of odd moduli $\{\widehat{m}^a\}$ as follows
\begin{equation}
\widehat{v}(z,\bm\bar{z})=\sum_{a=1}^{2\sg-2}\widehat{m}^av_a(z,\bm\bar{z}).
\end{equation}
On the other hand, the gravitino field\footnote{Consider a super-Riemann surface $\mcal{R}$ with $\sns$ NS punctures located at $p_1,\cdots,p_{\sns}$ and $\sra$ R punctures located at $q_1,\cdots,q_{\sra}$. As has been explained in section 5.4 of \cite{Witten2012c}, the gravitino field $\mcal{G}$ is an element of $H^1(\mcal{R}_{\text{red}},\widehat{L})$, where $$\widehat{L}\simeq L\otimes\mcal{O}\left(-\sum\limits_{a=1}^{\sns}p_a\right),\qquad L^2\simeq T\mcal{R}_{\text{red}}\otimes \mcal{O}\left(-\sum\limits_{a=1}^{\sra}q_a\right),$$ up to $\widehat{L}$-valued gauge transformations. $\mcal{R}_{\text{red}}$ is the reduced space of the super-Riemann surface $\mcal{R}$. The dimension of $H^1(\mcal{R}_{\text{red}},\widehat{L})$ is $2\sg-2+\sns+\frac{1}{2}\sra$, i.e. the fermionic dimension of the superstack. In the above computation, $\sns=0=\sra$, and $\{\widehat{m}^a\}$ is a basis for $H^1(\mcal{R}_{\text{red}},T\mcal{R}_{\text{red}}^{\frac{1}{2}})$. See also section 2.2 of \cite{Witten2012b}.} can be similarly expanded
\begin{equation}
\mcal{G}=\sum\limits_{a=1}^{2\sg-2}\widehat{m}^a\mcal{G}_a(z,\bm\bar{z}),
\end{equation}
where the differentials $\mcal{G}_a(z,\bm\bar{z})$ can be chosen to be independent of both even and odd moduli of $\mcal{R}(\mbs{m},\widehat{\mbs{m}})$. One can then identify 
\begin{equation}
\mcal{G}_a(z,\bm\bar{z})=2\overline{\partial}v_a(z,\bm\bar{z}).
\end{equation}
Therefore, $\mcal{G}$ is a $(-\frac{1}{2},1)$- differential. In the basis chosen for the gravitino modes $\mcal{G}_a(z,\bm\bar{z})$, $\langle\mu_i,B\rangle_{\mcal{R}(\mbs{m},\widehat{\mbs{m}})}=\langle\mu_i,b\rangle_{\mcal{R}(\mbs{m},\mathbf{0})}$, and $\langle\widehat{\mu}_a,B\rangle_{\mcal{R}(\mbs{m},\widehat{\mbs{m}})}=\langle\mcal{G}_a,\beta\rangle_{\mcal{R}(\mbs{m},\mathbf{0})}$\footnote{In general, we have
	$$\langle\mu_i,B\rangle_{\mcal{R}(\mbs{m},\widehat{\mbs{m}})}=\langle\mu_i,b\rangle_{\mcal{R}(\mbs{m},\mathbf{0})}+\sum\limits_{a=1}^{2\vsg-2}\widehat{m}^a\langle\boldsymbol{\nabla}_i\mcal{G}_a,\beta\rangle_{\mcal{R}(\mbs{m},\mathbf{0})},$$ where $\boldsymbol{\nabla}_i\mcal{G}_a$ denotes the covariant derivative of the gravitino mode $\mcal{G}_a$ with respect to the even moduli of the surface. The explicit form of  $\boldsymbol{\nabla}_i\mcal{G}_a$ is not important for the following conclusion. Since we have chosen $\mcal{G}_a$s to be independent of even moduli $\{m^a\}$, and hence $\boldsymbol{\nabla}_i\mcal{G}_a$ vanishes identically.}. Using this choice, the worldsheet action $S[X,\mcal{G};B,C]_{\mcal{R}(\mbs{m},\widehat{\mbs{m}})}$ on $\mcal{R}(\mbs{m},\widehat{\mbs{m}})$ can be expanded as
\begin{equation}
S[X,\mcal{G};B,C]_{\mcal{R}(\mbs{m},\widehat{\mbs{m}})}=S[X;B,C]_{\mcal{R}(\mbs{m},\mathbf{0})}+\sum\limits_{a=1}^{2\sg-2}\widehat{m}^a\langle\mcal{G}_a,T_F\rangle,
\end{equation}
where $T_F$ is the worldsheet supercurrent, the fermionic counterpart of the usual energy-momentum tensor $T_{B}$\footnote{This means that the OPEs of $T_B$ and $T_F$, i.e. $T_B(z)T_B(w)$, $T_B(z)T_F(w)$, and $T_F(z)T_F(w)$, generates a copy of the $\mscr{N}=1$ superconformal algebra. They have the following mode-expansion
	\begin{equation*}
	T_{B}(z)=\sum\limits_{n}\frac{L_n}{z^{n+2}}, \qquad \qquad T_F(z)=\sum\limits_{r}\frac{G_r}{z^{n+\frac{3}{2}}}.
	\end{equation*}
	$r\in\mbb{Z}+\frac{1}{2}$ (for NS sector) or $r\in\mbb{Z}$ (for R sector). The modes satisfy the NS or R extensions of the Virasoro algebra.}, of the total superconformal field theories of matter and ghosts whose explicit form is not important for us. Expanding the exponential, we get
\begin{equation}
\mcal{Z}(\mbs{m},\widehat{\mbs{m}})=\bigintsss \mscr{D}[X;B,C] \left[\prod_{i=1}^{3\sg-3}\langle\mu_i,b\rangle\prod_{a=1}^{2\sg-2}\left(1+\widehat{m}^a\langle \mcal{G}_a,T_F\rangle\right)\delta(\langle\mcal{G}_a,\beta\rangle)\right]e^{-S[X;B,C]_{\mcal{R}(\mbs{m},\mathbf{0})}}.
\end{equation}
In this final form, the integration over $\{\widehat{m}^a\}$ can be easily done. The result is
\begin{equation}
\mcal{Z}=\bigintsss_{\mscr{SM}_{\sg}}\prod_{a=1}^{2\sg-2}d\widehat{m}^a\,\mcal{Z}(\mbs{m},\widehat{\mbs{m}})=\bigintsss_{\mscr{S}_{\sg}}\bigintsss \mscr{D}[X;B,C]\left[\prod_{i=1}^{3\sg-3}\langle\mu_i,b\rangle_{\mcal{R}(\mbs{m},\mathbf{0})}\prod_{a=1}^{2\sg-2}\boldsymbol{\mcal{X}}(\mcal{G}_a)\right]e^{-S[X;B,C]_{\mcal{R}(\mbs{m},\mathbf{0})}},
\end{equation} 
where $\mscr{S}_{\sg}$ is the moduli space of split surfaces $\mcal{R}(\mbs{m},\mathbf{0})$, and $\boldsymbol{\mcal{X}}$ is the {\it picture-changing operator} given by
\begin{equation}\label{eq:PCO VV I}
\boldsymbol{\mcal{X}}(\mcal{G}_a)\equiv\langle \mcal{G}_a,T_F\rangle\,\delta(\langle\mcal{G}_a,\beta\rangle)\Big|_{\mcal{R}(\mbs{m},\mathbf{0})}.
\end{equation}
Therefore, the picture-changing operation is the operation of absorbing a zero-mode of $\beta$ ($\delta(\langle\mcal{G}_a,\beta\rangle)$) together with the integration over the corresponding odd moduli of superstack. This analysis can be repeated for the scattering amplitudes \cite{VerlindeVerlinde1987a}. A special choice of the gravitino modes $\mcal{G}_a(z,\bm\bar{z})$ is the delta-function support at points $z_a$
\begin{equation}
\mcal{G}_a(z,\bm\bar{z})=\delta^{(2)}(z-z_a;\bm\bar{z}-\bm\bar{z}_a)\qquad \qquad a=1,\cdots,2\g-2.
\end{equation}
Using this choice, \eqref{eq:PCO VV I} can be written as
\begin{equation}\label{eq:PCO VV II}
\boldsymbol{\mcal{X}}(z_a)=\delta(\beta(z_a))T_F(z_a)=\delta(\beta(z_a))\{\mscr{Q},\beta(z_a)\}, 
\end{equation}
where we have used the fact that $\{\mscr{Q},\beta(z)\}=T_{F}(z)$. From the relation $\frac{d\Theta(\beta(z))}{d\beta(z)}=\delta(\beta(z))$, this expression can be formally written as
\begin{equation}\label{eq:PCO VV III}
\boldsymbol{\mcal{X}}(z_a)=\{\mscr{Q},\Theta(\beta(z_a))\}. 
\end{equation}
The locations of the support of the gravitino modes $\mcal{G}_a(z,\bm\bar{z})$ are the locations of insertion of PCOs. Considering the fact that $\Theta(\beta(z))$ can be identified with $\xi(z)$\footnote{One way to determine this is as follows. By considering the integral representation of the delta function, we have
	\begin{equation*}
	\beta(z)\delta(\beta(w))\sim (z-w)\partial_w\beta(w)\delta(\beta(w))=(z-w)\partial_w\Theta(w),
	\end{equation*}
	where we have used the fact that $d\Theta(\beta(z))=d\beta(z)\delta(\beta(z))$, which implies that $\partial_z\Theta(\beta(z))=\partial_z\beta(z)\delta(\beta(z))$. On the other hand, using the bosonization formula $\delta(\beta(z))\mapsto \exp(\phi(z))$ and \eqref{eq:bosonization formulas}, we find that
	\begin{equation*}
	\beta(z)\exp(\phi(w))\sim (z-w)\partial_w\xi(w). 
	\end{equation*}
	Comparing the above equations, we can identify $\xi(z)=\Theta(\beta(z))$.} \cite{VerlindeVerlinde1987a}, this form matches with \eqref{eq:PCO FSM}, the FSM form of the picture-changing operator.

\newl One is also interested in understanding the meaning of the picture-changing operation in the superstack of $\mscr{N}=1$ Super-Riemann surfaces. It turns out that the correct interpretation of the picture-changing operation can be understood by a proper generalization of the concept of differential forms on ordinary manifolds to the so-called {\it pseudoforms} on supermanifolds. The picture-changing operation can then be understood as a natural operation on pseudoforms \cite{Belopolsky1997b, Belopolsky1997c}. We will explain this viewpoint in the next section.

\subsection{The Supergeometry Framework: Belopolsky Approach}

In this section, we explain the geometrical meaning of the picture-changing operation. For the full exposition, we will refer to the original papers \cite{Belopolsky1997b,Belopolsky1997c} in which the relation between FSM's interpretation of the picture-changing operation and the operation of forms on supermanifolds has been elucidated.

\newl In the theory of supermanifolds, the concept of differential forms is not enough to define the integration theory. The reason is that for odd coordinates $\theta_i$ and $\theta_j$, the differential $d\theta_i\wedge d\theta_j$ is even under $i\longleftrightarrow j$. This means that there is no differential form of top degree on a supermanifold. To properly define the integration on supermanifold, one has to generalize the concept of differential and top-degree forms. 

\newl The simplest integrand for the integration over a supermanifold is the volume form. To define a volume form on a supermanifold, one tries to generalize the idea of integration over the bosonic manifolds. A volume form on an ordinary manifold is a non-degenerate nowhere-vanishing section of the canonical sheaf of the manifold. The idea can be generalized to supermanifold: a natural object that can be integrated over a supermanifold is a section of the canonical (or Berezinian) sheaf of the supermanifold. It turns out that defining such a section requires a generalization of the concept of differential form on an ordinary manifold to the so-called {\it pseudoforms} on supermanifolds \cite{BernshteinLeites1978a}. If $(x|\theta)\equiv (x^1\cdots x^m|\theta^1\cdots \theta^n)$ denotes the local coordinate on a supermanifold, then an $(p|q)$-pseudoform $\boldsymbol{\Omega}$ can be written locally as
\begin{equation}\label{eq:generic (p|q)-pseudoform}
\boldsymbol{\Omega}=\Omega(x|\theta) dx^1\cdots dx^r d\theta^1\cdots d\theta^s \delta^{(n_1)}(d\theta^1)\cdots\delta^{(n_q)}(d\theta^q),
\end{equation}
where $\delta^{(n_i)}(d\theta^i)$ denotes the $n_i$\textsuperscript{th} derivative of the delta function with respect to $d\theta^i$, and $p\equiv r+s-\sum_{a=1}^q n_a$. There are two degrees associated to a pseudo-form: $p$ is called {\it the form degree}, corresponding to the usual degree of a differential form, and  $-q$, minus of the number of fermionic dimension of the supermanifold, is called {\it the picture number}, which does not have an analog in the theory of differential forms on ordinary manifolds. 

\newl We would like to have an object that can be integrated over a supermanifold. On an ordinary manifold, a natural object that can be integrated over is a section of the canonical sheaf of the manifold\footnote{Here, we consider real manifold. For a complex manifold, a section of the canonical sheaf provides the holomorphic volume form.}. These objects transform as a density on the manifold, i.e. under a coordinate transformations a density is multiplied by the Jacobian determinant of the coordinate transformation. Therefore, to be able to define the analogue of a density on a supermanifold, we need a supergeometric analog of the Jacobian determinant. This notion is the called {\it Berezinian} of a coordinate transformation\footnote{For more on the notion of density on a supermanifold see section 7 of chapter 4 of \cite{Manin1997a}.}. It thus turns out that proper objects that can be integrated over an $(m|n)$-dimensional supermanifold are sections of the the canonical (or Berezinian) sheaf of the supermanifold, the so-called {\it integral forms} of top degree \cite{BernshteinLeites1978a, BernshteinLeites1977a}. These are special cases of pseudoforms \eqref{eq:generic (p|q)-pseudoform} with the restriction $s=0$, $n_i=0$, $r=m$ and $q=n$. A top-degree integral form $\boldsymbol{\omega}$ can thus locally be written as
\begin{equation}
\boldsymbol{\omega}=\omega(x|\theta)dx^1\cdots dx^m \delta(d\theta^1)\cdots\delta(d\theta^n). 
\end{equation}
It has form number $m-n$ and picture number $-n$. We note that $\delta(dx)=dx$ has picture number zero since $dx$ is a {\it fermionic} variable. The fundamental properties of such forms are that 1) $\boldsymbol{\omega}dx^i=\boldsymbol{\omega}d\theta^j=0$ for $i=1,\cdots,m$ and $j=1,\cdots,n$, and 2) $\delta(d\theta^a)\delta(d\theta^b)=-\delta(d\theta^b)\delta(d\theta^a)$ for $a,b=1,\cdots,q$\footnote{We refer to section $3.3.2$ of \cite{Witten2012a} for the details.}, therefore, top-degree integral forms are indeed the objects that can be integrated over a supermanifold of dimension $(m|n)$. 

\newl {\it What is the significance of integral forms in the superstring theory?} In superstring theory, the supermanifolds we are interested in is some integration cycle $\Gamma$ inside the product $\mscr{M}_L\times\mscr{M}_R$ of spaces $\mscr{M}_L$ and $\mscr{M}_R$ that parameterize the left- and right-moving sectors of relevant theories, as we have explained in Sections \ref{subsec:the case of heterotic-string vertices} and \ref{subsec:the case of type-II-superstring vertices}. The fermionic dimension of the superstack of genus-$\g$ super-Riemann surfaces is equals to the space of meromorphic and multivalued quadratic $\frac{3}{2}$-superdifferentials associated to the variation of odd moduli \cite{Knizhnik1986a}. The space of such differentials has dimension $2\g-2$. On the other hand, each vertex operator from the NS sector has picture number $-1$, i.e. it comes with a factor of $\delta(\gamma)$, and each vertex operator from the R sector has picture number $-\frac{1}{2}$ since the spin field, i.e. the vacuum of the $\beta\gamma$ system, has picture number $-\frac{1}{2}$. Therefore, the fermionic dimension of the superstack is $2\g-2+\ns+\frac{1}{2}\ra$, i.e. the superstring measure must be a top-degree integral form with picture number $-(2\g-2+\ns+\frac{1}{2}\ra)$. The superstring measure associated to an arbitrary scattering process involving $\n=(\ns,\ra)$ external states from either sectors, given by the path integral over the worldsheet superfields, naturally produces a top-degree integral form on $\Gamma$ \cite{ Alwarez-GaumeNelsonGomezSieraVafa1988,Witten2012b,Witten2012c}
\begin{equation}
\langle\!\langle\mscr{V}_1\cdots\mscr{V}_{\sn}\rangle\!\rangle_{\mcal{R}} \equiv \bigintsss \mscr{D}(X,\mcal{G};B,C)\,\,\mscr{V}_1\cdots\mscr{V}_{\sn}\,\,e^{-S},
\end{equation}
where $\mcal{R}$ is a fixed genus-$\g$ surface with $\n=(\ns,\ra)$ punctures, $X$, $B$, $C$ are superfields of the matter and ghost superconformal field theories respectively, $\mscr{V}_i$s are appropriate vertex operators associated to the external states, and $S$ is the action describing the worldsheet theory (the gauge-fixed version of the RNS-superstring action or any other mater-ghost superconformal field theory with appropriate supersymmetry and central charges). Therefore, $\langle\!\langle\mscr{V}_1\cdots\mscr{V}_{\sn}\rangle\!\rangle_{\mcal{R}}$ can be integrated over $\Gamma$ to give $\mcal{A}_{\bf String}(1,\cdots,\n)$, the genus-$\g$ stringy contribution to the scattering amplitude of $\n$ arbitrary external states
\begin{equation}
\mcal{A}_{\bf String}(1,\cdots,\n)=\bigintsss_{\Gamma}\langle\!\langle\mscr{V}_1\cdots\mscr{V}_{\sn}\rangle\!\rangle_{\mcal{R}}.
\end{equation}

\newl After defining the pseudoforms, the next step is to define operations on them. Some of these operations may change the form degree or the picture number of pseudoform. In particular, one is interested to define an operator that changes the picture number of a generic pseudoform. To define such an operator, consider an odd vector field $\widehat{V}$, even tangent vectors $\{v_1,\cdots,v_p\}$ and odd tangent vectors $\{\widehat{v}_1,\cdots,\widehat{v}_q\}$, which collectively denoted as $\boldsymbol{v}$, on a $(m|n)$-dimensional supermanifold. We can define the analog of the Lie derivative along $\widehat{V}$ on a $(p|q)$ form $\boldsymbol{\omega}$
\begin{equation}
[\mscr{L}_{\widehat{V}}\boldsymbol{\omega}](v_1,\cdots,v_p|\widehat{v}_1,\cdots,\widehat{v}_{q})\equiv \widehat{V}^I\frac{\partial \boldsymbol{\omega}(\boldsymbol{v})}{\partial x^I}+(-1)^{[A]}v^I_A \frac{\partial \widehat{V}^J}{\partial x^I}\frac{\partial\boldsymbol{\omega}(\boldsymbol{v})}{\partial v^J_A},
\end{equation}
where $A=1,\cdots,(p|q)$ and $I,J=1,\cdots,(m|n)$, the dimension of the supermanifold. $[\,\,\cdot\,\,]$ denotes the statistics of the quantity inside it. This operation does not change the form degree or the picture number of pseudoforms. Next, we define an operation that changes the form degree of a pseudoform but preserves its picture number. It can thus be interpreted as the supergeometry analog of the de Rham differential. It is given by
\begin{alignat}{2}
[\mathbf{d}\boldsymbol{\omega}](v_1,\cdots,v_{p+1}|\widehat{v}_1,\cdots,\widehat{v}_{q})&\equiv (-1)^pv_{p+1}^I\frac{\partial\boldsymbol{\omega}(\boldsymbol{v})}{\partial x^I}+(-1)^{[I][A]}v_A^J\frac{\partial^2\boldsymbol{\omega}(\boldsymbol{v})}{\partial x^J\partial v^I_A},
\end{alignat}
$\mathbf{d}$ maps $(p|q)$-forms to $(p+1|q)$-forms. Finally, we define an operation that preserves the form number of pseudoforms but changes their picture number. It is given by 
\begin{equation}
[\delta(\boldsymbol{i}_{\widehat{V}})\boldsymbol{\omega}](v_1,\cdots,v_p|\widehat{v}_1,\cdots,\widehat{v}_{q-1})\equiv (-1)^p\boldsymbol{\omega}(v_1,\cdots,v_p|\widehat{V},\widehat{v}_1,\cdots,\widehat{v}_{q-1}).
\end{equation}
$\delta(\boldsymbol{i}_{\widehat{V}})$ maps a $(p|q)$ form to a $(p|q-1)$ form. Note that $\boldsymbol{i}_{\widehat{V}}$ is the usual interior product with the vector $\widehat{V}$. It might be tempting to identify $\delta(\boldsymbol{i}_{\widehat{V}})$ with the picture-changing operation in the superstring theory. However, it turns out that the relevant operation is a little more subtle than this operation. One defines the so-called {\it picture-changing operator} on the space of pseudoforms as follows
\begin{equation}\label{eq:PCO Belopolsky}
\boldsymbol{\mcal{X}}_{\widehat{V}}\equiv \frac{1}{2}\left(\delta(\boldsymbol{i}_{\widehat{V}})\mscr{L}_{\widehat{V}}+\mscr{L}_{\widehat{V}}\delta(\boldsymbol{i}_{\widehat{V}})\right)=\{\mathbf{d},\Theta(\boldsymbol{i}_{\widehat{V}})\},
\end{equation}
where $\Theta(x)$ is the step function. In the last identity, the fact that $\mscr{L}_{\widehat{V}}=\{\mathbf{d},\boldsymbol{i}_{\widehat{V}}\}$ has been used. It is clear from the first equality that $\boldsymbol{\mcal{X}}_{\widehat{V}}$ maps $(p|q)$ forms to $(p|q-1)$ forms, and indeed changes the picture number of a form on a supermanifold. Considering these operations, we can give the complex of $(p|q)$-forms on a supermanifold in figure \ref{fig:the complex of form on a supermanifold}.
\begin{figure}[]\centering 
	\begin{tikzcd}
	& 0\arrow[r, "\mbf{d}"]\arrow[d, "\mbs{\Xi}" left,shift left=-1] & \Omega^{(0|0)} \arrow[d, "\mbs{\Xi}" left,shift left=-1]\arrow[r,"\mbf{d}"] & \cdots\arrow[d, "\mbs{\Xi}" left,shift left=-1]\arrow[r,"d"] & \Omega^{(m|0)}\arrow[d, "\mbs{\Xi}" left,shift left=-1]\arrow[r,"\mbf{d}"] & \Omega^{(m+1|0)} \arrow[d, "\mbs{\Xi}" left,shift left=-1]\arrow[r,"\mbf{d}"] & \cdots 
	\\
	\cdots	\arrow[r, "\mbf{d}"] & \vdots \arrow[d, "\mbs{\Xi}" left,shift left=-1]\arrow[u, "\mbs{\chi}" right,shift right=1] \arrow[r,"\mbf{d}"] & \vdots \arrow[d, "\mbs{\Xi}" left,shift left=-1]\arrow[u, shift right=1, "\mbs{\chi}" right] \arrow[r,"\mbf{d}"] & \cdots\arrow[d, "\mbs{\Xi}" left,shift left=-1]\arrow[u, "\mbs{\chi}" right,shift right=1]\arrow[r,"\mbf{d}"] & \vdots \arrow[d, "\mbs{\Xi}" left,shift left=-1]\arrow[u, "\mbs{\chi}" right,shift right=1]\arrow[r,"\mbf{d}"] \arrow[r,"\mbf{d}"] & \vdots \arrow[d, "\mbs{\Xi}" left,shift left=-1]\arrow[u, "\mbs{\chi}" right,shift right=1]\arrow[r,"\mbf{d}"] & \cdots 
	\\
	\cdots\arrow[r, "\mbf{d}"] & \Omega^{(-1|s)} \arrow[d, "\mbs{\Xi}" left,shift left=-1]\arrow[u, "\mbs{\chi}" right,shift right=1]\arrow[r,"\mbf{d}"] & \Omega^{(0|s)}\arrow[d, "\mbs{\Xi}" left,shift left=-1]\arrow[u, "\mbs{\chi}" right,shift right=1]\arrow[r,"\mbf{d}"] & \cdots \arrow[d, "\mbs{\Xi}" left,shift left=-1]\arrow[u, "\mbs{\chi}" right,shift right=1]\arrow[r,"\mbf{d}"] & \Omega^{(m|s)}\arrow[d, "\mbs{\Xi}" left,shift left=-1]\arrow[u, "\mbs{\chi}" right,shift right=1]\arrow[r,"\mbf{d}"] & \Omega^{(m+1|s)} \arrow[u, "\mbs{\chi}" right,shift right=1]\arrow[d, "\mbs{\Xi}" left,shift left=-1]\arrow[r,"\mbf{d}"] & \cdots 
	\\
	\cdots	\arrow[r, "\mbf{d}"] & \vdots \arrow[d, "\mbs{\Xi}" left,shift left=-1]\arrow[u, "\mbs{\chi}" right,shift right=1] \arrow[r,"\mbf{d}"] & \vdots \arrow[d, "\mbs{\Xi}" left,shift left=-1]\arrow[u, shift right=1, "\mbs{\chi}" right] \arrow[r,"\mbf{d}"] & \cdots\arrow[d, "\mbs{\Xi}" left,shift left=-1]\arrow[u, "\mbs{\chi}" right,shift right=1]\arrow[r,"\mbf{d}"] & \vdots \arrow[d, "\mbs{\Xi}" left,shift left=-1]\arrow[u, "\mbs{\chi}" right,shift right=1]\arrow[r,"\mbf{d}"] & \vdots \arrow[d, "\mbs{\Xi}" left,shift left=-1]\arrow[u, "\mbs{\chi}" right,shift right=1]\arrow[r,"\mbf{d}"] & \cdots
	\\
	\cdots\arrow[r, "\mbf{d}"] & \Omega^{(-1|n)} \arrow[u, "\mbs{\chi}" right,shift right=1]\arrow[r,"\mbf{d}"] & \Omega^{(0|n)}\arrow[u, "\mbs{\chi}" right,shift right=1]\arrow[r,"\mbf{d}"] & \cdots \arrow[u, "\mbs{\chi}" right,shift right=1]\arrow[r,"\mbf{d}"] & \Omega^{(m|n)}\arrow[u, "\mbs{\chi}" right,shift right=1]\arrow[r,"\mbf{d}"] & 0 \arrow[u, "\mbs{\chi}" right,shift right=1].
	\end{tikzcd}
	\caption{The complex of forms on a $(m|n)$-dimensional supermanifold. The first row is the complex of superforms. the intermediate complexes are complexes of pseudoforms with various picture numbers. The last row is the complex of integral forms, which as it is clear, there is a notion of top-degree integral forms. We have denoted two natural operations on forms on a supermanifold: 1) the picture-changing operation $\mbs{\chi}$, which acts from bottom to top and decreases the picture-number of forms by one, and 2) the picture-changing operation $\mbs{\Xi}$, which acts from top to bottom and increases the picture-number of forms by one. The space $\Omega^{(m|n)}$ is the space of top-degree integral forms. The elements of this space can naturally be integrated over an $(m|n)$-dimensional supermanifold.} \label{fig:the complex of form on a supermanifold}
\end{figure}

\newl From \eqref{eq:PCO Belopolsky}, and the fact that de Rham differential $\mathbf{d}$ on the superstack can be interpreted as the superstring BRST operator\footnote{The precise statement is as follows. Let $\mathbf{d}$ denotes the de Rham differential acting on the superstack, and $\mathbf{\Omega}(\Psi_1,\cdots,\Psi_{\vsn})$ is the off-shell superstring measure on the superstack for scattering of $\n$ external off-shell states $\{\Psi_1,\cdots,\Psi_{\vsn}\}$. Then, $$\mathbf{d}\mathbf{\Omega}_{p-1}(\Psi_1,\cdots,\Psi_{\vsn})=\sum\limits_{a=1}^{\vsn}\mathbf{\Omega}_{p}(\Psi_1,\cdots,\mscr{Q}\Psi_a,\cdots,\Psi_{\vsn}).$$
	$p$ and $p-1$ denotes the form degree of the integral form $\mathbf{\Omega}$. For more details see section 3 of \cite{Belopolsky1997a}.}, and the fact that $\Theta(\beta(z))$ can be identified with $\xi(z)$, it seems reasonable to guess that this operator is the same as the superstring picture-changing operation. This turns out to be the case \cite{Belopolsky1997c}. We will explain this fact in the next section.

\subsection{The Unified Perspective: From Supermanifolds to Superstrings}

In this section, we explain the relation between the Belopolsky's picture-changing operator \eqref{eq:PCO Belopolsky} and the superstring picture-changing operator. We will not explain the details of this relationship since it requires the explanation of the relation between semi-infinite forms defined over a suitable algebra and the ghost sector of the superstring theory. This section is just for the completion of discussion and we just quote the results from \cite{Belopolsky1997b,Belopolsky1997c} where the relation between FSM-VV approaches and the semi-infinite forms has been elucidated.

\newl It turns out that the space of semi-infinite forms on the algebra of superconformal vector fields on the supercircle $S^{1|1}$ can be identified with the ghost sector of superstring theory. Let us denote the even and odd vector fields concentrated at $z$ by $l(z)$ and $g(z)$. Then, superconformal ghosts can be identified as follows
\begin{alignat}{2}\label{eq:ghost-semiforms identification}
b(z)&\mapsto \boldsymbol{i}_{l(z)},   \qquad&\qquad c(z)&\mapsto\boldsymbol{e}_{l^\vee(z)}, \nonumber
\\
\beta(z)&\mapsto \boldsymbol{i}_{g(z)},   \qquad&\qquad \gamma(z)&\mapsto\boldsymbol{e}_{g^\vee(z)}. 
\end{alignat} 
$v^\vee$ is the covector dual to the tangent vector $v$. $\boldsymbol{e}_{v^\vee}$ is the operation of exterior product defined using $v^\vee$. For a $(p|q)$-form $\boldsymbol{\omega}$, it is defined as follows
\begin{equation}
[\boldsymbol{e}_{v^\vee}\boldsymbol{\omega}](v_1,\cdots,v_{p+1}|\widehat{v}_1,\cdots,\widehat{v}_q)\equiv \left(-1\right)^p\left[v^\vee(v_{p+1})-(-1)^{[v^\vee][A]}v^\vee(v_A)v_{p+1}^I\frac{\partial}{\partial v_A^I}\right]\boldsymbol{\omega}.
\end{equation}
Comparing \eqref{eq:PCO Belopolsky} and \eqref{eq:ghost-semiforms identification} shows that we can identify $\delta(\boldsymbol{i}_{\widehat{V}})$ with $\delta(\beta(z))$. The odd vector fields that appears in the FSM approach is thus $g(z)$, the odd generator of the super-Virasoro algebra. To complete the identification, we need to identify $\mscr{L}_{\widehat{V}}$, the operation of Lie derivative along the odd vector field $\widehat{V}=g(z)$, with some conformal field in the ghost sector of superstring theory. It turns out that the Lie derivative $\mscr{L}_{g(z)}$ can be identified with $T_F(z)$. Therefore, the picture-changing operator \eqref{eq:PCO Belopolsky} can be identified with the following conformal field
\begin{equation}\label{eq:PCO identification}
\boldsymbol{\mcal{X}}_{\widehat{V}}\mapsto\boldsymbol{\mcal{X}}_{g(z)}\equiv \frac{1}{2}\left(\delta(\beta(z))T_F(z)+T_F(z)\delta(\beta(z))\right)=\{\mscr{Q},\Theta(\beta(z))\}. 
\end{equation}  
Since $\delta(\beta(z))$ is a conformal field with conformal dimension $-\frac{3}{2}$\footnote{Note that $\beta(z)$ is a conformal field with conformal dimension $+\frac{3}{2}$.}, $\boldsymbol{\mcal{X}}_{g(z)}$ has conformal dimension zero. The final identity is the FSM \eqref{eq:PCO FSM} and VV \eqref{eq:PCO VV III} expressions for the picture-changing operator. Note that this is a formal identification. The picture-changing operator \eqref{eq:PCO FSM} acting on a vertex operator {\it increases} the picture-number of the associated vertex operator. However, \eqref{eq:PCO Belopolsky} acts on forms on a supermanifold and {\it decreases} their picture number. 

\newl In summary, we found that various interpretations of the picture-changing operation, the FSM's interpretation (i.e. an operation that changes the picture number of the vertex operator and maps the superstring BRST cohomology with a particular picture number to another superstring BRST cohomology with a different picture number), VV's interpretation (i.e. the combined operation of the absorption of a zero-mode of the $\beta(z)$ field and the integration over the corresponding odd moduli of the superstack of $\mscr{N}=1$ super-Riemann surfaces), and the Belopolsky's interpretation (i.e. an operation that acts on the space of pseudoforms and changes their picture number) match. This concludes our discussion of the picture-changing operation. 

\section{Moduli Stacks of Stable Spin Curves}\label{app:compactification of the moduli stack of spin curves}

In this appendix, we review some details on the construction of the moduli stack of punctured spin curves over $\Spec \mathbb Z[1/2]$, following \cite{Cornalba1989a,Jarvis1998a,Jarvis2000a}. In \cite{Jarvis1998a,Jarvis2000a}, Jarvis discusses higher-spin structures and they are more complicated than $r=2$ case, i.e. ordinary spin structures. For convenience, we focus on $r=2$ since only ordinary spin structures are relevant for string theory. In fact, all four versions of spin structures in \cite{Jarvis1998a,Jarvis2000a}, i.e. quasi-spin $\text{QSpin}_{r,{\sg}}$, spin $\overline{\text{Spin}}_{r,{\sg}}$, pure spin $\text{Pure}_{r,{\sg}}$, and finally the $\overline{\mscr{S}}^{1/r}_{\sg}$, coincide when $r=2$:$$\text{QSpin}_{2,{\sg}}=\overline{\text{Spin}}_{2,{\sg}}=\text{Pure}_{2,{\sg}}=\overline{\mscr{S}}^{1/2}_{\sg}.$$

\subsection{Definitions and Examples}
We begin by recalling some definition. We begin with the definition of spin curve with and without punctures.

\begin{definition}[{\bf Spin Curve}]\label{Defn_SpinCurve}
	A spin curve of genus $\g$ is a triple $(\mscr{C},\mscr E,b)$, where $\mscr{C}$ is a stable curve of genus $\g$, $\mscr E$ is a rank one torsion-free sheaf A coherent $\mcal{O}_X$-module $\mscr{F}$ is called a torsion-free or relatively-torsion-free sheaf on a family of stable or semi-stable curves $f:X\longrightarrow S$ over a scheme $S$ if it is a flat module over $S$ in such a way that on $\mscr{C}$ of degree $\g-1$ with a homomorphism $b:\mscr{E}^{\otimes 2}\to \omega _{\mscr{C}}$ which is an isomorphism on the smooth locus of $\mscr{C}$. $\omega _\mscr{C}$ is the dualizing sheaf on $\mscr{C}$.
\end{definition}

\begin{definition}[\textbf{Family of Spin Curves}]\label{Defn_RelSpinCurve}
	A family of spin curves of genus $\g$ over the base scheme $T$ is a triple $(\mscr{C},\mscr E,b)$, where $\mscr{C}\to T$ is a relative stable curve, $\mscr E$ is a finite-presented sheaf on $\mscr{C}$ which is \textbf{flat over $T$}, with a morphism $b:\mscr E^{\otimes 2}\to \omega _{\mscr{C}/T}$, such that for every geometric point $t$ of $T$, $(\mscr{C}_t,\mscr{E}_t,b_t)$ is a stable curve. $\omega _{\mscr{C}/T}$ is the relative dualizing sheaf on $\mscr{C}$.
\end{definition}

\begin{remark}
	Since relative stable curves are relative Gorenstein, dualizing complex $\omega_{\mscr{C}/T}^{\bullet}$ is quasi-isomorphic to a line bundle $\omega _{\mscr{C}/T}$, we call it the relative dualizing sheaf.
\end{remark}

\begin{remark}Let $X$ be a topological space and $S$ be a scheme. Suppose that $X\longrightarrow S$ is a locally finite-presented morphism between schemes. We call a finite-presented sheaf $\mscr F$ on $X$ \textbf{relatively torsion-free} if $\mscr F$ is flat over $S$ and $\mscr F_s$ is torsion-free on $X_s$ for every geometric point $s$ of $S$, i.e. the associated primes of the induced $\mscr F_s$ on each fiber $X_s\equiv X\underset{S}{\times}\Spec \mbb{K}(s)$, for some algebraically-closed field $\mbb{K}$ and all $s\in S$, do have height zero\footnote{Let $M$ be an $R$-module over a commutative ring $R$. An associated prime ideal of $M$ is a prime ideal $I_p$ that is the annihilator of a nonzero element $\mfk{m}\in M$. The height of a prime ideal is the maximal length of a chain of prime ideals contained in $I_p$, which is also called the Krull dimension of the localization $R_{I_p}$.}. In {\normalfont Definition \ref{Defn_RelSpinCurve}}, $\mscr E$ is relatively torsion-free. 
\end{remark}
A useful fact is the following:
\begin{lem}\label{Dual_of_TorsionFree}
	Suppose that $X\longrightarrow T$ is a relative stable curve. If $\mscr{E}\in \QCoh (\mathcal O_X)$, where $\QCoh(\mcal{O}_X)$ denotes the category of quasi-coherent sheaves on $X$, whose structure sheaf is $\mcal{O}_X$\footnote{A local-ringed space $(X,\mcal{O}_X)$ is a topological space $X$ together with a sheaf of rings $\mcal{O}_X$, called the structure sheaf, on $X$.}, is relatively torsion free, and $\mscr F\in \QCoh (\mathcal O_X)$ is locally-free of finite rank, then $\underline{\RHom}^{\bullet}(\mscr{E},\mathcal F)$\footnote{Here,  $\underline{\RHom}^{\bullet}$ denotes the derived Hom functor and the $\underline{\RHom}$ means it is an object in the derived category of sheaves, not just literally taking the $\Ext ^{\bullet}$}, is quasi-isomorphic to $\underline{\Hom}(\mscr{E},\mscr F)$\footnote{Here, $\Hom$ denotes the Hom functor and $\underline{\Hom}$ denotes the sheaf of Homs, not just the global Hom between two sheaves.}, which is relatively torsion free and its construction commutes with arbitrary base change.
\end{lem}

\begin{proof}There is a canonical morphism $\underline{\Hom}(\mscr{E},\mscr F)\to \underline{\RHom}^{\bullet}(\mscr{E},\mscr F)$ defined by truncation, and it's a quasi-isomorphism if it is locally on $X$, so we can assume that both $X$ and $T$ are affine. Our strategy is to prove it first for Noetherian $T$ then show that it remains true after base change.\\
	
	\textbf{When $T$ is a spectrum of a field}, i.e. $T=\Spec \mbb{K}$ for some field $\mbb{K}$, then $X$ is a stable curve over $\mbb{K}$. Since $X$ is Gorenstein of dimension 1 and $\mscr F$ is locally free, injective dimension of $\mscr F$ is 1, hence $\underline{\RHom}^{\bullet}(\mscr{E},\mscr F)$ is concentrated in degree 0 and 1. Moreover, $\mscr{E}$ is torsion free thus torsionless, so locally $\mscr{E}$ can be embedded into a free sheaf 
	\begin{equation*}
	\mscr{E}|_U\hookrightarrow \mathcal O_U^{\oplus n},
	\end{equation*}
	which implies that there is a surjective morphism
	\begin{equation*}
	\Ext^1(\mathcal O_U^{\oplus n},\mscr F|_U)\twoheadrightarrow \Ext^1(\mscr{E}|_U,\mscr F|_U).
	\end{equation*}
	Hence $\Ext^1(\mscr{E}|_U,\mscr F|_U)=0$ and
	\begin{equation*}
	\underline{\Hom}(\mscr{E},\mscr F)\cong \underline{\RHom}^{\bullet}(\mscr{E},\mscr F).
	\end{equation*}
	Moreover $\underline{\Hom}(\mscr{E},\mscr F)$ is torsion free (in fact reflexive).\\
	
	\textbf{For general Noetherian $T$}, take any geometric point $t$ of $T$, denote the natural morphism $X_t\to X$ by $i_t$. For any integer $N>1$, consider a complex of finite free $\mathcal O_{X}$ modules $$\cdots \to 0\to \mscr E^{-N}\to \mscr E^{-N+1}\to \cdots \to \mscr E^{0}$$with trivial cohomologies at degree $\{-N+1,\cdots, -1\}$ and $H^0$ being $\mscr E$. Recall that we assume that $X$ is affine, hence there exists such complex. Note that there is a distinguished triangle in the derived category:
	\begin{center}
		\begin{tikzcd}
		\mscr K^{-N}[N] \arrow[r] & \mscr E^{\bullet} \arrow[r]& \mscr E \arrow[r,"+1"] &,
		\end{tikzcd}
	\end{center}
	where $\mscr K^{-N}$ is the kernel of $\mscr E^{-N}\to \mscr E^{-N+1}$. Hence we have a quasi-isomorphism 
	\begin{equation}
	\tau _{\le N-1}\underline{\RHom}^{\bullet}(\mscr{E},\mscr F)\cong \tau _{\le N-1}\underline{\RHom}^{\bullet}(\mscr{E}^{\bullet},\mscr F),
	\end{equation}
	where $\tau_{\le N-1}$ is the truncation functor. Moreover this quasi-isomorphism is stable under arbitrary base change $T'\to T$, since derived base change of $\mscr K^{-N}[N]$ has nonzero cohomologies only in degree below $-N+1$. Now take the base change $i_t:X_t\to X$, we have 
	\begin{equation}
	Li_t^* \underline{\RHom}^{\bullet}(\mscr{E}^{\bullet},\mscr F)\cong \underline{\RHom}^{\bullet}(i_t^*\mscr{E}^{\bullet},i_t^*\mscr F),
	\end{equation}
	since both $\mscr E^{\bullet}$, $\mscr F$, and $\underline{\RHom}^{\bullet}(\mscr{E}^{\bullet},\mscr F)$ are represented by complex of free modules. The case of $T$ being $\Spec \mbb{K}(t)$ implies that $\tau_{\le N-1}\underline{\RHom}^{\bullet}(i_t^*\mscr{E}^{\bullet},i_t^*\mscr F)$ is concentrated in degree zero and torsion free. Since $t$ can be arbitrary, we see that the truncation of perfect complex, $\tau_{\le N-1}\underline{\RHom}^{\bullet}(\mscr{E}^{\bullet},\mscr F)$, is again perfect, concentrated in degree zero, and whose degree zero cohomology is relatively torsion free. Hence the same is true for $\tau_{\le N-1}\underline{\RHom}^{\bullet}(\mscr{E},\mscr F)$. Take $N\to \infty$ and we conclude the proof for arbitrary Noetherian $T$.\\
	
	\textbf{For general scheme $T$}, Note that we can take Noetherian approximation and assume that there is a Noetherian affine $X_0\to T_0$ with $\mscr E_0$ and $\mscr F_0$ satisfying conditions of this lemma, such that $X$ and $\mscr E$ and $\mscr F$ are pull-back of a morphism $\phi:T\to T_0$. We have shown a quasi-isomorphism $$\tau _{\le N-1}\underline{\RHom}^{\bullet}(\mscr{E}_0,\mscr F_0)\cong \tau _{\le N-1}\underline{\RHom}^{\bullet}(\mscr{E}_0^{\bullet},\mscr F_0)$$ in step 2 which is stable under arbitrary base change. We also know that the perfect complex $i_t^*\underline{\RHom}^{\bullet}(\mscr{E}_0,\mscr F_0)$ has non-trivial cohomologies only in degree $0$ and $N$, for arbitrary geometric point $t$, hence $\underline{\Ext}^{N}(\mscr{E}_0,\mscr F_0)$ is also flat over $T_0$. Hence we see that $$\tau _{\le N-1}\underline{\RHom}^{\bullet}(L\phi^*\mscr{E}_0^{\bullet},L\phi^*\mscr F_0)\cong L\phi^*\tau _{\le N-1}\underline{\RHom}^{\bullet}(\mscr{E}_0^{\bullet},\mscr F_0)$$is concentrated in degree zero and whose degree zero cohomology is relatively torsion free. Take $N\to \infty$ and we conclude the proof for arbitrary scheme $T$.
\end{proof}

\begin{definition}[\textbf{Isomorphism of Spin Curves}]\label{Defn_Isom_RelSpinCurve}
	An isomorphism between two families of spin curves over $T$ denoted by $(\mscr{C},\mscr{E},b)$ and $(\mscr{C}',\mscr{E}',b')$ is a pair $(\phi, \psi)$, where $\phi:\mscr{C}'\longrightarrow \mscr{C}$ is an isomorphism between spin curves, $\psi$ is an isomorphism between $\mscr{E}'$ and $\phi^* \mscr{E}$ which is compatible with $b$.
\end{definition}

\begin{example}
	For any family of spin curve such that $2$ is invertible in the base scheme, there always exists an automorphism $(\mbb{I},-1)$, for the identity map $\mbb{I}$.
\end{example}

\begin{definition}[\textbf{Stack of Spin Curves}]\label{Defn_Stack_SpinCurve}
	Fix a scheme $S\equiv \Spec \mathbb Z[1/2]$ and an integer $\g\ge 2$, define the following prestack in $(\Sch_S)_{\text{\' et}}$\footnote{$(\Sch_S)_{\text{\' et}}$ denotes the category of schemes over $S$ endowed with the \'etal topology.}
	\begin{align*}
	(T\in \Sch _S)\mapsto 
	\begin{cases}
	\text{\normalfont\bf Obj}: \text{Families of Spin curves of genus-}\g \text{ over T},\\
	\text{\normalfont\bf Mor}: \text{Isomorphisms between families of spin curves}.
	\end{cases}
	\end{align*}
	It's in fact a stack: every stable curve $\mscr C\to S$ is canonically polarized by dualizing sheaf $\omega_{\mscr C/S}$. We denote it by $\overline{\mscr{S}}_{\sg}$. 
\end{definition}

\noindent We also need the following generalization:

\begin{definition}[\textbf{Marked Spin Curve}]\label{Defn_Marked_RelSpinCurve}
	A family of punctured spin curves of genus $\g$ over the base scheme $T$, of puncturing type $\mathbf m\equiv(m_1,\cdots, m_{\sn})$ is a quadruple $(\mscr{C},\mbs{\mscr{D}},\mscr{E},b)$, where $\mscr{C}\longrightarrow T$ is a relative semi-stable curve, $\mathbf \mbs{\mscr{D}}\equiv (\mscr{D}_1,\cdots,\mscr{D}_{\sn})$ are $\n$ sections of $\mscr{C}\longrightarrow T$ with images lying in the smooth locus and making $(\mscr{C},\mbs{\mscr{D}})$ into an $\n$-punctured stable curve, $\mscr{E}$ is a rank-$1$ relatively torsion-free sheaf on $\mscr{C}$, with a morphism $b:\mscr{E}^{\otimes 2}\to \omega _{\mscr{C}/T}\left(\sum_i m_i\mscr{D}_i\right)$\footnote{$\omega _{\mscr{C}/T}\left(\sum_i m_i\mscr{D}_i\right)$ means that sections of the canonical sheaf have poles of order $m_i$ along the divisors $\mscr{D}_i$ defined by the puncture $\mfk{p}_i$. For more on this see \hyperlink{notation for stack of punctured spin curve}{the definition of the notation $\mscr{S}^{\mbf{m}}_{\sg,\sn}$} in the Introduction.}, such that for every geometric point $t$ of $T$
	\begin{enumerate}
		\item $\deg \mscr{E}_t=\g-1+\frac{1}{2}\sum\limits_{i=1}^{\sn} m_i$,
		\item $b_t$ is isomorphism on the smooth locus of $\mscr{C}_t$.
	\end{enumerate}
\end{definition}

It turns out that there is an equivalent definition which is sometimes more convenient:
\begin{lem}\label{Equiv_a_b}
	Suppose that $(\mscr{C},\mbs{\mscr{D}})/T$ is a relative $\n$-punctured stable curve. $\mscr{E}$ is a rank-1 relatively torsion free sheaf on $\mscr{C}$, of degree $\g-1+\frac{1}{2}\sum_i m_i$, then there is a one-to-one correspondence between 
	\begin{enumerate}
		\item Homomorphisms $b:\mscr{E}^{\otimes 2}\to \omega _{\mscr{C}/T}(\sum_i m_i\mscr{D}_i)$ such that $b_t$ is isomorphism on the smooth locus of $\mscr{C}_t$ for any geometric point $t$ of $T$;
		\item Isomorphisms $a:\mscr{E}\to \underline{\Hom}(\mscr{E},\omega _{\mscr{C}/T}(\sum_i m_i\mscr{D}_i))$.
	\end{enumerate}
\end{lem}

\begin{proof}
	First of all, according to Lemma \ref{Dual_of_TorsionFree}, $\underline{\Hom}(\mscr{E},\omega _{\mscr{C}/T}(\sum_i m_i\mscr{D}_i))$ is a rank-1 relatively torsion-free sheaf on $\mscr{C}$, and we have $$\deg \underline{\Hom}\left(\mscr{E}_t,\omega _{\mscr{C}_t/t}\left(\sum_i m_i\mscr{D}_i\right)\right)=\deg \omega _{\mscr{C}_t/t}\left(\sum_i m_i\mscr{D}_i\right)-\deg \mscr{E}_t=\g-1+\frac{1}{2}\sum\limits_{i=1}^{\sn} m_i=\deg \mscr{E}_t.$$
	By adjunction, there is a one to one correspondence between
	\begin{enumerate}
		\item Homomorphisms $b:\mscr{E}^{\otimes 2}\to \omega _{\mscr{C}/T}(\sum_i m_i\mscr{D}_i)$ such that $b_t$ is isomorphism on the smooth locus of $\mscr{C}_t$ for any geometric point $t$ of $T$;
		\item Homomorphisms $c:\mscr{E}\to \underline{\Hom}\left(\mscr{E},\omega _{C/T}\left(\sum_i m_i\mscr{D}_i\right)\right)$ such that $c_t$ is isomorphism on the smooth locus of $\mscr{C}_t$ for any geometric point $t$ of $T$.
	\end{enumerate}
	Since $c_t$ is an isomorphism on the smooth locus of $\mscr{C}_t$, we see that $\ker c_t=0$, this follows from the fact that a subsheaf of torsion free sheaf is torsion free, hence if $\ker c_t$ is not empty, $\Supp (\ker c_t)$ has dimension 1 so it must have nontrivial intersection with the smooth locus, which is absurd. The equation $\deg \underline{\Hom}(\mscr{E}_t,\omega _{\mscr{C}_t/t}(\sum_i m_i\mscr{D}_i))=\deg \mscr{E}_t$ implies that any homomorphism $c_t$ which is isomorphism on smooth locus is actually an isomorphism, otherwise $\coker (c_t)\neq 0$ and
	\begin{align*}
	\deg \underline{\Hom}\left(\mscr{E}_t,\omega _{\mscr{C}_t/t}\left(\sum_i m_i\mscr{D}_i\right)\right)&=\chi \left(\underline{\Hom}\left(\mscr{E}_t,\omega _{\mscr{C}_t/t}\left(\sum_i m_i\mscr{D}_i\right)\right)\right)-\chi \left(\mathcal O_{C_t}\right)\\
	&=\chi\left(\mscr{E}_t\right)-\chi\left(\mathcal O_{C_t}\right)+\dim _{k(t)}\coker (c_t)>\deg \mscr{E}_t,
	\end{align*}
	where $\chi$ denotes the Euler number. This is a contradiction. Hence $c$ is forced to be an isomorphism.
\end{proof}

\begin{definition}[\textbf{Stack of Punctured Spin Curves}]\label{Defn_Stack_Marked_SpinCurve}
	Fix a scheme $S\equiv \Spec \mathbb Z[1/2]$, a pair of natural numbers $(\g,\n)$ such that $2\g-2+\n> 0$, and an $\n$-tuple $\mathbf m=(m_1,m_2,\cdots ,m_{\sn})$ such that $\sum_i m_i$ is even, define the following prestack in $(\Sch_S)_{\text{\' et}}$:
	\begin{align*}
	(T\in \Sch _S)\mapsto 
	\begin{cases}
	\textbf{\normalfont Obj}: \text{Families of spin curves of genus g over T of puncturing type }{\mathbf m},\\
	\textbf{\normalfont Mor}: \text{Isomorphisms between families of maked spin curves}.
	\end{cases}
	\end{align*}
	It's in fact a stack: every marked stable curve $\mscr C\to S$ is canonically polarized by $\omega_{\mscr C/S}(\sum_iD_i)$. We denote it by $\overline{\mscr{S}}_{\sg,\sn}^{\mbf{m}}$.
\end{definition}

\begin{remark}\label{Relation_Different_m}
	There is an obvious relation between these $\bSgnm${\normalfont :} shifting $\mscr{E}$ to $\mscr{E}\otimes \mathcal O_{\mscr{C}}(\mscr{D}_i)$, we get an isomorphism between stacks{\normalfont :}
	\begin{equation*}
	\overline{\mscr{S}}^{\mathbf {m}}_{\sg,\sn},\cong \overline{\mscr{S}}^{\mathbf {m'}}_{\sg,\sn},
	\end{equation*}
	where $\mathbf{m'}\equiv \mathbf{m}+(0,\cdots,0,2,0,\cdots,0)$.
\end{remark}

\begin{remark}\label{Forgetful_Map_to_Moduli_of_Curves}
	There is an obvious morphism from $\overline{\mscr{S}}^{\mathbf {m}}_{\sg,\sn}$ to $\overline{\mscr{M}}_{\sg,\sn}$ by forgetting the spin structure. 
\end{remark}

\noindent The main result of this appendix is the following

\begin{thr}\label{Main}
	Suppose that $2\g-2+n>0$. Then, $\overline{\mscr{S}}^{\mathbf {m}}_{\sg,\sn}$ is a smooth proper Deligne-Mumford stack over $\Spec \mathbb Z[1/2]$ and contains an open dense substack $\mscr{S}^{\mathbf {m}}_{\sg,\sn}$. $\overline{\mscr{S}}^{\mathbf {m}}_{\sg,\sn}$ is flat and quasi-finite over $\overline{\mscr{M}}_{\sg,\sn}$. Moreover the number of connected components of geometric fibers of $\overline{\mscr{S}}^{\mathbf {m}}_{\sg,\sn}\to \Spec \mathbb Z[1/2]$ is $\gcd (2,m_1,m_2,\cdots,m_n)$.
\end{thr}
In the following sections, we discuss the proof of this Theorem. The different parts of the Theorem is proven in Corollary \ref{cor:compactified spin moduli is a separated DM stack}, Corollary \ref{cor:smoothness of compactified spin moduli}, and Proposition \ref{prop:the number of connected components of compactified spin moduli}. 
\subsection{Rank-One Torsion-Free Sheaves}
Recall the classical fact of local classification of torsion-free sheaves on semistable curves, due to Faltings:

\begin{thr}[\textbf{Theorem 3.5 of} \cite{Faltings1996a}, \textbf{The Rank-One Case}]\label{Faltings}
	Let $R$ be a complete local Noetherian ring with the maximal ideal $\mathfrak m$, then any rank-one relative torsion-free sheaf which is not free over the ring $A=R\dol x,y\dor /(xy-\pi)$, $\pi\in \mathfrak m$ is isomorphic to $E(p,q)$, for $p,q\in \mathfrak m$ and $pq=\pi$, defined as the image of
	\begin{align*}
	\alpha=
	\begin{bmatrix}
	x       & q  \\
	p       & y 
	\end{bmatrix}
	: A^{\oplus 2}\to A^{\oplus 2}.
	\end{align*}
	Moreover $E(p,q)\cong E(p',q')$ if and only if $\exists u\in R^{\times}$ such that $p'=up$, $q'=q/u$.
\end{thr}

\begin{remark}
	This theorem is also true for $R$ being a Henselian local G-ring, by Artin's Approximation Theorem.
\end{remark}

In fact the module $E(p,q)$ in the Theorem {\normalfont \ref{Faltings}} has a nice geometric description. Consider the projective scheme 
\begin{align}\label{Blow_Up_Equation}
X=\Proj A[z,w]/(xz-pw,qz-yw).
\end{align}
We can check that $X$ is flat over $R$: on the chart $D_+(w)$, $X$ is given by $$\Spec A[T]/(qT-y,p-xT)=\Spec R[x,y,T]^{\wedge}/(qT-y,p-xT).$$Since $\{xT,y\}$ is a regular sequence on the special fiber, $R[x,y,T]/(qT-y,p-xT)$ is flat over $R$ by local criterion for flatness, so its completion  is also flat over $R$. Note that the special fiber of $X\to \Spec R$ is a $\mathbb P^1$-arc connecting closed points of $\Spec \mbb{K}[\![x]\!]$ and $\Spec \mbb{K}[\![y]\!]$ with nodal singularity, where $\mbb{K}$ is the residue field of $R$, while $f:X\to \Spec A$ on the open locus {\normalfont (}possibly empty{\normalfont )} $\Spec R-V(p,q)=D(p)\cup D(q)$ is an isomorphism.\\

Now we can push forward the line bundle $\mathcal O(1)$ on $X$ to $\Spec A$, which can be computed explicitly: $$f_*\mathcal O(1)=\ker \left(A[T]/(qT-y,p-xT)\oplus A[S]/(q-yS,pS-x)\to A[T,T^{-1}]/(qT-y,p-xT)\right),$$where the map is given by embedding the first direct summand naturally and send $S$ to $T^{-1}$ meanwhile multiply the second direct summand by $T$. Notice that $A[T]/(qT-y,p-xT)$ has a free resolution as $A[T]$-module: 
\begin{center}
	\begin{tikzcd}
	\cdots\arrow[r] &A[T]^{\oplus 2}\arrow[r,"\alpha"]& A[T]^{\oplus 2}\arrow[r,"\beta"] &A[T]^{\oplus 2}\arrow[r,"\alpha"] & A[T]^{\oplus 2}\arrow[r,"\beta_1"]& A[T],
	\end{tikzcd}
\end{center}
where 
\begin{align*}
\alpha=
\begin{bmatrix}
x       & q  \\
p       & y 
\end{bmatrix},\quad
\beta=
\begin{bmatrix}
y       & -q  \\
-p       & x 
\end{bmatrix},\quad
\beta_1=
\begin{bmatrix}
qT-y       & p-xT  
\end{bmatrix}.
\end{align*}
This is a resolution since it is a resolution modulo $\mathfrak{m}$ and $A[T]$ is flat over $R$. For the same reason, $A[S]/(q-yS,pS-x)$ has a free resolution as $A[S]$-module{\normalfont :}
\begin{center}
	\begin{tikzcd}
	\cdots\arrow[r] &A[S]^{\oplus 2}\arrow[r,"\alpha"]& A[S]^{\oplus 2}\arrow[r,"\beta"] &A[S]^{\oplus 2}\arrow[r,"\alpha"] & A[S]^{\oplus 2}\arrow[r,"\beta_2"]& A[S],
	\end{tikzcd}
\end{center}
where 
\begin{align*}
\beta_2=
\begin{bmatrix}
q-yS       & pS-x
\end{bmatrix},
\end{align*}
and $A[T,T^{-1}]/(qT-y,p-xT)$ has a free resolution as $A[T,T^{-1}]$-module{\normalfont :}
\begin{center}
	\begin{tikzcd}
	\cdots\arrow[r]& A[T,T^{-1}]^{\oplus 2}\arrow[r,"\beta"] &A[T,T^{-1}]^{\oplus 2}\arrow[r,"\alpha"] & A[T,T^{-1}]^{\oplus 2}\arrow[r,"\beta_3"]& A[T,T^{-1}],
	\end{tikzcd}
\end{center}
where 
\begin{align*}
\beta_3=
\begin{bmatrix}
qT-y       & p-xT  
\end{bmatrix},
\end{align*}
It's easy to see that the original map lifts to a map between complexes{\normalfont :}
\begin{center}
	\begin{tikzcd}
	\cdots\arrow[r] &A[T]^{\oplus 2}\oplus A[S]^{\oplus 2} \arrow[r,"\alpha\oplus \alpha"] \arrow[d,"\text{(Id,Id)}"]& A[T]^{\oplus 2}\oplus A[S]^{\oplus 2}\arrow[r,"\beta_1\oplus\beta_2"]\arrow[d,"\text{(Id,Id)}"]& A[T]\oplus A[S]\arrow[d,"\text{(Id,T)}"]\\
	\cdots\arrow[r] &A[T,T^{-1}]^{\oplus 2}\arrow[r,"\alpha"] & A[T,T^{-1}]^{\oplus 2}\arrow[r,"\beta_3"]& A[T,T^{-1}].
	\end{tikzcd}
\end{center}
Obviously, vertical arrows are surjective, so we end up with a chain of kernels{\normalfont :}
\begin{center}
	\begin{tikzcd}
	\cdots\arrow[r] & A^{\oplus 2}\arrow[r,"\beta"]& A^{\oplus 2}\arrow[r,"\alpha"] &A^{\oplus 2}\arrow[r,"\beta"] & A^{\oplus 2}.
	\end{tikzcd}
\end{center}
It is easy to see that this chain is exact except for the terminal place, and it gives rise to an exact sequence
\begin{center}
	\begin{tikzcd}
	A^{\oplus 2}\arrow[r,"\beta"]& A^{\oplus 2}\arrow[r] & f_*\mathcal O(1)\arrow[r] &0.
	\end{tikzcd}
\end{center} 
As a result, $f_*\mathcal O(1)\cong \im (\alpha)=E(p,q)$.

\begin{remark}
	The identification $f_*\mathcal O(1)\cong \coker (\beta)$ implies the following isomorphism of $A$-algebras:$$\bigoplus_{n\ge 0}\Sym ^n (E(p,q))\cong \bigoplus_{n\ge 0}\Sym ^n (A^{\oplus 2})/\text{\normalfont relations}=A[z,w]/(xz-pw,qz-yw),$$via identifying two generators of $A^{\oplus 2}$ with $z,w$, hence the relations are exactly $xz-pw=0$ and $qz-yw=0$. Consequently $$X\cong \mathbb P(E(p,q)).$$
\end{remark}

\subsection{Cornalba's Definition of Spin Curves}
Above local characterization of rank one torsion free sheaves indicates that the rank one torsion free sheaf defnining the spin structure should be a pushforward of a locally free sheaf on a certian "blow-up" of the family of curves. Indeed, suppose that $\pi:\mscr C\to T$ is a family of spin curves with puncturing divisors $\{\mscr D_1,\mscr D_2,\cdots,\mscr D_n\}$ and spin structure $\mscr E$, then $\widetilde{\pi}:\widetilde{\mscr C}:=\mathbb P(\mscr E)\to T$ is a flat family of nodal curves (not necessarily stable) over $T$, together with a distinct line bundle $\mathcal L=\mathcal O(1)$ such that $$f_*\mathcal L\cong \mscr E,$$where $f:\widetilde{\mscr C}\to \mscr C$ is the natural projection. Moreover the structure map $b:\mscr E^{\otimes 2}\to \omega _{\mscr C/T}(\sum_{i}m_i\mscr D_i)$ induces a homomorphism by adjunction
\begin{equation*}
\alpha:\mathcal L^{\otimes 2}\to \omega _{\widetilde{\mscr C}/T}\left(\sum_{i}m_iD_i\right),
\end{equation*}
which is an isomorphism outside of the locus of exceptional curves.

\newl In \cite{Cornalba1989a}, Cornalba defines a spin curve over $T$ as a  \textit{semi-stable} curve $\mscr C$ over $T$ with a line bundle $\mcal L$ and a homomorphism $\alpha:\mcal L^{\otimes 2}\to \omega _{{\mcal C}/T}$ such that for all geometric point $t$ of $T$, $b$ is an isomorphism outside of the rational components of $\mscr C_t$, and $\mcal L_t$ has degree one on each rational component. An isomorphism between spin curves $\mscr C,\mscr C'$ is defined to be a pair of isomorphisms $\sigma:\mscr C\to \mscr C'$ and $\varphi:\sigma^*\mcal L'\to \mcal L$ which is compatible with the canonical isomorphism of $\sigma^*\omega_{\mscr C'/T}\to \omega_{\mscr C/T}$. In this way we can define the stack of spin curves over a base scheme $S$ by 
\begin{align*}
(T\in \Sch _S)\mapsto 
\begin{cases}
\text{\normalfont\bf Obj}: \text{Spin curves of genus-}\g \text{ in Cornalba's sense over $T$},\\
\text{\normalfont\bf Mor}: \text{Isomorphisms between families of spin curves}.
\end{cases}
\end{align*}
It is in fact a stack: every spin curve $\mscr C$ over $T$ with spinor bundle $\mcal L$ is canonically polarized by $\omega_{\mscr C/S}\otimes\mcal L$. We denote it by $\overline{\mscr{S}}'_{\sg}$. By what we discussed above, there is a canonical morphism $p:\overline{\mscr{S}}_{\sg}\to \overline{\mscr{S}}'_{\sg}$ defined by$$(\mscr C,\mscr E)\mapsto(\mbb P(\mscr E),\mcal O(1)).$$
It is easy to see from the above {\it blowing-up} construction that over an algebraic closed field $\mbb K$, $p$ is an equivalence between categories. In fact, $p$ is an isomorphism between stacks, proven below.
\begin{lem}\label{Fully_Faithfulness}
	$p:\overline{\mscr{S}}_{\sg}\to \overline{\mscr{S}}'_{\sg}$ is fully faithful.
\end{lem}
\begin{proof}
	Fix a scheme $T$ and let $(\mscr C_1,\mscr E_1)$ and $(\mscr C_2,\mscr E_2)$ be two objects in $\overline{\mscr{S}}_{\sg}(T)$, and let $(\widetilde{\mscr C}_i,\mcal L_i)$ be $(\mbb P(\mscr E_i),\mcal O(1))$. We ought to show that $\Isom_T((\mscr C_1,\mscr E_1),(\mscr C_2,\mscr E_2))\to \Isom_T((\widetilde{\mscr C}_1,\mcal L_1),(\widetilde{\mscr C}_2,\mcal L_2))$ is bijective.
	
	\newl \textbf{Injectivity:} Suppose that there are two isomorphisms $\phi,\psi:(\mscr C_1,\mscr E_1)\to (\mscr C_2,\mscr E_2)$ which give rise to the same isomorphism $\varphi:(\widetilde{\mscr C}_1,\mcal L_1)\to(\widetilde{\mscr C}_2,\mcal L_2)$, then the $\phi$ and $\psi$ induces the same map for underlying topolgical space, because $\widetilde{\mscr C}_1\to {\mscr C}_1$ is surjective. Furthermore, $\phi^{\#}:\phi^{-1}\mcal O_{{\mscr C}_2}\to \mcal O_{{\mscr C}_1}$ is the same as $\pi_{1,*}(\varphi^{\#}):\pi_{1,*}\varphi^{-1}\mcal O_{\widetilde{\mscr C}_2}\to \pi_{1,*}\mcal O_{\widetilde{\mscr C}_1}$, and so is $\psi^{\#}$, thus $\phi^{\#}=\psi^{\#}$, i.e. $\phi$ and $\psi$ are the same for the underlying stable curve. Similar argument for spin structure shows that $\phi$ agrees with $\psi$ on spinor sheaves, hence $\phi=\psi$.
	
	\newl \textbf{Surjectivity:} Suppose that there is an isomorphism $\varphi:(\widetilde{\mscr C}_1,\mcal L_1)\to(\widetilde{\mscr C}_2,\mcal L_2)$. Then exceptional curves of $\widetilde{\mscr C}_1$ are identified with exceptional curves of $\widetilde{\mscr C}_2$ under $\varphi$, so they are contracted to single points by the map $\pi_2\circ \varphi$, thus there is a continuous map $\psi$ between topological spaces $\mscr C_1$ and $\mscr C_2$ (because $\pi_1$ is submersive) such that $\psi\circ \pi_1=\pi_2\circ \varphi$, hence $\psi$ is a homeomorphism. It follows that $\psi^{-1}\mcal O_{{\mscr C}_2}=\pi_{1,*}\varphi^{-1}\mcal O_{\widetilde{\mscr C}_2}$. Define $\psi^{\#}:\psi^{-1}\mcal O_{{\mscr C}_2}\to \mcal O_{{\mscr C}_1}$ by $\pi_{1,*}(\varphi^{\#})$, which is an isomorphism between sheaves of rings and also $\mcal O_T$-linear, thus $\psi$ is a isomorphism between $T$-schemes. Finally, define the isomorphism between spinor sheaves by $\pi_{1,*}$ of the isomorphism between $\varphi^*\mcal L_2$ and $\mcal L_1$.
\end{proof}

\begin{lem}\label{Relative_Representability}
	The diagonal of $\overline{\mscr{S}}'_{\sg}$ is represented by separated and locally quasi-finite scheme.
\end{lem}

\begin{proof}
	Since spin curves are projective, $\Isom_T(\mscr C_1,\mscr C_2)$ is a locally closed subscheme of $\Hilb_{\mscr C_1\times_T\mscr C_2/T}$, which is separated and locally finite type over $T$. For each geometric point $t$ of $T$, $\Isom _t(\mscr C_{1,t},\mscr C_{2,t})$ has finite many points (Lemma 2.2 of \cite{Cornalba1989a}), so $\Isom_T(\mscr C_1,\mscr C_2)$ is locally quasi-finite.
\end{proof}

\begin{proposition}
	$p:\overline{\mscr{S}}_{\sg}\to \overline{\mscr{S}}'_{\sg}$ is isomorphism.
\end{proposition}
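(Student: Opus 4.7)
By Lemma \ref{Fully_Faithfulness}, $p$ is fully faithful, and Lemma \ref{Relative_Representability} already guarantees that $\overline{\mscr{S}}_{\sg}'$ has nice diagonal; hence the problem reduces to constructing, functorially in $T$, an inverse to $p$ at the level of objects. Given a Cornalba spin curve $(\mscr C,\mcal L,\alpha)$ over $T$, the plan is to build a triple $(\widetilde{\mscr C},\mscr E,b)$ in the Jarvis sense by contracting exceptional rational components and pushing $\mcal L$ forward, and then to verify that this operation is inverse to $p$ on the nose.

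First I would construct the contraction $c:\mscr C\to \widetilde{\mscr C}$. The candidate is $\widetilde{\mscr C}\equiv\underline{\Proj}_T\bigoplus_{n\ge 0}\pi_*\bigl((\omega_{\mscr C/T}\otimes \mcal L)^{\otimes n}\bigr)$, using that $\omega_{\mscr C/T}\otimes\mcal L$ is relatively ample for the Cornalba polarization. On each geometric fibre $\mscr C_t$ this contracts exactly the exceptional $\mathbb P^1$'s (on which $\omega_{\mscr C/T}$ has degree $0$ but $\omega_{\mscr C/T}\otimes\mcal L$ has degree $1$) and is an isomorphism elsewhere, so $\widetilde{\mscr C}\to T$ is a relative stable curve. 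I would then define $\mscr E\equiv c_*\mcal L$. By cohomology and base change (using $R^1c_*\mcal L=0$, which is a fibrewise statement easily checked since each exceptional $\mathbb P^1$ meets the rest in two points where $\mcal L$ has degree $0$ and on which $\mcal L$ has degree $1$, so $H^1(\mathbb P^1,\mcal O(1))=0$), $\mscr E$ is finite-presented, its formation commutes with arbitrary base change on $T$, and it is flat over $T$. The local blow-up computation already carried out in the text (which expresses $f_*\mcal O(1)\cong E(p,q)$ via the explicit free resolution) shows that near an exceptional curve, $\mscr E$ is étale-locally isomorphic to $E(p,q)$ for $pq=\pi$ a local equation of the node of $\widetilde{\mscr C}$; in particular $\mscr E$ is a rank-one relatively torsion-free sheaf of the correct degree on each fibre.

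Next I would construct $b:\mscr E^{\otimes 2}\to\omega_{\widetilde{\mscr C}/T}(\sum_i m_i\mscr D_i)$ from $\alpha$. Using Lemma \ref{Equiv_a_b} it suffices to produce an isomorphism $a:\mscr E\to\underline{\Hom}\bigl(\mscr E,\omega_{\widetilde{\mscr C}/T}(\sum_i m_i\mscr D_i)\bigr)$. The morphism $\alpha:\mcal L^{\otimes 2}\to\omega_{\mscr C/T}(\sum_i m_i\mscr D_i)$ restricted to a non-contracted locus is an isomorphism; on each exceptional $\mathbb P^1$ both $\mcal L^{\otimes 2}$ and $\omega_{\mscr C/T}(\sum m_i\mscr D_i)$ have degree compatible for $\alpha$ to vanish there, so $\alpha$ descends under the adjunction to a well-defined map on $\widetilde{\mscr C}$. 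Pushing forward, composing with the natural isomorphism $c_*\omega_{\mscr C/T}\cong \omega_{\widetilde{\mscr C}/T}$ (valid because the contracted loci are rational and Gorenstein) and the tensor-Hom adjunction for rank-one torsion-free sheaves established in Lemma \ref{Dual_of_TorsionFree} yields the required $a$. The isomorphism property is fibrewise and reduces to the explicit local model $E(p,q)\otimes E(p,q)\to (xy-\pi)\text{-relation}$, which is classical.

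Finally I would verify that the assignment $(\mscr C,\mcal L,\alpha)\mapsto (\widetilde{\mscr C},\mscr E,b)$ is inverse to $p$. In one direction, starting from $(\widetilde{\mscr C},\mscr E,b)\in\overline{\mscr{S}}_{\sg}(T)$, the remark preceding the proposition identifies $\mathbb P(\mscr E)$ with the blow-up $X$ of the local equation, and $f_*\mcal O(1)\cong \mscr E$; in the other direction, the contraction of exceptional loci of $\mathbb P(\mscr E)$ recovers $\widetilde{\mscr C}$, so the two operations are inverse. All constructions being functorial in $T$, this gives the desired inverse of $p$ on objects, and combined with Lemma \ref{Fully_Faithfulness} shows that $p$ is an isomorphism of stacks. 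The main obstacle I expect is the careful verification that contraction, pushforward, and the associated adjunctions commute with arbitrary base change on $T$; this is where the flatness statements of Lemma \ref{Dual_of_TorsionFree} and the explicit local model from Theorem \ref{Faltings} are essential, since without them one only obtains the desired identifications on the underlying stacks in characteristic-zero fibres rather than over $\Spec \mathbb Z[1/2]$ uniformly.
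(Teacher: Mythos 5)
Your plan — construct an explicit inverse by contracting the exceptional $\mathbb{P}^1$'s and pushing $\mcal L$ forward — is a genuinely different route from the paper's, but the concrete contraction you propose does not work. You take $\widetilde{\mscr C}=\underline{\Proj}_T\bigoplus_{n\ge 0}\pi_*\bigl((\omega_{\mscr C/T}\otimes \mcal L)^{\otimes n}\bigr)$, citing relative ampleness of $\omega_{\mscr C/T}\otimes\mcal L$. But that is precisely the \emph{Cornalba polarization} on $\mscr C$, which is \emph{ample on all of $\mscr C$}, so taking $\Proj$ of this graded ring reproduces $\mscr C$ and contracts nothing. You yourself note that $\omega_{\mscr C/T}\otimes\mcal L$ has degree $1$ on each exceptional $\mathbb{P}^1$ — positive degree is exactly what guarantees the curve is \emph{not} contracted. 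To contract the exceptional components you must use the dualizing sheaf $\omega_{\mscr C/T}$ alone: it is semi-ample, has degree $0$ precisely on the exceptional $\mathbb{P}^1$'s, and positive degree on every other component, so $\underline{\Proj}_T\bigoplus_n\pi_*(\omega_{\mscr C/T}^{\otimes n})$ is the relative stabilization. With $c$ defined that way, $\mscr E\equiv c_*\mcal L$ and the subsequent analysis make sense, and the rest of your sketch is plausible.

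Even with that fix, be aware that the paper avoids this explicit-inverse route entirely, and for a reason: verifying that contraction, pushforward, and the descent of $\alpha$ all commute with arbitrary base change over $\Spec\mathbb Z[1/2]$ is delicate, and you acknowledge this as the main obstacle without resolving it. The paper instead uses Lemmas \ref{Fully_Faithfulness} and \ref{Relative_Representability} to get that $p$ is represented by a separated, locally finite type algebraic space, checks that geometric fibers of $p$ are singletons (universal injectivity), and compares universal deformation rings via Proposition \ref{Deform_Spin_Moduli} to show $p$ is \'etale, then invokes Zariski's Main Theorem to conclude $p$ is an open immersion and hence an isomorphism. That argument needs only the pointwise equivalence (already noted in the paper before the proposition) plus deformation theory, and entirely sidesteps the base-change bookkeeping that your construction would require. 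If you want to repair your approach, replace $\omega\otimes\mcal L$ with $\omega$ in the $\Proj$ and then supply the missing base-change verification; otherwise the ZMT route is substantially shorter.
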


\begin{proof}
	By Lemma \ref{Fully_Faithfulness} and \ref{Relative_Representability}, for every scheme $T$ and a $T$-point $g:T\to\overline{\mscr{S}}'_{\sg}$, $p^{-1}T$ is represented by an algebraic space, which is separated and locally finite type over $T$. For a geometric point $s$ of $\overline{\mscr{S}}'_{\sg}$, $p^{-1}(s)$ has only one point, thus $p$ is universally injective. Furthermore, comparing Proposition \ref{Deform_Spin_Moduli} and the construction of universal deformation in \cite{Cornalba1989a}, we see that $p$ induces isomorphism between universal deformation basis, thus $p$ is \' etale. Applying Zariski's Main Theorem \cite{zariskimaintheorem} to the separated, universally injective, and \' etale morphism $p$, we see that $p$ is an isomorphism.
\end{proof}

\subsection{Stack of Prespin Structures and Representability of $\overline{\mscr{S}}^{\mbf{m}}_{\sg,\sn}$}

In this section, we describe the stack of prespin curves. We begin by giving the definition of the relevant concepts.

\begin{definition}[\textbf{Prespin Curve}]\label{Defn_RelPrespinCurve}
	A family of prespin curves of genus $\g$ with $\n$ punctures over the base scheme $T$, of puncturing number $N\in 2\mathbb Z$, is a triple $(\mscr{C},\mbs{\mscr D},\mscr{E})$, where $(\mscr{C},\mbs{\mscr D})\longrightarrow T$ is an $\n$-punctured stable curve, $\mscr{E}$ is a rank-1 relatively torsion-free sheaf on $C$, such that for every geometric point $t$ of $T$
	$$\deg \mscr{E}_t=\g-1+\frac{1}{2}N.$$
\end{definition}

\begin{definition}[\textbf{Stack of Prespin Curves}]\label{Defn_Stack_PrespinCurve}
	Fix a scheme $S\equiv \Spec \mathbb Z[1/2]$, a pair of natural numbers $(\g,\n)$ such that $2\g-2+\n> 0$, define the following prestack in $(\Sch_S)_{\text{\' et}}$:
	\begin{align*}
	(T\in \Sch _S)\mapsto 
	\begin{cases}
	\textbf{\normalfont Obj}: \text{Families of (\g,\n) prespin curves over T of marking number N},\\
	\textbf{\normalfont Mor}: \text{Isomorphisms between families of prespin curves}/(\mbb{I},-1).
	\end{cases}
	\end{align*}
	Note that we do not distinguish two isomorphisms if they are related by $(\mbb{I},-1)$. We denote the stackification of the above prestack by $\spspin$. There is a morphism $\pi: \spspin\to \cscurve$ forgetting the sheaf $\mscr{E}$, and there is a morphism $p: \csspinm\to \csspinN$ forgetting $b:\mscr{E}^{\otimes 2}\to \omega_{\mscr{C}/T}\left(\sum_im_i\mscr{D}_i\right)$, where $N=\sum_i m_i$.
\end{definition}

\begin{proposition}\label{Prespin_to_Mg}
	$\pi: \spspin\to \cscurve$ is relatively represented by finite-type quasi-separated Artin stacks with separated diagonal.
\end{proposition}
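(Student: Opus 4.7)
Given a morphism $T\to \cscurve$ corresponding to a family $(\mscr C,\mbs{\mscr D})\to T$ of $\n$-punctured stable curves, I will exhibit the fiber $\mathfrak X_T := \spspin\times_{\cscurve}T$ as a quotient stack $[U/GL_M]$ where $U$ is an open subscheme of a relative Quot scheme, which is the standard compactified Jacobian construction adapted to families of stable curves. The relative polarization $\mcal L := \omega_{\mscr C/T}(\sum_i \mscr D_i)$ is ample by stability. By the classical Castelnuovo--Mumford regularity bound (the bosonic specialization of Theorem \ref{m-Regularity}), there is an integer $n_0$ depending only on $(\g,\n,N)$ such that for every $n\ge n_0$ and every rank-one relatively torsion-free sheaf $\mscr E$ on $\mscr C$ of fiberwise degree $d=\g-1+N/2$, the twist $\mscr E(n) := \mscr E\otimes \mcal L^{\otimes n}$ is globally generated on every geometric fiber with vanishing $\H^{\ge 1}$ and constant $h^0 = P(n) =: M$.

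Fixing such an $n$, I would form the relative Quot scheme $\mathfrak Q := \Quot^{P_{\mscr E(n)}}_{\mcal O_{\mscr C}^M/\mscr C/T}$, which is a projective $T$-scheme by Grothendieck's theorem, hence of finite type. The locus $U\subset \mathfrak Q$ of quotients $\phi:\mcal O_{\mscr C}^M\twoheadrightarrow \mcal F$ such that (a) $\mcal F$ is fiberwise of rank one, (b) $\mcal F$ is fiberwise torsion-free, and (c) the induced map $\mcal O_T^M\to \pi_*\mcal F$ is an isomorphism, is open: (a) by constancy of Hilbert polynomial, (c) by Theorem \ref{Proper_Flat_Pushforward}, and (b) by upper semicontinuity of depth on fibers combined with the flatness of $\mcal F$ over $T$. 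The group scheme $G := GL_{M,T}$ acts on $U$ by change of framing $\mcal O^M\cong \pi_*\mscr E(n)$, and the quotient stack $[U/G]$ is isomorphic to $\mathfrak X_T$ via $\phi\mapsto \mscr E := (\coker \phi)\otimes \mcal L^{\otimes -n}$; the regularity bound $n\ge n_0$ ensures every object of $\mathfrak X_T$ lies in the essential image. This presents $\mathfrak X_T$ as an Artin stack of finite type, and the $\mathbb Z/2$ identification by $(\mathbb I,-1)$ in the definition of $\spspin$ is automatically captured since $-\mathrm{Id}\in G$.

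For the separation properties, it suffices to show that the diagonal $\Delta: \mathfrak X_T\to \mathfrak X_T\times_T \mathfrak X_T$ is affine, which gives both quasi-separatedness and separatedness of the diagonal. For any two $T'$-points $\mscr E,\mscr E'$ of $\mathfrak X_T$, the fiber of $\Delta$ is $\Isom_{T'}(\mscr E,\mscr E')$. By Corollary \ref{Hom_Superscheme} applied in the bosonic setting, the $\Hom$-functor is represented by a linear $T'$-scheme $\mathbb V(\mcal Q)$; the $\Isom$ locus is cut out as the nonvanishing locus of the induced determinant section, using that for rank-one torsion-free sheaves an endomorphism is an isomorphism if and only if it is nonzero on every generic fiber point. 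This locus is an open subscheme of an affine scheme, hence affine.

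The most delicate step is the openness of condition (b): fiberwise torsion-freeness. This reduces to showing that the locus in $T$ where some fiber $\mcal F_t$ acquires an associated point of positive codimension is closed. Via Lemma \ref{Equivalent_Defn_of_Depth} (bosonic version), torsion-freeness at a non-generic point $p\in \mscr C_t$ is equivalent to $\dep_{\mcal O_{\mscr C_t,p}}(\mcal F_{t,p})\ge 1$, so the claim follows from upper semicontinuity of depth in flat families (a classical consequence of Grothendieck's Complex, Theorem \ref{Grothendieck_Complex}). Once this openness is granted, the rest of the argument is a routine Quot-quotient construction, and the analogous statement for the finer stack $\csspinm$, in which one further imposes the structure map $b$, follows from an additional $\mathfrak{Hom}$-scheme cut, which is again affine of finite type over the preceding stack.
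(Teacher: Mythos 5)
Your proof takes a genuinely different route from the paper. The paper's argument is a one-paragraph citation: it observes that rank-one relatively torsion-free sheaves are automatically semistable, so $\spspin\times_{\cscurve}T$ is an open substack of the moduli stack $\mcal M^{1,\sg-1+N/2}_{\mscr C/T}$ of rank-one semistable sheaves of the appropriate degree, whose Artin-stack and separation properties are then taken as well-known. You instead reprove that background fact from scratch, giving the explicit Quot-scheme presentation $[U/GL_M]$ with a regularity bound, an open torsion-free locus, and the $GL_M$-action. This is more self-contained and makes transparent where the $(\mbb I,-1)$-identification in Definition \ref{Defn_Stack_PrespinCurve} is absorbed (namely by $-\mathrm{Id}\in GL_M$), at the cost of carrying the openness and representability checks that the paper outsources.

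There is one genuine gap in your separation argument. You assert that the $\Isom$ locus is ``an open subscheme of an affine scheme, hence affine.'' This implication is false: open subschemes of affine schemes need not be affine (e.g.\ $\mbb A^2\setminus\{0\}$). So this step, as written, does not establish affineness of the diagonal. There are two ways to repair it. The cheap fix, which is all the proposition requires, is to note that $\Isom_{T'}(\mscr E,\mscr E')$ is open in the affine finite-type scheme $\mathbb V(\mcal Q)$ and is therefore separated and of finite type over $T'$; a diagonal that is representable, separated, and quasi-compact already yields a quasi-separated Artin stack with separated diagonal, which is exactly what is claimed. If you do want the stronger affineness, the standard route is to exhibit $\Isom_{T'}(\mscr E,\mscr E')$ as the \emph{closed} subscheme of $\mathbb V(\mcal Q)\times_{T'}\mathbb V(\mcal Q')$ (with $\mcal Q'$ the $\Hom$-scheme in the other direction) cut out by the conditions that the two compositions equal the identity; a closed subscheme of an affine scheme is affine. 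A secondary imprecision: your appeal to ``upper semicontinuity of depth'' for openness of the fiberwise torsion-free locus is phrased backwards---the relevant statement is that $\{t: \dep(\mcal F_{t,p})\ge 1\}$ is open (openness of the Cohen--Macaulay/purity locus in flat families), which is a lower-semicontinuity-type statement; the conclusion you need is nevertheless the standard openness of the pure-dimension-one locus in moduli of sheaves on curves, and is correct. Finally, your closing remark about imposing the structure map $b$ via a further $\mathfrak{Hom}$-scheme cut anticipates the content of Proposition \ref{Spin_to_Prespin}, which the paper proves separately; it is not needed for the statement at hand.
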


\begin{proof}
	For any $T$-point of $\cscurve$, i.e. an $\n$-punctured stable curve $(\mscr{C},\mbs{\mscr{D}})/T$, $\spspin\times_{\cscurve}T$ is the stack of rank-1 relatively torsion-free sheaves of degree $\g-1+N/2$ on $(\Sch _T)_{\text{\' et}}$. Since rank-1 relatively torsion free-sheaves are obviously semistable, so $\spspin\times_{\cscurve}T$ is a substack of $\mathcal M^{1,\sg-1+N/2}_{\mscr{C}/T}$, i.e. stack of rank-1 semistable sheaves of degree $\g-1+N/2$. It is well-known that $\mathcal M^{1,\sg-1+N/2}_{\mscr{C}/T}$ is a finite-type quasi-separated Artin stack with separated diagonal. Since being relatively torsion-free is an open condition on the base, $\spspin\times_{\cscurve}T$ is a finite-type quasi-separated Artin stack with separated diagonal. 
\end{proof}

\begin{proposition}\label{Spin_to_Prespin}
	$p: \csspinm\to \spspin$ is relatively represented by finite type quasi-affine schemes.
\end{proposition}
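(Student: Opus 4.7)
The plan is to explicitly describe the fiber of $p$ over a $T$-point of $\spspin$ and exhibit it as an open subscheme of a linear scheme over $T$. Fix a prespin curve $(\mscr{C}, \mbs{\mscr D}, \mscr{E})/T$. After pulling back, the fiber parameterizes morphisms $b: \mscr{E}^{\otimes 2} \to \omega_{\mscr{C}/T}(\sum_i m_i \mscr{D}_i)$ which are isomorphisms on the smooth locus of each geometric fiber. By Lemma \ref{Equiv_a_b}, this datum is equivalent to specifying a (full) isomorphism $a: \mscr{E} \to \mscr{F}$, where $\mscr{F} := \underline{\Hom}(\mscr{E}, \omega_{\mscr{C}/T}(\sum_i m_i \mscr{D}_i))$. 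By Lemma \ref{Dual_of_TorsionFree}, the sheaf $\mscr{F}$ is rank-one, relatively torsion-free, flat over $T$, and its formation commutes with arbitrary base change, so this adjunction is well-behaved under base change.

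Next, I would invoke the representability of the relative Hom functor (the classical analog of Corollary \ref{Hom_Superscheme}): since $\pi: \mscr{C} \to T$ is projective and flat, $\mscr{E}$ is $T$-flat of finite presentation, and $\mscr{F}$ has $T$-proper support, the functor $T' \mapsto \Hom_{\mscr{C}_{T'}}(\mscr{E}_{T'}, \mscr{F}_{T'})$ is represented by a linear, in particular affine, $T$-scheme $H := \mathbb{V}(\mcal{Q})$ of finite type, for some coherent sheaf $\mcal{Q}$ on $T$. Concretely this is obtained by resolving $\mscr{E}$ by finite locally free $\mcal{O}_{\mscr{C}}$-modules and applying Grothendieck's theorem on base change for perfect complexes.

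Let $\widetilde{a}: \mscr{E}_{H} \to \mscr{F}_{H}$ denote the universal homomorphism on $\mscr{C}_H$. The key step is to show that the locus $U \subset H$ over which $\widetilde{a}$ is an isomorphism is open. The point is that $\mscr{E}_t$ and $\mscr{F}_t$ are both rank-one relatively torsion-free of the same degree on each geometric fiber $\mscr{C}_t$, so by the degree-count argument in the proof of Lemma \ref{Equiv_a_b}, a homomorphism that is injective on the smooth locus of $\mscr{C}_t$ is automatically an isomorphism of sheaves. Hence $U$ is the complement in $H$ of the image of the fiberwise vanishing locus of $\widetilde{a}$ restricted to $\mscr{C}^{\mathrm{sm}}_H$; since this locus is constructible and its closure is proper over $H$, one concludes by standard Nakayama and semicontinuity arguments that $U$ is open. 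As an open subscheme of the affine scheme $H$, the locus $U$ is a finite-type quasi-affine $T$-scheme, as desired.

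Finally, the $(\mbb{I}, -1)$ identification plays no role: since $(\mbb{I}, -1)$ acts on $\mscr{E}^{\otimes 2}$ by $(-1)^2 = 1$, it fixes every $b$ (equivalently every $a$), so the quotient on the $\spspin$ side does not introduce additional identifications in the fiber. The main technical obstacle I anticipate is verifying the openness claim carefully: one must confirm that ``$\widetilde{a}$ is an isomorphism on $\mscr{C}^{\mathrm{sm}}_H$ fiberwise'' cuts out an open, not merely a constructible, subset of $H$, which reduces to checking that the non-isomorphism locus in $\mscr{C}_H$ maps to a closed subset of $H$ --- a consequence of the properness of $\mscr{C}_H \to H$ together with the fact that degeneracy of rank-one torsion-free sheaves on fibers of a stable curve is detected by vanishing of a coherent quotient on the smooth locus, which propagates to a closed condition after proper pushforward.
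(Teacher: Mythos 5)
Your overall strategy coincides with the paper's: both invoke Lemma \ref{Equiv_a_b} to identify the fiber of $p$ over a $T$-point of $\spspin$ with the functor of isomorphisms $a:\mscr{E}\to \underline{\Hom}(\mscr{E},\omega_{\mscr{C}/T}(\sum_i m_i\mscr{D}_i))$, and both exhibit this as an open subscheme of an affine relative Hom scheme. Where you diverge is the openness argument. The paper abstracts this into a clean standalone lemma asserting that $\underline{\Isom}(\mscr F,\mscr G)$ is represented by a finite-type quasi-affine scheme whenever $\mscr F,\mscr G$ are coherent, $S$-flat sheaves on a proper $X/S$: first cut out the open locus where the universal map is surjective (the complement of the proper image of $\Supp(\coker)$), and then, on that locus, the kernel is $S$-flat because source and target both are, so the isomorphism locus is again an open complement of a proper image. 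This two-step argument never touches the smooth locus and applies verbatim to arbitrary coherent sheaves.

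Your route through $\mscr{C}^{\mathrm{sm}}_H$ introduces an avoidable complication: $\mscr{C}^{\mathrm{sm}}_H\to H$ is not proper, so the image in $H$ of the degeneracy locus inside the smooth part is a priori only constructible. You flag this yourself, and the sketch you give for closing the gap (``detected by vanishing of a coherent quotient on the smooth locus, which propagates to a closed condition after proper pushforward'') is not an argument — it restates the conclusion you want. What actually rescues the claim is an extra observation, implicit in Lemma \ref{Equiv_a_b} but not made explicit in your write-up: if $\widetilde a_h$ fails to be an isomorphism then, by the degree count, it fails to be injective, and a nonzero kernel of a map out of a rank-one torsion-free sheaf has one-dimensional support and therefore meets the smooth locus. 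Hence $\pi(\Supp(\coker\widetilde a)\cup\Supp(\ker\widetilde a))$ equals the image of the smooth-locus degeneracy set, and the former is closed because $\mscr{C}_H\to H$ is proper. Phrased that way your argument goes through, but it is simply a longer route to the same place. The most economical version in this specific setting is shorter still: the degree argument shows that $\widetilde a_h$ is an isomorphism iff it is surjective, so the isomorphism locus is $H\setminus\pi(\Supp(\coker\widetilde a))$, one open condition, no smooth locus needed.
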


\begin{proof}
	For any $T$-point of $\spspin$, i.e. a $n$-pointed prespin curve $(C,\mbs{\mscr D},\mathcal E)/T$, $\csspinm\times _{\spspin}T$ is the functor of isomorphisms $a:\mathcal E\to \underline{\Hom}(\mathcal E,\omega _{C/T}(\sum_i m_i\mscr{D}_i))$. The next lemma says that $\csspinm\times _{\spspin}T$ is represented by a finite type quasi-affine scheme, hence $p: \csspinm\to \spspin$ is relatively represented by finite type quasi-affine schemes.
\end{proof}

\begin{lem}
	Suppose that $X$ is proper over a Noetherian scheme $S$, $\mscr F$ and $\mscr G$ are coherent sheaves on $X$ and flat over $S$, then $\underline{\Isom}(\mscr F,\mscr G)$ is represented by a scheme finite type and quasi-affine over $S$.
\end{lem}

\begin{proof}
	The functor $T\mapsto \Hom _T(\mscr F_T,\mscr G_T)$ is represented by a scheme finite type and affine over $S$ (see section 7.7.8, 7.7.9 of \cite{EGA3} ), denoted by $M$. The locus in $M$ where $\mscr F\to \mscr G$ is surjective is open since it's the complement of the image of $\Supp (\coker (\mscr F\to \mscr G))$ in $M$, the latter is closed since $X$ is proper over $S$. Now in the locus $U\subset M$ where $\mscr F\to \mscr G$ is surjective, the kernel of $\mscr F\to \mscr G$, denoted by $\mscr K$, is flat over $S$, since both $\mscr F$ and $\mscr G$ are flat over $S$. Hence $\underline{\Isom}(\mscr F,\mscr G)$ is represented by the vanishing locus of $\mscr K$, i.e. the complement of the image of $\Supp (\mscr K)$ in $U$, which is open by the properness of $X$.
\end{proof}

\noindent Combining propositions \ref{Prespin_to_Mg} and \ref{Spin_to_Prespin}, we conclude that

\begin{corollary}
	$\csspinm$ is represented by a finite-type quasi-separated Artin stack over $\Spec \mathbb Z[1/2]$ with separated diagonal.
\end{corollary}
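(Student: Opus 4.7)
The plan is to propagate representability and the relevant finiteness/separatedness properties through the tower of forgetful morphisms
\[
\csspinm \xrightarrow{\,p\,} \csspinN \xrightarrow{\,\pi\,} \cscurve,
\]
using Propositions \ref{Prespin_to_Mg} and \ref{Spin_to_Prespin} as the two arrows, and invoking the classical Deligne--Mumford theorem for $\cscurve = \overline{\mscr{M}}_{\sg,\sn}$ as the base. Since $\overline{\mscr{M}}_{\sg,\sn}$ is a smooth proper Deligne--Mumford stack of finite type over $\Spec \mathbb{Z}$, after base change to $\Spec \mathbb{Z}[1/2]$ it is in particular a finite-type quasi-separated Artin stack with separated (in fact finite, unramified) diagonal. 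This provides the anchor at the bottom of the tower.

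Next I would climb one step using Proposition \ref{Prespin_to_Mg}: the morphism $\pi$ is relatively representable by finite-type quasi-separated Artin stacks with separated diagonal. Because ``relatively representable by finite-type quasi-separated stacks with separated diagonal'' is stable under composition, and because these properties are local on the target in the \'etale topology, $\csspinN$ itself inherits the structure of a finite-type quasi-separated Artin stack over $\Spec \mathbb{Z}[1/2]$ with separated diagonal. Concretely, given any scheme $T \to \csspinN$, the fiber product $T \times_{\csspinN} T$ factors through $T \times_{\cscurve} T$, and the map from the former to the latter is relatively represented by a separated scheme (an Isom scheme of rank-$1$ relatively torsion-free sheaves) by standard arguments, which establishes separatedness of the diagonal of $\csspinN$.

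The final step applies Proposition \ref{Spin_to_Prespin}: $p$ is relatively representable by finite-type quasi-affine schemes, which in particular are of finite type, separated, and with affine (hence separated) diagonal. Composing with the previous step, the morphism $\csspinm \to \cscurve$ is relatively representable, of finite type, and quasi-separated, and $\csspinm$ inherits a separated diagonal from the combination of the separated diagonals of $\csspinN$ and the quasi-affine (hence separated) fibers of $p$. Stackification causes no trouble because the quotient by $(\mathbb{I},-1)$ is a finite \'etale $\mathbb{Z}/2$-quotient (recall $2$ is invertible on the base), preserving the Artin-stack structure.

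The honest content of the proof is therefore entirely formal bookkeeping, and I do not expect any serious obstacle: all three ingredients (base Deligne--Mumford structure of $\cscurve$, Propositions \ref{Prespin_to_Mg} and \ref{Spin_to_Prespin}) have already been established, and the stability of the properties ``finite type'', ``quasi-separated'', and ``separated diagonal'' under composition of representable morphisms is a standard fact in the theory of algebraic stacks. The only mildly delicate point to verify is that the \'etale-local nature of these properties on $\cscurve$ really does transfer upward through the non-representable morphism $\pi$; this is taken care of by working with a smooth presentation of $\cscurve$ by a scheme of finite type, pulling the tower back to this presentation, and then descending.
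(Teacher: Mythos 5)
Your proposal is correct and takes exactly the same route as the paper, namely composing the two relatively-representable morphisms $p:\csspinm\to\spspin$ and $\pi:\spspin\to\cscurve$ from Propositions \ref{Spin_to_Prespin} and \ref{Prespin_to_Mg} and anchoring at the bottom with the classical Deligne--Mumford theorem for $\cscurve$; the paper simply states ``combining propositions \ref{Prespin_to_Mg} and \ref{Spin_to_Prespin}'' and leaves all of your bookkeeping implicit. Your closing remark that the stackification step (the rigidification of the automorphism group by $(\mathbb{I},-1)$) is harmless because $\mu_2$ is a finite flat, indeed \'etale, group scheme over $\Spec\mathbb{Z}[1/2]$ is a worthwhile observation that the paper glosses over.
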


\subsection{Diagonal of $\csspinm$}

\noindent We are going to to describe the diagonal $\Delta_f$ of $\csspinm$ by investigating the diagonal of morphism $f:\csspinm\to \cscurve$.

\begin{proposition}\label{Aut_Field}
	Suppose that $\mbb{K}$ is an algebraically-closed field, $p$ is a $\mbb{K}$-point of $\csspinm$, represented by a punctured spin curve $(\mscr{C},\mbs{\mscr{D}},\mscr{E},b)$ over $\mbb{K}$. Let $\{N_1,\cdots,N_r\}$ be the set of nodes where $\mscr{E}$ is not locally-free, $\widetilde{\mscr{C}}$ be the curve $\mscr{C}$ normalized at $\{N_1,\cdots,N_r\}$, then the group of automorphisms of $(\mscr{C},\mbs{\mscr{D}},\mscr{E},b)$ which induce identity on $(\mscr{C},\mbs{\mscr{D}})$ is
	\begin{equation*}
	(\mathbb Z/2\mathbb Z)^{\oplus s},
	\end{equation*}
	where $s$ is the number of connected components of $\widetilde{\mscr{C}}$.
\end{proposition}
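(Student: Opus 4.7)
The plan is to reduce the statement to a computation of the sheaf of endomorphisms $\underline{\mathrm{End}}_{\mcal{O}_{\mscr{C}}}(\mscr{E})$, take global sections, and then impose the constraint coming from $b$. By Definition \ref{Defn_Isom_RelSpinCurve}, an automorphism of $(\mscr{C},\mbs{\mscr{D}},\mscr{E},b)$ inducing the identity on $(\mscr{C},\mbs{\mscr{D}})$ is exactly an $\mcal{O}_{\mscr{C}}$-linear isomorphism $\psi:\mscr{E}\to\mscr{E}$ satisfying $b\circ\psi^{\otimes 2}=b$. Since $b$ is an isomorphism on the dense open smooth locus of $\mscr{C}$, this condition is equivalent to the pointwise relation $\psi^{\otimes 2}=\mathrm{id}$ there, and hence --- once $\underline{\mathrm{End}}(\mscr{E})$ is shown in the next step to be a torsion-free $\mcal{O}_{\mscr{C}}$-module --- to the global identity $\psi^2=\mathrm{id}$ as a section of $\underline{\mathrm{End}}(\mscr{E})$.

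The heart of the argument is the identification $\underline{\mathrm{End}}_{\mcal{O}_{\mscr{C}}}(\mscr{E})\cong \nu_*\mcal{O}_{\widetilde{\mscr{C}}}$, where $\nu:\widetilde{\mscr{C}}\to\mscr{C}$ is the partial normalization at the non-locally-free nodes $\{N_1,\ldots,N_r\}$; this being a local statement, I would verify it stalk by stalk. At smooth points and at locally-free nodes $\mscr{E}$ is a line bundle, so $\underline{\mathrm{End}}(\mscr{E})=\mcal{O}_{\mscr{C}}=\nu_*\mcal{O}_{\widetilde{\mscr{C}}}$ there. At a non-locally-free node $N_i$, Faltings's classification (Theorem \ref{Faltings}) applied with base ring $R=\mbb{K}$ forces $\pi=pq=0$ and, since $\mbb{K}$ is a field, $p=q=0$; thus $\mscr{E}$ is locally isomorphic to $E(0,0)\cong A/(y)\oplus A/(x)$ as a module over $A=\mbb{K}[\![x,y]\!]/(xy)$, which is precisely $\nu_*\mcal{O}_{\widetilde{\mscr{C}}}$ locally. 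A direct computation using $\Hom_A(A/(y),A/(x))=(0:_{A/(x)}y)=0$ and its symmetric variant then gives $\mathrm{End}_A(E(0,0))=A/(y)\times A/(x)$, completing the identification. In particular, $\underline{\mathrm{End}}(\mscr{E})$ is of the form $\nu_*\mcal{O}_{\widetilde{\mscr{C}}}$ and hence torsion-free, which retroactively justifies the reduction of the previous paragraph.

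Passing to units and then to global sections yields
\[
\Aut(\mscr{E})\;=\;H^0\!\bigl(\mscr{C},(\nu_*\mcal{O}_{\widetilde{\mscr{C}}})^{\times}\bigr)\;=\;H^0\!\bigl(\widetilde{\mscr{C}},\mcal{O}_{\widetilde{\mscr{C}}}^{\times}\bigr)\;\cong\;(\mbb{K}^{\times})^{s},
\]
where the last isomorphism uses that $\widetilde{\mscr{C}}$ is proper over the algebraically closed field $\mbb{K}$ with exactly $s$ connected components, so that its global regular functions are $\mbb{K}^{s}$. Imposing $\psi^{2}=1$ component by component, and using that $2\in\mbb{K}^{\times}$ because the entire setup is defined over $\Spec\mbb{Z}[1/2]$, forces each component to take values in $\{\pm 1\}$, yielding $(\mbb{Z}/2\mbb{Z})^{\oplus s}$, as asserted.

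The principal obstacle is the local computation of the endomorphism sheaf at the non-locally-free nodes, and the reason it is painless is that working over an algebraically closed field collapses Faltings's normal form $E(p,q)$ to the single case $E(0,0)$, which splits as a direct sum of the two branch structure sheaves; the vanishing of the cross-component $\Hom$'s then makes the endomorphism-ring computation immediate. A family-level analogue over a more general base $R$ would require analyzing $\mathrm{End}_{A}(E(p,q))$ for arbitrary $p,q\in\mathfrak{m}_{R}$, which is less transparent but not needed for the present proposition.
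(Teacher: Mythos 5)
Your proposal is correct and takes essentially the same route as the paper: both reduce the computation to the partial normalization $\widetilde{\mscr{C}}$ at the non-locally-free nodes, where everything becomes locally free, and then extract the constraint from $b$. The paper's proof simply asserts (``it is easy to see'') a one-to-one correspondence between automorphisms of $(\mscr{C},\mbs{\mscr{D}},\mscr{E},b)$ and those of $(\widetilde{\mscr{C}},\mbs{\mscr{D}},\pi^{*}\mscr{E},\pi^{*}b)$, and then reads off $(\mbb{K}^{\times})^{s}$ from the locally-free pullback. You supply the justification that the paper leaves implicit: the identification $\underline{\mathrm{End}}_{\mcal{O}_{\mscr{C}}}(\mscr{E})\cong\nu_{*}\mcal{O}_{\widetilde{\mscr{C}}}$, obtained stalkwise from Faltings's normal form $E(0,0)\cong A/(y)\oplus A/(x)$ at each non-locally-free node together with the vanishing of cross-component Homs, which is precisely what makes the adjunction $\mscr{E}\overset{\sim}{\to}\pi_{*}\pi^{*}\mscr{E}$ and hence the paper's correspondence work. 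Your torsion-freeness argument for promoting $\psi^{2}=\mathrm{id}$ from the smooth locus to all of $\mscr{C}$ is a clean way to handle the passage from $\psi^{\otimes 2}$ to $\psi^{2}$ (one could alternatively note that $b$ factors through $\mscr{E}^{\otimes 2}/\mathrm{torsion}$, but your route is fine). Net effect: same proof, written out with the local verification that the paper compresses into one sentence.
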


\begin{proof}
	It is easy to see that there is a one-to-one correspondence between  automorphisms of $(\mscr{C},\mbs{\mscr{D}},\mscr{E},b)$ and automorphisms of $(\widetilde{\mscr{C}},\mbs{\mscr{D}},\pi ^*\mscr{E},\pi^*b)$,where $\pi:\widetilde{\mscr{C}}\longrightarrow\mscr{C}$ is the normalization morphism at NS punctures. Since $\pi ^*\mscr{E}$ is locally free, automophism of $\pi ^*\mscr{E}$ is given by $\mbb{K}^{\times s}$, hence those automorphisms compatible with $\pi^*b$ are sqaure roots of $(1,\cdots,1)\in \mbb{K}^{\times s}$, i.e. $(\mathbb Z/2\mathbb Z)^{\oplus s}$.
\end{proof}

\begin{proposition}
	$\Delta_f$ is unramified.
\end{proposition}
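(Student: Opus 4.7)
The plan is to verify the standard criterion that $\Delta_f$ is unramified by checking two conditions: $\Delta_f$ is locally of finite type, and every geometric fiber is \' etale, i.e.\ reduced and zero-dimensional. The finite-type property is immediate from Propositions \ref{Prespin_to_Mg} and \ref{Spin_to_Prespin}, which together exhibit the diagonal of $f$ as relatively represented by finite-type separated schemes. The fact that geometric fibers are zero-dimensional (in fact, form a torsor under a finite group) is the content of Proposition \ref{Aut_Field}. What remains is the infinitesimal statement: at every $\mbb{K}$-point of every geometric fiber, the Zariski tangent space vanishes.

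\textbf{Tangent space computation.} I will fix a geometric point $p=(\mscr{C},\mbs{\mscr{D}},\mscr{E},b)$ over an algebraically closed field $\mbb{K}$, and let $\pi:\widetilde{\mscr{C}}\to \mscr{C}$ be the partial normalization at the nodes where $\mscr{E}$ fails to be locally free. A tangent vector to the automorphism scheme at the identity corresponds, after base change to $\mbb{K}[\epsilon]/(\epsilon^{2})$, to an automorphism of the form $\phi=1+\epsilon\alpha$ of $\mscr{E}_{\mbb{K}[\epsilon]}$ which induces the identity on $(\mscr{C},\mbs{\mscr{D}})$ and is compatible with $b$. By the same normalization bijection used in the proof of Proposition \ref{Aut_Field}, such a $\phi$ is equivalent to an infinitesimal automorphism of the line bundle $\pi^{*}\mscr{E}$ on $\widetilde{\mscr{C}}$ that is compatible with $\pi^{*}b$. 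Infinitesimal automorphisms of a line bundle on $\widetilde{\mscr{C}}$ at the identity form $H^{0}(\widetilde{\mscr{C}},\mathcal{O}_{\widetilde{\mscr{C}}})=\mbb{K}^{\oplus s}$, where $s$ is the number of connected components of $\widetilde{\mscr{C}}$, so $\alpha=(\alpha_{1},\ldots,\alpha_{s})$ is a locally constant function. The compatibility condition $\phi^{\otimes 2}\circ b=b$ reads, modulo $\epsilon^{2}$, as $2\alpha_{i}\cdot\mathrm{id}=0$ on each component, so $\alpha_{i}=0$ because $2$ is invertible on $\Spec\mathbb{Z}[1/2]$.

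\textbf{Conclusion and main obstacle.} Combining this vanishing with Proposition \ref{Aut_Field} shows that the relative automorphism group scheme of $(\mscr{E},b)$ over $\overline{\mscr{M}}_{\sg,\sn}$ is finite with reduced zero-dimensional geometric fibers, hence \' etale, which is exactly the statement that $\Delta_{f}$ is unramified. The only delicate step is the infinitesimal computation at a point where $\mscr{E}$ is not locally free, because there one must make sure that (i) every infinitesimal automorphism of the rank-one relatively torsion-free sheaf $\mscr{E}_{\mbb{K}[\epsilon]}$ descends from an infinitesimal automorphism of $\pi^{*}\mscr{E}_{\mbb{K}[\epsilon]}$ via the pushforward--pullback adjunction underlying the bijection of Proposition \ref{Aut_Field}, and (ii) the condition of compatibility with $b$ pulls back cleanly to the condition of compatibility with $\pi^{*}b$. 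Both are handled by applying Lemma \ref{Dual_of_TorsionFree} to compute $\underline{\Hom}(\mscr{E}_{\mbb{K}[\epsilon]},\mscr{E}_{\mbb{K}[\epsilon]})$ and its flatness over $\mbb{K}[\epsilon]$, thereby reducing the infinitesimal problem to the classical one on the partial normalization, where the invertibility of $2$ closes the argument.
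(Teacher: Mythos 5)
Your proof is correct and takes essentially the same approach as the paper's: reduce to showing the tangent space at the identity of the automorphism group scheme vanishes, pull back along the partial normalization $\pi$ so that the spinor sheaf becomes a line bundle, express an infinitesimal automorphism as $1+\epsilon\alpha$ with $\alpha$ locally constant, and observe that compatibility with $b$ forces $2\alpha=0$, which kills $\alpha$ because $2$ is invertible. The paper's version is terser (it simply writes $\widetilde{\alpha}^2=\mbb{I}$ and appeals to invertibility of $2$), while you make more explicit the two descent points that underlie the normalization equivalence; that extra care is warranted and does not change the argument.
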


\begin{proof}
	Suppose that $\mbb{K}$ is an algebraically-closed field, $p$ is a $\mbb{K}$-point of $\csspinm$, represented by a punctured spin curve $(\mscr{C},\mbs{\mscr{D}},\mscr{E},b)$ over $\mbb{K}$. We want to show that the automorphism group scheme of $(\mscr{C},\mbs{\mscr{D}},\mscr{E},b)$ which fixes $(\mscr{C},\mbs{\mscr{D}})$ is \' etale, or equivalently, the tangent space at identity is trivial. 
	
	Let $D_{\epsilon}\equiv\Spec \mbb{K}[\epsilon]/\epsilon^2$. Consider an automorphism $\alpha$ of $\mscr{E}\times D_{\epsilon}$ on $\mscr{C}\times D_{\epsilon}$ which reduces to identity on the special fiber. Then $\alpha$ induces an automorphism $\widetilde {\alpha}$ of $\pi^* \mscr{E}\times D_{\epsilon}$ on $\widetilde {\mscr{C}}\times D_{\epsilon}$, and $\widetilde {\alpha}$ reduces to identity on the special fiber, so $$\widetilde {\alpha}\in (1+\mbb{K}\epsilon)^{\oplus s}.$$Moreover $\widetilde {\alpha}^2=\mbb{I}$, which implies that $\widetilde {\alpha}=\mbb{I}$, since $2$ is invertible in $\mbb{K}$. As a result we see that $\alpha=\mbb{I}$, i.e. the tangent space at identity is trivial. 
\end{proof}

\begin{proposition}
	$\Delta_f$ is proper.
\end{proposition}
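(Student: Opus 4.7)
The plan is to verify the valuative criterion for properness of $\Delta_f$. Since we have already shown that $\csspinm$ is of finite type and quasi-separated, and since $\Delta_f$ is unramified by the preceding proposition, this will suffice. Concretely, fix a discrete valuation ring $R$ with uniformizer $\pi$, residue field $\kappa$, and fraction field $K$; we are given an $\n$-punctured relative stable curve $(\mscr{C}, \mbs{\mscr{D}})$ over $\Spec R$ together with two spin structures $(\mscr{E}_i, b_i)_{i=1,2}$ on it of the same puncturing type $\mbf{m}$ and a generic isomorphism $\phi_K:(\mscr{E}_{1,K}, b_{1,K}) \overset{\sim}{\longrightarrow} (\mscr{E}_{2,K}, b_{2,K})$, and we must extend $\phi_K$ uniquely to an isomorphism over $R$. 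Uniqueness is immediate from the preceding proposition: the scheme of automorphisms of $(\mscr{E}_1, b_1)$ over $R$ is unramified, hence \'etale, so any two sections agreeing on the generic fiber coincide globally.

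For existence, set $L \equiv \underline{\Hom}_{\mscr{C}}(\mscr{E}_1, \mscr{E}_2)$, which by Lemma \ref{Dual_of_TorsionFree} is a rank-one relatively torsion-free sheaf on $\mscr{C}/R$ whose formation commutes with arbitrary base change. The finite $R$-module $\Hom(\mscr{E}_1, \mscr{E}_2) = \H^0(\mscr{C}, L)$ is $\pi$-torsion-free (since $L$ is $R$-flat) and by flat base change satisfies $\Hom(\mscr{E}_1, \mscr{E}_2) \otimes_R K = \H^0(\mscr{C}_K, L_K)$, so we may write $\phi_K = \pi^v \phi$ with $v \in \mathbb Z$ and $\phi \in \H^0(\mscr{C}, L)$ of $\pi$-adic valuation zero; consequently $\bar\phi \equiv \phi \otimes_R \kappa$ is a nonzero morphism $\mscr{E}_{1,s} \to \mscr{E}_{2,s}$ on the closed fiber. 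The compatibility with the quadratic forms yields the equation
\begin{equation*}
\pi^{2v}(b_2 \circ \phi^{\otimes 2}) = b_1
\end{equation*}
in the $R$-torsion-free module $\Hom(\mscr{E}_1^{\otimes 2}, \omega_{\mscr{C}/R}(\sum_i m_i \mscr{D}_i))$. A reduction-modulo-$\pi$ argument forces $v = 0$: if $v > 0$, then $b_{1,s}$ would vanish, contradicting that $b_{1,s}$ is an isomorphism on the dense smooth locus; if $v < 0$, then $b_{2,s} \circ \bar\phi^{\otimes 2} = 0$, and injectivity of $b_{2,s}$ on the smooth locus together with rank-one torsion-freeness of the sheaves forces $\bar\phi = 0$, again a contradiction. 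Hence $\phi \in \Hom(\mscr{E}_1, \mscr{E}_2)$ is a genuine $R$-morphism with $b_2 \circ \phi^{\otimes 2} = b_1$, and on the smooth locus of $\mscr{C}_s$ the restriction $\bar\phi$ is automatically an isomorphism.

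The hard part is the remaining step: showing that $\bar\phi$ is an isomorphism at every node of $\mscr{C}_s$, including the non-locally-free nodes where $\mscr{E}_{i,s}$ fails to be a line bundle. At a locally-free node, the compatibility $b_{2,s} \circ \bar\phi^{\otimes 2} = b_{1,s}$ combined with $b_{i,s}$ being an isomorphism at that node forces $\bar\phi$ to be locally multiplication by a unit, hence an isomorphism. At a non-locally-free node $N$ of $\mscr{C}_s$, I would work in the local model $\mcal{O}_{\mscr{C}_s, N} \cong \kappa[\![x,y]\!]/(xy)$ and invoke the Faltings classification (Theorem \ref{Faltings}): each $\mscr{E}_{i,s}$ is locally of the form $E(p_i, q_i)$ with $p_i q_i = 0$, and the requirement that $b_i$ extends an isomorphism off the node pins down the unit class of $(p_i, q_i)$. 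The compatibility $b_2 \circ \bar\phi^{\otimes 2} = b_1$ then forces $(p_2, q_2) = (u p_1, u^{-1} q_1)$ for a unit $u$ and identifies $\bar\phi$ with the induced isomorphism $E(p_1, q_1) \cong E(u p_1, u^{-1} q_1)$ from Theorem \ref{Faltings}, completing the argument.
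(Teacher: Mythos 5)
Your route is genuinely different from the paper's, and arguably addresses more of what the valuative criterion demands. The paper sets up the criterion with a \emph{single} spin structure $(\mscr{C},\mbs{\mscr{D}},\mscr{E},b)$ over $R$ and extends a generic \emph{automorphism} $\alpha_\eta$: it normalizes $\mscr{C}$ at the non-locally-free nodes of the generic fiber, uses the adjunction isomorphism with the pushforward of the pulled-back spinor sheaf, and invokes Proposition \ref{Aut_Field} to identify the pulled-back automorphism with a tuple of $\pm 1$'s on connected components of the normalization, which then extends tautologically. You instead treat the content of the criterion for $\Delta_f$ in full generality --- two possibly distinct spin structures on the same curve over $R$ together with a generic isomorphism --- by passing to $\Hom(\mscr{E}_1,\mscr{E}_2)$, normalizing $\phi_K$ by a power of the uniformizer, and pinning down the exponent via the $b$-compatibility. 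This sidesteps the paper's explicit node-by-node normalization entirely.

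There are two issues, one minor and one genuine. The minor one: you cite Lemma \ref{Dual_of_TorsionFree} for $L=\underline{\Hom}(\mscr{E}_1,\mscr{E}_2)$, but that lemma requires the \emph{target} sheaf to be locally free, and $\mscr{E}_2$ is not. The facts you actually use are provable directly: $\Hom(\mscr{E}_1,\mscr{E}_2)$ has no $\pi$-torsion, because any $\pi$-torsion homomorphism lands in the vanishing $\pi$-torsion of the $R$-flat $\mscr{E}_2$; the map $\Hom(\mscr{E}_1,\mscr{E}_2)\otimes_R K\to\Hom(\mscr{E}_{1,K},\mscr{E}_{2,K})$ is surjective by clearing denominators against the coherent $\mscr{E}_2$; and $\Hom(\mscr{E}_1,\mscr{E}_2)/\pi\hookrightarrow\Hom(\mscr{E}_{1,s},\mscr{E}_{2,s})$ follows from the long exact sequence attached to $0\to\mscr{E}_2\xrightarrow{\pi}\mscr{E}_2\to\mscr{E}_{2,s}\to 0$. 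So your ``$v=0$'' step holds once the lemma citation is replaced by these observations.

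The genuine gap is the final paragraph. Over the residue field the local model at a non-locally-free node is $\kappa[\![x,y]\!]/(xy)$, so in Theorem \ref{Faltings} the parameter ring is $\kappa$ with maximal ideal zero; both $(p_i,q_i)$ are necessarily $(0,0)$, and the phrase ``pins down the unit class of $(p_i,q_i)$'' is vacuous. More seriously, the Faltings classification cannot by itself show that $\bar\phi$ is an isomorphism there: $E(0,0)$ is the ideal $(x,y)\subset\kappa[\![x,y]\!]/(xy)$, and multiplication by $x$ is a nonzero, non-invertible endomorphism, so a nonzero local map $E(0,0)\to E(0,0)$ need not be an isomorphism. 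What closes the argument is a degree count you have already set up but do not fire: from $b_{2,s}\circ\bar\phi^{\otimes 2}=b_{1,s}$ you know $\bar\phi$ is an isomorphism on the dense smooth locus of every component of $\mscr{C}_s$, so $\ker\bar\phi$ is a torsion subsheaf of the torsion-free $\mscr{E}_{1,s}$, hence zero; then $\mathrm{length}(\coker\bar\phi)=\deg\mscr{E}_{2,s}-\deg\mscr{E}_{1,s}=0$, so $\bar\phi$ is an isomorphism everywhere. With $\phi_K$ an isomorphism, Nakayama gives that $\phi$ is an isomorphism over $R$, and $b_2\circ\phi^{\otimes 2}=b_1$ upgrades it to an isomorphism of spin curves. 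Replacing your final paragraph by this argument makes the proof complete.
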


\begin{proof}
	Let $R$ be a complete discrete valuation ring with algebraically-closed residue field $\mbb{K}$ and fraction field $\mbb{K}_{\eta}$, $(\mscr{C},\mbs{\mscr{D}},\mscr{E},b)$ is a relative punctured spin curve on $\Spec R$, and $\alpha_{\eta}$ is an automorphism of $\mscr{E}$ on generic fiber which is compatible with $b$. Using the valuative criterion, we need to show that $\alpha_{\eta}$ extends to an automorphism $\alpha$ on the whole $\mscr{C}$ and is still compatible with $b$.
	
	Suppose that $\{N_1,\cdots,N_r\}$ is the set of nodes of generic fiber $\mscr{C}_{\eta}$ where $\mscr{E}_{\eta}$ is not locally free. We can assume that $N_i$ is a geometric point of $\mbb{K}_{\eta}$ otherwise we can always take a finite extension of $R$ to make it a geometric point. Note that the valuative criterion still holds for a finite extension of $R$. So $N_i$ extends to a section of $R$, which is still denoted by $N_i$. Completion of the local ring of $\mscr{C}$ at the closed point of $N_i$ is $[\![R\dol x,y\dor /(xy)]\!]$. According to the Faltings' classification theorem (Theorem \ref{Faltings}), the completion of $\mscr{E}$ at the closed point of $N_i$ is of type $(0,0)$. Let $\pi:\widetilde{\mscr{C}}\to \mscr{C}$ be the normalization of $\mscr{C}$ at sections $\{N_1,\cdots,N_r\}$, then the adjunction $\mscr{E}\to \pi_*\pi^*\mscr{E}$ is an isomorphism. Thus, it is enough to extend $\pi^*\alpha_{\eta}$ to an automorphism of $\pi^*\mscr{E}$. According to Proposition \ref{Aut_Field}, $\pi^*\alpha_{\eta}$ is a direct product of $\pm 1$ on connected components of $\widetilde{\mscr{C}}_{\eta}$, so we can extend it tautologically.
\end{proof}

As a corollary, we have the following

\begin{corollary}\label{cor:compactified spin moduli is a separated DM stack}
	$\csspinm$ is a separated Deligne-Mumford stack.
\end{corollary}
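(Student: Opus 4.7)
The plan is to assemble the pieces already established about the relative diagonal $\Delta_f$ of the forgetful morphism $f:\csspinm\to \cscurve$, and combine them with the well-known fact that $\cscurve$ is itself a separated Deligne-Mumford stack (this is the classical result of Deligne--Mumford and Knudsen). The key observation is that the absolute diagonal $\Delta_{\csspinm}$ fits into a Cartesian square
\begin{equation*}
\begin{tikzcd}
\csspinm\arrow[r,"\Delta_f"]\arrow[d,"="] & \csspinm\times_{\cscurve}\csspinm\arrow[r]\arrow[d] & \csspinm\times\csspinm\arrow[d,"f\times f"]\\
\csspinm\arrow[r,"\Delta_f"] & \csspinm\times_{\cscurve}\csspinm\arrow[r] & \cscurve\times_{\cscurve}\cscurve,
\end{tikzcd}
\end{equation*}
so that $\Delta_{\csspinm}$ can be factored as the composition of $\Delta_f$ with the base change of $\Delta_{\cscurve}$ along $f\times f$. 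Since unramifiedness and properness are both stable under composition and base change, the propositions established above (which say $\Delta_f$ is unramified and proper) together with the corresponding properties of $\Delta_{\cscurve}$ (which is finite and unramified, hence proper and unramified) imply that $\Delta_{\csspinm}$ is proper and unramified.

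From here I would first conclude separatedness: by definition, a stack is separated precisely when its diagonal is proper, so properness of $\Delta_{\csspinm}$ is exactly what we need. For the Deligne--Mumford property, I would invoke the standard criterion (see, e.g., Laumon--Moret-Bailly or the \emph{Stacks Project}) that a finite-type quasi-separated Artin stack is Deligne--Mumford if and only if its diagonal is unramified, equivalently, the geometric stabilizer groups are discrete and reduced. The unramifiedness of $\Delta_{\csspinm}$ gives this directly; alternatively one can read it off from Proposition~\ref{Aut_Field}, which explicitly identifies the relative automorphism group as $(\mathbb Z/2\mathbb Z)^{\oplus s}$ (a finite reduced group scheme, since $2$ is invertible on the base $\Spec \mathbb Z[1/2]$), combined with the finiteness of the automorphism groups of pointed stable curves.

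The only mild subtlety is to ensure that the quotient by the canonical automorphism $(\mathbb I,-1)$ built into the definition of $\spspin$ (and hence of $\csspinm$) does not spoil the argument; but since this quotient is by a central $\mathbb Z/2\mathbb Z$ acting trivially on isomorphism classes, it only affects the rigidification of the inertia and preserves both unramifiedness and properness of the diagonal. No step here requires any essentially new computation beyond those already carried out, so I do not anticipate any serious obstacle: the corollary is essentially a formal consequence of the two propositions on $\Delta_f$ together with the classical Deligne--Mumford result for $\cscurve$.
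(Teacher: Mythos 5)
Your argument is essentially the paper's (implicit) proof: the two Propositions establishing that the relative diagonal $\Delta_f$ is unramified and proper, combined with the classical facts about $\Delta_{\cscurve}$ and the earlier Corollary that $\csspinm$ is a finite-type quasi-separated Artin stack, give properness and unramifiedness of the absolute diagonal via the standard factorization $\Delta_{\csspinm}=j\circ\Delta_f$ where $j$ is the base change of $\Delta_{\cscurve}$, from which separatedness and the Deligne--Mumford property follow. Your prose description of this factorization is correct, but the displayed diagram is not: the bottom-right entry $\cscurve\times_{\cscurve}\cscurve$ is simply $\cscurve$ and $f\times f$ does not map $\csspinm\times\csspinm$ there; the Cartesian square exhibiting $\csspinm\times_{\cscurve}\csspinm\to\csspinm\times\csspinm$ as a base change of $\Delta_{\cscurve}$ should have $\cscurve$ at the bottom left and $\cscurve\times\cscurve$ at the bottom right.
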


\subsection{Deformation Theory of $\csspinm$}

In this section, we discuss the deformation theory of $\csspinm$. We begin with the following proposition 

\begin{proposition}\label{Deform_Spin_Moduli}
	Suppose that $\mbb{K}$ is an algebraically-closed field, $p$ is a $\mbb{K}$-point of $\csspinm$, represented by a punctured spin curve $(\mscr{C},\mbs{\mscr{D}},\mscr{E},b)$ over $\mbb{K}$. Let $\{N_1,\cdots,N_r\}$ be the set of nodes where $\mscr{E}$ is not locally free, then the universal deformation of $(\mscr{C},\mbs{\mscr{D}},\mscr{E},b)$ is the pull-back of the universal deformation of punctured stable curve $(\mscr{C},\mbs{\mscr{D}})$ under homomorphism
	\begin{equation}
	\mathfrak{o}_{\mbb{K}}[\![ t_1,\cdots,t_r,t_{r+1},\cdots,t_{3\sg-3+\sn}]\!]\to \mathfrak{o}_{\mbb{K}}[\![ \tau_1,\cdots,\tau_r,t_{r+1},\cdots,t_{3\sg-3+\sn}]\!],
	\end{equation}
	$\mfk{o}_{\mbb{K}}$ is the Cohen ring with residue field $\mbb K$ {\normalfont \cite{DeligneMumford1969a}}, by sending $t_i$ to $\tau_i^2$ for $i\in \{1,2,\cdots, r\}$. 
\end{proposition}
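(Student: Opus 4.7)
The plan is to reduce the computation to local deformation problems at the $r$ nodes $N_1,\dots,N_r$ where $\mscr{E}$ is not locally free, and then apply Faltings' local classification (Theorem~\ref{Faltings}). First I would choose coordinates $t_1,\dots,t_{3\g-3+\sn}$ on the universal deformation base of $(\mscr{C},\mbs{\mscr{D}})$ so that $t_i$ for $i=1,\dots,r$ is the smoothing parameter of the node $N_i$; this decomposition is classical, using the short exact sequence that separates node-smoothing deformations from deformations of the partial normalization. Since the forgetful morphism $f:\csspinm\to\cscurve$ has been shown to be relatively representable with affine fibers, it suffices to describe the formal completion of this fiber at the point $p$.

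Away from $N_1,\dots,N_r$ the sheaf $\mscr{E}$ is locally free of rank one, so $(\mscr{E},b)$ defines a section of the relative Picard stack of square roots of $\omega_{\mscr{C}/T}(\sum_im_i\mscr{D}_i)$, and this stack is \'etale over $\cscurve$ locally near $p$. Thus, along any infinitesimal deformation of $(\mscr{C},\mbs{\mscr{D}})$, the locally-free part of $(\mscr{E},b)$ extends uniquely up to isomorphism and contributes no new deformation parameters. Consequently, the entire formal deformation content is concentrated at the nodes $N_1,\dots,N_r$.

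At each $N_i$, the completion of $(\mscr{C},\mscr{E})$ over $\mbb K$ is $\mbb K[\![x,y]\!]/(xy)$ with $\mscr{E}\cong E(0,0)$ by Theorem~\ref{Faltings}. For an Artinian local $\mfk{o}_{\mbb K}$-algebra $A$ and a lift of this germ with smoothing parameter $t_i\in\mathfrak{m}_A$, the local ring becomes $A[\![x,y]\!]/(xy-t_i)$ and, again by Theorem~\ref{Faltings}, any lift of $\mscr{E}$ has the form $E(p_i,q_i)$ with $p_iq_i=t_i$, unique modulo the rescaling $(p_i,q_i)\mapsto(up_i,u^{-1}q_i)$ for $u\in A^{\times}$. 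The decisive step is to impose that the structure map $b:\mscr{E}^{\otimes 2}\to\omega_{\mscr{C}/T}(\sum m_j\mscr{D}_j)$ also extends. Using the Cornalba-type blow-up identification $E(p,q)\cong f_{*}\mcal{O}(1)$ on $\Proj A[z,w]/(xz-pw,qz-yw)$, I would compute $E(p_i,q_i)^{\otimes 2}$ together with its induced map to the local generator of $\omega_{\mscr{C}/A}$ at the node, and check that a non-degenerate extension $b_A$ exists if and only if $(p_i,q_i)$ can be normalized by the unit rescaling to $(\tau_i,\tau_i)$ for some $\tau_i\in A$ satisfying $\tau_i^2=t_i$. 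The residual sign ambiguity $\tau_i\mapsto-\tau_i$ is realized by the global spin-curve automorphism $(\mbb{I},-1)$, which is part of the automorphism group of the object and does not reduce the set of deformations.

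Gluing these local analyses with the trivial contributions from the locally-free locus then identifies the universal deformation ring of $(\mscr{C},\mbs{\mscr{D}},\mscr{E},b)$ with $\mfk{o}_{\mbb K}[\![\tau_1,\dots,\tau_r,t_{r+1},\dots,t_{3\g-3+\sn}]\!]$, and the universal family with the pullback of the universal family of $(\mscr{C},\mbs{\mscr{D}})$ under $t_i\mapsto\tau_i^2$. The main obstacle is the local computation at $N_i$: one must carefully compute $E(p,q)\otimes E(p,q)$ and its canonical pairing with $\omega$ near a smoothed node, and verify that compatibility with $b$ forces $p_iq_i$ to be a square in $A$ rather than an arbitrary element with product $t_i$. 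A clean way is to pass to the blow-up $\mbb{P}(\mscr{E})$, where the spin structure is a genuine line bundle $\mcal{O}(1)$ and $\mcal{O}(1)^{\otimes 2}$ admits an evident map to $\omega$, then descend along $f$; this translates the square-root constraint directly into the equation $\tau_i^2=t_i$ and pins down the unit rescaling up to the $\mbb Z/2$-automorphism.
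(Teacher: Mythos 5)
Your overall route matches the paper's: localize at the nodes $N_1,\dots,N_r$, invoke Faltings' classification $E(p,q)$, deform the locally-free part uniquely via a $\mu_2$-argument, and glue. Where you are actually more explicit than the paper is the local step at each node. The paper asserts that Faltings gives a parameter $\tau_i$ with $\tau_i^2=t_i$, but Theorem~\ref{Faltings} by itself classifies lifts of $E(0,0)$ over $A[\![x,y]\!]/(xy-t_i)$ by pairs $(p_i,q_i)$ with $p_iq_i=t_i$ modulo the rescaling $(p_i,q_i)\mapsto(up_i,u^{-1}q_i)$, $u\in A^{\times}$; the squaring only appears once the structure map is required to lift as well. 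Your plan to extract $\tau_i^2=t_i$ from compatibility with $b$ is exactly right and it does close: dualizing the two-term free presentation of $E(p,q)$ gives $\underline{\Hom}(E(p,q),\omega)\cong E(q,p)$ locally, so lifting the isomorphism $a$ of Lemma~\ref{Equiv_a_b} forces $E(p_i,q_i)\cong E(q_i,p_i)$, hence $q_i/p_i\in A^{\times}$ by Faltings' uniqueness, and completeness of $A$ together with algebraic closedness of the residue field lets you rescale to $(\tau_i,\tau_i)$ with $\tau_i^2=p_iq_i=t_i$. So the computation you flag as the main obstacle is fine and in fact makes the paper's terse assertion precise.

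There is, however, a genuine gap at the final gluing stage. Once you have the local lift at each node and the unique lift of $(\mscr{E},b)$ over the open locus $U=\mscr{C}\setminus\{N_1,\dots,N_r\}$ where $\mscr{E}$ is locally free, you still have to show these glue to a unique global $(\mscr{E}_A,b_A)$; ``gluing these local analyses'' names the claim, not the proof. The paper settles it by formal descent along the punctured formal neighborhoods of the $N_i$: the transition function of the twisted dualizing sheaf across each node lies in $A(\!(x)\!)^{\times}$, and because $2$ is invertible in $A$ its square root (the gluing datum for the spinor sheaf) lifts uniquely from the residue field to $A(\!(x)\!)^{\times}$. You should supply an argument of this type. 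Separately, rephrase ``it suffices to describe the formal completion of this fiber at $p$'': the fiber of $f$ over a closed point of $\cscurve$ is zero-dimensional, so its completion at $p$ carries no information; what is wanted is the formal completion of $\csspinm$ at $p$ together with the morphism $f$ induces to the formal completion of $\cscurve$ at $f(p)$.
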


\begin{proof}
	Recall that it is shown in \cite{DeligneMumford1969a} that the universal deformation of $(\mscr{C},\mbs{\mscr{D}})$ is
	\begin{equation*}
	R\equiv\mathfrak{o}_k[\![t_1,\cdots,t_{3\sg-3+\sn}]\!].
	\end{equation*}
	At a node of $\mscr{C}$, the complete local ring of the universal deformation of $\mscr{C}$ is $$R\dol x,y\dor /(xy-t_j),$$for some index $j$. According to the Faltings' classification theorem (Theorem \ref{Faltings}), any deformation of the completion of $\mscr{E}$ at $N_i$ is uniquely determined by a parameter $\tau_i$ such that $\tau_i^2=t_i$. On the other hand, for any Artinian local ring $A$ with residue field $\mbb{K}$, and choose a deformation of $(\mscr{C},\mbs{\mscr{D}})$ with base scheme $\Spec A$, denote it by $(\mscr{C}_A,\mbs{\mscr{D}}_A)$, we claim that $\mscr{E}|_{\mscr{C}-\{N_1,\cdots,N_r\}}$ has a unique deformation to $\mscr{C}_A-\{N_1,\cdots,N_r\}$ as a square root of $\omega_{\mscr{C}_A}(\sum_im_i\mbs{\mscr{D}}_{A,i})$. This can be proven as follows. First we notice that there is an exact sequence\footnote{Note that $U_{\text{\' et}}$ means the we equip $U$ with \'etale topology, and cohomologies are taken with respect to the \' etale topology of $U$. $\Pic (U)$ is the Picard group of $U$.}
	\begin{center}
		\begin{tikzcd}
		\H ^1(U_{\text{\' et}}, \mu_2)\arrow[r] &\Pic (U)\arrow[r,"2"] &\Pic (U)\arrow[r]&\H ^2(U_{\text{\' et}}, \mu_2),
		\end{tikzcd}
	\end{center}
	where $U=\mscr{C}-\{N_1,\cdots,N_r\}$, $\mu_2$ is the sheaf of the $2$\textsuperscript{nd} roots of unity, and the first and the last terms are $2$-torsion groups. There is an exact sequence on $U_A$, the open subscheme of $\mscr{C}_A$ with underlying topological space $U$, as well. They are related by restriction map:
	\begin{center}
		\begin{tikzcd}
		\H ^1(U_{\text{\' et}}, \mu_2)\arrow[r] \arrow[d,equal] &\Pic (U_A)\arrow[d,"\text{res}",two heads]\arrow[r,"2"] &\Pic (U_A)\arrow[d,"\text{res}",two heads]\arrow[r]&\H ^2(U_{\text{\' et}}, \mu_2)\arrow[d,equal] \\
		\H ^1(U_{\text{\' et}}, \mu_2)\arrow[r] &\Pic (U)\arrow[r,"2"] &\Pic (U)\arrow[r]&\H ^2(U_{\text{\' et}}, \mu_2).
		\end{tikzcd}
	\end{center}
	Notice that the kernel of restriction map, $\ker (\text{res})$ has a composition series of quotient of $\mbb K$-linear spaces where $2$ is invertible. Since $\omega_{\mscr{C}}(\sum_im_i\mbs{\mscr{D}}_{i})$ has a square root, from the commutative diagram above, we see that $\omega_{\mscr{C}_A}(\sum_im_i\mbs{\mscr{D}}_{A,i})$ also has a square root, we want to show that there is a one-to-one correspondence between square roots of $\omega_{\mscr{C}_A}(\sum_im_i\mbs{\mscr{D}}_{A,i})$ and square roots of $\omega_{\mscr{C}}(\sum_im_i\mbs{\mscr{D}}_{i})$. Suppose that there are two line bundles on $\mscr{C}_A$ denoted by $\mathcal L$ and $\mathcal L'$, both square to $\omega_{\mscr{C}_A}(\sum_im_i\mbs{\mscr{D}}_{A,i})$, and their restrictions to $\mscr{C}$ are isomorphic, then $\mathcal L^{-1}\otimes \mathcal L'\in\ker (\text{res})$. However, $\mathcal L^{-1}\otimes \mathcal L'$ is in the image of $\H ^1(U_{\text{\' et}},\mu_2)$ which is $2$-torsion, so $\mathcal L^{-2}\otimes \mathcal L'^{2}\cong \mathcal O_{U_A}$. This implies that $\mathcal L^{-1}\otimes \mathcal L'\cong \mathcal O_{U_A}$ since $2$ is invertible in $\ker (\text{res})$. Thus we see that there is a unique deformation of $\mscr{E}|_{U}$.
	
	Finally, we are going to show that given any set of deformations of the completion of $\mscr{E}$ at $\{N_1,\cdots,N_r\}$, there exists a unique deformation of $\mscr{E}$ which gives rise to deformations of the completion of $\mscr{E}$ at $\{N_1,\cdots,N_r\}$. This is achieved by observing that all we need is the formal gluing at generic points of $\widehat{\mathcal O}_{N_i,\mscr C_A}$, the completion of ${\mathcal O}_{N_i,\mscr C_A}$ at the maximal ideal. Note that $\omega_{\mscr{C}_A}(\sum_im_i\mbs{\mscr{D}}_{A,i})$ naturally carries a formal gluing data, which is a set $\{a_1,b_1,\cdots,a_r,b_r\}\in A(\!(x)\!)^{\times 2r}$. Since we already have formal gluing data coming from $\mscr{E}$ when restrict to $\mscr{C}$, i.e. a set of square roots of $\{\bar a_1,\bar b_1,\cdots,\bar a_r,\bar b_r\}\in \mbb{K}(\!(x)\!)^{\times 2r}$ where $\bar a_i$ (resp. $\bar b_i$) is $a_i$ (resp. $b_i$) modulo the maximal ideal of $A$. Since $2$ is invertible in $A$, square roots lift uniquely in $A(\!(x)\!)$, hence there exists a unique deformation of formal gluing data. 
	
	To sum up, we have shown that formal deformation of $\mscr{E}$ is uniquely determined by deformations of completions $\widehat{\mscr{E}}_{N_i}$, which is determined by a square root of $t_i$. Hence the universal deformation of $(\mscr{C},\mbs{\mscr{D}},\mscr{E},b)$ is the pull-back of the universal deformation of punctured stable curve $(\mscr{C},\mbs{\mscr{D}})$ under homomorphism
	\begin{equation}
	\mathfrak{o}_{\mbb{K}}[\![ t_1,\cdots,t_r,t_{r+1},\cdots,t_{3\sg-3+\sn}]\!] \to \mathfrak{o}_{\mbb{K}}[\![ \tau_1,\cdots,\tau_r,t_{r+1},\cdots,t_{3\sg-3+\sn}]\!],
	\end{equation}
	by sending $t_i$ to $\tau_i^2$ for $i\in \{1,2,\cdots, r\}$.
\end{proof}

\noindent We denote the preimage of $\scurve$ under the natural projection $f:\csspinm\to\cscurve$ by $\sspin$, then we can easily see from the proposition \ref{Deform_Spin_Moduli} that

\begin{corollary}\label{cor:smoothness of compactified spin moduli}
	$\csspinm$ is a smooth stack, and contains $\sspin$ as an open dense substack. Moreover $f:\csspinm\to \cscurve$ is flat and quasi-finite.
\end{corollary}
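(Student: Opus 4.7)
The plan is to bootstrap everything from the explicit description of the universal deformation ring provided by Proposition \ref{Deform_Spin_Moduli}. Since we already know from the preceding corollary that $\csspinm$ is a separated Deligne-Mumford stack (locally finite type over $\Spec \mathbb Z[1/2]$), the four assertions are all local statements at geometric points and can be checked after passing to the completion of the strictly local ring.

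First, for smoothness, I would fix a geometric point $p=(\mscr C,\mbs{\mscr D},\mscr E,b)$ with $r$ nodes where $\mscr E$ fails to be locally free. Proposition \ref{Deform_Spin_Moduli} identifies the complete local ring of $\csspinm$ at $p$ with $\mathfrak o_{\mbb K}[\![\tau_1,\dots,\tau_r,t_{r+1},\dots,t_{3\sg-3+\sn}]\!]$, which is a regular local ring of the same dimension as the corresponding completion of $\cscurve$ at $(\mscr C,\mbs{\mscr D})$. Hence $\csspinm$ is formally smooth at every geometric point, which combined with local finite presentation gives smoothness.

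Next, for flatness and quasi-finiteness of $f:\csspinm\to\cscurve$, I would read off the ring map on completions given by $t_i\mapsto\tau_i^2$ for $i\le r$ and $t_j\mapsto t_j$ for $j>r$. This makes the target a free (hence flat) module of rank $2^r$ over the source and shows that the closed fiber of the completed map is Artinian, so $f$ is flat and quasi-finite at each geometric point; this propagates to $f$ itself by the standard criteria for flatness and quasi-finiteness on locally Noetherian Deligne–Mumford stacks.

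For openness of $\sspin\subset\csspinm$, I would observe that $\sspin=f^{-1}(\scurve)$ and $\scurve$ is open in $\cscurve$. For density, the key point is that in the completed local ring at any boundary point, the locus where $\mscr E$ is locally free at each of the $r$ prescribed nodes is exactly the open subscheme $\{\tau_1\cdots\tau_r\neq 0\}$, which is dense; moreover, once $\mscr E$ becomes locally free, standard deformation theory of the underlying stable curve allows us to smooth all remaining nodes simultaneously by varying the $t_j$ parameters, landing inside $\sspin$. Thus every geometric point of $\csspinm$ lies in the closure of $\sspin$. The main subtlety I expect here is handling nodes of $\mscr C$ \emph{not} in $\{N_1,\dots,N_r\}$ (i.e.\ nodes where $\mscr E$ is already locally free), but this is already taken care of by the flatness of $f$, because such nodes correspond to variables $t_j$ with $j>r$ which we can smoothly deform. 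Once density is established, combining smoothness of $\csspinm$ with the preceding representability and separatedness results packages the full statement of Corollary \ref{cor:smoothness of compactified spin moduli}.
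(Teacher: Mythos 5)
Your proof is correct and follows exactly the route the paper intends: the paper leaves Corollary \ref{cor:smoothness of compactified spin moduli} as an immediate consequence of Proposition \ref{Deform_Spin_Moduli}, reading smoothness off the regularity of the deformation ring $\mathfrak o_{\mbb K}[\![\tau_1,\dots,\tau_r,t_{r+1},\dots,t_{3\sg-3+\sn}]\!]$, flatness and quasi-finiteness off the map $t_i\mapsto\tau_i^2$ being finite free of rank $2^r$, and density of $\sspin$ off the smooth locus being the complement of a divisor. One small clarification you should make in the density step: the open locus you want is not "where $\mscr E$ is locally free at the $r$ nodes" but where \emph{all} nodes are smoothed, i.e.\ the locus $\{\tau_1\cdots\tau_r\cdot t_{r+1}\cdots t_k\neq 0\}$ with $t_{r+1},\dots,t_k$ the smoothing parameters of the nodes where $\mscr E$ was already locally free; this is still the complement of a divisor in a regular local ring and hence dense, so the conclusion is unchanged.
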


\subsection{Properness of $\csspinm$}

In this section, we prove the following proposition.

\begin{proposition}
	The projection $f:\csspinm\to \cscurve$ is proper.
\end{proposition}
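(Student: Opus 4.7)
The plan is to verify the valuative criterion of properness for the quasi-finite, separated, finite-type morphism $f:\csspinm\to \cscurve$. Let $R$ be a discrete valuation ring with fraction field $K$, residue field $k$, and uniformizer $\pi$; given a family $(\mscr{C},\mbs{\mscr{D}})/\Spec R$ of $\n$-punctured stable curves and a punctured spin structure $(\mscr{E}_K,b_K)$ on the generic fiber $\mscr{C}_K$, the goal is to produce, after possibly a finite extension of $R$, an extension $(\mscr{E},b)$ to a punctured spin structure over all of $\Spec R$. Since $\csspinm$ is already a separated Deligne--Mumford stack of finite type over $\cscurve$, this suffices.

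First, I would extend $\mscr{E}_K$ to a rank-one relatively torsion-free sheaf $\mscr{E}$ on $\mscr{C}$ of the correct degree $\g-1+N/2$, flat over $R$. Away from the nodes of the special fiber $\mscr{C}_k$, the total space is regular of dimension two and $\mscr{E}_K$ is a line bundle which extends across the smooth locus. Near a node of $\mscr{C}_k$, the complete local ring has the form $R\dol x,y\dor/(xy-\pi')$ for some $\pi'\in R$ with $v(\pi')>0$, and Faltings' classification (Theorem \ref{Faltings}) describes the rank-one torsion-free extensions as either locally free or isomorphic to $E(p,q)$ with $pq=\pi'$ and $p,q\in R$. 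After a finite ramified extension of $R$ (to accommodate the square roots dictated by Proposition \ref{Deform_Spin_Moduli} and to represent the generic Faltings datum by elements of $R$), all such options are realizable.

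Second, I would extend the structure map $b_K$ to a homomorphism $b:\mscr{E}^{\otimes 2}\to\omega_{\mscr{C}/R}\!\left(\sum_i m_i\mscr{D}_i\right)$. Passing to the equivalent datum of Lemma \ref{Equiv_a_b}, it suffices to extend an isomorphism $a_K:\mscr{E}_K\to \underline{\Hom}(\mscr{E}_K,\omega_{\mscr{C}_K/K}(\sum_i m_i\mscr{D}_{i,K}))$ to an isomorphism $a$ over $R$. By Lemma \ref{Dual_of_TorsionFree} applied to the extended $\mscr{E}$, the target $\underline{\Hom}(\mscr{E},\omega_{\mscr{C}/R}(\sum_im_i\mscr{D}_i))$ is itself a rank-one relatively torsion-free, hence $R$-flat, sheaf of the same relative degree as $\mscr{E}$; any $K$-valued homomorphism between $R$-flat sheaves clears $\pi$-denominators uniquely and therefore extends to an $R$-valued homomorphism $a$. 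Finally, one must check that the extended $a$ is an isomorphism on the smooth locus of $\mscr{C}_k$. Since $a_K$ is an isomorphism, $\coker(a)$ is supported on the special fiber; a degree count on each irreducible component of $\widetilde{\mscr{C}}_k$ (the partial normalization at nodes where $\mscr{E}_k$ fails to be locally free) shows that the source and target have matching degrees on each component, so $\coker(a)$ must vanish on the smooth locus, forcing $a_k$ to be an isomorphism there.

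The main obstacle will be reconciling the local Faltings data $(p,q)$ at each node with the spin-compatibility of the pairing $b$. Different choices of extension redistribute degree among the components of the blow-up $\mathbb{P}(\mscr{E})$ in the sense of Cornalba's model, and only those for which the component-wise degree of $\mscr{E}$ is $\g-1+N/2$ balanced correctly across nodes admit a compatible $b$. This is the reason one must pass to a ramified extension of $R$: the square-root structure of the universal deformation in Proposition \ref{Deform_Spin_Moduli} means that after base change by $t_i\mapsto\tau_i^2$ the node parameter admits a canonical square root, which pins down a Faltings extension compatible with the self-duality $a_K$. Once that is arranged, the degree count in the previous paragraph completes the verification.
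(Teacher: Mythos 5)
Your overall strategy—verify the valuative criterion, first extend $\mscr{E}_K$ to a rank-one relatively torsion-free sheaf, then extend the pairing via Lemma \ref{Equiv_a_b} and close with a degree count—is the same as the paper's. The route to the sheaf extension is a bit different (you work locally at nodes via Faltings' classification, whereas the paper extends $\mscr{E}_\eta$ to a free module over the DVRs $R_i$ at the generic points of the special fiber, takes an open $U$ whose complement has codimension two, and defines $\mscr{E}=j_*\mscr{E}_U$, which is automatically coherent, $S_2$, flat, and relatively torsion-free without any gluing bookkeeping). That difference is cosmetic; the real issue lies in your final step.

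The degree count as written does not establish that $a_k$ is an isomorphism. After clearing $\pi$-denominators you obtain a morphism $a:\mscr{E}\to\underline{\Hom}\bigl(\mscr{E},\omega_{\mscr{C}/R}\bigl(\textstyle\sum_im_i\mscr{D}_i\bigr)\bigr)$ that is not globally divisible by $\pi$, but when the special fiber is reducible this does not rule out $a_k$ vanishing identically on some irreducible component. If $a_k$ vanishes on a component, the Euler characteristics of source and target still match (the cokernel is the whole target on that component), so ``matching degrees'' does not force the cokernel to be supported at finitely many points. The missing step is the one the paper makes explicit: one must first arrange $a$ to be an isomorphism at each generic point $\xi_i$ of $\mscr{C}_k$, and \emph{then} the degree argument of Lemma \ref{Equiv_a_b} applies. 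This is achieved by adjusting the extension $\mscr{E}$ at each $R_i$—replacing $\mscr{E}|_{U_i}$ by a $\pi^{n_i}$-twist—which shifts the valuation of $a$ at $\xi_i$ by $-2n_i$. This is also the genuine source of the need for a finite ramified base extension: the valuation of $a$ at $\xi_i$ can only be corrected by even amounts by twisting $\mscr{E}$, so when that valuation is odd one must pass to a quadratic ramified extension of $R$ to halve it. Your stated reason (accommodating square roots in Proposition \ref{Deform_Spin_Moduli}, realizing the Faltings datum in $R$) is not quite the point: the obstruction lives in the scaling ambiguity of the pairing $b$, and is visible already on a smooth family with a trivial line bundle and $b_\eta=\pi\cdot\mathrm{id}$. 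Fixing both issues—the generic-point normalization of $a$ and the correct parity justification for the ramified extension—would make your proof complete.
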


\begin{proof}
	Consider the following data: a complete discrete valuation ring (DVR) $(R,\pi )$ with algebraically-closed residue field $\mbb{K}$ and generic point $\Spec \mbb{K}(\eta)$, a relative punctured curve $(\mscr{C},\mbs{\mscr{D}})$ on $\Spec R$ such that $\mscr{C}_{\eta}$ is smooth, and a relative punctured spin curve $(\mscr{C}_{\eta},\mbs{\mscr{D}}_{\eta},\mscr{E}_{\eta},b_{\eta})$ on $\Spec  \mbb{K}(\eta)$. We are about to show that there exists a pair $(\mscr{E},b)$ extending $(\mscr{E}_{\eta},b_{\eta})$ on the generic fiber which makes $(\mscr{C},\mbs{\mscr{D}},\mscr{E},b)$ a relative punctured spin curve.
	
	Denote the set of generic points of the special fiber $\mscr{C}_s$ by $\{\xi_1,\cdots ,\xi_e\}$, and the local ring of $\mscr{C}$ at $\{\xi_1,\cdots ,\xi_e\}$ is $\{R_1,\cdots ,R_e\}$. Note that $R_i$ are DVRs. Since $\mscr{E}_{\eta}$ is torsion-free, its localiztion at $R_{i\eta}$ extends naturally to a free module of rank one on $R_i$, and can be furthermore extended to an open neighborhood of $\Spec R_i$. Assume that those open neighborhoods do not intersect with each other. Let $j:U\hookrightarrow \mscr{C}$ be the union of $\mscr{C}_{\eta}$ with those neighborhoods, and the extension of $\mscr{E}_{\eta}$ be $\mscr{E}_U$. Since $\mscr{C}$ is Cohen-Macaulay\footnote{This basically means that the depth (maximal length of regular sequence) of every local ring is equal to its Krull dimension.}, $\mscr{E}_U$ is torsion-free, and $C-U$ is of codimension $2$, so $j_*\mscr{E}_U$ is coherent, and satisfies the Serre's $S_2$ condition. Define $\mscr{E}:=j_*\mscr{E}_U$. It is torsion-free over $R$ hence flat over $R$. As a result, $\mscr{E}_s$ satisfies the Serre's $S_1$ condition, i.e. it is torsion-free. So $\mscr{E}$ is relatively torsion-free of rank one, and has the same degree as $\mscr{E}_{\eta}$.
	
	For the morphism $b$, it is equivalent to find an extension of the isomorphism $a_{\eta}:\mscr{E}_{\eta}\cong \underline{\Hom}(\mscr{E}_{\eta},\omega _{\mscr C_{\eta}}(\sum_i m_i\mscr{D}_i))$ (Lemma \ref{Equiv_a_b}). Notice that $a_{\eta}$ extends to $U$ (up to a shrinking of $U$), and up to multiplication by a power of uniformizer $\pi$ on each neighborhood $U_i$, we can assume that restriction of $a$ to $R_i$ is isomorphism. By adjunction
	$$\Hom\left(\mscr{E},j_*\underline{\Hom}\left(\mscr{E}_{U},\omega _{\mscr C_{U}}\left(\sum_i m_i\mscr{D}_i\right)\right)\right)\cong\Hom\left(\mscr{E}_{U},\underline{\Hom}\left(\mscr{E}_{U},\omega _{\mscr C_{U}}\left(\sum_i m_i\mscr{D}_i\right)\right)\right).$$ 
	There is a globally-defined homomorphism $a:\mscr{E}\to \underline{\Hom}(\mscr{E},\omega _{\mscr C}(\sum_i m_i\mscr{D}_i))$ which is isomorphism on $R_i$. The same argument in Lemma \ref{Equiv_a_b} shows that $a$ is an isomorphism. This completes the proof.
\end{proof}

\subsection{Connected Components of $\csspinm$}
We are going to prove the last statement of Theorem \ref{Main}: the number of connected components of geometric fibers of $\csspinm\to \Spec \mathbb Z[1/2]$ is
\begin{align*}
\gcd (2,m_1,m_2,\cdots,m_n).
\end{align*}
Since $\csspinm$ is smooth and proper over $\Spec \mathbb Z[1/2]$, the number of connected components of its geometric fibers is locally constant, so it is enough to prove the statement for the base scheme being $\Spec \mathbb C$. Using Remark \ref{Relation_Different_m}, we can assume that $m_i$ are either zero or one, so the statement is equivalent to:
\begin{proposition}\label{prop:the number of connected components of compactified spin moduli}
	The number of connected components of $\csspinm$ over $\Spec \mathbb C$ is $2$ if all $m_i$ are zero, and $1$ if some of $m_i$ is $1$.
\end{proposition}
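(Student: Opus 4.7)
The plan is to first reduce to the smooth open substack using the smoothness and density result (Corollary \ref{cor:smoothness of compactified spin moduli}): since $\csspinm$ is smooth over $\Spec \mathbb C$ and $\sspin$ is open dense, the inclusion $\sspin \hookrightarrow \csspinm$ induces a bijection on the sets of connected components. Thus it suffices to count components of $\sspin$. The projection $\sspin \to \scurve$ is a finite \'etale cover (its \'etaleness on the smooth locus follows from the deformation-theoretic computation in Proposition \ref{Deform_Spin_Moduli}) of degree $2^{2\sg}$, and $\scurve$ is connected, so the number of components of $\sspin$ equals the number of orbits of the monodromy action of $\pi_1(\scurve)$, which factors through the pure mapping-class group of the $\sn$-punctured genus-$\sg$ surface, on the fiber, a torsor under $J_C[2] \cong (\mathbb Z/2)^{2\sg}$.

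In the case $\mbf m = 0$, where $\mscr E^{\otimes 2} \cong \omega_C$, I invoke Mumford's theorem that the parity $h^0(C, \mscr E) \pmod 2$ (the Arf invariant) is deformation-invariant, giving a well-defined surjection $\pi_0(\sspin) \twoheadrightarrow \mathbb Z/2$, so there are at least two components. The classical theorem of Atiyah and Johnson asserts that the action of the mapping-class group of a closed surface on its set of spin structures has exactly two orbits, distinguished by the Arf invariant; the transitivity-within-parity statement carries over to NS punctures since adding NS punctures only enlarges the acting mapping-class group. Hence $\sspin$ has exactly two components in this case.

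In the case some $m_j = 1$, the Arf invariant ceases to be deformation-invariant and the claim is that $\sspin$ is connected. The crucial move is to drag a Ramond puncture $D_j$ around a non-separating simple closed curve $\gamma$ on $C$: the associated loop in $\scurve$ acts on the fiber by tensoring the spin structure with the 2-torsion line bundle $\mathcal L_\gamma \in J_C[2]$ corresponding to $\gamma$. This can be computed by tracking the cocycle data used to construct the square root of $\omega_C\left(\sum_i m_i D_i\right)$ as $D_j$ is isotoped around $\gamma$, using that the line bundle $\omega_C\left(\sum_i m_i D_i\right)$ itself is unchanged by this monodromy while the choice of square root picks up a sign along $\gamma$. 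Varying $\gamma$ over a symplectic basis of $H_1(C; \mathbb Z/2)$, the monodromy image contains all of $J_C[2]$, so the action on the fiber is transitive and $\sspin$ is connected.

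The principal technical obstacle is the explicit monodromy calculation in the last paragraph: one must verify carefully that the Ramond-puncture-dragging loop in $\scurve$ acts by the predicted $2$-torsion twist, which is a local analysis at the Ramond divisor together with a careful tracking of the sign ambiguity in the square-root data across the overlap. A cleaner but longer alternative is by induction on $3\sg - 3 + \sn$ using the boundary connectedness of Theorem \ref{the:connected components of stacks of spin curves}: the base cases ($\sg = 0$ with few punctures, where the degree count forces $\mscr E$ to be unique up to isomorphism) are handled by direct enumeration, and the inductive step exploits the key asymmetry established in Section \ref{subsubsec:gluing of R punctures} that smoothing an R-node is unbranched whereas smoothing an NS-node is doubly branched, so a one-parameter family smoothing an R-node on the boundary stratum patches together spin structures on either side into a single component, while in the pure-NS case the double branching precisely reproduces the $\mathbb Z/2$ parity splitting.
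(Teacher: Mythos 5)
Your main argument is correct and takes a genuinely different route from the paper. The paper proves the result from scratch by a degeneration-based induction: genus $0$ by degree counting, genus $1$ by computing the $\Sp_2(\mathbb F_2)$-action directly and by a degeneration argument for the Ramond case, and genus $\g>1$ by counting how even and odd spin structures split across various boundary strata (elliptic-tail and irreducible degenerations for $\mbf m=0$, elliptic-tail-carrying-one-$R$-puncture for $\mbf m\neq 0$). You instead invoke two external inputs: Atiyah--Johnson's classification of mapping-class-group orbits of spin structures on the closed surface (together with the observation that NS point-pushing acts trivially on $\mscr E$, so the monodromy of $\sspin\to\scurve$ factors through $\text{MCG}_\sg$) for the case $\mbf m=0$; and the point-pushing monodromy of a Ramond puncture for $\mbf m\neq 0$. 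Your approach is shorter and more conceptual where it applies, and makes the source of the extra connectivity in the Ramond case (the point-pushing twist) geometrically transparent; the paper's approach is self-contained and produces along the way the explicit degeneration combinatorics that are reused in Theorem~\ref{the:connected components of stacks of spin curves}.

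The ``principal technical obstacle'' you flag — that dragging the $R$-divisor $D_j$ around $\gamma$ twists $\mscr E$ by $\mathcal L_\gamma$ — is indeed a real step that must be carried out, but it is a standard Abel--Jacobi calculation: write $\mscr E(t)\otimes\mscr E(0)^{-1}$ as a square root of $\mathcal O_C(p(t)-D_j)$, note that the path $t\mapsto[p(t)-D_j]$ is a loop in $J_C$ with class $[\gamma]\in H_1(C;\mathbb Z)=H_1(J_C;\mathbb Z)$, and lift it through the multiplication-by-two isogeny to land at the $2$-torsion point $\mathcal L_\gamma$. So the claim holds and the argument closes. Do spell out the base case $\g=0$ with some $m_i=1$ (no non-separating curves exist, but $J_C[2]$ is trivial so connectedness is automatic), and note that the reduction $m_i\in\{0,1\}$ uses Remark~\ref{Relation_Different_m} as the paper does. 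The one genuine defect is in your alternative inductive argument: you propose to cite Theorem~\ref{the:connected components of stacks of spin curves} (boundary connectedness), but the proof of that theorem in Section~4 already uses the component structure of $\overline{\mscr{S}}_{\sg,\sns,\sra}$ established by the present proposition (it manipulates ``even spin structure component'' and ``odd spin structure component'' explicitly), so invoking it here would be circular. Drop the alternative, or if you want an inductive backup, reproduce the paper's degeneration counts directly without routing through Theorem~\ref{the:connected components of stacks of spin curves}.
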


\begin{remark}
	When all $m_i$ are zero, the number of connected components of $\csspinm$ is at least 2, since $\mscr{E}\mapsto \dim(H^0(\mscr{E})) \mod 2$ is locally-constant on $\sspin$ and both 1 and 0 show up as value of this function {\normalfont\cite{Mumford1971a}}. In fact it is not hard to show that $\mscr{E}\mapsto \dim(H^0(\mscr{E}))\mod 2$ is locally-constant on $\csspinm$, we give a sketch here: 
	
	\newl Assume that $(\mscr{C},\mbs{\mscr{D}},\mscr{E},b)$ is a punctured spin curve over algebraically-closed field $\mbb{K}$. Consider a divisor $\mathfrak a=\sum_{j=1}^N P_j$ consisting of $N$ distinct smooth points, such that each irreducible component has at least $\g$ points,  where $\g$ is the genus of the curve. Then an argument similar to the one in {\normalfont\cite{Mumford1971a}} shows that, $W_1\equiv \Gamma(\mscr{E}(\mathfrak{a}))$\footnote{$\Gamma$ is used to denote the space of global sections, which is the same as $\H^0$.} and $W_2\equiv \Gamma(\mscr{E}/\mscr{E}(-\mathfrak{a}))$ are subspaces of $V\equiv \Gamma(\mscr{E}(\mathfrak{a})/\mscr{E}(-\mathfrak{a}))$, and $W_1\cap W_2=\Gamma(\mscr{E})$. Moreover
	\begin{equation*}
	\dim W_1=\dim W_2=\frac{1}{2}\dim V=N.
	\end{equation*}
	Define the quadratic form $q$ on $V$ by
	\begin{equation*}
	q(\mathbf{s})=\sum_{j=1}^N\Res _{P_j}Q(s_i),
	\end{equation*}
	where $\mathbf{s}=(s_1,s_2,\cdots, s_N)$ is a collection of sections of $\mscr{E}(\mathfrak{a})_{P_j}/\mscr{E}(-\mathfrak{a})_{P_j}$, and $Q:\mscr{E}(\mathfrak{a})\to \omega_\mscr{C}(2\mathfrak{a})$ is the quadratic morphism induced by $b:\mscr{E}^{\otimes 2}\longrightarrow \omega_\mscr{C}$. The same argument in {\normalfont\cite{Mumford1971a}} shows that $W_1$ and $W_2$ are isotropic and of half the dimension of $V$, and hence the dimension modulo 2 of their intersection is locally constant, for the same reason in {\normalfont\cite{Mumford1971a}}.
	
	The spin structure $\mscr{E}$ with $h^0(\mscr{E})\mod 2=0$ {\normalfont (}resp. $h^0(\mscr{E})\mod 2=1${\normalfont)} is called even spin structure {\normalfont(}resp. odd spin structure{\normalfont)}. Recall that for smooth curve of genus $\g$, there are $2^{\sg-1}(2^{\sg}+1)$ even spin structures and $2^{\sg-1}(2^{\sg}-1)$ odd spin structures {\normalfont\cite{Mumford1971a}}.
\end{remark}

The strategy of the proof of Proposition \ref{prop:the number of connected components of compactified spin moduli} is the following: Utilizing the fact that $\cscurve$ (equivalently $\scurve$) is connected over $\Spec \mathbb C$ \cite{DeligneMumford1969a}, we investigate the monodromy action of $\pi_1(\scurve)$ (the mapping-class group of a genus-$\g$ Riemann surface with $\n$ punctures) on the fiber of $\sspin\to \scurve$, and prove by induction on $\g$ to reduce to $\g=1$ and $\g=0$, of which the proof is straightforward. We deal with the induction step first.

\begin{lem}
	Suppose that $\g>1$. If Proposition {\normalfont\ref{prop:the number of connected components of compactified spin moduli}} is true for genus from $1$ to $\g-1$, then Proposition {\normalfont\ref{prop:the number of connected components of compactified spin moduli}} is true for genus $\g$.
\end{lem}

\begin{proof}
	The idea is reducing to the boundary of $\cscurve$, and degenerate the curve into components with lower genus.
	
	\newl \textbf{Case $\mathbf m=\mathbf 0$:} We want to show that for a generic smooth curve of genus $\g$ with $\n$ punctures, its even spin structures (resp. odd spin structures) lie in the same connected component of $\overline{\mscr{S}}_{\sg,\sn}$. Consider two kinds of degenerations of this curve: 
	\begin{enumerate}
		\item $\mscr{C}=E\cup \mscr{C}'$ where $E$ is an elliptic curve, and $\mscr{C}'$ is smooth of genus $\g-1$;
		\item $\mscr{C}$ is irreducible with one node.
	\end{enumerate}
	
	For a degeneration of the first kind, it is easy to see that every spin structure $\mscr{E}$ comes from pushing forward spin structures on $E\amalg \mscr{C}'$, the disjoint union of the elliptic curve $E$ and the curve $\mscr{C}'$. The set of even spin structures is thus the disjoint union of two subsets of size $3\cdot 2^{\sg-2}(2^{\sg-1}+1)$ and $2^{\sg-2}(2^{\sg-1}-1)$, coming from pushing forward of even-even and odd-odd spin structures. We denote them by $A_1$ and $A_2$. Assuming that Proposition 7.8 holds for genus from $1$ to $\g-1$, we see that each one of these subsets lies in a connected component of $\overline{\mscr{S}}_{\sg,\sn}$. For a degeneration of the second kind, it is easy to see that every spin structure $\mscr{E}$ comes from either gluing square root of $\omega_{\widetilde {\mscr{C}}}(r+s)$ or pushing forward spin structure for $\omega_{\widetilde {\mscr{C}}}$, where $\widetilde {\mscr{C}}$ is the normalization of $\mscr{C}$, $r$ and $s$ are preimages of the node of $\mscr{C}$. As a result, the set of even spin structures is disjoint union of two subsets of size $2^{2\sg-2}$ and $2^{\sg-2}(2^{\sg-1}+1)$. We denote them by $B_1$ and $B_2$. Note that $B_2$ is the degeneration of twice as many spin structures of a generic smooth curve, since Proposition \ref{Deform_Spin_Moduli} shows that universal deformation of spin structure in $B_2$ is a double covering of universal deformation of $\mscr{C}$, and Proposition \ref{Aut_Field} show that the automorphism is trivial, so every spin structure from $B_2$ is the degeneration of a pair of spin structures. Using the assumption, we see that there are two subsets of even spin structures of a generic smooth curve of size $2^{2\sg-2}$ and $2^{\sg-1}(2^{\sg-1}+1)$, each lies in a connected component of $\overline{\mscr{S}}_{\sg,\sn}$. We denote them by $B_1$ and $B_2'$. Combining results from two kinds of degenerations, we see that $A_1$ must intersects both $B_1$ and $B_2'$, hence even spin structures of a generic smooth curve lie in a connected component of $\overline{\mscr{S}}_{\sg,\sn}$. For odd spin structures, it is essentially the same. The difference is that now $A_1$ has size $3\cdot 2^{\sg-2}(2^{\sg-1}-1)$, $A_2$ has size $2^{\sg-2}(2^{\sg-1}+1)$, corresponding to pushing forward of even-odd, and odd-even spin structures; $B_1$ has size $2^{2\sg-2}$, and $B_2$ has size $2^{\sg-2}(2^{\sg-1}-1)$, so $B_2'$ has size $2^{\sg-1}(2^{\sg-1}-1)$; the rest of the arguments is the same.
	
	\newl \textbf{Case $\mathbf m\neq \mathbf 0$:} Consider the following degeneration of a generic smooth punctured curve: 
	\begin{enumerate}
		\item[•] $\mscr{C}=E\cup \mscr{C}'$ where $E$ is an elliptic curve with 1 punctured point $\mscr{D}_i$ such that $m_i=1$, and $\mscr{C}'$ is smooth of genus $\g-1$ with $\n-1$ punctures.
	\end{enumerate}
	It is easy to see that every spin structure of $\mscr{C}$ is a line bundle and is the gluing of square roots of $\omega_{E}(\mscr{D}_i+r)$ and $\omega_{\mscr{C}'}(s+\sum_{j\neq i}m_j\mscr D_j)$, where $r$ and $s$ are preimage of the node under the projection $E\amalg \mscr{C}'\longrightarrow \mscr{C}$. Using the assumption, we see that all spin structures of $\mscr{C}$ lie in the same connected component of $\csspinm$.
\end{proof}

\begin{lem}
	Proposition {\normalfont\ref{prop:the number of connected components of compactified spin moduli}} is true for $\g=0$.
\end{lem}

\begin{proof}
	Since for smooth curves of genus zero, the spin structure is unique, so $\mscr{S}_{0,\sn}^{\mathbf{m}}\to \mscr{M}_{0,\sn}$ is an isomorphism and the connectedness is obvious.
\end{proof}

\begin{lem}
	Proposition {\normalfont\ref{prop:the number of connected components of compactified spin moduli}} is true for $\g=1$.
\end{lem}

\begin{proof}
	We fix an elliptic curve $E$ with $\n$ punctures, the set of spin structures is an $\mathbb F_2$-affine space modeled on $\H^1(E_{\text{\' et}},\mu_2)$, where $\H^1(E_{\text{\' et}},\mu_2)$ is the first \'etale cohomology group of $E_{\text{\' et}}$ with coefficients in the sheaf of the $2$\textsuperscript{nd} roots of unity. Then the monodromy action of $\pi_1(\mscr M _{1,\sn})=\text{MCG}_{1,\sn}$ on the set of spin structures quotient by transportation is a subgroup of $\SL _2(\mathbb F_2)=\Sp _2(\mathbb F_2)$. Note that the homomorphism $\text{MCG}_{1,\sn}\to \Sp _2(\mathbb F_2)$ factors through $\Sp _2(\mathbb Z)$, and this is well-known to be surjective \cite{MeeksPatrusky1978a}. Since $\Sp _2(\mathbb Z)\to \Sp _2(\mathbb F_2)$ is surjective, we see that $\text{MCG}_{1,\sn}\to \Sp _2(\mathbb F_2)$ is surjective.
	
	\newl \textbf{Case $\mathbf m= \mathbf 0$:} The space of spin structures is exactly $\H^1(E_{\text{\' et}},\mu_2)$, hence $\text{MCG}_{1,\sn}$ acts on by $\SL _2(\mathbb F_2)$ and it is easy to see that the only odd spin structure is the trivial one and even spin structures are permuted under $\SL _2(\mathbb F_2)$ action. As a result, even spin structures lie in the same connected component of $\mscr S _{1,\sn}$.
	
	\newl \textbf{Case $\mathbf m\neq \mathbf 0$:} We want to show that $\text{MCG}_{1,\sn}$ acts on the set of spin structures transitively. We first show that the action has no fixed point, in other word, for any spin structure $\mscr{E}$, there exists another spin structure $\mscr{E}'$ such that they are in the same connected component of $\mscr{S}_{1,\sn}^{\mathbf m}$.
	
	Consider the the degeneration $\mscr C=F_1\cup F_2$ where $F_i\cong\mathbb P^1$, they intersect at two nodes $r_1$ and $r_2$, there is one puncturing point $\mscr D_1\in F_1(\mathbb C)$ with $m_1=1$ and all other punctures are on the $F_2$. It is easy to see that there are two spin structures of $\mscr{C}$, $\mscr{E}_1$ and $\mscr{E}_2$, such that $\mscr{E}_i$ is the push-forward of the spin structure for $\omega_{\mscr{C}_i}(\sum_j m_j\mscr D_j)$ on curve $\mscr{C}_i$, where $\mscr{C}_i$ is the normalization of $\mscr{C}$ at the node $r_i$. Using Proposition \ref{Aut_Field}, we see that $\mscr{E}_i$ has trivial automorphism, hence according to Proposition \ref{Deform_Spin_Moduli}, $\mscr{E}_i$ is the degeneration of two spin structures of a generic smooth curve. As a result, every spin structure of a generic smooth curve share the connected component of $\mscr{S}_{1,\sn}^{\mathbf m}$ with another spin structure. Now, if the action of $\text{MCG}_{1,\sn}$ on the set of spin structures is not transitive, then the only possibility is that there are two pairs of spin structures, each pair is stabilized and permuted by $\text{MCG}_{1,\sn}$. However, this implies that the image of $\text{MCG}_{1,\sn}$ is of order at most $4$, which is absurd because a further quotient by transportation subgroup has order $6$. This concludes the proof of the case $\mathbf m\neq\mathbf 0$.
\end{proof}

\subsection{Gluing Punctures}\label{app: gluing marked points}
To simplify notation, we will use $\overline{\mscr{S}}_{\sg,\ns,\sra}$ to denote the moduli space of spin curves which is used to be called $\overline{\mscr{S}}_{\sg,\sn+|\mathbf m|}^{\mathbf m}$ , i.e. curves of genus $\g$ with $\n$ NS marked points and $\ra$ R marked points. Similarly for uncompactified version, $\mscr{S}_{\sg,\sn,\sra}:=\mscr{S}_{\sg,\sns+|\mathbf m|}^{\mathbf m}$. And we define the boundary of $\overline{\mscr{S}}_{\sg,\sns,\sra}$ to be $\overline{\mscr{S}}_{\sg,\sns,\sra}\setminus \mscr{S}_{\sg,\sns,\sra}$, with notation $\partial \overline{\mscr{S}}_{\sg,\sns,\sra}$.

Suppose that $\mscr C_1$ and $\mscr C_2$ are punctured spin curves over a base $S$, and that $\ns^1$ and $\ns^2$ are nonzero, then we can pick one NS divisor from each curve, say that the $i_1$'th NS divisor $\mscr D_{i_1}$ on $\mscr C_1$ and $i_2$'th NS divisor $\mscr D_{i_2}$ on $\mscr C_2$. The projection $\mscr D_{i_1}\to S$ is an isomorphism, same for $\mscr D_{i_2}$, so we can use these isomorphims to define a canonical identification between $\varphi:\mscr D_{i_1}\cong \mscr D_{i_2}$, and glue them to get a new curve $\mscr C=\mscr C_1\underset{\mscr D_{i_1}}{\cup_{\varphi}} \mscr C_2$, and the spin structure $\mathcal{E}$ on $\mscr C$ is the direct sum of pushforward of spin structures $\mscr{E}_1$ and $\mscr{E}_2$ via the gluing map $\mscr C_1\sqcup \mscr C_2\to C$. This construction is functorial hence giving rise to an operation on corresponding moduli spaces:
$$\mathfrak m^{\text{NS}} _{i_1,i_2}:\overline{\mscr{S}}_{\sg_1,\sns^1,\sra^1}\times \overline{\mscr{S}}_{\sg_2,\sns^2,\sra^2}\to \partial\overline{\mscr{S}}_{\sg_1+\sg_2,\sns^1+\sns^2-2,\sra^1+\sra^2}.$$
This can be defined for a single punctured spin curve with at least two NS puncturing divisors as well:
$$\mathfrak g^{\text{NS}}_{i,j}:\overline{\mscr{S}}_{\sg,\sns,\sra}\to \partial\overline{\mscr{S}}_{\sg+1,\sns-2,\sra}.$$
Note that these maps depends on the choice of $i_1$, $i_2$, $i$, and $j$, in an equivariant way. For example, if $\sigma_{i_1,i_1'}\in S_{\sns^1}$ switches $i_1$ and $i_1'$ (it acts $\overline{\mscr{S}}_{\sg_1,\sns^1,\sns^1}$ naturally), then we have $\mathfrak m _{i_1',i_2}=\mathfrak m^{\text{NS}} _{i_1,i_2}\circ \sigma_{i_1,i_1'}$.

\newl We move on to gluing R divisors: suppose that $\mscr C_1$ and $\mscr C_2$ are punctured spin curves over a base $S$, and that $m_1$ and $m_2$ are nonzero, we can pick one R puncture from each curve, $\iota_1:S\overset{\sim}{\rightarrow} \mscr D_{i_1}\subset C_1$ and $\iota_2:S\overset{\sim}{\rightarrow} \mscr D_{i_2}\subset C_2$, then $\iota_1^*\mscr E_1$ and $\iota_2^*\mscr E_2$ are two line bundles on $S$ with isomorphisms $b_1:(\iota_1^*\mscr E_1)^{\otimes 2}\cong \mcal O_S,b_2:(\iota_2^*\mscr E_2)^{\otimes 2}\cong \mcal O_S$. If we try to glue $\mscr C_1$ and $\mscr C_2$, surely we can still get a curve as in the NS case: $\mscr C=\mscr C_1\underset{\mscr D_{i_1}}{\cup_{\varphi}} \mscr C_2$. However, there is no canonical isomorphism between $\iota_1^*\mscr E_1$ and $\iota_2^*\mscr E_2$, every isomorphism can be multiplied by $-1$ to get another isomorphism. A way out is to adding a choice {\it by hand}, namely, consider the line bundle $\mcal L_{12}=\iota_1^*\mscr E_1^{\vee}\otimes\iota_2^*\mscr E_2$ with the isomorphism $b_{12}=b_1^{-1}\otimes b_2:\mcal L_{12}^{\otimes 2}\cong \mcal O_S$, then we have a morphism between schemes $f:\mbb V(\mcal L^{\vee}_{12})\to \mbb A^1_S$ by sending a section of $\mcal L_{12}$ to the image of its square under $b_{12}$. The scheme 
\begin{equation*}
G_{i_1,i_2}\equiv f^{-1}(\mathbf{1}),
\end{equation*}
parametrizes isomorphisms from $\iota_1^*\mscr E_1$ to $\iota_2^*\mscr E_2$ which commutes with $b_i$. Locally on $S$, after choosing trivializations of $\iota_1^*\mscr E_1,\iota_2^*\mscr E_2$, $f$ becomes the morphisms $\mbb A^1\to \mbb A^1$ by sending a section to its square, hence locally $G_{i_1,i_2}$ is disjoint union of two copies of the base scheme, and globally there is a $\mbb Z/2$-action on $G_{i_1,i_2}$ via multiplication by $-1$, thus $G_{i_1,i_2}$ is a $\mbb Z/2$-torsor.

\newl By base change to the scheme $G_{i_1,i_2}$, there is a universal isomorphism $\alpha:\iota_1^*\mscr E_1\cong\iota_2^*\mscr E_2$ which commutes with $b_i$, so we can define a gluing of $\mscr C_1\times _S G_{i_1,i_2}$ and $\mscr C_2\times _S G_{i_1,i_2}$ by taking the spinor sheaf $\mscr E$ on $\mscr C=\mscr C_1\underset{\mscr D_{i_1}}{\cup_{\varphi}} \mscr C_2$ to be the kernel of 
\begin{equation*}
p_{1*}\mscr E_1\oplus p_{2*}\mscr E_2\to \iota_1^*\mscr E_1\oplus\iota_2^*\mscr E_2\overset{(\alpha, \text{Id})}{\longrightarrow}\iota_2^*\mscr E_2,
\end{equation*}
where $p_i:\mscr C_i\to \mscr C$ is the canonical projection. It is easy to see that $\mscr E$ is locally-free in a neighborhood of the R node formed by gluing, and 
\begin{equation*}
\deg \mscr E=\deg \mscr E_1+\deg\mscr E_2-1.
\end{equation*}
Furthermore, it follows from the definition of $\mscr E$ and $\alpha$ commuting with $b_i$ that the composition 
\begin{alignat*}{2}
\mscr E^{\otimes 2}&\to (p_{1*}\mscr E_1\oplus p_{2*}\mscr E_2)^{\otimes 2}\to p_{1*}\mscr E_1^{\otimes 2}\oplus p_{2*}\mscr E_2^{\otimes 2}
\\
&\overset{b_1\oplus b_2}{\longrightarrow}p_{1*}\omega_{\mscr C_1/S}(\mscr D_{i_1}+E)\oplus p_{2*}\omega_{\mscr C_2/S}(\mscr D_{i_2}+H)\to \iota_2^*\omega_{\mscr C_2/S}(\mscr D_{i_2}+H),
\end{alignat*}
is zero, where $E$ and $H$ are other puncturing divisors on $\mscr C_1$ and $\mscr C_2$. The kernel of $p_{1*}\omega_{\mscr C_1/S}(\mscr D_{i_1}+E)\oplus p_{2*}\omega_{\mscr C_2/S}(\mscr D_{i_2}+H)\to \iota_2^*\omega_{\mscr C_2/S}(\mscr D_{i_2}+H)$ is $\omega_{\mscr C/S}(E+H)$. We thus have a homomorphism $b:\mscr E^{\otimes 2}\to \omega_{\mscr C/S}(E+H)$ which is an isomorphism on the smooth locus. This confirms that we obtain a spin structure on $\mscr C$, and the construction is also functorial. Consequently, $G_{i_1,i_2}$ forms a relative scheme $\mscr{G}_{i_1,i_2}$ over the product moduli space $\overline{\mscr{S}}_{\sg_1,\sns^1,\sra^1}\times \overline{\mscr{S}}_{\sg_2,\sns^2,\sra^2}$. Hence giving rise to an operation on corresponding moduli spaces:
\begin{center}
	\begin{tikzcd}
	\mscr{G}_{i_1,i_2}\arrow[r,"\mathfrak m^{\text{R}} _{i_1,i_2}"] \arrow[d,"\pi_{i_1,i_2}" swap] &  \partial\overline{\mscr{S}}_{\sg_1+\sg_2,\sns^1+\sns^2,\sra^1+\sra^2-2}\\
	\overline{\mscr{S}}_{\sg_1,\sns^1,\sra^1}\times \overline{\mscr{S}}_{\sg_2,\sns^2,\sra^2}.
	\end{tikzcd}
\end{center}
This can be defined for a single punctured spin curve with at least two NS marking divisors as well:
\begin{center}
	\begin{tikzcd}
	\mscr{G}_{i,j}\arrow[r,"\mathfrak g^{\text{R}} _{i,j}"] \arrow[d,"\pi_{i,j}" swap] &  \partial\overline{\mscr{S}}_{\sg+1,\sns,\sra-2}\\
	\overline{\mscr{S}}_{\sg,\sns,\sra}.
	\end{tikzcd}
\end{center}
Note that $\pi_{i_1,i_2}$ and $\pi_{i,j}$ are $\mbb Z/2$-torsors. Similar to the gluing of NS divisors case, these maps depends on the choice of $i_1,i_2$ and $i,j$ in an equivariant way.

\bibliography{References}
\bibliographystyle{JHEP}
	
\end{document}